\documentclass[11pt,a4paper]{article}

\usepackage[margin=1in]{geometry}
\usepackage{amsthm}
\usepackage[dvipsnames]{xcolor}

\usepackage{amsmath, amssymb, amsfonts}
\usepackage{bm}
\usepackage{mathrsfs}

\usepackage{graphicx}
\usepackage{booktabs}
\usepackage{multirow}
\usepackage{float}
\usepackage{wrapfig}
\usepackage[small]{caption}
\usepackage{subcaption}
\usepackage[most]{tcolorbox}

\usepackage{algorithm}
\usepackage{algorithmic}

\usepackage{arydshln}
\usepackage{soul}
\usepackage{url}
\usepackage{pifont}
\usepackage[round,authoryear]{natbib}
\usepackage[
  colorlinks,
  linkcolor=blue,
  citecolor=red,
  urlcolor=Emerald,
  pdftitle={Diffusion Bootstrap for High-Dimensional Linear Models},
  pdfauthor={Ce Liang and Wei Ma},
  pdfsubject={High-dimensional bootstrap inference using diffusion distribution estimators},
  pdfkeywords={bootstrap, diffusion models, high-dimensional inference}
]{hyperref}

\newenvironment{keywords}{\par\noindent\textbf{Keywords:}\ }{\par}

\theoremstyle{plain}
\newtheorem{mytheorem}{Theorem}[section]
\newtheorem{mylemma}[mytheorem]{Lemma}
\newtheorem{mycorollary}[mytheorem]{Corollary}
\newtheorem{myproposition}[mytheorem]{Proposition}
\theoremstyle{definition}
\newtheorem{myassumption}[mytheorem]{Assumption}
\newtheorem{mydefinition}[mytheorem]{Definition}
\theoremstyle{plain}
\newtheorem{remark}[mytheorem]{Remark}
\newtheorem{example}[mytheorem]{Example}

\newcommand{\op}{\mathrm{op}}
\DeclareMathOperator{\Var}{Var}
\DeclareMathOperator{\Cov}{Cov}
\DeclareMathOperator{\Lip}{Lip}
\DeclareMathOperator{\Ent}{Ent}
\DeclareMathOperator{\diag}{diag}
\DeclareMathOperator{\tr}{tr}
\DeclareMathOperator*{\argmin}{argmin}
\DeclareMathOperator{\sgn}{sgn}
\DeclareMathOperator{\supp}{supp}

\DeclareMathOperator{\Unif}{Unif}
\DeclareMathOperator{\Exp}{Exp}

\begin{document}

\title{Diffusion Bootstrap for High-Dimensional Linear Models}

\author{Ce Liang\\
       Institute of Statistics and Big Data, Renmin University of China\\
       \texttt{liangce158@ruc.edu.cn}
       \and
       Wei Ma\\
       Institute of Statistics and Big Data, Renmin University of China\\
       \texttt{mawei@ruc.edu.cn}}
\date{}

\maketitle

\begin{abstract}%
Classical bootstrap methods can behave poorly in high-dimensional linear models: the pairs bootstrap often yields overly conservative inference, whereas the residual bootstrap can be anti-conservative, reflecting systematic failures in variance calibration. We propose a diffusion-based pairs bootstrap that replaces the empirical joint distribution with a learned generative law. We establish variance consistency under a score approximation assumption, using complementary SDE and PDE arguments. Counterexamples show that terminal $W_4$ convergence alone is insufficient for variance consistency. Experiments indicate that diffusion pairs bootstrap improves variance calibration and generally improves Type~I error calibration, including in settings not covered by our theory.
\end{abstract}

\begin{keywords}
  Bootstrap, diffusion models, high-dimensional inference
\end{keywords}

\section{Introduction}
The bootstrap \citep{efron1979bootstrap} provides a general plug-in approach to approximating the sampling distribution of a statistic without requiring a fully specified parametric model. In the classical nonparametric bootstrap, the unknown underlying law is replaced by the empirical distribution, and the statistic is recomputed on samples drawn from this plug-in law. More generally, the validity of a plug-in bootstrap depends on whether the fitted distribution reproduces the features of the underlying law that govern the statistic of interest.

This principle becomes delicate in high-dimensional and non-regular problems, where standard bootstrap procedures may fail \citep{bickel1983bootstrapping,kosorok2008bootstrapping, groeneboom2024confidence}. A prominent example is linear regression with random design in the proportional regime $p/n\to\kappa\in(0,1)$. Even under a Gaussian design and Gaussian errors, classical residual and pairs bootstrap procedures exhibit systematic but opposite variance distortions. The residual bootstrap resamples fitted residuals whose variance is already shrunk by high-dimensional projection and therefore tends to underestimate uncertainty. The pairs bootstrap, by contrast, perturbs the geometry of the design and tends to overestimate the variance of the regression estimator \citep{el2018can}. Consequently, residual bootstrap confidence intervals can be anti-conservative, whereas pairs bootstrap intervals can be overly conservative.

The failure of empirical resampling does not by itself imply that the bootstrap principle must fail. It instead raises the possibility of replacing the empirical distribution by a more informative distribution estimator. Modern generative models provide one way to construct such an estimator. Although unrestricted high-dimensional distribution estimation is subject to the curse of dimensionality, structured distribution classes may admit much simpler score representations. Generative models based on neural networks can exploit smoothness, low-dimensional geometry, or compositional structure \citep{oko2023diffusion,bach2017breaking,chen2023score,cole2024score}. Recent work has begun to explore the use of generative models in bootstrap procedures \citep{tran2026generative}.

Among modern generative models, diffusion models are especially attractive because they learn the scores of a continuum of smoothed distributions and generate observations through a learned reverse-time stochastic differential equation \citep{song2020score,croitoru2023diffusion}. Recent studies have also explored the use of generative models and synthetic data for downstream statistical inference tasks \citep{liu2024novel,wang2025diffusion,ma2026synthetic, zhang2026doubly, ma2026does}.

In this paper, we study a bootstrap procedure based on diffusion, which we call diffusion pairs bootstrap. Rather than resampling the observed pairs $(X_i,Y_i)$ from their empirical distribution, we train a diffusion model on the joint observations and draw bootstrap samples from the terminal law of the learned reverse diffusion. We also consider a diffusion residual bootstrap, in which a diffusion model is trained only on the fitted residuals. These two constructions behave very differently in the proportional regime. Learning the joint law can correct the geometric distortion created by empirical pairs resampling, whereas learning the fitted residual law alone does not recover information that has already been lost through high-dimensional projection.

The role of diffusion in our analysis goes beyond providing a generic distribution estimator that is close to the target law in Wasserstein distance. Terminal Wasserstein convergence alone is not sufficient for bootstrap variance consistency; see Section~\ref{subsec:w4-counterexamples}. The variance of the bootstrap ordinary least squares (OLS) estimator depends on inverse moments of the generated Gram matrix and is therefore highly sensitive to rare, nearly singular bootstrap samples. Such degeneracies can have negligible average transportation cost and hence need not be ruled out by Wasserstein convergence of the fitted distribution to the target distribution.

Score approximation along the full Ornstein--Uhlenbeck interpolation controls the discrepancy between the exact and learned reverse drifts throughout the reverse evolution, rather than only at the terminal time. This control allows us to compare the exact and learned Fokker--Planck equations and to derive density regularity, density-ratio estimates, and anti-concentration bounds for the generated design. These estimates control the lower tail and inverse moments of the generated Gram matrix, which are the quantities needed for bootstrap variance consistency. The diffusion interpolation therefore supplies information that is absent from terminal Wasserstein convergence.

Our assumptions therefore focus on score approximation. To show that these assumptions are not vacuous in the proportional regime, we identify structured models for which the scores along the smoothed diffusion path can be estimated consistently even though direct high-dimensional distribution estimation remains difficult.

A basic example is the joint law of \(Y=X^\top\beta+\varepsilon\) under a standard Gaussian design. Although direct estimation of this joint law is difficult when \(p\) is proportional to \(n\), Section~\ref{sec:structured-minimax} gives a minimax upper bound of order \(\log p/n\) for the corresponding score estimation problem. The score error therefore vanishes even when \(p/n\to\kappa\in(0,1)\). We also treat several structured non-Gaussian models. In these examples, the design score depends on a fixed number of parameters, while the regression vector is controlled by an \(\ell_1\) bound. The resulting joint distributions still have full dimensional support, but their score estimation risk vanishes in the proportional regime. These examples address the attainability of the score assumptions. The variance theorem itself applies to a different and broader distributional class: it allows a non-Gaussian, strongly log-concave design with a general well-conditioned covariance matrix, while retaining Gaussian regression noise. Uniform curvature along the Ornstein--Uhlenbeck flow supplies dimension-free Poincar\'e and log-Sobolev inequalities, and one-dimensional log-concave small-ball bounds control the true Gram matrix. These properties replace the explicit Gaussian score and inverse-Wishart calculations used in the Gaussian benchmark. The diffusion model itself is still trained directly on the observed joint vectors and does not impose a parametric likelihood on $(X,Y)$.

Because the theorem does not cover every design used in practice, we also evaluate the same procedure across ten combinations of design and error distributions. These settings include i.i.d.\ Laplace covariates, Laplace errors, and heterogeneous elliptical designs with Gaussian, uniform, and exponential radial multipliers. We use the same diffusion architecture and training hyperparameters in all experiments. Figure~\ref{fig:type1-curve} summarizes the Type~I error and variance calibration for the Gaussian benchmark. Beyond this benchmark, diffusion pairs bootstrap keeps variance ratios close to one for designs with i.i.d.\ coordinates and often improves Type~I error calibration. It also substantially reduces the variance distortion under heterogeneous elliptical designs at moderate-to-large aspect ratios, although strong radial heterogeneity makes exact calibration more difficult. These findings suggest that the improvement is not limited to the smooth strongly log-concave models covered by our proportional-regime theory.

Our main contributions are summarized as follows.
\begin{itemize}
    \item We propose diffusion pairs bootstrap, which replaces empirical resampling with samples generated by a learned reverse diffusion. Our analysis uses score approximation along the full Ornstein--Uhlenbeck path, rather than only Wasserstein convergence of the terminal distribution. This additional information helps control rare degeneracies in the generated design matrix that empirical resampling and terminal distributional convergence may fail to capture.

    \item We establish variance consistency of diffusion pairs bootstrap in both fixed-dimensional and proportional high-dimensional regimes. When $p_n/n\to\kappa\in(0,1)$, the result applies to OLS contrasts under strongly log-concave designs that need not be Gaussian and may have general covariance matrices whose eigenvalues are uniformly bounded above and away from zero.

    \item We establish upper bounds of order $\log p/n$ for the minimax integrated fourth-moment score risk in the standard Gaussian model and several structured extensions, including Gaussian AR(1), covariance perturbations of fixed rank, and a product exponential family beyond Gaussian distributions. We also construct counterexamples showing that convergence in Wasserstein distance of order four alone does not guarantee bootstrap variance consistency.

    \item Our experiments cover ten combinations of design and error distributions. Using the same architecture and hyperparameters throughout, diffusion pairs bootstrap substantially improves calibration for Gaussian designs and designs with independent coordinates. Under heterogeneous elliptical designs, it also reduces the variance distortion of classical pairs bootstrap at moderate-to-large aspect ratios.
\end{itemize}

\begin{figure}[t]
    \centering
    \includegraphics[width=1.0\linewidth]{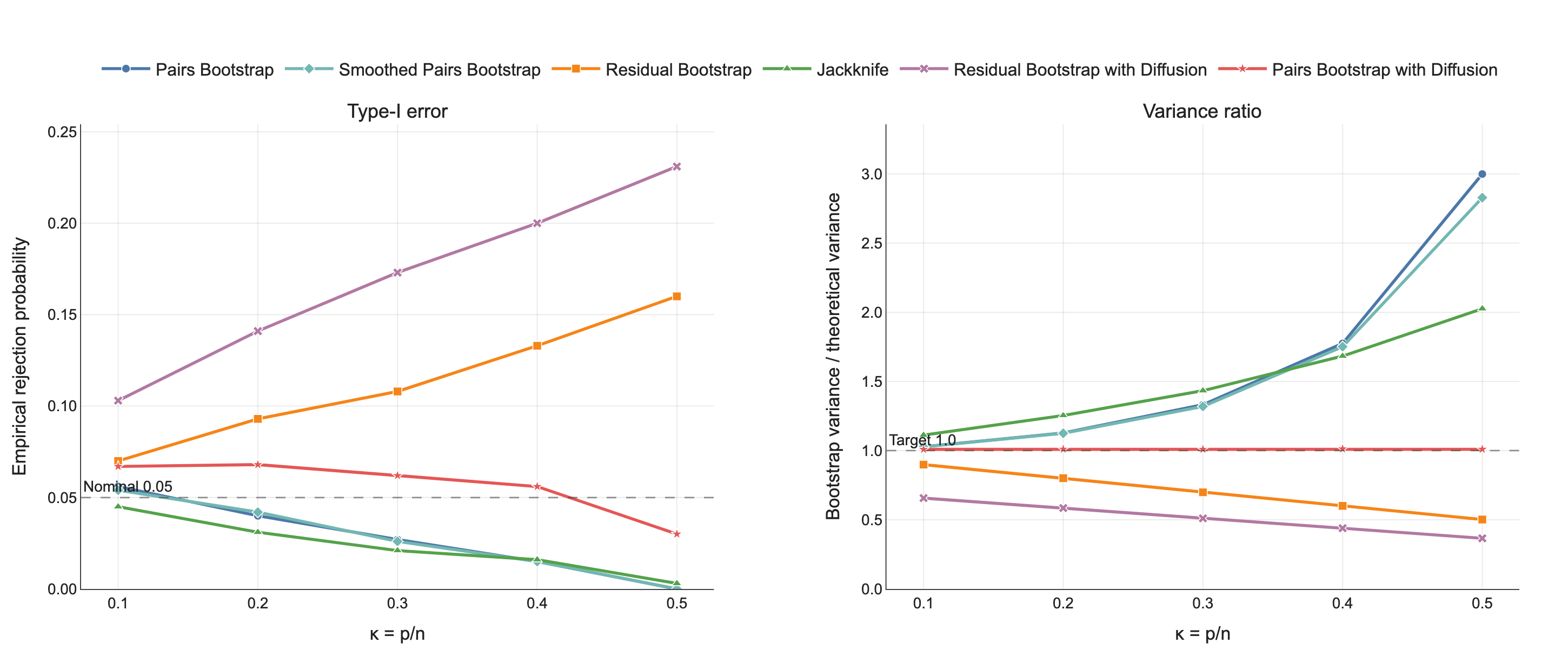}
    \caption{
    Type~I error and variance calibration in the proportional high-dimensional regime under Gaussian design and Gaussian errors. The sample size is $n=500$, the number of Monte Carlo replications is $R=1000$, and the nominal level is $\alpha=0.05$. The left panel reports the empirical Type~I error as the aspect ratio $\kappa=p/n$ varies; the horizontal dashed line marks the nominal level. The right panel reports the variance ratio, defined as the bootstrap variance estimate divided by the asymptotic theoretical variance $1/\{n(1-\kappa)\}$; the horizontal dashed line marks the target value one.}
    \label{fig:type1-curve}
\end{figure}

\subsection{Background: Bootstrap in linear models}
Consider the linear model with random design
\begin{equation*}
    Y_i=X_i^\top\beta+\varepsilon_i, \qquad i=1,\ldots,n,
\end{equation*}
where $X_i\in\mathbb R^p$, $\mathbb E[X_i]=0$, $\mathbb E[\varepsilon_i\mid X_i]=0$, and $\Var(\varepsilon_i)=\sigma_\varepsilon^2$. The unknown regression coefficient is $\beta\in\mathbb R^p$, and the ordinary least-squares estimator is
\begin{equation*}
    \widehat\beta=\argmin_{b\in\mathbb R^p}\sum_{i=1}^n(Y_i-X_i^\top b)^2.
\end{equation*}
Our primary interest is bootstrap inference for deterministic linear contrasts of $\widehat\beta$; in the proportional regime, both the dimension of $\widehat\beta$ and the contrast vector may depend on $n$.

The two classical procedures are the residual bootstrap and the pairs bootstrap. For the residual bootstrap, let $e_i=Y_i-X_i^\top\widehat\beta$ be the fitted residuals and let $\widehat G_n$ denote their centered empirical distribution. Bootstrap residuals $e_1^*,\ldots,e_n^*$ are drawn from $\widehat G_n$, the design is kept fixed, and the bootstrap responses are defined by
\begin{equation*}
    Y_i^*=X_i^\top\widehat\beta+e_i^*.
\end{equation*}
The pairs bootstrap instead draws the bootstrap observations directly from the empirical distribution of the observed pairs $(X_i,Y_i)$.

When $p$ is fixed, consistency of the pairs and residual bootstrap is well established under standard moment and nonsingularity conditions \citep{freedman1981bootstrapping,eck2018bootstrapping,bickel1981some}. Some validity results continue to hold when $p$ grows slowly relative to $n$. For example, consistency of pairs bootstrap is available under conditions such as $p^{1+\delta}/n\to0$ \citep{mammen1993bootstrap}, and fixed-design residual bootstrap is valid under a condition of the form $p^2/n\to0$ \citep{bickel1983bootstrapping}. A related line of work studies non-bootstrap inference in linear models with many nuisance covariates. In particular, \citet{cattaneo2018inference} develop heteroskedasticity-robust inference for a fixed-dimensional parameter of interest when the number of nuisance covariates may be a non-negligible fraction of the sample size.

Our focus is the proportional regime $p/n\to\kappa\in(0,1)$, where the design
geometry itself has a non-negligible effect on the sampling and bootstrap
variances \citep{el2018can}.

\subsection{Background: Diffusion models as distribution estimators}
Let $(W_t)_{t\in[0,\bar T]}$ be a standard Brownian motion and let $\beta_t$ be a positive noise schedule. Starting from a random vector $Z_0\sim p_0$, consider the Ornstein--Uhlenbeck forward process
\begin{equation*}
    \mathrm dZ_t=-\beta_tZ_t\,\mathrm dt+\sqrt{2\beta_t}\,\mathrm dW_t.
\end{equation*}
Writing
\begin{equation*}
    B(t)=\int_0^t\beta_s\,\mathrm ds,\qquad\alpha_t=e^{-B(t)},\qquad\sigma_t^2=1-e^{-2B(t)},
\end{equation*}
the transition law satisfies
\begin{equation*}
    Z_t\mid Z_0\sim N(\alpha_tZ_0,\sigma_t^2I_d).
\end{equation*}
If $p_t$ denotes the density of $Z_t$, the corresponding reverse-time process has drift determined by the score $\nabla\log p_t$. More precisely, under standard regularity conditions, the reverse process satisfies
\begin{equation*}
    \mathrm dY_t=\beta_{\bar T-t}\left\{Y_t+2\nabla\log p_{\bar T-t}(Y_t)\right\}\mathrm dt+\sqrt{2\beta_{\bar T-t}}\,\mathrm dW_t,\qquad Y_0\sim p_{\bar T}.
\end{equation*}
When $B(t)\to\infty$, the forward law approaches $N(0,I_d)$. A diffusion model replaces the exact score by a learned score $\widehat s_{n,t}$ and initializes the reverse process from the standard Gaussian law:
\begin{equation*}
    \mathrm d\widehat Y_t=\beta_{\bar T-t}\left\{\widehat Y_t+2\widehat s_{n,\bar T-t}(\widehat Y_t)\right\}\mathrm dt+\sqrt{2\beta_{\bar T-t}}\,\mathrm dW_t,\qquad \widehat Y_0\sim N(0,I_d).
\end{equation*}
The terminal law of this learned reverse process is the random distribution estimator used by our bootstrap procedure.

Existing diffusion theory studies errors in distribution estimation measured by metrics such as Wasserstein distance and total variation \citep{oko2023diffusion,de2021diffusion,chen2023improved, gao2025wasserstein}, improvements under assumptions on intrinsic dimension \citep{oko2023diffusion,chen2023score,de2022convergence}, and approximation under structural restrictions on the target score \citep{cole2024score}. One route to convergence assumes score approximation and derives stability of the learned reverse process \citep{gao2025wasserstein,chen2022sampling,lee2023convergence}; another derives score approximation from assumptions on the target distribution and the score estimator \citep{oko2023diffusion}. We use the first route as our starting point. A terminal distributional bound, however, does not by itself provide the lower-tail control needed for bootstrap inference. We therefore propagate the score approximation bound through the reverse SDE and Fokker--Planck equations to derive the required density and geometric estimates. For several structured Gaussian and non-Gaussian model classes, we also show that the assumed score accuracy is statistically attainable.

\subsection{Organization of the paper}
Section~2 develops diffusion pairs bootstrap in the fixed-dimensional regime. It proves convergence of the learned law, establishes density and anti-concentration estimates for the generated design, and derives variance and distributional bootstrap consistency. Section~3 studies the proportional regime under strongly log-concave designs with general well-conditioned covariance and proves high-dimensional variance consistency using density-ratio estimates and lower-tail bounds for the generated Gram matrix.

Section~4 explains why score estimation with diffusion models can remain feasible in structured high-dimensional models. It establishes upper bounds for the integrated score estimation risk and compares score estimation with direct approximation by an atomic distribution. Section~5 shows that terminal Wasserstein convergence alone does not guarantee variance consistency by constructing distribution estimators with vanishing Wasserstein error but divergent bootstrap variance. Section~6 discusses the failure of diffusion residual bootstrap. Section~7 evaluates the proposed method across ten combinations of Gaussian and non-Gaussian design and error distributions, including designs with i.i.d.\ Laplace coordinates and heterogeneous elliptical designs, and Appendix~\ref{app:simulation-tables} reports the complete numerical results.

\subsection{Notation}

For a positive integer $d$, let $I_d$ denote the $d\times d$ identity matrix. For a vector $v$, $\|v\|_2$ denotes its Euclidean norm and $\|v\|_1$ denotes its $\ell_1$-norm; when no ambiguity is possible, we abbreviate $\|v\|_2$ as $\|v\|$. For a matrix $A$, $\|A\|_{\op}$ denotes its operator norm. If $A$ is symmetric, $\lambda_{\min}(A)$ denotes its smallest eigenvalue. We write $a_n\lesssim b_n$ if $a_n\leq Cb_n$ for a constant $C$ independent of $n$, and $a_n\asymp b_n$ if both $a_n\lesssim b_n$ and $b_n\lesssim a_n$ hold. Convergence in probability is denoted by $\to_p$.

For probability measures $\nu$ and $\pi$ on a Euclidean space, $W_q(\nu,\pi)$ denotes the $q$-Wasserstein distance. If $Z$ is a random variable, $\mathcal L(Z)$ denotes its law. If $Z\sim\nu$, we write $\mathbb E_\nu f(Z)=\int f(z)\,\mathrm d\nu(z)$. Bootstrap probability, expectation, and variance are denoted by $\mathbb P^*$, $\mathbb E^*$, and $\Var^*$, respectively. Conditioning variables are stated explicitly when needed.

An observation is denoted by $Z_i=(X_i,Y_i)\in\mathbb R^{p+1}$, where $X_i\in\mathbb R^p$ is the covariate vector and $Y_i\in\mathbb R$ is the response. In the high-dimensional regime, $p=p_n$ may depend on $n$, and we write $d_n=p_n+1$ and $p_n/n\to\kappa\in(0,1)$. A deterministic contrast vector is denoted by $c_n\in\mathbb R^{p_n}$, with $\|c_n\|_2=1$. In the fixed-dimensional setting, we suppress the dependence on $n$ and write $p$ and $c$.

\section{Pairs bootstrap with diffusion distribution estimators}
A natural idea is to replace the empirical distribution used by a bootstrap procedure with a generative distribution estimator. We first develop this idea for the pairs bootstrap by fitting the joint law of $(X,Y)$; the residual version is considered separately in Section~\ref{sec:residual-bootstrap}.

Consider the pairs bootstrap under the linear model with random design
\begin{equation*}
    Y_i=X_i^\top\beta+\varepsilon_i,\qquad i=1,\ldots,n.
\end{equation*}
The joint observations $Z_i=(X_i,Y_i)\in\mathbb R^{p+1}$ are i.i.d.\ from a law $\mu$. Given $Z_1,\ldots,Z_n$, we train a diffusion model on these pairs and denote its fitted distribution by $\widehat\mu_n$.

Throughout this section, $\widehat\mu_n$ is viewed as a random probability measure depending on the observed data. We do not explicitly model the additional randomness arising from optimization or training. Equivalently, conditional on the observed data, the fitted diffusion model is treated as fixed.

The diffusion pairs bootstrap proceeds by drawing
\begin{equation*}
    Z_1^*,\ldots,Z_n^*\overset{\mathrm{iid}}{\sim}\widehat\mu_n,\qquad Z_i^*=(X_i^*,Y_i^*),
\end{equation*}
and computing the bootstrap OLS estimator
\begin{equation*}
    \widehat\beta^*=\left(\sum_{i=1}^nX_i^*X_i^{*\top}\right)^{-1}\left(\sum_{i=1}^nX_i^*Y_i^*\right).
\end{equation*}
Thus, unlike the classical pairs bootstrap, the diffusion pairs bootstrap draws new pairs from an estimated law of $(X,Y)$. To distinguish learning the joint law from merely smoothing the empirical distribution, our simulations also include a smoothed pairs bootstrap. Its conservative behavior persists in high dimensions, suggesting that the relevant issue is whether the fitted law recovers the features of the joint distribution that determine OLS variance.

This observation motivates assumptions on the diffusion dynamics rather than on the terminal law alone. Rather than assuming $W_q(\widehat\mu_n,\mu)\to_p 0$ as a generic property of the generated distribution, we impose approximation and stability assumptions on the learned score along the reverse dynamics. These assumptions imply $W_q$ consistency and provide the additional density control needed for variance consistency.

\begin{remark}[Scope of the OLS analysis]
Bootstrap methods for estimators defined through estimating equations have been studied in classical settings \citep{10.1093/biomet/82.2.263}. In the proportional asymptotic regime, \citet{lei2018asymptotics} establish coordinate-wise asymptotic normality for high-dimensional regression $M$-estimators under a fixed-design framework. In this paper, we restrict the theoretical analysis to OLS.
\end{remark}

We first formalize the argument in the classical fixed-dimensional regime, where $p$ is fixed and $n\to\infty$. Let $Z_i=(X_i,Y_i)\in\mathbb R^{p+1}$, $i=1,\dots,n$, be i.i.d.\ observations from $\mu$. Set $d=p+1$.

Following earlier bootstrap notation \citep{eck2018bootstrapping}, for any probability measure $\nu$ on $\mathbb R^{p+1}$, define
\begin{equation*}
    \Sigma_X(\nu)=\int xx^\top \mathrm d\nu(x,y),\qquad m_{XY}(\nu)=\int xy\,\mathrm d\nu(x,y),
\end{equation*}
and
\begin{equation*}
    \beta(\nu)=\Sigma_X(\nu)^{-1}m_{XY}(\nu),
\end{equation*}
whenever $\Sigma_X(\nu)$ is invertible. Let
\begin{equation*}
    e_\nu(x,y)=y-x^\top\beta(\nu).
\end{equation*}
For a fixed vector $c\in\mathbb R^p$ with $\|c\|_2=1$, define the influence function
\begin{equation*}
    S_\nu(x,y)=c^\top\Sigma_X(\nu)^{-1}x\,e_\nu(x,y),\qquad\tau^2(\nu)=\Var_\nu\{S_\nu(X,Y)\}.
\end{equation*}

Consider the Ornstein--Uhlenbeck forward process
\begin{equation*}
    \mathrm dZ_t=-\beta_tZ_t\,\mathrm dt+\sqrt{2\beta_t}\,\mathrm dW_t,\qquad Z_0\sim\mu.
\end{equation*}
Let $p_t$ denote the density of $Z_t$, define its score by $s_t=\nabla\log p_t$, and set $B(t)=\int_0^t\beta_s\,\mathrm ds$. Let $T_n\to\infty$ be the terminal time. The learned reverse process is
\begin{equation*}
    \mathrm d\widehat Y_t=\beta_{T_n-t}\left\{\widehat Y_t+2\widehat s_{n,T_n-t}(\widehat Y_t)\right\}\,\mathrm dt+\sqrt{2\beta_{T_n-t}}\,\mathrm dW_t,\qquad\widehat Y_0\sim N(0,I_d).
\end{equation*}
Let $\mathcal F_n$ denote the $\sigma$-field with respect to which the learned score $\widehat s_n$ is measurable, and define the learned generated law
\begin{equation*}
    \widehat \mu_n=\mathcal L(\widehat Y_{T_n}\mid \mathcal F_n).
\end{equation*}
We make the following assumption:
\begin{myassumption}[Target distribution]
    The target law $\mu$ has density
    \begin{equation*}
        \mathrm d\mu(z)= e^{-U_0(z)}\,\mathrm dz, \qquad U_0\in C^2(\mathbb R^d),
    \end{equation*}
    and there exist constants $0<m_0\leq L_0<\infty$ such that
    \begin{equation*}
        m_0I_d\preceq\nabla^2U_0(z)\preceq L_0I_d,\qquad z\in\mathbb R^d.
    \end{equation*}
    Moreover, for some $q>4$,
    \begin{equation*}
        \mathbb E_\mu\|Z\|_2^q<\infty,\qquad\Sigma_X(\mu)\succ0,\qquad\tau^2(\mu)>0.
    \end{equation*}
    \label{ass:target-distribution}
\end{myassumption}

\begin{remark}
    Strong log-concavity already implies $\mathbb E_{\mu}\|Z\|_2^q<\infty$ for every finite $q$. We nevertheless state the required moment condition explicitly for clarity.
\end{remark}

\begin{remark}
Assumption~\ref{ass:target-distribution} serves two related purposes in our fixed-dimensional analysis. First, it provides the tail control and stability conditions needed to convert score approximation along the reverse Ornstein--Uhlenbeck flow into convergence of the learned terminal law in \(W_q\) for some \(q>4\). Such higher-order Wasserstein convergence controls both weak convergence and the polynomial-moment components entering the OLS functional. Terminal Wasserstein convergence alone, however, is not sufficient for bootstrap validity, because the OLS estimator also depends on the inverse of the generated covariance and Gram matrices. We therefore additionally use the curvature assumption to establish density regularity and anti-concentration estimates for the learned law. These estimates rule out nearly singular generated designs and yield the inverse-moment control required by the bootstrap argument.

The uniform lower Hessian bound gives the target distribution Gaussian-type tails and supplies dissipativity and stability for the reverse diffusion dynamics. The uniform upper Hessian bound controls the growth and regularity of the exact score along the Ornstein--Uhlenbeck interpolation. Together, these conditions allow moment estimates, stability of the reverse flow, density regularity, and anti-concentration to be handled by the same set of assumptions. In particular, uniform strong log-concavity already implies the existence of moments of every finite order. Hence, the displayed \(q\)-moment condition is not an independent assumption; it is retained only to make explicit the moment order used in the subsequent bootstrap analysis.

Uniform strong log-concavity is a convenient sufficient condition rather than an intrinsic requirement of the bootstrap principle. Existing Wasserstein convergence results for diffusion models on non-compact spaces have weakened global strong log-concavity to some extent. For example, recent results allow potentials that are locally non-convex, or merely semiconvex, while typically retaining strong convexity at infinity or imposing an alternative coercive tail-dissipativity condition \citep{gentiloni2025beyond, bruno2025wasserstein}. Thus, global uniform strong convexity is not necessary for \(W_2\)-convergence itself, but currently available non-compact analyses generally continue to rely on some form of confining tail stability.

The assumption includes the linear model with additive Gaussian errors
\begin{equation*}
    X\sim N(0,\Sigma_X),\qquad\varepsilon\sim N(0,\sigma^2),\qquad Y=X^\top\beta+\varepsilon,
\end{equation*}
whenever \(\Sigma_X\succ0\) and \(\sigma^2>0\). More generally, suppose that the covariates \(X\) and the additive noise \(\varepsilon\) are independent and both admit smooth uniformly strongly log-concave densities whose potentials satisfy uniform upper and lower Hessian bounds. Then, after the invertible linear transformation
\begin{equation*}
    (X,\varepsilon)\longmapsto (X,X^\top\beta+\varepsilon),
\end{equation*}
the joint distribution of \((X,Y)\) satisfies the same type of curvature condition, with modified constants.
\end{remark}

\begin{myassumption}[Noise schedule]
    The noise schedule satisfies
    \begin{equation*}
        0<\underline\beta\leq\beta_t\leq\overline\beta<\infty,\qquad B(T_n)\to\infty.
    \end{equation*}
    \label{ass:noise-schedule}
\end{myassumption}
The uniform bounds give a nondegenerate time change, while $B(T_n)\to\infty$ ensures that the forward terminal law approaches the standard Gaussian reference law.

\begin{myassumption}[Score approximation and Lipschitz control]
\label{ass:score-approximation}
Conditionally on $\mathcal F_n$, the learned reverse SDE admits a unique non-explosive weak solution on $[0,T_n]$, and its time marginals have densities.
Let
\begin{equation*}
    e_{n,t}(z)=\widehat s_{n,t}(z)-s_t(z).
\end{equation*}
For some $T_n\to\infty$ and some $q>4$, define the $L^q$ score approximation error by
\begin{equation*}
    \mathcal A_{n,q}(T_n)=\int_0^{T_n}\beta_{T_n-t}\left(\mathbb E\left[\|e_{n,T_n-t}(\widehat Y_t)\|^q\,\middle|\,\mathcal F_n\right]\right)^{1/q}\mathrm dt.
\end{equation*}
Assume that
\begin{equation*}
    \mathcal A_{n,q}(T_n)\to_p 0.
\end{equation*}

In addition, assume that there exists a fixed constant $h_0>0$ such that $h_0<T_n$ for all sufficiently large $n$, and, for Lebesgue-a.e. $t\in[0,h_0]$, the map $e_{n,t}$ is globally Lipschitz. Moreover,
\begin{equation*}
    \mathrm{ess}\sup_{0\leq t\leq h_0}\Lip(e_{n,t})=O_p(1),
\end{equation*}
where
\begin{equation*}
    \Lip(f):=\sup_{z\neq z'}\frac{\|f(z)-f(z')\|}{\|z-z'\|}.
\end{equation*}
\end{myassumption}

\begin{remark}[Score approximation versus Wasserstein convergence]
\label{rem:score-versus-wasserstein}
Assumption~\ref{ass:score-approximation} is imposed on the learned score along the Ornstein--Uhlenbeck path, rather than directly assuming terminal Wasserstein convergence. Some analyses instead suppose directly that $W_k(\widehat\mu_n,\mu)\to_p0$ for some \(k\geq1\) \citep{tran2026generative,ma2026does}. Here we derive
\begin{equation*}
    W_q(\widehat\mu_n,\mu)\to_p0,\qquad q>4,
\end{equation*}
from score approximation and regularity of the learned reverse dynamics.

This distinction matters because terminal Wasserstein convergence alone does not control bootstrap variance. The latter depends on inverse moments of the generated Gram matrix and is therefore sensitive to rare nearly singular bootstrap designs. Section~\ref{subsec:w4-counterexamples} shows that, even for a Gaussian target, one may have $W_4(\widehat\mu_n,\mu)\to0$ while the corresponding bootstrap variance diverges.

Three features of diffusion enter the proof. The Ornstein--Uhlenbeck semigroup interpolates between the target and a Gaussian reference law; the score determines the reverse-time drift; and the exact and learned densities satisfy Fokker--Planck equations with the same diffusion operator. Together, these properties connect score error to the lower tail of the generated Gram matrix. A different generative estimator could be treated similarly if it provided the same analytic control.
\end{remark}

\begin{remark}[Lipschitz condition]
    The local Lipschitz condition in Assumption~\ref{ass:score-approximation} is imposed only near the target end of the forward flow, namely on \(t\in[0,h_0]\). This is the portion of the reverse dynamics closest to the target law, where the density and stability estimates are most delicate. Together with the remaining growth, moment, and dissipativity conditions, the local Lipschitz assumption ensures the required well-posedness and stability of the learned reverse process. Without sufficient regularity and growth control, the reverse dynamics may fail to be unique or stable and may, in extreme cases, become explosive; see \citep{beyler2025convergence} for related discussion.
\end{remark}

\begin{remark}[Cases not covered]
    We focus on continuous-time diffusion models and do not analyze discretization error. Compactly supported distributions are also outside the present assumptions.
\end{remark}

Variance consistency requires more than weak or Kolmogorov consistency of the bootstrap distribution. Besides showing that the bootstrap distribution is asymptotically correct on events with high probability, we also need to control the contribution from the exceptional events on which the bootstrap Gram matrix is nearly singular. In our setting, this requires quantitative lower-tail bounds for the generated Gram matrix. Using the Fokker--Planck equation for the learned reverse process, we first obtain an $L^\infty$ bound on the generated density by a Moser iteration. This density bound implies a slab anti-concentration inequality for the generated covariates, which in turn yields the required lower-tail estimate for the bootstrap Gram matrix. The key lemma is the following.

\begin{mylemma}
    Under Assumptions~\ref{ass:target-distribution}, \ref{ass:noise-schedule}, and~\ref{ass:score-approximation}, conditionally on $\widehat\mu_n$, let $(X_i^*,Y_i^*)\overset{\mathrm{iid}}{\sim}\widehat\mu_n$ and set
    \begin{equation*}
        \widehat\Sigma_n^*=\frac1n\sum_{i=1}^n X_i^*X_i^{*\top}.
    \end{equation*}
    Then, for every fixed $a>0$, there exist a deterministic constant $\eta_a>0$ and a random sequence $C_{n,a}=O_p(1)$ such that, with probability tending to one,
    \begin{equation*}
        \sup_{0<t<\eta_a}t^{-a}\mathbb P^*\bigl(\lambda_{\min}(\widehat\Sigma_n^*)\leq t\mid \widehat\mu_n\bigr)\leq C_{n,a}.
    \end{equation*}
\end{mylemma}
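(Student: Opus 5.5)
The strategy follows the route sketched just before the lemma: a density bound, then slab anti-concentration, then a lower-tail bound for the Gram matrix.

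\textbf{Step 1: an $L^\infty$ bound on the generated density.} Let $\widehat p$ denote the density of the $X$-marginal of $\widehat\mu_n$. I first show that, with probability tending to one, $\|\widehat p\|_\infty\le K_n$ with $K_n=O_p(1)$.

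To do this, write the Fokker--Planck equations for the exact reverse density and for the learned reverse density. They share the same diffusion operator and differ only in the drift term $2\beta_{T_n-t}\,e_{n,T_n-t}$. On the final window, corresponding to forward times $t\in[0,h_0]$, this drift perturbation is globally Lipschitz with $O_p(1)$ constant by Assumption~\ref{ass:score-approximation}. The exact score $s_t$ also has Hessian bounds inherited from $m_0\preceq\nabla^2U_0\preceq L_0$ along the OU flow.

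A Moser iteration on the learned equation over this window then bounds $\|\widehat p_{\cdot}\|_\infty$ at the terminal time in terms of its $L^1$ norm, which equals one. The iteration uses the Lipschitz drift and the uniform ellipticity $\beta\ge\underline\beta$; the divergence of the drift is bounded, so $L^r$ norms grow at most exponentially over a time of length $h_0$. Marginalizing in $y$ needs some tail control. I will get it from $W_q$ convergence (derived from $\mathcal A_{n,q}\to_p0$) together with a Gaussian-type tail from dissipativity. Alternatively, I bound the joint density and integrate against a tail estimate on $y$.

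\textbf{Step 2: slab anti-concentration.} From the density bound and uniform moment bounds, deduce that for every unit $u\in\mathbb R^p$ and every $\epsilon>0$,
\[
\mathbb P^*(|u^\top X^*|\le\epsilon)\le C_n\,\epsilon .
\]
Concretely, intersect the slab with a ball of radius $R$. The slab-in-ball part has probability at most $K_n\,\epsilon\,R^{p-1}$ up to constants. The remainder is at most $\mathbb E^*\|X^*\|^{q}/R^{q}$. Because $p$ is fixed and the moments are $O_p(1)$ by $W_q$ convergence, optimizing over $R$ gives a bound $C_n\epsilon^{\gamma}$ for some fixed $\gamma>0$. A crude $\gamma$ suffices here.

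\textbf{Step 3: the Gram lower tail.} Fix $a>0$.

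First handle a single direction $u$. Write $\lambda_{\min}\le t$ as the existence of $u$ with $\frac1n\sum_i(u^\top X_i^*)^2\le t$. If this holds for $u$, then at least $n-k$ of the indices satisfy $|u^\top X_i^*|\le\sqrt{nt/k}$. Taking $k$ a fixed fraction such as $n/2$, each such index lies in a slab of width $\sqrt{2t}$. A binomial tail combined with Step 2 bounds the probability by $\binom{n}{n/2}(C_n(2t)^{\gamma/2})^{n/2}$. For $t<\eta_a$ this is at most $t^{a}$ once $n\gamma/4\ge a+1$, which holds for all large $n$, and $\eta_a$ is chosen to absorb the constants.

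To pass from a fixed $u$ to all $u$, use an $\epsilon$-net on $S^{p-1}$ with mesh $\delta=\sqrt t$. Let $E=\{\max_i\|X_i^*\|\le t^{-b}\}$. On $E$,
\[
\bigl|u^\top\widehat\Sigma^*_n u-v^\top\widehat\Sigma^*_n v\bigr|\le 2\delta\,t^{-2b}.
\]
The complement of $E$ has probability at most $n\,t^{qb}\,\mathbb E^*\|X^*\|^q$. The net has cardinality of order $t^{-p/2}$, which is polynomial in $t$ with $p$ fixed. The $n$ independent copies give $t^{cn}$ decay, which beats both the net size and the factor $n$.

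With care, I would rather use an $n$-independent count, in which only $m\ge m(a)$ indices are required to be small. That route keeps $C_{n,a}$ free of combinatorial blow-up. Choosing $b$ small and letting $n$ be large then gives the uniform bound $t^{-a}\mathbb P^*(\cdot)\le C_{n,a}$.

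\textbf{Main obstacle.} The main difficulty is Step 1. The Lipschitz condition holds only on $[0,h_0]$, so the density must enter that window already controlled in some $L^r$ norm. One option is to start the Moser iteration from the trivial $L^1$ bound and use parabolic smoothing over a time of length $h_0$, which is what I would try first. If that fails, I would fall back on a density-ratio comparison with the exact reverse density via Girsanov and $\mathcal A_{n,q}$.
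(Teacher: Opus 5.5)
Your Steps 1--2 follow the paper's route. That route is a Moser iteration for the learned Fokker--Planck equation on $[T_n-h_0,T_n]$, started from the $L^1$ bound by Nash-type smoothing and using $\int_{T_n-h_0}^{T_n}\|(\nabla\cdot\widehat b_{n,r})_-\|_\infty\,\mathrm dr=O_p(1)$. It is followed by slab anti-concentration from a ball truncation and the $q$-th moment. One correction is needed: do not aim for a bound on the density of the $X$-marginal. A bounded joint density plus averaged moment control does not bound $\int\rho_{n,T_n}(x,y)\,\mathrm dy$ pointwise in $x$, and no Gaussian-type tail is available for the learned law. The marginal bound is also unnecessary. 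Apply the joint $L^\infty$ bound to $\{(x,y):|v^\top x-b|\le\epsilon,\ \|(x,y)\|\le R\}$, whose Lebesgue measure in $\mathbb R^{p+1}$ is at most $C\epsilon R^{p}$. This is what the paper does, with $\gamma=q/(q+p)$.

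The genuine gap is in Step 3, in passing from a fixed $u$ to all $u$.

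\emph{Where it fails.} Your fixed-$u$ count bound is sound. The truncation $E=\{\max_i\|X_i^*\|\le t^{-b}\}$, however, costs $\mathbb P^*(E^c)\le n\,t^{qb}M_{q,n}$. This term carries no $t^{cn}$ factor, so nothing beats the factor $n$. Even after capping at one, $\sup_{0<t<\eta_a}t^{-a}\min\{1,nM_{q,n}t^{qb}\}$ is of order $(nM_{q,n})^{a/(qb)}\to\infty$ when $qb>a$, and is infinite when $qb<a$. Choosing $b$ small only makes this worse, while mesh $\sqrt t$ forces $b<1/4$. The generated law has only $W_q$-controlled moments, so a single large observation is only polynomially unlikely. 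Any argument that pays for it once cannot reach arbitrary $a$. Requiring only $m(a)$ small indices does not help either: a union over index sets costs $\binom nm$, and a fixed index set forces the threshold $\sqrt{nt}$.

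\emph{The repair.} Within your framework, truncate the \emph{number} of large observations rather than their maximum. Work on $F=\{\#\{i:\|X_i^*\|>t^{-b}\}\le n/4\}$ with $t<1$, $b<1/2$, and mesh $\sqrt t$. Any $u$ with $u^\top\widehat\Sigma_n^*u\le t$ has at least $n/4$ indices with $|u^\top X_i^*|\le\sqrt{2t}$ and $\|X_i^*\|\le t^{-b}$. Hence $|v^\top X_i^*|\le 3t^{1/2-b}$ at the nearest net point $v$. Meanwhile $\mathbb P^*(F^c)\le(16M_{q,n}t^{qb})^{n/4}$ now decays like $t^{cn}$. Both terms then have the form $C\,x_n^{n/4}t^{cn-p/2}$ with $x_n=O_p(1)$. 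Capping at one gives $\sup_t t^{-a}\min\{1,\cdot\}=O_p(1)$. This capping, not the choice of the deterministic $\eta_a$, is how the random constants must be absorbed.

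\emph{Comparison with the paper.} The paper avoids nets altogether. It bounds $s_{\min}$ of a $p\times p$ block through the negative second moment identity and distances to hyperplanes. It pairs the resulting $(mt)^{\gamma/2}$ bound with a matrix-Rosenthal bound $B_nm^{-q/4}$. It then amplifies over $k>2a/\gamma+4a/q$ independent blocks of size $\lfloor n/k\rfloor$ via $\min\{x,y\}\le x^\theta y^{1-\theta}$. The repaired count-and-net argument is more elementary and gives $t^{cn}$ decay. Like the paper's argument, it relies on $p$ being fixed.
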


The proof is based on a density estimate obtained through a Moser iteration. On the last time interval $[T_n-h_0,T_n]$, the conditional density $\rho_{n,r}$ of the learned reverse process satisfies the renormalized energy inequality
\begin{equation*}
    \|\rho_{n,b}\|_{\ell}^{\ell}+\frac{4(\ell-1)}{\ell}\int_a^b a_r\|\nabla\rho_{n,r}^{\ell/2}\|_2^2\,\mathrm dr\leq\|\rho_{n,a}\|_{\ell}^{\ell}+(\ell-1)\int_a^b c_{n,r}\|\rho_{n,r}\|_{\ell}^{\ell}\,\mathrm dr ,
\end{equation*}
for every $\ell\geq2$ and for Lebesgue-a.e. pair $(a,b)$ satisfying $T_n-h_0<a<b<T_n$. Here $a_r$ is the diffusion coefficient and $c_{n,r}$ is the negative-divergence contribution of the learned reverse drift.

The term involving $c_{n,r}$ is removed by the integrating factor
\begin{equation*}
    C_n(r)=\int_{T_n-h_0}^r c_{n,s}\,\mathrm ds,\qquad u_r=e^{-C_n(r)}\rho_{n,r}.
\end{equation*}
The transformed density $u_r$ satisfies a dissipative energy inequality. Since $\rho_{n,r}$ is a probability density, $\|u_r\|_1\leq1$. Applying the Gagliardo--Nirenberg interpolation inequality to the $\ell=2$ energy estimate gives an $L^1$-to-$L^2$ smoothing bound after positive time. Starting from this $L^2$ bound, a standard Moser iteration over the exponents \citep{gilbarg1998elliptic}
\begin{equation*}
    L^1\xrightarrow[\text{Gagliardo--Nirenberg}]{\text{energy estimate}}\underbrace{L^2\to L^{2(1+2/d)}\to L^{2(1+2/d)^2}\to\cdots\to L^{2(1+2/d)^j} \xrightarrow{\;j\to\infty\;} L^\infty }_{\text{Moser iteration}}
\end{equation*}
yields the terminal-time estimate
\begin{equation*}
    \|\rho_{n,T_n}\|_{\infty}\leq C_n\|\rho_{n,T_n-h_0}\|_1,\qquad C_n=O_p(1).
\end{equation*}
Since $\rho_{n,T_n-h_0}$ is a probability density, $\|\rho_{n,T_n-h_0}\|_1=1$, and therefore
\begin{equation*}
    \|\rho_{n,T_n}\|_{\infty}=O_p(1).
\end{equation*}

Combining this $L^\infty$ bound with the $q$-moment control implied by $W_q(\widehat\mu_n,\mu)\to_p 0$, we obtain a uniform slab anti-concentration estimate: for some $\gamma>0$ and some $K_n=O_p(1)$,
\begin{equation*}
    \sup_{\|v\|=1}\sup_{b\in\mathbb R}\mathbb P_{\widehat\mu_n}\bigl(|v^\top X-b|\leq\epsilon\bigr)\leq K_n\epsilon^\gamma.
\end{equation*}
This anti-concentration inequality controls the smallest singular value of a fixed-size block of bootstrap covariates. A block amplification argument, combined with a fixed-dimensional moment bound for the sample covariance, then yields the stated polynomial lower-tail bound for $\lambda_{\min}(\widehat\Sigma_n^*)$.

The lower-tail estimate supplies the inverse-moment control needed for OLS. The other ingredient is convergence of the generated law, which follows from the same assumptions.
\begin{mylemma}
    Under Assumptions~\ref{ass:target-distribution}, \ref{ass:noise-schedule}, and~\ref{ass:score-approximation},
    \begin{equation*}
        W_q(\widehat\mu_n,\mu)\to_p 0.
    \end{equation*}
\end{mylemma}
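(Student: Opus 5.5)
We will prove the claim by a synchronous coupling of the exact and learned reverse processes, together with the triangle inequality
\begin{equation*}
W_q(\widehat\mu_n,\mu)\leq W_q(\widehat\mu_n,\mathcal L(Y_{T_n}))+W_q(\mathcal L(Y_{T_n}),\mu),
\end{equation*}
where $Y$ is the exact reverse process started from $N(0,I_d)$ instead of $p_{T_n}$. For the second term, we use that the exact reverse SDE started from $p_{T_n}$ has terminal law exactly $\mu$. Starting it instead from $N(0,I_d)$ and coupling the two initial laws optimally, a contraction or stability estimate for the exact reverse flow bounds this term by a constant times $W_q(p_{T_n},N(0,I_d))$. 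The OU flow gives $W_q(p_{T_n},N(0,I_d))\lesssim e^{-B(T_n)}(\mathbb E_\mu\|Z\|^q)^{1/q}$, which tends to $0$ by Assumption~\ref{ass:noise-schedule}.

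The stability constant comes from the curvature of the exact score. Under Assumption~\ref{ass:target-distribution}, $p_t$ is the law of $\alpha_tZ_0+\sigma_t\xi$. By the Brascamp--Lieb and Prékopa--Leindler type preservation of strong log-concavity under Gaussian convolution, we get
\begin{equation*}
-\nabla^2\log p_t\succeq \frac{1}{\alpha_t^2/m_0+\sigma_t^2}I_d\qquad\text{and}\qquad -\nabla^2\log p_t\preceq \frac{1}{\alpha_t^2/L_0+\sigma_t^2}I_d .
\end{equation*}
So $s_t$ is Lipschitz with constant at most $\max(L_0,1)$. Moreover, the reverse drift $\beta\{y+2s(y)\}$ has one-sided Lipschitz constant at most $\beta_{T_n-t}\{1-2\min(m_0,1)\cdots\}$; more precisely, it is bounded by $\beta(1-2/(\alpha^2/m_0+\sigma^2))$. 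This is negative for large $t$ in forward time and bounded otherwise. Integrating, the one-sided constant $\Lambda(t)$ satisfies $\int_0^{T_n}\beta_{T_n-t}\Lambda(t)\,dt\leq C(m_0,L_0)$ uniformly in $n$. The reason is that $1-2/(\alpha^2/m_0+\sigma^2)\to -1$ as $B\to\infty$, and the region where it is positive has bounded $B$-length. Hence synchronous coupling gives $\|Y_t-Y'_t\|\leq e^{C}\|Y_0-Y_0'\|$ pathwise.

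For the first term we couple $\widehat Y$ and $Y$ with the same Brownian motion and the same initial point $\widehat Y_0=Y_0\sim N(0,I_d)$. Writing $D_t=\widehat Y_t-Y_t$, we have
\begin{equation*}
\mathrm dD_t=\beta_{T_n-t}\bigl\{D_t+2(s_{T_n-t}(\widehat Y_t)-s_{T_n-t}(Y_t))+2e_{n,T_n-t}(\widehat Y_t)\bigr\}\mathrm dt,
\end{equation*}
with no noise term. We then compute $\frac{d}{dt}\|D_t\|\leq \beta_{T_n-t}\Lambda(t)\|D_t\|+2\beta_{T_n-t}\|e_{n,T_n-t}(\widehat Y_t)\|$. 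Grönwall yields
\begin{equation*}
\|D_{T_n}\|\leq 2e^{C}\int_0^{T_n}\beta_{T_n-t}\|e_{n,T_n-t}(\widehat Y_t)\|\,dt.
\end{equation*}
Taking conditional $L^q$ norms given $\mathcal F_n$ and applying Minkowski's integral inequality gives $W_q(\widehat\mu_n,\mathcal L(Y_{T_n}))\leq 2e^C\mathcal A_{n,q}(T_n)\to_p0$ by Assumption~\ref{ass:score-approximation}. The error is evaluated along $\widehat Y$, which matches exactly how $\mathcal A_{n,q}$ is defined. This is why we place the score difference on the exact score's Lipschitz side.

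The main obstacle is the technical justification of the coupling. We need existence and uniqueness of the strong solution for the exact reverse SDE, which is fine since $s_t$ is globally Lipschitz by the Hessian bounds; note that as $t\to0$ the bound tends to $L_0$, so there is no blow-up. We also need the learned SDE to live on the same probability space. Assumption~\ref{ass:score-approximation} gives only a weak solution in general, so we would take a weak solution $(\widehat Y,W)$ and solve the exact SDE strongly driven by that same $W$, which is allowed. We must also check that $t\mapsto\|D_t\|$ is absolutely continuous so that the differential inequality holds almost everywhere. The uniform bound on $\int\beta\Lambda$ also needs care: one should verify it via $B$ as time variable, using $\underline\beta\leq\beta_t\leq\overline\beta$.
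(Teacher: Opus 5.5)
Your proof is correct and is essentially the paper's argument. Both use OU-propagated strong log-concavity to get the one-sided Lipschitz bound $\beta(1-2m_t^{\mathrm{sc}})$ for the exact reverse drift, whose positive part has uniformly bounded integral after the change of variables $r=B(u)$; both then apply a synchronous coupling, pathwise Gronwall, and conditional $L^q$ norms with Minkowski to obtain $W_q(\widehat\mu_n,\mu)\lesssim W_q\{N(0,I_d),p_{T_n}\}+\mathcal A_{n,q}(T_n)$. The one difference is that you route through the exact reverse process started from $N(0,I_d)$ and use two couplings, whereas the paper uses a single coupling of the learned process from $N(0,I_d)$ with the exact process from an optimally coupled $p_{T_n}$ initial point; your split has the minor advantage that you only ever solve the exact (Lipschitz) SDE strongly on the learned weak solution's probability space, so no enlargement is needed to carry the optimally coupled initial pair.
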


Combining these two ingredients gives the fixed-dimensional variance result.
\begin{mytheorem}
    Suppose that $p$ is fixed and Assumptions~\ref{ass:target-distribution}, \ref{ass:noise-schedule}, and~\ref{ass:score-approximation} hold. Then
    \begin{equation*}
        \frac{\Var^*\!\left(\sqrt n\,c^\top\{\widehat\beta^*-\beta(\widehat\mu_n)\}\,\middle|\,\widehat\mu_n \right)}{\Var\!\left(\sqrt n\,c^\top\{\widehat\beta-\beta(\mu)\}\right)}\to_p 1.
    \end{equation*}
    \label{fixed-vari}
\end{mytheorem}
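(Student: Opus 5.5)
The plan is to reduce both variances to the common limit $\tau^2(\cdot)$. For the denominator, $\sqrt n\,c^\top(\widehat\beta-\beta(\mu))$ has variance converging to $\tau^2(\mu)>0$. This is classical for fixed $p$. Assumption~\ref{ass:target-distribution} gives a density and finite moments, and the small-ball behaviour of a strongly log-concave design gives $\mathbb E\|\widehat\Sigma_n^{-1}\|^r<\infty$ for $n$ large. Hence the linearization below, applied with $\nu=\mu$, together with uniform integrability yields $n\Var(c^\top\widehat\beta)\to\tau^2(\mu)$.

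For the numerator, it suffices to show $\Var^*(\sqrt n\,c^\top\{\widehat\beta^*-\beta(\widehat\mu_n)\}\mid\widehat\mu_n)\to_p\tau^2(\mu)$. Write $\nu=\widehat\mu_n$, $e_i^*=Y_i^*-X_i^{*\top}\beta(\nu)$ and $\bar S^*=n^{-1}\sum_iX_i^*e_i^*$. We have $\mathbb E_\nu[Xe_\nu]=0$, and
\begin{equation*}
\sqrt n\,c^\top\{\widehat\beta^*-\beta(\nu)\}=c^\top(\widehat\Sigma_n^*)^{-1}\sqrt n\,\bar S^*=\sqrt n\,c^\top\Sigma_X(\nu)^{-1}\bar S^*+R_n^*,
\end{equation*}
with
\begin{equation*}
R_n^*=c^\top\{(\widehat\Sigma_n^*)^{-1}-\Sigma_X(\nu)^{-1}\}\sqrt n\,\bar S^*.
\end{equation*}
The leading term has conditional variance exactly $\tau^2(\nu)$. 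By the preceding lemma on $W_q$ convergence with $q>4$, fourth moments converge: $\Sigma_X(\nu)\to\Sigma_X(\mu)$, $m_{XY}(\nu)\to m_{XY}(\mu)$, and $\tau^2(\nu)\to_p\tau^2(\mu)$. The last limit holds because $S_\nu$ is a polynomial of degree two in $(x,y)$ with coefficients continuous in $\nu$, and $W_q$ convergence with $q>4$ gives convergence of moments of order at most four. It then remains to show $\mathbb E^*[(R_n^*)^2\mid\nu]\to_p0$; Cauchy--Schwarz then gives the variance limit.

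\textbf{Main obstacle: the remainder.} I would split on the event $A^*=\{\lambda_{\min}(\widehat\Sigma_n^*)\geq\lambda_0/2\}$, where $\lambda_0=\lambda_{\min}(\Sigma_X(\mu))/2$.

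On $A^*$, the identity $(\widehat\Sigma_n^*)^{-1}-\Sigma_X(\nu)^{-1}=(\widehat\Sigma_n^*)^{-1}(\Sigma_X(\nu)-\widehat\Sigma_n^*)\Sigma_X(\nu)^{-1}$ gives
\begin{equation*}
|R_n^*|\lesssim\|\widehat\Sigma_n^*-\Sigma_X(\nu)\|\,\|\sqrt n\bar S^*\|.
\end{equation*}
By Cauchy--Schwarz its second moment is bounded by $\{\mathbb E^*\|\widehat\Sigma_n^*-\Sigma_X(\nu)\|^4\}^{1/2}\{\mathbb E^*\|\sqrt n\bar S^*\|^4\}^{1/2}$. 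Rosenthal-type bounds need eighth moments under $\nu$, which are not directly given by $W_q$ convergence when $q\le 8$. Instead I would use a truncation, or Hölder with exponents adapted to $q>4$, to obtain $\mathbb E^*[(R_n^*)^2\mathbf 1_{A^*}]=O_p(n^{-\delta})$. Alternatively one could obtain uniform higher moments of $\widehat\mu_n$ from the dissipativity of the learned reverse SDE.

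On $(A^*)^c$, bound $|R_n^*|\le\{\lambda_{\min}(\widehat\Sigma_n^*)^{-1}+C\}\|\sqrt n\bar S^*\|$ and apply Hölder:
\begin{equation*}
\mathbb E^*[(R_n^*)^2\mathbf 1_{(A^*)^c}]\le\{\mathbb E^*\lambda_{\min}^{-2r}\}^{1/r}\{\mathbb E^*\|\sqrt n\bar S^*\|^{2s}\}^{1/s}\,\mathbb P^*((A^*)^c)^{1/k}.
\end{equation*}
The inverse moment $\mathbb E^*\lambda_{\min}(\widehat\Sigma_n^*)^{-2r}$ is $O_p(1)$ by the lower-tail lemma, taking $a>2r$ and integrating the tail bound on $(0,\eta_a)$. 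The probability $\mathbb P^*((A^*)^c)\to_p0$ follows from a conditional law of large numbers for $\widehat\Sigma_n^*$, using $\Sigma_X(\nu)\to\Sigma_X(\mu)$ and bounded $q$-moments; Markov's inequality gives a polynomial rate. The delicate point is the moment budget: $\|\sqrt n\bar S^*\|^{2s}$ with $s>1$ needs moments of $\|Z\|$ of order $4s$, slightly above four, which $q>4$ permits.

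Combining both pieces gives $\mathbb E^*[(R_n^*)^2]\to_p0$, hence the numerator converges to $\tau^2(\mu)$, and the ratio converges to $1$ in probability.
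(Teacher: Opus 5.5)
Your proposal is correct and follows the paper's route. It uses the same linearization $A_n^*+R_n^*$ with $\Var^*(A_n^*\mid\widehat\mu_n)=\tau^2(\widehat\mu_n)\to_p\tau^2(\mu)$ obtained from $W_q$ moment convergence, and it controls the remainder by H\"older's inequality with $s=q/4$ (so Rosenthal uses exactly the $q$ available moments) against inverse-Gram moments from the lower-tail lemma. The denominator is handled by the same argument applied to $\mu$. Your split on $A^*$ merely repackages the paper's step ``$D_n^*=o_{p^*}(1)$ and $\mathbb E^*(D_n^*)^Q=O_p(1)$ imply $\mathbb E^*(D_n^*)^{2r}\to_p0$''. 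On $A^*$ your moment-budget worry disappears, because $D_n^*\le C\min\{1,\|\widehat\Sigma_n^*-\Sigma_X(\widehat\mu_n)\|\}$ there; hence its $2r$-th moment is at most $C\,\mathbb E^*\|\widehat\Sigma_n^*-\Sigma_X(\widehat\mu_n)\|^2=O_p(n^{-1})$, which needs only fourth moments.
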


\begin{mycorollary}
Under the assumptions of Theorem~\ref{fixed-vari},
\begin{equation*}
    \begin{aligned}
        \sup_{t\in\mathbb R}\Bigl|&\mathbb P^*\!\left(\sqrt n\,c^\top\{\widehat\beta^*-\beta(\widehat\mu_n)\}\leq t\,\middle|\,\widehat\mu_n\right)-\mathbb P\!\left(\sqrt n\,c^\top\{\widehat\beta-\beta(\mu)\}\leq t\right)\Bigr|\to_p 0.
    \end{aligned}
\end{equation*}
\end{mycorollary}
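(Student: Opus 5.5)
The corollary follows from two ingredients: a conditional central limit theorem for the bootstrap contrast, and the normality of the original contrast. Combining them with Pólya's theorem, and using that the limit is continuous, upgrades pointwise convergence to uniform convergence in $t$. The plan is to show that both $\sqrt n\,c^\top\{\widehat\beta-\beta(\mu)\}$ and its bootstrap analogue converge to $N(0,\tau^2(\mu))$. The first statement is classical. Under Assumption~\ref{ass:target-distribution}, with $p$ fixed and $\Sigma_X(\mu)\succ0$, the law of large numbers and the CLT applied to $n^{-1/2}\sum_i X_i e_\mu(Z_i)$ give
\[
\sqrt n\,c^\top\{\widehat\beta-\beta(\mu)\}=n^{-1/2}\sum_i S_\mu(Z_i)+o_p(1)\Rightarrow N(0,\tau^2(\mu)),
\]
and $\tau^2(\mu)>0$ makes the limit nondegenerate.

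For the bootstrap side I would work conditionally on $\widehat\mu_n$ along subsequences. By the lemma giving $W_q(\widehat\mu_n,\mu)\to_p0$, every subsequence has a further subsequence along which $W_q(\widehat\mu_n,\mu)\to0$ almost surely. Fix such a realization. Convergence in $W_q$ with $q>4$ gives weak convergence together with convergence of moments of order below $q$. Hence $\Sigma_X(\widehat\mu_n)\to\Sigma_X(\mu)$, $m_{XY}(\widehat\mu_n)\to m_{XY}(\mu)$, and therefore $\beta(\widehat\mu_n)\to\beta(\mu)$. Moreover $S_{\widehat\mu_n}(Z)$, with $Z\sim\widehat\mu_n$, has bounded moments of order $q/2>2$. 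This yields $\tau^2(\widehat\mu_n)\to\tau^2(\mu)$ and a Lyapunov condition for the triangular array $S_{\widehat\mu_n}(Z_i^*)$. Writing
\[
\sqrt n\,c^\top\{\widehat\beta^*-\beta(\widehat\mu_n)\}=c^\top(\widehat\Sigma_n^*)^{-1}\,n^{-1/2}\sum_i X_i^*e_{\widehat\mu_n}(Z_i^*),
\]
the Lyapunov CLT gives $n^{-1/2}\sum_i X_i^*e_{\widehat\mu_n}(Z_i^*)\Rightarrow N(0,\mathrm{Var}_\mu(Xe_\mu))$ conditionally. A conditional weak law of large numbers, which uses uniform integrability from the second moments, gives $\widehat\Sigma_n^*\to\Sigma_X(\mu)$ in conditional probability. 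Slutsky's lemma then shows that the bootstrap contrast converges conditionally to $N(0,\tau^2(\mu))$. Note that $\widehat\Sigma_n^*$ is invertible with conditional probability tending to one, and the event where it is singular affects only probabilities by $o(1)$. So for the distributional statement the Gram lower-tail lemma is not needed; only $W_q$ consistency is.

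Finally, both distribution functions converge to the continuous Gaussian c.d.f. $\Phi(\cdot/\tau(\mu))$. Pólya's theorem makes the convergence uniform in $t$, and the triangle inequality bounds the displayed supremum by $o(1)$ along the almost-sure subsequence. The subsequence characterization of convergence in probability then gives the result $\to_p0$. The main obstacle is the measurability and uniformity bookkeeping in the conditional Lyapunov CLT, namely verifying that the $(2+\delta)$-moments of $S_{\widehat\mu_n}$ stay bounded. I would handle this with the uniform $q$-moment bound implied by $W_q$ convergence together with continuity of $\Sigma_X(\cdot)^{-1}$ near $\Sigma_X(\mu)$. Alternatively, one could cite Theorem~\ref{fixed-vari} to identify the bootstrap scale, but the CLT argument above already fixes the variance as $\tau^2(\mu)$.
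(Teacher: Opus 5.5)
Your proof is correct, but it takes a different route from the paper's.

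\textbf{The paper's route.} The paper builds the corollary on the $L^2$ linearization established for Theorem~\ref{fixed-vari}. Lemma~\ref{le7} gives $\mathbb E^*[(R_n^*)^2\mid\widehat\mu_n]\to_p0$, so the bootstrap contrast equals $n^{-1/2}\sum_iS_{\widehat\mu_n}(Z_i^*)+o_{p^*}(1)$. The paper then verifies a conditional Lyapunov condition for these scalar influence terms, choosing $\delta$ with $2(2+\delta)\le q$ and using $\tau^2(\widehat\mu_n)\to_p\tau^2(\mu)>0$. It finishes with the conditional P\'olya lemma (Lemma~\ref{lem:conditional-polya}), applied directly in probability with no subsequence extraction. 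Lemma~\ref{le7} controls $(D_n^*)^{2r}$ through negative moments of $\lambda_{\min}(\widehat\Sigma_n^*)$. Formally, then, the paper's proof rests on the Moser-iteration density bound and the Gram lower-tail lemma.

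\textbf{Your route.} You apply a vector CLT to $n^{-1/2}\sum_iX_i^*e_{\widehat\mu_n}(Z_i^*)$. You handle $(\widehat\Sigma_n^*)^{-1}$ by a conditional law of large numbers and Slutsky's lemma. You therefore need only $R_n^*=o_{p^*}(1)$, not $L^2$ control. This is the classical Bickel--Freedman argument.

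\textbf{What each route buys.} Yours gives a sharper statement of what distributional consistency requires: only the $W_q$ consistency of Lemma~\ref{fix_wq}. In fact $W_4$ consistency suffices, using uniform integrability of $\|z\|^4$ and a Lindeberg condition. Consequently, the fixed-dimensional contamination example of Section~\ref{subsec:w4-counterexamples} still has an asymptotically correct bootstrap distribution, even though its variance ratio diverges. This makes explicit that the anti-concentration machinery is needed for variance consistency but not for distributional consistency. The paper's route is shorter once Lemma~\ref{le7} is in hand, and it yields the stronger $L^2$ linearization along the way.
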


A rigorous proof is provided in Appendix~\ref{app:theorem}. We briefly outline its main steps. First, the OU marginals of a strongly log-concave target remain strongly log-concave, with time-dependent curvature $m_t^{\mathrm{sc}}$. This controls the positive expansion of the exact reverse drift and gives the coupling estimate
\begin{equation*}
    W_q(\widehat\mu_n,\mu)\leq C W_q\{N(0,I_d),p_{T_n}\}+C\mathcal A_{n,q}(T_n).
\end{equation*}
The first term vanishes because $B(T_n)\to\infty$, and the second vanishes by score approximation. Hence $W_q(\widehat\mu_n,\mu)\to_p 0$.

Second, the preceding lower-tail lemma and the layer-cake formula imply that, for every $Q<a$,
\begin{equation*}
    \mathbb E^*\!\left[\lambda_{\min}(\widehat\Sigma_n^*)^{-Q}\,\middle|\,\widehat\mu_n\right]=O_p(1).
\end{equation*}
Third, the bootstrap OLS statistic admits the $L^2$ linearization
\begin{equation*}
    \sqrt n\,c^\top\{\widehat\beta^*-\beta(\widehat\mu_n)\}=\frac1{\sqrt n}\sum_{i=1}^nS_{\widehat\mu_n}(Z_i^*)+R_n^*,
\end{equation*}
where $\mathbb E^*\{(R_n^*)^2\mid\widehat\mu_n\}\to_p 0$. This proves variance consistency. Distributional consistency follows from the corresponding conditional central limit theorem and P\'olya's theorem.

\section{High-dimensional cases}
We now turn from fixed \(p\) to the proportional high-dimensional regime $p_n/n\to\kappa\in(0,1)$. Under the standard Gaussian model with random design, for any deterministic contrast $c_n\in\mathbb R^{p_n}$ with $\|c_n\|_2=1$,
\begin{equation*}
    \Var\left(c_n^\top\widehat\beta\mid X\right)=\sigma_n^2 c_n^\top(X^\top X)^{-1}c_n.
\end{equation*}
By classical Wishart theory, in the proportional regime,
\begin{equation*}
    n\,c_n^\top(X^\top X)^{-1}c_n\to_p\frac{1}{1-\kappa}.
\end{equation*}
This describes the typical scale of the conditional variance. For the unconditional variance, note that
\begin{equation*}
    \mathbb E(c_n^\top\widehat\beta\mid X)=c_n^\top\beta_n,
\end{equation*}
which does not depend on $X$. The law of total variance therefore gives
\begin{equation*}
    \Var\left(c_n^\top\widehat\beta\right)=\sigma_n^2\mathbb E\left[c_n^\top(X^\top X)^{-1}c_n\right].
\end{equation*}
Since $X^\top X\sim W_{p_n}(n,I_{p_n})$, the inverse-Wishart expectation formula yields, for all sufficiently large $n$,
\begin{equation*}
    \mathbb E\left[(X^\top X)^{-1}\right]=\frac{1}{n-p_n-1}I_{p_n}.
\end{equation*}
Consequently,
\begin{equation*}
    \Var\left(c_n^\top\widehat\beta\right)=\frac{\sigma_n^2}{n-p_n-1}=\frac{\sigma_n^2}{n(1-\kappa)}\{1+o(1)\}.
\end{equation*}

This variance scale provides the benchmark for bootstrap procedures in the high-dimensional regime. The classical pairs bootstrap does not generally reproduce this benchmark: its expected conditional bootstrap variance is systematically distorted and, for standard pairs bootstrap weights, tends to overestimate the correct variance. Thus, the pairs bootstrap becomes conservative as $\kappa$ increases \citep{el2018can}. Our goal is to show that replacing the empirical distribution by a sufficiently accurate diffusion distribution estimator can recover the correct high-dimensional variance scale. We also extend the analysis from the standard Gaussian design to a non-Gaussian strongly log-concave design with a general well-conditioned covariance matrix.

For simplicity, let $d_n=p_n+1$ and write $Z_i=(X_i,Y_i)$ for the $i$th observation. Let $Z_0$ denote a generic draw from their common law and consider the standard OU forward process
\begin{equation*}
    \mathrm dZ_t=-Z_t\,\mathrm dt+\sqrt{2}\,\mathrm dW_t,
\end{equation*}
or, equivalently,
\begin{equation*}
    Z_t=e^{-t}Z_0+\sqrt{1-e^{-2t}}\,\xi,\qquad\xi\sim N(0,I_{d_n}).
\end{equation*}
Let $p_{n,t}$ denote the density of $Z_t$, and let
\begin{equation*}
    s_{n,t}(z)=\nabla\log p_{n,t}(z)
\end{equation*}
be its score.

We now state the assumptions for the high-dimensional result. The first assumption specifies the underlying distribution, and the remaining assumptions specify the contrast and the required accuracy and stability of the fitted diffusion distribution estimator.

\begin{myassumption}[Target Distribution]
    \label{ass:hd-setting}

    Let $p=p_n$ and suppose that
    \begin{equation*}
        \frac{p_n}{n}\to\kappa\in(0,1).
    \end{equation*}
    The observations $Z_i=(X_i,Y_i)\in\mathbb R^{p_n+1}$, $i=1,\ldots,n$, are i.i.d.\ and satisfy the following conditions.
    \begin{enumerate}
        \item \textit{Linear model with Gaussian noise.}
        We have
        \begin{equation*}
            Y_i=X_i^\top\beta_n+\varepsilon_i,\qquad\varepsilon_i\sim N(0,\sigma_n^2),\qquad\varepsilon_i\perp X_i.
        \end{equation*}
        \item \textit{Design distribution.}
        The distribution of $X_i$ has density
        \begin{equation*}
            \mu_{n,X}(\mathrm dx)=Z_{n,X}^{-1}\exp\{-V_n(x)\}\,\mathrm dx,
        \end{equation*}
        and satisfies
        \begin{equation*}
            \mathbb E X_i=0,\qquad\mathbb E X_iX_i^\top=\Sigma_n.
        \end{equation*}
        \item \textit{Strong log-concavity and covariance regularity.}
        There exist constants
        \begin{equation*}
            0<c_\Sigma\leq C_\Sigma<\infty,\qquad 0<\underline m\leq\overline m<\infty,
        \end{equation*}
        independent of $n$, such that
        \begin{equation*}
            c_\Sigma I_{p_n}\preceq\Sigma_n\preceq C_\Sigma I_{p_n},
        \end{equation*}
        and, for every $x\in\mathbb R^{p_n}$,
        \begin{equation}
            \underline m\,\Sigma_n^{-1}\preceq \nabla^2V_n(x)\preceq\overline m\,\Sigma_n^{-1}. \label{eq:relative-curvature}
        \end{equation}

        \item \textit{Parameter bounds and diffusion horizon.}
        There exist constants $B<\infty$ and $0<\sigma_{\min}^2\leq\sigma_{\max}^2<\infty$, independent of $n$, such that
        \begin{equation*}
            \|\beta_n\|_2\leq B,\qquad\sigma_{\min}^2\leq\sigma_n^2\leq\sigma_{\max}^2.
        \end{equation*}
        Moreover, with $d_n=p_n+1$, the diffusion horizon satisfies
        \begin{equation*}
            T_n\to\infty,\qquad d_ne^{-2T_n}\to0.
        \end{equation*}

        \item \textit{Existence of a suitable score estimator.}
        Let $p_{n,0}$ denote the joint density of $Z_i=(X_i,Y_i)$ and let
        \begin{equation*}
            s_{n,0}(z)=\nabla\log p_{n,0}(z)
        \end{equation*}
        be its joint score. We assume that the joint distribution belongs to a class for which a fitted score can satisfy Assumption~\ref{ass:hd-score-approximation}. Section~\ref{sec:structured-minimax} gives concrete sufficient examples.
    \end{enumerate}
\end{myassumption}

\begin{myassumption}[Contrast sequence]
    \label{ass:hd-contrast}

    The contrast vectors $c_n\in\mathbb R^{p_n}$ are deterministic and satisfy
    \begin{equation*}
        \|c_n\|_2=1.
    \end{equation*}
\end{myassumption}

Before stating the score condition, we record a functional inequality implied by the distributional assumptions. Under Assumption~\ref{ass:hd-setting}, the exact OU marginals admit a uniform log-Sobolev constant. Let $C_{\mathrm{LS}}<\infty$ denote a deterministic constant, independent of $n$ and $t$, such that
    \begin{equation*}
        \operatorname{Ent}_{p_{n,t}}(g^2)\leq2C_{\mathrm{LS}}\int_{\mathbb R^{d_n}}\|\nabla g(z)\|_2^2p_{n,t}(z)\,\mathrm dz.
    \end{equation*}

\begin{myassumption}[Score approximation and Lipschitz Control]
    \label{ass:hd-score-approximation}

    Conditionally on $\mathcal F_n$, the map $(t,z)\mapsto\widehat s_{n,t}(z)$ is jointly Borel measurable. Let
    \begin{equation*}
        e_{n,t}(z)=\widehat s_{n,t}(z)-s_{n,t}(z),
    \end{equation*}
    and assume that
    \begin{equation}
        \int_0^{T_n}\mathbb E_{Z_t\sim p_{n,t}}\left[\|e_{n,t}(Z_t)\|_2^4\right]\,\mathrm dt\to_p 0. \label{eq:score-l4}
    \end{equation}
    Assume that there exist fixed deterministic constants $L_\star>0$ and $\lambda_\star>0$ satisfying
    \begin{equation}
        L_\star<\frac14,\qquad\lambda_\star>32C_{\mathrm{LS}},\qquad4\lambda_\star C_{\mathrm{LS}}L_\star^2<1, \label{eq:lipschitz-absorption}
    \end{equation}
    and
    \begin{equation}
        \mathbb P\left(\sup_{0\leq t\leq T_n}\Lip(e_{n,t})\leq L_\star\right)\longrightarrow1. \label{eq:small-lip}
    \end{equation}
\end{myassumption}

\begin{remark}[Structural requirements for high-dimensional score approximation]
Assumption~\ref{ass:hd-score-approximation} cannot hold uniformly over all high-dimensional models \citep{wibisono2024optimal}. Section~\ref{sec:structured-minimax} gives Gaussian and non-Gaussian examples for which the required score error does vanish in the proportional regime. Characterizing broader classes with this property remains an open problem \citep{cole2024score,gottwald2025localized}.
\end{remark}

\begin{remark}[Equivalence of the two curvature bounds]
The curvature condition in part~(3) of Assumption~\ref{ass:hd-setting} is expressed relative to the covariance matrix \(\Sigma_n\), and therefore differs in appearance from the usual Euclidean two-sided curvature condition
\begin{equation*}
    m_0I_{p_n}\preceq\nabla^2V_n(x)\preceq M_0I_{p_n}.
\end{equation*}
Under the uniform covariance bounds,
\begin{equation*}
    \frac{1}{C_\Sigma}I_{p_n}\preceq\Sigma_n^{-1}\preceq\frac{1}{c_\Sigma}I_{p_n}.
\end{equation*}
Consequently, the bounds in Assumption~\ref{ass:hd-setting}
\begin{equation*}
    \underline m\,\Sigma_n^{-1}\preceq\nabla^2V_n(x)\preceq \overline m\,\Sigma_n^{-1}
\end{equation*}
imply
\begin{equation*}
    \dfrac{\underline m}{C_{\Sigma}}I_{p_n}\preceq \nabla^2 V_n(x)\preceq \dfrac{\overline m}{c_{\Sigma}}I_{p_n}
\end{equation*}
Conversely, the Euclidean bounds imply
\begin{equation*}
    m_0c_\Sigma\,\Sigma_n^{-1}\preceq\nabla^2V_n(x)\preceq M_0C_\Sigma\,\Sigma_n^{-1}.
\end{equation*}
Thus, under the uniform spectral bounds on \(\Sigma_n\), the two formulations are equivalent up to constants. We use the formulation involving $\Sigma_n$ because, after whitening by \(\Sigma_n^{-1/2}\), it yields dimension-free Euclidean curvature bounds directly.
\end{remark}

\begin{remark}[Role of the Gaussian noise assumption]
\label{rem:hd-gaussian-noise}

The Gaussian assumption on the regression noise is imposed mainly for simplicity. In the proportional regime considered here, the increasing dimension comes from the covariate vector \(X\in\mathbb R^{p_n}\), whereas the regression noise remains one-dimensional. Accordingly, the dimension-dependent estimates in the proof are primarily determined by the design distribution.

The argument may extend to centered, smooth, strongly log-concave additive noise with uniformly controlled variance and moments. We do not pursue this extension here.
\end{remark}

These assumptions serve the same purposes as in the fixed-dimensional analysis; see Remark~\ref{rem:score-versus-wasserstein}. Here the corresponding constants and error bounds must hold uniformly in \(n\) because \(p_n/n\to\kappa\in(0,1)\).

We first record the global distributional consequence of score approximation.
\begin{mylemma}
\label{lem:hd-joint-W4}
    If Assumptions~\ref{ass:hd-setting} and~\ref{ass:hd-score-approximation} hold, then the fitted diffusion distribution estimator $\widehat \mu_n$ satisfies $W_4(\widehat \mu_n,\mu_n)\to_p 0$.
\end{mylemma}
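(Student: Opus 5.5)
We compare the exact reverse process and the learned reverse process through a synchronous coupling, and propagate the score error with a Gr\"onwall argument in $L^4$. Let $Y_t$ solve the exact reverse SDE
\begin{equation*}
    \mathrm dY_t=\{Y_t+2s_{n,T_n-t}(Y_t)\}\,\mathrm dt+\sqrt2\,\mathrm dW_t,\qquad Y_0\sim p_{n,T_n}.
\end{equation*}
Let $\widehat Y_t$ solve the learned SDE driven by the same Brownian motion, with $\widehat Y_0\sim N(0,I_{d_n})$, and couple $(Y_0,\widehat Y_0)$ optimally for $W_4$. Then $Y_{T_n}\sim\mu_n$ and $\widehat Y_{T_n}\sim\widehat\mu_n$, so $W_4(\widehat\mu_n,\mu_n)\le(\mathbb E\|\widehat Y_{T_n}-Y_{T_n}\|^4)^{1/4}$. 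The initial mismatch is handled first. Since $\mu_n$ is strongly log-concave with dimension-free constants after whitening, the OU contraction gives
\begin{equation*}
    W_4(p_{n,T_n},N(0,I_{d_n}))\le e^{-T_n}W_4(\mu_n,N(0,I_{d_n}))\lesssim e^{-T_n}\sqrt{d_n}.
\end{equation*}
Here the second-moment scale of $\mu_n$ is $O(d_n)$ by the covariance and parameter bounds. This term vanishes because $d_ne^{-2T_n}\to0$.

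For the propagation, write $\Delta_t=\widehat Y_t-Y_t$. The noise cancels, so
\begin{equation*}
    \mathrm d\Delta_t=\bigl[\Delta_t+2\{s_{n,T_n-t}(\widehat Y_t)-s_{n,T_n-t}(Y_t)\}+2e_{n,T_n-t}(\widehat Y_t)\bigr]\,\mathrm dt.
\end{equation*}
The Hessian of $\log p_{n,s}$ is bounded above by $-m_s I$, where $m_s$ is the strong-concavity constant of the OU marginal. Hence the middle term contributes at most $-2m_{T_n-t}\|\Delta_t\|^2$ to $\frac{\mathrm d}{\mathrm dt}\|\Delta_t\|^2$. By the Bakry--\'Emery propagation of curvature under OU, $m_s\ge \{e^{-2s}/m+(1-e^{-2s})\}^{-1}$ with $m=\min(\underline m/C_\Sigma,\dots)$ dimension-free. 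The net rate $1-2m_s$ is negative near $s=\infty$ and bounded on compacts, so $\int_0^{T_n}(1-2m_{T_n-t})_+\,\mathrm dt$ is bounded uniformly in $n$. Differentiating $\|\Delta_t\|$ then gives
\begin{equation*}
    \|\Delta_{T_n}\|\le C\|\Delta_0\|+2C\int_0^{T_n}\|e_{n,T_n-t}(\widehat Y_t)\|\,\mathrm dt.
\end{equation*}
Minkowski's inequality in $L^4$ converts this into a bound on $(\mathbb E\|\Delta_{T_n}\|^4)^{1/4}$.

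The main obstacle is that the score error is evaluated along the learned process $\widehat Y_t$. Assumption~\ref{ass:hd-score-approximation}, however, controls $\int\mathbb E_{p_{n,t}}\|e_{n,t}\|^4\,\mathrm dt$ under the \emph{exact} marginals. There is also a mismatch between the integrated $L^1$-in-time quantity and the assumed $L^4$-in-time one. Two tools handle this.

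\textbf{Step 1: move the error onto the exact path.} Use the Lipschitz bound $\|e(\widehat Y_t)\|\le\|e(Y_t)\|+L_\star\|\Delta_t\|$. The smallness condition $L_\star<1/4$ lets the term $2L_\star\|\Delta_t\|$ be absorbed into the dissipation wherever $m_s$ is of order one. More robustly, it is absorbed via the Gr\"onwall factor, provided $T_n$-uniformity is kept by splitting $[0,T_n]$ into the region where the contraction dominates and a bounded final window.

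\textbf{Step 2: convert to the $L^4$-in-time quantity.} For the $L^1$-to-$L^4$ issue in time, apply H\"older with an exponential weight,
\begin{equation*}
    \int_0^{T_n}w_t\,(\mathbb E\|e_{n,T_n-t}(Y_t)\|^4)^{1/4}\,\mathrm dt\le\Bigl(\int w_t^{4/3}\,\mathrm dt\Bigr)^{3/4}\Bigl(\int\mathbb E_{p_{n,t}}\|e_{n,t}\|^4\,\mathrm dt\Bigr)^{1/4}.
\end{equation*}
Here $w_t$ is the Gr\"onwall contraction weight, which is integrable uniformly in $n$ because of the eventual contraction.

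If the direct absorption fails near $t=T_n$, where $m_s$ may be small, I would instead use a change-of-measure argument. The log-Sobolev constant $C_{\mathrm{LS}}$ and the condition $4\lambda_\star C_{\mathrm{LS}}L_\star^2<1$ give an exponential-moment (Herbst) bound for the Lipschitz function $\|e_{n,t}\|$ under $p_{n,t}$. Combined with a Girsanov/KL bound between the laws of $\widehat Y_t$ and $Y_t$, this transfers fourth moments from $p_{n,t}$ to $\mathcal L(\widehat Y_t)$ via the Donsker--Varadhan inequality. Either route yields $W_4(\widehat\mu_n,\mu_n)\lesssim e^{-T_n}\sqrt{d_n}+o_p(1)\to_p0$.
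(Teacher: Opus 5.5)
Your proposal is correct and is essentially the paper's argument. It uses a synchronous coupling in which $\Lip(e_{n,t})\le L_\star<1/4$ makes the one-sided rate $1-2m(T_n-r)+2L_\star\le -a_0+2C_0e^{-2(T_n-r)}$ with $a_0=1-2L_\star>1/2$; the score error is moved onto the exact path, whose marginals are $p_{n,T_n-r}$; and the exponential weight converts the time integral into the integrated fourth-moment score error. The differences are cosmetic: the paper splits into two couplings joined by the triangle inequality and uses weighted Jensen instead of H\"older, and your change-of-measure fallback is unnecessary because the bounded final window contributes only a bounded Gr\"onwall factor.
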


Lemma~\ref{lem:hd-joint-W4} follows from a coupling of the exact and learned reverse SDEs and controls global moments of the fitted law. The variance proof also requires lower-tail control of the generated Gram matrix, which joint \(W_4\) convergence alone does not provide. For this sharper conclusion, we compare the corresponding Fokker--Planck equations through an entropy
argument.

We first recall the connection between stochastic differential equations and Fokker--Planck equations. The SDE
\begin{equation*}
    \mathrm dX_t=b_t(X_t)\,\mathrm dt+\sqrt2\,\mathrm dW_t
\end{equation*}
provides a Lagrangian description of the diffusion, whereas the Fokker--Planck equation provides an Eulerian description of its marginal laws. Formally, if $\rho_t$ denotes the density of $X_t$, then It\^o's formula gives
\begin{equation*}
    \partial_t\rho_t=\Delta\rho_t-\nabla\cdot(b_t\rho_t).
\end{equation*}
The converse direction is more delicate when the coefficients and densities are not smooth. Superposition principles show that, under suitable integrability conditions on the coefficients, a weak solution of the Fokker--Planck equation taking values in probability measures can be lifted to a probability measure on path space solving the associated martingale problem. Representation results of this type were established for bounded rough or degenerate coefficients and were subsequently extended to general integrable coefficients \citep{figalli2008existence,trevisan2016well}. The version of the superposition principle used in this paper is recorded in Lemma~\ref{lem:trevisan-superposition}. It provides a unique narrowly continuous representative of the Fokker--Planck solution and a martingale solution on path space whose marginal curve is this representative. If the associated martingale problem is well posed, Lemma~\ref{lem:superposition-uniqueness-transfer} shows that the lifted law on path space is unique and identifies the Fokker--Planck solution with the marginal curve of the corresponding SDE.

In the present setting, fix $n$, condition on the training $\sigma$-field $\mathcal F_n$, and write
\begin{equation*}
    d=d_n,\qquad T=T_n,\qquad\epsilon_{n,r}=e_{n,T-r}.
\end{equation*}
The exact and learned reverse drifts are
\begin{equation*}
    b_{n,r}(z)=z+2s_{n,T-r}(z),\qquad\widehat b_{n,r}(z)=b_{n,r}(z)+2\epsilon_{n,r}(z).
\end{equation*}
The covariance bounds and the score approximation assumption imply that these drifts are globally Lipschitz in the spatial variable and have time-integrable linear growth. Consequently, the corresponding SDEs and martingale problems are well posed.

More precisely, Lemma~\ref{lem:hd-reverse-regularity} shows that there exists an event $\mathcal E_n\in\mathcal F_n$, with $\mathbb P(\mathcal E_n)\to1$, such that, on $\mathcal E_n$, the following properties hold.

\begin{enumerate}
    \item The drift functions $b_{n,r}$ and $\widehat b_{n,r}$ are globally Lipschitz in the spatial variable. Moreover, there exists a nonnegative function $G_n\in L^4(0,T)$ such that, for Lebesgue-a.e. $r\in(0,T)$ and every $z\in\mathbb R^d$,
    \begin{equation*}
        \|b_{n,r}(z)\|+\|\widehat b_{n,r}(z)\|+\|\epsilon_{n,r}(z)\|\leq G_n(r)(1+\|z\|).
    \end{equation*}
    \item The exact and learned reverse SDEs admit unique non-explosive strong solutions and generate time-inhomogeneous Markov evolutions.
    \item Their marginal laws admit densities $\rho_{n,r}$ and $\widehat\rho_{n,r}$, respectively, where $\rho_{n,r}=p_{n,T-r}$.
    \item The densities admit representatives such that
    \begin{equation*}
        \rho_n,\widehat\rho_n\in C\bigl([0,T];L^1(\mathbb R^d)\bigr)\cap L^\infty\bigl((0,T);L^2(\mathbb R^d)\bigr) \cap L^2\bigl((0,T);H^1(\mathbb R^d)\bigr).
    \end{equation*}
    \item For every $0<\tau<R<T$ and every $L<\infty$,
    \begin{equation*}
        \partial_r\rho_n,\,\partial_r\widehat\rho_n\in L^2\bigl((\tau,R);H^{-1}(B_L)\bigr).
    \end{equation*}
    \item Since the exact OU density is smooth and strictly positive for $0<r<T$, the density ratio
    \begin{equation*}
        f_{n,r}=\frac{\widehat\rho_{n,r}}{\rho_{n,r}}
    \end{equation*}
    is well defined and satisfies
    \begin{equation*}
        f_n\in L_{\mathrm{loc}}^2\bigl((0,T);H_{\mathrm{loc}}^1(\mathbb R^d)\bigr).
    \end{equation*}
\end{enumerate}

The comparison requires enough regularity to justify testing the two weak Fokker--Planck equations. Smooth approximation and an energy estimate give
\begin{equation*}
    \frac{\mathrm d}{\mathrm dr}\|q_r^{(k)}\|_2^2 + 2\|\nabla q_r^{(k)}\|_2^2 \leq \Lambda_k(r)\|q_r^{(k)}\|_2^2, \qquad \sup_k\int_0^T\Lambda_k(r)\,\mathrm dr<\infty.
\end{equation*}
Together with the linear-growth moment bound, this estimate gives weak compactness. The superposition principle and uniqueness identify the limits with the SDE marginals, while the mild representation and positivity of the exact OU density give the stated time and density-ratio regularity. The full argument appears in Lemma~\ref{lem:hd-reverse-regularity}.

With these regularity properties in hand, the exact and learned Fokker--Planck equations can be written as
\begin{equation*}
    \begin{cases}
        \partial_r\rho_{n,r} = \Delta\rho_{n,r} - \nabla\cdot(b_{n,r}\rho_{n,r}),\\ \partial_r\widehat\rho_{n,r}=\Delta\widehat\rho_{n,r}-\nabla\cdot(\widehat b_{n,r}\widehat\rho_{n,r}).
    \end{cases}
\end{equation*}
Define
\begin{equation*}
    f_{n,r}=\frac{\widehat\rho_{n,r}}{\rho_{n,r}},\qquad D_n(r)=\int_{\mathbb R^d}f_{n,r}^2\rho_{n,r}\,\mathrm dz,
\end{equation*}
and
\begin{equation*}
    H_n(r)=D_n(r)-1=\chi^2\left(\widehat\rho_{n,r}(z)\,\mathrm dz\,\middle\|\,\rho_{n,r}(z)\,\mathrm dz\right).
\end{equation*}
The regularity above permits a rigorous density ratio calculation after introducing smooth truncations of \(f_{n,r}\) and spatial cutoffs. Passing first to the whole space and then removing the truncation gives the absolutely continuous identity
\begin{equation*}
    \begin{aligned}
        D_n'(r)&=-2\int_{\mathbb R^d}\|\nabla f_{n,r}\|^2\rho_{n,r}\,\mathrm dz+4\int_{\mathbb R^d}f_{n,r}\epsilon_{n,r}^\top\nabla f_{n,r}\rho_{n,r}\,\mathrm dz.
    \end{aligned}
\end{equation*}

The Gibbs variational principle and the uniform log-Sobolev inequality control the mixed term by the Fisher information and the fourth moment score error. After absorption and the Poincar\'e inequality, one obtains, for some \(c_0>0\),
\begin{equation*}
    H_n'(r)\leq-c_0H_n(r)+C\alpha_n(r)\{1+H_n(r)\}.
\end{equation*}
Here \(\alpha_n(r)= \{\int\|e_{n,T-r}(z)\|^4p_{n,T-r}(z)\,\mathrm dz\}^{1/2}\). Assumption~\ref{ass:hd-score-approximation}, the initial \(\chi^2\) bound, and a stopping argument then give
\begin{equation*}
    \sup_{0\leq r<T_n} H_n(r)=o_p(1).
\end{equation*}
Lower semicontinuity of \(\chi^2\) extends the same conclusion to the terminal generated and target laws. The truncation argument and endpoint limits are given in Lemmas~\ref{lem:hd-reverse-regularity} and~\ref{lem:hd-general-pde}.

The terminal joint $\chi^2$ control has two consequences. First, after whitening by $\Sigma_n^{-1/2}$, one-dimensional small-ball estimates for isotropic log-concave projections and a net argument give polynomial lower-tail bounds for the true Gram matrix. A change-of-measure argument then transfers these bounds to the generated Gram matrices. For
\begin{equation*}
    \widehat\Sigma_m^*=\frac1m\sum_{i=1}^mX_i^*X_i^{*\top},\qquad m\in\{n-1,n\},
\end{equation*}
and every fixed $a>0$, there exist a deterministic constant $\eta_a>0$ and a random sequence $C_{n,a}=O_p(1)$ such that
\begin{equation*}
    \sup_{0<t<\eta_a}t^{-a}\mathbb P^*\left(\lambda_{\min}(\widehat\Sigma_m^*)\leq t\,\middle|\,\widehat\mu_n\right)\leq C_{n,a}.
\end{equation*}
In particular, for every fixed $Q>0$,
\begin{equation*}
    \mathbb E^*\left[\lambda_{\min}^{-Q}(\widehat\Sigma_m^*)\,\middle|\,\widehat\mu_n \right]=O_p(1).
\end{equation*}

\begin{remark}
A lower-tail bound of this form is related to the high-dimensional result of \citep{mourtada2022exact}, which is established under a condition \(p/n<c\) for some constant \(c<1\). In the strongly log-concave generated-law setting considered here, the corresponding lower-tail control holds for every fixed proportional limit \(p/n\to\kappa\in(0,1)\).
\end{remark}

Second, the same joint $\chi^2$ control yields conditional Wasserstein consistency.
\begin{mylemma}[Conditional $W_4$ convergence]
\label{lem:hd-cond-W4}
Suppose Assumptions~\ref{ass:hd-setting} and ~\ref{ass:hd-score-approximation} hold. Let $\widehat K_n(x)=\widehat\mu_n(Y\in\cdot\mid X=x)$ and $K_n(x)=\mu_n(Y\in\cdot\mid X=x)$ denote the regular conditional laws of $Y$ given $X=x$ under $\widehat\mu_n$ and $\mu_n$, respectively. Then
\begin{equation*}
    K_n(x)=N(x^\top\beta_n,\sigma_n^2)
\end{equation*}
and
\begin{equation*}
    \mathbb E_{X\sim\widehat\mu_{n,X}}\left[W_4^4\left(\widehat K_n(X),K_n(X)\right)\right]\to_p 0.
\end{equation*}
\end{mylemma}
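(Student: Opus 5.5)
The plan is to derive conditional $W_4$ convergence from the terminal joint $\chi^2$ bound $\chi^2(\widehat\mu_n\|\mu_n)=o_p(1)$ obtained above from the Fokker--Planck comparison, rather than from the joint $W_4$ statement of Lemma~\ref{lem:hd-joint-W4}. Joint $W_4$ convergence does not by itself control the conditional kernels.

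\emph{Step 1: the true kernel.} By part~(1) of Assumption~\ref{ass:hd-setting}, $\varepsilon\perp X$ with $\varepsilon\sim N(0,\sigma_n^2)$, so $K_n(x)=N(x^\top\beta_n,\sigma_n^2)$.

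\emph{Step 2: disintegrate the density ratio.} Let $f=\mathrm d\widehat\mu_n/\mathrm d\mu_n$. This ratio exists on the high-probability event where the $\chi^2$ divergence is finite. Let $g(x)=\mathrm d\widehat\mu_{n,X}/\mathrm d\mu_{n,X}(x)=\int f(x,y)\,K_n(x)(\mathrm dy)$. On $\{g>0\}$ we have $\widehat K_n(x)(\mathrm dy)=\{f(x,y)/g(x)\}\,K_n(x)(\mathrm dy)$.

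\emph{Step 3: bound $W_4^4$ by a weighted total-variation moment.} I would use the standard estimate $W_4^4(\nu,\pi)\le 2^{3}\int |y-y_0|^4\,|\nu-\pi|(\mathrm dy)$, valid for any center $y_0$. Take $y_0=x^\top\beta_n$ and write $h(x,y)=f(x,y)-g(x)$. Then
\begin{equation*}
\mathbb E_{\widehat\mu_{n,X}}W_4^4(\widehat K_n,K_n)\le 8\int\!\!\int |y-x^\top\beta_n|^4\,|f(x,y)-g(x)|\,K_n(x)(\mathrm dy)\,\mu_{n,X}(\mathrm dx).
\end{equation*}
Here the factor $g(x)$ from the $X$-marginal cancels against the $1/g(x)$ inside $\widehat K_n$. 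By Cauchy--Schwarz under $\mu_n$, the right side is at most
\begin{equation*}
8\bigl(\mathbb E\,\varepsilon^8\bigr)^{1/2}\Bigl(\int (f-g)^2\,\mathrm d\mu_n\Bigr)^{1/2}.
\end{equation*}
The factor $\mathbb E\,\varepsilon^8=105\,\sigma_n^8\le105\,\sigma_{\max}^8$ is bounded.

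\emph{Step 4: the variance term.} The quantity $\int (f-g)^2\,\mathrm d\mu_n=\mathbb E_{\mu_n}\Var_{K_n}(f\mid X)$ is the conditional variance of $f$. It is bounded by $\Var_{\mu_n}(f)=\chi^2(\widehat\mu_n\|\mu_n)\to_p0$, which gives the claim.

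\emph{Main obstacle.} The hard part is justifying Step~2 and the terminal $\chi^2$ bound at time $r=T_n$ itself. The Fokker--Planck argument gives $\sup_{r<T_n}H_n(r)=o_p(1)$, and lower semicontinuity of $\chi^2$ under narrow convergence as $r\uparrow T_n$ transfers this to the terminal laws. Once $\chi^2(\widehat\mu_n\|\mu_n)<\infty$, the measure $\widehat\mu_n$ is absolutely continuous with respect to $\mu_n$, so $f$ exists and regular conditional laws can be chosen as in Step~2, up to a $\widehat\mu_{n,X}$-null set. One alternative route is to first bound $\mathrm{KL}$ via $\chi^2$ and then use transport inequalities for the Gaussian kernel. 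I would avoid it, because Talagrand's inequality gives only $W_2$, not $W_4$. The direct weighted total-variation bound in Step~3 is cleaner.
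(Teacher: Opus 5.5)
Your argument is correct and takes a more direct route than the paper. Both proofs start from the same input, the terminal bound $\chi^2(\widehat\mu_n,\mu_n)=o_p(1)$ of Lemma~\ref{lem:hd-general-pde}, and both hinge on the same identity $\mathbb E_{\mu_n}(f-g)^2=\mathbb E_{\mu_n}f^2-\mathbb E_{\mu_{n,X}}g^2\le\chi^2(\widehat\mu_n,\mu_n)$.

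The paper first bounds the unweighted quantity $\mathbb E_{\widehat\mu_{n,X}}\|\widehat K_n(X)-K_n(X)\|_{\mathrm{TV}}\le\frac12\chi^2(\widehat\mu_n,\mu_n)^{1/2}$. It then upgrades total variation to $W_4$ by truncating at a level $A$. Controlling the tail under the generated kernel forces it to prove $\mathbb E_{\widehat\mu_n}|Y-X^\top\beta_n|^8\le(\mathbb E_{\mu_n}f^2)^{1/2}(\mathbb E\varepsilon^{16})^{1/2}=O_p(1)$. The conclusion then follows from a double limit, first $n\to\infty$ and then $A\to\infty$.

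You instead use the weighted total-variation bound $W_4^4(\nu,\pi)\le 2^3\int|y-y_0|^4\,\mathrm d|\nu-\pi|$ \citep[Theorem~6.15]{villani2009optimal}. This puts the moment weight inside the total-variation integral. A single Cauchy--Schwarz step under the true law then needs only the Gaussian eighth moment. It gives the explicit rate $\mathbb E_{\widehat\mu_{n,X}}W_4^4\{\widehat K_n(X),K_n(X)\}\le 8\sqrt{105}\,\sigma_{\max}^4\,\chi^2(\widehat\mu_n,\mu_n)^{1/2}$, with no truncation and no moment bound on the generated law.

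The paper's detour has one by-product. Its eighth-moment bound for the generated residual is reused in the proof of Theorem~\ref{thm:hd-actual-var} to obtain $\mathbb E_{\widehat\mu_{n,X}}|r_n(X)|^8=O_p(1)$. On your route that bound would have to be recorded separately, though it follows from the same one-line Cauchy--Schwarz.
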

We can now state the high-dimensional variance consistency theorem.
\begin{mytheorem}
\label{thm:hd-actual-var}
Suppose Assumptions~\ref{ass:hd-setting}, ~\ref{ass:hd-contrast}, and ~\ref{ass:hd-score-approximation} hold. Conditional on the observed data, generate
\begin{equation*}
    Z_1^*,\dots,Z_n^*\overset{\mathrm{iid}}{\sim}\widehat\mu_n,\qquad Z_i^*=(X_i^*,Y_i^*),
\end{equation*}
and define the diffusion pairs bootstrap OLS estimator
\begin{equation*}
    \widehat\beta^*=\left(\sum_{i=1}^nX_i^*X_i^{*\top}\right)^{-1}\left(\sum_{i=1}^nX_i^*Y_i^*\right).
\end{equation*}
Let $\widehat\beta$ denote the OLS estimator computed from the original sample. Then
\begin{equation}
    \frac{\Var^*\left(c_n^\top\widehat\beta^*\,\middle|\,\widehat\mu_n\right)}{\Var\left(c_n^\top\widehat\beta\right)}\to_p 1. \label{eq:variance-ratio}
\end{equation}
\end{mytheorem}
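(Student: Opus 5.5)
We will prove \eqref{eq:variance-ratio} by computing both numerator and denominator through the law of total variance with respect to the design, and then showing that the bootstrap numerator reproduces the target expression $\sigma_n^2\,\mathbb E[c_n^\top(X^\top X)^{-1}c_n]$ up to a factor $1+o_p(1)$.

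For the denominator, Gaussian noise independent of $X$ gives $\Var(c_n^\top\widehat\beta)=\sigma_n^2\mathbb E[c_n^\top(X^\top X)^{-1}c_n]$. Under strong log-concavity with well-conditioned $\Sigma_n$, concentration (via the dimension-free Poincar\'e/log-Sobolev inequalities) and the true Gram lower-tail bound show that $n\,c_n^\top(X^\top X)^{-1}c_n$ is tightly concentrated at a deterministic level $v_n\asymp 1$. Uniform integrability then follows from the inverse-moment bound $\mathbb E\lambda_{\min}^{-Q}(X^\top X/n)=O(1)$, so $n\Var(c_n^\top\widehat\beta)=\sigma_n^2v_n(1+o(1))$ with $v_n$ bounded away from $0$ and $\infty$.

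For the numerator, write $Y_i^*=X_i^{*\top}\beta_n+\varepsilon_i^*$ with $\varepsilon_i^*:=Y_i^*-X_i^{*\top}\beta_n$, so that $c_n^\top\widehat\beta^*=c_n^\top\beta_n+c_n^\top(X^{*\top}X^*)^{-1}X^{*\top}\varepsilon^*$. Let $\widehat m_n(x)$ and $\widehat v_n(x)$ denote the conditional mean and variance of $\varepsilon^*$ given $X^*=x$ under $\widehat\mu_n$. The law of total variance splits $n\Var^*$ into three pieces:
\begin{itemize}
\item a main term $\mathbb E^*[\,n\,c_n^\top A^* \diag(\widehat v_n(X_i^*))A^{*\top}c_n]$ with $A^*=(X^{*\top}X^*)^{-1}X^{*\top}$;
\item a bias term $n\Var^*(c_n^\top A^*\widehat m_n(X^*))$;
\item cross terms, handled by Cauchy--Schwarz.
\end{itemize}

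Lemma~\ref{lem:hd-cond-W4} controls the conditional-law error. It gives $\mathbb E_{\widehat\mu_{n,X}}[\widehat m_n^4+(\widehat v_n-\sigma_n^2)^2]\to_p0$, because $W_4$ closeness to $N(x^\top\beta_n,\sigma_n^2)$ controls the first two conditional moments. Using $\|A^{*\top}c_n\|_2^2=c_n^\top(X^{*\top}X^*)^{-1}c_n\le n^{-1}\lambda_{\min}^{-1}(\widehat\Sigma_n^*)$, H\"older's inequality, and the inverse-moment bound $\mathbb E^*\lambda_{\min}^{-Q}(\widehat\Sigma_n^*)=O_p(1)$, the bias term is bounded by $\mathbb E^*[\lambda_{\min}^{-1}\,\|\widehat m_n(X^*)\|^2]$. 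Exchangeability turns the averages over $i$ into single-draw expectations, so this term is $o_p(1)$. The same argument shows the main term equals $\sigma_n^2\,\mathbb E^*[n\,c_n^\top(X^{*\top}X^*)^{-1}c_n]+o_p(1)$. One point needs care: $\|A^{*\top}c_n\|_\infty$-type weights against $\widehat v_n(X_i^*)-\sigma_n^2$ are handled by writing the quadratic form as $\sum_i (A^{*\top}c_n)_i^2(\widehat v_n(X_i^*)-\sigma_n^2)$. We then bound $(A^{*\top}c_n)_i^2\le n^{-2}\|X_i^*\|^2\lambda_{\min}^{-2}$ and use $\|X_i^*\|^2=O(p_n)$ in $L^2$ via $W_4$ (Lemma~\ref{lem:hd-joint-W4}).

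It remains to show $\mathbb E^*[n\,c_n^\top(X^{*\top}X^*)^{-1}c_n]=v_n+o_p(1)$. This is the main obstacle, since the functional is a high-dimensional, non-Lipschitz function of $n$ generated rows. We would proceed by change of measure. Set $g(X_1,\dots,X_n)=n\,c_n^\top(X^\top X)^{-1}c_n$, truncated at level $M$ on $\{\lambda_{\min}\ge\eta\}$. The truncation error is small under both laws by the lower-tail bounds (true and generated) and inverse moments. For the truncated piece $g_M$, the terminal $\chi^2$ bound $H_n=o_p(1)$ on a single row does not tensorize usefully across $n$ rows, since $(1+H_n)^n$ may blow up. So instead we would use the design-marginal $W_4$ (or $W_2$) closeness together with Lipschitz continuity of $g_M$ on the good set. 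The gradient of $g_M$ in the stacked design has norm $O(n^{-1/2})$ on $\{\lambda_{\min}\ge\eta,\ \|X\|_{\op}\lesssim\sqrt n\}$. Coupling rows independently then gives an error bounded by $n^{-1/2}\bigl(\sum_i\mathbb E\|X_i^*-X_i\|^2\bigr)^{1/2}=W_2(\widehat\mu_{n,X},\mu_{n,X})\to_p0$. The operator-norm good event must hold with high probability under the generated law. For this we would invoke the single-row $\chi^2$ control combined with sub-Gaussian concentration inherited from the log-Sobolev inequality, or fall back on a smoothed indicator and the moment bounds from $W_4$.

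Combining the three steps, the numerator is $\sigma_n^2v_n+o_p(1)$ and the denominator is $\sigma_n^2v_n(1+o(1))$. Since $\sigma_n^2v_n$ is bounded below, the ratio converges to $1$ in probability.
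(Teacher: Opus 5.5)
Your architecture matches the paper's. You use the law of total variance in $X^*$, take the conditional mean and variance errors from Lemma~\ref{lem:hd-cond-W4}, and transfer $\mathbb E^*[c_n^\top(X^{*\top}X^*)^{-1}c_n]$ by a rowwise Wasserstein coupling plus inverse Gram moments. Your third step is essentially Lemma~\ref{lem:hd-general-inverse-transfer}, which needs no truncation: the resolvent identity, H\"older, and negative moments of both Gram matrices suffice. Those generated negative moments come from tensorizing the $\chi^2$ bound after all, because the true lower tail $(C\sqrt t)^{(M-p_n)/2}$ absorbs the factor $(1+h_n)^{M/2}$.

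The genuine gap is in the two error terms you dispatch with crude Cauchy--Schwarz bounds; each is off by a factor of order $n$. Write $h_i^*=(A^{*\top}c_n)_i=X_i^{*\top}(S_n^*)^{-1}c_n$ with $S_n^*=X^{*\top}X^*$. In the conditional-variance term, the bound $(h_i^*)^2\le n^{-2}\|X_i^*\|_2^2\lambda_{\min}^{-2}(\widehat\Sigma_n^*)$ with $\|X_i^*\|_2^2\asymp p_n$ makes $n\,\mathbb E^*\sum_i(h_i^*)^2|\widehat v_n(X_i^*)-\sigma_n^2|$ of order $p_n$ times a moment of $|\widehat v_n-\sigma_n^2|$. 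That is not $o_p(1)$, since $p_n\asymp n$ and Lemma~\ref{lem:hd-cond-W4} only makes that moment $o_p(1)$. On average $(h_i^*)^2$ is of order $n^{-2}$, not $p_nn^{-2}$, and seeing this requires decoupling $X_i^*$ from $(S_n^*)^{-1}c_n$. The paper uses Sherman--Morrison: $|h_1^*|\le|X_1^{*\top}a_{-1}^*|$ with $a_{-1}^*=(S_{-1}^*)^{-1}c_n$ independent of $X_1^*$. Then only the directional moment $M_{4,n}=\sup_{\|u\|_2=1}\mathbb E_{\widehat\mu_{n,X}}(u^\top X)^4=O_p(1)$ enters, together with $\mathbb E^*\|a_{-1}^*\|_2^2=O_p(n^{-2})$ from the leave-one-out negative moments.

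In the conditional-mean term, $\|\widehat m_n(X^*)\|_2^2=\sum_{i=1}^n\widehat m_n(X_i^*)^2$ is a sum, not an average. Your bound is therefore of order $n\,\mathbb E_{\widehat\mu_{n,X}}\widehat m_n^2=n\cdot o_p(1)$. No sharper Cauchy--Schwarz can fix this, because you have bounded a variance by a second moment. Suppose $\widehat m_n(x)=x^\top\theta$ with $\|\theta\|_2\to0$, which Lemma~\ref{lem:hd-cond-W4} allows. Then $F=c_n^\top(X^{*\top}X^*)^{-1}X^{*\top}\widehat m_n(X^*)=c_n^\top\theta$ has zero variance, but $n\,\mathbb E^*F^2=n(c_n^\top\theta)^2$ need not vanish.

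You must use the centering, and the paper does so with Efron--Stein. Sherman--Morrison gives the exact leave-one-out increment $F-F_{-1}=X_1^{*\top}a_{-1}^*\{\widehat m_n(X_1^*)-X_1^{*\top}\theta_M\}/\{1+X_1^{*\top}(S_{-1}^*)^{-1}X_1^*\}$, where $\theta_M$ is the leave-one-out least-squares coefficient of $\widehat m_n$ on the design. One then shows $\mathbb E^*(F-F_{-1})^2=o_p(n^{-2})$, whence $\Var^*(F)\le 2n\,\mathbb E^*(F-F_{-1})^2=o_p(n^{-1})$.

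Two minor points. The law of total variance produces no cross terms. The concentration of $n\,c_n^\top(X^\top X)^{-1}c_n$ you invoke for the denominator is unnecessary: $\mathbb E[c_n^\top S_n^{-1}c_n]\ge 1/(nC_\Sigma)$ follows from operator convexity of $A\mapsto A^{-1}$.
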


We give a brief proof outline. The law of total variance separates the conditional response variance from the variation of the conditional mean. Let
\begin{equation*}
    r_n(x)=\mathbb E_{\widehat K_n(x)}[Y]-x^\top\beta_n,\qquad v_n^*(x)=\Var_{\widehat K_n(x)}(Y).
\end{equation*}
Lemma~\ref{lem:hd-cond-W4} implies
\begin{equation*}
    \mathbb E_{\widehat\mu_{n,X}}|r_n(X)|^4\to_p0,\qquad \mathbb E_{\widehat\mu_{n,X}}|v_n^*(X)-\sigma_n^2|^2\to_p0.
\end{equation*}
Thus the generated conditional mean and variance approach those of the true linear model.

Write \(S_n^*=\sum_{i=1}^nX_i^*X_i^{*\top}\). The convergence of \(v_n^*\), together with the inverse Gram moment bounds, shows that the first term in the variance decomposition equals
\begin{equation*}
    \sigma_n^2\mathbb E^*\left[c_n^\top(S_n^*)^{-1}c_n\,\middle|\,\widehat\mu_n\right]+ o_p(n^{-1}).
\end{equation*}
A rowwise \(W_4\) coupling, the resolvent identity, and the same inverse moment bounds then give
\begin{equation*}
    \mathbb E^*\left[c_n^\top(S_n^*)^{-1}c_n\,\middle|\,\widehat\mu_n\right]=\mathbb E\left[c_n^\top S_n^{-1}c_n\right]+o_p(n^{-1}).
\end{equation*}
For the conditional mean term, the convergence of \(r_n\), a conditional Efron--Stein inequality, and the Sherman--Morrison formula show that its contribution is \(o_p(n^{-1})\). Consequently,
\begin{equation*}
    \Var^*\left(c_n^\top\widehat\beta^*\,\middle|\,\widehat\mu_n\right)=\sigma_n^2\mathbb E\left[c_n^\top S_n^{-1}c_n\right]+o_p(n^{-1}).
\end{equation*}
For the original sample, independence and homoskedasticity give
\begin{equation*}
    \Var\left(c_n^\top\widehat\beta\right)=\sigma_n^2\mathbb E\left[c_n^\top S_n^{-1}c_n\right].
\end{equation*}
This quantity is of order \(n^{-1}\), so the variance ratio converges to one.

\section{Why diffusion estimators can outperform empirical measures}
\label{sec:structured-minimax}
The high-dimensional variance theorem is conditional on the score approximation assumption. We now show, first, that this assumption is attainable for several structured models and, second, that the empirical distribution faces a separate geometric obstruction in the same regime.

Assumption~\ref{ass:hd-score-approximation} requires the integrated fourth-moment score error to vanish. Such a conclusion cannot hold uniformly over an unrestricted class of high-dimensional densities \citep{wibisono2024optimal}. The purpose of this section is therefore not to claim that diffusion estimation is universally easy, but to show that the score condition is statistically attainable for several structured distribution classes in the proportional regime.

The empirical distribution provides the natural comparison, but it faces a geometric limitation in high dimensions. When $p/n\to\kappa\in(0,1)$, it is supported on only $n$ observed points and is not even $W_4$-consistent for a standard Gaussian target. More generally, Proposition~\ref{prop:minimax-atomic-w4-gaussian} shows that this failure applies to every distribution estimator supported on at most $n$ points, regardless of how its support points and weights are chosen. A diffusion estimator is not restricted to a measure supported on finitely many points. It instead estimates the score along the OU flow and generates a distribution with a density. For the structured models considered below, this score can be estimated consistently even though the empirical distribution cannot consistently approximate the target law in $W_4$.

Throughout this section,
\begin{equation*}
    Y=X^\top\beta+\varepsilon,\qquad\varepsilon\sim N(0,\sigma^2), \qquad\varepsilon\perp X,
\end{equation*}
where
\begin{equation*}
    \|\beta\|_1\le B_1,\qquad\sigma^2\in[\sigma_{\min}^2,\sigma_{\max}^2].
\end{equation*}
The $\ell_1$ bound is stronger than the $\ell_2$ bound used in Assumption~\ref{ass:hd-setting}. It is imposed here only to exhibit structured subclasses on which the score approximation requirement can be verified. Although the design distributions below are described by a fixed number of parameters, the joint score also depends on the $p$-dimensional regression vector $\beta$; the $\ell_1$ bound controls this part of the estimation error.

For a class $\mathcal X_p$ of design distributions, let $\mathcal M_p(\mathcal X_p)$ be the corresponding class of joint laws of $Z=(X,Y)$. For $\mu\in\mathcal M_p(\mathcal X_p)$, let $p_{\mu,t}$ and $s_{\mu,t}=\nabla\log p_{\mu,t}$ denote the density and score along the standard OU flow. Define
\begin{equation}
    \begin{aligned}
        \mathcal R_{n,4}(\mathcal X_p;T_n):=\inf_{\widehat s}\sup_{\mu\in\mathcal M_p(\mathcal X_p)}\mathbb E_\mu\left[\int_0^{T_n}\mathbb E_{Z_t\sim p_{\mu,t}}\|\widehat s_t(Z_t)-s_{\mu,t}(Z_t)\|_2^4\,\mathrm dt\right], \label{eq:structured-minimax-risk}
    \end{aligned}
\end{equation}
where the infimum is over all score estimators based on $n$ independent observations from $\mu$, and the inner expectation is evaluated on an independent OU trajectory. The results below are uniform upper bounds for this risk. They show the existence of score estimators satisfying the statistical part of Assumption~\ref{ass:hd-score-approximation}; they do not assert that an arbitrary neural network architecture or training algorithm necessarily attains these bounds. Without matching lower bounds, we also do not claim that the displayed rates are minimax optimal.
\subsection{Gaussian and other structured distributions}
We begin with the simplest structured class, $\mathcal X_p^{\mathrm I}=\{N(0,I_p)\}$. Although the joint score is $(p+1)$-dimensional, its covariance is determined by the regression vector and the noise variance. For this class, abbreviate
\begin{equation*}
    \mathcal R_{n,4}(T_n):=\mathcal R_{n,4}(\mathcal X_p^{\mathrm I};T_n).
\end{equation*}
\begin{mytheorem}
\label{thm:gaussian-linear-score-upper-unknown-var}
Suppose $p/n\to\kappa\in(0,1)$. There exists a constant $C<\infty$, depending only on $B_1$, $\sigma_{\min}$, and $\sigma_{\max}$, such that, for every deterministic sequence $T_n\in(0,\infty]$,
\begin{equation*}
    \mathcal R_{n,4}(T_n)\leq C\frac{\log p}{n}.
\end{equation*}
Moreover, the plug-in score estimator constructed in the proof satisfies, for every $\eta>0$,
\begin{equation*}
    \sup_{\mu\in\mathcal M_p(\mathcal X_p^{\mathrm I})}\mathbb P_\mu\left\{\int_0^{T_n}\mathbb E_{Z_t\sim p_{\mu,t}}\|\widehat s_t(Z_t)-s_{\mu,t}(Z_t)\|_2^4\,\mathrm dt>\eta\right\}\to0,
\end{equation*}
and
\begin{equation*}
    \sup_{0\leq t\leq T_n}\Lip(\widehat s_t-s_{\mu,t})\to_p0.
\end{equation*}
\end{mytheorem}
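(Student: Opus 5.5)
The plan is to build an explicit plug-in Gaussian score estimator and bound its risk through the covariance error alone. Under $\mu\in\mathcal M_p(\mathcal X_p^{\mathrm I})$ the joint law is $N(0,\Gamma)$ with
\begin{equation*}
\Gamma=\Gamma(\beta,\sigma^2)=\begin{pmatrix} I_p & \beta\\ \beta^\top & \|\beta\|_2^2+\sigma^2\end{pmatrix}.
\end{equation*}
The OU marginal is then $N(0,\Gamma_t)$ with $\Gamma_t=e^{-2t}\Gamma+(1-e^{-2t})I_d$, and the score is $s_{\mu,t}(z)=-\Gamma_t^{-1}z$. I would estimate $\beta$ by a constrained estimator $\widehat\beta$ with $\|\widehat\beta\|_1\le B_1$ (for instance an $\ell_1$-constrained least squares, or a soft-thresholded/projected version of $n^{-1}\sum_i X_iY_i$, whose population value is $\beta$ because $\Sigma=I_p$). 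I would estimate $\sigma^2$ by $\widehat\sigma^2$, clipped to $[\sigma_{\min}^2,\sigma_{\max}^2]$, and set $\widehat s_t(z)=-\widehat\Gamma_t^{-1}z$ with $\widehat\Gamma=\Gamma(\widehat\beta,\widehat\sigma^2)$.

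First, I would reduce the risk to a matrix norm. Since $Z_t\sim N(0,\Gamma_t)$,
\begin{equation*}
\mathbb E\|(\widehat\Gamma_t^{-1}-\Gamma_t^{-1})Z_t\|^4\le 3\|\Gamma_t\|_{\op}^2\,\|\widehat\Gamma_t^{-1}-\Gamma_t^{-1}\|_{\mathrm F}^4 .
\end{equation*}
Both $\Gamma$ and $\widehat\Gamma$ have eigenvalues in a compact interval $[\lambda_-,\lambda_+]\subset(0,\infty)$ depending only on $B_1,\sigma_{\min},\sigma_{\max}$. This holds because $\Gamma$ is the covariance of $(X,X^\top\beta+\varepsilon)$, the image of $(X,\varepsilon)$ under a unimodular linear map with $\|\beta\|_2\le B_1$. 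Hence every $\Gamma_t$ and $\widehat\Gamma_t$ has eigenvalues in $[\min(1,\lambda_-),\max(1,\lambda_+)]$ uniformly in $t$. Using $\widehat\Gamma_t^{-1}-\Gamma_t^{-1}=\widehat\Gamma_t^{-1}(\Gamma_t-\widehat\Gamma_t)\Gamma_t^{-1}$ and $\Gamma_t-\widehat\Gamma_t=e^{-2t}(\Gamma-\widehat\Gamma)$, the integrand is at most $Ce^{-8t}\|\Gamma-\widehat\Gamma\|_{\mathrm F}^4$. The time integral is therefore bounded by $C\|\Gamma-\widehat\Gamma\|_{\mathrm F}^4$, uniformly over $T_n\in(0,\infty]$. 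Moreover,
\begin{equation*}
\|\Gamma-\widehat\Gamma\|_{\mathrm F}^2\lesssim\|\widehat\beta-\beta\|_2^2+\bigl(\|\widehat\beta\|_2^2-\|\beta\|_2^2\bigr)^2+(\widehat\sigma^2-\sigma^2)^2 ,
\end{equation*}
and the middle term is $\lesssim\|\widehat\beta-\beta\|_2^2$ because both vectors lie in the $\ell_1$ ball.

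It then remains to show $\mathbb E\|\widehat\beta-\beta\|_2^4\lesssim(\log p/n)^2$ and $\mathbb E(\widehat\sigma^2-\sigma^2)^4\lesssim n^{-2}$. The resulting risk bound is $(\log p/n)^2\le\log p/n$, which is stronger than what is claimed. The same matrix-level bound will also deliver the two auxiliary claims. The Lipschitz constant satisfies
\begin{equation*}
\sup_t\Lip(\widehat s_t-s_{\mu,t})=\sup_t\|\widehat\Gamma_t^{-1}-\Gamma_t^{-1}\|_{\op}\lesssim\|\widehat\Gamma-\Gamma\|_{\op},
\end{equation*}
and the probability statement follows from Markov's inequality applied to the expected risk, uniformly in $\mu$.

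The main obstacle is the $\ell_1$-ball estimation of $\beta$ in the proportional regime, where $p/n\to\kappa$ so that ordinary least squares is not consistent in $\ell_2$. I would use the slow-rate argument for the $\ell_1$-constrained estimator. For $\widehat\beta=\argmin_{\|b\|_1\le B_1}\|Y-Xb\|^2$, with $\Delta=\widehat\beta-\beta$, the basic inequality gives $n^{-1}\|X\Delta\|^2\le 2\|\Delta\|_1\|X^\top\varepsilon/n\|_\infty\le 4B_1\|X^\top\varepsilon/n\|_\infty$, which is of order $\sqrt{\log p/n}$. To pass from prediction error to $\|\Delta\|_2^2$ I would use $\|\Delta\|_2^2\le n^{-1}\|X\Delta\|^2+\|\Delta\|_1^2\|n^{-1}X^\top X-I\|_{\max}$. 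The max-entry deviation is also of order $\sqrt{\log p/n}$, so $\|\Delta\|_2^2\lesssim\sqrt{\log p/n}$ with high probability. This gives only $\mathbb E\|\Delta\|_2^4\lesssim\log p/n$, which is exactly the rate stated and suggests that the theorem's $\log p/n$ comes from this slow rate. For the fourth moments I would combine Gaussian/sub-exponential maximal inequalities for $\|X^\top\varepsilon/n\|_\infty$ and $\|n^{-1}X^\top X-I\|_{\max}$, each with $L^4$ norm $\lesssim\sqrt{\log p/n}$, with the deterministic bound $\|\Delta\|_2\le 2B_1$ on the complementary events.

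For $\sigma^2$ I would use a held-out half of the sample, with $\widehat\sigma^2=n^{-1}\sum_i Y_i^2-\|\widehat\beta\|_2^2$, clipped. Its error is $O_p(n^{-1/2})+\bigl|\|\widehat\beta\|_2^2-\|\beta\|_2^2\bigr|$, and the second term is again controlled by $\|\Delta\|_2$. Sample splitting changes only constants.
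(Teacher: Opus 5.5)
Your argument is correct and gives the stated $\log p/n$ rate, but the nuisance estimation follows a different route from the paper's. The score-reduction step is essentially the paper's: plug-in $-\widehat\Gamma_t^{-1}z$, the inverse-perturbation identity, and $e^{-8t}$ integrability uniformly in $T_n$. The only difference there is cosmetic: the paper passes from operator to Frobenius norm via $\operatorname{rank}(H_t)\le 2$, while you use $\|ABC\|_{\mathrm F}\le\|A\|_{\op}\|B\|_{\mathrm F}\|C\|_{\op}$.

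For $\beta$, the paper uses your parenthetical alternative. It soft-thresholds $n^{-1}\sum_iX_iY_i$ at $\lambda\asymp\sqrt{\log p/n}$ and projects onto the $\ell_1$ ball. Lemma~\ref{lem:soft-thresholding} then gives $\|\widehat\beta-\beta\|_2^2\le 2B_1\lambda$ deterministically on the single sup-norm event, with the bound $2B_1$ off it. Your $\ell_1$-constrained least squares, with the basic inequality and the correction $\|\Delta\|_1^2\|n^{-1}X^\top X-I\|_{\max}$, is a valid alternative. It costs an optimization and concentration of $p^2$ Gram entries, but the estimator never uses $\Sigma$, and the analysis only needs $\lambda_{\min}(\Sigma)$ bounded below. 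The paper's moment estimator instead relies on $\mathbb E XY=\Sigma\beta$ with $\Sigma$ known. Its structured extension (Lemma~\ref{lem:structured-minimax-nuisance}) therefore has to plug in $\Sigma_{\widehat\vartheta}^{-1}$ with $\infty\to\infty$ control.

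For $\sigma^2$, the paper uses the held-out residual variance $(\sigma^2+\|\delta\|_2^2)\chi^2_{n_2}/n_2$. Its error is quadratic in $\|\delta\|_2$ and contributes $O(n^{-2}+(\log p/n)^2)$, at the price of an eighth-moment bound on $\widehat\beta$. Your $n^{-1}\sum_iY_i^2-\|\widehat\beta\|_2^2$ is linear in $\|\Delta\|_2$ and contributes $O(n^{-2}+\mathbb E\|\Delta\|_2^4)=O(\log p/n)$. That is enough here, needs only fourth moments, and does not even require the sample split.

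One thing to clean up. Your intermediate targets $\mathbb E\|\widehat\beta-\beta\|_2^4\lesssim(\log p/n)^2$ and $\mathbb E(\widehat\sigma^2-\sigma^2)^4\lesssim n^{-2}$, and the resulting risk $(\log p/n)^2$, are not what your estimators deliver. The first cannot hold uniformly over the $\ell_1$ ball, where the squared-$\ell_2$ minimax rate in this regime is of order $\sqrt{\log p/n}$. The second fails for your $\sigma^2$ estimator, which inherits $\|\widehat\beta\|_2^2-\|\beta\|_2^2$. Your later paragraphs already correct this, so the write-up should simply state $O(\log p/n)$ fourth-moment bounds for both nuisance errors.
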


In this benchmark, the rate is governed by estimation of the unknown \(\ell_1\)-bounded regression vector. If $\beta$ and $\sigma^2$ were known, the score would be known exactly.

We next consider three additional design classes.

\paragraph{Gaussian AR(1).}
Fix $0<\rho_0<1$ and let
\begin{equation*}
    \mathcal X_p^{\mathrm{AR}}=\left\{N(0,\Sigma_p(\rho)): \Sigma_p(\rho)_{jk}=\rho^{|j-k|},\ |\rho|\leq\rho_0\right\}.
\end{equation*}

\paragraph{Fixed rank perturbations with known directions.}
Fix $r<\infty$ and a deterministic matrix $U_p=(u_{1,p},\ldots,u_{r,p})\in\mathbb R^{p\times r}$ with $U_p^\top U_p=I_r$. Let
\begin{equation*}
    \Sigma_p(\lambda)=I_p+U_p\diag(\lambda_1,\ldots,\lambda_r)U_p^\top,
\end{equation*}
where $\lambda\in\Lambda\subset(-1,\infty)^r$ and $\Lambda$ is compact and convex. Assume that
\begin{equation}
    \max_{1\leq j\leq p}\sum_{\ell=1}^r|u_{\ell,p,j}|\,\|u_{\ell,p}\|_1\leq C_U. \label{eq:low-rank-row-sum}
\end{equation}
Let $\mathcal X_p^{\mathrm{LR}}$ be the resulting Gaussian design class. This is a full-rank covariance model with a fixed-rank perturbation, not a rank-deficient Gaussian distribution.

\begin{remark}[Scope of the fixed-rank design class]
The directions $U_p$ in $\mathcal X_p^{\mathrm{LR}}$ are known, or are specified by a fixed-dimensional parametrization satisfying the same bounds. An arbitrary unknown dense $p\times r$ loading matrix contains order $pr$ unknown parameters and is not covered merely because $r$ is fixed. A rank-deficient covariance has no density on $\mathbb R^p$ and is also outside the strong log-concavity and Fokker--Planck framework used in this paper.
\end{remark}

\paragraph{A non-Gaussian product exponential family.}
Let $q<\infty$ be fixed and let $\Theta\subset\mathbb R^q$ be compact and convex. For $\theta\in\Theta$, define
\begin{equation}
    f_{\theta,p}(x)=\prod_{j=1}^p \frac{1}{Z_\theta}\exp\left\{-\frac{x_j^2}{2}-\sum_{\ell=1}^q\theta_\ell\phi_\ell(x_j)\right\}. \label{eq:product-exponential-family}
\end{equation}
Assume that the functions $\phi_1,\ldots,\phi_q$ are even and belong to $C^4(\mathbb R)$, their derivatives of orders one through four are uniformly bounded, and
\begin{equation}
    0<m\leq1+\sum_{\ell=1}^q\theta_\ell\phi_\ell''(u)\leq L<\infty \label{eq:product-exponential-curvature}
\end{equation}
uniformly over $u\in\mathbb R$ and $\theta\in\Theta$. Finally, assume uniform identifiability:
\begin{equation}
    cI_q\preceq\Cov_\theta\bigl\{(\phi_1(X_1),\ldots,\phi_q(X_1))^\top\bigr\}\preceq CI_q. \label{eq:product-exponential-information}
\end{equation}
Let $\mathcal X_p^{\mathrm{EXP}}$ denote this design class. To quantify the size of the non-Gaussian perturbation, define
\begin{equation}
    \delta_{\mathrm{NG}}:=\sup_{\theta\in\Theta}\sup_{u\in\mathbb R}\left|\sum_{\ell=1}^q\theta_\ell\phi_\ell''(u)\right|. \label{eq:product-nongaussian-curvature}
\end{equation}
No smallness condition on $\delta_{\mathrm{NG}}$ is needed for the bound on the integrated score estimation risk. A sufficiently small fixed value is used only to verify the Lipschitz threshold in Assumption~\ref{ass:hd-score-approximation}.

\begin{mytheorem}[Upper bounds for score estimation under structured designs]
\label{thm:structured-score-upper}
Suppose $p/n\to\kappa\in(0,1)$. For each
\begin{equation*}
    \mathcal X_p\in\left\{\mathcal X_p^{\mathrm{AR}},\mathcal X_p^{\mathrm{LR}},\mathcal X_p^{\mathrm{EXP}}\right\},
\end{equation*}
there exists a constant $C<\infty$, independent of $n$, $p$, and $T_n$, such that
\begin{equation}
    \mathcal R_{n,4}(\mathcal X_p;T_n)\leq C\left(\frac{\log p}{n}+\frac1{n^2}\right)\leq C\frac{\log p}{n}. \label{eq:structured-score-upper}
\end{equation}
Moreover, for each of the three classes, the estimator constructed in the proof satisfies, for every $\eta>0$,
\begin{equation*}
    \sup_{\mu\in\mathcal M_p(\mathcal X_p)}\mathbb P_\mu\left\{\int_0^{T_n}\mathbb E_{Z_t\sim p_{\mu,t}}\|\widehat s_t(Z_t)-s_{\mu,t}(Z_t)\|_2^4\,\mathrm dt>\eta\right\}\to0.
\end{equation*}

For the two Gaussian classes, the plug-in estimator additionally satisfies
\begin{equation*}
    \sup_{0\leq t\leq T_n}\Lip(\widehat s_t-s_{\mu,t})\to_p0.
\end{equation*}
For $\mathcal X_p^{\mathrm{EXP}}$, there are constants $\delta_0>0$ and $C_{\mathrm{Lip}}<\infty$, independent of $n$ and $p$, such that, whenever $\delta_{\mathrm{NG}}\leq\delta_0$,
\begin{equation}
    \sup_{0\leq t\leq T_n}\Lip(\widehat s_t-s_{\mu,t})\leq 2C_{\mathrm{Lip}}\delta_{\mathrm{NG}}+o_p(1). \label{eq:product-small-lipschitz}
\end{equation}
Consequently, if
\begin{equation*}
    \delta_{\mathrm{NG}}<\min\left\{\delta_0,\frac{L_\star}{4C_{\mathrm{Lip}}}\right\},
\end{equation*}
then the plug-in estimator satisfies the Lipschitz condition
\eqref{eq:small-lip}.
\end{mytheorem}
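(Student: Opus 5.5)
The plan is to build, for each class, a plug-in score estimator $\widehat s_t=s_{\widehat\mu,t}$, where $\widehat\mu=\mu_{\widehat\vartheta,\widehat\beta,\widehat\sigma^2}$. Here $\widehat\vartheta$ estimates the finitely many design parameters ($\rho$, $\lambda$, or $\theta$), $\widehat\beta$ is an $\ell_1$-constrained estimator of $\beta$, and $\widehat\sigma^2$ estimates the noise variance. The integrated fourth-moment risk will then be bounded by a perturbation bound for the OU score in these parameters. Each estimator is projected onto its compact parameter set ($|\rho|\le\rho_0$, $\Lambda$, $\Theta$, $\|\beta\|_1\le B_1$, $[\sigma_{\min}^2,\sigma_{\max}^2]$), so all constants stay uniform.

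\textbf{Step 1 (score perturbation).} For the Gaussian classes the joint law of $Z$ is $N(0,\Gamma)$ with
\[
\Gamma=\begin{pmatrix}\Sigma&\Sigma\beta\\ \beta^\top\Sigma&\beta^\top\Sigma\beta+\sigma^2\end{pmatrix},
\]
so $s_t(z)=-\Gamma_t^{-1}z$ with $\Gamma_t=e^{-2t}\Gamma+(1-e^{-2t})I$. The eigenvalues of $\Gamma_t$ stay in a fixed compact subset of $(0,\infty)$ uniformly in $t$, since $\|\beta\|_2\le\|\beta\|_1\le B_1$ and $\Sigma$ is well conditioned. Hence
\[
\mathbb E\|(\widehat\Gamma_t^{-1}-\Gamma_t^{-1})Z_t\|^4\lesssim e^{-8t}\bigl(\operatorname{tr}\{(\widehat\Gamma-\Gamma)^2\}\bigr)^2,
\]
and the time integral of $e^{-8t}$ is finite, which gives uniformity in $T_n\in(0,\infty]$. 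The important point is that the Frobenius norm (not $p$ times the operator norm) enters. Indeed $\|\widehat\Gamma-\Gamma\|_F^2\lesssim\|\widehat\Sigma-\Sigma\|_F^2+\|\widehat\beta-\beta\|_{\Sigma}^2+(\widehat\sigma^2-\sigma^2)^2+\dots$, where the $\beta$ terms involve only the prediction-type quantity $\|\Sigma(\widehat\beta-\beta)\|_2$.
\begin{itemize}
\item AR(1): $\|\Sigma(\widehat\rho)-\Sigma(\rho)\|_F^2\lesssim p(\widehat\rho-\rho)^2$. Estimating $\rho$ by the pooled lag-one sample autocorrelation over all $np$ coordinate pairs gives error $O_p((np)^{-1/2})$, hence contribution $O(1/n)$ and, after squaring, $O(n^{-2})$.
\item Low-rank class: the row-sum condition \eqref{eq:low-rank-row-sum} ensures $\|U\,\mathrm{diag}(\widehat\lambda-\lambda)U^\top\|_F^2=\|\widehat\lambda-\lambda\|^2$, with $\widehat\lambda_\ell=n^{-1}\sum_i(u_\ell^\top X_i)^2-1$ at rate $n^{-1/2}$.
\end{itemize}
For $\beta$, I would use the $\ell_1$-constrained least squares (Lasso-type) slow-rate bound $\mathbb E\|\Sigma^{1/2}(\widehat\beta-\beta)\|_2^2\lesssim B_1\sqrt{\log p/n}$. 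For the fourth moment this gives $(\log p/n)$ after squaring, which is exactly the claimed $\log p/n$ term. The fourth moment of the prediction error needs a Gaussian/sub-exponential deviation bound on $\|n^{-1}X^\top\varepsilon\|_\infty$ and on the restricted eigenvalue deviation $\sup_{\|v\|_1\le 2B_1}|v^\top(\widehat\Sigma_X-\Sigma)v|$, both of order $\sqrt{\log p/n}$. Finally, $\widehat\sigma^2$ comes from the residual mean square, whose error is controlled by the same bounds.

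\textbf{Step 2 (non-Gaussian product family).} Here the OU score is no longer linear. I would write $Z_t=e^{-t}Z_0+\sqrt{1-e^{-2t}}\xi$ and use the Tweedie representation $s_t(z)=(e^{-t}\mathbb E[Z_0\mid Z_t=z]-z)/(1-e^{-2t})$. Differentiating in the parameters gives $\partial_\vartheta s_t$ as a posterior covariance between $Z_0$ and $\partial_\vartheta\log p_0$. The Brascamp--Lieb inequality for the strongly log-concave posterior (curvature bounded below uniformly by \eqref{eq:product-exponential-curvature}) bounds this covariance. Because the family is a product in the coordinates of $X$ and $\partial_\theta\log p_0=-\sum_j\phi(x_j)+p\nabla\log Z_\theta$ is a centered sum, the $L^4$ norm of the derivative is $O(\sqrt p)$ rather than $O(p)$. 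Thus $\|\widehat\theta-\theta\|^4\,p^2$ contributes $O(p^2/(np)^2)=O(n^{-2})$, where $\widehat\theta$ is the moment or MLE estimator based on $np$ i.i.d.\ coordinates, with information bounded below by \eqref{eq:product-exponential-information}. The $\beta$ and $\sigma$ errors are handled as in Step 1 through a pathwise mean-value bound.

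\textbf{Step 3 (Lipschitz claims).}
\begin{itemize}
\item Gaussian classes: $\Lip(\widehat s_t-s_t)=\|\widehat\Gamma_t^{-1}-\Gamma_t^{-1}\|_{\op}\lesssim\|\widehat\Gamma-\Gamma\|_{\op}\to_p0$, uniformly in $t$.
\item Product family: split $\widehat s_t-s_t=(s_{\widehat\mu,t}-g_{\widehat\mu,t})-(s_{\mu,t}-g_{\mu,t})+(g_{\widehat\mu,t}-g_{\mu,t})$, where $g$ is the score of the Gaussian law with the same covariance. The Hessian of $\log p_t$ minus the Gaussian Hessian is bounded by a constant times $\delta_{\mathrm{NG}}$; this follows from a posterior-covariance representation of $\nabla^2\log p_t$ together with a Brascamp--Lieb comparison. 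The Gaussian part is $o_p(1)$, which gives \eqref{eq:product-small-lipschitz}. The final consequence then follows immediately, and the probability statement follows from the risk bound by Markov's inequality.
\end{itemize}

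\textbf{Main obstacle.} I expect the hardest step to be the dimension-free $O(\sqrt p)$ control of the parameter derivative of the non-Gaussian OU score in Step 2, uniformly over $t$ including $t\to0$. The OU posterior curvature degenerates only mildly there, so I would first try to establish the bound at $t=0$ directly and then use contraction of the OU semigroup to transfer it to larger $t$.
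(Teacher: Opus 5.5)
Your proof is correct in outline, and on the score side it follows the paper.

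\begin{itemize}
\item For the Gaussian classes the paper also uses $\mathbb E\|H_tZ_t\|_2^4\lesssim\|H_t\|_{\mathrm F}^4$, with Frobenius scales $a_{\rho,p}=p$ and $a_{\lambda_\ell,p}=1$ and the same pooled lag-one and projected quadratic estimators.
\item For $\mathcal X_p^{\mathrm{EXP}}$ it uses the Gaussian-channel identity $\partial_hs_t=(\alpha_t/\omega_t^2)\Cov\{T_h(Z_0),Z_0\mid Z_t\}$ together with Brascamp--Lieb.
\item The Lipschitz claim comes from the Brascamp--Lieb/Cram\'er--Rao sandwich on the posterior covariance. The paper compares with the reference $X\sim N(0,I_p)$ rather than your matched covariance $v(\theta)I_p$; since $|1/v(\theta)-1|\le\delta_{\mathrm{NG}}$, either choice works.
\end{itemize}

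The genuine difference is how you estimate the regression nuisance. The paper sets $\widetilde\beta=\Sigma_{\widehat\vartheta}^{-1}n_1^{-1}\sum_{i\in I_1}X_iY_i$, soft-thresholds it, projects onto the $\ell_1$ ball, and fits $\sigma^2$ on an independent split. That requires the plug-in precision tail bound \eqref{eq:structured-covariance-plugin-tail} in $\|\cdot\|_{\infty\to\infty}$. This is exactly where the paper needs the tridiagonal AR(1) precision, the Woodbury formula combined with \eqref{eq:low-rank-row-sum}, and tail bounds for $\widehat\vartheta$.

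Your $\ell_1$-constrained least squares never touches $\Sigma^{-1}$. It needs only the basic inequality, the restricted deviation bound $\sup_{\|v\|_1\le 2B_1}|v^\top(n^{-1}X^\top X-\Sigma)v|\le 4B_1^2\|n^{-1}X^\top X-\Sigma\|_{\max}\lesssim\sqrt{\log p/n}$, and the conditioning of $\Sigma$. So it dispenses with all of the above, and also with sample splitting: the residual mean square is controlled by the same basic inequality because $\beta$ is feasible. Your route is therefore more general for the nuisance step. The paper's route buys a coordinatewise ($\ell_\infty$) bound that makes the $\beta$ step a direct corollary of its soft-thresholding lemma, reused from the isotropic theorem.

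Incidentally, $\|U_p\diag(\widehat\lambda-\lambda)U_p^\top\|_{\mathrm F}=\|\widehat\lambda-\lambda\|_2$ follows from $U_p^\top U_p=I_r$, not from \eqref{eq:low-rank-row-sum}. In your route that condition is never used.

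Two remarks on Step~2.

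First, the obstacle you flag is actually the easy part. The bound $\|\nabla T_h\|_2\le C\sqrt p\,\|h\|_2$ holds deterministically because the $\phi_\ell'$ are bounded; centering of $\partial_\theta\log p_0$ is not the mechanism. Moreover, the posterior curvature $m_0+\alpha_t^2/\omega_t^2$ grows as $t\to0$, so the prefactor $\alpha_t/(m_0\omega_t^2+\alpha_t^2)$ is bounded near zero and is $O(e^{-t})$ at infinity. No transfer from $t=0$ by semigroup contraction is needed.

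Second, the delicate point is the one you dispatch with ``as in Step~1''. In the product family the regression tangent $T_{\beta,g}=(y-x^\top\beta)(x^\top g)/\sigma^2$ has an unbounded gradient. Along the mean-value path, the Brascamp--Lieb bound involves $\mathbb E_{\eta_u}[\|\nabla T_{\beta_u,g}(Z_0)\|_2^2\mid Z_t=z]$ under an intermediate parameter $\eta_u$, integrated against the true marginal $p_{\eta_0,t}$. You therefore need dimension-free moments of projections $v^\top Z_0$ in this cross-endpoint sense.

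The paper obtains them from the following ingredients:
\begin{itemize}
\item the posterior mean $m_{u,t}$ is uniformly Lipschitz and odd;
\item together with central symmetry, Herbst under $p_{\eta_0,t}$ controls its projections;
\item Herbst for the strongly log-concave posterior controls the conditional fluctuation.
\end{itemize}
With that supplied, your sketch goes through.
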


For all three classes, estimation of the design and noise parameters contributes $O(n^{-2})$ to the fourth-moment score risk. The larger term $\log(p)/n$ comes from estimation of the unknown regression vector $\beta$. The estimators and the corresponding moment calculations are given in the proof.

The non-Gaussian class includes, for example,
\begin{equation*}
    \phi(u)=\log\cosh(u),\qquad\theta\in[0,M],
\end{equation*}
and
\begin{equation*}
    \phi(u)=1-\cos(u),\qquad|\theta|\leq\theta_0<1.
\end{equation*}
Both examples define non-Gaussian distributions on $\mathbb R^p$. Their scores remain $p$-dimensional, but the same finite-dimensional parameter determines their coordinatewise form. The log-cosh family has $\delta_{\mathrm{NG}}\leq M$, and the periodic family has $\delta_{\mathrm{NG}}\leq\theta_0$. Hence each family satisfies the Lipschitz condition in Assumption~\ref{ass:hd-score-approximation} when its fixed amplitude is sufficiently small. The amplitude need not vanish with $n$, so these models remain non-Gaussian throughout the asymptotic sequence.

\subsection{Atomic approximation and the empirical distribution}

The preceding results show that the required score can be estimated for the structured classes above. We next contrast this conclusion with the geometric limitation of empirical resampling. A probability measure is called \(n\)-atomic if its support contains at most \(n\) points. The empirical distribution is $n$-atomic and is the distribution estimator used by the ordinary nonparametric bootstrap. The same description applies to bootstrap variants that only resample or reweight the observed data points. It does not apply to smoothed, parametric, or generative procedures that can assign probability to new points.

Let $\gamma_d$ be the standard Gaussian law on $\mathbb R^d$, and let $\mathcal A_{n,d}$ be the class of all possibly randomized probability measures $\widehat Q_n$ whose support contains at most $n$ points almost surely. The randomness may depend on $n$ observations from $\gamma_d$ and on additional randomization.

\begin{myproposition}[Optimal atomic approximation of a Gaussian]
\label{prop:minimax-atomic-w4-gaussian}
There exist universal constants $c,C>0$ such that, whenever $n\leq\exp(cd)$,
\begin{equation}
    cd^2\leq\inf_{\widehat Q_n\in\mathcal A_{n,d}}\mathbb E_{\gamma_d}W_4^4(\widehat Q_n,\gamma_d)\leq Cd^2. \label{eq:atomic-gaussian-rate}
\end{equation}
Consequently, if $d=d_n\asymp n$, no sequence of $n$-atomic distribution estimators is $W_4$-consistent for $\gamma_{d_n}$.
\end{myproposition}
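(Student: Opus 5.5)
Both bounds are deterministic statements about $n$-atomic measures and $\gamma_d$, so randomization plays no role: it suffices to show that every (possibly random) $n$-atomic $Q$ satisfies $W_4^4(Q,\gamma_d)\geq cd^2$ almost surely, and then to exhibit one $n$-atomic measure with $W_4^4(Q,\gamma_d)\leq Cd^2$.

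\emph{Upper bound.} Take the Dirac mass at the origin, which is $1$-atomic, so $n$-atomic for every $n\geq1$. Then $W_4^4(\delta_0,\gamma_d)=\mathbb E\|G\|_2^4$ with $G\sim\gamma_d$. Since $\|G\|_2^2\sim\chi^2_d$, this equals $d^2+2d\leq 3d^2$. This holds for every $n$ and does not need $n\leq\exp(cd)$.

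\emph{Lower bound.} Fix support points $x_1,\dots,x_n$ and any coupling of $Q$ with $\gamma_d$. Under the coupling, $G$ is sent to some $x_i$, so its transport cost is at least $\min_i\|G-x_i\|_2$. Hence
\begin{equation*}
W_4^4(Q,\gamma_d)\geq\mathbb E_{\gamma_d}\min_{1\le i\le n}\|G-x_i\|_2^4\geq r^4\,\gamma_d\Bigl(\bigl(\textstyle\bigcup_{i}B(x_i,r)\bigr)^c\Bigr),
\end{equation*}
by Markov's inequality applied on the complement of the union of balls. Choose $r=\sqrt d/2$. It then suffices to show $\gamma_d(B(x,\sqrt d/2))\leq e^{-c'd}$ uniformly over centres $x\in\mathbb R^d$.

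By Anderson's inequality, the Gaussian mass of a centred symmetric convex set is maximized at $x=0$. So
\begin{equation*}
\gamma_d(B(x,\sqrt d/2))\leq\mathbb P(\|G\|_2^2\leq d/4).
\end{equation*}
The $\chi^2_d$ lower-tail Chernoff bound, $\mathbb P(\chi^2_d\leq d/4)\leq\exp\{-\tfrac d2(1/4-1-\log(1/4))\}$, gives $e^{-c'd}$ with $c'=\tfrac12(\log4-\tfrac34)>0$. A union bound over the $n$ balls gives uncovered mass at least $1-ne^{-c'd}$. If $n\leq\exp(cd)$ with $c<c'$, and $d$ is at least a fixed constant, this mass is at least $1/2$. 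Then $W_4^4\geq d^2/32$ almost surely, and taking expectations and the infimum gives the lower bound. For small $d$, the lower bound follows by shrinking $c$, since $W_4^4(Q,\gamma_d)$ is bounded below by a positive constant for any $n$-atomic $Q$ when $n\leq e^{cd}$ is bounded.

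\emph{Consequence.} If $d_n\asymp n$, then $n\leq\exp(cd_n)$ eventually, so $\mathbb E W_4^4\geq cd_n^2\to\infty$. The step above in fact gives the almost-sure bound $W_4^4\geq cd_n^2$, so $W_4$ does not tend to $0$ in probability. This rules out $W_4$-consistency, in particular for the empirical distribution.

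The main obstacle is the uniform small-ball estimate over arbitrary centres. Anderson's inequality handles it cleanly, and the rest is bookkeeping of constants, in particular choosing $c<c'$ and treating small $d$.
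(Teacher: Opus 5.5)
Your proof is correct and is essentially the paper's argument. Both bound $W_4^4$ below by $\mathbb E\operatorname{dist}(G,\mathcal C)^4$ and use a union bound over $n$ balls of radius of order $\sqrt d$. There are three small differences:
\begin{itemize}
\item The paper gets the uniform small-ball bound directly from the density bound $(2\pi)^{-d/2}$ times the ball volume. Anderson's inequality is therefore not needed: this crude bound already decays like $(\sqrt e/2)^d$ at your radius $\sqrt d/2$.
\item For the upper bound, the paper uses the empirical measure of $n$ Gaussian draws rather than $\delta_0$.
\item The paper avoids your separate small-$d$ patch by requiring $ne^{-c_0d}\le e^{-c_0d/2}$ rather than $\le 1/2$. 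This keeps the uncovered mass $1-e^{-c_0d/2}$ bounded away from zero for every $d\ge1$.
\end{itemize}
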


Unless the number of support points is exponential in the dimension, balls of radius $a\sqrt d$ centered at those points cannot cover a nonnegligible part of a Gaussian shell. The upper bound shows that the order $d^2$ is sharp for the fourth power transportation loss. Thus, the failure is not specific to empirical weights or to a particular choice of support points.

The score estimation results and Proposition~\ref{prop:minimax-atomic-w4-gaussian} concern different problems. Consider the standard Gaussian linear model with $\beta=0$ and $\sigma^2=1$, for which the joint law is $\gamma_{p+1}$. When $p/n\to\kappa\in(0,1)$, the proposition gives an atomic $W_4^4$ risk of order $p^2$, whereas Theorem~\ref{thm:gaussian-linear-score-upper-unknown-var} gives an integrated fourth-moment score risk of order at most $\log(p)/n=o(1)$. Thus, for the structured models considered above, the OU score can be estimated consistently even though approximation by a distribution supported on at most $n$ points is not $W_4$-consistent. This comparison explains how a diffusion distribution estimator can have an advantage over the empirical distribution in high dimensions. This conclusion is limited to the structured classes above and does not apply to every fitted diffusion model.

\section{Why terminal \texorpdfstring{$W_4$}{W4} consistency is insufficient}
\label{subsec:w4-counterexamples}
The results established so far may suggest that the diffusion estimator is simply a distribution estimator that converges in \(W_4\). Under that interpretation, it could be replaced by any Wasserstein consistent estimator. The examples below show why this is not enough.

Bootstrap variance depends not only on the proximity between the fitted law and the target law, but also on the geometric properties of the generated design. In particular, the variance of the bootstrap ordinary least-squares estimator contains inverse moments of the generated Gram matrix. These quantities are highly sensitive to rare events on which the generated Gram matrix is nearly singular, whereas the Wasserstein distance only measures an average transportation cost between probability distributions. Consequently, terminal Wasserstein convergence alone does not rule out the rare geometric degeneracies that are responsible for bootstrap variance failure.

The following examples demonstrate this difference.
\begin{example}[$W_4$ convergence but divergence of the bootstrap variance:
fixed-dimensional case]
    Consider the simple model $X\sim N(0,1)$, $Y=\varepsilon$, and $\varepsilon\sim N(0,1)$, with $X\perp Y$. Then the joint law is $\mu=N(0,I_2)$, which is the standard two-dimensional Gaussian distribution and is strongly log-concave. Define the distribution estimator $\widehat\mu_n=(1-\delta_n)\mu+\delta_n\nu_n$, where $\nu_n=N(0,a_n^2)\otimes N(0,1)$. Let $\delta_n=n^{-2}$ and $a_n^2=\delta_n^{2n}$. We will prove that
    \begin{itemize}
        \item[(1)] $W_4(\widehat\mu_n,\mu)\to0$;
        \item[(2)] $\dfrac{\Var(\widehat\beta^*)}{\Var(\widehat\beta)}\to\infty$.
    \end{itemize}

    In fact, couple $(a_nG,\eta)$ with $(G,\eta)$, where $G$ and $\eta$ are independent standard Gaussian random variables. Then
    \begin{equation*}
        W_4^4(\widehat\mu_n,\mu)\leq\delta_n(1-a_n)^4\mathbb E|G|^4\leq3\delta_n\to0.
    \end{equation*}

    Now generate $n$ bootstrap samples $(X_i^*,Y_i^*)$ from $\widehat\mu_n$ and compute the OLS estimator
    \begin{equation*}
        \widehat\beta^*=\frac{\sum_{i=1}^nX_i^*Y_i^*}{\sum_{i=1}^n(X_i^*)^2}.
    \end{equation*}
    Since $Y_i^*$ is independent of $X_i^*$ and has variance one under both mixture components,
    \begin{equation*}
        \Var(\widehat\beta^*)=\mathbb E\left[\frac{1}{\sum_{i=1}^n(X_i^*)^2}\right].
    \end{equation*}
    Let $E_n$ denote the event that all $n$ bootstrap samples are drawn from the contamination component. Then $\mathbb P(E_n)=\delta_n^n$. On $E_n$, we have
    \begin{equation*}
        X_i^*=a_nG_i \quad\Longrightarrow\quad \sum_{i=1}^n(X_i^*)^2 = a_n^2\chi_n^2.
    \end{equation*}
    Therefore,
    \begin{equation*}
        \mathbb E\left[\frac{1}{\sum_i(X_i^*)^2}\,\middle|\,E_n\right]=\frac{1}{a_n^2(n-2)},
    \end{equation*}
    and hence
    \begin{equation*}
        \Var(\widehat\beta^*)\geq\frac{\delta_n^n}{a_n^2(n-2)}.
    \end{equation*}

    Under the true model,
    \begin{equation*}
        \Var(\widehat\beta)=\frac{1}{n-2}.
    \end{equation*}
    Thus,
    \begin{equation*}
        \frac{\Var(\widehat\beta^*)}{\Var(\widehat\beta)}\geq\frac{\delta_n^n}{a_n^2}=\delta_n^{-n}=n^{2n}\to\infty.
    \end{equation*}
\end{example}
The same mechanism persists in the proportional high-dimensional regime.
\begin{example}[$W_4$ convergence but divergence of the bootstrap variance: high-dimensional case]
    Let $p=p_n$ and suppose $p_n/n\to\kappa\in(0,1)$. Consider the model $X\sim N(0,I_{p_n})$, $Y=\varepsilon$, and $\varepsilon\sim N(0,1)$, with $X\perp Y$. Then $\mu_n=N(0,I_{p_n+1})$ is a standard Gaussian distribution and is strongly log-concave. Define $\widehat\mu_n=(1-\delta_n)\mu_n+\delta_n\nu_n$, where $\nu_n=N(0,a_n^2I_{p_n})\otimes N(0,1)$. Let $\delta_n=n^{-3}$ and $a_n^2=\delta_n^{2n}$. We will prove that
    \begin{itemize}
        \item[(1)] $W_4(\widehat\mu_n,\mu_n)\to0$;
        \item[(2)] $\dfrac{\Var(c_n^\top\widehat\beta^*)}{\Var(c_n^\top\widehat\beta)}\to\infty$.
    \end{itemize}

    Take the same coupling as in the fixed-dimensional case. Then
    \begin{equation*}
        W_4^4(\widehat\mu_n,\mu_n)\leq\delta_n(1-a_n)^4\mathbb E\|G\|^4.
    \end{equation*}
    Since $\mathbb E\|G\|^4=p_n(p_n+2)$, we have
    \begin{equation*}
        W_4^4(\widehat\mu_n,\mu_n)\leq\delta_np_n(p_n+2).
    \end{equation*}
    Since $p_n\asymp n$ and $\delta_n=n^{-3}$,
    \begin{equation*}
        \delta_np_n(p_n+2)=O(n^{-1})\to0.
    \end{equation*}
    Therefore,
    \begin{equation*}
        W_4(\widehat\mu_n,\mu_n)\to0.
    \end{equation*}

    Now generate $n$ bootstrap observations from $\widehat\mu_n$, let $X^*$ denote the resulting design matrix, and define
    \begin{equation*}
        \widehat\beta^*=(X^{*\top}X^*)^{-1}X^{*\top}Y^*.
    \end{equation*}
    For any deterministic unit vector $c_n$, since $Y^*$ is independent of $X^*$ and has variance one, we have
    \begin{equation*}
        \Var(c_n^\top\widehat\beta^*)=\mathbb E\left[c_n^\top(X^{*\top}X^*)^{-1}c_n\right].
    \end{equation*}

    Let $E_n$ denote the event that all bootstrap observations are drawn from the contamination component. Then $\mathbb P(E_n)=\delta_n^n$. On $E_n$, we have
    \begin{equation*}
        X^*=a_nG_n \quad\Longrightarrow\quad X^{*\top}X^*=a_n^2G_n^\top G_n,
    \end{equation*}
    where $G_n\in\mathbb R^{n\times p_n}$ has i.i.d. $N(0,1)$ entries. Hence
    \begin{equation*}
        \mathbb E\left[c_n^\top(G_n^\top G_n)^{-1}c_n\right]=\frac{1}{n-p_n-1}.
    \end{equation*}
    It follows that
    \begin{equation*}
        \Var(c_n^\top\widehat\beta^*)\geq\frac{\delta_n^n}{a_n^2(n-p_n-1)}.
    \end{equation*}
    Under the true model,
    \begin{equation*}
        \Var(c_n^\top\widehat\beta)=\frac{1}{n-p_n-1}.
    \end{equation*}
    Therefore,
    \begin{equation*}
        \frac{\Var(c_n^\top\widehat\beta^*)}{\Var(c_n^\top\widehat\beta)}\geq\frac{\delta_n^n}{a_n^2}=\delta_n^{-n}=n^{3n}\to\infty.
    \end{equation*}
\end{example}

The preceding counterexamples share a common mechanism. The fitted law contains a contamination component with vanishing probability mass, so its contribution to the Wasserstein distance disappears. The contaminating covariate distribution, however, is nearly degenerate. A bootstrap sample drawn from this component may therefore have a nearly singular Gram matrix, and the resulting inverse moments of the Gram matrix can dominate the bootstrap variance despite the rarity of the event.

This is why our score approximation condition is used for more than proving terminal \(W_4\) convergence. In the fixed-dimensional regime, the Fokker--Planck equation and Moser iteration yield density and lower-tail estimates for the generated Gram matrix. In the high-dimensional regime, analogous control follows from entropy estimates for the learned Fokker--Planck equation. These estimates provide the inverse-moment bounds needed for bootstrap variance consistency.

\section{Why not the residual bootstrap}
\label{sec:residual-bootstrap}
The preceding analysis shows why learning the joint law can repair the geometric failure of pairs resampling. A natural alternative is to keep the design fixed and fit a diffusion model only to the empirical residuals. The obstacle is that, in the proportional regime, fitted residuals generally do not have the same distribution as the regression errors. To see the mechanism, let $\widehat\beta_{(i)}$ be the estimator computed without observation $i$ and write
\begin{equation*}
    \widetilde e_{j(i)}=Y_j-X_j^\top\widehat\beta_{(i)}.
\end{equation*}
Suppose that the design is elliptical,
\begin{equation*}
    X_i=\lambda_i\Gamma_i,\qquad\Gamma_i\sim N(0,I_p),\qquad\mathbb E\lambda_i^2=1,
\end{equation*}
where $\lambda_i$ is independent of $\Gamma_i$. For an $M$-estimator with loss $\rho$ and score $\psi=\rho'$, assume in addition that the regression errors are independent of the design. The residual and leave-one-out expansions take the form \citep{el2018can}
\begin{equation*}
    \begin{aligned}
        \widetilde e_{i(i)}&=\varepsilon_i+|\lambda_i|\,\|\widehat\beta_{(i)}-\beta\|_2 Z_i+o_p(u_n),\\e_i+c_i\lambda_i^2\psi(e_i)&=\widetilde e_{i(i)}+o_p(u_n),
    \end{aligned}
\end{equation*}
where $Z_i\sim N(0,1)$ is independent of $\varepsilon_i$, $u_n\to0$, and
\begin{equation*}
    c_i=\frac1n\tr\left\{\left(\frac1n\sum_{j\neq i}\psi'(\widetilde e_{j(i)})X_jX_j^\top\right)^{-1}\right\}.
\end{equation*}
For least squares, $\psi(x)=x$ and the exact leave-one-out identity is
\begin{equation*}
    e_i=(1-h_i)\widetilde e_{i(i)},\qquad h_i = X_i^\top(X^\top X)^{-1}X_i.
\end{equation*}
Under homoskedastic errors and conditional on a full-rank design matrix,
\begin{equation*}
    \mathbb E\left[ \sum_{i=1}^n e_i^2 \,\middle|\, X \right] = \sigma_\varepsilon^2(n-p).
\end{equation*}
Thus, the expected residual sum of squares per observation is
$\sigma_\varepsilon^2(1-p/n)$. In the proportional regime, a generative model
that learns the fitted-residual law therefore targets a variance-shrunk
distribution rather than the true error law. Improving the residual
distribution estimator alone cannot remove this high-dimensional distortion.

\section{Simulation}
\label{sec:simulation}
\subsection{Experimental setting}
We evaluate the finite-sample performance of the proposed generative bootstrap procedures in proportional high-dimensional linear models. The simulation design follows the standard high-dimensional linear-model setting used in the bootstrap literature, where $p<n$ but $p/n$ is not close to zero. We consider the linear model
\begin{equation*}
    Y_i=X_i^\top\beta+\varepsilon_i,\qquad i=1,\dots,n.
\end{equation*}
Throughout the simulations, we set $\beta=0$. For unpenalized least squares, the estimation error obeys the exact identity
\begin{equation*}
    \widehat\beta-\beta=(X^\top X)^{-1}X^\top\varepsilon.
\end{equation*}
Thus, conditional on the design, the sampling variance does not depend on the value of $\beta$ \citep{el2018can}. Setting $\beta=0$ isolates the variance calibration question and makes the null hypothesis for the first coordinate exact, so the empirical rejection probability directly measures Type~I error.

We focus on inference for the first coordinate of $\beta$. Equivalently, the contrast vector is $c=e_1$, the first canonical basis vector. Even when full high-dimensional distributional approximation is unavailable, inference for a fixed coordinate or deterministic contrast may still be accurate.

Unless otherwise stated, the sample size is $n=500$. For each setting and each value of $\kappa$, we use $R=1000$ Monte Carlo replications. Within each replication, we use $B=1000$ bootstrap samples to estimate the bootstrap distribution and the bootstrap variance. We vary the aspect ratio over
\begin{equation*}
    \kappa=p/n\in\{0.1,0.2,0.3,0.4,0.5\},
\end{equation*}
and take $p=\operatorname{round}(n\kappa)$.

We consider ten settings obtained by combining five design distributions with two error distributions. The five design distributions are Normal, Laplace, ENL, EUL, and EEL. The Normal design is the Gaussian benchmark:
\begin{equation*}
    X_i\sim N(0,I_p).
\end{equation*}
The Laplace design has independent entries $X_{ij}\sim \operatorname{Laplace}(0,1/\sqrt2)$, so that each coordinate has variance one. The remaining three designs are elliptical scale-mixture designs of the form
\begin{equation*}
    X_i=\lambda_i Z_i,\qquad Z_i\sim N(0,I_p),
\end{equation*}
where $Z_i$ is independent of the scalar radial variable $\lambda_i$, and the scaling is chosen so that $\mathbb E[\lambda_i^2]=1$. Specifically, ENL uses $\lambda_i\sim N(0,1)$, EUL uses $\lambda_i=\sqrt{12/13}\,U_i$ with $U_i\sim \Unif(0.5,1.5)$, and EEL uses $\lambda_i\sim \Exp(\sqrt2)$. These elliptical designs preserve the marginal covariance $\mathbb E[X_iX_i^\top]=I_p$, but introduce heterogeneous row norms and leverage behavior, providing a test beyond the i.i.d. coordinate setting.

For the errors, we consider two distributions. The first is the Gaussian error $\varepsilon_i\sim N(0,1)$. The second is the Laplace error $\varepsilon_i\sim \operatorname{Laplace}(0,1/\sqrt2)$, again normalized to have variance one. The setting with Gaussian errors matches the noise assumption in our high-dimensional theory, whereas the setting with Laplace errors examines robustness to non-Gaussian errors with the same variance.

We compare the following procedures: the classical pairs bootstrap, the classical residual bootstrap, the jackknife variance estimator \citep{el2018can}, the smoothed pairs bootstrap \citep{hall1989smoothing,silverman1987bootstrap}, and generative bootstrap procedures based on diffusion models. The diffusion pairs bootstrap estimates the joint law of $(X,Y)$ and generates bootstrap pairs from the fitted distribution. The diffusion residual bootstrap estimates the residual distribution and generates bootstrap errors from the fitted residual law. The smoothed pairs bootstrap is included as a baseline in order to distinguish the effect of merely smoothing the empirical distribution from the effect of learning a more accurate generative distribution estimator.

For each method, we report two metrics. The first is the empirical Type~I error at the nominal level $\alpha=0.05$, computed as the fraction of Monte Carlo replications in which the confidence interval for the first coordinate fails to cover zero. The second is the variance ratio
\begin{equation*}
    \mathrm{VR}=\frac{\widehat V}{1/\{n(1-\kappa)\}}.
\end{equation*}
Here $\widehat V$ denotes the average method-specific variance estimate across Monte Carlo replications. The denominator $1/\{n(1-\kappa)\}$ is the theoretical variance benchmark in the proportional regime for the first coordinate under Gaussian design and error variance one. A variance ratio close to one indicates accurate variance calibration. Ratios larger than one correspond to conservative variance estimation, whereas ratios smaller than one correspond to anti-conservative variance estimation.

\subsection{Theoretical variance}
Before presenting the results, we specify the denominators used in the variance ratios. For the Gaussian design, the theoretical variance benchmark has a particularly simple form. When $X_i\sim N(0,I_p)$, $\varepsilon_i\sim N(0,\sigma_\varepsilon^2)$, and $p/n\to\kappa\in(0,1)$, standard Wishart theory gives, for any deterministic contrast $c_n$ with $\|c_n\|_2=1$,
\begin{equation*}
    n c_n^\top (X^\top X)^{-1} c_n\to_p\frac{1}{1-\kappa}.
\end{equation*}
Consequently,
\begin{equation*}
    \Var\left(c_n^\top\widehat\beta\mid X\right)= \sigma_\varepsilon^2 c_n^\top (X^\top X)^{-1} c_n=\frac{\sigma_\varepsilon^2}{n(1-\kappa)}(1+o_p(1)).
\end{equation*}
Thus, when $\sigma_\varepsilon^2=1$, the natural theoretical variance benchmark is $1/\{n(1-\kappa)\}$. This benchmark is used for the Gaussian design experiments when computing variance ratios.

For elliptical designs, the correct variance benchmark in the proportional regime depends on the radial distribution of the design. Therefore, in addition to the Gaussian benchmark $1/\{n(1-\kappa)\}$, we compute the corresponding asymptotic variance constants $m_\lambda(\kappa)$. In the elliptical model $X_i=\lambda_i Z_i$, where $Z_i\sim N(0,I_p)$ and $\mathbb E[\lambda_i^2]=1$, the asymptotic variance of a fixed normalized contrast takes the form
\begin{equation*}
    \Var(c_n^\top\widehat\beta)\sim\frac{\sigma_\varepsilon^2}{n}m_\lambda(\kappa).
\end{equation*}
The constant $m_\lambda(\kappa)$ is determined by the fixed-point equation
\begin{equation*}
    \frac{1}{m_\lambda(\kappa)}=\mathbb E\left[\frac{\lambda^2}{1+\kappa m_\lambda(\kappa)\lambda^2}\right].
\end{equation*}
Table~\ref{tab:elliptical-asymptotic-variance} reports these constants for the Gaussian design and for the three elliptical designs used in our simulations. The constants increase substantially as the radial distribution becomes more heterogeneous, especially for the elliptical exponential design. This reflects the fact that row-norm heterogeneity increases the variance of OLS contrasts in the proportional regime. Accordingly, for experiments with elliptical designs and $\sigma_\varepsilon^2=1$, we use $m_\lambda(\kappa)/n$ as the theoretical variance benchmark when computing variance ratios.
\begin{table*}[ht]
\centering
\begin{tabular}{lccccc}
\toprule
\multirow{2}{*}{Design}
& \multicolumn{5}{c}{Asymptotic variance constant $m_\lambda(\kappa)$} \\ \cmidrule(lr){2-6}
& 0.1 & 0.2 & 0.3 & 0.4 & 0.5 \\ \midrule
Gaussian Design $(\lambda\equiv1)$
& 1.111 & 1.250 & 1.429 & 1.667 & 2.000 \\
Elliptical Uniform
& 1.145 & 1.328 & 1.568 & 1.894 & 2.356 \\
Elliptical Normal
& 1.342 & 1.810 & 2.494 & 3.557 & 5.340 \\
Elliptical Exponential
& 1.615 & 2.456 & 3.737 & 5.831 & 9.539 \\
\bottomrule
\end{tabular}
\caption{Asymptotic variance constants for OLS under Gaussian and elliptical designs.  We consider $X_i=\lambda_i Z_i$, where $Z_i\sim N(0,I_p)$, $\mathbb E\lambda_i^2=1$, and $p/n\to\kappa\in(0,1)$. For any deterministic contrast $c_n$ with $\|c_n\|=1$, \(\Var(c_n^\top\widehat\beta)\sim \frac{\sigma_\varepsilon^2}{n}m_\lambda(\kappa)\), where $m_\lambda(\kappa)$ solves $\frac{1}{m_\lambda(\kappa)} =\mathbb E\left[\frac{\lambda^2}{1+\kappa m_\lambda(\kappa)\lambda^2}\right]$. When $\sigma_\varepsilon^2=1$, the theoretical variance used for calibration is $m_\lambda(\kappa)/n$.}
\label{tab:elliptical-asymptotic-variance}
\end{table*}

\subsection{Simulation results}
We organize the results by the error distribution, beginning with the Normal-Normal benchmark and then considering Laplace errors and non-Gaussian designs. The main results appear in Figures~\ref{fig:type1-curve}, \ref{fig:normal-laplace}, \ref{fig:laplace-normal}, and \ref{fig:laplace-laplace}; the complete numerical values are reported in Appendix~\ref{app:simulation-tables}. Across the ten combinations of design and error distributions, we compare the classical pairs bootstrap, the classical residual bootstrap, the jackknife variance estimator, and the diffusion pairs bootstrap. In the Normal-Normal benchmark setting, we additionally include the smoothed pairs bootstrap and the diffusion residual bootstrap as diagnostic baselines. The smoothed pairs bootstrap helps separate the effect of smoothing the empirical distribution from the effect of learning a generative distribution estimator, whereas the diffusion residual bootstrap tests whether improving the residual
distribution alone is sufficient.

\begin{figure}[t]
    \centering
    \includegraphics[width=0.95\textwidth]{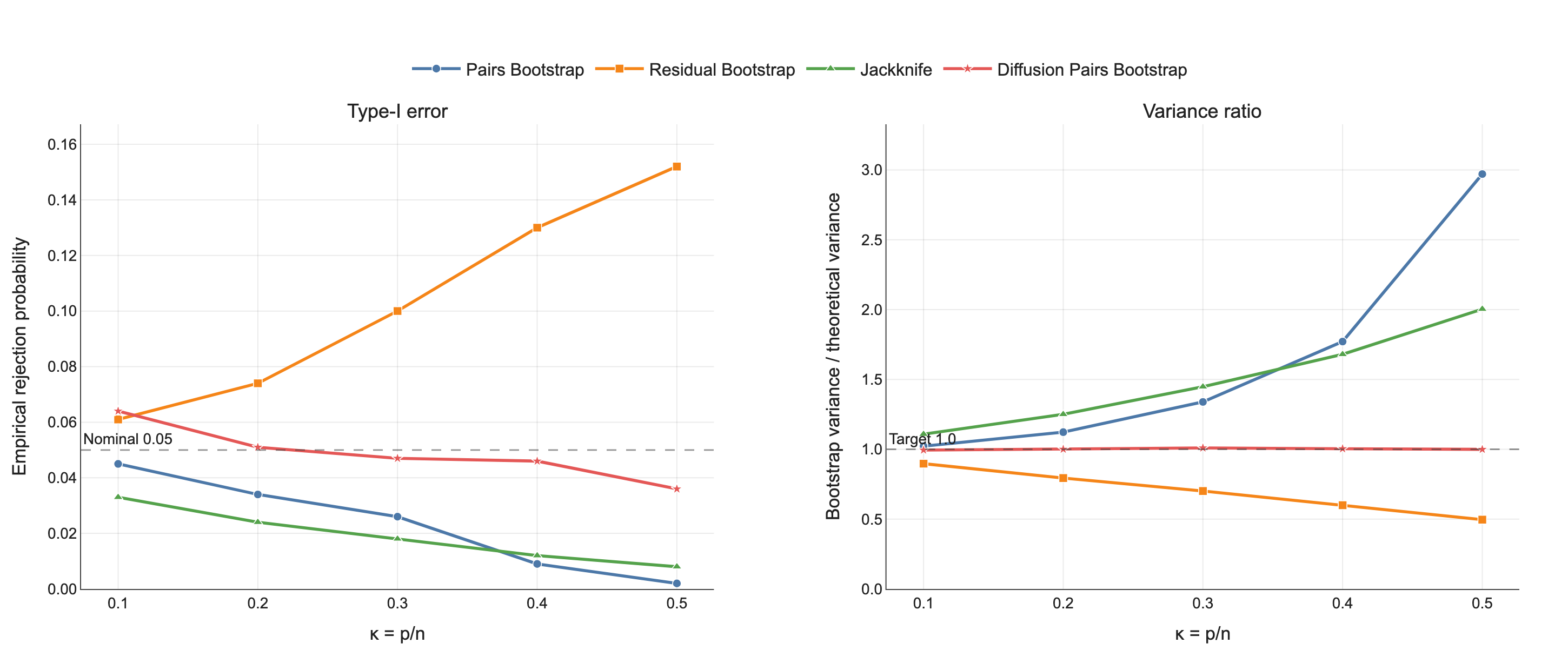}
    \caption{
    Type~I error and variance calibration with Gaussian design and Laplace errors. We take $n=500$, $p/n=\kappa$, and $\varepsilon_i\sim \operatorname{Laplace}(0,1/\sqrt2)$, so that $\Var(\varepsilon_i)=1$. Variance ratios are computed using the benchmark for Gaussian design $1/\{n(1-\kappa)\}$. This setting tests robustness to non-Gaussian errors while keeping the design distribution
    Gaussian.}
    \label{fig:normal-laplace}
\end{figure}

\begin{figure}[t]
    \centering
    \includegraphics[width=0.95\textwidth]{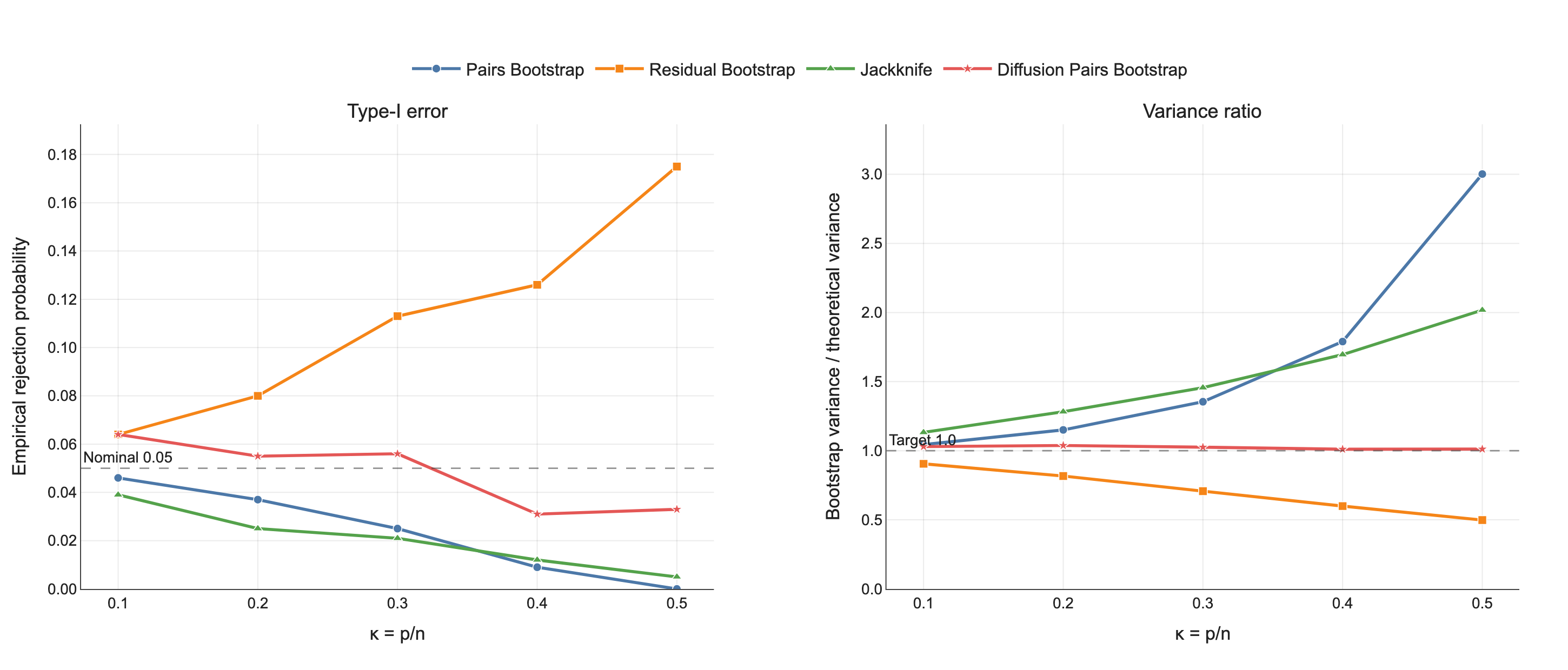}
    \caption{
    Type~I error and variance calibration with Laplace design and Gaussian errors. The entries of the design matrix are i.i.d. $\operatorname{Laplace}(0,1/\sqrt2)$, hence have mean zero and variance one, and the errors satisfy $\varepsilon_i\sim N(0,1)$. Variance ratios are computed using the benchmark $1/\{n(1-\kappa)\}$. This setting tests robustness to non-Gaussian i.i.d.\ covariates while keeping the error distribution Gaussian.}
    \label{fig:laplace-normal}
\end{figure}

\begin{figure}[t]
    \centering
    \includegraphics[width=0.95\textwidth]{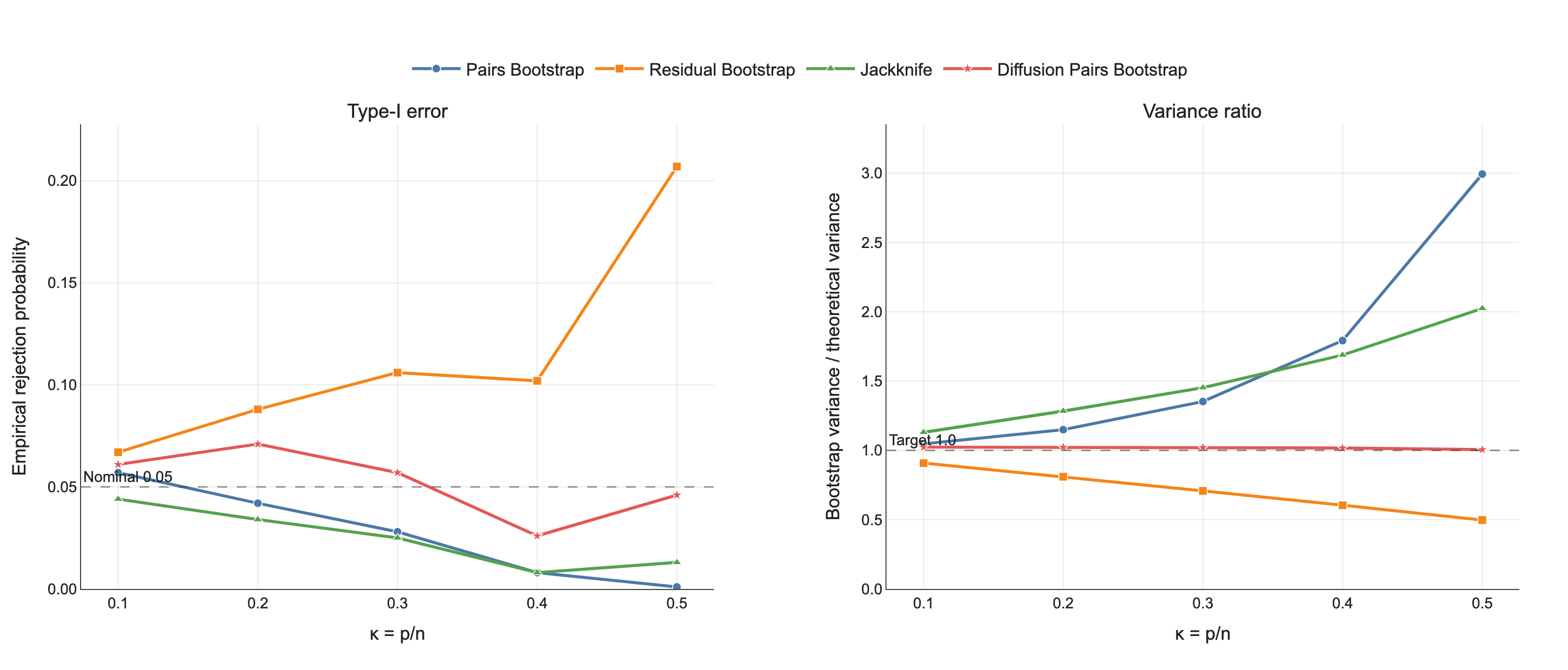}
    \caption{
    Type~I error and variance calibration with Laplace design and Laplace errors. The design entries are i.i.d. $\operatorname{Laplace}(0,1/\sqrt2)$, and the errors are also drawn from $\operatorname{Laplace}(0,1/\sqrt2)$, so both the covariates and the errors are non-Gaussian but standardized to have variance one. Variance ratios are computed using the benchmark $1/\{n(1-\kappa)\}$. This setting evaluates whether the calibration improvement persists when both the design and the errors depart from Gaussianity.}
    \label{fig:laplace-laplace}
\end{figure}

Across all combinations of design and error distributions, we observe a consistent high-dimensional pattern. The classical pairs bootstrap and the jackknife tend to overestimate the variance as $\kappa$ increases, whereas the residual bootstrap tends to underestimate it. These two failure modes lead to opposite inferential behavior: pairs bootstrap and jackknife produce conservative intervals, while residual bootstrap can become anti-conservative. The diffusion pairs bootstrap substantially reduces this variance distortion in most settings.

The Normal-Normal setting provides the benchmark comparison. In this setting, the classical pairs bootstrap becomes increasingly conservative as $\kappa$ grows, and the residual bootstrap increasingly underestimates the variance. This agrees with the high-dimensional theory: pairs resampling changes the effective geometry of the design matrix, while residual resampling uses fitted residuals whose empirical distribution no longer matches the true error distribution in the proportional regime. The diagnostic baselines also show that adding noise or smoothing the empirical distribution does not by itself produce the same improvement.

The same qualitative behavior persists under non-Gaussian i.i.d.\ designs and non-Gaussian errors. In the settings with Laplace designs and Laplace errors, the coordinates of $X_i$ and the errors are both non-Gaussian but standardized to have unit variance. The classical pairs bootstrap remains conservative, and the residual bootstrap remains anti-conservative. The diffusion pairs bootstrap keeps its variance ratio close to one and generally improves Type~I error calibration, so the improvement is not specific to Gaussian designs or Gaussian errors.

The elliptical designs are more challenging because row norms are heterogeneous. In these settings, the correct variance benchmark is $m_\lambda(\kappa)/n$ rather than $1/\{n(1-\kappa)\}$, as reported in Table~\ref{tab:elliptical-asymptotic-variance}. The elliptical exponential design is the most difficult case among those considered. In this setting, the classical pairs bootstrap and the jackknife can severely overestimate the variance, while the residual bootstrap continues to underestimate it. At moderate-to-large values of $\kappa$, the diffusion pairs bootstrap reduces the variance distortion relative to classical pairs bootstrap, but its calibration is less accurate than in the Gaussian designs and designs with i.i.d.\ coordinates. Thus, learning the joint distribution reduces, but does not eliminate, the variance distortion under strong radial heterogeneity.

Overall, standard resampling methods exhibit systematic high-dimensional variance distortions in these experiments. Diffusion pairs bootstrap is most accurate for Gaussian designs and designs with i.i.d.\ coordinates, and it reduces, but does not fully remove, the distortion under strongly heterogeneous elliptical designs.

\section*{Acknowledgments}
We thank Fang Han, Lin Liu, and Anru Zhang for helpful comments.

\vskip 0.2in
\bibliographystyle{plainnat}
\bibliography{sample}

\newpage

\appendix
\section{Details of the numerical experiments}
\label{app:numerical-experiments}
This appendix provides additional implementation details for the numerical experiments reported in Section~\ref{sec:simulation}. We first describe the Monte Carlo procedure, then specify the simulation parameters and the implementation of each bootstrap method. Finally, we describe the construction of confidence intervals, the computation of the reported performance metrics, and the numerical evaluation of the theoretical variance benchmarks.

\subsection{Simulation description}

All numerical experiments were implemented in Python. Data generation and classical statistical computations were carried out using NumPy and SciPy, while the diffusion models and batched linear algebra computations were implemented in PyTorch. Figures were produced using Matplotlib. The diffusion models were trained and sampled on NVIDIA A100 GPUs.

For each combination of design distribution, error distribution, and aspect ratio $\kappa=p/n$, we perform $R$ independent Monte Carlo replications. In each replication, a dataset $\{(X_i,Y_i)\}_{i=1}^n$ is first generated from the specified linear model, and the ordinary least-squares estimator $\widehat\beta$ is computed from the full sample. All competing bootstrap procedures are then applied to exactly the same dataset.

The classical pairs, residual, and smoothed pairs bootstrap samples are constructed directly from the observed data. The jackknife variance estimator is computed from leave-one-out fits. For the diffusion pairs bootstrap, a diffusion model is trained on the observed joint vectors $(X_i,Y_i)$, and bootstrap samples are generated from the learned reverse diffusion process. For the diffusion residual bootstrap, the diffusion model is trained on the centered fitted residuals and is used to generate bootstrap errors. A separate diffusion model is trained for every Monte Carlo replication.

Within each Monte Carlo replication, every bootstrap procedure uses $B$ bootstrap samples. For each method, we record its variance estimate for the first regression coefficient and whether the resulting confidence interval contains the true value $\beta_1=0$. After all Monte Carlo replications have been completed, we compute the empirical Type~I error and the average variance estimate reported in Section~\ref{sec:simulation}.

\subsection{Simulation parameters}

Unless otherwise stated, the sample size is fixed at $n=500$, and the aspect ratio varies over
\begin{equation*}
    \kappa\in\{0.1,0.2,0.3,0.4,0.5\},
\end{equation*}
with $p=\operatorname{round}(n\kappa)$. For every combination of design and error distribution and every value of $\kappa$, we perform $R=1000$ independent Monte Carlo replications. Within each replication, each bootstrap method uses $B=1000$ bootstrap samples. The nominal significance level is fixed at $\alpha=0.05$, and all reported results correspond to inference on the first regression coefficient.

Five covariate distributions are considered. 
\begin{itemize}
    \item The Gaussian design satisfies $X_i\sim N(0,I_p)$.
    \item The Laplace design has independent coordinates $X_{ij}\sim\operatorname{Laplace}(0,1/\sqrt2)$, so that each coordinate has variance one.
    \item The remaining three designs are elliptical models of the form $X_i=\lambda_iZ_i$, where $Z_i\sim N(0,I_p)$, the radial variable $\lambda_i$ is independent of $Z_i$, and $\mathbb E\lambda_i^2=1$. Specifically,
    \begin{itemize}
        \item The elliptical normal design uses $\lambda_i\sim N(0,1)$;
        \item the elliptical uniform design uses $\lambda_i=\sqrt{12/13}\,U_i$ with $U_i\sim\Unif(0.5,1.5)$;
        \item the elliptical exponential design uses $\lambda_i\sim\Exp(\sqrt2)$.
    \end{itemize}
\end{itemize}

The error distribution is either $N(0,1)$ or $\operatorname{Laplace}(0,1/\sqrt2)$, both having variance one.

For each diffusion procedure, the same architecture and training hyperparameters are used across all design distributions and all values of $\kappa$. No hyperparameter tuning specific to individual settings or early stopping based on validation is performed.

\begin{table}[t]
\centering
\small
\begin{tabular}{lcc}
\toprule
Hyperparameter
& Joint diffusion
& Residual diffusion \\
\midrule
Model type
& DDPM with noise prediction
& DDPM with noise prediction \\
Denoising network
& Multilayer perceptron
& Multilayer perceptron \\
Hidden width
& 256
& 128 \\
Network depth
& 2
& 3 \\
Activation function
& SiLU
& SiLU \\
Dimension of time embedding
& 64
& 32 \\
Optimizer
& AdamW
& AdamW \\
Learning rate
& $10^{-4}$
& $10^{-4}$ \\
Batch size
& 256
& 256 \\
Training iterations
& 4000
& 1500 \\
Number of diffusion steps
& 200
& 200 \\
$\beta_{\min}$
& $10^{-4}$
& $10^{-4}$ \\
$\beta_{\max}$
& $10^{-2}$
& $2\times10^{-2}$ \\
Early stopping
& No
& No \\
Tuning for individual settings
& No
& No \\
\bottomrule
\end{tabular}
\caption{Diffusion model hyperparameters used in the numerical experiments. The same joint diffusion configuration is used across all design distributions and values of $\kappa$, and the same is true for the residual diffusion configuration.}
\label{tab:diffusion-hyperparameters}
\end{table}

\subsection{Implementation of the bootstrap procedures}
\paragraph{Classical pairs bootstrap.}
Bootstrap observations are generated by sampling the observed pairs $(X_i,Y_i)$ independently with replacement.

\paragraph{Classical residual bootstrap.}
The OLS estimator is first computed from the full sample, and the fitted residuals are centered by subtracting their sample mean. Bootstrap residuals are then sampled with replacement and combined with the original design matrix. No leverage correction or degrees-of-freedom correction is applied.

\paragraph{Jackknife variance estimator.}
The delete-one jackknife estimator is computed using the exact leave-one-out identity for ordinary least squares, avoiding repeated model fitting.

\paragraph{Smoothed pairs bootstrap.}
Observed pairs are first resampled with replacement. Independent Gaussian perturbations are then added to the covariates and responses after scaling by their empirical standard deviations: \(X_i^*=X_{I_i}+h\widehat D_X\xi_i\) and \(Y_i^*=Y_{I_i}+h\widehat\sigma_Y\zeta_i\), where \(\xi_i\sim N(0,I_p)\) and \(\zeta_i\sim N(0,1)\). Throughout the experiments, the smoothing bandwidth is fixed at $h=0.05$.

\paragraph{Diffusion pairs bootstrap.}
The diffusion model is trained on the standardized joint observations $(X_i,Y_i)$. We use a denoising diffusion probabilistic model with the noise prediction objective. The architecture and training hyperparameters are reported in Table~\ref{tab:diffusion-hyperparameters}. After training, the reverse chain is initialized from a standard Gaussian distribution, and the generated observations are transformed back to the original coordinate scale.

\paragraph{Diffusion residual bootstrap.}
The centered fitted residuals are standardized and used to train a one-dimensional denoising diffusion probabilistic model. Its architecture and training hyperparameters are reported in Table~\ref{tab:diffusion-hyperparameters}. Generated residuals are transformed back to the original scale, centered within each bootstrap sample, and combined with the original design matrix.

\subsection{Confidence intervals and evaluation metrics}
For the classical pairs bootstrap, residual bootstrap, smoothed pairs bootstrap, diffusion pairs bootstrap, and diffusion residual bootstrap, confidence intervals are constructed using percentile bootstrap intervals. If $\widehat\beta_1^{*(1)},\ldots,\widehat\beta_1^{*(B)}$ denote the bootstrap estimates of the first regression coefficient, the confidence interval is given by the empirical $\alpha/2$ and $1-\alpha/2$ quantiles of these bootstrap estimates.

For the jackknife, we use the Gaussian confidence interval $\widehat\beta_1\pm z_{1-\alpha/2}\sqrt{\widehat V_{\mathrm{jack}}}$.

Bootstrap variances are computed using the unbiased sample variance of the bootstrap estimates with denominator $B-1$. The reported bootstrap variance is the average of these variance estimates across the Monte Carlo replications.

The empirical Type~I error is the proportion of Monte Carlo replications for which the corresponding confidence interval fails to contain the true value $\beta_1=0$. The reported variance ratio is defined as
\begin{equation*}
    \mathrm{VR}=\frac{ R^{-1}\sum_{r=1}^R\widehat V_r}{V_{\mathrm{theory}}},
\end{equation*}
where $V_{\mathrm{theory}}$ denotes the theoretical variance benchmark.

\subsection{Computation of the theoretical variance benchmarks}

For Gaussian design, the theoretical benchmark in the proportional regime is
\begin{equation*}
    V_{\mathrm{theory}} =\frac{1}{n(1-\kappa)}.
\end{equation*}
The same first-order benchmark is used for the standardized i.i.d.\ Laplace design. For elliptical designs, the theoretical variance benchmark is
\begin{equation*}
    V_{\mathrm{theory}}=\frac{m_\lambda(\kappa)}{n},
\end{equation*}
where $m_\lambda(\kappa)$ is the unique positive solution of
\begin{equation*}
    \frac{1}{m_\lambda(\kappa)}=\mathbb E\left[\frac{\lambda^2}{1+\kappa m_\lambda(\kappa)\lambda^2}\right].
\end{equation*}
The fixed-point equation is solved numerically for each radial distribution and each value of $\kappa$, and the resulting constants are reported in Table~\ref{tab:elliptical-asymptotic-variance}.

\subsection{Computational details}

Each Monte Carlo replication uses an independent random seed. Within the same replication, all competing methods are applied to exactly the same generated dataset. No Monte Carlo replication is discarded on the basis of numerical results. All diffusion models are trained for a fixed number of optimization iterations, and no early stopping based on validation is used.

The diffusion training, reverse sampling, and batched least-squares computations are performed on GPUs. Numerical results from individual Monte Carlo replications are aggregated after all parallel jobs have completed to produce the reported Type~I errors, variance ratios, and variance estimates.
\clearpage
\section{Notation}
The following tables summarize the notation used repeatedly in the fixed-dimensional, high-dimensional, and structured score estimation arguments. The notation is grouped by its role in the proofs. Variables local to individual proofs and truncation indices are defined at their first occurrence and are omitted here.
\vspace{0.5\baselineskip}

\begin{table}[H]
    \vspace*{0.35\baselineskip}
    \centering
    \caption{Notation used in the fixed-dimensional proofs.}
    \label{tab:notation-fixed-dimensional}
    \small
    \renewcommand{\arraystretch}{1.15}
    \setlength{\tabcolsep}{5pt}
    \begin{tabular}{@{}p{0.25\textwidth}p{0.69\textwidth}@{}}
        \toprule
        Notation & Meaning \\
        \midrule
        $d=p+1$ & Fixed dimension of the joint vector $Z=(X,Y)\in\mathbb R^d$. \\
        $\mu$ & Target law of $Z=(X,Y)$. \\
        $\beta_t$, $B(t)$ & Forward noise schedule and its cumulative value,
        $B(t)=\int_0^t\beta_s\,\mathrm ds$. \\
        $p_t=e^{-U_t}$ & Density of the forward Ornstein--Uhlenbeck process at time $t$ and its negative log-density. \\
        $s_t=\nabla\log p_t$ & Exact score of the forward density. \\
        $m_t^{\mathrm{sc}}$ & Strong log-concavity parameter of $p_t$. \\
        $\widehat s_{n,t}$, $e_{n,t}$ & Learned score and score error,
        $e_{n,t}=\widehat s_{n,t}-s_t$. \\
        $b_{n,r}$, $\widehat b_{n,r}$ & Exact and learned reverse drifts. \\
        $Y_r$, $\widehat Y_r$ & Exact and learned reverse processes. \\
        $\widehat\mu_n$ & Terminal law of the learned reverse process. \\
        $\rho_{n,r}$ & Conditional density of the learned reverse process $\widehat Y_r$ in the fixed-dimensional PDE argument. \\
        $\mathfrak C_n$ & Integral of the negative part of the divergence,
        $\int\|(\nabla\!\cdot\widehat b_{n,r})_-\|_\infty\,\mathrm dr$. \\
        $L_n$, $G_n$ & Lipschitz and linear growth envelopes, depending on time, for the learned reverse drift. \\
        $\Sigma_X(\nu)$ & Design second-moment matrix,
        $\Sigma_X(\nu)=\mathbb E_\nu[XX^\top]$. \\
        $\beta(\nu)$ & Population least-squares coefficient under the law $\nu$. \\
        $\widehat\Sigma_n^*$ & Normalized generated Gram matrix,
        $\widehat\Sigma_n^*=n^{-1}\sum_{i=1}^nX_i^*X_i^{*\top}$. \\
        $U_n^*$ & Generated empirical score,
        $U_n^*=n^{-1/2}\sum_{i=1}^nX_i^*\{Y_i^*-X_i^{*\top}\beta(\widehat\mu_n)\}$. \\
        $S_\nu(z)$, $\tau^2(\nu)$ & Influence function for the linear contrast and its variance under $Z\sim\nu$. \\
        $\mathbb P^*$, $\mathbb E^*$, $o_{p^*}(1)$ & Probability, expectation, and stochastic order conditional on the learned law $\widehat\mu_n$. \\
        \bottomrule
    \end{tabular}
\end{table}

\clearpage
\begin{table}[H]
    \vspace*{0.35\baselineskip}
    \centering
    \caption{Model, diffusion, and PDE notation used in the high-dimensional proofs.}
    \label{tab:notation-high-dimensional-pde}
    \small
    \renewcommand{\arraystretch}{1.15}
    \setlength{\tabcolsep}{5pt}
    \begin{tabular}{@{}p{0.25\textwidth}p{0.69\textwidth}@{}}
        \toprule
        Notation & Meaning \\
        \midrule
        $d_n=p_n+1$ & Dimension of the high-dimensional joint vector $Z=(X,Y)$. \\
        $T_n$ & Diffusion horizon, with $T_n\to\infty$ and $d_ne^{-2T_n}\to0$. \\
        $\mathcal F_n$ & $\sigma$-field with respect to which the fitted score and generated law are measurable. \\
        $\mu_n$, $\mu_{n,X}$ & Target joint law of $(X,Y)$ and its design marginal. \\
        $\Sigma_n$ & Design covariance, $\Sigma_n=\mathbb E[XX^\top]$. \\
        $c_\Sigma$, $C_\Sigma$ & Uniform lower and upper spectral bounds for $\Sigma_n$. \\
        $\underline m$, $\overline m$ & Relative curvature constants in
        $\underline m\,\Sigma_n^{-1}\preceq\nabla^2V_n\preceq \overline m\,\Sigma_n^{-1}$. \\
        $V_n$, $U_n$ & Design and joint potentials, with
        $U_n(x,y)=V_n(x)+(y-x^\top\beta_n)^2/(2\sigma_n^2)$. \\
        $p_{n,t}$ & Density of the exact forward Ornstein--Uhlenbeck law at time $t$. \\
        $s_{n,t}(z)$ & Exact score, $s_{n,t}=\nabla\log p_{n,t}$. \\
        $\widehat s_{n,t}$, $e_{n,t}$ & Learned score and score error,
        $e_{n,t}=\widehat s_{n,t}-s_{n,t}$. \\
        $\epsilon_{n,r}$ & Score error evaluated at reverse time,
        $\epsilon_{n,r}=e_{n,T_n-r}$. \\
        $b_{n,r}$, $\widehat b_{n,r}$ & Exact and learned reverse drifts, with
        $\widehat b_{n,r}=b_{n,r}+2\epsilon_{n,r}$. \\
        $C_{\mathrm{LS}}$ & Uniform log-Sobolev and Poincar\'e constant for the exact OU marginals. \\
        $L_\star$, $\lambda_\star$ & Constants appearing in
        \eqref{eq:lipschitz-absorption}; $L_\star$ bounds
        $\Lip(e_{n,t})$. \\
        $\mathcal E_n$ & Event of probability tending to one on which the score error satisfies the required Lipschitz and integrability bounds. \\
        $\rho_{n,r}$, $\widehat\rho_{n,r}$ & Exact and learned reverse densities, respectively; $\rho_{n,r}=p_{n,T_n-r}$. \\
        $f_{n,r}$ & Density ratio,
        $f_{n,r}=\widehat\rho_{n,r}/\rho_{n,r}$. \\
        $D_n(r)$ & The integral
        $D_n(r)=\int f_{n,r}^2\rho_{n,r}\,\mathrm dz$. \\
        $H_n(r)$ & Chi-square divergence,
        $H_n(r)=D_n(r)-1=\chi^2(\widehat\rho_{n,r},\rho_{n,r})$. \\
        $a_n(t)$ & Fourth moment of score error,
        $a_n(t)=\int\|e_{n,t}(z)\|^4p_{n,t}(z)\,\mathrm dz$. \\
        $\alpha_n(r)$ & Coefficient used in the chi-square energy estimate,
        $\alpha_n(r)=a_n(T_n-r)^{1/2}$. \\
        $T_M$, $\Phi_M$, $F_M$ & Truncation, its primitive, and the associated perspective function used in the density-ratio identity. \\
        $E_M(r)$ & The truncated integral
        $E_M(r)=\int\Phi_M(f_{n,r})\rho_{n,r}\,\mathrm dz$. \\
        \bottomrule
    \end{tabular}
\end{table}

\clearpage
\begin{table}[H]
    \vspace*{0.35\baselineskip}
    \centering
    \caption{Statistical notation used in the high-dimensional proofs.}
    \label{tab:notation-high-dimensional-statistics}
    \small
    \renewcommand{\arraystretch}{1.15}
    \setlength{\tabcolsep}{5pt}
    \begin{tabular}{@{}p{0.25\textwidth}p{0.69\textwidth}@{}}
        \toprule
        Notation & Meaning \\
        \midrule
        $\kappa$, $\beta_n$, $\sigma_n^2$ & Aspect-ratio limit $p_n/n\to\kappa$, regression coefficient, and noise variance. \\
        $c_n$ & Deterministic contrast with $\|c_n\|_2=1$. \\
        $\widehat\mu_{n,X}$ & Design marginal of the generated joint law $\widehat\mu_n$. \\
        $\widehat K_n(x)$, $K_n(x)$ & Generated and target conditional laws of $Y$ given $X=x$, with
        $K_n(x)=N(x^\top\beta_n,\sigma_n^2)$. \\
        $R_n$ & Likelihood ratio in the conditional law argument,
        $R_n=\mathrm d\widehat\mu_n/\mathrm d\mu_n$. \\
        $v_n^*(x)$ & Generated conditional variance,
        $v_n^*(x)=\Var_{\widehat K_n(x)}(Y)$. \\
        $m_n(x)$, $r_n(x)$ & Generated conditional mean and its deviation from the target regression function,
        $m_n(x)=\mathbb E_{\widehat\mu_n}[Y\mid X=x]$ and
        $r_n(x)=m_n(x)-x^\top\beta_n$. \\
        $\eta_n$, $\zeta_n$ & Average fourth power error in the conditional mean and root mean square error in the conditional variance. \\
        $\widehat\Sigma_m^*$ & Normalized generated Gram matrix,
        $\widehat\Sigma_m^*=m^{-1}\sum_{i=1}^mX_i^*X_i^{*\top}$. \\
        $S_n$, $S_n^*$ & Unnormalized true and generated Gram matrices. \\
        $S_{-i}^*$ & Leave-one-out Gram matrix,
        $S_{-i}^*=\sum_{j\neq i}X_j^*X_j^{*\top}$. \\
        $W_i$, $W_i^*$ & Whitened true and generated designs,
        $W_i=\Sigma_n^{-1/2}X_i$ and $W_i^*=\Sigma_n^{-1/2}X_i^*$. \\
        $A_n$, $A_n^*$ & Normalized Gram matrices of the whitened true and generated designs. \\
        $\Delta_i$ & Rowwise coupling error, $W_i^*=W_i+\Delta_i$. \\
        $h^*$ & Weight vector for the generated linear contrast,
        $h^*=X^*(S_n^*)^{-1}c_n\in\mathbb R^n$. \\
        $M_{4,n}$ & Uniform directional fourth moment,
        $M_{4,n}=\sup_{\|u\|_2=1}\mathbb E_{\widehat\mu_{n,X}}(u^\top X)^4$. \\
        $\varepsilon$, $\varepsilon_i$ & Regression noise; the symbol is reserved for model noise rather than analytic perturbations. \\
        $S_\lambda$ & Coordinatewise soft-thresholding operator. \\
        $\widehat m$ & Moment estimator
        $\widehat m=n^{-1}\sum_{i=1}^nX_iY_i$ used in the upper bound for
        the score estimation risk. \\
        $\overline\sigma^2$, $\widehat\sigma^2$ & Raw residual-moment estimator and its projection onto
        $[\sigma_{\min}^2,\sigma_{\max}^2]$. \\
        \bottomrule
    \end{tabular}
\end{table}

\clearpage
\begin{table}[H]
    \vspace*{0.35\baselineskip}
    \centering
    \caption{Notation used in the structured score estimation and atomic approximation results.}
    \label{tab:notation-structured-minimax}
    \small
    \renewcommand{\arraystretch}{1.15}
    \setlength{\tabcolsep}{5pt}
    \begin{tabular}{@{}p{0.25\textwidth}p{0.69\textwidth}@{}}
        \toprule
        Notation & Meaning \\
        \midrule
        $\mathcal X_p$, $\mathcal M_p(\mathcal X_p)$ & A class of design distributions and the induced class of joint laws of $Z=(X,Y)$. \\
        $p_{\mu,t}$, $s_{\mu,t}$ & Density and score along the standard OU flow starting from $\mu\in\mathcal M_p(\mathcal X_p)$. \\
        $\mathcal R_{n,4}(\mathcal X_p;T_n)$ & Minimax risk for integrated
        fourth-moment score estimation over $[0,T_n]$. \\
        $\gamma_d$, $\mathcal A_{n,d}$ & Standard Gaussian law on $\mathbb R^d$ and the class of possibly randomized probability measures supported on at most $n$ points. \\
        $\mathcal X_p^{\mathrm I}$, $\mathcal X_p^{\mathrm{AR}}$, $\mathcal X_p^{\mathrm{LR}}$, $\mathcal X_p^{\mathrm{EXP}}$ & Standard Gaussian, Gaussian AR(1), fixed-rank Gaussian with known directions, and product exponential family design classes. \\
        $B_1$ & Uniform $\ell_1$ bound on the regression vector in the structured examples, $\|\beta\|_1\leq B_1$. \\
        $\vartheta$, $\Sigma_\vartheta$ & Generic fixed-dimensional structural parameter and its associated design covariance. \\
        $\rho$, $\widehat\rho$ & AR(1) correlation parameter and its estimator based on adjacent pairs. \\
        $U_p$, $r$, $\lambda$, $\Lambda$ & Known loading directions, fixed perturbation rank, spike parameter, and its compact convex parameter space in the fixed-rank model. \\
        $\theta$, $\Theta$, $q$, $\phi_\ell$ & Parameter, parameter space, fixed parameter dimension, and coordinate potentials of the product exponential family. \\
        $\delta_{\mathrm{NG}}$ & Uniform bound defined in
        \eqref{eq:product-nongaussian-curvature}. \\
        $a_{\ell,p}$ & Quantity satisfying
        $\|\partial_{\vartheta_\ell}\Sigma_p(\vartheta)\|_{\mathrm F}^2 \leq Ca_{\ell,p}$ in
        Lemma~\ref{lem:gaussian-structured-score-transfer}. \\
        $\lambda_n$ & Threshold level used for nuisance estimation,
        $\lambda_n=A\sqrt{\log p/n}$; it is unrelated to the fixed-rank parameter $\lambda$. \\
        $C_{\mathrm{Lip}}$, $\delta_0$ & Constants in
        \eqref{eq:product-small-lipschitz}. \\
        \bottomrule
    \end{tabular}
\end{table}

\begin{table}[H]
    \vspace*{0.35\baselineskip}
    \centering
    \caption{Shared analytic notation used in the auxiliary proofs.}
    \label{tab:notation-auxiliary}
    \small
    \renewcommand{\arraystretch}{1.15}
    \setlength{\tabcolsep}{5pt}
    \begin{tabular}{@{}p{0.25\textwidth}p{0.69\textwidth}@{}}
        \toprule
        Notation & Meaning \\
        \midrule
        $v_t(x)$ & Generic drift in the SDE and Markov kernel lemmas. \\
        $\ell(t)$, $G(t)$ & Lipschitz and linear growth envelopes, depending on time, for $v_t$. \\
        $P_{s,t}(x,A)$ & Transition kernel of the SDE started from $x$ at time $s$. \\
        $P_{s,t}\varphi$ & Backward action of the transition kernel on a test function. \\
        $P_{s,t}^*\nu$ & Forward image of a finite measure $\nu$ under the transition kernel. \\
        $b_t(x)$, $a_t(x)$ & Drift and diffusion covariance in the generic Fokker--Planck operator. \\
        $L_t$ & Kolmogorov operator,
        $L_t\phi=b_t^\top\nabla\phi+\tfrac12a_t:\nabla^2\phi$. \\
        $\nu_t$, $\widetilde\nu_t$ & A weak Fokker--Planck solution taking values in probability measures and its narrowly continuous representative. \\
        $\boldsymbol\eta$ & Probability measure on the path space $C([0,T];\mathbb R^d)$ supplied by the superposition principle. \\
        $e_t$ & Evaluation map on path space, $e_t(\omega)=\omega(t)$; hence $(e_t)_\#\boldsymbol\eta$ is the time-$t$ marginal. \\
        $M_t^\psi$ & Martingale associated with the test function $\psi$ and the operator $L_t$. \\
        $E(r)$, $D(r)$ & Generic energy and dissipation in the integral energy lemmas. \\
        $\widetilde E$ & Right-continuous non-increasing representative of an almost-everywhere defined dissipative energy. \\
        $\mathcal P(\mathbb R^d)$ & Space of Borel probability measures on $\mathbb R^d$. \\
        $\|\cdot\|_{\mathrm{TV}}$ & Total variation norm of a finite signed measure. \\
        \bottomrule
    \end{tabular}
\end{table}
\clearpage

\section{Proof of Theorem~\ref{fixed-vari}}
\label{app:theorem}
In this section, we prove Theorem~\ref{fixed-vari}. We first establish the $W_q$ convergence result under the score approximation assumption.
\begin{mylemma}[Strong log-concavity]
\label{strong-log-concavity}
Under Assumption~\ref{ass:target-distribution}, write $p_t=e^{-U_t}$. Then
\begin{equation*}
    \nabla^2U_t(z)\succeq m_t^{\mathrm{sc}}I_d,\qquad m_t^{\mathrm{sc}}=\frac{m_0}{e^{-2B(t)}+m_0\{1-e^{-2B(t)}\}}.
\end{equation*}
\end{mylemma}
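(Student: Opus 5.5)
Under the forward OU process, $Z_t=\alpha_tZ_0+\sigma_t\xi$ with $\alpha_t=e^{-B(t)}$, $\sigma_t^2=1-e^{-2B(t)}$, and $\xi\sim N(0,I_d)$ independent of $Z_0$. Hence $p_t$ is the convolution of the law of $\alpha_tZ_0$ with $N(0,\sigma_t^2I_d)$. The plan has two steps: track the curvature through the rescaling, then through the Gaussian convolution.

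\textbf{Rescaling.} The law of $\alpha_tZ_0$ has density proportional to $\exp\{-U_0(z/\alpha_t)\}$. Its Hessian is $\alpha_t^{-2}\nabla^2U_0(z/\alpha_t)\succeq (m_0/\alpha_t^2)I_d$, so this law is strongly log-concave with parameter $m_0/\alpha_t^2$.

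\textbf{Convolution.} I will use the standard fact that the convolution of an $a$-strongly log-concave density with an $N(0,s^2I_d)$ density is $(a^{-1}+s^2)^{-1}$-strongly log-concave. One way to prove it directly: write $p_t(z)=\int e^{-W(x)-\|z-x\|^2/(2s^2)}\,\mathrm dx$ with $W(x)=U_0(x/\alpha_t)$ and $s^2=\sigma_t^2$. Differentiating under the integral gives
\begin{equation*}
    -\nabla^2\log p_t(z)=s^{-2}I_d-s^{-4}\Cov(X\mid Z_t=z).
\end{equation*}
The conditional law of $X$ given $Z_t=z$ has potential $W(x)+\|z-x\|^2/(2s^2)$, whose Hessian is at least $(a+s^{-2})I_d$ with $a=m_0/\alpha_t^2$. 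The Brascamp--Lieb inequality then gives $\Cov(X\mid Z_t=z)\preceq (a+s^{-2})^{-1}I_d$. Substituting,
\begin{equation*}
    \nabla^2U_t\succeq \bigl\{s^{-2}-s^{-4}(a+s^{-2})^{-1}\bigr\}I_d=\frac{a}{1+as^2}\,I_d=\frac{1}{a^{-1}+s^2}\,I_d.
\end{equation*}
Alternatively, the same bound follows from Prékopa--Leindler applied to the jointly strongly log-concave function $(x,z)\mapsto$ integrand, after subtracting the quadratic $\tfrac{c}{2}\|z\|^2$. Differentiation under the integral is justified by the Gaussian tails implied by $m_0>0$.

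\textbf{Conclusion.} With $a=m_0e^{2B(t)}$ and $s^2=1-e^{-2B(t)}$, the bound is
\begin{equation*}
    \frac{1}{e^{-2B(t)}/m_0+1-e^{-2B(t)}}=\frac{m_0}{e^{-2B(t)}+m_0\{1-e^{-2B(t)}\}},
\end{equation*}
which is exactly $m_t^{\mathrm{sc}}$. The only nonroutine step is the convolution-preserves-strong-log-concavity bound, which I take from Brascamp--Lieb. The case $t=0$ is trivial because $s=0$ and the bound reduces to $m_0$.
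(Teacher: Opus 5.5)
Your proposal is correct and follows the same route as the paper. You rescale $Z_0$ by $e^{-B(t)}$ and then convolve with $N(0,\{1-e^{-2B(t)}\}I_d)$, so that the covariance proxies $e^{-2B(t)}/m_0$ and $1-e^{-2B(t)}$ add. The only difference is that the paper simply invokes the scaling and convolution facts of Lemma~\ref{lem:aux-strong-log-concavity-preservation}, whereas you prove the convolution step directly via the second-order Tweedie identity and Brascamp--Lieb, which is the argument the paper itself uses later in the proof of Lemma~\ref{lem:hd-general-curvature}.
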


\begin{proof}
By Lemma~\ref{lem:aux-strong-log-concavity-preservation}, the OU-smoothed law remains strongly log-concave. Indeed,
\begin{equation*}
    Z_t=e^{-B(t)}Z_0+\sqrt{1-e^{-2B(t)}}G,\qquad G\sim N(0,I_d).
\end{equation*}
Since $\nabla^2U_0\succeq m_0I_d$, the law of $e^{-B(t)}Z_0$ has covariance proxy $\frac{e^{-2B(t)}}{m_0}I_d$. The Gaussian noise term $\sqrt{1-e^{-2B(t)}}G$ has covariance proxy $\{1-e^{-2B(t)}\}I_d$. Therefore the law $p_t=e^{-U_t}$ of $Z_t$ has covariance proxy $(\dfrac{e^{-2B(t)}}{m_0}+1-e^{-2B(t)})I_d$. Equivalently,
\begin{equation*}
    \nabla^2U_t(z) \succeq\left(\frac{e^{-2B(t)}}{m_0}+1-e^{-2B(t)}\right)^{-1}I_d=\frac{m_0}{e^{-2B(t)}+m_0\{1-e^{-2B(t)}\}}I_d.
\end{equation*}
This is the asserted bound.
\end{proof}

\begin{mylemma}[$W_q$ convergence]
    Under Assumptions~\ref{ass:target-distribution}, \ref{ass:noise-schedule}, and \ref{ass:score-approximation}, 
    \begin{equation*}
        W_q(\widehat\mu_n,\mu)\to_p 0.
    \end{equation*}
    \label{fix_wq}
\end{mylemma}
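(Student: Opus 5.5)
We prove the lemma with a synchronous coupling of the exact and learned reverse processes, following the outline sketched after Theorem~\ref{fixed-vari}. The first step is to let $Y_t$ be the exact reverse process started from $Y_0\sim p_{T_n}$, so that $Y_{T_n}\sim\mu$. Let $\widehat Y_t$ be the learned reverse process started from $\widehat Y_0\sim N(0,I_d)$ and driven by the same Brownian motion. The initial pair $(Y_0,\widehat Y_0)$ is chosen as an optimal $W_q$ coupling of $p_{T_n}$ and $N(0,I_d)$, independent of the Brownian motion. By Assumption~\ref{ass:score-approximation}, the learned SDE has a unique weak solution. Since the exact drift is Lipschitz, the exact equation can be solved pathwise along the same Brownian motion, so the coupled pair is well defined.

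The second step is a pathwise bound. Since the diffusion coefficients coincide, $\Delta_t=\widehat Y_t-Y_t$ solves the random ODE
\begin{equation*}
\dot\Delta_t=\beta_{T_n-t}\bigl\{\Delta_t+2\bigl(s_{T_n-t}(\widehat Y_t)-s_{T_n-t}(Y_t)\bigr)\bigr\}+2\beta_{T_n-t}e_{n,T_n-t}(\widehat Y_t).
\end{equation*}
By Lemma~\ref{strong-log-concavity}, $-s_t=\nabla U_t$ is $m_t^{\mathrm{sc}}$-strongly monotone. Hence
\begin{equation*}
\frac{\mathrm d}{\mathrm dt}\|\Delta_t\|\le\beta_{T_n-t}\bigl(1-2m^{\mathrm{sc}}_{T_n-t}\bigr)\|\Delta_t\|+2\beta_{T_n-t}\|e_{n,T_n-t}(\widehat Y_t)\|,
\end{equation*}
which is interpreted a.e. or via $\|\Delta\|^2$ to avoid the nondifferentiability at zero. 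Grönwall's inequality then gives
\begin{equation*}
\|\Delta_{T_n}\|\le\Gamma_{0}\|\Delta_0\|+2\int_0^{T_n}\Gamma_t\,\beta_{T_n-t}\|e_{n,T_n-t}(\widehat Y_t)\|\,\mathrm dt,
\qquad
\Gamma_t=\exp\Bigl\{\int_t^{T_n}\beta_{T_n-r}\bigl(1-2m^{\mathrm{sc}}_{T_n-r}\bigr)\mathrm dr\Bigr\}.
\end{equation*}

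The key step, and the one I expect to be the main obstacle, is to show $\sup_{t}\Gamma_t\le C$ uniformly in $n$. Substitute $u=B(T_n-r)$. Then
\begin{equation*}
\int\beta(1-2m^{\mathrm{sc}})\,\mathrm dr\le\int_0^{\infty}\bigl(1-2m^{\mathrm{sc}}(u)\bigr)_+\,\mathrm du,
\qquad
m^{\mathrm{sc}}(u)=\frac{m_0}{e^{-2u}+m_0(1-e^{-2u})}.
\end{equation*}
Computing $1-2m^{\mathrm{sc}}(u)$ gives $\bigl(e^{-2u}(1-m_0)-m_0\bigr)/\bigl(e^{-2u}+m_0(1-e^{-2u})\bigr)$. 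For large $u$ this is about $-1$, so the integrand is eventually negative. The positive part is supported on a bounded $u$-interval, of length at most $\tfrac12\log\{(1-m_0)/m_0\}$ when $m_0<1/2$, and it is bounded by $1$ there. So the positive part has finite integral depending only on $m_0$. This gives $\Gamma_t\le C(m_0)$ uniformly in $t$ and $n$. Only the lower Hessian bound of Assumption~\ref{ass:target-distribution} is used here, together with $\beta_t\le\overline\beta$ from Assumption~\ref{ass:noise-schedule}.

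The last step takes $L^q$ norms conditionally on $\mathcal F_n$ and applies Minkowski's integral inequality:
\begin{equation*}
W_q(\widehat\mu_n,\mu)\le\bigl(\mathbb E[\|\Delta_{T_n}\|^q\mid\mathcal F_n]\bigr)^{1/q}\le C\,W_q\bigl(N(0,I_d),p_{T_n}\bigr)+2C\,\mathcal A_{n,q}(T_n).
\end{equation*}
The second term tends to zero in probability by Assumption~\ref{ass:score-approximation}. For the first term, couple $Z_{T_n}=\alpha_{T_n}Z_0+\sigma_{T_n}G$ with $G$. This gives
\begin{equation*}
W_q\bigl(p_{T_n},N(0,I_d)\bigr)\le\alpha_{T_n}\|Z_0\|_{L^q}+(1-\sigma_{T_n})\|G\|_{L^q}.
\end{equation*}
This tends to zero because $B(T_n)\to\infty$ and $\mathbb E_\mu\|Z\|^q<\infty$. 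Combining the two bounds proves the lemma.

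A technical point remains. The learned process is only a weak solution, and $e_{n,t}$ need not be Lipschitz away from $[0,h_0]$. The pathwise inequality above uses only the regularity of the exact drift and the integrability of $e_{n,T_n-t}(\widehat Y_t)$ along the learned path. So I would construct the exact process as a strong solution on the learned process's filtered probability space. There $\Delta_t$ is absolutely continuous, and no regularity of $e_{n,t}$ is needed.
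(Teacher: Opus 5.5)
Your proposal is correct and follows the paper's proof essentially step for step. Both arguments use a synchronous coupling from an optimal initial $W_q$ coupling and the one-sided Lipschitz bound obtained from Lemma~\ref{strong-log-concavity}. Both bound the positive expansion budget by the substitution $u=B(\cdot)$, use Minkowski's inequality to reduce to $\mathcal A_{n,q}(T_n)$, and control $W_q(p_{T_n},N(0,I_d))$ through the forward OU coupling. Your extra step of realizing the exact process as a strong solution on the learned process's filtered probability space is a useful precision that the paper leaves implicit.
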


\begin{proof}
    By Lemma~\ref{strong-log-concavity}, $p_t=e^{-U_t}$ satisfies
    \begin{equation*}
        \nabla^2 U_t(z)\succeq m_t^{\mathrm{sc}}I_d,m_t^{\mathrm{sc}}=\dfrac{m_0}{e^{-2B(t)}+m_0(1-e^{-2B(t)})}.
    \end{equation*}
    The exact reverse drift is
    \begin{equation*}
        b_{n,t}(z)=\beta_{T_n-t}(z+2s_{T_n-t}(z))=\beta_{T_n-t}(z-2\nabla U_{T_n-t}(z)).
    \end{equation*}
    Therefore, for $v=z-z'$, we have
    \begin{equation*}
        \begin{aligned}
            \langle v,b_{n,t}(z)-b_{n,t}(z')\rangle &=\beta_{T_n-t}(\langle v, z-z' \rangle-2\langle v,\nabla U_{T_n-t}(z)-\nabla U_{T_n-t}(z')\rangle)\\ &\leq \beta_{T_n-t}(1-2m_{T_n-t}^{\mathrm{sc}})\|v\|^2.
        \end{aligned}
    \end{equation*}
    Hence the positive expansion budget of the exact reverse drift is bounded:
    \begin{equation*}
        \Lambda_0:=\sup_{T>0}\int_0^T \beta_u(1-2m_u^{\mathrm{sc}})^+\mathrm du<\infty,
    \end{equation*}
    Indeed, with the change of variables $r=B(u)$ and $\mathrm dr=\beta_u\,\mathrm du$,
    \begin{equation*}
        \begin{aligned}
            \int_0^T \beta_u(1-2m_u^{\mathrm{sc}})^+\mathrm du=\int_0^{B(T)}(1-2\dfrac{m_0}{e^{-2r}+m_0(1-e^{-2r})})^+\mathrm dr,
        \end{aligned}
    \end{equation*}
    The integrand is continuous and vanishes for all sufficiently large $r$ because the displayed curvature converges to one. Hence $\Lambda_0<\infty$.

    Let $Y_t$ be the exact reverse process initialized from $p_{T_n}$:
    \begin{equation*}
        \mathrm dY_t = b_{n,t}(Y_t)\,\mathrm dt + \sqrt{2\beta_{T_n-t}}\,\mathrm dW_t, \qquad Y_0\sim p_{T_n}.
    \end{equation*}
    Couple it synchronously with the learned reverse process initialized from $\widehat Y_0\sim N(0,I_d)$, and set $\Delta_t=\widehat Y_t-Y_t$. Then
    \begin{equation*}
        \mathrm d\Delta_t=(b_{n,t}(\widehat Y_t)-b_{n,t}(Y_t))\mathrm dt+2\beta_{T_n-t}(\widehat s_{n,T_n-t}(\widehat Y_t)-s_{T_n-t}(\widehat Y_t))\mathrm dt,
    \end{equation*}
    Define
    \begin{equation*}
        a_t = \beta_{T_n-t}(1-2m_{T_n-t}^{\mathrm{sc}})^+, \qquad r_t = 2\beta_{T_n-t} \|\widehat s_{n,T_n-t}(\widehat Y_t) -s_{T_n-t}(\widehat Y_t)\|_2.
    \end{equation*}
    Hence
    \begin{equation*}
        \|\Delta_t\|\leq \|\Delta_0\|+\int_0^t a_s \|\Delta_s\|\mathrm ds+\int_0^t r_s\mathrm ds
    \end{equation*}
    for almost every path, conditionally on $\mathcal F_n$. Gronwall's inequality and $\int_0^{T_n}a_s\,\mathrm ds\leq\Lambda_0$ therefore give
    \begin{equation*}
        \|\Delta_{T_n}\|\leq e^{\Lambda_0}\|\Delta_0\|+e^{\Lambda_0}\int_0^{T_n}r_t \mathrm dt.
    \end{equation*}
    Indeed, $t\mapsto\Delta_t$ is absolutely continuous. At times for which $\Delta_t\neq0$,
    \begin{equation*}
        \begin{aligned}
            \frac{\mathrm d}{\mathrm dt}\|\Delta_t(\omega)\| &=\frac{\langle\Delta_t,\dot\Delta_t\rangle}{\|\Delta_t\|}\\ &=\frac{\langle\Delta_t,b_{n,t}(\widehat Y_t)-b_{n,t}(Y_t)\rangle} {\|\Delta_t\|}\\ &\quad+ \frac{2\beta_{T_n-t}\langle\Delta_t, \widehat s_{n,T_n-t}(\widehat Y_t)-s_{T_n-t}(\widehat Y_t)\rangle} {\|\Delta_t\|}\\ &\leq a_t\|\Delta_t\|+r_t.
        \end{aligned}
    \end{equation*}
    The same upper bound holds almost everywhere on the zero set by the standard chain rule for the norm. Integrating proves the claimed pathwise inequality.

    Taking conditional $L^q$ norms and applying Minkowski's inequality gives
    \begin{equation*}
        \begin{aligned}
            (\mathbb E[\|\Delta_{T_n}\|^q\mid \mathcal F_n])^{1/q}&\leq e^{\Lambda_0}\|\Delta_0\|_{L^q\mid \mathcal F_n}+e^{\Lambda_0}\|\int_0^{T_n}r_t \mathrm dt\|_{L^q\mid \mathcal F_n}\\&\leq e^{\Lambda_0}\|\Delta_0\|_{L^q\mid \mathcal F_n}+e^{\Lambda_0}\int_0^{T_n}\|r_t\|_{L^q\mid \mathcal F_n}\mathrm dt
        \end{aligned}
    \end{equation*}
    The second term is
    \begin{equation*}
        \|r_t\|_{L^q\mid \mathcal F_n}=2\beta_{T_n-t}(\mathbb E[\|\widehat s_{n,T_n-t}(\widehat Y_t)-s_{T_n-t}(\widehat Y_t)\|^q\mid \mathcal F_n])^{1/q},
    \end{equation*}
    so we have
    \begin{equation*}
        \|\int_0^{T_n}r_t \mathrm dt\|_{L^q\mid \mathcal F_n}\leq 2\int_0^{T_n}\beta_{T_n-t}(\mathbb E[\|\widehat s_{n,T_n-t}(\widehat Y_t)-s_{T_n-t}(\widehat Y_t)\|^q\mid \mathcal F_n])^{1/q}\mathrm dt=2\mathcal A_{n,q}(T_n).
    \end{equation*}
    Choose $(\widehat Y_0,Y_0)$ to be an optimal $W_q$ coupling of $N(0,I_d)$ and $p_{T_n}$; such a coupling exists by \citep[Theorem~4.1]{villani2009optimal}. Then
    \begin{equation*}
        \mathbb E[\|\widehat Y_0-Y_0\|^q\mid \mathcal F_n]^{1/q}=W_q(N(0,I_d),p_{T_n}).
    \end{equation*}
    Therefore,
    \begin{equation*}
        W_q(\widehat \mu_n,\mu)\leq e^{\Lambda_0}W_q(N(0,I_d),p_{T_n})+2e^{\Lambda_0}\mathcal A_{n,q}(T_n).
    \end{equation*}
    Finally, use the coupling
    \begin{equation*}
        Z_t = e^{-B(t)}Z_0 + \sqrt{1-e^{-2B(t)}}G, \qquad Z_0\sim\mu, \quad G\sim N(0,I_d), \quad Z_0\perp G.
    \end{equation*}
    It yields
    \begin{equation*}
        \begin{aligned}
            (\mathbb E\|Z_t-G\|^q)^{1/q}&=(\mathbb E\|e^{-B(t)}Z_0+(\sqrt{1-e^{-2B(t)}}-1)G\|^q)^{1/q}\\ &\leq e^{-B(t)}(\mathbb E\|Z_0\|^q)^{1/q}+|1-\sqrt{1-e^{-2B(t)}}|(\mathbb E\|G\|^q)^{1/q}
        \end{aligned}
    \end{equation*}
    Since
    \begin{equation*}
        |1-\sqrt{1-e^{-2B(t)}}|\leq e^{-2B(t)}\leq e^{-B(t)}
    \end{equation*}
    we obtain
    \begin{equation*}
        W_q(p_t,N(0,I_d))\leq e^{-B(t)}(\mathbb E\|Z_0\|^q)^{1/q}+e^{-B(t)}(\mathbb E\|G\|^q)^{1/q}\leq Ce^{-B(t)}(1+(\mathbb E\|Z\|^q)^{1/q})\to 0
    \end{equation*}
    Combining this bound at $t=T_n$ with $\mathcal A_{n,q}(T_n)\to_p 0$ proves $W_q(\widehat\mu_n,\mu)\to_p 0$.
\end{proof}

\begin{mylemma}
    Under Assumptions~\ref{ass:target-distribution}, \ref{ass:noise-schedule}, and \ref{ass:score-approximation}, let $p_t=e^{-U_t}$ be the density of the forward OU process and write
    \begin{equation*}
        \widehat b_{n,T_n-u}(z) = \beta_u\{z-2\nabla U_u(z)+2e_{n,u}(z)\}, \qquad 0\leq u\leq h_0.
    \end{equation*}
    Then the following statements hold.
    \begin{enumerate}
        \item With $\alpha_t=e^{-B(t)}$, $\sigma_t^2=1-\alpha_t^2$, and $\overline L=\max\{L_0,1\}$,
        \begin{equation*}
            \sup_{z\in\mathbb R^d} \|\nabla^2U_t(z)\|_{\op} \leq \overline L, \qquad t\geq0.
        \end{equation*}
        \item Define $\mathfrak C_n:=\int_{T_n-h_0}^{T_n} \|(\nabla\cdot\widehat b_{n,r})_-\|_\infty\,\mathrm dr$. Then
        \begin{equation*}
            \mathfrak C_n \leq d(2\overline L-1)_+ \int_0^{h_0}\beta_u\,\mathrm du + 2d\int_0^{h_0}\beta_u\Lip(e_{n,u})\,\mathrm du.
        \end{equation*}
        In particular, $\mathfrak C_n=O_p(1)$;
        \item With probability tending to one, there exist nonnegative random functions $L_n,G_n\in L^1([0,h_0])$ such that, for almost every $u\in[0,h_0]$ and all $z,z'\in\mathbb R^d$,
\begin{equation*}
    \|\widehat b_{n,T_n-u}(z) -\widehat b_{n,T_n-u}(z')\| \leq L_n(u)\|z-z'\|,
\end{equation*}
and
\begin{equation*}
    \|\widehat b_{n,T_n-u}(z)\| \leq G_n(u)(1+\|z\|).
\end{equation*}
Moreover,
\begin{equation*}
    \int_0^{h_0}L_n(u)\,\mathrm du=O_p(1), \qquad \int_0^{h_0}G_n(u)\,\mathrm du=O_p(1).
\end{equation*}
In particular, for every $R>0$,
\begin{equation*}
    \int_{T_n-h_0}^{T_n} \|\widehat b_{n,r}\|_{W^{1,\infty}(B_R)}\,\mathrm dr = O_p(1),
\end{equation*}
and, with probability tending to one,
\begin{equation*}
    \widehat b_n \in L^1_{\mathrm{loc}} \bigl( [T_n-h_0,T_n]; W^{1,\infty}_{\mathrm{loc}}(\mathbb R^d) \bigr).
\end{equation*}
    \end{enumerate}
\end{mylemma}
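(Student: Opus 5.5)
For part~1, I will use the standard representation of the OU score through the posterior. Since $Z_t=\alpha_tZ_0+\sigma_tG$, Tweedie's formula gives
\begin{equation*}
\nabla^2U_t(z)=\frac{1}{\sigma_t^2}I_d-\frac{\alpha_t^2}{\sigma_t^4}\Cov(Z_0\mid Z_t=z).
\end{equation*}
The posterior of $Z_0$ given $Z_t=z$ has potential $U_0(x)+\|z-\alpha_tx\|^2/(2\sigma_t^2)$. Its Hessian lies between $(m_0+\alpha_t^2/\sigma_t^2)I_d$ and $(L_0+\alpha_t^2/\sigma_t^2)I_d$.

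Brascamp--Lieb then bounds the posterior covariance above by $(m_0+\alpha_t^2/\sigma_t^2)^{-1}I_d$. This yields the lower bound already recorded in Lemma~\ref{strong-log-concavity}, and that bound is positive. For the upper bound I will not use the covariance formula. Instead I will write $p_t$ as the convolution of the law of $\alpha_tZ_0$, whose potential has Hessian at most $L_0/\alpha_t^2$, with $N(0,\sigma_t^2I_d)$, whose potential has Hessian $\sigma_t^{-2}$. Semi-log-convexity is preserved under convolution with the reciprocal rule (the dual of the Brascamp--Lieb/Prékopa statement, cf. Lemma~\ref{lem:aux-strong-log-concavity-preservation}). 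This gives
\begin{equation*}
\nabla^2U_t\preceq\Bigl(\frac{\alpha_t^2}{L_0}+\sigma_t^2\Bigr)^{-1}I_d .
\end{equation*}
Since $\alpha_t^2+\sigma_t^2=1$, the right side is a harmonic-type average of $L_0$ and $1$, so it is at most $\max\{L_0,1\}=\overline L$. Combined with positivity of the lower bound, this gives $\|\nabla^2U_t\|_{\op}\le\overline L$.

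For part~2, I will compute the divergence directly:
\begin{equation*}
\nabla\cdot\widehat b_{n,T_n-u}=\beta_u\{d-2\Delta U_u+2\nabla\cdot e_{n,u}\}.
\end{equation*}
Part~1 gives $\Delta U_u\le d\overline L$. For a.e.\ $u\le h_0$, $e_{n,u}$ is Lipschitz, so by Rademacher's theorem $|\nabla\cdot e_{n,u}|\le d\Lip(e_{n,u})$ a.e. Hence $(\nabla\cdot\widehat b)_-\le\beta_u\{d(2\overline L-1)_++2d\Lip(e_{n,u})\}$. Substituting $r=T_n-u$ and integrating gives the stated bound. $O_p(1)$ then follows from $\beta_u\le\overline\beta$ and the essential-sup Lipschitz assumption in Assumption~\ref{ass:score-approximation}.

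For part~3, I take
\begin{equation*}
L_n(u)=\beta_u\{1+2\overline L+2\Lip(e_{n,u})\},
\end{equation*}
which is integrable with $O_p(1)$ integral on the event where the ess-sup is finite. For linear growth, I note $\|\nabla U_u(z)\|\le\|\nabla U_u(0)\|+\overline L\|z\|$ and $\|e_{n,u}(z)\|\le\|e_{n,u}(0)\|+\Lip(e_{n,u})\|z\|$.

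The main obstacle is controlling $\|\nabla U_u(0)\|$ and $\|e_{n,u}(0)\|$.
\begin{itemize}
\item For $\nabla U_u(0)$, I will use $\nabla U_u(0)=-\frac{\alpha_u}{\sigma_u^2}\mathbb E[Z_0\mid Z_u=0]$ for $u>0$, together with continuity as $u\to0$ (since $U_0\in C^2$). A cleaner route is to note that $\nabla U_u(m_u)=0$ at the mode $m_u$, and that the mode is bounded because strong log-concavity gives $\|m_u-\mathbb E Z_u\|\le\sqrt{d/m^{\mathrm{sc}}_u}$ with $\|\mathbb EZ_u\|\le\|\mathbb E Z_0\|$. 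This makes $\sup_{u\le h_0}\|\nabla U_u(0)\|$ a deterministic constant.
\item For $e_{n,u}(0)$, I will use the $L^q$ score-error integral. On $[0,h_0]$, the reverse time $t=T_n-u$ corresponds to $\widehat Y_t$, and $\|e(0)\|\le\|e(\widehat Y_t)\|+\Lip(e)\|\widehat Y_t\|$. Taking $L^q$ norms gives $\|e_{n,u}(0)\|\le(\mathbb E\|e_{n,u}(\widehat Y_{T_n-u})\|^q)^{1/q}+\Lip(e_{n,u})(\mathbb E\|\widehat Y_{T_n-u}\|^q)^{1/q}$. The first term, weighted by $\beta_u$ and integrated, is at most $\mathcal A_{n,q}(T_n)=o_p(1)$. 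The moment term is $O_p(1)$ by the synchronous coupling in Lemma~\ref{fix_wq}, applied at intermediate times, together with the moment bounds for the exact marginals $p_u$.
\end{itemize}

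Setting $G_n(u)=L_n(u)+\beta_u\{2\|\nabla U_u(0)\|+2\|e_{n,u}(0)\|\}$ then gives $\int_0^{h_0}G_n=O_p(1)$. Finally, on $B_R$ the $W^{1,\infty}$ norm of $\widehat b_{n,r}$ is at most $(1+R)G_n+L_n$, so its integral over $[T_n-h_0,T_n]$ is $O_p(1)$, and the local integrability statement follows.
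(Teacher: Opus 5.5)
Your proposal is correct and is essentially the paper's proof. The divergence bound in part~2, the envelopes $L_n,G_n$ in part~3, and the control of $\int_0^{h_0}\beta_u\|e_{n,u}(0)\|\,\mathrm du$ all match. That last step uses $\mathcal A_{n,q}(T_n)$ together with the intermediate-time coupling moments from Lemma~\ref{fix_wq}, exactly as the paper does.

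The only real difference is the upper Hessian bound in part~1. The paper obtains it directly from Tweedie's formula by applying Cram\'er--Rao to the posterior. You already wrote down that posterior's Hessian upper bound $(L_0+\alpha_t^2/\sigma_t^2)I_d$. Cram\'er--Rao then gives $\Cov(Z_0\mid Z_t=z)\succeq(L_0+\alpha_t^2/\sigma_t^2)^{-1}I_d$, and hence $\nabla^2U_t\preceq(\sigma_t^2+\alpha_t^2/L_0)^{-1}I_d$.

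Your reciprocal rule for semi-log-convexity under convolution is true. However, Lemma~\ref{lem:aux-strong-log-concavity-preservation} only covers the lower-curvature (strong log-concavity) direction, so it does not supply that rule. The natural proof of the upper-curvature rule is exactly the Cram\'er--Rao step above, so the detour buys nothing and you should just use that step directly.

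On the other hand, your mode argument for part~3 is an improvement. Using $\nabla U_u(m_u)=0$ and $\|m_u-\mathbb EZ_u\|\le\sqrt{d/m_u^{\mathrm{sc}}}$ gives an explicit uniform bound on $\|\nabla U_u(0)\|$. The paper simply asserts $C_U<\infty$ on $[0,h_0]$, so your version is the cleaner justification.
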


\begin{proof}
     Write $d=p+1$, which is fixed throughout the proof. 

    (1) Let $p_t=e^{-U_t}$ be the law of the forward OU process and set $\alpha_t=e^{-B(t)}$ and $\sigma_t^2=1-\alpha_t^2$. Then $Z_t=\alpha_tZ_0+\sigma_tG$, where $G\sim N(0,I_d)$. For $t>0$, Tweedie's formula gives
    \begin{equation*}
        \nabla^2 U_t(z)=\dfrac 1{\sigma_t^2}I_d-\dfrac{\alpha_t^2}{\sigma_t^4}\Cov(Z_0\mid Z_t=z).
    \end{equation*}
    Conditionally on $Z_t=z$, the law of $Z_0$ has potential
    \begin{equation*}
        V_{t,z}(x)=U_0(x)+\dfrac{\|z-\alpha_t x\|^2}{2\sigma_t^2}
    \end{equation*}
    Indeed, we have
    \begin{equation*}
        p(x|z)\propto p(z|x)p_0(x)\propto \exp(-\dfrac{\|z-\alpha_t x\|^2}{2\sigma_t^2})e^{-U_0(x)}\propto \exp(-(U_0(x)+\dfrac{\|z-\alpha_t x\|^2}{2\sigma_t^2})).
    \end{equation*}
    Therefore,
    \begin{equation*}
        \nabla^2 V_{t,z}(x)\preceq (L_0+\dfrac{\alpha_t^2}{\sigma_t^2})I_d
    \end{equation*}
    By the Cram\'er--Rao inequality,
    \begin{equation*}
        \Cov(Z_0\mid Z_t=z)\succeq (\mathbb E[\nabla^2 V_{t,z}(Z_0)\mid Z_t=z])^{-1}.
    \end{equation*}
    Since
    \begin{equation*}
        \mathbb E[\nabla^2 V_{t,z}(Z_0)\mid Z_t=z]\preceq (L_0+\dfrac{\alpha_t^2}{\sigma_t^2})I_d,
    \end{equation*}
    we obtain that
    \begin{equation*}
        \Cov(Z_0\mid Z_t=z)\succeq (L_0+\dfrac{\alpha_t^2}{\sigma_t^2})^{-1}I_d.
    \end{equation*}
    Thus $\nabla^2 U_t\preceq L_t I_d$, where $L_t:=(\sigma_t^2+\alpha_t^2/L_0)^{-1}$. Together with the bound at $t=0$, this gives
    \begin{equation*}
        \sup_{t\geq 0}\sup_{z\in \mathbb R^d}\|\nabla^2 U_t(z)\|_{\op}\leq \overline L:=\max \{L_0,1\}.
    \end{equation*}
    By Lemma~\ref{strong-log-concavity}, $\nabla ^2U_t(z)\succeq m_t^{\rm sc}I_d\succeq 0$. Combining this lower bound with $\nabla^2 U_t(z)\preceq \overline L I_d$ proves the claim.

    (2) The learned reverse drift is $\widehat b_{n,r}(z)=\beta_{T_n-r}(z+2\widehat s_{n,T_n-r}(z))$. For $r\in[T_n-h_0,T_n]$, put $u=T_n-r\in [0,h_0]$. Since $\widehat s_{n,u}=s_u+e_{n,u}=-\nabla U_u+e_{n,u}$, we have, at every differentiability point of $e_{n,u}$,
    \begin{equation*}
        \nabla \cdot \widehat b_{n,T_n-u}(z)=\beta_u(d-2\Delta U_u(z)+2\nabla \cdot e_{n,u}(z)).
    \end{equation*}
    By part~(1), $\Delta U_u(z)\leq d\overline L$. Moreover, Rademacher's theorem gives
    \begin{equation*}
        \|\nabla e_{n,u}(z)\|_{\op}\leq \Lip(e_{n,u}), \qquad \text{a.e.}
    \end{equation*}
    Therefore,
    \begin{equation*}
        \|(\nabla \cdot \widehat b_{n,T_n-u})_-\|_{\infty}\leq \beta_u (d(2\overline L-1)_++2d\Lip(e_{n,u})).
    \end{equation*}
    Consequently,
    \begin{equation*}
        \mathfrak C_n:=\int_{T_n-h_0}^{T_n}\|(\nabla \cdot \widehat b_{n,r})_-\|_{\infty}\mathrm dr\leq d(2\overline L-1)_+\int_0^{h_0}\beta_u \mathrm du+2d\int_0^{h_0}\beta_u \Lip(e_{n,u})\mathrm du.
    \end{equation*}

    (3) For $u\in[0,h_0]$ and $z,z'\in\mathbb R^d$, we have
\begin{align*}
    &\|\widehat b_{n,T_n-u}(z) -\widehat b_{n,T_n-u}(z')\|\leq \beta_u \left[ \|z-z'\| + 2\|\nabla U_u(z)-\nabla U_u(z')\| + 2\|e_{n,u}(z)-e_{n,u}(z')\| \right].
\end{align*}
By the global Hessian bound from part~(1),
\begin{equation*}
    \|\nabla U_u(z)-\nabla U_u(z')\| \leq \overline L\|z-z'\|.
\end{equation*}
Since $e_{n,u}$ is globally Lipschitz,
\begin{equation*}
    \|e_{n,u}(z)-e_{n,u}(z')\| \leq \Lip(e_{n,u})\|z-z'\|.
\end{equation*}
Therefore, $\|\widehat b_{n,T_n-u}(z) -\widehat b_{n,T_n-u}(z')\| \leq L_n(u)\|z-z'\|$, where $L_n(u) = C\beta_u \left\{ 1+\Lip(e_{n,u}) \right\}$.

Below, we will bound the growth rate of $\widehat b$. Note that
\begin{align*}
    \|\widehat b_{n,T_n-u}(z)\| &\leq \beta_u \left\{ \|z\| + 2\|\nabla U_u(z)\| + 2\|e_{n,u}(z)\| \right\}.
\end{align*}
By the global Hessian bound, $\|\nabla U_u(z)\|\leq\|\nabla U_u(0)\|+\overline L\|z\|$.
On the compact interval $[0,h_0]$, set $C_U=\sup_{0\leq u\leq h_0}\|\nabla U_u(0)\|<\infty$. Moreover,
\begin{equation*}
    \|e_{n,u}(z)\| \leq \|e_{n,u}(0)\| + \Lip(e_{n,u})\|z\|.
\end{equation*}
Hence
\begin{equation*}
    \|\widehat b_{n,T_n-u}(z)\| \leq G_n(u)(1+\|z\|),
\end{equation*}
where
\begin{equation*}
    G_n(u) = C\beta_u \left\{ 1+C_U +\|e_{n,u}(0)\| +\Lip(e_{n,u}) \right\}.
\end{equation*}

It remains to verify that $L_n,G_n\in L^1([0,h_0])$ with the stated stochastic bounds. By Assumption~\ref{ass:score-approximation},
\begin{equation*}
    \sup_{0\leq u\leq h_0} \Lip(e_{n,u}) = O_p(1),
\end{equation*}
and by Assumption~\ref{ass:noise-schedule},
\begin{equation*}
    \int_0^{h_0}\beta_u\,\mathrm du<\infty.
\end{equation*}
Therefore,
\begin{equation*}
    \int_0^{h_0}L_n(u)\,\mathrm du = O_p(1).
\end{equation*}

To control the remaining term in $G_n$, by Lipschitz continuity,
\begin{equation*}
    \|e_{n,u}(0)\| \leq \|e_{n,u}(x)\| + \Lip(e_{n,u})\|x\|
\end{equation*}
for every $x\in\mathbb R^d$. Taking $x=\widehat Y_{T_n-u}$, conditioning on $\mathcal F_n$, and integrating, we obtain
\begin{align*}
    \int_0^{h_0} \beta_u\|e_{n,u}(0)\|\,\mathrm du &\leq \int_0^{h_0} \beta_u \left( \mathbb E \left[ \|e_{n,u}(\widehat Y_{T_n-u})\|^q \mid\mathcal F_n \right] \right)^{1/q} \mathrm du\\ &\quad+ \left( \sup_{0\leq u\leq h_0} \Lip(e_{n,u}) \right) \int_0^{h_0} \beta_u \mathbb E \left[ \|\widehat Y_{T_n-u}\| \mid\mathcal F_n \right] \mathrm du.
\end{align*}
The first term is bounded by $\mathcal A_{n,q}(T_n)$ and hence is $o_p(1)$. By the coupling estimate from Lemma~\ref{fix_wq},
\begin{equation*}
    \sup_{0\leq u\leq h_0} \mathbb E \left[ \|\widehat Y_{T_n-u}\| \mid\mathcal F_n \right] = O_p(1).
\end{equation*}
Consequently,
\begin{equation*}
    \int_0^{h_0} \beta_u \mathbb E \left[ \|\widehat Y_{T_n-u}\| \mid\mathcal F_n \right] \mathrm du = O_p(1),
\end{equation*}
and therefore
\begin{equation*}
    \int_0^{h_0} \beta_u\|e_{n,u}(0)\|\,\mathrm du = O_p(1).
\end{equation*}
It follows that
\begin{equation*}
    \int_0^{h_0}G_n(u)\,\mathrm du = O_p(1).
\end{equation*}

Finally, for every $R>0$,
\begin{equation*}
    \|\widehat b_{n,T_n-u}\|_{L^\infty(B_R)} \leq G_n(u)(1+R),
\end{equation*}
while Rademacher's theorem and the preceding Lipschitz estimate give
\begin{equation*}
    \|\nabla\widehat b_{n,T_n-u}\|_{L^\infty(B_R)} \leq L_n(u)
\end{equation*}
for almost every $u$. Hence
\begin{equation*}
    \int_0^{h_0} \|\widehat b_{n,T_n-u}\|_{W^{1,\infty}(B_R)} \,\mathrm du \leq C_R \int_0^{h_0} \{L_n(u)+G_n(u)\}\,\mathrm du = O_p(1).
\end{equation*}
Changing variables $r=T_n-u$ yields the desired conclusion.
\end{proof}

\begin{mylemma}
    Under Assumptions~\ref{ass:target-distribution}, \ref{ass:noise-schedule}, and~\ref{ass:score-approximation}, let $\rho_{n,r}$ denote the conditional density of $\widehat Y_r$ on $[T_n-h_0,T_n]$. Set $r_0=T_n-h_0$, $a_r=\beta_{T_n-r}$, and $c_{n,r}=\|(\nabla \cdot \widehat b_{n,r})_-\|_{\infty}$. Then, for every $\ell\geq2$ and Lebesgue-a.e.\ pair $(a,b)$ satisfying $r_0<a<b<T_n$,
    \begin{equation*}
        \|\rho_{n,b}\|_{\ell}^{\ell}+\frac {4(\ell -1)}{\ell}\int_a^b a_r \|\nabla \rho_{n,r}^{\ell/2}\|_2^2\mathrm dr\leq \|\rho_{n,a}\|_{\ell}^{\ell}+(\ell -1)\int_a^b c_{n,r}\|\rho_{n,r}\|_{\ell}^{\ell}\mathrm dr.
    \end{equation*}
\end{mylemma}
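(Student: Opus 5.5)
The plan is the standard $L^\ell$ energy estimate for the Fokker--Planck equation of the learned reverse process, made rigorous by regularization. Formally, on $[r_0,T_n]$ the density $\rho_{n,r}$ solves
\begin{equation*}
    \partial_r\rho_{n,r}=a_r\Delta\rho_{n,r}-\nabla\cdot(\widehat b_{n,r}\rho_{n,r}),
\end{equation*}
since the diffusion coefficient of the reverse SDE is $\sqrt{2a_r}$. We test against $\ell\rho^{\ell-1}$. The diffusion term gives
\begin{equation*}
    -\ell(\ell-1)a_r\int\rho^{\ell-2}\|\nabla\rho\|^2=-\frac{4(\ell-1)}{\ell}a_r\|\nabla\rho^{\ell/2}\|_2^2.
\end{equation*}
For the drift term, integrate by parts twice:
\begin{equation*}
    \ell(\ell-1)\int\rho^{\ell-1}\widehat b\cdot\nabla\rho=(\ell-1)\int\widehat b\cdot\nabla(\rho^\ell)=-(\ell-1)\int(\nabla\cdot\widehat b)\rho^\ell\leq(\ell-1)c_{n,r}\|\rho\|_\ell^\ell.
\end{equation*}
Integrating from $a$ to $b$ yields the stated inequality.

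To justify this, we first work on the event of probability tending to one from the preceding lemma, part (3). On that event $\widehat b_n\in L^1_{\mathrm{loc}}([r_0,T_n];W^{1,\infty}_{\mathrm{loc}})$, with linear growth $G_n(1+\|z\|)$ and Lipschitz envelope $L_n$, both integrable in time. Part (2) of the same lemma makes $\int c_{n,r}\,\mathrm dr$ finite. The reverse SDE is then strongly well posed on $[r_0,T_n]$, and $\rho_{n,r}$ is a distributional solution of the Fokker--Planck equation in the class of probability densities.

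Next we mollify. Set $\widehat b^\delta=\widehat b*\varphi_\delta$, which keeps the same Lipschitz and growth envelopes, and satisfies $(\nabla\cdot\widehat b^\delta)_-\leq c_{n,r}$ because the negative part of a mollified function is bounded by the mollified negative part. Solve the regularized equation from a smoothed, truncated initial datum $\rho_{n,a}^\delta$, or equivalently use $\rho^\delta=\rho*\varphi_\delta$ and control the DiPerna--Lions commutator $[\widehat b\cdot\nabla,\varphi_\delta*]$, which tends to $0$ in $L^1_{\mathrm{loc}}$ because $\widehat b\in W^{1,\infty}_{\mathrm{loc}}$. For smooth, rapidly decaying solutions the formal computation holds. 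The integrations by parts must be made on the whole space, so we insert spatial cutoffs $\chi_R$. The boundary terms $\int\rho^\ell\,\widehat b\cdot\nabla\chi_R$ and $\int\rho^{\ell-1}\nabla\rho\cdot\nabla\chi_R$ are controlled by linear growth and moment bounds, and vanish as $R\to\infty$.

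To pass to the limit $\delta\to0$, note that Gronwall applied to the regularized inequality gives $\sup_r\|\rho^\delta_r\|_\ell^\ell\leq e^{(\ell-1)\int c}\|\rho^\delta_a\|_\ell^\ell$, together with a uniform bound on $\int a_r\|\nabla(\rho^\delta)^{\ell/2}\|^2$. Weak lower semicontinuity of the norms on the left-hand side and strong convergence on the right-hand side give the inequality for Lebesgue-a.e.\ $(a,b)$. The restriction to a.e.\ pairs reflects that we only know $\rho_{n,\cdot}\in L^\ell$ at a.e.\ time. Finiteness of $\|\rho_{n,a}\|_\ell$ for a.e.\ $a>r_0$ requires separate justification; otherwise the inequality is trivially true.

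The main obstacle is the commutator and uniqueness step: showing that the limit of the regularized solutions coincides with the SDE marginal $\rho_{n,r}$, and that the $L^\ell$ bound transfers to it. I would handle this through uniqueness of Fokker--Planck solutions for Lipschitz drifts, using the superposition principle in Lemma~\ref{lem:trevisan-superposition} together with well-posedness of the martingale problem. This identifies the weak limit with $\rho_{n,r}$.
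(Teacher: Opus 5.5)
\textbf{There is a gap in the limit passage.} In your main route you regularize the drift to $\widehat b^\delta$, smooth and truncate the datum at time $a$, and identify the limit by uniqueness. For $r>a$ this only gives $\rho^\delta_r\to\rho_{n,r}$ narrowly, hence weakly in $L^\ell$ given the uniform bound.

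\begin{itemize}
\item On the right-hand side, $\|\cdot\|_\ell^\ell$ is then merely lower semicontinuous, which is the wrong direction for $(\ell-1)\int_a^b c_{n,r}\|\rho^\delta_r\|_\ell^\ell\,\mathrm dr$. ``Strong convergence on the right-hand side'' holds only for the initial datum at time $a$.
\item On the left-hand side, lower semicontinuity of $\int_a^b a_r\|\nabla(\rho^\delta_r)^{\ell/2}\|_2^2\,\mathrm dr$ requires $(\rho^\delta)^{\ell/2}\rightharpoonup\rho_n^{\ell/2}$. Weak convergence of $\rho^\delta$ does not identify the weak limit of a nonlinear function. You need almost-everywhere convergence, for example $L^1$ convergence of the marginals via Girsanov and Pinsker (here $\|\widehat b^\delta_r-\widehat b_r\|_\infty\le L_n(r)\delta$), or an Aubin--Lions argument.
\end{itemize}

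The clean repair is to pass to the limit in the integrating-factor form, with $C_n(r)=\int_{r_0}^r c_{n,s}\,\mathrm ds$:
\[
e^{-(\ell-1)C_n(b)}\|\rho^\delta_b\|_\ell^\ell+\frac{4(\ell-1)}{\ell}\int_a^b e^{-(\ell-1)C_n(r)}a_r\|\nabla(\rho^\delta_r)^{\ell/2}\|_2^2\,\mathrm dr\leq e^{-(\ell-1)C_n(a)}\|\rho^\delta_a\|_\ell^\ell .
\]
Here only the initial datum appears on the right. After the limit, the product rule for the non-increasing function $r\mapsto e^{-(\ell-1)C_n(r)}\|\rho_{n,r}\|_\ell^\ell$ against the absolutely continuous weight $e^{(\ell-1)C_n}$ returns the additive form in the lemma. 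With these two additions your route works.

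\textbf{Your alternative, mollifying $\rho$ and controlling the DiPerna--Lions commutator, is the paper's route, but it lacks the device that makes it work.} The commutator tends to zero only in $L^1_{\mathrm{loc}}$, because a priori $\rho_{n,r}$ is only known to be integrable; its $L^\ell$ integrability is what is being proved. Pairing it with the unbounded multiplier $\ell(\rho*\varphi_\delta)^{\ell-1}$ gives no control.

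The paper instead tests with $H'(\rho*\varphi_\delta)\chi_R$, where $H$ is a smooth convex approximation of the truncated entropy $\Phi_{k,\ell}$: equal to $s^\ell$ on $[0,k]$ and linear beyond. This has three consequences.
\begin{itemize}
\item Since $H'$ is bounded, $L^1_{\mathrm{loc}}$ convergence of the commutator suffices.
\item Through $\Phi_{k,\ell}(s)\le\ell k^{\ell-1}s$, the cutoff terms need only $L^1$ and tightness information on $\widehat Y_r$, together with the score-error moment.
\item The regularization, cutoff, and truncation are removed in turn by lower semicontinuity, Fatou, and monotone convergence.
\end{itemize}

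Compared with your route, this renormalizes the actual density directly. No approximating equations, compactness, or superposition-based uniqueness are needed. That is also why the paper can later apply the same inequality verbatim to any nonnegative finite-mass solution, such as the $\rho^{(N)}$ with truncated initial data. Your repaired route would cover those too, since they are SDE marginals, but it does more work.
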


\begin{proof}
    Conditionally on $\mathcal F_n$, let $\rho_{n,r}$ denote the density of $\widehat Y_r$. On $[T_n-h_0,T_n]$, it satisfies
    \begin{equation*}
        \partial_r \rho_{n,r}=\beta_{T_n-r}\Delta \rho_{n,r}-\nabla \cdot(\widehat b_{n,r}\rho_{n,r}).
    \end{equation*}
    Formally, multiplying the equation by $\ell\rho_{n,r}^{\ell-1}$ gives the desired inequality:
    \begin{equation*}
        \dfrac{\mathrm d}{\mathrm dr}\|\rho_{n,r}\|_{\ell}^{\ell}+\dfrac{4(\ell -1)}{\ell}\beta_{T_n-r}\|\nabla \rho_{n,r}^{\ell/2}\|_{2}^2\leq (\ell -1)\|(\nabla \cdot \widehat b_{n,r})_-\|_{\infty}\|\rho_{n,r}\|_{\ell}^{\ell}.
    \end{equation*}
    We now justify this rigorously by a truncation and renormalization argument. Denote $r_0=T_n-h_0,a_r=\beta_{T_n-r},c_{n,r}=\|(\nabla \cdot \widehat b_{n,r})_-\|_{\infty}$. Fix $\ell \geq 2$. We first establish the truncated energy inequality. For $k>0$, define
    \begin{equation*}
        \Phi_{k,\ell}'(s)=\ell(s\wedge k)^{\ell -1},\Phi_{k,\ell}(0)=0\Rightarrow \Phi_{k,\ell}(s)=
        \begin{cases}
            s^{\ell}&0\leq s\leq k\\\ell k^{\ell -1}s-(\ell -1)k^{\ell}&s>k
        \end{cases}
        ,
    \end{equation*}
    In particular, $0\leq \Phi_{k,\ell}(s)\leq \ell k^{\ell -1}s$. Writing $\Psi_{k,\ell}(s):=s\Phi_{k,\ell}'(s)-\Phi_{k,\ell}(s)$, we have
    \begin{equation*}
        \Psi_{k,\ell}(s)=(\ell -1)(s\wedge k)^{\ell}.
    \end{equation*}
    Moreover, $\Phi_{k,\ell}\in C^{1,1}([0,\infty))$, and 
    \begin{equation*}
        \Phi_{k,\ell}''(s)=\ell(\ell -1)s^{\ell -2}\mathbf 1_{\{0<s<k\}}, \qquad \text{for a.e. }s>0.
    \end{equation*}
    Extend it to $\mathbb R$ by setting it equal to zero on $(-\infty,0]$. Let $\eta_{\delta}$ be a standard nonnegative mollifier and define $F_{k,\ell}^{\delta}=\bar \Phi_{k,\ell}*\eta_{\delta}$ and $\Phi_{k,\ell}^{\delta}(s)=F_{k,\ell}^{\delta}(s) -F_{k,\ell}^{\delta}(0)-(F_{k,\ell}^{\delta})'(0)s$. Then $\Phi_{k,\ell}^\delta$ is smooth and convex, $\Phi_{k,\ell}^\delta(0)=(\Phi_{k,\ell}^{\delta})'(0)=0$, and, locally uniformly for $s\geq0$, $\Phi_{k,\ell}^{\delta}\to \Phi_{k,\ell}$ and $(\Phi_{k,\ell}^{\delta})'\to \Phi_{k,\ell}'$ as $\delta\downarrow0$. For simplicity, write $\rho=\rho_{n,r}$, $b=\widehat b_{n,r}$, and $a_r=\beta_{T_n-r}$. Let $\rho^{\epsilon}=\zeta_{\epsilon}*\rho$, where $\zeta_{\epsilon}$ is a standard mollifier on $\mathbb R^d$. Convolving the Fokker--Planck equation gives
    \begin{equation*}
        \partial_r \rho^{\epsilon}=a_r \Delta \rho^{\epsilon}-\nabla \cdot(b\rho^{\epsilon})+R_{\epsilon},R_{\epsilon}:=\nabla \cdot(b\rho^{\epsilon})-\zeta_{\epsilon}*\nabla \cdot (b\rho).
    \end{equation*}
    By the previous lemma, $b\in L_{\mathrm{loc}}^1(I;W_{\mathrm{loc}}^{1,\infty}(\mathbb R^d))$. Since $\rho$ is a probability density, $\rho\in L_{\mathrm{loc}}^1$. Hence
    \begin{equation*}
        R_{\epsilon}=(b_r \cdot \nabla \rho^{\epsilon}-\zeta_{\epsilon}*(b_r\cdot \nabla \rho))+((\nabla \cdot b_r)\rho^{\epsilon}-\zeta_{\epsilon}*((\nabla \cdot b_r)\rho))
    \end{equation*}
    Fix a ball $B_R$ and suppose that $\epsilon<1$. On the larger ball $B_{R+1}$, we have
    \begin{equation*}
        \begin{aligned}
            b_r(x) \cdot \nabla \rho^{\epsilon}(x) &= \int b_r(x)\cdot \nabla \zeta_{\epsilon}(x-y)\rho (y)\mathrm dy,\\ \zeta_{\epsilon}*(b_r\cdot \nabla \rho) &= \int \zeta_{\epsilon}(x-y)b_r(y)\cdot \nabla \rho(y)\mathrm dy\\ &= \int b_r(y)\nabla \zeta_{\epsilon}(x-y)\rho(y) \mathrm dy -\int \zeta_{\epsilon}(x-y)(\nabla \cdot b_r)(y)\rho(y) \mathrm dy
        \end{aligned}
    \end{equation*}
    The first equality holds in the sense of distributions. Therefore,
    \begin{equation*}
        \begin{aligned}
            Q_{\epsilon} &= \int (b_r(x)-b_r(y))\cdot\nabla \zeta_{\epsilon}(x-y)\rho (y)\mathrm dy+ \int \zeta_{\epsilon}(x-y)(\nabla \cdot b_r)(y)\rho(y)\mathrm dy.
        \end{aligned}
    \end{equation*}
    Local Lipschitz continuity gives
    \begin{equation*}
        \begin{aligned}
            &\int_{B_R} \left|\int (b_r(x)-b_r(y))\cdot\nabla\zeta_{\epsilon}(x-y) \rho(y)\,\mathrm dy\right|\mathrm dx\\ &\leq \|\nabla b_r\|_{L^\infty(B_{R+1})} \int_{B_{R+1}}|\rho(y)| \int_{\mathbb R^d}\|x-y\|\,\|\nabla \zeta_{\epsilon}(x-y)\| \,\mathrm dx\,\mathrm dy\\ &\leq C\|\nabla b_r\|_{L^\infty(B_{R+1})} \|\rho_r\|_{L^1(B_{R+1})}.
        \end{aligned}
    \end{equation*}
    The second term satisfies
    \begin{equation*}
        \begin{aligned}
            &\int_{B_R} \left|\int \zeta_{\epsilon}(x-y)(\nabla \cdot b_r)(y)\rho(y)\mathrm dy\right|\mathrm dx\leq \|\nabla \cdot b_r\|_{L^\infty(B_{R+1})} \|\rho_r\|_{L^1(B_{R+1})}.
        \end{aligned}
    \end{equation*}
    Thus,
    \begin{equation*}
        \|Q_{\epsilon}(r,\cdot)\|_{L^1(B_R)}\leq C\|\nabla b_r\|_{L^\infty(B_{R+1})}\|\rho_r\|_{L^1(B_{R+1})}.
    \end{equation*}
    For smooth compact support $\rho$, we have
    \begin{equation*}
        b_r \cdot \nabla \rho^{\epsilon}\to b_r \cdot \nabla \rho \text{ in }L^1(B_R)
    \end{equation*}
    and $\zeta_{\epsilon}*(b_r \cdot \nabla \rho)\to b_r\cdot\nabla\rho$ in $L^1(B_R)$. Hence $Q_{\epsilon}\to0$. For general $\rho\in L_{\mathrm{loc}}^1$, choose smooth, compactly supported $\rho^m\to\rho$ in $L^1(B_{R+1})$. The uniform bound gives
    \begin{equation*}
        \|Q_{\epsilon}(\rho -\rho^m)\|_{L^1(B_R)}\leq C\|\nabla b_r\|_{L^\infty(B_{R+1})}\|\rho-\rho^m\|_{L^1(B_{R+1})}.
    \end{equation*}
    Letting first $\epsilon\to0$ and then $m\to\infty$ gives
    \begin{equation*}
        Q_{\epsilon}\to 0 \text{ in }L^1(B_R).
    \end{equation*}
    For the second term, we have $M_{\epsilon}=(\nabla \cdot b_r)\rho^{\epsilon}-\zeta_{\epsilon}*((\nabla \cdot b_r)\rho)$. Let $f_r=\nabla \cdot b_r$. Since $b_r \in W_{\mathrm{loc}}^{1,\infty}$, we have $f_r \in L_{\mathrm{loc}}^\infty$. Decompose $M_{\epsilon}=f_r(\rho^{\epsilon}-\rho)+(f_r\rho-\zeta_{\epsilon}*(f_r\rho))$. The first term satisfies
    \begin{equation*}
        \|f_r(\rho^{\epsilon}-\rho)\|_{L^1(B_R)}\leq \|f_r\|_{L^\infty(B_R)}\|\rho^{\epsilon}-\rho\|_{L^1(B_R)}\to 0.
    \end{equation*}
    The second term is
    \begin{equation*}
        \|f_r\rho -\zeta_{\epsilon}*(f_r\rho)\|_{L^1(B_R)}\to 0.
    \end{equation*}
    Since $f_r\rho \in L_{\mathrm{loc}}^1$, we have $M_{\epsilon}\to 0$ in $L^1(B_R)$. Hence $R_{\epsilon}=Q_{\epsilon}+M_{\epsilon}\to 0$ in $L^1(B_R)$. Integrating in time gives
    \begin{equation*}
        \|\rho_r\|_{L^1(B_{R+1})}\leq 1,\int_I \|\nabla b_r\|_{L^\infty(B_{R+1})}\mathrm dr<\infty,
    \end{equation*}
    The dominated convergence theorem therefore gives
    \begin{equation*}
        R_{\epsilon}\to 0 \text{ in }L^1(I\times B_R).
    \end{equation*}
    Since $R>0$ is arbitrary, we have $R_{\epsilon}\to 0$ in $L_{\mathrm{loc}}^1(I\times \mathbb R^d)$. Now fix $H=\Phi_{k,\ell}^{\delta}$. Since $H\in C^\infty(\mathbb R)$, $H$ is convex, $H'$ is bounded, and $H''$ is bounded, we may multiply the mollified equation by $H'(\rho^{\epsilon})\chi_R$. This gives
    \begin{equation*}
        \int \partial_r \rho^{\epsilon}H'(\rho^{\epsilon})\chi_R= a_r\int \Delta \rho^{\epsilon}H'(\rho^{\epsilon})\chi_R -\int \nabla\cdot (b\rho^{\epsilon})H'(\rho^{\epsilon})\chi_R +\int R_{\epsilon}H'(\rho^{\epsilon})\chi_R.
    \end{equation*}
    The time derivative term is
    \begin{equation*}
        \int \partial_r \rho^{\epsilon}H'(\rho^{\epsilon})\chi_R =\dfrac{\mathrm d}{\mathrm dr}\int H(\rho^{\epsilon})\chi_R.
    \end{equation*}
    For the diffusion term, we have
    \begin{equation*}
        \begin{aligned}
            a_r \int \Delta \rho^{\epsilon}H'(\rho^{\epsilon})\chi_R &=-a_r\int H''(\rho^{\epsilon})|\nabla \rho^{\epsilon}|^2 \chi_R-a_r\int \nabla H(\rho^{\epsilon})\cdot \nabla \chi_R\\ &=-a_r \int H''(\rho^{\epsilon})|\nabla \rho^{\epsilon}|^2 \chi_R +a_r \int H(\rho^{\epsilon})\Delta \chi_R.
        \end{aligned}
    \end{equation*}
    For the drift term
    \begin{equation*}
        \begin{aligned}
            -\int \nabla \cdot(b\rho^{\epsilon})H'(\rho^{\epsilon})\chi_R &=\int b\rho^{\epsilon}\cdot \nabla (H'(\rho^{\epsilon})\chi_R)\\ &=\int \rho^{\epsilon}H''(\rho^{\epsilon})b\cdot \nabla \rho^{\epsilon}\chi_R +\int \rho^{\epsilon}H'(\rho^{\epsilon})b\cdot \nabla \chi_R.
        \end{aligned}
    \end{equation*}
    Define $\Psi^{\delta}(s)=sH'(s)-H(s),(\Psi^{\delta})'(s)=sH''(s)$, and hence $\nabla \Psi^{\delta}(\rho^{\epsilon})=\rho^{\epsilon}H''(\rho^{\epsilon})\nabla \rho^{\epsilon}$. Therefore
    \begin{equation*}
        \int \rho^{\epsilon}H''(\rho^{\epsilon})b\cdot \nabla \rho^{\epsilon}\chi_R=\int b \cdot \nabla\Psi^{\delta}(\rho^{\epsilon})\chi_R =-\int (\nabla \cdot b)\Psi^{\delta}(\rho^{\epsilon})\chi_R -\int \Psi^{\delta}(\rho^{\epsilon})b \cdot \nabla \chi_R.
    \end{equation*}
    Combining this with the second drift contribution, and using $\rho^{\epsilon}H'(\rho^{\epsilon})-\Psi^{\delta}(\rho^{\epsilon})=H(\rho^{\epsilon})$, we get
    \begin{equation*}
        -\int \nabla \cdot(b\rho^{\epsilon})H'(\rho^{\epsilon})\chi_R =-\int (\nabla \cdot b)\Psi^{\delta}(\rho^{\epsilon})\chi_R +\int H(\rho^{\epsilon})b\cdot \nabla \chi_R.
    \end{equation*}
    Putting the three terms together yields
    \begin{equation*}
        \begin{aligned}
            \dfrac{\mathrm d}{\mathrm dr}\int H(\rho^{\epsilon})\chi_R +a_r \int H''(\rho^{\epsilon})|\nabla \rho^{\epsilon}|^2 \chi_R &=-\int (\nabla \cdot b)\Psi^{\delta}(\rho^{\epsilon})\chi_R +\int H(\rho^{\epsilon})b\cdot \nabla \chi_R\\ &\quad+a_r \int H(\rho^{\epsilon})\Delta \chi_R +\int R_{\epsilon}H'(\rho^{\epsilon})\chi_R.
        \end{aligned}
    \end{equation*}
    Let $\epsilon \downarrow 0$. Since $H'$ is bounded and $R_{\epsilon}\to 0$ in $L_{\mathrm{loc}}^1$, $\int R_{\epsilon}H'(\rho^{\epsilon})\chi_R \to 0$ in $L_{\mathrm{loc}}^1(I)$. Moreover, $\rho^{\epsilon}\to \rho$ in $L_{\mathrm{loc}}^1(I\times \mathbb R^d)$. For fixed $k,\ell,\delta$, $H$ and $\Psi^{\delta}$ are globally Lipschitz with at most linear growth. Hence $H(\rho^{\epsilon})\to H(\rho)$ and $\Psi^{\delta}(\rho^{\epsilon})\to \Psi^{\delta}(\rho)$ in $L_{\mathrm{loc}}^1$.
    To pass to the dissipation term, set
    \begin{equation*}
        \Gamma_H(s)=\int_0^s\sqrt{H''(v)}\,\mathrm dv.
    \end{equation*}
    Then $|\nabla\Gamma_H(\rho^\epsilon)|^2=H''(\rho^\epsilon)|\nabla\rho^\epsilon|^2$. The preceding identity gives a local $L^2(I;H^1)$ bound for $\Gamma_H(\rho^\epsilon)$. After taking a subsequence, weak compactness and the $L_{\mathrm{loc}}^1$ convergence of $\rho^\epsilon$ identify the weak limit as $\Gamma_H(\rho)$. Weak lower semicontinuity therefore gives
    for every compact time interval $J\Subset I$,
    \begin{equation*}
        \int_J a_r\int H''(\rho)|\nabla\rho|^2\chi_R\,\mathrm dx\,\mathrm dr \leq \liminf_{\epsilon\downarrow0} \int_J a_r\int H''(\rho^\epsilon)|\nabla\rho^\epsilon|^2\chi_R\,\mathrm dx\,\mathrm dr.
    \end{equation*}
    For later use, collect the two cutoff terms in
    \begin{equation*}
        \mathcal R_R(H;r) := \int H(\rho_r)b_r\cdot\nabla\chi_R\,\mathrm dx + a_r\int H(\rho_r)\Delta\chi_R\,\mathrm dx.
    \end{equation*}
    Consequently, we have
    \begin{equation*}
        \begin{aligned}
            \frac{\mathrm d}{\mathrm dr}\int H(\rho_r)\chi_R\,\mathrm dx +a_r\int H''(\rho_r)|\nabla\rho_r|^2\chi_R\,\mathrm dx &\leq -\int(\nabla\cdot b_r)\Psi^\delta(\rho_r)\chi_R\,\mathrm dx +\mathcal R_R(H;r).
        \end{aligned}
    \end{equation*}
    in the sense of distributions in $r$. Equivalently, for Lebesgue-a.e. $a<b$, we have
    \begin{equation*}
        \begin{aligned}
            \int H(\rho_b)\chi_R\,\mathrm dx +\int_a^b a_r\int H''(\rho_r)|\nabla\rho_r|^2 \chi_R\,\mathrm dx\,\mathrm dr &\leq \int H(\rho_a)\chi_R\,\mathrm dx\\ &\quad-\int_a^b\int(\nabla\cdot b_r) \Psi^\delta(\rho_r)\chi_R\,\mathrm dx\,\mathrm dr\\ &\quad+\int_a^b\mathcal R_R(H;r)\,\mathrm dr.
        \end{aligned}
    \end{equation*}
    Finally, substituting back $H=\Phi_{k,\ell}^{\delta}$ and $\Psi^{\delta}=\Psi_{k,\ell}^{\delta}=s(\Phi_{k,\ell}^{\delta})'(s)-\Phi_{k,\ell}^{\delta}(s)$, we obtain the desired localized renormalized inequality:
    \begin{equation*}
        \begin{aligned}
            \int\Phi_{k,\ell}^\delta(\rho_b)\chi_R\,\mathrm dx +\int_a^b a_r\int(\Phi_{k,\ell}^\delta)''(\rho_r) |\nabla\rho_r|^2\chi_R\,\mathrm dx\,\mathrm dr &\leq \int\Phi_{k,\ell}^\delta(\rho_a)\chi_R\,\mathrm dx\\ &\quad-\int_a^b\int(\nabla\cdot b_r) \Psi_{k,\ell}^\delta(\rho_r)\chi_R\,\mathrm dx\,\mathrm dr\\ &\quad+\int_a^b\mathcal R_R(\Phi_{k,\ell}^\delta;r)\,\mathrm dr.
        \end{aligned}
    \end{equation*}
    Letting $\delta\downarrow0$ and applying Fatou's lemma gives
    \begin{equation*}
        \begin{aligned}
            \int\Phi_{k,\ell}(\rho_b)\chi_R\,\mathrm dx +\int_a^b a_r\int\Phi_{k,\ell}''(\rho_r) |\nabla\rho_r|^2\chi_R\,\mathrm dx\,\mathrm dr &\leq \int\Phi_{k,\ell}(\rho_a)\chi_R\,\mathrm dx\\ &\quad-\int_a^b\int(\nabla\cdot b_r) \Psi_{k,\ell}(\rho_r)\chi_R\,\mathrm dx\,\mathrm dr\\ &\quad+\int_a^b\mathcal R_R(\Phi_{k,\ell};r)\,\mathrm dr.
        \end{aligned}
    \end{equation*}
    Since $\Psi_{k,\ell}(\rho)=(\ell -1)(\rho \wedge k)^{\ell}$ and $-\nabla\cdot b_r \leq (\nabla \cdot b_r)_-$, we have
    \begin{equation*}
        \begin{aligned}
            \int\!\Phi_{k,\ell}(\rho_b)\chi_R\,\mathrm dx +\frac{4(\ell-1)}{\ell}\!\int_a^b\!\!\int a_r|\nabla(\rho_r\wedge k)^{\ell/2}|^2\chi_R\,\mathrm dx\,\mathrm dr &\leq \int\Phi_{k,\ell}(\rho_a)\chi_R\,\mathrm dx\\ &\quad+(\ell-1)\int_a^b\!\!\int c_r(\rho_r\wedge k)^\ell\chi_R\,\mathrm dx\,\mathrm dr\\ &\quad+\int_a^b\mathcal R_R(\Phi_{k,\ell};r)\,\mathrm dr.
        \end{aligned}
    \end{equation*}
    where $c_r =\|(\nabla \cdot b_r)_-\|_{\infty}$. Let $R\to \infty$, keeping $k$ fixed, and let $A_R=\{x\in \mathbb R^d:R<\|x\|\leq 2R\}$. We show that both components of $\mathcal R_R(\Phi_{k,\ell};r)$ vanish after integration in $r$. For the diffusion cutoff term, since
    \begin{equation*}
        0\leq \Phi_{k,\ell}(s)\leq \ell k^{\ell -1}s,
    \end{equation*}
    and since $\rho_{n,r}$ is a probability density, we have
    \begin{align*}
        \left|\int_a^b a_r \int_{\mathbb R^d} \Phi_{k,\ell}(\rho_{n,r})\Delta \chi_R\,\mathrm dx\,\mathrm dr\right| &\leq \frac{C\ell k^{\ell -1}}{R^2} \int_a^b a_r \int_{A_R}\rho_{n,r}\,\mathrm dx\,\mathrm dr\\ &\leq \frac{C\ell k^{\ell -1}}{R^2} \int_a^b a_r\,\mathrm dr\to 0.
    \end{align*}
    Next we control the drift cutoff term. Write $u=T_n-r$. Since $\nabla \chi_R$ is supported on $A_R$ and satisfies $\|\nabla \chi_R\|\leq C/R$, and by Hessian bound on $U_u$, $\|\nabla U_u(x)\|\leq C_U+\overline L \|x\|$ for $0\leq u\leq h_0$. Hence on $A_R$, $(\|x\|+\|\nabla U_u(x)\|)/R\leq C$. Then we have
    \begin{equation*}
        \begin{aligned}
            \left|\int_{\mathbb R^d}\Phi_{k,\ell}(\rho_{n,r})\widehat b_{n,r}\cdot \nabla \chi_R \right|&\leq C\ell k^{\ell -1}\beta _u \mathbb P(R<\|\widehat Y_r\|\leq 2R\mid \mathcal F_n)\\ &\qquad+ \dfrac{C\ell k^{\ell -1}\beta_u}R \mathbb E[\|e_{n,u}(\widehat Y_r)\|1_{R<\|\widehat Y_r\|\leq 2R}\mid \mathcal F_n].
        \end{aligned}
    \end{equation*}
    For the first term, for each fixed $r$, $\mathbb P(R<\|\widehat Y_r\|\leq 2R\mid \mathcal F_n)\to 0$ as $R\to\infty$. Since this probability is bounded by \(1\) and $\int_0^{h_0}\beta_u\,\mathrm du<\infty$, the dominated convergence theorem gives
    \begin{equation*}
        \int_a^b \beta_{T_n-r}\mathbb P(R<\|\widehat Y_r\|\leq 2R\mid \mathcal F_n)\mathrm dr\to 0.
    \end{equation*}
    For the second term, H\"older's inequality gives
    \begin{equation*}
        \begin{aligned}
            \dfrac 1R \int_a^b \beta_{T_n-r} \mathbb E[\|e_{n,T_n-r}(\widehat Y_r)\|1_{\{R<\|\widehat Y_r\|\leq 2R\}}\mid \mathcal F_n]\mathrm dr&\leq \dfrac 1R \int_a^b \beta_{T_n-r} \left(\mathbb E[\|e_{n,T_n-r}(\widehat Y_r)\|^q\mid \mathcal F_n]\right)^{1/q}\mathrm dr\\ &\leq \dfrac{\mathcal A_{n,q}(T_n)}{R}\to 0.
        \end{aligned}
    \end{equation*}
    Thus
    \begin{equation*}
        \int_a^b \int_{\mathbb R^d}\Phi_{k,\ell}(\rho_{n,r})\widehat b_{n,r}\cdot \nabla \chi_R \mathrm dr \to 0.
    \end{equation*}
    The dominated convergence theorem and Fatou's lemma yield, as $R\to\infty$,
    \begin{equation*}
        \begin{aligned}
            \int_{\mathbb R^d}\Phi_{k,\ell}(\rho_{n,b}) +\dfrac{4(\ell -1)}{\ell} \int_a^b a_r\|\nabla (\rho_{n,r}\wedge k)^{\ell/2}\|_2^2 \mathrm dr &\leq \int_{\mathbb R^d}\Phi_{k,\ell}(\rho_{n,a})\\ &\quad+(\ell -1)\int_a^b c_{n,r}\|\rho_{n,r}\wedge k\|_{\ell}^{\ell}\mathrm dr.
        \end{aligned}
    \end{equation*}
    Let $k\to\infty$. Since $\Phi_{k,\ell}(s)\uparrow s^{\ell}$ and $(s\wedge k)^{\ell}\uparrow s^{\ell}$, the monotone convergence theorem gives
    \begin{equation*}
        \int_{\mathbb R^d}\Phi_{k,\ell}(\rho_{n,b})\to \|\rho_{n,b}\|_{\ell}^{\ell},\int_{\mathbb R^d}\Phi_{k,\ell}(\rho_{n,a})\to \|\rho_{n,a}\|_{\ell}^{\ell},\|\rho_{n,r}\wedge k\|_{\ell}^{\ell}\to \|\rho_{n,r}\|_{\ell}^{\ell}.
    \end{equation*}
    Lower semicontinuity gives
    \begin{equation*}
        \int_a^b a_r \|\nabla \rho_{n,r}^{\ell/2}\|_2^2 \mathrm dr\leq \liminf_{k\to \infty}\int_a^b a_r\|\nabla (\rho_{n,r}\wedge k)^{\ell/2}\|_2^2 \mathrm dr.
    \end{equation*}
    Therefore
    \begin{equation*}
        \|\rho_{n,b}\|_{\ell}^{\ell}+\dfrac{4(\ell -1)}{\ell}\int_a^b a_r \|\nabla \rho_{n,r}^{\ell/2}\|_2^2 \mathrm dr\leq \|\rho_{n,a}\|_{\ell}^{\ell}+(\ell -1)\int_a^b c_{n,r}\|\rho_{n,r}\|_{\ell}^{\ell}\mathrm dr.
    \end{equation*}
    Equivalently, in the sense of distributions in time, we have
    \begin{equation*}
        \dfrac{\mathrm d}{\mathrm dr}\|\rho_{n,r}\|_{\ell}^{\ell}+\dfrac{4(\ell -1)}{\ell}\beta_{T_n-r}\|\nabla \rho_{n,r}^{\ell/2}\|_2^2 \leq (\ell -1)\|(\nabla \cdot \widehat b_{n,r})_-\|_{\infty}\|\rho_{n,r}\|_{\ell}^{\ell}.
    \end{equation*}
\end{proof}

\begin{mylemma}
\label{lem:L1-L2-smoothing}
Let $u^{(N)}$ be a nonnegative finite-mass solution on $(r_0,r_1)$ such that
\begin{equation*}
    \|u_r^{(N)}\|_1\leq M
\end{equation*}
for every $r\in(r_0,r_1)$. Assume that its canonical $L^2$ energy $\mathcal E_{N,2}$ satisfies, for every $r_0<a<b<r_1$,
\begin{equation*}
    \mathcal E_{N,2}(b) +2\underline\beta\int_a^b\|\nabla u_r^{(N)}\|_2^2\,\mathrm dr \leq \mathcal E_{N,2}(a),
\end{equation*}
where $\mathcal E_{N,2}(r)=\|u_r^{(N)}\|_2^2$ for almost every $r$. Then, for every $b\in(r_0,r_1)$,
\begin{equation*}
    \mathcal E_{N,2}(b)^{1/2} \leq C_d M\{\underline\beta(b-r_0)\}^{-d/4}.
\end{equation*}
\end{mylemma}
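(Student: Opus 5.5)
The proof is the classical Nash argument: a Gagliardo--Nirenberg (Nash) inequality turns the energy dissipation into a nonlinear differential inequality for $\mathcal E_{N,2}$, which is then integrated. The first ingredient is Nash's inequality on $\mathbb R^d$,
\begin{equation*}
    \|u\|_2^{2+4/d}\leq C_d\|u\|_1^{4/d}\|\nabla u\|_2^2,\qquad u\in L^1\cap H^1(\mathbb R^d).
\end{equation*}
Applied to $u_r^{(N)}$ with $\|u_r^{(N)}\|_1\leq M$, it gives
\begin{equation*}
    \|\nabla u_r^{(N)}\|_2^2\geq c_dM^{-4/d}\mathcal E_{N,2}(r)^{1+2/d}
\end{equation*}
for almost every $r$. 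This holds trivially, as $+\infty$ on the right-hand side of the gradient, when $u_r^{(N)}\notin H^1$. The hypothesis forces $\int_a^b\|\nabla u_r^{(N)}\|_2^2\,\mathrm dr<\infty$ whenever $\mathcal E_{N,2}(a)<\infty$, so $u_r^{(N)}\in H^1$ for almost every $r$ anyway.

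Substituting this into the integral energy inequality gives, for all $r_0<a<b<r_1$,
\begin{equation*}
    \mathcal E(b)+\gamma\int_a^b\mathcal E(r)^{1+2/d}\,\mathrm dr\leq\mathcal E(a),\qquad \gamma:=2\underline\beta c_dM^{-4/d},
\end{equation*}
where $\mathcal E=\mathcal E_{N,2}$. In particular $\mathcal E$ is nonincreasing on its set of definition. I take its right-continuous nonincreasing representative; this is the role of the $\widetilde E$ lemma listed in the notation table. The function $\mathcal E$ need not be differentiable, so I do not differentiate $\mathcal E^{-2/d}$ directly. Instead I work on a partition $a=t_0<\dots<t_k=b$. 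On each subinterval, monotonicity gives $\int_{t_{j}}^{t_{j+1}}\mathcal E^{1+2/d}\geq(t_{j+1}-t_j)\mathcal E(t_{j+1})^{1+2/d}$. Hence
\begin{equation*}
    \mathcal E(t_j)-\mathcal E(t_{j+1})\geq\gamma(t_{j+1}-t_j)\mathcal E(t_{j+1})^{1+2/d}.
\end{equation*}
By convexity of $x\mapsto x^{-2/d}$,
\begin{equation*}
    \mathcal E(t_{j+1})^{-2/d}-\mathcal E(t_j)^{-2/d}\geq\tfrac2d\,\mathcal E(t_{j+1})^{-1-2/d}\{\mathcal E(t_j)-\mathcal E(t_{j+1})\}\geq\tfrac{2\gamma}{d}(t_{j+1}-t_j).
\end{equation*}
If some $\mathcal E(t_{j+1})=0$, then $\mathcal E$ vanishes afterward and the claim is trivial. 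Summing over $j$ gives
\begin{equation*}
    \mathcal E(b)^{-2/d}\geq\mathcal E(a)^{-2/d}+\tfrac{2\gamma}{d}(b-a)\geq\tfrac{2\gamma}{d}(b-a).
\end{equation*}

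Letting $a\downarrow r_0$ yields $\mathcal E(b)\leq\{d/(2\gamma(b-r_0))\}^{d/2}=C_dM^2\{\underline\beta(b-r_0)\}^{-d/2}$. Taking square roots gives the stated bound $\mathcal E_{N,2}(b)^{1/2}\leq C_dM\{\underline\beta(b-r_0)\}^{-d/4}$. This step does not require $\mathcal E(a)$ to stay bounded as $a\downarrow r_0$, since that term was simply dropped; this is exactly what makes the result an $L^1\to L^2$ smoothing estimate. For $b$ outside the full-measure set where the identification $\mathcal E_{N,2}(r)=\|u_r^{(N)}\|_2^2$ holds, the conclusion is for the canonical representative, which is what the statement asserts.

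The main obstacle is the lack of regularity in time: the energy inequality holds only in integral form and only for almost every pair. The discretized convexity argument is meant to avoid differentiating $\mathcal E$, and the choice of a monotone representative lets the bound hold at every $b$ rather than almost every $b$. I would state the monotone representative explicitly and check the partition endpoints against the null sets before taking limits.
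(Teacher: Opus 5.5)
The Nash step and the resulting inequality $\mathcal E(b)+\gamma\int_a^b\mathcal E^{1+2/d}\,\mathrm dr\le\mathcal E(a)$ match the paper. The convexity step you use to integrate it, however, has the inequality in the wrong direction.

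Write $x=\mathcal E(t_j)\ge y=\mathcal E(t_{j+1})$ and $f(s)=s^{-2/d}$. Convexity gives $f(x)\ge f(y)+f'(y)(x-y)$, which rearranges to $y^{-2/d}-x^{-2/d}\le\frac2d y^{-1-2/d}(x-y)$. That is an upper bound, the reverse of what you wrote. The valid lower bound evaluates the derivative at the larger value: $y^{-2/d}-x^{-2/d}\ge\frac2d x^{-1-2/d}(x-y)$. Combined with your per-step estimate, this only gives $\frac{2\gamma}{d}\Delta t_j\,(y/x)^{1+2/d}$ per step.

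A concrete failure: take $d=2$, $\gamma=\Delta t_j=1$, $x=2$, $y=1$. Then $x-y\ge\gamma\Delta t_j\,y^{1+2/d}$ holds, but $y^{-2/d}-x^{-2/d}=\tfrac12<\tfrac{2\gamma}{d}\Delta t_j=1$. Notice also that if your per-step inequality were true, you could take $k=1$ and the partition would be superfluous. That is a sign the discretization does no work as written.

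The argument can be repaired, but only with a mesh-refinement limit that your proposal omits. From $x\ge y(1+u)$ with $u=\gamma\Delta t_j\,y^{2/d}$, you get
$y^{-2/d}-x^{-2/d}\ge y^{-2/d}\{1-(1+u)^{-2/d}\}\ge\frac{2\gamma}{d}\Delta t_j-C_d\gamma^2\Delta t_j^2\,\mathcal E(a)^{2/d}$.
This uses $1-(1+u)^{-2/d}\ge\frac2d u-C_du^2$ and $y\le\mathcal E(a)<\infty$, since the canonical energy is finite-valued. Summing over $j$ and letting $\max_j\Delta t_j\to0$ gives $\mathcal E(b)^{-2/d}-\mathcal E(a)^{-2/d}\ge\frac{2\gamma}{d}(b-a)$. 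From there your final step, dropping the $\mathcal E(a)$ term and letting $a\downarrow r_0$, is correct.

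Repaired this way, your argument is an elementary alternative to the paper's route. The paper invokes Lemma~\ref{lem:coming-down-from-infinity} with $\alpha=2/d$. There $-\mathrm dY$ is treated as a Radon measure dominating $KY^{1+\alpha}\,\mathrm dr$, and the BV chain rule is applied to $(Y+\epsilon)^{-\alpha}$, discarding its nonnegative singular and jump parts. That performs the same integration without partitions.
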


\begin{proof}
By Nash's inequality, for almost every $r$,
\begin{equation*}
    \|u_r^{(N)}\|_2^{2+4/d} \leq C_d\|\nabla u_r^{(N)}\|_2^2\|u_r^{(N)}\|_1^{4/d}.
\end{equation*}
Since $\|u_r^{(N)}\|_1\leq M$ and $\mathcal E_{N,2}(r)=\|u_r^{(N)}\|_2^2$ almost everywhere,
\begin{equation*}
    \|\nabla u_r^{(N)}\|_2^2 \geq c_dM^{-4/d}\mathcal E_{N,2}(r)^{1+2/d}
\end{equation*}
for almost every $r$. Consequently,
\begin{equation*}
    \mathcal E_{N,2}(b) +c_d\underline\beta M^{-4/d} \int_a^b\mathcal E_{N,2}(r)^{1+2/d}\,\mathrm dr \leq \mathcal E_{N,2}(a)
\end{equation*}
for every $a<b$. Applying Lemma~\ref{lem:coming-down-from-infinity} with $\alpha=2/d$ gives
\begin{equation*}
    \mathcal E_{N,2}(b) \leq C_dM^2\{\underline\beta(b-r_0)\}^{-d/2},
\end{equation*}
which is the claim.
\end{proof}

\begin{mylemma}
\label{lem:L2-Linf-moser}
Let $s<\tau$ be in $(r_0,r_1)$. Let $u^{(N)}$ be a nonnegative solution for which, for every $\ell\geq2$, there exists a canonical energy $\mathcal E_{N,\ell}$ satisfying
\begin{equation*}
    \mathcal E_{N,\ell}(r)=\|u_r^{(N)}\|_\ell^\ell
\end{equation*}
for almost every $r$, and, for every $s\leq a<b\leq \tau$,
\begin{equation*}
    \mathcal E_{N,\ell}(b) +\frac{4(\ell-1)}{\ell}\underline\beta \int_a^b \left\|\nabla (u_r^{(N)})^{\ell/2}\right\|_2^2\,\mathrm dr \leq \mathcal E_{N,\ell}(a).
\end{equation*}
Then, we have
\begin{equation*}
    \|u_\tau^{(N)}\|_\infty \leq C_d\{\underline\beta(\tau-s)\}^{-d/4} \mathcal E_{N,2}(s)^{1/2}.
\end{equation*}
\end{mylemma}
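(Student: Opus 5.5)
The plan is a standard De Giorgi--Nash--Moser iteration in time, using only the monotonicity built into the hypothesis and the Gagliardo--Nirenberg (Nash-type) inequality on $\mathbb R^d$. Set $\chi=1+2/d$, $\ell_j=2\chi^j$ for $j\geq0$, and choose times $t_j=\tau-(\tau-s)2^{-j}$, so that $t_0=s$ and $t_j\uparrow\tau$. All energies are the canonical ones $\mathcal E_{N,\ell}$. Since the dissipative inequality holds for every $s\leq a<b\leq\tau$, each $\mathcal E_{N,\ell}$ is non-increasing on $[s,\tau]$. It also coincides with $\|u_r^{(N)}\|_\ell^\ell$ for almost every $r$, so time integrals of norms can be replaced by integrals of energies. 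The goal is the one-step gain
\begin{equation*}
    \mathcal E_{N,\ell\chi}(b)\leq \frac{C_d}{\underline\beta(b-a)}\,\mathcal E_{N,\ell}(a)^{\chi},\qquad s\leq a<b\leq\tau,\ \ell\geq2,
\end{equation*}
with $C_d$ independent of $\ell$.

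To prove the one-step gain, fix $\ell\geq2$ and put $w_r=(u_r^{(N)})^{\ell/2}$. The energy inequality and monotonicity give
\begin{equation*}
    \int_a^b\|\nabla w_r\|_2^2\,\mathrm dr\leq \frac{\ell}{4(\ell-1)\underline\beta}\,\mathcal E_{N,\ell}(a)\leq \frac{\mathcal E_{N,\ell}(a)}{2\underline\beta}.
\end{equation*}
The Gagliardo--Nirenberg inequality $\|w\|_{2\chi}^{2\chi}\leq C_d\|\nabla w\|_2^2\|w\|_2^{4/d}$, together with $\|w_r\|_2^2=\mathcal E_{N,\ell}(r)\leq\mathcal E_{N,\ell}(a)$ for a.e.\ $r$, then yields
\begin{equation*}
    \int_a^b\mathcal E_{N,\ell\chi}(r)\,\mathrm dr=\int_a^b\|w_r\|_{2\chi}^{2\chi}\,\mathrm dr\leq C_d\,\mathcal E_{N,\ell}(a)^{2/d}\int_a^b\|\nabla w_r\|_2^2\,\mathrm dr\leq \frac{C_d}{\underline\beta}\,\mathcal E_{N,\ell}(a)^{\chi}.
\end{equation*}
Because $\mathcal E_{N,\ell\chi}$ is non-increasing, its value at $b$ is at most its average over $[a,b]$, which gives the one-step gain. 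I would first check that the right-hand side above is finite, so that $\mathcal E_{N,\ell\chi}$ is not identically infinite. This is the role of the hypothesis being assumed for every $\ell\geq2$: in the application the truncated energies from the preceding renormalized inequality are finite for the renormalized density $u=e^{-C_n(r)}\rho_{n,r}$.

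Next apply the gain with $a=t_j$, $b=t_{j+1}$, and $\ell=\ell_j$, so that $b-a=(\tau-s)2^{-j-1}$. Writing $N_j=\mathcal E_{N,\ell_j}(t_j)^{1/\ell_j}$ and using $\ell_j\chi=\ell_{j+1}$, the gain becomes
\begin{equation*}
    N_{j+1}\leq\Bigl(\frac{C_d2^{j+1}}{\underline\beta(\tau-s)}\Bigr)^{1/\ell_{j+1}}N_j.
\end{equation*}
Iterating from $j=0$ gives
\begin{equation*}
    N_J\leq \prod_{j\geq0}(C_d2^{j+1})^{1/\ell_{j+1}}\{\underline\beta(\tau-s)\}^{-\sum_{j\geq1}1/\ell_j}\,\mathcal E_{N,2}(s)^{1/2}.
\end{equation*}
The exponent is $\sum_{j\geq1}\tfrac12\chi^{-j}=\tfrac{1}{2(\chi-1)}=d/4$, exactly the power in the claim. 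The product converges, since $\sum_j(j+1)\chi^{-j}<\infty$, and depends only on $d$.

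Finally, monotonicity gives $\|u_\tau^{(N)}\|_{\ell_J}\leq\mathcal E_{N,\ell_J}(t_J)^{1/\ell_J}$, up to the a.e.\ identification at $\tau$, which is handled by the canonical energy being defined at every time. Letting $J\to\infty$ and using $\|u\|_{\ell}\to\|u\|_\infty$ proves the bound, for instance by applying the argument at a sequence of a.e.\ times $\tau'\uparrow\tau$ and using the canonical representative. I expect the main obstacle to be exactly this bookkeeping: the energies equal norms only almost everywhere, and the iteration must be run at good times without losing the endpoints. The analytic core, namely the Gagliardo--Nirenberg step and the geometric series of exponents, is routine.
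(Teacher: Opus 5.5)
Your proposal is correct and follows essentially the same route as the paper: dyadic times $t_j=\tau-(\tau-s)2^{-j}$, exponents $\ell_j=2(1+2/d)^j$, Gagliardo--Nirenberg applied to $(u_r^{(N)})^{\ell_j/2}$, and monotonicity of the canonical energy to bound $\mathcal E_{N,\ell_{j+1}}(t_{j+1})$ by its average over $[t_j,t_{j+1}]$, with the exponents summing to $d/4$. The endpoint issue you flag is handled in the paper by using the lemma only at times $\tau$ in a common full-measure set where $\mathcal E_{N,\ell_j}(\tau)=\|u_\tau^{(N)}\|_{\ell_j}^{\ell_j}$ for all $j$ (see the remark after the lemma), so your ``$\tau'\uparrow\tau$'' device is not needed for the lemma itself and would in any case require time continuity of $u^{(N)}$, which the hypotheses do not give.
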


\begin{proof}
Define
\begin{equation*}
    t_j=\tau-(\tau-s)2^{-j}, \qquad j=0,1,2,\ldots.
\end{equation*}
Then $t_0=s$, $t_j\uparrow\tau$, and $t_{j+1}-t_j=(\tau-s)2^{-j-1}$. Set $A_j^{(N)}=\mathcal E_{N,\ell_j}(t_j)^{1/\ell_j}$. Let $\kappa =1+2/d,\ell_j=2\kappa^j$. Since $4(\ell_j-1)/\ell_j\geq2$, the canonical energy inequality gives
\begin{equation*}
    2\underline\beta \int_{t_j}^{\tau} \left\|\nabla (u_r^{(N)})^{\ell_j/2}\right\|_2^2\,\mathrm dr \leq \mathcal E_{N,\ell_j}(t_j) = \bigl(A_j^{(N)}\bigr)^{\ell_j}.
\end{equation*}
Moreover, for almost every $r\geq t_j$,
\begin{equation*}
    \|u_r^{(N)}\|_{\ell_j}^{\ell_j} =\mathcal E_{N,\ell_j}(r) \leq \mathcal E_{N,\ell_j}(t_j) = \bigl(A_j^{(N)}\bigr)^{\ell_j}.
\end{equation*}
Applying the Gagliardo--Nirenberg inequality to $f=(u_r^{(N)})^{\ell_j/2}$ yields, for almost every $r$,
\begin{equation*}
    \|u_r^{(N)}\|_{\ell_{j+1}}^{\ell_{j+1}} \leq C_d \left\|\nabla (u_r^{(N)})^{\ell_j/2}\right\|_2^2 \|u_r^{(N)}\|_{\ell_j}^{2\ell_j/d}.
\end{equation*}
Therefore,
\begin{equation*}
    \int_{t_j}^{\tau} \|u_r^{(N)}\|_{\ell_{j+1}}^{\ell_{j+1}}\,\mathrm dr \leq \frac{C_d}{\underline\beta} \bigl(A_j^{(N)}\bigr)^{\ell_j+2\ell_j/d} = \frac{C_d}{\underline\beta} \bigl(A_j^{(N)}\bigr)^{\ell_{j+1}}.
\end{equation*}
Since $\mathcal E_{N,\ell_{j+1}}$ is non-increasing and agrees almost everywhere with the actual $L^{\ell_{j+1}}$ energy, for almost every $r\in[t_j,t_{j+1}]$,
\begin{equation*}
    \|u_r^{(N)}\|_{\ell_{j+1}}^{\ell_{j+1}} =\mathcal E_{N,\ell_{j+1}}(r) \geq \mathcal E_{N,\ell_{j+1}}(t_{j+1}).
\end{equation*}
Hence
\begin{equation*}
    (t_{j+1}-t_j) \mathcal E_{N,\ell_{j+1}}(t_{j+1}) \leq \frac{C_d}{\underline\beta} \bigl(A_j^{(N)}\bigr)^{\ell_{j+1}}.
\end{equation*}
Equivalently,
\begin{equation*}
    A_{j+1}^{(N)} \leq \left\{ \frac{C_d2^j}{\underline\beta(\tau-s)} \right\}^{1/\ell_{j+1}} A_j^{(N)}.
\end{equation*}
Iteration gives
\begin{equation*}
    A_m^{(N)} \leq A_0^{(N)} \prod_{j=0}^{m-1} \left\{ \frac{C_d2^j}{\underline\beta(\tau-s)} \right\}^{1/\ell_{j+1}}.
\end{equation*}
Because
\begin{equation*}
    \sum_{j=0}^\infty\frac1{\ell_{j+1}}=\frac d4, \qquad \sum_{j=0}^\infty\frac j{\ell_{j+1}}<\infty,
\end{equation*}
we obtain
\begin{equation*}
    \sup_m A_m^{(N)} \leq C_d\{\underline\beta(\tau-s)\}^{-d/4} \mathcal E_{N,2}(s)^{1/2}.
\end{equation*}
Finally, by monotonicity of the canonical energy,
\begin{equation*}
    \|u_\tau^{(N)}\|_{\ell_j}^{\ell_j} = \mathcal E_{N,\ell_j}(\tau) \leq \mathcal E_{N,\ell_j}(t_j) = \bigl(A_j^{(N)}\bigr)^{\ell_j}.
\end{equation*}
Therefore,
\begin{equation*}
    \|u_\tau^{(N)}\|_{\ell_j} \leq A_j^{(N)} \leq K,
\end{equation*}
where
\begin{equation*}
    K = C_d \{\underline\beta(\tau-s)\}^{-d/4} \mathcal E_{N,2}(s)^{1/2}.
\end{equation*}

We claim that $\|u_\tau^{(N)}\|_\infty\leq K$. Suppose otherwise. Then there exists $\epsilon>0$ such that the set
\begin{equation*}
    A_\epsilon = \left\{ x\in\mathbb R^d: u_\tau^{(N)}(x)>K+\epsilon \right\}
\end{equation*}
has positive Lebesgue measure. Hence, for every $j$,
\begin{equation*}
    \|u_\tau^{(N)}\|_{\ell_j} \geq (K+\epsilon) |A_\epsilon|^{1/\ell_j}.
\end{equation*}
Since $\ell_j\to\infty$, we have $|A_\epsilon|^{1/\ell_j}\to1$\, and therefore
\begin{equation*}
    \liminf_{j\to\infty} \|u_\tau^{(N)}\|_{\ell_j} \geq K+\epsilon,
\end{equation*}
contradicting $\|u_\tau^{(N)}\|_{\ell_j}\leq K$ for every $j$. Thus
\begin{equation*}
    \|u_\tau^{(N)}\|_\infty \leq C_d \{\underline\beta(\tau-s)\}^{-d/4} \mathcal E_{N,2}(s)^{1/2}.
\end{equation*}
\end{proof}

\begin{remark}
In the terminal-density argument below, Lemma~\ref{lem:L2-Linf-moser} is used only at times belonging to a common full-measure set on which all canonical energies equal the corresponding actual $L^{\ell_j}$ norms. No estimate at the exceptional times is needed.
\end{remark}

\begin{mylemma}
    Under Assumptions~\ref{ass:target-distribution}, \ref{ass:noise-schedule}, and~\ref{ass:score-approximation}, using the notation of the previous lemma, there exists a constant $C_d>0$, depending only on the fixed dimension $d$, such that
    \begin{equation*}
        \|\rho_{n,T_n}\|_{\infty} \leq C_d(\underline \beta h_0)^{-d/2} \exp(C_d\mathfrak C_n) \|\rho_{n,T_n-h_0}\|_1.
    \end{equation*}
    In particular, $\|\rho_{n,T_n}\|_{\infty}=O_p(1)$.
\end{mylemma}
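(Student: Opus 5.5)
The plan is to remove the divergence source term from the renormalized $L^\ell$ energy inequality of the preceding lemma with an exponential integrating factor. The resulting dissipative inequalities then feed directly into Lemma~\ref{lem:L1-L2-smoothing} and Lemma~\ref{lem:L2-Linf-moser}.

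Set $r_0=T_n-h_0$, $C_n(r)=\int_{r_0}^r c_{n,s}\,\mathrm ds$, and define $u_r=e^{-\lambda C_n(r)}\rho_{n,r}$, where the weight $\lambda$ may depend on $\ell$. Since $\rho_{n,r}$ is a probability density and $C_n\geq 0$, we have $\|u_r\|_1\leq 1$.

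The difficulty is that the source term in the energy inequality has coefficient $(\ell-1)c_{n,r}$, which grows with $\ell$. A single integrating factor therefore cannot serve all the Moser exponents $\ell_j=2\kappa^j$. I would handle this exponent by exponent. For a fixed $\ell$, apply the energy inequality to $E_\ell(r)=\|\rho_{n,r}\|_\ell^\ell$ and use the Gronwall-type weight $e^{-(\ell-1)C_n(r)}$. This gives the dissipative inequality
\begin{equation*}
 \widetilde E_\ell(b)+\tfrac{4(\ell-1)}{\ell}\underline\beta\int_a^b e^{-(\ell-1)C_n(r)}\|\nabla\rho_{n,r}^{\ell/2}\|_2^2\,\mathrm dr\leq \widetilde E_\ell(a),\qquad \widetilde E_\ell(r)=e^{-(\ell-1)C_n(r)}E_\ell(r).
\end{equation*}
This holds for a.e.\ pair $(a,b)$, and it uses $a_r\geq\underline\beta$. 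I would then take the right-continuous non-increasing representative of $\widetilde E_\ell$ as the canonical energy.

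In the Moser step from $\ell_j$ to $\ell_{j+1}$, the weights can be bounded above and below by $e^{\pm\ell_{j+1}\mathfrak C_n}$, since $C_n\leq\mathfrak C_n$ on the interval. Taking $\ell_{j+1}$-th roots, each step then costs a multiplicative factor $e^{C\mathfrak C_n}$, with $C$ independent of $j$. Because only the geometric bookkeeping of Lemma~\ref{lem:L2-Linf-moser} must be repeated, the product of these factors must be checked. The $j$-th factor is $\exp\{(\ell_{j+1}-1)\mathfrak C_n/\ell_{j+1}\}\leq e^{\mathfrak C_n}$, and there are infinitely many of them, so the naive product diverges. This is the main obstacle.

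To overcome it, I would instead use the sharper per-step comparison. The factor gained at step $j$ is really $\exp\{(\ell_{j+1}-1)(C_n(t_{j+1})-C_n(t_j))/\ell_{j+1}\}$. These increments telescope: their exponents sum to at most $C_n(T_n)\leq\mathfrak C_n$. The total loss is therefore $e^{C_d\mathfrak C_n}$, which gives the stated constant. Equivalently, one rescales so that the product over $j$ of the weight ratios $e^{\ell(C_n(t_{j+1})-C_n(t_j))}$, raised to the power $1/\ell_{j+1}$, stays bounded by $e^{\mathfrak C_n}$.

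For the $L^1\to L^2$ step, apply Lemma~\ref{lem:L1-L2-smoothing} on $[r_0,r_0+h_0/2]$ with $M=\|\rho_{n,r_0}\|_1=1$ and the $\ell=2$ weighted energy. The weight changes the constant only by $e^{\mathfrak C_n}$. This gives $\|\rho_{n,r_0+h_0/2}\|_2\leq C_de^{C\mathfrak C_n}(\underline\beta h_0)^{-d/4}$.

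Then apply the modified Lemma~\ref{lem:L2-Linf-moser} with $s=r_0+h_0/2$ and $\tau=T_n$. This yields a further factor $(\underline\beta h_0)^{-d/4}e^{C\mathfrak C_n}$. Multiplying the two bounds gives
\begin{equation*}
\|\rho_{n,T_n}\|_\infty\leq C_d(\underline\beta h_0)^{-d/2}e^{C_d\mathfrak C_n}\|\rho_{n,T_n-h_0}\|_1.
\end{equation*}
The Moser lemma is evaluated at a.e.\ times, as the remark allows. To pass to $\tau=T_n$ itself, I would use narrow continuity of the marginals together with weak-$*$ lower semicontinuity of the $L^\infty$ norm.

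Finally, part~(2) of the drift lemma gives $\mathfrak C_n=O_p(1)$, because $d$ and $h_0$ are fixed and $\sup_{u\leq h_0}\Lip(e_{n,u})=O_p(1)$. Together with $\|\rho_{n,T_n-h_0}\|_1=1$, this yields $\|\rho_{n,T_n}\|_\infty=O_p(1)$.
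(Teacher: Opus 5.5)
Your overall architecture (exponential integrating factor, Nash $L^1\to L^2$ smoothing, Moser iteration, a weak limit to reach $\tau=T_n$, and $\mathfrak C_n=O_p(1)$ from the drift lemma) matches the paper's. There is, however, a genuine gap at the first step. You apply the renormalized inequality, the integral Gronwall lemma and Lemma~\ref{lem:L1-L2-smoothing} directly to $E_\ell(r)=\|\rho_{n,r}\|_\ell^\ell$ without knowing that these energies are finite.

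Nothing beyond existence of densities is assumed for the learned dynamics on $[0,T_n-h_0]$, so $\rho_{n,r_0}$ is merely a probability density. The preceding lemma gives no finiteness: whenever $\|\rho_{n,a}\|_\ell=\infty$, its inequality reads $\infty\le\infty$. Each tool you invoke presupposes more:
\begin{itemize}
\item the Gronwall step needs $E_\ell,\,c_nE_\ell\in L^1_{\mathrm{loc}}$;
\item the canonical representative (Lemma~\ref{lem:canonical-dissipative-representative}) needs energy and dissipation in $L^1_{\mathrm{loc}}$;
\item the coming-down-from-infinity argument inside Lemma~\ref{lem:L1-L2-smoothing} needs an a.e.\ finite non-increasing energy.
\end{itemize}
So these steps presuppose the smoothing you are trying to prove.

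The paper closes this gap by approximation.
\begin{itemize}
\item It replaces $\rho_{n,r_0}$ by the bounded densities $\rho^{(N)}_{n,r_0}=M(\rho_{n,r_0}\wedge N)/\int(\rho_{n,r_0}\wedge N)$ and evolves them with the dual Markov evolution of the well-posed learned SDE. Positivity gives densities bounded by $q_N\rho_{n,r}$.
\item It checks that the renormalized inequality holds for every nonnegative finite-mass solution; the spatial cutoff is controlled by the linear-growth envelope.
\item It uses the weak maximum principle $\|\rho^{(N)}_{n,r}\|_\infty\le e^{C_n(r)}\|\rho^{(N)}_{n,r_0}\|_\infty$ to make every $L^\ell$ energy finite.
\item It runs smoothing and Moser with constants depending only on the mass $M$, hence uniformly in $N$.
\item It takes $N\to\infty$ at each admissible $\tau$, using the $L^1$ contraction $\|\rho^{(N)}_{n,\tau}-\rho_{n,\tau}\|_1\le\|\rho^{(N)}_{n,r_0}-\rho_{n,r_0}\|_1$ and a.e.\ convergence along a subsequence.
\end{itemize}
Your proof needs this step, or an equivalent a priori bound on the energies.

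Separately, the obstacle you single out is not real. Because $(\ell-1)c_{n,r}\le\ell c_{n,r}$, you may use the weight $e^{-\ell C_n(r)}$ instead. Then $e^{-\ell C_n}\|\rho_{n,r}\|_\ell^\ell=\|u_r\|_\ell^\ell$ and $e^{-\ell C_n}\|\nabla\rho_{n,r}^{\ell/2}\|_2^2=\|\nabla u_r^{\ell/2}\|_2^2$ for the \emph{single} function $u_r=e^{-C_n(r)}\rho_{n,r}$. This is what the paper does: Lemma~\ref{lem:L2-Linf-moser} applies to $u$ verbatim, and only one factor $e^{\mathfrak C_n}$ is lost when returning to $\rho$.

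Your exponent-dependent weights can also be made to work. In fact the weighted recursion closes with no per-step loss, since $-(\ell_{j+1}-1)+\tfrac2d(\ell_j-1)\le-(\ell_j-1)$. So the claim that one integrating factor cannot serve all Moser exponents is incorrect, and the telescoping repair is unnecessary.
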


\begin{proof}
Write
\begin{equation*}
    r_0=T_n-h_0, \qquad r_1=T_n.
\end{equation*}
Condition throughout on $\mathcal F_n$. Recall that
\begin{equation*}
    a_r=\beta_{T_n-r}, \qquad c_{n,r} = \|(\nabla\cdot\widehat b_{n,r})_-\|_\infty, \qquad C_n(r)=\int_{r_0}^r c_{n,s}\,\mathrm ds.
\end{equation*}
Set
\begin{equation*}
    \mathfrak C_n=C_n(r_1), \qquad u_r=e^{-C_n(r)}\rho_{n,r}, \qquad M=\|\rho_{n,r_0}\|_1.
\end{equation*}

By the preceding drift-regularity result, on \([r_0,r_1]\) the learned drift is globally Lipschitz in the spatial variable with a time-integrable Lipschitz coefficient and has at most linear growth with a time-integrable growth coefficient. Since \(0<\underline\beta\leq a_r\leq\bar\beta\), Lemma~\ref{lem:time-inhomogeneous-sde-wellposed} yields a unique non-explosive strong solution with volatility \(\sqrt{2a_r}\).

By Lemma~\ref{lem:sde-markov-kernel}, this solution generates a time-inhomogeneous Markov evolution
\begin{equation*}
    (P_{s,r})_{r_0\leq s\leq r\leq r_1}.
\end{equation*}
Its dual evolution preserves finite nonnegative mass and is contractive in total variation. In particular, whenever the involved measures have densities,
\begin{equation}
    \|P_{s,r}^*f-P_{s,r}^*g\|_1 \leq \|f-g\|_1, \qquad r_0\leq s\leq r\leq r_1. \label{eq:L1-contraction-markov}
\end{equation}

The original learned process restricted to \([r_0,r_1]\) is a weak solution of this same SDE. Pathwise uniqueness implies uniqueness in law, and hence
\begin{equation}
    \rho_{n,r}\,\mathrm dx = P_{r_0,r}^*(\rho_{n,r_0}\,\mathrm dx), \qquad r_0\leq r\leq r_1. \label{eq:original-semigroup-identification}
\end{equation}
We next verify that the renormalization argument of the preceding energy lemma extends to every nonnegative finite-mass solution of the same Fokker--Planck equation. The only additional point is the cutoff at spatial infinity.

Writing $r=T_n-u$, the global Hessian bound and the global Lipschitz property of $e_{n,u}$ imply
\begin{equation*}
    \|\widehat b_{n,r}(x)\| \leq g_{n,r}(1+\|x\|),
\end{equation*}
where
\begin{equation*}
    g_{n,T_n-u} = C\beta_u \left\{ 1+C_U +\|e_{n,u}(0)\| +\Lip(e_{n,u}) \right\},
\end{equation*}
and, for the conditional realization under consideration,
\begin{equation*}
    \int_{r_0}^{r_1}g_{n,r}\,\mathrm dr<\infty.
\end{equation*}

Let $\varrho_r$ be any nonnegative distributional solution of the same Fokker--Planck equation satisfying $\|\varrho_r\|_1=M_\varrho<\infty$ for almost every $r\in(r_0,r_1)$. Let $\chi_R$ be the standard cutoff such that
\begin{equation*}
    0\leq\chi_R\leq1, \qquad \chi_R=1\text{ on }B_R, \qquad \supp(\chi_R)\subset B_{2R},
\end{equation*}
and
\begin{equation*}
    \|\nabla\chi_R\|\leq\frac CR, \qquad |\Delta\chi_R|\leq\frac C{R^2}.
\end{equation*}
Its gradient is supported on $A_R=\{x\in\mathbb R^d:R<\|x\|<2R\}$. Since $0\leq\Phi_{k,\ell}(s)\leq\ell k^{\ell-1}s$, we have
\begin{align*}
    \left| \int_{\mathbb R^d} \Phi_{k,\ell}(\varrho_r) \widehat b_{n,r}\cdot\nabla\chi_R\,\mathrm dx \right| &\leq C\ell k^{\ell-1}g_{n,r} \int_{A_R} \left( \frac1R+\frac{\|x\|}{R} \right) \varrho_r(x)\,\mathrm dx\\ &\leq C\ell k^{\ell-1}g_{n,r} \left\{ \frac{M_\varrho}{R} + 2\int_{\{\|x\|>R\}}\varrho_r(x)\,\mathrm dx \right\}.
\end{align*}
For almost every $r$, the quantity in braces converges to zero as $R\to\infty$. For $R\geq1$, it is bounded by $3M_\varrho$. Since $g_n\in L^1(r_0,r_1)$, dominated convergence gives
\begin{equation*}
    \int_a^b\int_{\mathbb R^d} \Phi_{k,\ell}(\varrho_r) \widehat b_{n,r}\cdot\nabla\chi_R \,\mathrm dx\,\mathrm dr \longrightarrow0
\end{equation*}
as $R\to\infty$. Similarly,
\begin{align*}
    \left| \int_a^b a_r \int_{\mathbb R^d} \Phi_{k,\ell}(\varrho_r)\Delta\chi_R \,\mathrm dx\,\mathrm dr \right|\leq \frac{C\ell k^{\ell-1}M_\varrho}{R^2} \int_a^b a_r\,\mathrm dr \longrightarrow0.
\end{align*}
All local commutator and mollification arguments use only
\begin{equation*}
    \varrho\in L^1_{\mathrm{loc}}, \qquad \widehat b_n \in L^1_{\mathrm{loc}} \bigl((r_0,r_1);W^{1,\infty}_{\mathrm{loc}}(\mathbb R^d)\bigr),
\end{equation*}
and therefore apply unchanged to $\varrho$. The lower-semicontinuity and truncation limits are also identical. Hence the same renormalized energy inequality holds for every such finite-mass solution.

For $N\geq1$, define the bounded initial densities
\begin{equation*}
    \rho_{n,r_0}^{(N)} = M \frac{ \rho_{n,r_0}\wedge N }{ \displaystyle \int_{\mathbb R^d} (\rho_{n,r_0}(z)\wedge N)\,\mathrm dz }.
\end{equation*}
Then
\begin{equation*}
    \rho_{n,r_0}^{(N)} \in L^1(\mathbb R^d)\cap L^\infty(\mathbb R^d), \qquad \|\rho_{n,r_0}^{(N)}\|_1=M,
\end{equation*}
and
\begin{equation*}
    \rho_{n,r_0}^{(N)} \to \rho_{n,r_0} \qquad \text{in }L^1(\mathbb R^d).
\end{equation*}
Put
\begin{equation*}
    q_N = \frac{M}{\int(\rho_{n,r_0}\wedge N)\,\mathrm dz}.
\end{equation*}
Then \(\rho_{n,r_0}^{(N)}\leq q_N\rho_{n,r_0}\). Positivity of the dual Markov evolution and \eqref{eq:original-semigroup-identification} give
\begin{equation*}
    P_{r_0,r}^*(\rho_{n,r_0}^{(N)}\,\mathrm dx) \leq q_N\rho_{n,r}\,\mathrm dx.
\end{equation*}
Thus the measure on the left has a density, which we denote by \(\rho_{n,r}^{(N)}\). By Itô's formula it is a distributional solution of the same Fokker--Planck equation. Set $u_r^{(N)}= e^{-C_n(r)}\rho_{n,r}^{(N)}.$ Mass preservation implies
\begin{equation*}
    \|u_r^{(N)}\|_1 = e^{-C_n(r)} \|\rho_{n,r}^{(N)}\|_1 \leq M.
\end{equation*}

Since \(\rho_{n,r_0}^{(N)}\in L^\infty\), taking $M_r=\|\rho_{n,r_0}^{(N)}\|_{\infty}\exp(C_n(r))$ as a supersolution, the weak maximum principle gives
\begin{equation}
    \|\rho_{n,r}^{(N)}\|_\infty \leq e^{C_n(r)}\|\rho_{n,r_0}^{(N)}\|_\infty. \label{eq:bounded-initial-maximum-principle}
\end{equation}
Indeed, the right-hand side is the spatially constant supersolution associated with \(-\nabla\cdot\widehat b_{n,r}\leq c_{n,r}\). Consequently, for every \(\ell\geq2\),
\begin{equation}
    \|u_r^{(N)}\|_\ell^\ell \leq M\|\rho_{n,r_0}^{(N)}\|_\infty^{\ell-1}, \label{eq:bounded-initial-energy-finite}
\end{equation}
so the energy is locally integrable in time for each fixed \(N\).

For every $\ell\geq2$, the renormalized energy inequality applied to $\rho_{n,\cdot}^{(N)}$ gives, for Lebesgue-a.e. $r_0<a<b<r_1$,
\begin{align*}
    \|\rho_{n,b}^{(N)}\|_\ell^\ell &+ \frac{4(\ell-1)}{\ell} \int_a^b a_r \left\| \nabla (\rho_{n,r}^{(N)})^{\ell/2} \right\|_2^2\,\mathrm dr\leq \|\rho_{n,a}^{(N)}\|_\ell^\ell + (\ell-1) \int_a^b c_{n,r} \|\rho_{n,r}^{(N)}\|_\ell^\ell\,\mathrm dr.
\end{align*}
Since
\begin{equation*}
    (\ell-1)c_{n,r} \|\rho_{n,r}^{(N)}\|_\ell^\ell \leq \ell c_{n,r} \|\rho_{n,r}^{(N)}\|_\ell^\ell,
\end{equation*}
the integral Gronwall lemma yields
\begin{align*}
    e^{-\ell C_n(b)} \|\rho_{n,b}^{(N)}\|_\ell^\ell &+ \frac{4(\ell-1)}{\ell} \int_a^b e^{-\ell C_n(r)} a_r \left\| \nabla (\rho_{n,r}^{(N)})^{\ell/2} \right\|_2^2\,\mathrm dr\leq e^{-\ell C_n(a)} \|\rho_{n,a}^{(N)}\|_\ell^\ell.
\end{align*}
Equivalently,
\begin{align*}
    \|u_b^{(N)}\|_\ell^\ell &+ \frac{4(\ell-1)}{\ell} \int_a^b a_r \left\| \nabla (u_r^{(N)})^{\ell/2} \right\|_2^2\,\mathrm dr\leq \|u_a^{(N)}\|_\ell^\ell.
\end{align*}
Using $a_r\geq\underline\beta$, we obtain
\begin{align}
    \|u_b^{(N)}\|_\ell^\ell &+ \frac{4(\ell-1)}{\ell} \underline\beta \int_a^b \left\| \nabla (u_r^{(N)})^{\ell/2} \right\|_2^2\,\mathrm dr \leq \|u_a^{(N)}\|_\ell^\ell \label{eq:dissipative-approx-ae}
\end{align}
for Lebesgue-a.e. $r_0<a<b<r_1$.

By \eqref{eq:bounded-initial-energy-finite}, the energy term in \eqref{eq:dissipative-approx-ae} belongs to $L^1_{\mathrm{loc}}$. For any compact subinterval of $(r_0,r_1)$, choose admissible endpoints outside that interval in \eqref{eq:dissipative-approx-ae}; the same inequality then shows that its dissipation term also belongs to $L^1_{\mathrm{loc}}$. Applying Lemma~\ref{lem:canonical-dissipative-representative}, let $\mathcal E_{N,\ell}$ denote the associated canonical energy. Then $\mathcal E_{N,\ell}(r)=\|u_r^{(N)}\|_\ell^\ell$ for Lebesgue-a.e. $r\in(r_0,r_1)$, and for every $r_0<a<b<r_1$,
\begin{align}
    \mathcal E_{N,\ell}(b) &+ \frac{4(\ell-1)}{\ell} \underline\beta \int_a^b \left\| \nabla (u_r^{(N)})^{\ell/2} \right\|_2^2\,\mathrm dr\leq \mathcal E_{N,\ell}(a). \label{eq:canonical-energy-final-proof}
\end{align}

By Lemma~\ref{lem:L1-L2-smoothing}, for every $s\in(r_0,r_1)$,
\begin{equation}
    \mathcal E_{N,2}(s)^{1/2} \leq C_dM \{\underline\beta(s-r_0)\}^{-d/4}, \label{eq:L2-uniform-N}
\end{equation}
uniformly in $N$. Fix $s=r_0+\frac{h_0}{2}$. Let
\begin{equation*}
    \kappa=1+\frac2d, \qquad \ell_j=2\kappa^j.
\end{equation*}
Since $N\in\mathbb N$ and $j\in\mathbb N$ range over countable sets, there exists a common full-measure set $\mathcal T\subset(s,r_1)$ such that $\mathcal E_{N,\ell_j}(\tau)=\|u_\tau^{(N)}\|_{\ell_j}^{\ell_j}$ for every $\tau\in\mathcal T$, every $N$, and every $j$.

For every $\tau\in\mathcal T$, Lemma~\ref{lem:L2-Linf-moser} and \eqref{eq:L2-uniform-N} give
\begin{align*}
    \|u_\tau^{(N)}\|_\infty &\leq C_d \{\underline\beta(\tau-s)\}^{-d/4} \mathcal E_{N,2}(s)^{1/2}\leq C_dM \{\underline\beta(\tau-s)\}^{-d/4} (\underline\beta h_0)^{-d/4},
\end{align*}
uniformly in $N$. In particular, if
\begin{equation*}
    \tau \in \mathcal T \cap \left[ r_0+\frac{3h_0}{4}, r_1 \right),
\end{equation*}
then $\tau-s\geq\frac{h_0}{4}$, and hence
\begin{equation}
    \|u_\tau^{(N)}\|_\infty \leq C_dM (\underline\beta h_0)^{-d/2}. \label{eq:uniform-Linf-N}
\end{equation}

By \eqref{eq:L1-contraction-markov}, for every $r\in[r_0,r_1]$,
\begin{equation*}
    \|\rho_{n,r}^{(N)}-\rho_{n,r}\|_1 \leq \|\rho_{n,r_0}^{(N)}-\rho_{n,r_0}\|_1 \longrightarrow0.
\end{equation*}
Therefore, for every fixed $\tau\in\mathcal T\cap\left[r_0+\frac{3h_0}{4},r_1\right)$, we have $\|u_\tau^{(N)}-u_\tau\|_1\to0$. After passing to a subsequence, $u_\tau^{(N)}(x)\to u_\tau(x)$ for Lebesgue-a.e. $x\in\mathbb R^d$. The uniform estimate \eqref{eq:uniform-Linf-N} therefore implies
\begin{equation*}
    \|u_\tau\|_\infty \leq C_dM (\underline\beta h_0)^{-d/2}.
\end{equation*}
Since $\rho_{n,\tau}=e^{C_n(\tau)}u_\tau$ and $C_n(\tau)\leq C_n(r_1)=\mathfrak C_n$, we obtain
\begin{equation}
    \|\rho_{n,\tau}\|_\infty \leq C_dM (\underline\beta h_0)^{-d/2} e^{\mathfrak C_n} \label{eq:preterminal-density-bound}
\end{equation}
for every $\tau\in\mathcal T\cap\left[r_0+\frac{3h_0}{4},r_1\right).$

Choose a sequence $\tau_k\in\mathcal T,\tau_k\uparrow r_1=T_n.$ By continuity of the learned reverse SDE paths, conditionally on $\mathcal F_n$, $\widehat Y_{\tau_k}\to \widehat Y_{T_n}$ almost surely. Hence
\begin{equation*}
    \mathcal L(\widehat Y_{\tau_k}\mid\mathcal F_n) \Rightarrow \mathcal L(\widehat Y_{T_n}\mid\mathcal F_n) = \widehat\mu_n.
\end{equation*}
Set $B_n=C_dM(\underline\beta h_0)^{-d/2}e^{\mathfrak C_n}.$ By \eqref{eq:preterminal-density-bound}, we have $\|\rho_{n,\tau_k}\|_\infty\leq B_n$. Therefore, for every nonnegative
$\varphi\in C_c(\mathbb R^d)$,
\begin{align*}
    \int_{\mathbb R^d} \varphi(z)\,\widehat\mu_n(\mathrm dz) &= \lim_{k\to\infty} \int_{\mathbb R^d} \varphi(z)\rho_{n,\tau_k}(z)\,\mathrm dz\leq B_n \int_{\mathbb R^d} \varphi(z)\,\mathrm dz.
\end{align*}
Thus $\widehat\mu_n$ is absolutely continuous with density $\rho_{n,T_n}$ and
\begin{equation*}
    \|\rho_{n,T_n}\|_\infty \leq C_d (\underline\beta h_0)^{-d/2} e^{\mathfrak C_n} \|\rho_{n,T_n-h_0}\|_1.
\end{equation*}
After enlarging $C_d$ if necessary,
\begin{equation*}
    \|\rho_{n,T_n}\|_\infty \leq C_d (\underline\beta h_0)^{-d/2} \exp(C_d\mathfrak C_n) \|\rho_{n,T_n-h_0}\|_1.
\end{equation*}
Since $\rho_{n,T_n-h_0}$ is a probability density and $\mathfrak C_n=O_p(1)$, we conclude that
\begin{equation*}
    \|\rho_{n,T_n}\|_\infty=O_p(1).
\end{equation*}
\end{proof}

\begin{mylemma}
    Under Assumptions~\ref{ass:target-distribution}, \ref{ass:noise-schedule}, and~\ref{ass:score-approximation}, conditionally on $\widehat \mu_n$, let $(X_i^*,Y_i^*)\overset{\mathrm{iid}}{\sim}\widehat \mu_n$, and let $\widehat \Sigma_n^*=n^{-1}\sum_{i=1}^n X_i^*X_i^{*\top}$. Then for every fixed $a>0$, there exist a deterministic constant $\eta_a>0$ and a random sequence $C_{n,a}=O_p(1)$ such that, with probability tending to one,
    \begin{equation*}
        \sup_{0<t<\eta_a}t^{-a}\mathbb P^*(\lambda_{\min}(\widehat \Sigma_n^*)\leq t\mid \widehat \mu_n)\leq C_{n,a}.
    \end{equation*}
\end{mylemma}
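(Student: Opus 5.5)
The plan follows the outline given after the key lemma in Section~2: first convert the terminal density bound into slab anti-concentration, then turn that into a small-ball bound for a fixed-size block of generated covariates, and finally amplify over $n/k$ independent blocks. Throughout, we work on the event of probability tending to one on which $\|\rho_{n,T_n}\|_\infty\leq B_n$ with $B_n=O_p(1)$ (the preceding lemma), and on which the $q$-th moment satisfies $\mathbb E_{\widehat\mu_n}\|Z\|^q\leq M_n=O_p(1)$. The moment bound holds because $W_q(\widehat\mu_n,\mu)\to_p0$ by Lemma~\ref{fix_wq}, together with $\mathbb E_\mu\|Z\|^q<\infty$.

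\emph{Step 1: slab anti-concentration.} Fix a unit vector $v\in\mathbb R^p$, a real $b$, and $\epsilon>0$. Split the slab $\{|v^\top x-b|\leq\epsilon\}$ according to whether $\|z\|\leq R$ or $\|z\|>R$. Inside the ball, the slab intersected with $B_R\subset\mathbb R^d$ has Lebesgue measure at most $2\epsilon(2R)^{d-1}$, so its $\widehat\mu_n$-mass is at most $B_n\,2\epsilon(2R)^{d-1}$. Outside the ball, Markov's inequality gives mass at most $M_nR^{-q}$. Choosing $R=\epsilon^{-1/(d-1+q)}$ balances the two terms and yields
\begin{equation*}
\sup_{\|v\|=1,\,b}\mathbb P_{\widehat\mu_n}(|v^\top X-b|\leq\epsilon)\leq K_n\epsilon^\gamma,\qquad \gamma=\frac{q}{d-1+q},\qquad K_n=O_p(1).
\end{equation*}

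\emph{Step 2: lower tail for a fixed block.} Take $k\geq p$ generated rows and set $S_k=\sum_{i\leq k}X_i^*X_i^{*\top}$. We aim to show $\mathbb P^*(\lambda_{\min}(S_k)\leq t)\leq C_n t^{\gamma'}$ for some $\gamma'>0$ and $C_n=O_p(1)$. Since $\lambda_{\min}(S_k)=\min_{\|v\|=1}\sum_i (v^\top X_i^*)^2$, independence of the rows and Step~1 with $b=0$ give, for each fixed $v$,
\begin{equation*}
\mathbb P^*\Bigl(\sum_{i\leq k}(v^\top X_i^*)^2\leq t\Bigr)\leq (K_n t^{\gamma/2})^k.
\end{equation*}
To pass from a fixed $v$ to the minimum over the sphere, use an $\epsilon$-net of the unit sphere in $\mathbb R^p$ together with the norm bound $\mathbb P^*(\|S_k\|_{\op}>t^{-\delta})\leq k M_n t^{\delta q/2}$. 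With mesh $\epsilon\asymp t^{1/2+\delta}$, the net has cardinality of order $t^{-(p-1)(1/2+\delta)}$. Taking $k$ large relative to $p$ makes $k\gamma/2$ exceed this exponent, so the union bound gives a positive power of $t$. A single block therefore only yields some small exponent $\gamma'>0$; this is acceptable, since Step~3 amplifies it.

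\emph{Step 3: block amplification.} Partition the $n$ rows into $m=\lfloor n/k\rfloor$ disjoint blocks of size $k$. Because $S_n^{(\text{total})}\succeq\sum_j S_{k}^{(j)}\succeq S_k^{(j)}$ for every block $j$, we have $\lambda_{\min}(n\widehat\Sigma_n^*)\geq\max_j\lambda_{\min}(S_k^{(j)})$. The blocks are independent, so
\begin{equation*}
\mathbb P^*\bigl(\lambda_{\min}(\widehat\Sigma_n^*)\leq t\bigr)\leq \bigl(C_n(nt)^{\gamma'}\bigr)^{m}.
\end{equation*}
The factor $n$ works against us here. To handle it, split into two ranges of $t$. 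For $nt\leq\eta$ small, the bound is at most $(C_n\eta^{\gamma'})^{m-1}C_n(nt)^{\gamma'}$, which beats any $t^a$ once $m\gamma'>a$ and $C_n\eta^{\gamma'}<1$. For $t\in[\eta/n,\eta_a)$, use instead a law-of-large-numbers bound: $\widehat\Sigma_n^*$ concentrates around $\Sigma_X(\widehat\mu_n)$, which converges to $\Sigma_X(\mu)\succ0$ by the $W_q$ convergence. A fixed-dimensional moment (Rosenthal) bound on $\|\widehat\Sigma_n^*-\Sigma_X(\widehat\mu_n)\|_{\op}$ combined with a net argument then gives probability $O_p(n^{-a})$, which is at most $O_p(t^a)$ on this range once $\eta_a<\lambda_{\min}(\Sigma_X(\mu))/2$. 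The only difficulty is that Rosenthal's inequality requires $q/2$-th moments of order $a$; if $q$ is too small for this, replace the moment bound by a truncated Bernstein bound over the blocks, using the uniform small-ball estimate from Step~2 in place of the missing moments.

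I expect Step~3 to be the main obstacle, specifically producing a uniform polynomial rate in the intermediate range of $t$ with only $q>4$ moments. The fallback is to apply block amplification with block sizes adapted to $t$: a Chernoff bound on the number of blocks whose $\lambda_{\min}$ exceeds a fixed threshold gives exponential decay in $n$, which dominates $t^{a}$ for $t\geq\eta/n$.
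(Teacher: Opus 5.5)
Your argument goes through, but only if you commit to the Chernoff fallback; the primary plan for the intermediate range does not work. It is also a genuinely different route from the paper's.

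\textbf{The intermediate range needs the fallback.} For $t\in[\eta/n,\eta_a)$ you propose Rosenthal applied to the single matrix $\widehat\Sigma_n^*-\Sigma_X(\widehat\mu_n)$. Only moments of $\widehat\mu_n$ up to a fixed order $q>4$ are controlled: they come from $W_q$ convergence, and the $L^\infty$ density bound adds none. So this route gives at best a tail $O_p(n^{-q/4})$, not $O_p(n^{-a})$ for every $a$.

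The Chernoff argument closes the gap, and fixed-size blocks suffice, so there is no need to adapt block sizes to $t$.
\begin{itemize}
\item By Step~2, pick $\tau$ with $\tau^{-1}=O_p(1)$ and $\mathbb P^*(\lambda_{\min}(S_k^{(j)})\leq\tau\mid\widehat\mu_n)\leq 1/4$.
\item Hoeffding then gives at least $m/2$ blocks with $\lambda_{\min}(S_k^{(j)})>\tau$, outside an event of probability $e^{-m/8}$.
\item Use superadditivity, $\lambda_{\min}(\sum_j S_k^{(j)})\geq\sum_j\lambda_{\min}(S_k^{(j)})$, rather than the $\max$ bound from Step~3. On the good event this yields $\lambda_{\min}(\widehat\Sigma_n^*)\geq\tau/(4k)$.
\item For $t\geq\eta/n$ you then have $e^{-m/8}\leq Cn^{-a}\leq C\eta^{-a}t^a$.
\item On $[\tau/(4k),\eta_a)$ the trivial bound $t^{-a}\leq(4k/\tau)^a=O_p(1)$ suffices.
\end{itemize}

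\textbf{A slip in the extreme range.} Keeping a single factor $C_n(nt)^{\gamma'}$ does not dominate $t^a$ when $a>\gamma'$. Instead, for $t\leq\eta/n$ use
$C_n^m(nt)^{m\gamma'}\leq t^a\,(C_n\eta^{\gamma'})^m(n/\eta)^a$.
Alternatively, note that your Step~2 allows any exponent $\gamma'$ if you take $\delta$ and then $k$ large, contrary to your remark that a single block yields only a small exponent.

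\textbf{How the paper does it.} The paper avoids both the net and the case split in $t$. It uses a fixed number $k$ of large blocks, each of size $m=\lfloor n/k\rfloor$, and proves two bounds per block:
\begin{itemize}
\item a small-ball bound $A_n(ms)^{\gamma/2}$, obtained from a $p\times p$ sub-block via the negative second moment identity $\sum_j s_j(A)^{-2}=\sum_i d_i^{-2}$ and slab anti-concentration for each distance $d_i$ to the span of the other rows;
\item a matrix-Rosenthal bound $B_n m^{-q/4}$, valid for $s<\lambda_0/2$.
\end{itemize}
Since $\lambda_{\min}(\widehat\Sigma_n^*)\leq t$ forces $\lambda_{\min}(S_{m,j}^*)\leq 2kt$ in every block, independence gives the bound $(\min\{\widetilde A_n(mt)^{\gamma/2},B_n m^{-q/4}\})^k$. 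Interpolating with $\min\{x,y\}\leq x^\theta y^{1-\theta}$ and $\theta=2a/(k\gamma)$ gives $C_{n,a}\,t^a m^{a+aq/(2\gamma)-kq/4}\leq C_{n,a}\,t^a$ for all $t<\lambda_0/(4k)$ at once.

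This is exactly the remedy for the obstacle you flagged. Independence across $k$ blocks raises the polynomial Rosenthal tail to $m^{-kq/4}$, so your original Rosenthal idea would have worked had you applied it blockwise. Your route is longer, but it proves more: an exponential lower tail at fixed levels, from anti-concentration and $q$ moments alone, without invoking $\lambda_{\min}(\Sigma_X(\widehat\mu_n))\geq\lambda_0$.
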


\begin{proof}
    We first prove an anti-concentration inequality. By assumption,
    \begin{equation*}
        W_q(\widehat \mu_n,\mu)\to_p 0.
    \end{equation*}
    Hence
    \begin{equation*}
        M_{q,n}:=\int_{\mathbb R^d}\|z\|^q \mathrm d\widehat \mu_n(z)=O_p(1),\Sigma_n:=\int xx^{\top}\mathrm d\widehat \mu_n(x,y)\to_p \Sigma_X(\mu).
    \end{equation*}
    Since $\Sigma_X(\mu)$ is positive definite, there exists a deterministic $\lambda_0>0$ such that, with probability tending to one, $\lambda_{\min}(\Sigma_n)\geq \lambda_0$. Fix $v\in \mathbb S^{p-1}$, $b\in \mathbb R$, and $\epsilon\in(0,1)$, and put $R=\epsilon^{-1/(q+d-1)}$. The intersection of the ball $\{z\in \mathbb R^d:\|z\|\leq R\}$ with the slab $\{(x,y):|v^{\top}x-b|\leq \epsilon\}$ has Lebesgue volume at most $C_d\epsilon R^{d-1}$. Hence, by Markov's inequality and $M_{\infty,n}=\|\rho_{n,T_n}\|_{\infty}=O_p(1)$, we have
    \begin{equation*}
        \mathbb P_{\widehat \mu_n}(|v^{\top}X-b|\leq \epsilon)\leq C_dM_{\infty,n}\epsilon R^{d-1}+M_{q,n}R^{-q}\leq K_n\epsilon^{\gamma},
    \end{equation*}
    where
    \begin{equation*}
        \gamma =\dfrac q{q+d-1}>0, \qquad K_n=C_d(1+M_{\infty,n}+M_{q,n})=O_p(1).
    \end{equation*}
    Thus
    \begin{equation*}
        \sup_{\|v\|=1}\sup_{b\in \mathbb R}\mathbb P_{\widehat \mu_n}(|v^{\top}X-b|\leq \epsilon)\leq K_n\epsilon^{\gamma}.
    \end{equation*}
    Conditionally on $\widehat \mu_n$, let $S_m^*=m^{-1}\sum_{i=1}^m X_i^*(X_i^*)^{\top}$, where $m\geq p$. Form the $p\times p$ matrix $A=(X_1^{*\top},X_2^{*\top},\ldots,X_p^{*\top})$. Since $S_m^*=m^{-1}A^{\top}A+m^{-1}\sum_{i=p+1}^m X_i^*X_i^{*\top} \succeq m^{-1}A^{\top}A$, we have
    \begin{equation*}
        \lambda_{\min}(S_m^*)\geq \dfrac 1m s_{\min}(A)^2.
    \end{equation*}
    Consequently,
    \begin{equation*}
        \{\lambda_{\min}(S_m^*)\leq s\}\subset \{s_{\min}(A)\leq \sqrt{ms}\}.
    \end{equation*}
    For $i=1,\ldots,p$, let $H_i=\mathrm{span}\{X_j^*:j\neq i,\ 1\leq j\leq p\}$ and $d_i=\mathrm{dist}(X_i^*,H_i)$. If $A$ is singular, the estimate below is trivial. Otherwise, the negative second-moment identity gives
    \begin{equation*}
        \sum_{j=1}^p s_j(A)^{-2}=\sum_{i=1}^p d_i^{-2}.
    \end{equation*}
    Hence $s_{\min}(A)\leq r$ implies $\min_{1\leq i\leq p}d_i\leq\sqrt p\,r$. Conditionally on all rows except $X_i^*$, the subspace $H_i$ is fixed and proper. Choose a unit vector $v_i$ normal to $H_i$. Since $H_i$ is a linear subspace, $d_i\leq\delta$ implies $|v_i^{\top}X_i^*|\leq\delta$. Therefore, the anti-concentration inequality gives
    \begin{equation*}
        \mathbb P^*(d_i \leq \sqrt p r|\{X_j^*:j\neq i\},\widehat \mu_n)\leq K_n(\sqrt p r)^\gamma.
    \end{equation*}
    Taking a union bound, we have
    \begin{equation*}
        \mathbb P^*(\lambda_{\min}(S_m^*)\leq s\mid \widehat \mu_n)\leq A_n(ms)^{\gamma/2},A_n=O_p(1).
    \end{equation*}
    We also need a bound away from the extreme small-ball regime. Since $p$ is fixed, the matrix Rosenthal inequality and $M_{q,n}=O_p(1)$ give
    \begin{equation*}
        \mathbb E^*[\|S_m^*-\Sigma_n\|_{\op}^{q/2}\mid \widehat \mu_n]\leq H_n m^{-q/4},H_n=O_p(1).
    \end{equation*}
    On the event $\{\lambda_{\min}(\Sigma_n)\geq \lambda_0\}$, if $s<\lambda_0/2$, then Weyl's inequality implies
    \begin{equation*}
        \lambda_{\min}(S_m^*)\leq s\Rightarrow \|S_m^*-\Sigma_n\|_{\op}\geq \lambda_0/2.
    \end{equation*}
    Therefore, by Markov's inequality, we have
    \begin{equation*}
        \mathbb P^*(\lambda_{\min}(S_m^*)\leq s\mid \widehat \mu_n)\leq B_n m^{-q/4},B_n=O_p(1).
    \end{equation*}
    Combining the two inequalities, for $0<s<\lambda_0/2$ we have
    \begin{equation*}
        \mathbb P^*(\lambda_{\min}(S_m^*)\leq s\mid \widehat \mu_n)\leq \min \{A_n(ms)^{\gamma/2},B_n m^{-q/4}\}.
    \end{equation*}
    Fix $a>0$. Choose an integer $k$ sufficiently large that
    \begin{equation*}
        k>\dfrac{2a}{\gamma}+\dfrac{4a}q.
    \end{equation*}
    Set $m=\lfloor n/k\rfloor$ and partition the first $km$ observations into $k$ blocks of size $m$. Let $S_{m,j}^*$ denote the sample covariance in block $j$. Since the remaining observations contribute a positive semidefinite matrix,
    \begin{equation*}
        \widehat \Sigma_n^*\succeq \dfrac mn \sum_{j=1}^k S_{m,j}^*.
    \end{equation*}
    If $\lambda_{\min}(\widehat \Sigma_n^*)\leq t$, then there exists $v\in \mathbb S^{p-1}$ such that $m/n \cdot \sum_{j=1}^k v^{\top}S_{m,j}^*v\leq t$. All terms in the sum are nonnegative. Hence for every $j$, $\lambda_{\min}(S_{m,j}^*)\leq v^{\top}S_{m,j}^* v\leq \dfrac nm t \leq 2kt$ for all sufficiently large $n$. Therefore
    \begin{equation*}
        \{\lambda_{\min}(\widehat \Sigma_n^*)\leq t\}\subset \bigcap_{j=1}^k \{\lambda_{\min}(S_{m,j}^*)\leq 2kt\}.
    \end{equation*}
    The blocks are conditionally independent given $\widehat \mu_n$. Choose $\eta_a=\lambda_0/4k$. For $0<t<\eta_a$, we have
    \begin{equation*}
        \mathbb P^*(\lambda_{\min}(\widehat \Sigma_n^*)\leq t\mid \widehat \mu_n)\leq (\min(\tilde A_n(mt)^{\gamma/2},B_nm^{-q/4}))^k,
    \end{equation*}
    where $\widetilde A_n=O_p(1)$. Let $\theta=2a/(k\gamma)\in(0,1)$. Using $\min \{x,y\}\leq x^{\theta}y^{1-\theta}$, we have
    \begin{equation*}
        \mathbb P^*(\lambda_{\min}(\widehat \Sigma_n^*)\leq t\mid \widehat \mu_n)\leq C_{n,a}t^a m^{a+\frac{aq}{2\gamma}-\frac{kq}{4}},
    \end{equation*}
    where $C_{n,a}=\widetilde A_n^{k\theta}B_n^{k(1-\theta)}=O_p(1)$. Since $k>2a/\gamma+4a/q$, we have
    \begin{equation*}
        a+\dfrac{aq}{2\gamma}-\dfrac{kq}4<0.
    \end{equation*}
    Thus the power of $m$ in the previous equation is nonpositive for all sufficiently large $n$. Hence
    \begin{equation*}
        \mathbb P^*(\lambda_{\min}(\widehat \Sigma_n^*)\leq t\mid \widehat \mu_n) \leq C_{n,a}t^a, \qquad 0<t<\eta_a.
    \end{equation*}
    Since $a>0$ was arbitrary, the displayed bound holds in particular for some $a>2q/(q-4)$.
\end{proof}

\begin{mylemma}
    For every fixed $Q>0$, we have
    \begin{equation*}
        \mathbb E^*[\lambda_{\min}(\widehat \Sigma_n^*)^{-Q}\mid \widehat \mu_n]=O_p(1).
    \end{equation*}
    \label{conc}
\end{mylemma}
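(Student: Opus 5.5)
We will derive the inverse-moment bound from the polynomial lower-tail estimate of the preceding lemma via the layer-cake formula. Fix $Q>0$ and choose $a>Q$, say $a=Q+1$. The preceding lemma supplies a deterministic $\eta_a>0$ and a random $C_{n,a}=O_p(1)$ such that, on an event $\mathcal G_n$ with $\mathbb P(\mathcal G_n)\to1$,
\begin{equation*}
    \mathbb P^*\bigl(\lambda_{\min}(\widehat\Sigma_n^*)\leq t\mid\widehat\mu_n\bigr)\leq C_{n,a}t^a,\qquad 0<t<\eta_a.
\end{equation*}
Write $\lambda^*=\lambda_{\min}(\widehat\Sigma_n^*)$, which is nonnegative. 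On the event $\{\lambda^*=0\}$ we set $(\lambda^*)^{-Q}=\infty$. The tail bound shows that this event has conditional probability zero on $\mathcal G_n$, since its probability is at most $C_{n,a}t^a$ for every small $t$.

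The key step is the layer-cake representation
\begin{equation*}
    \mathbb E^*\bigl[(\lambda^*)^{-Q}\mid\widehat\mu_n\bigr]=\int_0^\infty\mathbb P^*\bigl((\lambda^*)^{-Q}>s\mid\widehat\mu_n\bigr)\,\mathrm ds=\int_0^\infty\mathbb P^*\bigl(\lambda^*<s^{-1/Q}\mid\widehat\mu_n\bigr)\,\mathrm ds.
\end{equation*}
Split the integral at $s_0=\eta_a^{-Q}$. For $s\leq s_0$, bound the integrand by one, which contributes at most $s_0$. For $s>s_0$, we have $t=s^{-1/Q}<\eta_a$, so the integrand is at most $C_{n,a}s^{-a/Q}$. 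Since $a/Q>1$, the tail contributes
\begin{equation*}
    C_{n,a}\int_{s_0}^\infty s^{-a/Q}\,\mathrm ds=C_{n,a}\frac{s_0^{1-a/Q}}{a/Q-1}.
\end{equation*}
Hence, on $\mathcal G_n$,
\begin{equation*}
    \mathbb E^*\bigl[(\lambda^*)^{-Q}\mid\widehat\mu_n\bigr]\leq\eta_a^{-Q}+\frac{Q}{a-Q}\,C_{n,a}\,\eta_a^{a-Q}.
\end{equation*}
This bound is $O_p(1)$ because $C_{n,a}=O_p(1)$ and $\eta_a$ is deterministic.

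Since $\mathbb P(\mathcal G_n^c)\to0$, the conditional expectation is $O_p(1)$ unconditionally. Indeed, for any $K$,
\begin{equation*}
    \mathbb P\bigl(\mathbb E^*[(\lambda^*)^{-Q}\mid\widehat\mu_n]>K\bigr)\leq\mathbb P\bigl(\mathcal G_n\cap\{\text{bound}>K\}\bigr)+\mathbb P(\mathcal G_n^c),
\end{equation*}
and both terms can be made small by first taking $n$ large and then $K$ large.

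There is no real obstacle here. The only points requiring care are two. First, $a$ must be chosen strictly larger than $Q$, which the preceding lemma allows because it holds for every fixed $a>0$. Second, the bound holds only with probability tending to one, and this is handled by the event $\mathcal G_n$.
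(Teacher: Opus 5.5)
Your proof is correct and is essentially the paper's argument: you choose $a>Q$, apply the layer-cake formula split at the threshold $\eta_a$, and obtain the same bound $\eta_a^{-Q}+\frac{Q}{a-Q}C_{n,a}\eta_a^{a-Q}$. Your substitution $s=t^{-Q}$ only reparametrizes the paper's integral $Q\int_0^\infty t^{-Q-1}\mathbb P^*(\lambda^*<t\mid\widehat\mu_n)\,\mathrm dt$, and your explicit handling of the high-probability event $\mathcal G_n$ and of $\{\lambda^*=0\}$ spells out two points the paper leaves implicit.
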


\begin{proof}
    Fix $Q>0$ and choose $a>Q$. By the preceding lemma, there exist $\eta_a>0$ and $C_{n,a}=O_p(1)$ such that
    \begin{equation*}
        \mathbb P^*(\lambda_{\min}(\widehat\Sigma_n^*)<t \mid\widehat\mu_n) \leq C_{n,a}t^a, \qquad 0<t<\eta_a.
    \end{equation*}
    Let $Z_n=\lambda_{\min}(\widehat \Sigma_n^*)$. Then
    \begin{equation*}
        \begin{aligned}
            \mathbb E^*[Z_n^{-Q}\mid \widehat \mu_n]&=Q\int_0^\infty t^{-Q-1}\mathbb P^*(Z_n<t\mid \widehat \mu_n)\mathrm dt\\ &=Q\left(\int_0^{\eta_a} t^{-Q-1}\mathbb P^*(Z_n<t\mid \widehat \mu_n)\mathrm dt +\int_{\eta_a}^\infty t^{-Q-1}\mathbb P^*(Z_n<t\mid \widehat \mu_n)\mathrm dt\right)\\ &\leq \eta_a^{-Q}+QC_{n,a}\int_0^{\eta_a} t^{a-Q-1}\mathrm dt =\dfrac{QC_{n,a}}{a-Q}\eta_a^{a-Q}+\eta_a^{-Q}.
        \end{aligned}
    \end{equation*}
    Since $C_{n,a}=O_p(1)$, the result follows.
\end{proof}

\begin{mylemma}
    Under Assumptions~\ref{ass:target-distribution}, \ref{ass:noise-schedule}, and \ref{ass:score-approximation}, we have
    \begin{equation*}
        \mathbb E^*[(R_n^*)^2\mid \widehat \mu_n]\to_p 0,
    \end{equation*}
    where
    \begin{equation*}
        R_n^*=c^{\top}((\widehat \Sigma_n^*)^{-1}-\Sigma_X(\widehat \mu_n)^{-1})U_n^*,\quad U_n^*=\dfrac 1{\sqrt n}\sum_{i=1}^n X_i^*(Y_i^*-X_i^{*\top}\beta(\widehat \mu_n)).
    \end{equation*}
    \label{le7}
\end{mylemma}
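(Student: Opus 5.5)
Write $\widehat\Sigma=\Sigma_X(\widehat\mu_n)$ and $\Sigma_n^*=\widehat\Sigma_n^*$. The plan is to factor $R_n^*$ through the resolvent identity and then split the bootstrap expectation by Cauchy--Schwarz. By the resolvent identity,
\begin{equation*}
    (\Sigma_n^*)^{-1}-\widehat\Sigma^{-1}=(\Sigma_n^*)^{-1}(\widehat\Sigma-\Sigma_n^*)\widehat\Sigma^{-1},
\end{equation*}
and therefore
\begin{equation*}
    |R_n^*|\leq \lambda_{\min}(\Sigma_n^*)^{-1}\,\|\widehat\Sigma-\Sigma_n^*\|_{\op}\,\|\widehat\Sigma^{-1}\|_{\op}\,\|U_n^*\|.
\end{equation*}
Applying H\"older's inequality with three factors gives
\begin{equation*}
    \mathbb E^*[(R_n^*)^2\mid\widehat\mu_n]\leq \|\widehat\Sigma^{-1}\|_{\op}^2\,\bigl(\mathbb E^*\lambda_{\min}^{-2r_1}\bigr)^{1/r_1}\bigl(\mathbb E^*\|\widehat\Sigma-\Sigma_n^*\|_{\op}^{2r_2}\bigr)^{1/r_2}\bigl(\mathbb E^*\|U_n^*\|^{2r_3}\bigr)^{1/r_3},
\end{equation*}
with $1/r_1+1/r_2+1/r_3=1$. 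I choose $2r_2\leq q/2$ and $2r_3\leq q/2$, so that only moments of order up to $q$ of $\widehat\mu_n$ appear; this is possible because $q>4$. For instance, take $r_2=r_3=q/4$, so that $r_1=q/(q-4)<\infty$.

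\emph{The factors one at a time.}
\begin{itemize}
\item By Lemma~\ref{fix_wq}, $W_q(\widehat\mu_n,\mu)\to_p0$. Hence $\widehat\Sigma\to_p\Sigma_X(\mu)\succ0$, and $\|\widehat\Sigma^{-1}\|_{\op}=O_p(1)$. The moments $M_{q,n}=\int\|z\|^q\,\mathrm d\widehat\mu_n=O_p(1)$.
\item Lemma~\ref{conc} with $Q=2r_1$ gives $\mathbb E^*[\lambda_{\min}(\Sigma_n^*)^{-2r_1}\mid\widehat\mu_n]=O_p(1)$. This is exactly where the Moser-iteration density bound and the anti-concentration estimate enter.
\item For the fluctuation factor, the matrix Rosenthal (or Marcinkiewicz--Zygmund) inequality for the fixed-dimensional i.i.d.\ sum was already used in the lower-tail lemma. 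It gives $\mathbb E^*\|\Sigma_n^*-\widehat\Sigma\|_{\op}^{q/2}\leq H_n n^{-q/4}$ with $H_n=O_p(1)$. The $(2r_2)$-th root of this bound is therefore $O_p(n^{-1/2})\to_p0$.
\end{itemize}

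\emph{The score factor.} The summands $X_i^*(Y_i^*-X_i^{*\top}\beta(\widehat\mu_n))$ are conditionally i.i.d. They are mean zero by the definition of $\beta(\widehat\mu_n)$, since $m_{XY}-\Sigma_X\beta=0$. Their $q/2$-th moments are bounded by $C(1+\|\beta(\widehat\mu_n)\|)^{q/2}M_{q,n}$, using $\|x\|\,|y-x^\top\beta|\lesssim(1+\|\beta\|)\|z\|^2$. Here $\beta(\widehat\mu_n)=\widehat\Sigma^{-1}m_{XY}(\widehat\mu_n)=O_p(1)$. Rosenthal's inequality in fixed dimension then gives $\mathbb E^*\|U_n^*\|^{q/2}=O_p(1)$, independent of the $n^{-1/2}$ normalization. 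Multiplying the four factors yields $O_p(1)\cdot O_p(1)\cdot O_p(n^{-1/2})\cdot O_p(1)\to_p0$.

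The step requiring care is the moment bookkeeping, because only $q>4$ moments are available. The product $\|x\|\,|y-x^\top\beta|$ is quadratic in $z$, so the $L^{2r_3}$ norm of $U_n^*$ needs $4r_3\leq q$. Likewise $\Sigma_n^*$ is quadratic in $z$ and needs $4r_2\leq q$. The leftover Hölder exponent $r_1=q/(q-4)$ must then be absorbed by the inverse moment. This is fine because Lemma~\ref{conc} holds for every fixed $Q$. For the same reason the preceding lemma remarks that $a>2q/(q-4)$ suffices. I would verify that this exponent choice is consistent, and that all $O_p(1)$ constants are measurable with respect to $\widehat\mu_n$, so that the conditional statements combine.
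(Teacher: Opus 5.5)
Your overall architecture (bound $R_n^*$ by an inverse-difference factor times $\|U_n^*\|$, control $U_n^*$ by Rosenthal with $q/2$ moments, control inverses by Lemma~\ref{conc}) matches the paper's. However, the exponent bookkeeping you single out as the delicate step is wrong. With $r_2=r_3=q/4$ you get $1/r_2+1/r_3=8/q$, hence $1/r_1=1-8/q$. That gives $r_1=q/(q-8)$, not $q/(q-4)$, and it is admissible only when $q>8$.

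The problem is not just the particular choice. Both $\|\widehat\Sigma-\Sigma_n^*\|_{\op}^2$ and $\|U_n^*\|^2$ are quadratic in $z$. The law $\widehat\mu_n$ is only known to have $q$-th moments, through $W_q(\widehat\mu_n,\mu)\to_p0$. So any three-factor split is forced to take $r_2,r_3\le q/4$. It then needs $8/q\le 1/r_2+1/r_3<1$, i.e.\ $q>8$.

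Assumption~\ref{ass:score-approximation} only supplies some $q>4$, and you cannot enlarge $q$. Strong log-concavity gives all moments of $\mu$, but not of $\widehat\mu_n$. The moment control of $\widehat\mu_n$ comes solely from $\mathcal A_{n,q}(T_n)\to_p0$ at that fixed $q$. So for $4<q\le 8$ your argument does not close.

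The missing idea is to avoid taking moments of the Gram fluctuation at all. Keep $D_n^*=\|(\widehat\Sigma_n^*)^{-1}-\widehat\Sigma^{-1}\|_{\op}$ as a single factor, and split its two roles:
\begin{itemize}
\item Use the resolvent identity only to show $D_n^*\to0$ in conditional probability. This needs just $\mathbb E^*\|\widehat\Sigma_n^*-\widehat\Sigma\|_{\op}^2=O_p(n^{-1})$ (fourth moments only), plus Weyl's inequality on the event $\lambda_{\min}(\widehat\Sigma)\geq c_0$.
\item Use the crude bound $D_n^*\le\lambda_{\min}(\widehat\Sigma_n^*)^{-1}+\lambda_{\min}(\widehat\Sigma)^{-1}$ together with Lemma~\ref{conc}. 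This gives $\mathbb E^*[(D_n^*)^Q\mid\widehat\mu_n]=O_p(1)$ for some $Q>2r$, where $r=q/(q-4)$.
\end{itemize}
Uniform integrability then yields $\mathbb E^*[(D_n^*)^{2r}\mid\widehat\mu_n]\to_p0$. A two-factor H\"older split with $s=q/4$ against $\mathbb E^*[\|U_n^*\|^{q/2}\mid\widehat\mu_n]=O_p(1)$ finishes the proof for every $q>4$.

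This is why the preceding lemma records $a>2q/(q-4)$: that exponent corresponds to the two-factor split, not to your three-factor one. As written, your proof is valid only in the regime $q>8$.
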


\begin{proof}
    Denote $\Sigma_n=\Sigma_X(\widehat \mu_n)$ and $\beta_n=\beta(\widehat \mu_n)$, and define $D_n^*=\|(\widehat \Sigma_n^*)^{-1}-\Sigma_n^{-1}\|_{\op}$. Then $(R_n^*)^2\leq \|c\|^2(D_n^*)^2\|U_n^*\|^2$. Set $s=q/4>1$ and $r=s/(s-1)=q/(q-4)$, and choose any fixed $Q>2r$. Lemmas~\ref{fix_wq} and~\ref{wmm} give
    \begin{equation*}
        \int \|z\|^q\,\mathrm d\widehat \mu_n(z)=O_p(1), \qquad \Sigma_n \to_p \Sigma_X(\mu).
    \end{equation*}
    Since $\Sigma_X(\mu)\succ 0$, with probability tending to one, we have
    \begin{equation*}
        \lambda_{\min}(\Sigma_n)\geq c_0 >0.
    \end{equation*}
    Moreover, conditionally on $\widehat \mu_n$, we have
    \begin{equation*}
        \mathbb E^*[\|\widehat \Sigma_n^*-\Sigma_n\|^2\mid \widehat \mu_n]\leq \mathbb E^*[\|\widehat \Sigma_n^*-\Sigma_n\|_{F}^2\mid \widehat \mu_n]\leq \dfrac Cn \int\|x\|^4 \mathrm d\widehat \mu_n(x,y)=O_p(n^{-1})
    \end{equation*}
    Thus $\widehat \Sigma_n^*\to_{p^*}\Sigma_n$ in conditional probability. Moreover,
    \begin{equation*}
        D_n^*=\|(\widehat\Sigma_n^*)^{-1}-\Sigma_n^{-1}\|_{\op}\leq\|\Sigma_n^{-1}\|_{\op}\| \widehat\Sigma_n^*-\Sigma_n\|_{\op}\|(\widehat\Sigma_n^*)^{-1}\|_{\op}.
    \end{equation*}
    Since $\Sigma_X(\mu)\succ 0$, there exists a constant $c_0>0$ such that, with probability tending to one, $\lambda_{\min}(\Sigma_n)\geq c_0$. On this event, if $\|\widehat \Sigma_n^*-\Sigma_n\|_{\op}\leq c_0/2$, Weyl's inequality gives
    \begin{equation*}
        \lambda_{\min}(\widehat \Sigma^*_n)\geq \lambda_{\min}(\Sigma_n)-\|\widehat \Sigma_n^*-\Sigma_n\|_{\op}\geq \dfrac{c_0}2
    \end{equation*}
    Hence $\widehat \Sigma_n^*$ is invertible, with $\|(\widehat \Sigma_n^*)^{-1}\|_{\op}\leq 2/c_0$ and $\|\Sigma_n^{-1}\|_{\op}\leq 1/c_0$. Therefore,
    \begin{equation*}
        D_n^* \leq \dfrac 2{c_0^2}\|\widehat \Sigma_n^*-\Sigma_n\|_{\op}.
    \end{equation*}
    Since $\|\widehat \Sigma_n^*-\Sigma_n\|_{\op}=o_{p^*}(1)$, it follows that $D_n^*=o_{p^*}(1)$; that is,
    \begin{equation*}
        \mathbb P^*(D_n^*>\epsilon \mid \widehat \mu_n)\to_p 0,\forall \epsilon.
    \end{equation*}
    The triangle inequality gives
    \begin{equation*}
        D_n^* \leq \lambda_{\min}(\widehat \Sigma_n^*)^{-1}+\lambda_{\min}(\Sigma_n)^{-1}
    \end{equation*}
    and hence
    \begin{equation*}
        \mathbb E^*[(D_n^*)^Q\mid \widehat \mu_n]=O_p(1).
    \end{equation*}
    Since $D_n^*=o_{p^*}(1)$ and $Q>2r$, by Hölder's inequality, it follows that
    \begin{equation*}
        \mathbb E^*[(D_n^*)^{2r}\mid \widehat \mu_n]=\mathbb E^*[(D_n^*)^{2r}I_{D_n^*\leq \epsilon}]+\mathbb E^*[(D_n^*)^{2r}I_{D_n^*>\epsilon}]\to_p 0
    \end{equation*}
    Let $\xi_i^*=X_i^*(Y_i^*-X_i^{*\top}\beta_n)$. Conditionally on $\widehat \mu_n$, the $\xi_i^*$'s are i.i.d.\ and have mean zero. Since $2s>2$, Rosenthal's inequality (Lemma~\ref{Rosenthal}) gives
    \begin{equation*}
        \mathbb E^*[\|U_n^*\|^{2s}\mid \widehat \mu_n]\leq C((\mathbb E^*\|\xi_1^*\|^2\mid \widehat \mu_n)^s+n^{1-s}\mathbb E^*[\|\xi_1^*\|^{2s}\mid \widehat \mu_n])
    \end{equation*}
    Since $4s=q$, Lemma~\ref{wmm} and $W_q(\widehat \mu_n,\mu)\to_p0$ imply $\beta_n=\beta(\widehat \mu_n)\to_p\beta(\mu)$. Together with \(\|\xi_1^*\|^{2s} \leq C(1+\|\beta_n\|^{2s})\|(X_1^*,Y_1^*)\|^{4s}\), this yields
    \begin{equation*}
        \mathbb E^*[\|\xi_1^*\|^{2s}\mid \widehat \mu_n]\leq C(1+\|\beta_n\|^{2s})\int \|z\|^q \mathrm d\widehat \mu_n(z)=O_p(1)
    \end{equation*}
    Moreover,
    \begin{equation*}
        (\mathbb E^*[\|\xi_1^*\|^2\mid \widehat \mu_n])^s\leq \mathbb E^*[\|\xi_1^*\|^{2s}\mid \widehat \mu_n],
    \end{equation*}
    so
    \begin{equation*}
        \mathbb E^*[\|U_n^*\|^{2s}\mid \widehat \mu_n]=O_p(1).
    \end{equation*}
    By H\"older's inequality,
    \begin{equation*}
        \begin{aligned}
            \mathbb E^*[(R_n^*)^2 \mid \widehat \mu_n]&\leq \|c\|^2 \mathbb E^*[(D_n^*)^2\|U_n^*\|^2\mid \widehat \mu_n]\\ &\leq \|c\|^2(\mathbb E^*[(D_n^*)^{2r}\mid \widehat \mu_n])^{1/r}(\mathbb E^*[\|U_n^*\|^{2s}\mid \widehat \mu_n])^{1/s}\to_p 0
        \end{aligned}
    \end{equation*}
\end{proof}

\begin{mylemma}[OLS linearization for the original sample]
\label{lem:original-ols-linearization}
Under Assumption~\ref{ass:target-distribution}, let $Z_i=(X_i,Y_i)\overset{\mathrm{iid}}{\sim}\mu$, $i=1,\ldots,n$. Then
\begin{equation*}
    \sqrt n\,c^\top(\widehat\beta-\beta(\mu)) = \frac1{\sqrt n}\sum_{i=1}^n S_\mu(Z_i)+r_n,
\end{equation*}
where
\begin{equation*}
    r_n = c^\top \left( \widehat\Sigma_n^{-1}-\Sigma_X(\mu)^{-1} \right) U_n, \qquad U_n = \frac1{\sqrt n} \sum_{i=1}^n X_i\{Y_i-X_i^\top\beta(\mu)\},
\end{equation*}
and
\begin{equation*}
    \mathbb E r_n^2\to0.
\end{equation*}
Consequently,
\begin{equation*}
    \Var \left( \sqrt n\,c^\top(\widehat\beta-\beta(\mu)) \right) = \tau^2(\mu)+o(1).
\end{equation*}
\end{mylemma}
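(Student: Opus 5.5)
The plan is to reproduce the bootstrap-side argument of Lemma~\ref{le7} with the fixed law $\mu$ in place of $\widehat\mu_n$. All the random $O_p(1)$ quantities then become deterministic constants, so the conclusion comes out as a statement about expectations.

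\emph{Step 1: exact decomposition.} Write $\widehat\Sigma_n=n^{-1}\sum_iX_iX_i^\top$ and $e_i=Y_i-X_i^\top\beta(\mu)$. Since $\mu$ has a Lebesgue density and $p<n$, $\widehat\Sigma_n$ is invertible almost surely for $n\ge p$. The normal equations then give the exact identity
\[
\sqrt n\,(\widehat\beta-\beta(\mu))=\widehat\Sigma_n^{-1}U_n .
\]
Adding and subtracting $\Sigma_X(\mu)^{-1}U_n$, and using $c^\top\Sigma_X(\mu)^{-1}U_n=n^{-1/2}\sum_iS_\mu(Z_i)$, yields the stated decomposition with exactly the displayed $r_n$. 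No approximation is involved at this stage.

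\emph{Step 2: inverse moments of the true Gram matrix.} I would apply the lower-tail lemma for $\lambda_{\min}(\widehat\Sigma_n^*)$, and then Lemma~\ref{conc}, to the deterministic measure $\mu$. The inputs that argument uses are the following.
\begin{itemize}
\item A bounded density. This holds here because $U_0$ is strongly convex, hence bounded below, so $\|e^{-U_0}\|_\infty<\infty$.
\item A finite $q$-th moment.
\item Positive definiteness of $\Sigma_X(\mu)$.
\end{itemize}
With these, the slab anti-concentration, the negative second-moment identity, the matrix Rosenthal bound and the block amplification go through verbatim with deterministic constants. This yields $\sup_{n\ge n_0}\mathbb E\,\lambda_{\min}(\widehat\Sigma_n)^{-Q}<\infty$ for every fixed $Q$. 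I expect this transfer to be the main point requiring care, though the only genuinely new check is the bounded density.

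\emph{Step 3: $\mathbb E r_n^2\to0$.} Set $D_n=\|\widehat\Sigma_n^{-1}-\Sigma_X(\mu)^{-1}\|_{\op}$, so that $r_n^2\le D_n^2\|U_n\|^2$.
\begin{itemize}
\item The law of large numbers (or the Frobenius second-moment bound $O(n^{-1})$) together with Weyl's inequality gives $D_n\to_p0$.
\item Since $D_n\le\lambda_{\min}(\widehat\Sigma_n)^{-1}+\lambda_{\min}(\Sigma_X(\mu))^{-1}$, Step~2 shows that $D_n^{2r}$ is uniformly integrable. Hence $\mathbb E D_n^{2r}\to0$ for $r=q/(q-4)$.
\item The $\xi_i=X_ie_i$ are i.i.d. and mean zero, since $\mathbb E X e=0$ by the definition of $\beta(\mu)$. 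Rosenthal's inequality (Lemma~\ref{Rosenthal}) with $s=q/4$ gives $\sup_n\mathbb E\|U_n\|^{2s}<\infty$.
\end{itemize}
H\"older's inequality with exponents $r,s$ then gives $\mathbb E r_n^2\le(\mathbb E D_n^{2r})^{1/r}(\mathbb E\|U_n\|^{2s})^{1/s}\to0$.

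\emph{Step 4: variance.} Let $A_n=n^{-1/2}\sum_iS_\mu(Z_i)$. Because $\mathbb E S_\mu(Z)=0$ and the summands are i.i.d., $\Var(A_n)=\tau^2(\mu)$ exactly. By Cauchy--Schwarz,
\[
|\Var(A_n+r_n)-\Var(A_n)|\le 2\tau(\mu)(\mathbb E r_n^2)^{1/2}+\mathbb E r_n^2\to0,
\]
which gives $\Var(\sqrt n\,c^\top(\widehat\beta-\beta(\mu)))=\tau^2(\mu)+o(1)$.
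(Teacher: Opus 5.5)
Your proposal is correct and follows the paper's proof essentially step for step. Both use the exact normal-equation decomposition, the bound $r_n^2\le D_n^2\|U_n\|^2$ with H\"older exponents $r=q/(q-4)$ and $s=q/4$, uniform integrability of $D_n^{2r}$ via the lower-tail lemma transferred to the fixed law $\mu$, Rosenthal for $U_n$, and the Cauchy--Schwarz covariance bound. Your explicit check that $e^{-U_0}$ is bounded, so that the slab anti-concentration step applies to $\mu$, is a detail the paper leaves implicit when it says the same lower-tail argument carries over to $\mu$.
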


\begin{proof}
By the normal equation,
\begin{equation*}
    \sqrt n(\widehat\beta-\beta(\mu)) = \widehat\Sigma_n^{-1}U_n.
\end{equation*}
Therefore,
\begin{equation*}
    \begin{aligned}
        \sqrt n\,c^\top(\widehat\beta-\beta(\mu)) &= c^\top\Sigma_X(\mu)^{-1}U_n + c^\top \left( \widehat\Sigma_n^{-1}-\Sigma_X(\mu)^{-1} \right)U_n = \frac1{\sqrt n}\sum_{i=1}^n S_\mu(Z_i)+r_n.
    \end{aligned}
\end{equation*}

It remains to show that $\mathbb E r_n^2\to0$. Set $D_n=\left\|\widehat\Sigma_n^{-1}-\Sigma_X(\mu)^{-1}\right\|_{\op}.$ Then
\begin{equation*}
    r_n^2\leq \|c\|^2D_n^2\|U_n\|^2.
\end{equation*}
Let
\begin{equation*}
    s=\frac q4>1, \qquad r=\frac{s}{s-1} = \frac q{q-4}.
\end{equation*}
By Hölder's inequality,
\begin{equation}
    \mathbb E r_n^2 \leq \|c\|^2 \left(\mathbb E D_n^{2r}\right)^{1/r} \left(\mathbb E\|U_n\|^{2s}\right)^{1/s}. \label{eq:original-rem-holder}
\end{equation}

We first prove $\mathbb E D_n^{2r}\to0$. Since $p$ is fixed and Assumption~\ref{ass:target-distribution} gives a finite $q$-moment with $q>4$,
\begin{equation*}
    \mathbb E \|\widehat\Sigma_n-\Sigma_X(\mu)\|_{\mathrm F}^2 \leq \frac Cn \int\|x\|^4\,\mathrm d\mu(x,y) = O(n^{-1}).
\end{equation*}
Hence $\widehat\Sigma_n-\Sigma_X(\mu)=o_p(1)$. Since $\Sigma_X(\mu)\succ0$, the inverse perturbation identity gives $D_n=o_p(1)$.

Moreover, by the same argument for the lower tail as in the generated Gram matrix lemma, applied now to the fixed law $\mu$, for every fixed $Q>0$ we have $\mathbb E\lambda_{\min}(\widehat\Sigma_n)^{-Q}=O(1)$. Choosing $Q>2r$, and using
\begin{equation*}
    D_n\leq\lambda_{\min}(\widehat\Sigma_n)^{-1}+\lambda_{\min}(\Sigma_X(\mu))^{-1},
\end{equation*}
we get $\mathbb E D_n^Q=O(1)$. Together with $D_n=o_p(1)$, this implies $\mathbb E D_n^{2r}\to0$. Indeed, for every $\epsilon>0$,
\begin{equation*}
    \mathbb E[D_n^{2r}\mathbf 1_{\{D_n\leq \epsilon\}}] \leq \epsilon^{2r},
\end{equation*}
while
\begin{equation*}
    \mathbb E[D_n^{2r}\mathbf 1_{\{D_n>\epsilon\}}] \leq (\mathbb E D_n^Q)^{2r/Q} \mathbb P(D_n>\epsilon)^{1-2r/Q} = o(1).
\end{equation*}
Letting $\epsilon\downarrow0$ gives $\mathbb E D_n^{2r}\to0$.

Next we prove $\mathbb E\|U_n\|^{2s}=O(1)$. Let
\begin{equation*}
    \xi_i=X_i\{Y_i-X_i^\top\beta(\mu)\}.
\end{equation*}
Then
\begin{equation*}
    U_n=\frac1{\sqrt n}\sum_{i=1}^n\xi_i, \qquad \mathbb E\xi_i=0.
\end{equation*}
By Rosenthal's inequality,
\begin{equation*}
    \mathbb E\|U_n\|^{2s} \leq C \left[ \left(\mathbb E\|\xi_1\|^2\right)^s + n^{1-s}\mathbb E\|\xi_1\|^{2s} \right].
\end{equation*}
Moreover,
\begin{equation*}
    \|\xi_1\|^{2s} \leq C(1+\|\beta(\mu)\|^{2s})\|Z_1\|^{4s}.
\end{equation*}
Since $4s=q$, Assumption~\ref{ass:target-distribution} implies $\mathbb E\|\xi_1\|^{2s}<\infty$. Thus $\mathbb E\|U_n\|^{2s}=O(1)$. Combining this with \eqref{eq:original-rem-holder}, we obtain $\mathbb E r_n^2\to0$.

Finally, set
\begin{equation*}
    A_n=\frac1{\sqrt n}\sum_{i=1}^n S_\mu(Z_i).
\end{equation*}
Then $\mathbb E A_n=0$ and $\Var(A_n)=\tau^2(\mu)$. Since $\Var(r_n)\leq \mathbb E r_n^2\to0$ and
\begin{equation*}
    |\Cov(A_n,r_n)| \leq \Var(A_n)^{1/2} \Var(r_n)^{1/2} \to0,
\end{equation*}
we conclude that
\begin{equation*}
    \Var \left( \sqrt n\,c^\top(\widehat\beta-\beta(\mu)) \right) = \Var(A_n+r_n) = \tau^2(\mu)+o(1).
\end{equation*}
\end{proof}

\begin{proof}[Proof of Theorem~\ref{fixed-vari}]
    Conditional on $\widehat \mu_n$, write $\Sigma_n=\Sigma_X(\widehat \mu_n)$ and $\beta_n=\beta(\widehat \mu_n)$. The normal equation gives
    \begin{equation*}
        \sqrt n(\widehat \beta^*-\beta_n)=(\widehat \Sigma_n^*)^{-1}U_n^*.
    \end{equation*}
    Therefore,
    \begin{equation*}
        \begin{aligned}
            \sqrt n c^{\top}(\widehat \beta^*-\beta_n)=c^{\top}\Sigma_n^{-1}U_n^*+c^{\top}((\widehat \Sigma_n^*)^{-1}-\Sigma_n^{-1})U_n^*=\dfrac 1{\sqrt n}\sum_{i=1}^n S_{\widehat \mu_n}(Z_i^*)+R_n^*.
        \end{aligned}
    \end{equation*}
    By Lemma~\ref{le7}, $\mathbb E^*[(R_n^*)^2\mid \widehat \mu_n]\to_p 0$. We next show that
    \begin{equation*}
        \Var^*(\sqrt nc^{\top}(\widehat \beta^*-\beta(\widehat \mu_n))\mid \widehat \mu_n)=\Var^*(\dfrac 1{\sqrt n}\sum_{i=1}^n S_{\widehat \mu_n}(Z_i^*)\mid \widehat \mu_n)+o_p(1).
    \end{equation*}
    Indeed, define $A_n^*=n^{-1/2}\sum_{i=1}^nS_{\widehat\mu_n}(Z_i^*)$. Then $T_n^*:=\sqrt n\,c^\top\{\widehat\beta^*-\beta(\widehat\mu_n)\} =A_n^*+R_n^*$, and
    \begin{equation*}
        \Var^*(A_n^*+R_n^*\mid \widehat \mu_n)=\Var^*(A_n^*\mid \widehat \mu_n)+\Var^*(R_n^*\mid \widehat \mu_n)+2\Cov^*(A_n^*,R_n^*\mid \widehat \mu_n)
    \end{equation*}
    It follows that
    \begin{equation*}
        \Var^*(T_n^*\mid \widehat \mu_n)-\Var^*(A_n^*\mid \widehat \mu_n)=\Var^*(R_n^*\mid \widehat \mu_n)+2\Cov^*(A_n^*,R_n^*\mid \widehat \mu_n).
    \end{equation*}
    Moreover,
    \begin{equation*}
        \Var^*(R_n^*\mid \widehat \mu_n)\leq \mathbb E^*[(R_n^*)^2\mid \widehat \mu_n]\to_p 0
    \end{equation*}
    and
    \begin{equation*}
        |\Cov^*(A_n^*,R_n^*\mid \widehat \mu_n)|\leq (\Var^*(A_n^*\mid \widehat \mu_n))^{1/2}(\Var^*(R_n^*\mid \widehat \mu_n))^{1/2}
    \end{equation*}
    If $W_4(\widehat \mu_n,\mu)\to_p0$, then $\Sigma_X(\widehat\mu_n)\to_p\Sigma_X(\mu)$ and $m_{XY}(\widehat\mu_n)\to_p m_{XY}(\mu)$. Since $\Sigma_X(\mu)$ is positive definite,
    \begin{equation*}
        \beta(\widehat \mu_n)=\Sigma_X(\widehat \mu_n)^{-1}m_{XY}(\widehat \mu_n)\to_p \Sigma_X(\mu)^{-1}m_{XY}(\mu)=\beta(\mu)
    \end{equation*}
    Moreover,
    \begin{equation*}
        \begin{aligned}
            A_n&=\Sigma_X(\widehat\mu_n)^{-1} \to_p\Sigma_X(\mu)^{-1}=A, \qquad \|A\|_{\op}=O(1),\\ b_n&=\beta(\widehat\mu_n) \to_p\beta(\mu)=b, \qquad \|b\|<\infty.
        \end{aligned}
    \end{equation*}
    For every $\epsilon>0$, there exists a deterministic constant $M<\infty$ and $E_n$ with $\mathbb P(E_n)\to 1$, such that on $E_n$, we have
    \begin{equation*}
        \begin{aligned}
            \|A_n\|_{\op},\ \|A\|_{\op},\ \|b_n\|,\ \|b\| &\leq M,\\ \|A_n-A\|_{\op},\ \|b_n-b\| &\leq\epsilon, \qquad \int\|z\|^4\,\mathrm d\widehat\mu_n(z)\leq M.
        \end{aligned}
    \end{equation*}
    On \(E_n\),
    \begin{equation*}
        \begin{aligned}
            |S_{\widehat \mu_n}(z)|&=|c^{\top}A_nx(y-x^\top b_n)|\leq \|c\|\,\|A_n\|_{\op}\,\|x\|(|y|+|x^\top b_n|)\\ &\leq M\|x\|(|y|+M\|x\|)\leq C_M\|z\|^2
        \end{aligned}
    \end{equation*}
    Similarly, $|S_{\mu}(z)|\leq C_M\|z\|^2$. Also,
    \begin{equation*}
        \begin{aligned}
            S_{\widehat \mu_n}(z)-S_{\mu}(z)&=c^{\top}A_nx(y-x^\top b_n)- c^{\top}Ax(y-x^{\top}b)\\ &=c^{\top}(A_n-A)x(y-x^{\top}b_n)+c^\top A x(x^{\top}(b-b_n))
        \end{aligned}
    \end{equation*}
    Therefore
    \begin{equation*}
        |S_{\widehat \mu_n}(z)-S_{\mu}(z)|\leq \|A_n-A\|_{\op}\,\|x\|\,(|y|+\|x\|\,\|b_n\|)+\|A\|_{\op}\|x\|^2\,\|b_n-b\|\leq C_M\epsilon \|z\|^2,
    \end{equation*}
    hence, after changing the constant \(C_M\) if necessary,
    \begin{equation*}
        |S_{\widehat \mu_n}(z)^2-S_{\mu}(z)^2|=|S_{\widehat \mu_n}(z)-S_{\mu}(z)|\,|S_{\widehat \mu_n}(z)+S_{\mu}(z)|\leq C_M\epsilon \|z\|^2\cdot C_M\|z\|^2\leq C_M \epsilon \|z\|^4.
    \end{equation*}
    Integrating with respect to $\widehat \mu_n$ gives
    \begin{equation*}
        \int |S_{\widehat \mu_n}(z)^2-S_{\mu}(z)^2|\mathrm d\widehat \mu_n(z)\leq C_M\epsilon \int \|z\|^4 \mathrm d\widehat \mu_n(z)\leq C_M\epsilon,
    \end{equation*}
    Since $\epsilon>0$ is arbitrary and $\mathbb P(E_n)\to1$,
    \begin{equation*}
        \int S_{\widehat \mu_n}(z)^2 \mathrm d\widehat \mu_n(z)-\int S_{\mu}(z)^2 \mathrm d\widehat \mu_n(z)\to_p 0.
    \end{equation*}
    Since $|S_{\mu}(z)|^2\leq C\|z\|^4$ and
    $W_4(\widehat\mu_n,\mu)\to_p0$,
    \begin{equation*}
        \int S_{\mu}(z)^2 \mathrm d\widehat \mu_n(z)\to_p \int S_{\mu}(z)^2 \mathrm d\mu(z).
    \end{equation*}
    Combining the preceding two limits gives
    \begin{equation*}
        \int S_{\widehat \mu_n}(z)^2 \mathrm d\widehat \mu_n(z)\to_p \int S_{\mu}(z)^2 \mathrm d\mu(z)
    \end{equation*}
    For every admissible \(\nu\),
    \begin{equation*}
        \int S_{\nu}(z)\mathrm d\nu(z)=c^{\top}\Sigma_X(\nu)^{-1}(m_{XY}(\nu)-\Sigma_X(\nu)\beta(\nu))=0,
    \end{equation*}
    and hence
    \begin{equation*}
        \tau^2(\nu)=\Var_{\nu}(S_{\nu}(Z))=\int S_{\nu}(z)^2 \mathrm d\nu(z)\Rightarrow \tau^2(\widehat \mu_n)\to_p \tau^2(\mu).
    \end{equation*}
    Therefore,
    \begin{equation*}
        \Var^*(A_n^*\mid \widehat \mu_n)=\tau^2(\widehat \mu_n)\to_p \tau^2(\mu)\Rightarrow \Var^*(A_n^*\mid \widehat \mu_n)=O_p(1).
    \end{equation*}
    Consequently,
    \begin{equation*}
        |\Cov^*(A_n^*,R_n^*\mid \widehat \mu_n)|=o_p(1).
    \end{equation*}
    We conclude that
    \begin{equation*}
        \Var^*(A_n^*+R_n^*\mid \widehat \mu_n)=\Var^*(A_n^*\mid \widehat \mu_n)+o_p(1).
    \end{equation*}
    Finally,
    \begin{equation*}
        \begin{aligned}
            \Var^*(A_n^*\mid\widehat\mu_n) &=\Var^*\!\left( \frac1{\sqrt n}\sum_{i=1}^nS_{\widehat\mu_n}(Z_i^*) \mathrel{\Big|}\widehat\mu_n\right)\\ &=\frac1n\sum_{i=1}^n \Var^*\!\left(S_{\widehat\mu_n}(Z_i^*)\mid\widehat\mu_n\right)\\ &=\Var_{\widehat\mu_n}\!\left(S_{\widehat\mu_n}(Z)\right)\\ &=\tau^2(\widehat\mu_n).
        \end{aligned}
    \end{equation*}
    Thus,
    \begin{equation*}
        \Var^*(\sqrt n c^{\top}(\widehat \beta^*-\beta(\widehat \mu_n))\mid \widehat \mu_n)=\tau^2(\mu)+o_p(1).
    \end{equation*}
    By Lemma~\ref{lem:original-ols-linearization},
    \begin{equation*}
        \Var \left( \sqrt n\,c^\top\{\widehat\beta-\beta(\mu)\} \right) = \tau^2(\mu)+o(1).
    \end{equation*}
    Since $\tau^2(\mu)>0$, the desired variance-ratio consistency follows.
\end{proof}

\begin{proof}[Proof of Corollary]
    By the $L^2$ linearization,
    \begin{equation*}
        \sqrt n\,c^\top \{\widehat\beta^*-\beta(\widehat\mu_n)\} = \frac1{\sqrt n} \sum_{i=1}^nS_{\widehat\mu_n}(Z_i^*) + o_{p^*}(1),
    \end{equation*}
    and $\tau^2(\widehat\mu_n)\to_p\tau^2(\mu)$. Choose $\delta>0$ so that $2(2+\delta)\leq q$. The $W_q$ convergence implies the conditional Lyapunov condition and hence
    \begin{equation*}
        \mathcal L^*\!\left( \sqrt n\,c^\top \{\widehat\beta^*-\beta(\widehat\mu_n)\} \,\middle|\, \widehat\mu_n \right) \Rightarrow_p N\{0,\tau^2(\mu)\}.
    \end{equation*}
    Indeed, the Lyapunov ratio satisfies
    \begin{equation*}
        \begin{aligned}
            L_n^* &= \dfrac{ \sum_{i=1}^n \mathbb E^*[|n^{-1/2}S_{\widehat \mu_n}(Z^*_i)|^{2+\delta}\mid \widehat \mu_n] }{(\tau^2(\widehat \mu_n))^{1+\delta/2}}\\ &= \dfrac{ n\cdot n^{-(2+\delta)/2} \mathbb E^*[|S_{\widehat \mu_n}(Z^*)|^{2+\delta}\mid \widehat \mu_n] }{(\tau^2(\widehat \mu_n))^{1+\delta/2}}\\ &= \dfrac{O_p(1)n^{-\delta/2}}{(\tau^2(\widehat \mu_n))^{1+\delta/2}}.
        \end{aligned}
    \end{equation*}
    Since $\tau^2(\widehat\mu_n)\to_p\tau^2(\mu)>0$, it follows that $L_n^*\to_p 0$. Similarly,
    \begin{equation*}
        \sqrt n c^{\top}(\widehat \beta-\beta(\mu))\Rightarrow N(0,\tau^2(\mu)).
    \end{equation*}
    Because the Gaussian limit has a continuous distribution function, Lemma~\ref{lem:conditional-polya} and the triangle inequality give
    \begin{equation*}
        \sup_{t\in \mathbb R}|\mathbb P^*(\sqrt nc^{\top}(\widehat \beta^*-\beta(\widehat \mu_n))\leq t\mid \widehat \mu_n)-\mathbb P(\sqrt n c^{\top}(\widehat \beta-\beta(\mu))\leq t)|\to_p 0.
    \end{equation*}
\end{proof}

\section{Proof of Theorem~\ref{thm:hd-actual-var}}
In this section, we prove Theorem~\ref{thm:hd-actual-var} under the general covariance and strong log-concavity assumptions.

\begin{mylemma}
\label{lem:hd-general-curvature}
Under Assumption~\ref{ass:hd-setting}, the joint law of \(Z_0=(X,Y)\) has density proportional to \(\exp\{-U_n(x,y)\}\), where
\begin{equation*}
    U_n(x,y) = V_n(x) + \frac{(y-x^\top\beta_n)^2}{2\sigma_n^2}.
\end{equation*}
There exist constants \(0<m_0\leq L_0<\infty\), independent of \(n\), such that
\begin{equation*}
    m_0I_{d_n} \preceq \nabla^2U_n(x,y) \preceq L_0I_{d_n}.
\end{equation*}
Moreover, every OU marginal has a potential \(U_{n,t}\) satisfying
\begin{equation}
    m_{\mathrm{OU}}I_{d_n} \preceq \nabla^2U_{n,t}(z) \preceq L_{\mathrm{OU}}I_{d_n}, \label{eq:ou-curvature}
\end{equation}
where the constants are independent of \(n\) and \(t\). Consequently, the exact score \(s_{n,t}=-\nabla U_{n,t}\) is uniformly globally Lipschitz, and
\begin{align}
    \Var_{p_{n,t}}(g) &\leq C_{\mathrm{LS}} \int_{\mathbb R^{d_n}} \|\nabla g(z)\|_2^2p_{n,t}(z)\,\mathrm dz, \label{eq:poincare}\\ \operatorname{Ent}_{p_{n,t}}(g^2) &\leq 2C_{\mathrm{LS}} \int_{\mathbb R^{d_n}} \|\nabla g(z)\|_2^2p_{n,t}(z)\,\mathrm dz. \label{eq:lsi}
\end{align}
\end{mylemma}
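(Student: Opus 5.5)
The plan has three steps: a direct Hessian computation for $U_n$, propagation of the bounds along the OU flow, and then Bakry--Émery together with Holley--Stroock-free reasoning for the functional inequalities.

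\textbf{Step 1: curvature of the joint potential.} The density formula follows from $\varepsilon\perp X$ and $\varepsilon\sim N(0,\sigma_n^2)$. Writing the conditional density of $Y$ given $X=x$ as $N(x^\top\beta_n,\sigma_n^2)$, the joint Hessian is
\begin{equation*}
    \nabla^2U_n(x,y)=\begin{pmatrix}\nabla^2V_n(x)&0\\0&0\end{pmatrix}+\frac1{\sigma_n^2}\begin{pmatrix}-\beta_n\\1\end{pmatrix}\begin{pmatrix}-\beta_n^\top&1\end{pmatrix}.
\end{equation*}
For the upper bound, the first block is at most $(\overline m/c_\Sigma)I$, using the remark on the equivalence of curvature bounds. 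The rank-one term has norm $(1+\|\beta_n\|^2)/\sigma_n^2\le(1+B^2)/\sigma_{\min}^2$.

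For the lower bound, take $w=(u,v)$ and set $m_1=\underline m/C_\Sigma$. Then
\begin{equation*}
    w^\top\nabla^2U_n w\ge m_1\|u\|^2+\sigma_{\max}^{-2}(v-\beta_n^\top u)^2 .
\end{equation*}
Two cases cover everything. If $|v|\ge 2B\|u\|$, then $(v-\beta_n^\top u)^2\ge v^2/4$. Otherwise $v^2\le 4B^2\|u\|^2$, so $\|u\|^2\ge \|w\|^2/(1+4B^2)$. Either way the quadratic form is at least $m_0\|w\|^2$ for an explicit $m_0>0$ depending only on $\underline m,C_\Sigma,B,\sigma_{\max}$. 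Equivalently, the quadratic form $m_1\|u\|^2+\sigma_{\max}^{-2}(v-\beta^\top u)^2$ has a lower eigenvalue bounded below uniformly in $\|\beta\|\le B$.

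\textbf{Step 2: curvature along the OU flow.} Here I repeat the arguments already given in the fixed-dimensional appendix, noting that their constants do not depend on $d$. The lower bound comes from Lemma~\ref{strong-log-concavity} with $B(t)=t$ and the preservation Lemma~\ref{lem:aux-strong-log-concavity-preservation}, which give
\begin{equation*}
    m_t^{\mathrm{sc}}=\frac{m_0}{e^{-2t}+m_0(1-e^{-2t})}\ge\min\{m_0,1\}=:m_{\mathrm{OU}}.
\end{equation*}
The upper bound comes from the Tweedie/Cramér--Rao argument of part (1) of the drift lemma. There the conditional potential $V_{t,z}$ has Hessian at most $(L_0+\alpha_t^2/\sigma_t^2)I$, so $\Cov(Z_0\mid Z_t)\succeq(L_0+\alpha_t^2/\sigma_t^2)^{-1}I$, giving $\nabla^2U_{n,t}\preceq\max\{L_0,1\}I=:L_{\mathrm{OU}}I$. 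Both steps are dimension-free. Cramér--Rao for the log-concave conditional law holds in any dimension, since it is a Brascamp--Lieb-type inequality in the reverse direction obtained from $\Cov\succeq(\mathbb E\nabla^2V)^{-1}$. The two-sided Hessian bound then makes $s_{n,t}=-\nabla U_{n,t}$ globally $L_{\mathrm{OU}}$-Lipschitz.

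\textbf{Step 3: functional inequalities.} By the Bakry--Émery criterion, $\nabla^2U_{n,t}\succeq m_{\mathrm{OU}}I$ gives the log-Sobolev inequality with constant $C_{\mathrm{LS}}=1/m_{\mathrm{OU}}$, in the normalization $\Ent(g^2)\le 2C_{\mathrm{LS}}\int\|\nabla g\|^2$. Linearizing the log-Sobolev inequality, or applying Brascamp--Lieb directly, yields the Poincaré inequality with the same constant. None of these constants depends on $n$ or $t$.

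The main obstacle is the uniform lower bound in Step 1. It is where the interplay between $\|\beta_n\|\le B$ and $\sigma_{\max}$ must be handled carefully to avoid degeneracy as $p_n$ grows. The rest reuses earlier dimension-free lemmas.
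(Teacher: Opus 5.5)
Your proposal is correct and follows the paper's proof essentially step for step: the same joint-Hessian quadratic form with a uniform lower bound exploiting $\|\beta_n\|_2\le B$ and $\sigma_n^2\le\sigma_{\max}^2$ (your case split and the paper's inequality $v^2\le 2(v-\beta_n^\top u)^2+2B^2\|u\|_2^2$ differ only in constants), the second-order Tweedie identity with Cram\'er--Rao for the upper OU curvature, and Bakry--\'Emery with $C_{\mathrm{LS}}=m_{\mathrm{OU}}^{-1}$ for the functional inequalities. The only difference is cosmetic: you obtain the lower OU curvature $(\alpha_t^2/m_0+\tau_t^2)^{-1}$ from the convolution-preservation lemma, whereas the paper derives it from the same Tweedie identity using a Brascamp--Lieb upper bound on the posterior covariance; both arguments are dimension-free and give the same constants.
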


\begin{proof}
The covariance bounds and the relative curvature assumption give
\begin{equation*}
    \frac{1}{C_\Sigma}I_{p_n} \preceq \Sigma_n^{-1} \preceq \frac{1}{c_\Sigma}I_{p_n}.
\end{equation*}
Hence
\begin{equation*}
    m_XI_{p_n} \preceq \nabla^2V_n(x) \preceq L_XI_{p_n}, \qquad m_X:=\frac{\underline m}{C_\Sigma}, \qquad L_X:=\frac{\overline m}{c_\Sigma}.
\end{equation*}
For \((u,v)\in\mathbb R^{p_n}\times\mathbb R\), direct differentiation gives
\begin{equation}
    \begin{aligned}
        \begin{pmatrix}
            u\\v
        \end{pmatrix}
        ^{\!\top} \nabla^2U_n(x,y)
        \begin{pmatrix}
            u\\v
        \end{pmatrix}
        &= u^\top\nabla^2V_n(x)u + \frac{(v-\beta_n^\top u)^2}{\sigma_n^2}.
    \end{aligned}
    \label{eq:joint-hessian-quadratic}
\end{equation}

Since \(\|\beta_n\|_2\leq B\),
\begin{equation*}
    \begin{aligned}
        v^2 &= \bigl\{v-\beta_n^\top u+\beta_n^\top u\bigr\}^2\\ &\leq 2(v-\beta_n^\top u)^2 + 2(\beta_n^\top u)^2\\ &\leq 2(v-\beta_n^\top u)^2 + 2B^2\|u\|_2^2.
    \end{aligned}
\end{equation*}
Therefore,
\begin{equation*}
    \|u\|_2^2+v^2 \leq (1+2B^2)\|u\|_2^2 + 2(v-\beta_n^\top u)^2.
\end{equation*}
Using \(\sigma_n^2\leq\sigma_{\max}^2\) in
\eqref{eq:joint-hessian-quadratic}, we obtain
\begin{equation*}
    \begin{aligned}
        \begin{pmatrix}
            u\\v
        \end{pmatrix}
        ^{\!\top} \nabla^2U_n(x,y)
        \begin{pmatrix}
            u\\v
        \end{pmatrix}
        &\geq m_X\|u\|_2^2 + \frac{(v-\beta_n^\top u)^2}{\sigma_{\max}^2}\\ &\geq m_0(\|u\|_2^2+v^2),
    \end{aligned}
\end{equation*}
where one may take
\begin{equation*}
    m_0 = \min\left\{ \frac{m_X}{1+2B^2}, \frac{1}{2\sigma_{\max}^2} \right\}.
\end{equation*}

For the upper bound,
\begin{equation*}
    (v-\beta_n^\top u)^2 \leq 2v^2+2B^2\|u\|_2^2.
\end{equation*}
Using \(\sigma_n^2\geq\sigma_{\min}^2\) in
\eqref{eq:joint-hessian-quadratic}, we obtain
\begin{equation*}
    \begin{aligned}
        \begin{pmatrix}
            u\\v
        \end{pmatrix}
        ^{\!\top} \nabla^2U_n(x,y)
        \begin{pmatrix}
            u\\v
        \end{pmatrix}
        &\leq L_X\|u\|_2^2 + \frac{ 2v^2+2B^2\|u\|_2^2 }{\sigma_{\min}^2}\\ &\leq L_0(\|u\|_2^2+v^2),
    \end{aligned}
\end{equation*}
where
\begin{equation*}
    L_0 = \max\left\{ L_X+\frac{2B^2}{\sigma_{\min}^2}, \frac{2}{\sigma_{\min}^2} \right\}.
\end{equation*}
Thus
\begin{equation*}
    m_0I_{d_n} \preceq \nabla^2U_n(x,y) \preceq L_0I_{d_n}.
\end{equation*}

We next propagate these bounds along the OU flow. Fix \(t>0\) and write
\begin{equation*}
    \alpha_t=e^{-t}, \qquad \tau_t^2=1-e^{-2t}, \qquad R=\alpha_tZ_0, \qquad Z_t=R+\tau_t\xi,
\end{equation*}
where \(\xi\sim N(0,I_{d_n})\) is independent of \(Z_0\).

Up to an additive constant, the potential of \(R\) is
\begin{equation*}
    \widetilde U_t(r) = U_n(r/\alpha_t).
\end{equation*}
Consequently,
\begin{equation}
    \frac{m_0}{\alpha_t^2}I_{d_n} \preceq \nabla^2\widetilde U_t(r) \preceq \frac{L_0}{\alpha_t^2}I_{d_n}. \label{eq:scaled-curvature}
\end{equation}

Let \(\pi_{t,z}\) denote the conditional distribution of \(R\) given \(Z_t=z\). Its potential is
\begin{equation*}
    H_{t,z}(r) = \widetilde U_t(r) + \frac{\|z-r\|_2^2}{2\tau_t^2}.
\end{equation*}
Hence
\begin{equation}
    K_-I_{d_n} \preceq \nabla^2H_{t,z}(r) \preceq K_+I_{d_n}, \label{eq:posterior-curvature}
\end{equation}
where
\begin{equation*}
    K_- = \frac{m_0}{\alpha_t^2} + \frac{1}{\tau_t^2}, \qquad K_+ = \frac{L_0}{\alpha_t^2} + \frac{1}{\tau_t^2}.
\end{equation*}

By the second-order Tweedie formula for Gaussian convolution,
\begin{equation}
    \nabla^2U_{n,t}(z) = \frac{1}{\tau_t^2}I_{d_n} - \frac{1}{\tau_t^4} \Cov_{\pi_{t,z}}(R). \label{eq:gaussian-channel-hessian}
\end{equation}

The lower Hessian bound in \eqref{eq:posterior-curvature} and the Brascamp--Lieb inequality give
\begin{equation}
    \Cov_{\pi_{t,z}}(R) \preceq K_-^{-1}I_{d_n}. \label{eq:posterior-cov-upper}
\end{equation}
On the other hand, the matrix Cram\'er--Rao inequality and the upper Hessian bound in \eqref{eq:posterior-curvature} yield
\begin{equation}
    \begin{aligned}
        \Cov_{\pi_{t,z}}(R) &\succeq \left\{ \mathbb E_{\pi_{t,z}} \nabla^2H_{t,z}(R) \right\}^{-1}\\ &\succeq K_+^{-1}I_{d_n}.
    \end{aligned}
    \label{eq:posterior-cov-lower}
\end{equation}

Substituting \eqref{eq:posterior-cov-upper} into \eqref{eq:gaussian-channel-hessian}, we obtain
\begin{equation*}
    \begin{aligned}
        \nabla^2U_{n,t}(z) &\succeq \left( \frac{1}{\tau_t^2} - \frac{1}{\tau_t^4}K_-^{-1} \right)I_{d_n}\\ &= \left( \frac{\alpha_t^2}{m_0} + \tau_t^2 \right)^{-1}I_{d_n}.
    \end{aligned}
\end{equation*}
Similarly, substituting \eqref{eq:posterior-cov-lower} into \eqref{eq:gaussian-channel-hessian} gives
\begin{equation*}
    \nabla^2U_{n,t}(z) \preceq \left( \frac{\alpha_t^2}{L_0} + \tau_t^2 \right)^{-1}I_{d_n}.
\end{equation*}
Therefore,
\begin{equation}
    \left( \frac{\alpha_t^2}{m_0} + \tau_t^2 \right)^{-1}I_{d_n} \preceq \nabla^2U_{n,t}(z) \preceq \left( \frac{\alpha_t^2}{L_0} + \tau_t^2 \right)^{-1}I_{d_n}. \label{eq:ou-time-dependent-curvature}
\end{equation}

The same bounds hold at \(t=0\) by the already established estimates for \(U_n\). Since
\begin{equation*}
    \alpha_t^2+\tau_t^2=1,
\end{equation*}
we have
\begin{equation*}
    \left( \frac{\alpha_t^2}{m_0} + \tau_t^2 \right)^{-1} \geq \min\{m_0,1\},
\end{equation*}
and
\begin{equation*}
    \left( \frac{\alpha_t^2}{L_0} + \tau_t^2 \right)^{-1} \leq \max\{L_0,1\}.
\end{equation*}
Thus one may take
\begin{equation*}
    m_{\mathrm{OU}} = \min\{m_0,1\}, \qquad L_{\mathrm{OU}} = \max\{L_0,1\},
\end{equation*}
which proves \eqref{eq:ou-curvature}.

Since
\begin{equation*}
    s_{n,t}(z) = \nabla\log p_{n,t}(z) = -\nabla U_{n,t}(z),
\end{equation*}
we have
\begin{equation*}
    \|\nabla s_{n,t}(z)\|_{\op} = \|\nabla^2U_{n,t}(z)\|_{\op} \leq L_{\mathrm{OU}}.
\end{equation*}
Hence
\begin{equation*}
    \sup_{n\geq1}\sup_{t\geq0} \Lip(s_{n,t}) \leq L_{\mathrm{OU}}.
\end{equation*}

Finally, the lower curvature bound
\begin{equation*}
    \nabla^2U_{n,t} \succeq m_{\mathrm{OU}}I_{d_n}
\end{equation*}
implies the log-Sobolev inequality
\begin{equation*}
    \operatorname{Ent}_{p_{n,t}}(g^2) \leq \frac{2}{m_{\mathrm{OU}}} \int_{\mathbb R^{d_n}} \|\nabla g(z)\|_2^2p_{n,t}(z)\,\mathrm dz.
\end{equation*}
Its linearization gives
\begin{equation*}
    \Var_{p_{n,t}}(g) \leq \frac{1}{m_{\mathrm{OU}}} \int_{\mathbb R^{d_n}} \|\nabla g(z)\|_2^2p_{n,t}(z)\,\mathrm dz.
\end{equation*}
Thus one may choose
\begin{equation*}
    C_{\mathrm{LS}} = m_{\mathrm{OU}}^{-1}.
\end{equation*}

\end{proof}
\begin{mylemma}
\label{lem:hd-general-weighted}
Let $\rho$ satisfy a log-Sobolev inequality with constant $C_{\mathrm{LS}}$.  Let $e:\mathbb R^d\to\mathbb R^d$ satisfy $\Lip(e)\leq L_\star$, put $g=\|e\|_2$, and set
\begin{equation*}
    a=\int g^4\rho\,\mathrm dz.
\end{equation*}
If~\eqref{eq:lipschitz-absorption} holds, then
\begin{equation}
    \int g^2h^2\rho\,\mathrm dz \leq \frac{4C_{\mathrm{LS}}}{\lambda_\star} \int\|\nabla h\|_2^2\rho\,\mathrm dz +Ca^{1/2}\int h^2\rho\,\mathrm dz \label{eq:weighted}
\end{equation}
for all $h\in H^1(\rho)$.
\end{mylemma}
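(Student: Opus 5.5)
The plan is to combine the Gibbs (entropy) variational principle with the log-Sobolev inequality for $\rho$, and then bound an exponential moment of $g^2$ through Gaussian concentration of the Lipschitz function $g$. The constraints in \eqref{eq:lipschitz-absorption} are what make that exponential moment finite.

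\textbf{Step 1: variational bound.} Fix $h\in H^1(\rho)$ with $\int h^2\rho>0$, and let $\nu=h^2\rho/\int h^2\rho$. For any $\lambda>0$ and measurable $F\geq0$, the Donsker--Varadhan (Gibbs) inequality gives
\begin{equation*}
\int F\,\mathrm d\nu\leq\frac1\lambda\Bigl\{\mathrm{KL}(\nu\,\|\,\rho)+\log\int e^{\lambda F}\rho\,\mathrm dz\Bigr\}.
\end{equation*}
Since $\mathrm{KL}(\nu\|\rho)\int h^2\rho=\operatorname{Ent}_\rho(h^2)$, applying this with $F=g^2$ and multiplying by $\int h^2\rho$ yields
\begin{equation*}
\int g^2h^2\rho\leq\frac1\lambda\operatorname{Ent}_\rho(h^2)+\frac1\lambda\Bigl(\log\int e^{\lambda g^2}\rho\Bigr)\int h^2\rho .
\end{equation*}
The log-Sobolev inequality turns the first term into $\frac{2C_{\mathrm{LS}}}{\lambda}\int\|\nabla h\|^2\rho$. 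I would take $\lambda$ equal to $\lambda_\star$, or $\lambda_\star/2$ if extra room is needed later; in either case the coefficient is at most $4C_{\mathrm{LS}}/\lambda_\star$. It then remains to show $\log\int e^{\lambda g^2}\rho\leq C a^{1/2}$ with $C$ depending only on $C_{\mathrm{LS}},L_\star,\lambda_\star$.

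\textbf{Step 2: exponential moment of $g^2$.} The function $g=\|e\|_2$ is $L_\star$-Lipschitz. By the Herbst argument under the log-Sobolev inequality, $g-m$ is sub-Gaussian with variance proxy $C_{\mathrm{LS}}L_\star^2$, where $m=\int g\rho\leq a^{1/4}$. Writing $g^2\leq(1+\eta)(g-m)^2+(1+\eta^{-1})m^2$ and choosing $\eta$ small, the condition $4\lambda_\star C_{\mathrm{LS}}L_\star^2<1$ should give
\begin{equation*}
\int e^{2\lambda g^2}\rho\leq K_0\,e^{C m^2},
\end{equation*}
with $K_0$ a constant. This is the step where replacing $\lambda_\star$ by $\lambda_\star/2$ in Step 1 may be needed, so that the Gaussian moment $\int e^{c(g-m)^2}\rho$ is finite at the doubled parameter.

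\textbf{Step 3: two regimes for $a$.} If $a\geq1$, Step 2 gives $\log\int e^{\lambda g^2}\rho\leq\log K_0+Cm^2\leq C'(1+a^{1/2})\leq 2C'a^{1/2}$. If $a<1$, then $m\leq1$, so $\int e^{2\lambda g^2}\rho\leq K_1$ uniformly. Using $\log x\leq x-1$, the fundamental theorem of calculus in $s$, and Cauchy--Schwarz,
\begin{equation*}
\log\int e^{\lambda g^2}\rho\leq\int_0^\lambda\!\int g^2e^{sg^2}\rho\,\mathrm dz\,\mathrm ds\leq\lambda\Bigl(\int g^4\rho\Bigr)^{1/2}\Bigl(\int e^{2\lambda g^2}\rho\Bigr)^{1/2}\leq\lambda K_1^{1/2}a^{1/2}.
\end{equation*}
Combining the two regimes with Step 1 gives \eqref{eq:weighted}.

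\textbf{Main obstacle.} The delicate part is Step 2: verifying that \eqref{eq:lipschitz-absorption} really makes $\int e^{2\lambda g^2}\rho$ finite, with explicit constants, via the sub-Gaussian bound for $g-m$. This requires matching the Herbst constant ($\int e^{c(g-m)^2}\rho<\infty$ when $c\,C_{\mathrm{LS}}L_\star^2<1/2$) to the doubled parameter $2\lambda$ and to the slack left by the choice of $\eta$. The remaining steps are routine once this bound is in place.
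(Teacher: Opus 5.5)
Your proof is correct, but it is organized differently from the paper's. The paper centers \emph{before} applying the variational principle. It sets $\bar g=\mathbb E_\rho g$ and $Y=g-\bar g$, and Gaussian randomization of the Herbst bound gives $\mathbb E_\rho e^{2\lambda_\star Y^2}\leq(1-4\lambda_\star C_{\mathrm{LS}}L_\star^2)^{-1/2}$, independently of $\bar g$. Using $\mathbb E_\rho Y^4\leq16a$, a single estimate $\log\mathbb E_\rho e^{\lambda_\star Y^2}\leq\lambda_\star(\mathbb E_\rho Y^4)^{1/2}(\mathbb E_\rho e^{2\lambda_\star Y^2})^{1/2}\leq Ca^{1/2}$ then holds for every value of $a$. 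The Gibbs/log-Sobolev step is applied to $Y^2$ with parameter $\lambda_\star$, which gives the coefficient $2C_{\mathrm{LS}}/\lambda_\star$. The mean is restored only at the end, through $g^2\leq2Y^2+2\bar g^2$ and $\bar g^2\leq a^{1/2}$, and this doubling is where $4C_{\mathrm{LS}}/\lambda_\star$ comes from.

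You instead feed the uncentered $g^2$ into the variational inequality, so the mean enters the exponential moment as a factor $e^{Cm^2}$. That factor is what forces your $(1+\eta)$-splitting and the case distinction. For $a\geq1$ you absorb $\log K_0$ into $a^{1/2}$. For $a<1$, $m\leq1$ makes $\int e^{2\lambda g^2}\rho$ uniformly bounded, so the $e^x-1\leq xe^x$ argument applies. Both cases are sound.

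The step you flag as the main obstacle closes cleanly. With $\lambda=\lambda_\star/2$ and $\eta=1$, the split $\lambda_\star g^2\leq2\lambda_\star(g-m)^2+2\lambda_\star m^2$ and Gaussian randomization give $\int e^{\lambda_\star g^2}\rho\leq e^{2\lambda_\star m^2}(1-4\lambda_\star C_{\mathrm{LS}}L_\star^2)^{-1/2}$. This uses the hypothesis $4\lambda_\star C_{\mathrm{LS}}L_\star^2<1$ exactly and produces the stated coefficient $4C_{\mathrm{LS}}/\lambda_\star$.

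So both proofs rest on the same elementary split $g^2\leq2(g-m)^2+2m^2$. The paper applies it outside the exponential, where the mean contributes a term linear in $a^{1/2}$ at no cost. You apply it inside, which costs the case analysis. In return, your route with $\lambda=\lambda_\star$ and $\eta$ small enough that $4\lambda_\star(1+\eta)C_{\mathrm{LS}}L_\star^2<1$ yields the sharper gradient coefficient $2C_{\mathrm{LS}}/\lambda_\star$, with constants depending on the slack in that strict inequality. The paper's route is shorter and needs no case split.
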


\begin{proof}
Let $\bar g=\mathbb E_\rho g$ and $Y=g-\bar g$. Since $g$ is $L_\star$-Lipschitz, the Herbst part of Lemma~\ref{lem:aux-bakry-emery-herbst} gives
\begin{equation*}
    \mathbb E_\rho e^{tY} \leq \exp\left(\frac{C_{\mathrm{LS}}L_\star^2t^2}{2}\right).
\end{equation*}
Gaussian randomization and $4\lambda_\star C_{\mathrm{LS}}L_\star^2<1$ give
\begin{equation}
    \mathbb E_\rho e^{2\lambda_\star Y^2} \leq \left(1-4\lambda_\star C_{\mathrm{LS}}L_\star^2\right)^{-1/2}. \label{eq:square-exp}
\end{equation}
Moreover, $\bar g^2\leq a^{1/2}$ and $\mathbb E_\rho Y^4\leq16a$.  Using $e^x-1\leq xe^x$, Cauchy--Schwarz, and~\eqref{eq:square-exp},
\begin{equation*}
    \log\mathbb E_\rho e^{\lambda_\star Y^2} \leq Ca^{1/2}.
\end{equation*}
The entropy variational inequality and the log-Sobolev inequality imply
\begin{equation*}
    \int Y^2h^2\rho \leq \frac{2C_{\mathrm{LS}}}{\lambda_\star} \int\|\nabla h\|_2^2\rho +Ca^{1/2}\int h^2\rho.
\end{equation*}
Finally, $g^2\leq2Y^2+2\bar g^2$, which proves~\eqref{eq:weighted}.
\end{proof}

\begin{mylemma}
\label{lem:hd-reverse-regularity}
Fix $n$, condition on $\mathcal F_n$, and work on the event
\begin{equation}
    \begin{aligned}
        \mathcal E_n = \Biggl\{ &\sup_{0\leq t\leq T_n}\Lip(e_{n,t})\leq L_\star,\int_0^{T_n}\int_{\mathbb R^{d_n}} \|e_{n,t}(z)\|_2^4p_{n,t}(z)\,\mathrm dz\,\mathrm dt <\infty \Biggr\}.
    \end{aligned}
    \label{eq:regularity-event}
\end{equation}
Set $T=T_n$ and, for $0\leq r\leq T$,
\begin{equation*}
    b_{n,r}(z)=z+2s_{n,T-r}(z), \qquad \widehat b_{n,r}(z)=b_{n,r}(z)+2e_{n,T-r}(z).
\end{equation*}
More explicitly, conditionally on $\mathcal F_n$, the exact and learned reverse processes are initialized according to
\begin{align*}
    \mathrm dY_r &=b_{n,r}(Y_r)\,\mathrm dr+\sqrt{2}\,\mathrm dW_r, &Y_0&\sim p_{n,T},\\ \mathrm d\widehat Y_r &=\widehat b_{n,r}(\widehat Y_r)\,\mathrm dr +\sqrt{2}\,\mathrm d\widehat W_r, &\widehat Y_0&\sim N(0,I_{d_n}).
\end{align*}
Then both reverse SDEs have unique non-explosive strong solutions and generate time-dependent Markov evolutions. Their dual evolutions preserve finite nonnegative mass and are contractions in total variation. The marginal laws have densities $\rho_{n,r}$ and $\widehat\rho_{n,r}$, where
\begin{equation*}
    \rho_{n,r}=p_{n,T-r}.
\end{equation*}
The exact density $\rho_{n,r}$ is strictly positive. The densities admit representatives such that
\begin{equation*}
    \rho_n,\widehat\rho_n \in C([0,T];L^1(\mathbb R^{d_n})) \cap L^\infty((0,T);L^2(\mathbb R^{d_n})) \cap L^2((0,T);H^1(\mathbb R^{d_n})).
\end{equation*}
In particular, their marginal curves are narrowly continuous. Moreover, for every $0<\tau<R<T$ and $L<\infty$,
\begin{equation*}
    \partial_r\rho_n,\ \partial_r\widehat\rho_n \in L^2((\tau,R);H^{-1}(B_L)),
\end{equation*}
and
\begin{equation*}
    f_n=\frac{\widehat\rho_n}{\rho_n} \in L^2_{\mathrm{loc}} ((0,T);H^1_{\mathrm{loc}}(\mathbb R^{d_n})).
\end{equation*}
Moreover, for
\begin{equation*}
    f_{n,r}=\frac{\widehat\rho_{n,r}}{\rho_{n,r}}, \qquad D_n(r)=\int f_{n,r}^2\rho_{n,r}\,\mathrm dz,
\end{equation*}
if $0\leq a<R<T$ and $D_n(a)<\infty$, then $D_n$ admits an absolutely continuous representative on $[a,R]$. For almost every $r\in(a,R)$, this representative satisfies
\begin{equation}
    D_n'(r) = -2\int\|\nabla f_{n,r}\|_2^2\rho_{n,r}\,\mathrm dz +4\int f_{n,r}e_{n,T-r}^\top\nabla f_{n,r} \rho_{n,r}\,\mathrm dz. \label{eq:regularity-density-ratio}
\end{equation}
\end{mylemma}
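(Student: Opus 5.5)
The proof will work on the event $\mathcal E_n$ and proceed in four layers: well-posedness, energy regularity, identification of the limits, and the density-ratio identity.

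\emph{Well-posedness and Markov structure.} By Lemma~\ref{lem:hd-general-curvature}, $s_{n,t}$ is globally $L_{\mathrm{OU}}$-Lipschitz uniformly in $t$. On $\mathcal E_n$ each $e_{n,t}$ is $L_\star$-Lipschitz, so $b_{n,r}$ and $\widehat b_{n,r}$ are globally Lipschitz with a constant bounded uniformly in $r$. For linear growth, write $\|e_{n,t}(0)\|\le\|e_{n,t}(z)\|+L_\star\|z\|$. Averaging against $p_{n,t}$ and using $\int\|e_{n,t}\|^4p_{n,t}\in L^1(0,T)$ together with the uniform moment bounds of $p_{n,t}$ gives $\|e_{n,t}(0)\|\lesssim a_n(t)^{1/4}+C$. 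This produces an envelope $G_n\in L^4(0,T)$. The time-inhomogeneous SDE well-posedness lemma (Lemma~\ref{lem:time-inhomogeneous-sde-wellposed}) and the Markov kernel lemma (Lemma~\ref{lem:sde-markov-kernel}) then yield unique non-explosive strong solutions, Markov evolutions, and dual evolutions that preserve mass and are TV-contractive. For the exact process, the time reversal of the OU flow solves the same SDE, so by uniqueness $\rho_{n,r}=p_{n,T-r}$. This density is smooth and strictly positive because it is a Gaussian convolution for $r<T$ and $e^{-U_n}$ at $r=T$.

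\emph{Energy regularity of $\widehat\rho_n$.} I will mollify the initial Gaussian density and the drift: set $\widehat b^{(k)}=\widehat b*\zeta_{1/k}$, truncated, and let $q^{(k)}$ be the smooth solutions. Testing with $q^{(k)}$ gives $\frac{\mathrm d}{\mathrm dr}\|q^{(k)}\|_2^2+2\|\nabla q^{(k)}\|_2^2\le\Lambda_k(r)\|q^{(k)}\|_2^2$ with $\Lambda_k\le\|(\nabla\cdot\widehat b^{(k)})_-\|_\infty\le d(2L_{\mathrm{OU}}-1)_++2dL_\star$. This bound is finite for fixed $n$, which suffices because regularity is only needed for fixed $n$. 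Gronwall then gives uniform $L^\infty_tL^2\cap L^2_tH^1$ bounds. The equation yields $\partial_rq^{(k)}$ bounded in $L^2H^{-1}(B_L)$, since $\widehat b^{(k)}q^{(k)}$ lies in $L^2(B_L)$ with local sup bounds on the drift. Linear growth gives tightness via second moments. Aubin--Lions compactness then extracts a limit solving the weak Fokker--Planck equation. The superposition principle (Lemma~\ref{lem:trevisan-superposition}) together with well-posedness of the martingale problem (Lemma~\ref{lem:superposition-uniqueness-transfer}) identifies the limit with the law of $\widehat Y_r$. Continuity in $C([0,T];L^1)$ follows from the uniform $L^2$ bound, tightness, and narrow continuity via Vitali. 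The same argument applies to $\rho_n$. Local regularity of $f_n$ then follows because $1/\rho_{n,r}$ and $\nabla\rho_{n,r}$ are locally bounded on compact subsets of $(0,T)\times\mathbb R^d$, which gives $f_n\in L^2_{\mathrm{loc}}H^1_{\mathrm{loc}}$.

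\emph{The identity \eqref{eq:regularity-density-ratio}.} This is the main obstacle. Formally, $\partial_r f=\Delta f+(2\nabla\log\rho-b)\cdot\nabla f-\nabla\cdot(2\epsilon\rho f)/\rho$. Multiplying by $2f\rho$ and integrating, the exact drift terms cancel because $\rho$ solves its own Fokker--Planck equation. This leaves $-2\int\|\nabla f\|^2\rho+4\int f\epsilon^\top\nabla f\rho$. To justify it, I will use a truncation $T_M(f)$ with primitive $\Phi_M$ and a spatial cutoff $\chi_R$. I will test the learned equation with $\Phi_M'(f)\chi_R$ and the exact equation with $F_M(f)\chi_R=(f\Phi_M'(f)-\Phi_M(f))\chi_R$. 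This is legitimate by the $L^2H^{-1}_{\mathrm{loc}}$ time regularity and the Lions--Magenes chain rule, after combining the two into $\frac{\mathrm d}{\mathrm dr}\int\Phi_M(f)\chi_R\rho$. The cutoff error terms carry a factor $\|\nabla\chi_R\|\lesssim1/R$ times $(1+\|z\|)G_n$ times quantities integrable against $\widehat\rho+\rho$. These vanish as $R\to\infty$ by the moment bounds. To let $M\to\infty$, I need $\int f\epsilon^\top\nabla f\rho$ to be integrable. I plan to control it by Cauchy--Schwarz and Lemma~\ref{lem:hd-general-weighted} with $h=f$. This absorbs $\int\|\epsilon\|^2f^2\rho$ into $\frac{4C_{\mathrm{LS}}}{\lambda_\star}\int\|\nabla f\|^2\rho$, which is a fraction below $1/8$ by \eqref{eq:lipschitz-absorption}, plus $Ca_n^{1/2}D$. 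A Gronwall bound from $D_n(a)<\infty$ then gives $\sup_{[a,R]}D_n<\infty$ and $\int\|\nabla f\|^2\rho<\infty$. Monotone and dominated convergence pass $M\to\infty$ in the integrated form, yielding an absolutely continuous representative satisfying \eqref{eq:regularity-density-ratio} almost everywhere.
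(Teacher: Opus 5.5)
Your overall architecture is the paper's: the $L^4$ envelope $G_n$, mollified approximations with energy bounds, identification of the limits by superposition plus well-posedness, local regularity of $f_n$, and truncated renormalization with $\Phi_M$ and a spatial cutoff. However, the final step, which you correctly flag as the main obstacle, is circular as written.

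\textbf{Main gap: the absorption step.} You apply Lemma~\ref{lem:hd-general-weighted} with $h=f$ and then deduce $\sup_{[a,R]}D_n<\infty$ and $\int\|\nabla f\|_2^2\rho<\infty$ from Gronwall. Two things go wrong.
\begin{itemize}
\item That lemma requires $h\in H^1(\rho_r)$. Moreover, absorbing $\tfrac{4C_{\mathrm{LS}}}{\lambda_\star}\int\|\nabla f\|_2^2\rho$ into the dissipation presupposes that this quantity, and $D_n(r)$ for $r>a$, are finite. At that stage you only know $f\in L^2_{\mathrm{loc}}((0,T);H^1_{\mathrm{loc}})$ and $D_n(a)<\infty$.
\item The step before has the same problem. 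When you let $R\to\infty$ in the truncated identity, the cutoff errors are not the only terms to control. The mixed term $4\int T_M(f)\epsilon^\top\nabla T_M(f)\rho\chi_R$ is unsigned, and passing to the limit in it requires $\nabla T_M(f_r)\in L^2(\rho_r)$ globally, which is not yet known.
\end{itemize}
The fix is to keep an inequality and absorb at the truncated, localized level before removing the cutoff. The paper proceeds as follows:
\begin{itemize}
\item it applies the weighted lemma to the compactly supported $h=T_M(f_r)\eta_R$, which is legitimate by local regularity;
\item it pays $CM^2/R^2$ for $\nabla\eta_R$ and uses $T_M(f)^2\le\Phi_M(f)$;
\item it lets $R\to\infty$, using Fatou on the dissipation;
\item it runs Gronwall on $E_M$ uniformly in $M$.
\end{itemize}
Only then is $T_M(f_r)\in H^1(\rho_r)$ a conclusion rather than an assumption. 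That is what justifies the global weighted bound, the limit $R\to\infty$ in the identity itself, and finally $M\to\infty$.

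A cheaper route also works at fixed $M$. Since $T_M(f)\le M$ and $\int\|\epsilon_r\|_2^2\rho_r\le\alpha_n(r)\in L^1$, Young's inequality already bounds $\int_a^b\int\|\nabla T_M(f)\|_2^2\rho\chi_R$ uniformly in $R$. After that, your global absorption is legitimate, provided you apply it to $h=T_M(f_r)$ and not to $h=f$.

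\textbf{Secondary gap: $L^1$ continuity.} The claim that $C([0,T];L^1)$ follows from a uniform $L^2$ bound, tightness and narrow continuity ``via Vitali'' is incorrect. Those facts give only weak $L^1$ continuity: Vitali needs convergence in measure, and rapidly oscillating bounded densities are a counterexample. You must extract spatial compactness from the equation. Two options:
\begin{itemize}
\item the Duhamel formula with $\|\nabla K_u\|_{L^1}\lesssim u^{-1/2}$ and the $L^4$-in-time flux bound, as in the paper;
\item Lions--Magenes on $[0,T]\times B_L$ combined with tightness. The $L^4$ envelope gives $\partial_r\widehat\rho\in L^2((0,T);H^{-1}(B_L))$ up to the endpoints.
\end{itemize}
This is not cosmetic. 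The case $a=0$ of the identity is the one used afterwards, with $D_n(0)<\infty$. It needs continuity of $r\mapsto\int\Phi_M(f_r)\rho_r$ at $r=0$, which the paper obtains from $L^1$ continuity through the Lipschitz perspective function $F_M$. Your proposal never addresses this endpoint.

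\textbf{Minor point.} For the truncated mollified drift, $\Lambda_k$ also contains a cutoff contribution of order $1+\ell+h_k(r)/k$, where $\ell$ is the Lipschitz constant of the drift and $h(r)=\|v_r(0)\|_2$ with $h_k$ its time mollification, as in the paper's Step~3. This term is integrable in time but is not bounded by the constant you wrote.
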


\begin{proof}
Fix $n$, condition on $\mathcal F_n$, and suppress $n$ from the notation when no ambiguity can arise. Constants in this proof may depend on this fixed $n$.

\medskip
\noindent
\textbf{Step 1: Lipschitz and linear-growth bounds.}

By Lemma~\ref{lem:hd-general-curvature},
\begin{equation*}
    \sup_{0\leq t\leq T}\Lip(s_{n,t}) \leq L_{\mathrm{OU}}.
\end{equation*}
Hence
\begin{equation*}
    \Lip(b_{n,r})\leq1+2L_{\mathrm{OU}}, \qquad \Lip(\widehat b_{n,r}) \leq1+2L_{\mathrm{OU}}+2L_\star.
\end{equation*}
Let $Z_t\sim p_{n,t}$. Since the score has mean zero under its own density,
\begin{align*}
    \|s_{n,t}(0)\|_2 &= \left\| \int\{s_{n,t}(0)-s_{n,t}(z)\}p_{n,t}(z)\,\mathrm dz \right\|_2\\ &\leq L_{\mathrm{OU}}\mathbb E\|Z_t\|_2.
\end{align*}
Moreover,
\begin{equation*}
    \|e_{n,t}(0)\|_2 \leq \mathbb E\|e_{n,t}(Z_t)\|_2 +L_\star\mathbb E\|Z_t\|_2.
\end{equation*}
By $(x+y)^4\leq8(x^4+y^4)$ and Jensen's inequality,
\begin{align*}
    \|e_{n,t}(0)\|_2^4 &\leq 8\bigl(\mathbb E\|e_{n,t}(Z_t)\|_2\bigr)^4 +8L_\star^4\bigl(\mathbb E\|Z_t\|_2\bigr)^4\\ &\leq 8\mathbb E\|e_{n,t}(Z_t)\|_2^4 +8L_\star^4\mathbb E\|Z_t\|_2^4.
\end{align*}
The joint curvature bound implies that $Z_0$ has moments of every finite order. Since
\begin{equation*}
    Z_t=e^{-t}Z_0+\sqrt{1-e^{-2t}}\,\xi,
\end{equation*}
its fourth moment is uniformly finite over $t\geq0$. Consequently, the definition of $\mathcal E_n$ implies
\begin{equation*}
    t\longmapsto\|e_{n,t}(0)\|_2\in L^4(0,T).
\end{equation*}
It follows that there exists $G_n\in L^4(0,T)$ such that, for almost every $r$ and every $z$,
\begin{equation}
    \|b_{n,r}(z)\|_2 +\|\widehat b_{n,r}(z)\|_2 +\|e_{n,T-r}(z)\|_2 \leq G_n(r)(1+\|z\|_2). \label{eq:reverse-linear-growth}
\end{equation}

\medskip
\noindent
\textbf{Step 2: strong solutions and Markov evolutions.}

The preceding global Lipschitz bounds and \eqref{eq:reverse-linear-growth}, with $G_n\in L^1(0,T)$, verify the assumptions of Lemma~\ref{lem:time-inhomogeneous-sde-wellposed}. Hence both reverse SDEs admit unique non-explosive strong solutions. Applying It\^o's formula to $(1+\|Z_r\|_2^2)^{q/2}$, localizing, and then using Gronwall's inequality yields
\begin{equation}
    \mathbb E\sup_{0\leq r\leq T}\|Z_r\|_2^q<\infty \label{eq:reverse-moment-bound}
\end{equation}
for every fixed finite $q$, for both reverse processes.

Lemma~\ref{lem:sde-markov-kernel} further gives time-inhomogeneous Markov evolutions $(P_{s,r})_{0\leq s\leq r\leq T}$. Their dual evolutions preserve finite nonnegative mass and satisfy
\begin{equation*}
    \|P_{s,r}^*\nu-P_{s,r}^*\nu'\|_{\mathrm{TV}} \leq \|\nu-\nu'\|_{\mathrm{TV}}.
\end{equation*}

\medskip
\noindent
\textbf{Step 3: global parabolic energy estimates.}

We give the approximation argument because it will also justify the later renormalized density-ratio calculation. Let $v$ denote either $b$ or $\widehat b$, and let $q_0$ denote the corresponding initial density. For the learned process, $q_0$ is standard Gaussian. For the exact process, $q_0=p_{n,T}$ and Lemma~\ref{lem:hd-general-curvature} implies
\begin{equation}
    q_0\in L^1(\mathbb R^{d_n})\cap L^2(\mathbb R^{d_n}), \qquad \int\|z\|_2q_0(z)\,\mathrm dz<\infty. \label{eq:regularity-initial-class}
\end{equation}

Extend $v$ by zero outside $(0,T)$, convolve it in space and time, and multiply by a cutoff $\chi_k$ satisfying
\begin{equation*}
    \chi_k=1\ \text{on }B_k, \qquad \supp(\chi_k)\subset B_{2k}, \qquad \|\nabla\chi_k\|_\infty\leq C/k.
\end{equation*}
Denote the resulting smooth compactly supported drift by $v^{(k)}$. Then, for every $L<\infty$,
\begin{equation}
    v^{(k)}\longrightarrow v \quad\text{in }L^4((0,T);L^\infty(B_L)). \label{eq:regularity-local-drift-convergence}
\end{equation}
If
\begin{equation*}
    \ell=\operatorname*{ess\,sup}_{0<r<T}\Lip(v_r), \qquad h(r)=\|v_r(0)\|_2,
\end{equation*}
then $h\in L^4(0,T)$. Let $h_k$ denote its time mollification after the zero extension outside $(0,T)$; in particular,
\begin{equation*}
    \sup_k\|h_k\|_{L^4(0,T)}<\infty.
\end{equation*}
The construction gives functions $\Lambda_k,H_k\geq0$ satisfying
\begin{align}
    \|(\nabla\cdot v_r^{(k)})^-\|_\infty &\leq\Lambda_k(r), & \sup_k\int_0^T\Lambda_k(r)\,\mathrm dr&<\infty, \label{eq:regularity-divergence-bound}\\ \|v_r^{(k)}(z)\|_2 &\leq H_k(r)(1+\|z\|_2), & \sup_k\|H_k\|_{L^1(0,T)}&<\infty. \label{eq:regularity-approx-growth}
\end{align}
Indeed, the mollified drift has divergence bounded below by $-d_n\ell$, while the derivative of the cutoff contributes at most $C\{1+\ell+h_k(r)/k\}$.

Let $q_0^{(k)}=K_{1/k}*q_0$, where $K_t$ is the heat kernel for $\Delta$. The approximation-of-the-identity property and Young's inequality give
\begin{equation}
    q_0^{(k)}\to q_0 \quad\text{in }L^1\cap L^2, \qquad \|q_0^{(k)}\|_2\leq\|q_0\|_2. \label{eq:regularity-initial-approximation}
\end{equation}
If $X_0\sim q_0$ and $\xi_k\sim N(0,2k^{-1}I_{d_n})$ are independent, then $q_0^{(k)}$ is the law of $X_0+\xi_k$. Consequently,
\begin{equation}
    \sup_k\int\|z\|_2q_0^{(k)}(z)\,\mathrm dz<\infty. \label{eq:regularity-initial-first-moment}
\end{equation}
Let $q^{(k)}$ solve
\begin{equation}
    \partial_rq^{(k)} = \Delta q^{(k)}-\nabla\cdot(v_r^{(k)}q^{(k)}), \qquad q^{(k)}|_{r=0}=q_0^{(k)}. \label{eq:regularity-approximating-fp}
\end{equation}
Since $v^{(k)}$ is smooth and compactly supported, standard parabolic theory and the classical Fokker--Planck correspondence show that $q^{(k)}$ is the marginal density of the unique solution to
\begin{equation*}
    \mathrm dX_r^{(k)} =v_r^{(k)}(X_r^{(k)})\,\mathrm dr +\sqrt{2}\,\mathrm dW_r, \qquad X_0^{(k)}\sim q_0^{(k)}.
\end{equation*}
Testing by $q^{(k)}$ and first inserting a spatial cutoff gives
\begin{equation*}
    \frac12\frac{\mathrm d}{\mathrm dr}\|q_r^{(k)}\|_2^2 = -\|\nabla q_r^{(k)}\|_2^2 -\frac12\int (\nabla\cdot v_r^{(k)})(q_r^{(k)})^2.
\end{equation*}
Therefore
\begin{equation}
    \frac{\mathrm d}{\mathrm dr}\|q_r^{(k)}\|_2^2 +2\|\nabla q_r^{(k)}\|_2^2 \leq \Lambda_k(r)\|q_r^{(k)}\|_2^2. \label{eq:regularity-energy-inequality}
\end{equation}
Writing $\Gamma_k(r)=\int_0^r\Lambda_k(s)\,\mathrm ds$ and integrating,
\begin{equation}
    e^{-\Gamma_k(r)}\|q_r^{(k)}\|_2^2 +2\int_0^r e^{-\Gamma_k(s)}\|\nabla q_s^{(k)}\|_2^2\,\mathrm ds \leq \|q_0^{(k)}\|_2^2. \label{eq:regularity-weighted-energy}
\end{equation}
Young's convolution inequality and~\eqref{eq:regularity-initial-class} now imply
\begin{equation}
    \sup_k\|q^{(k)}\|_{L^\infty(0,T;L^2)}<\infty, \qquad \sup_k\|q^{(k)}\|_{L^2(0,T;H^1)}<\infty. \label{eq:regularity-uniform-energy}
\end{equation}

The growth bound~\eqref{eq:regularity-approx-growth}, \eqref{eq:regularity-initial-first-moment}, and Gronwall's inequality for the approximating SDEs give
\begin{equation}
    \sup_k\sup_{0\leq r\leq T} \int\|z\|_2q_r^{(k)}(z)\,\mathrm dz \leq C_n<\infty. \label{eq:regularity-uniform-first-moment}
\end{equation}
In particular,
\begin{equation}
    \sup_k\sup_{0\leq r\leq T} \int_{\{\|z\|_2>L\}}q_r^{(k)}(z)\,\mathrm dz \leq\frac{C_n}{L}. \label{eq:regularity-uniform-tightness}
\end{equation}

By Banach--Alaoglu, after passing to a subsequence,
\begin{equation*}
    q^{(k)}\overset{*}{\rightharpoonup}\overline q \quad\text{in }L^\infty(0,T;L^2), \qquad q^{(k)}\rightharpoonup\overline q \quad\text{in }L^2(0,T;H^1).
\end{equation*}
The limit is nonnegative. For $\phi\in C_c^\infty([0,T)\times\mathbb R^{d_n})$, passing to the limit in the weak formulation of~\eqref{eq:regularity-approximating-fp} gives
\begin{align}
    &\int_0^T\int\overline q \{-\partial_r\phi-\Delta\phi-v_r^\top\nabla\phi\} \,\mathrm dz\,\mathrm dr = \int q_0(z)\phi(0,z)\,\mathrm dz. \label{eq:regularity-limit-weak-fp}
\end{align}
Indeed, on the compact support of $\phi$,
\begin{equation*}
    v^{(k)}q^{(k)}-v\overline q = (v^{(k)}-v)q^{(k)}+v(q^{(k)}-\overline q).
\end{equation*}
The first term tends to zero in $L^1$ by \eqref{eq:regularity-local-drift-convergence} and the uniform local $L^2$ bound, whereas the second converges distributionally by weak $L^2$ convergence. The initial term converges by \eqref{eq:regularity-initial-approximation}.

We next verify that no mass is lost. Choose increasing cutoffs $\chi_L\in C_c^\infty(\mathbb R^{d_n})$ with $0\leq\chi_L\leq1$, $\chi_L=1$ on $B_L$, and $\supp(\chi_L)\subset B_{2L}$. For every nonnegative $\vartheta\in C_c^\infty(0,T)$,
\begin{equation*}
    \begin{aligned}
        \int_0^T\vartheta(r)\int\chi_L(z)\overline q_r(z) \,\mathrm dz\,\mathrm dr = \lim_{k\to\infty} \int_0^T\vartheta(r)\int\chi_L(z)q_r^{(k)}(z) \,\mathrm dz\,\mathrm dr.
    \end{aligned}
\end{equation*}
Since every $q_r^{(k)}$ has mass one, \eqref{eq:regularity-uniform-tightness} implies
\begin{equation*}
    \left| \int_0^T\vartheta(r) \left\{\int\chi_Lq_r^{(k)}-1\right\}\,\mathrm dr \right| \leq \frac{C_n}{L}\|\vartheta\|_{L^1(0,T)}.
\end{equation*}
Letting first $k\to\infty$ and then $L\to\infty$ yields
\begin{equation}
    \int\overline q_r(z)\,\mathrm dz=1 \qquad\text{for almost every }r\in(0,T). \label{eq:regularity-limit-mass}
\end{equation}

Choose a jointly Borel representative of $\overline q$ and let $N$ be the Borel null set on which~\eqref{eq:regularity-limit-mass} fails. Define
\begin{equation*}
    \nu_r(\mathrm dz) =
    \begin{cases}
        \overline q_r(z)\,\mathrm dz,&r\notin N,\\ \delta_0(\mathrm dz),&r\in N.
    \end{cases}
\end{equation*}
Changing the curve on $N$ does not affect the time-integrated weak equation, so $(\nu_r)_{r\in(0,T)}$ is a Borel probability-valued weak solution.

The first-moment estimate is inherited as well. Let $\psi_L\in C_c^\infty(\mathbb R^{d_n})$ be nonnegative and increase to $\|z\|_2$. Testing against a nonnegative $\vartheta\in C_c^\infty(0,T)$, passing first $k\to\infty$ and then $L\to\infty$, and using~\eqref{eq:regularity-uniform-first-moment} gives
\begin{equation}
    \int\|z\|_2\overline q_r(z)\,\mathrm dz \leq C_n \qquad\text{for almost every }r\in(0,T). \label{eq:regularity-limit-first-moment}
\end{equation}
Since the diffusion covariance is $2I_{d_n}$, and since \eqref{eq:reverse-linear-growth} and \eqref{eq:regularity-limit-first-moment} imply
\begin{equation}
    \begin{aligned}
        \int_0^T\int \{\|v_r(z)\|_2+\|2I_{d_n}\|_{\mathrm F}\} \,\mathrm d\nu_r(z)\,\mathrm dr <\infty,
    \end{aligned}
    \label{eq:regularity-superposition-integrability}
\end{equation}
all assumptions of Lemma~\ref{lem:trevisan-superposition} are satisfied. It follows that $(\nu_r)$ has a unique narrowly continuous representative $(\widetilde\nu_r)_{r\in[0,T]}$ and admits a superposition solution of the associated martingale problem. The boundary term in \eqref{eq:regularity-limit-weak-fp} gives
\begin{equation*}
    \widetilde\nu_0(\mathrm dz)=q_0(z)\,\mathrm dz.
\end{equation*}

The first-moment estimate extends from almost every time to this entire narrowly continuous representative. Indeed, for any $r\in[0,T]$, choose $r_j\to r$ outside the exceptional null set. Narrow continuity and the Portmanteau theorem yield
\begin{equation}
    \int\|z\|_2\,\mathrm d\widetilde\nu_r(z) \leq \liminf_{j\to\infty} \int\|z\|_2\,\mathrm d\widetilde\nu_{r_j}(z) \leq C_n. \label{eq:regularity-all-time-first-moment}
\end{equation}

The drift $v$ is globally Lipschitz with an integrable Lipschitz envelope and satisfies~\eqref{eq:reverse-linear-growth}. Therefore the final assertion of Lemma~\ref{lem:superposition-uniqueness-transfer} identifies $\widetilde\nu_r$, for every $r\in[0,T]$, with the marginal law of the unique strong SDE from Step~2. In particular, the passage from the space--time weak limit to fixed-time SDE marginals is justified by the superposition principle and well-posedness of the martingale problem.

Weak and weak-$*$ lower semicontinuity in
\eqref{eq:regularity-uniform-energy} now give
\begin{equation}
    \overline q\in L^\infty((0,T);L^2(\mathbb R^{d_n})) \cap L^2((0,T);H^1(\mathbb R^{d_n})). \label{eq:regularity-global-energy}
\end{equation}
We henceforth write $q$ for $\overline q$. This construction applies to both reverse equations and yields their density representatives.

For $\phi\in C_c^\infty(\mathbb R^{d_n})$, the two density curves satisfy
\begin{align}
    \frac{\mathrm d}{\mathrm dr}\int\phi\rho_r &= \int\{\Delta\phi+b_r^\top\nabla\phi\}\rho_r, \label{eq:regularity-exact-weak-fp}\\ \frac{\mathrm d}{\mathrm dr}\int\phi\widehat\rho_r &= \int\{\Delta\phi+\widehat b_r^\top\nabla\phi\} \widehat\rho_r. \label{eq:regularity-learned-weak-fp}
\end{align}
For the exact equation, direct substitution of the forward OU equation shows that $p_{n,T-r}$ solves the same Fokker--Planck equation with initial density $p_{n,T}$. Uniqueness from  Lemma~\ref{lem:superposition-uniqueness-transfer} therefore gives
\begin{equation*}
    \rho_{n,r}=p_{n,T-r} \qquad\text{for every }r\in[0,T].
\end{equation*}

\medskip
\noindent
\textbf{Step 4: continuity in $L^1$.}

Let $F_r=v_rq_r$ and $m(r)=\|F_r\|_{L^1}$. By \eqref{eq:reverse-linear-growth}, \eqref{eq:regularity-all-time-first-moment}, and $G_n\in L^4(0,T)$,
\begin{equation}
    m(r) \leq G_n(r)\left(1+\int\|z\|_2q_r(z)\,\mathrm dz\right), \qquad m\in L^4(0,T). \label{eq:regularity-flux-L4}
\end{equation}
The weak Fokker--Planck equation has the Duhamel representation
\begin{equation}
    \widetilde q_r = K_r*q_0 -\int_0^r\nabla K_{r-s}*F_s\,\mathrm ds, \qquad 0\leq r\leq T, \label{eq:regularity-duhamel}
\end{equation}
where equality with $q_r$ initially holds for almost every $r$. Since
\begin{equation}
    \|\nabla K_u\|_{L^1}\leq C_{d_n}u^{-1/2}, \label{eq:regularity-heat-gradient}
\end{equation}
the integral is finite in $L^1$. At $r=0$, the heat-semigroup term converges to $q_0$ in $L^1$, while H\"older's inequality gives
\begin{equation*}
    \begin{aligned}
        \int_0^r(r-s)^{-1/2}m(s)\,\mathrm ds &\leq \left\{\int_0^r(r-s)^{-2/3}\,\mathrm ds\right\}^{3/4} \left\{\int_0^rm(s)^4\,\mathrm ds\right\}^{1/4} \longrightarrow0.
    \end{aligned}
\end{equation*}
Hence $\widetilde q_r\to q_0$ in $L^1$ as $r\downarrow0$.

Fix $u\in(0,T)$ and let $r>u$. Then
\begin{align}
    \widetilde q_r-\widetilde q_u &=(K_r-K_u)*q_0 \notag\\ &\quad+ \int_0^u (\nabla K_{u-s}-\nabla K_{r-s})*F_s\,\mathrm ds \notag\\ &\quad- \int_u^r\nabla K_{r-s}*F_s\,\mathrm ds. \label{eq:regularity-right-continuity}
\end{align}
The first and last terms converge to zero in $L^1$ as $r\downarrow u$. For the middle term, split the integral at $u-\eta$. Dominated convergence applies on $[0,u-\eta]$, whereas on $[u-\eta,u]$ the $L^1$ norm is bounded by
\begin{equation*}
    C_{d_n} \left\{\int_{u-\eta}^u(u-s)^{-2/3}\,\mathrm ds\right\}^{3/4} \left\{\int_{u-\eta}^um(s)^4\,\mathrm ds\right\}^{1/4}.
\end{equation*}
First let $r\downarrow u$ and then $\eta\downarrow0$.

For left continuity, let $0<r<u\leq T$. We have
\begin{align}
    \widetilde q_r-\widetilde q_u &=(K_r-K_u)*q_0 \notag\\ &\quad+ \int_0^r (\nabla K_{u-s}-\nabla K_{r-s})*F_s\,\mathrm ds \notag\\ &\quad+ \int_r^u\nabla K_{u-s}*F_s\,\mathrm ds. \label{eq:regularity-left-continuity}
\end{align}
Splitting the middle integral at $u-\eta$, dominated convergence applies on $[0,u-\eta]$. On $[u-\eta,r]$, the $L^1$ norm is bounded by
\begin{equation*}
    \begin{aligned}
        &C_{d_n} \left\{\int_{u-\eta}^u(u-s)^{-2/3}\,\mathrm ds\right\}^{3/4} \left\{\int_{u-\eta}^um(s)^4\,\mathrm ds\right\}^{1/4}\\ &\quad+ C_{d_n} \left\{\int_{u-\eta}^r(r-s)^{-2/3}\,\mathrm ds\right\}^{3/4} \left\{\int_{u-\eta}^rm(s)^4\,\mathrm ds\right\}^{1/4}.
    \end{aligned}
\end{equation*}
First let $r\uparrow u$ and then $\eta\downarrow0$. Thus every term in
\eqref{eq:regularity-left-continuity} tends to zero in $L^1$. This also
gives left continuity at $u=T$. Consequently,
\begin{equation}
    \rho_n,\widehat\rho_n \in C([0,T];L^1(\mathbb R^{d_n})). \label{eq:regularity-L1-continuity}
\end{equation}
For every $r\in[0,T]$, choose $r_j\to r$ from the full-measure set on which $\widetilde q_{r_j}=q_{r_j}$ is a probability density. The $L^1$-continuity implies that $\widetilde q_r\geq0$ and $\|\widetilde q_r\|_{L^1}=1$. Hence $\widetilde q_r$ is a probability density at every time. The measure curve defined by $\widetilde q_r$ is continuous in total variation and agrees almost everywhere with the narrowly continuous curve $\widetilde\nu_r$ from Step~3. The two curves therefore agree for every $r\in[0,T]$. Thus the $L^1$ representative is precisely the fixed-time SDE marginal density, including both endpoint traces.

\medskip
\noindent
\textbf{Step 5: time derivatives.}

Fix $0<\tau<R<T$ and $L<\infty$. On $B_L$,
\begin{equation*}
    \|v_r\|_{L^\infty(B_L)}\leq C_LG_n(r).
\end{equation*}
Since $G_n\in L^4(0,T)$ and $q\in L^\infty(0,T;L^2)$,
\begin{equation*}
    vq\in L^2((\tau,R);L^2(B_L;\mathbb R^{d_n})).
\end{equation*}
Together with~\eqref{eq:regularity-global-energy}, the equation
\begin{equation*}
    \partial_rq=\Delta q-\nabla\cdot(vq)
\end{equation*}
therefore gives
\begin{equation}
    \partial_rq\in L^2((\tau,R);H^{-1}(B_L)). \label{eq:regularity-time-derivative}
\end{equation}

\medskip
\noindent
\textbf{Step 6: local Sobolev regularity of the density ratio.}

The exact density $\rho_{n,r}=p_{n,T-r}$ is strictly positive. On every compact cylinder $[\tau,R]\times B_L\subset(0,T)\times\mathbb R^{d_n}$, OU smoothing and Lemma~\ref{lem:hd-general-curvature} imply that the exact density is smooth and strictly positive. Hence
\begin{equation*}
    0<c\leq\rho_{n,r}(z)\leq C,
\end{equation*}
and $\rho_n^{-1}$, $\nabla\rho_n$, $\rho_n^{-2}\nabla\rho_n$, and $\rho_n^{-2}\partial_r\rho_n$ are bounded on the cylinder. Hence
\begin{equation*}
    f_n=\frac{\widehat\rho_n}{\rho_n} \in L^\infty((\tau,R);L^2(B_L))
\end{equation*}
and, in the sense of distributions,
\begin{equation*}
    \nabla f_n = \frac{\nabla\widehat\rho_n}{\rho_n} -\frac{\widehat\rho_n\nabla\rho_n}{\rho_n^2}.
\end{equation*}
Both terms belong to $L^2((\tau,R)\times B_L)$, proving
\begin{equation}
    f_n\in L^2_{\mathrm{loc}} ((0,T);H^1_{\mathrm{loc}}(\mathbb R^{d_n})). \label{eq:regularity-ratio-H1}
\end{equation}

\medskip
\noindent
\textbf{Step 7: the density-ratio identity.}

Fix $0\leq a<R<T$ such that $D_n(a)<\infty$, and choose $R_1\in(R,T)$. All local calculations below are carried out for $b<R_1$ and are then restricted to $[a,R]$. For $M>0$, define
\begin{equation*}
    T_M(s)=s\wedge M, \qquad \Phi_M(s)=\int_0^s2T_M(u)\,\mathrm du =
    \begin{cases}
        s^2,&0\leq s\leq M,\\ 2Ms-M^2,&s>M.
    \end{cases}
\end{equation*}
Then
\begin{equation}
    T_M(s)^2\leq\Phi_M(s)\leq s^2, \qquad \Phi_M(s)\leq2Ms. \label{eq:regularity-truncation-bounds}
\end{equation}

The trace at $r=0$ must be defined using $\int\Phi_M(f_{n,r})\rho_{n,r}\,\mathrm dz$, rather than through separate $L^2$ traces of the two densities. Define the perspective function
\begin{equation*}
    F_M(x,y) =
    \begin{cases}
        x\Phi_M(y/x),&x>0,\\ 2My,&x=0.
    \end{cases}
\end{equation*}
For $x>0$, direct differentiation gives
\begin{equation*}
    \partial_yF_M(x,y)=\Phi_M'(y/x), \qquad \partial_xF_M(x,y) = \Phi_M(y/x)-\frac yx\Phi_M'(y/x).
\end{equation*}
Since $|\Phi_M'|\leq2M$ and
\begin{equation*}
    \left| \Phi_M(s)-s\Phi_M'(s) \right| \leq M^2,
\end{equation*}
$F_M$ extends to a globally Lipschitz function on $\mathbb R_+^2$. Therefore Step~4 implies that
\begin{equation}
    E_M(r) := \int\Phi_M(f_r)\rho_r = \int F_M(\rho_r,\widehat\rho_r) \label{eq:regularity-perspective-energy}
\end{equation}
is continuous on $[0,T]$. In particular, this supplies the correct one-sided density-ratio trace at $r=0$.

Let $\eta_L\in C_c^\infty$ satisfy
\begin{equation*}
    0\leq\eta_L\leq1, \quad \eta_L=1\ \text{on }B_L, \quad \supp(\eta_L)\subset B_{2L}, \quad \|\nabla\eta_L\|_\infty\leq C/L, \quad \|\nabla^2\eta_L\|_\infty\leq C/L^2,
\end{equation*}
and put $\chi_L=\eta_L^2$. Then $\chi_L$ has the same support properties and satisfies $\|\nabla\chi_L\|_\infty\leq C/L$ and $\|\Delta\chi_L\|_\infty\leq C/L^2$. We first justify the renormalized test used below. On every compact cylinder $[a,b]\times B_{3L}\subset(0,T)\times\mathbb R^{d_n}$, the exact density and its space--time derivatives are bounded, and $\rho$ is bounded away from zero. Moreover, $\rho_r^{-1}\in W^{1,\infty}(B_{3L})$ uniformly for $r\in[a,b]$, so multiplication by $\rho_r^{-1}$ is uniformly bounded on $H^{-1}(B_{3L})$. Hence Step~5 and the identity
\begin{equation*}
    \partial_r f = \rho^{-1}\partial_r\widehat\rho -\widehat\rho\,\rho^{-2}\partial_r\rho
\end{equation*}
give
\begin{equation*}
    f\in L^2((a,b);H^1(B_{3L})), \qquad \partial_r f\in L^2((a,b);H^{-1}(B_{3L})).
\end{equation*}
Choose $\zeta_L\in C_c^\infty(B_{3L})$ such that $0\leq\zeta_L\leq1$ and $\zeta_L=1$ on $B_{2L}$. Then
\begin{equation*}
    \zeta_Lf \in L^2((a,b);H_0^1(B_{3L})), \qquad \partial_r(\zeta_Lf) = \zeta_L\partial_rf \in L^2((a,b);H^{-1}(B_{3L})).
\end{equation*}
Here multiplication by $\zeta_L$ is bounded on $H^{-1}(B_{3L})$. The Lions--Magenes lemma \citep{lions2012non} therefore gives, after choosing the canonical representative,
\begin{equation*}
    \zeta_Lf\in C([a,b];L^2(B_{3L})).
\end{equation*}
Since $\zeta_L=1$ on $B_{2L}$, it follows that
\begin{equation*}
    f\in C([a,b];L^2(B_{2L})).
\end{equation*}
Hence all endpoint terms below are well defined. Subtracting $f$ times the exact Fokker--Planck equation from the learned equation gives, in distributions,
\begin{equation}
    \partial_r f = \Delta f+(2\nabla\log\rho-b_r)^\top\nabla f -2\rho^{-1}\nabla\cdot(e_{n,T-r}f\rho). \label{eq:regularity-ratio-equation}
\end{equation}

We next justify the local parabolic chain rule used in the renormalized calculation. Fix $0<a<b<R_1$ and set
\begin{equation*}
    \Omega_L=B_{2L}, \qquad w_r=\rho_r\chi_L.
\end{equation*}
The preceding argument gives
\begin{equation*}
    f\in L^2((a,b);H^1(\Omega_L)) \cap C([a,b];L^2(\Omega_L)), \qquad \partial_r f\in L^2((a,b);H^{-1}(\Omega_L)).
\end{equation*}
Moreover, the exact density is smooth and strictly positive on the compact cylinder $[a,b]\times\overline{\Omega_L}$. In particular,
\begin{equation*}
    \rho,\,\partial_r\rho,\,\nabla\rho,\,\rho^{-1} \in L^\infty((a,b)\times\Omega_L),
\end{equation*}
and $w$, $\partial_rw$, and $\nabla w$ are bounded on the same cylinder.

For $0<\varepsilon\leq1$, let $\Phi_{M,\varepsilon}\in C^\infty([0,\infty))$ be a smooth convex approximation of $\Phi_M$ satisfying
\begin{equation*}
    \Phi_{M,\varepsilon}(0)=0, \qquad 0\leq\Phi_{M,\varepsilon}'\leq2M, \qquad 0\leq\Phi_{M,\varepsilon}''\leq2, \qquad \Phi_{M,\varepsilon}''(s)=0 \quad\text{for }s\geq M+\varepsilon.
\end{equation*}
For every $s\geq0$, it also satisfies
\begin{equation*}
    0\leq\Phi_{M,\varepsilon}(s)\leq2Ms, \qquad 0\leq s\Phi_{M,\varepsilon}'(s)\leq2Ms.
\end{equation*}
Moreover, as $\varepsilon\downarrow0$,
\begin{equation*}
    \Phi_{M,\varepsilon}\to\Phi_M, \qquad \Phi_{M,\varepsilon}'\to2T_M, \qquad \Phi_{M,\varepsilon}'' \to2\mathbf1_{\{s<M\}}
\end{equation*}
pointwise, with the last convergence understood away from $s=M$. The Sobolev chain rule gives
\begin{equation*}
    \nabla\{\Phi_{M,\varepsilon}'(f)w\} = \Phi_{M,\varepsilon}''(f)\nabla f\,w +\Phi_{M,\varepsilon}'(f)\nabla w.
\end{equation*}
Since $\chi_L$ is compactly supported in $\Omega_L$ and the first two derivatives of $\Phi_{M,\varepsilon}$ are bounded, this proves
\begin{equation}
    \Phi_{M,\varepsilon}'(f)\rho\chi_L \in L^2((a,b);H_0^1(\Omega_L)). \label{eq:regularity-admissible-renormalized-test}
\end{equation}

We now use the nonlinear integration-by-parts formula of Alt and Luckhaus \citeyearpar{alt1983quasilinear}; see their Lemma~1.5. Applied after localization with the smooth time-dependent weight $w_r$, it shows that the map
\begin{equation*}
    r\longmapsto \int\Phi_{M,\varepsilon}(f_r)\rho_r\chi_L
\end{equation*}
has an absolutely continuous representative and, for almost every $r\in(a,b)$, satisfies
\begin{align}
    \frac{\mathrm d}{\mathrm dr} \int\Phi_{M,\varepsilon}(f_r)\rho_r\chi_L &= \left\langle \partial_r f_r, \Phi_{M,\varepsilon}'(f_r)\rho_r\chi_L \right\rangle_{H^{-1},H_0^1} + \int\Phi_{M,\varepsilon}(f_r) \partial_r\rho_r\chi_L. \label{eq:regularity-parabolic-chain-rule}
\end{align}
This is the weighted version of the standard nonlinear integration-by-parts formula. For completeness, it may be obtained directly by replacing $f$ with its one-sided time-Steklov averages, applying the classical chain rule to the averaged functions, and then passing to the limit in the $L^2_tH^1_x$--$L^2_tH^{-1}_x$ duality. The endpoint terms converge by $f\in C([a,b];L^2(\Omega_L))$, while the term containing $\partial_rw$ converges by dominated convergence. This is also the standard renormalization procedure for weak Fokker--Planck equations; see \citet[Section~5.3]{bris2008existence}.

All terms in \eqref{eq:regularity-parabolic-chain-rule} are integrable. Indeed, the coefficients involving $\rho$ and their derivatives are bounded on the compact cylinder, while $b_r$ and $e_{n,T-r}$ belong locally to $L^4$ in time with values in $L^\infty$ in space, and
\begin{equation*}
    f\in L^\infty((a,b);L^2(\Omega_L)) \cap L^2((a,b);H^1(\Omega_L)).
\end{equation*}

Substituting~\eqref{eq:regularity-ratio-equation} and $\partial_r\rho=\Delta\rho-\nabla\cdot(b_r\rho)$ into \eqref{eq:regularity-parabolic-chain-rule}, and integrating by parts in space, yields
\begin{align}
    &\int\Phi_{M,\varepsilon}(f_b)\rho_b\chi_L +\int_a^b\int \Phi_{M,\varepsilon}''(f_r) \|\nabla f_r\|_2^2\rho_r\chi_L \notag\\ &= \int\Phi_{M,\varepsilon}(f_a)\rho_a\chi_L + 2\int_a^b\int f_r\Phi_{M,\varepsilon}''(f_r) e_{n,T-r}^\top\nabla f_r\rho_r\chi_L +\int_a^b\mathcal R_{M,\varepsilon,L}(r)\,\mathrm dr, \label{eq:regularity-smooth-localized-ratio-identity}
\end{align}
where
\begin{align}
    \mathcal R_{M,\varepsilon,L}(r) &= \int \Phi_{M,\varepsilon}(f_r)\rho_r \{\Delta\chi_L+b_r^\top\nabla\chi_L\}+ 2\int f_r\Phi_{M,\varepsilon}'(f_r)\rho_r e_{n,T-r}^\top\nabla\chi_L. \label{eq:regularity-smooth-cutoff-remainder}
\end{align}
Here and below, spatial integrals without an indicated domain are over $\mathbb R^{d_n}$. The exact-drift terms in the interior cancel because $\partial_r\rho=\Delta\rho-\nabla\cdot(b_r\rho)$.

It remains to remove the smoothing parameter. A Sobolev function has zero gradient almost everywhere on each of its level sets, so $\nabla f=0$ almost everywhere on $\{f=M\}$. Consequently,
\begin{equation*}
    \nabla T_M(f)=\mathbf1_{\{f<M\}}\nabla f \qquad\text{a.e.},
\end{equation*}
and
\begin{equation*}
    \Phi_{M,\varepsilon}''(f)\|\nabla f\|_2^2 \longrightarrow 2\|\nabla T_M(f)\|_2^2 \qquad\text{a.e.}
\end{equation*}
The bound $0\leq\Phi_{M,\varepsilon}''\leq2$ and the local $L^2$-integrability of $\nabla f$ justify dominated convergence in the dissipation term. Similarly,
\begin{equation*}
    f\Phi_{M,\varepsilon}''(f)\nabla f \longrightarrow 2T_M(f)\nabla T_M(f) \qquad\text{a.e.}
\end{equation*}
On the support of $\Phi_{M,\varepsilon}''$, the factor $f$ is bounded by a constant depending only on $M$. The local $L^4_tL^\infty_x$ bound for $e_{n,T-r}$ and the $L^2_tL^2_x$ bound for $\nabla f$ therefore provide an integrable majorant for the mixed term. Finally,
\begin{equation*}
    \Phi_{M,\varepsilon}(f)\rho \leq2Mf\rho=2M\widehat\rho, \qquad f\Phi_{M,\varepsilon}'(f)\rho \leq2Mf\rho=2M\widehat\rho,
\end{equation*}
which permits passage to the limit in the endpoint and cutoff terms. Letting $\varepsilon\downarrow0$ in \eqref{eq:regularity-smooth-localized-ratio-identity}, we obtain, for $0<a<b<R_1$,
\begin{align}
    &\int\Phi_M(f_b)\rho_b\chi_L +2\int_a^b\int \|\nabla T_M(f_r)\|_2^2\rho_r\chi_L \notag\\ &= \int\Phi_M(f_a)\rho_a\chi_L + 4\int_a^b\int T_M(f_r)e_{n,T-r}^{\top}\nabla T_M(f_r) \rho_r\chi_L +\int_a^b\mathcal R_{M,L}(r)\,\mathrm dr. \label{eq:regularity-localized-ratio-identity}
\end{align}
where the cutoff remainder is
\begin{align}
    \mathcal R_{M,L}(r) &= \int \Phi_M(f_r)\rho_r \{\Delta\chi_L+b_r^\top\nabla\chi_L\} + 2\int f_r\Phi_M'(f_r)\rho_r e_{n,T-r}^\top\nabla\chi_L. \label{eq:regularity-cutoff-remainder-formula}
\end{align}
By \eqref{eq:regularity-truncation-bounds}, $\Phi_M(f)\rho\leq2M\widehat\rho$ and $f\Phi_M'(f)\rho=2fT_M(f)\rho\leq2M\widehat\rho$.  Therefore the linear-growth bound~\eqref{eq:reverse-linear-growth} and the support properties of $\nabla\chi_L$ and $\Delta\chi_L$ give
\begin{equation}
    |\mathcal R_{M,L}(r)| \leq CM\{1+G_n(r)\} \int_{\{\|z\|_2>L\}}\widehat\rho_r(z)\,\mathrm dz. \label{eq:regularity-cutoff-remainder-bound}
\end{equation}
Since $G_n\in L^1(0,T)$ and $\widehat\rho_r$ is a probability density, dominated convergence implies
\begin{equation}
    \int_a^b|\mathcal R_{M,L}(r)|\,\mathrm dr\longrightarrow0 \qquad\text{as }L\to\infty, \label{eq:regularity-cutoff-remainder}
\end{equation}
for each fixed $M$.

Put
\begin{equation*}
    a_n(t)=\int\|e_{n,t}(z)\|_2^4p_{n,t}(z)\,\mathrm dz, \qquad \alpha_n(r)=a_n(T-r)^{1/2}.
\end{equation*}
Define
\begin{equation*}
    E_{M,L}(r) = \int\Phi_M(f_r)\rho_r\chi_L, \qquad I_{M,L}(r) = \int\|\nabla T_M(f_r)\|_2^2\rho_r\chi_L.
\end{equation*}
At this point only local $H^1$ regularity of $f$ is known. Therefore Lemma~\ref{lem:hd-general-weighted} must first be applied to the compactly supported function
\begin{equation*}
    h=T_M(f_r)\eta_L,
\end{equation*}
which belongs to $H^1(\rho_r)$ for almost every $r$. Since
\begin{equation*}
    \|\nabla(T_M(f_r)\eta_L)\|_2^2 \leq 2\eta_L^2\|\nabla T_M(f_r)\|_2^2 +2T_M(f_r)^2\|\nabla\eta_L\|_2^2,
\end{equation*}
$T_M(f_r)\leq M$, and $T_M(f_r)^2\leq\Phi_M(f_r)$, the weighted estimate gives
\begin{align}
    \int T_M(f_r)^2\|e_{n,T-r}\|_2^2\rho_r\chi_L &\leq \frac{8C_{\mathrm{LS}}}{\lambda_\star}I_{M,L}(r) +\frac{CM^2}{L^2} +C\alpha_n(r)E_{M,L}(r). \label{eq:regularity-localized-weighted-bound}
\end{align}
Here we used $\int\rho_r=1$ to control the cutoff-gradient term. On the other hand, Young's inequality gives
\begin{equation*}
    4T_M(f_r)e_{n,T-r}^\top\nabla T_M(f_r) \leq \|\nabla T_M(f_r)\|_2^2 +4T_M(f_r)^2\|e_{n,T-r}\|_2^2.
\end{equation*}
Combining this inequality,~\eqref{eq:regularity-localized-weighted-bound}, and~\eqref{eq:regularity-localized-ratio-identity} yields
\begin{align}
    E_{M,L}(b) &+ \theta_\star\int_a^b I_{M,L}(r)\,\mathrm dr \notag\\ &\leq E_{M,L}(a) +C\int_a^b\alpha_n(r)E_{M,L}(r)\,\mathrm dr +\frac{CM^2}{L^2}(b-a) +\int_a^b|\mathcal R_{M,L}(r)|\,\mathrm dr, \label{eq:regularity-localized-energy-bound}
\end{align}
where
\begin{equation}
    \theta_\star = 1-\frac{32C_{\mathrm{LS}}}{\lambda_\star}>0 \label{eq:regularity-absorption-constant}
\end{equation}
by Assumption~\ref{ass:hd-score-approximation}.

For the endpoint $a=0$, apply \eqref{eq:regularity-localized-energy-bound} first with $a=\epsilon>0$ and then let $\epsilon\downarrow0$. The initial energy converges by the same perspective-continuity argument as in \eqref{eq:regularity-perspective-energy}, now with the fixed cutoff $\chi_L$; Fatou's lemma applies to the dissipation, while dominated convergence applies to the remaining terms. Thus \eqref{eq:regularity-localized-energy-bound} is valid also for $a=0$. At this stage we claim only the inequality, not yet the full identity \eqref{eq:regularity-localized-ratio-identity} at the initial endpoint.

Now let $L\to\infty$. The endpoint energies converge by dominated convergence, since $0\leq\Phi_M(f_r)\rho_r\chi_L\leq2M\widehat\rho_r$. Fatou's lemma applies to the dissipation, while \eqref{eq:regularity-cutoff-remainder} eliminates the remainder. We obtain
\begin{align}
    E_M(b) &+ \theta_\star\int_a^b\int \|\nabla T_M(f_r)\|_2^2\rho_r \,\mathrm dz\,\mathrm dr \leq E_M(a) +C\int_a^b\alpha_n(r)E_M(r)\,\mathrm dr. \label{eq:regularity-global-truncated-bound}
\end{align}
Because $E_M$ is continuous also at $a=0$, Gronwall's inequality applies on $[a,R]$. Since $E_M(a)\leq D_n(a)$ and $\alpha_n\in L^1(0,T)$, it gives, uniformly in $M$,
\begin{align}
    \sup_{a\leq r\leq R}E_M(r) +\theta_\star\int_a^R\int \|\nabla T_M(f_r)\|_2^2\rho_r\,\mathrm dz\,\mathrm dr &\leq C_R D_n(a), \label{eq:regularity-uniform-truncated-energy}
\end{align}
where $C_R<\infty$ because $\alpha_n\in L^1(0,T)$.

This is the point at which global weighted Sobolev regularity becomes a conclusion rather than an assumption. For almost every $r$, \eqref{eq:regularity-uniform-truncated-energy} and $T_M(f_r)^2\leq\Phi_M(f_r)$ imply
\begin{equation*}
    T_M(f_r)\in H^1(\rho_r).
\end{equation*}
Indeed, $T_M(f_r)\eta_L\to T_M(f_r)$ in $H^1(\rho_r)$: the tail of $\nabla T_M(f_r)$ vanishes by integrability, and the cutoff-gradient term is bounded by $CM^2L^{-2}\int\rho_r$.

It is now legitimate to apply Lemma~\ref{lem:hd-general-weighted} globally with $h=T_M(f_r)$. We obtain
\begin{align}
    \int T_M(f_r)^2\|e_{n,T-r}\|_2^2\rho_r &\leq \frac{4C_{\mathrm{LS}}}{\lambda_\star} \int\|\nabla T_M(f_r)\|_2^2\rho_r + C\alpha_n(r)E_M(r). \label{eq:regularity-global-weighted-bound}
\end{align}
After integration in $r$, the right-hand side is bounded uniformly in $M$ by~\eqref{eq:regularity-uniform-truncated-energy}.

The mixed term is consequently integrable by Cauchy--Schwarz. For $0<a<b\leq R$, return to \eqref{eq:regularity-localized-ratio-identity} and let $L\to\infty$ in the equality itself. If $a=0$, first use the resulting equality on $(\epsilon,b)$ and then let $\epsilon\downarrow0$; the endpoint energy converges by \eqref{eq:regularity-perspective-energy}, and the dissipation and mixed terms are now absolutely integrable by \eqref{eq:regularity-uniform-truncated-energy} and \eqref{eq:regularity-global-weighted-bound}. We thereby obtain, for $0\leq a<b\leq R$, the global truncated identity
\begin{align}
    E_M(b)-E_M(a) &=-2\int_a^b\int \|\nabla T_M(f_r)\|_2^2\rho_r \,\mathrm dz\,\mathrm dr \notag\\ &\quad+ 4\int_a^b\int T_M(f_r)e_{n,T-r}^\top\nabla T_M(f_r)\rho_r \,\mathrm dz\,\mathrm dr. \label{eq:regularity-global-truncated-identity}
\end{align}

Finally let $M\to\infty$. Monotone convergence gives
\begin{equation*}
    \Phi_M(f_r)\uparrow f_r^2, \qquad \|\nabla T_M(f_r)\|_2^2 \uparrow\|\nabla f_r\|_2^2.
\end{equation*}
Integrating~\eqref{eq:regularity-global-weighted-bound} over $[a,R]$ and using monotone convergence gives
\begin{equation}
    \int_a^R\int f_r^2\|e_{n,T-r}\|_2^2\rho_r \,\mathrm dz\,\mathrm dr<\infty. \label{eq:regularity-full-error-integrability}
\end{equation}
Consequently, $T_M(f)e_{n,T-r}\to fe_{n,T-r}$ and $\nabla T_M(f)\to\nabla f$ strongly in their corresponding weighted $L^2$ spaces. Indeed, the first convergence follows by dominated convergence from~\eqref{eq:regularity-full-error-integrability}, while
\begin{equation*}
    \nabla T_M(f) = \mathbf 1_{\{f<M\}}\nabla f
\end{equation*}
and the limiting dissipation bound give the second. Cauchy--Schwarz therefore passes the mixed term to the limit in $L^1(a,R)$. We obtain, for $0\leq a<b\leq R$,
\begin{align}
    D_n(b)-D_n(a) &= -2\int_a^b\int \|\nabla f_{n,r}\|_2^2\rho_{n,r} \,\mathrm dz\,\mathrm dr \notag\\ &\quad+ 4\int_a^b\int f_{n,r}e_{n,T-r}^\top\nabla f_{n,r}\rho_{n,r} \,\mathrm dz\,\mathrm dr. \label{eq:regularity-integrated-ratio-identity}
\end{align}
Both integrands are in $L^1(a,R)$ by \eqref{eq:regularity-full-error-integrability} and the limiting dissipation bound. Hence the right-hand side defines an absolutely continuous representative of $D_n$ on $[a,R]$, and differentiation proves \eqref{eq:regularity-density-ratio}.
\end{proof}

\begin{mylemma}
\label{lem:hd-general-pde}
Under Assumptions~\ref{ass:hd-setting} and~\ref{ass:hd-score-approximation},
\begin{equation}
    \chi^2(\widehat\mu_n,\mu_n)\to_p 0, \qquad \chi^2(\widehat\mu_{n,X},\mu_{n,X})\to_p 0, \qquad W_4(\widehat\mu_n,\mu_n)\to_p 0. \label{eq:pde-conclusions}
\end{equation}
\end{mylemma}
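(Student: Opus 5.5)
The plan is to work on the event $\mathcal E_n$ of Lemma~\ref{lem:hd-reverse-regularity} and run a Gronwall argument on the chi-square divergence $H_n(r)=D_n(r)-1$ along the reverse time. Since $\mathbb P(\mathcal E_n)\to1$ by \eqref{eq:score-l4} and \eqref{eq:small-lip}, it suffices to prove the three conclusions in \eqref{eq:pde-conclusions} on this event.

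\textbf{Step 1: initial divergence.} At $r=0$ we have $\widehat\rho_{n,0}=\gamma_{d_n}$ and $\rho_{n,0}=p_{n,T_n}$. I will bound $\chi^2(\gamma_{d_n}\|p_{n,T_n})$ by a quantity of order $d_ne^{-2T_n}$, which tends to zero by part~4 of Assumption~\ref{ass:hd-setting}.

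My first approach uses the log-ratio $\log(\gamma_{d_n}/p_{n,T_n})$. Its gradient is $-z-s_{n,T_n}(z)$, which Tweedie's formula rewrites as $-(\alpha_T/\tau_T^2)\,\mathbb E[Z_0\mid Z_T=z]+(1-\tau_T^{-2})z$. The Hessian bound \eqref{eq:ou-time-dependent-curvature} gives $\nabla^2 U_{n,T}=I+O(e^{-2T})$ uniformly. Hence $\gamma_{d_n}/p_{n,T_n}$ is $e^{\psi}$ with $\psi$ having Hessian $O(e^{-2T})$ and gradient at the mean of size $O(e^{-T}\sqrt{d_n})$. A Gaussian computation of $\int e^{2\psi}\,\mathrm dp_{n,T_n}-1$ then gives $O(d_ne^{-2T_n})$.

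If this direct route is messy, the fallback compares both densities to $N(\alpha_T m,\,\alpha_T^2\Sigma_Z+\tau_T^2 I)$ using the explicit curvature sandwich.

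This step needs $D_n(0)<\infty$, which is required to apply Lemma~\ref{lem:hd-reverse-regularity} at $a=0$.

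\textbf{Step 2: differential inequality.} Start from \eqref{eq:regularity-density-ratio}. Bound the mixed term by Young's inequality:
\[
4\int f\epsilon^\top\nabla f\rho\le\int\|\nabla f\|^2\rho+4\int f^2\|\epsilon\|^2\rho .
\]
Then apply Lemma~\ref{lem:hd-general-weighted} with $h=f_{n,r}$, which is legitimate because \eqref{eq:regularity-full-error-integrability} provides the required regularity. This bounds the weighted term by $(16C_{\mathrm{LS}}/\lambda_\star)\int\|\nabla f\|^2\rho+C\alpha_n(r)D_n(r)$.

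Since $\lambda_\star>32C_{\mathrm{LS}}$, there remains a dissipation $c\int\|\nabla f\|^2\rho$ with $c>0$. Because $\int f\rho=1$, the Poincar\'e inequality \eqref{eq:poincare} gives $\int\|\nabla f\|^2\rho\ge H_n/C_{\mathrm{LS}}$. Therefore
\[
H_n'\le -c_0H_n+C\alpha_n(1+H_n).
\]
By Cauchy--Schwarz, $\int_0^{T_n}\alpha_n\le T_n^{1/2}\bigl(\int a_n\bigr)^{1/2}$, which need not be small. So the plain Gronwall bound must be sharpened to use $-c_0H_n$: Duhamel gives
\[
H_n(r)\le e^{-c_0r+C\int\alpha}H_n(0)+C\int_0^r e^{-c_0(r-s)}\alpha_n(s)\,e^{C\int_s^r\alpha}\,\mathrm ds .
\]
Windowed Cauchy--Schwarz bounds $\int_s^{s+1}\alpha\le(\int_s^{s+1}a_n)^{1/2}$. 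By \eqref{eq:score-l4} these window integrals are uniformly $o_p(1)$, so $e^{C\int_s^r\alpha}\le e^{o_p(1)(r-s+1)}$, which is absorbed by $e^{-c_0(r-s)}$. The convolution term is then $o_p(1)$, giving $\sup_{r<T_n}H_n(r)=o_p(1)$. The paper's "stopping argument" can be implemented this way, and I expect this time-uniform absorption, rather than naive Gronwall, to be the main obstacle.

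\textbf{Step 3: endpoint and consequences.} As $r\uparrow T_n$, $\widehat\rho_{n,r}\to\widehat\mu_n$ and $\rho_{n,r}\to\mu_n$ in $L^1$ (Step~4 of Lemma~\ref{lem:hd-reverse-regularity}). Joint lower semicontinuity of $\chi^2$ under narrow convergence then gives $\chi^2(\widehat\mu_n,\mu_n)\le\liminf_{r\uparrow T_n} H_n(r)=o_p(1)$. The data-processing inequality applied to the projection onto $X$ gives the marginal statement $\chi^2(\widehat\mu_{n,X},\mu_{n,X})\to_p0$.

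For $W_4$, Lemma~\ref{lem:hd-joint-W4} can simply be invoked. Alternatively, $W_4$ follows from the $\chi^2$ bound through the transport inequality $W_4^4\le C\int\|z\|^4|\mathrm d\widehat\mu_n-\mathrm d\mu_n|$. Cauchy--Schwarz bounds this by $C(\mathbb E_{\mu_n}\|z\|^8)^{1/2}\chi^{1/2}$, but that moment is of order $d_n^2$, which would spoil the rate. So the coupling lemma is the route to use.
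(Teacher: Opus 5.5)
Your treatment of the two $\chi^2$ conclusions is sound and essentially the paper's. The density-ratio identity, Young's inequality, Lemma~\ref{lem:hd-general-weighted} and the Poincar\'e inequality give $H_n'\le -c_0H_n+C\alpha_n(1+H_n)$. Lower semicontinuity of $\chi^2$ and data processing (the paper's conditional Jensen) then finish.

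Your windowed Duhamel absorption of $e^{C\int_s^r\alpha_n}$ is a valid alternative to the paper's stopping time. The paper stops before $H_n$ reaches $1$, so that $1+H_n\le2$, and then only needs $\sup_r\int_0^re^{-c_0(r-s)}\alpha_n(s)\,\mathrm ds\le C(\int_0^{T_n}a_n)^{1/2}$.

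Your log-ratio route to $H_n(0)=o(1)$ can be completed, but it needs an exponential-moment bound for a non-Lipschitz $\psi$, not just ``a Gaussian computation''. The paper's Lemma~\ref{lem:hd-general-init} is much shorter. It writes $p_{n,T_n}/\gamma_{d_n}$ as an expectation over $Z_0$ and applies Jensen with $\mathbb E Z_0=0$ to get a pointwise lower bound. One Gaussian integral then gives $\exp\{C'd_ne^{-2T_n}\}-1$, using only $\mathbb E\|Z_0\|^2\le Cd_n$.

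The genuine gap is the third conclusion. Lemma~\ref{lem:hd-joint-W4} is the same statement as $W_4(\widehat\mu_n,\mu_n)\to_p0$, and the paper proves it by citing the present lemma, so invoking it is circular. Since you correctly ruled out the $\chi^2$ route, this part currently has no argument. The coupling you must supply is not routine, for two reasons.

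\textbf{Error measured along the exact marginals.} Assumption~\ref{ass:hd-score-approximation} controls the score error only in $L^4(p_{n,t})$. By contrast, the fixed-dimensional $\mathcal A_{n,q}(T_n)$ is measured along $\widehat Y$. You must therefore couple the learned process with the exact process started from $p_{n,T}$. Then split $e_{n,T-r}(\widehat Y_r)=e_{n,T-r}(Y_r)+\{e_{n,T-r}(\widehat Y_r)-e_{n,T-r}(Y_r)\}$ and absorb the bracket via $\Lip(e_{n,t})\le L_\star$.

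\textbf{Dimension-free exponential contraction.} The bound $\nabla^2U_{n,t}\succeq m(t)I$ with $m(t)\ge1-C_0e^{-2t}$ makes the learned drift one-sided Lipschitz with coefficient $1-2m(T-r)+2L_\star$. Its integral over $[s,r]$ is at most $C_0-a_0(r-s)$, with $a_0=1-2L_\star>1/2$. Without this contraction the error is not damped. A bounded-expansion Gronwall bound as in Lemma~\ref{fix_wq} would at best leave $\int_0^{T_n}a_n(t)^{1/4}\,\mathrm dt\le T_n^{3/4}(\int_0^{T_n}a_n)^{1/4}$, which need not vanish. Moreover, the $2L_\star$ term alone would contribute a factor $e^{2L_\star T_n}$.

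\textbf{How the pieces combine.} With the contraction, $\|\Delta_T\|\le 2e^{C_0}\int_0^Te^{-a_0(T-r)}\|e_{n,T-r}(Y_r)\|\,\mathrm dr$. Weighted Jensen then gives $\mathbb E\|\Delta_T\|^4\le C\int_0^Te^{-a_0t}a_n(t)\,\mathrm dt\to_p0$. A second synchronous coupling, of two learned processes started from $\gamma_{d_n}$ and $p_{n,T}$, contributes at most $Ce^{-a_0T}W_4(\gamma_{d_n},p_{n,T})\le C(d_ne^{-2T_n})^{1/2}\to0$. This last bound follows from the forward coupling and $\mathbb E\|Z_0\|^4\le Cd_n^2$.
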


\begin{proof}
Write $T=T_n$ and work on the event $\mathcal E_n$ in \eqref{eq:regularity-event}. By~\eqref{eq:score-l4} and \eqref{eq:small-lip}, this event has probability tending to one. Set
\begin{equation*}
    \rho_{n,r}=p_{n,T-r}, \qquad \epsilon_{n,r}=e_{n,T-r},
\end{equation*}
and let $\widehat\rho_{n,r}$ be the learned reverse density.  The exact and learned Fokker--Planck equations are
\begin{align*}
    \partial_r\rho_{n,r} &= \Delta\rho_{n,r}-\nabla\cdot(b_{n,r}\rho_{n,r}),\\ \partial_r\widehat\rho_{n,r} &= \Delta\widehat\rho_{n,r} -\nabla\cdot\{(b_{n,r}+2\epsilon_{n,r})\widehat\rho_{n,r}\}.
\end{align*}
All the regularity needed below, including the validity of the density-ratio calculation, is supplied by Lemma~\ref{lem:hd-reverse-regularity}.  Define
\begin{equation*}
    f_{n,r}=\frac{\widehat\rho_{n,r}}{\rho_{n,r}}, \qquad D_n(r)=\int f_{n,r}^2\rho_{n,r}\,\mathrm dz, \qquad H_n(r)=D_n(r)-1.
\end{equation*}
Since both densities integrate to one,
\begin{equation}
    \int f_{n,r}\rho_{n,r}\,\mathrm dz=1, \qquad H_n(r)=\chi^2(\widehat\rho_{n,r},\rho_{n,r}). \label{eq:H-is-chi}
\end{equation}
At reverse time zero, $\widehat\rho_{n,0}=\gamma_{d_n}$ and $\rho_{n,0}=p_{n,T_n}$. Moreover,
\begin{equation*}
    \mathbb E\|Z_0\|_2^2 = \tr(\Sigma_n)+\beta_n^\top\Sigma_n\beta_n+\sigma_n^2 \leq Cd_n.
\end{equation*}
Lemma~\ref{lem:hd-general-init} therefore gives $H_n(0)=o(1)$ and, in particular, $D_n(0)<\infty$ for all sufficiently large $n$. Thus Lemma~\ref{lem:hd-reverse-regularity} makes the following differential calculation valid on every compact subinterval of $[0,T)$. Let
\begin{equation*}
    a_n(u) = \int\|e_{n,u}(z)\|_2^4p_{n,u}(z)\,\mathrm dz, \qquad \alpha_n(r)=a_n(T-r)^{1/2}.
\end{equation*}
From~\eqref{eq:regularity-density-ratio} and $4xy\leq x^2+4y^2$,
\begin{equation*}
    D_n'(r) \leq -\int\|\nabla f_{n,r}\|_2^2\rho_{n,r} +4\int\|\epsilon_{n,r}\|_2^2f_{n,r}^2\rho_{n,r}.
\end{equation*}
Apply Lemma~\ref{lem:hd-general-weighted} with $g=\|\epsilon_{n,r}\|_2$ and $h=f_{n,r}$. Since the Euclidean norm is one-Lipschitz,
\begin{equation*}
    \Lip\bigl(\|\epsilon_{n,r}\|_2\bigr) \leq \Lip(\epsilon_{n,r}) \leq L_\star.
\end{equation*}
We therefore obtain
\begin{equation*}
    D_n'(r) \leq -\left(1-\frac{16C_{\mathrm{LS}}}{\lambda_\star}\right) \int\|\nabla f_{n,r}\|_2^2\rho_{n,r} +C\alpha_n(r)D_n(r).
\end{equation*}
Since $\lambda_\star>32C_{\mathrm{LS}}$, the Poincar\'e inequality yields
\begin{equation*}
    \int\|\nabla f_{n,r}\|_2^2\rho_{n,r} \geq \frac1{C_{\mathrm{LS}}} \int(f_{n,r}-1)^2\rho_{n,r} =\frac{H_n(r)}{C_{\mathrm{LS}}}.
\end{equation*}
Therefore, with
\begin{equation*}
    c_0 := \frac{1}{C_{\mathrm{LS}}} \left(1-\frac{16C_{\mathrm{LS}}}{\lambda_\star}\right)>0,
\end{equation*}
we have
\begin{equation}
    H_n'(r) \leq -c_0H_n(r)+C\alpha_n(r)\{1+H_n(r)\}. \label{eq:H-diff}
\end{equation}
By Cauchy--Schwarz and~\eqref{eq:score-l4},
\begin{equation}
    \mathfrak a_n := \sup_{0\leq r<T} \int_0^re^{-c_0(r-s)}\alpha_n(s)\,\mathrm ds \to_p 0. \label{eq:an-small}
\end{equation}
Indeed, the left-hand side is bounded by
\begin{equation*}
    \left(\int_0^\infty e^{-c_0u}\,\mathrm du\right)^{1/2} \left(\int_0^{T_n}a_n(u)\,\mathrm du\right)^{1/2}.
\end{equation*}
We spell out the stopping argument. Let
\begin{equation*}
    \tau_n=\inf\{r<T:H_n(r)\geq1\},
\end{equation*}
with $\inf\varnothing=T$.  On $[0,\tau_n]$, inequality~\eqref{eq:H-diff} and variation of constants give
\begin{equation*}
    H_n(r) \leq e^{-c_0r}H_n(0) +2C\int_0^re^{-c_0(r-s)}\alpha_n(s)\,\mathrm ds \leq H_n(0)+2C\mathfrak a_n.
\end{equation*}
The last expression is smaller than one with probability tending to one, so $\tau_n=T$ on that event.  Consequently,
\begin{equation*}
    \sup_{0\leq r<T_n}H_n(r)=o_p(1).
\end{equation*}

It remains to include the terminal time. The two marginal curves are narrowly continuous by Lemma~\ref{lem:hd-reverse-regularity}. For arbitrary probability measures $Q$ and $P$, adopt the convention $\chi^2(Q,P)=+\infty$ when $Q\not\ll P$. Then the variational representation
\begin{equation}
    \chi^2(Q,P) = \sup_{\varphi\in C_b(\mathbb R^{d_n})} \left\{ 2\int\varphi\,\mathrm dQ -\int\varphi^2\,\mathrm dP -1 \right\} \label{eq:chi-variational}
\end{equation}
holds without any prior assumption of absolute continuity. If $Q\ll P$, it follows by completing the square and truncating $\mathrm dQ/\mathrm dP$; if $Q\not\ll P$, regular approximation of a $P$-null set carrying positive $Q$-mass shows that the supremum is infinite. Taking $r\uparrow T$ for each bounded continuous test function in~\eqref{eq:chi-variational}, and then taking the supremum, gives joint lower semicontinuity of $\chi^2$. Hence
\begin{equation*}
    \chi^2(\widehat\mu_n,\mu_n) \leq \liminf_{r\uparrow T}H_n(r) =o_p(1).
\end{equation*}

If $R=\mathrm d\widehat\mu_n/\mathrm d\mu_n$, then the likelihood ratio of the $X$-marginal is $R_X(X)=\mathbb E_{\mu_n}[R(X,Y)\mid X]$.  Conditional Jensen therefore gives
\begin{equation*}
    1+\chi^2(\widehat\mu_{n,X},\mu_{n,X}) \leq 1+\chi^2(\widehat\mu_n,\mu_n).
\end{equation*}

We now give the coupling argument for $W_4$. All coupling expectations below are conditional on $\mathcal F_n$. The proof of Lemma~\ref{lem:hd-general-curvature} gives the sharper time-dependent lower curvature bound
\begin{equation*}
    m(t) := \left\{ \frac{e^{-2t}}{m_0}+1-e^{-2t} \right\}^{-1} \geq 1-C_0e^{-2t}
\end{equation*}
for a fixed $C_0$.  Therefore the exact reverse drift satisfies
\begin{equation*}
    \langle z-z',b_{n,r}(z)-b_{n,r}(z')\rangle \leq \{1-2m(T-r)\}\|z-z'\|_2^2.
\end{equation*}
Adding the score error changes the one-sided Lipschitz coefficient by at most $2L_\star$.  With
\begin{equation*}
    a_0:=1-2L_\star>\frac12,
\end{equation*}
where the strict inequality follows from \eqref{eq:lipschitz-absorption}, we consequently have, for $0\leq s\leq r\leq T$,
\begin{equation}
    \exp\left\{ \int_s^r[1-2m(T-v)+2L_\star]\,\mathrm dv \right\} \leq C_1e^{-a_0(r-s)}, \label{eq:reverse-contraction-kernel}
\end{equation}
because $\int_0^\infty e^{-2t}\,\mathrm dt<\infty$.

First initialize both learned reverse equations from two optimally coupled laws, $N(0,I_{d_n})$ and $p_{n,T}$.  Drive them by the same Brownian motion. The score-error difference is controlled by $2L_\star$ and is already included in~\eqref{eq:reverse-contraction-kernel}. Hence
\begin{equation}
    W_4\bigl( \widehat P_{0,T}N(0,I_{d_n}), \widehat P_{0,T}p_{n,T} \bigr) \leq C_1e^{-a_0T}W_4\{N(0,I_{d_n}),p_{n,T}\}. \label{eq:learned-initial-transfer}
\end{equation}
Here $\widehat P$ denotes the learned reverse law map.  To bound the initial distance, use the forward coupling
\begin{equation*}
    Z_T=e^{-T}Z_0+\sqrt{1-e^{-2T}}\,G, \qquad G\sim N(0,I_{d_n}).
\end{equation*}
Uniform strong log-concavity and the covariance bounds imply $\mathbb E\|Z_0\|_2^4\leq Cd_n^2$, while $\mathbb E\|G\|_2^4\leq Cd_n^2$. Since $|1-\sqrt{1-e^{-2T}}|\leq e^{-2T}$,
\begin{equation}
    W_4\{p_{n,T},N(0,I_{d_n})\} \leq Cd_n^{1/2}e^{-T} =C(d_ne^{-2T_n})^{1/2}=o(1), \label{eq:initial-W4}
\end{equation}
where the last step uses the diffusion-horizon condition in Assumption~\ref{ass:hd-setting}.

Next let $Y_r$ be the exact reverse process initialized from $p_{n,T}$ and let $\widehat Y_r$ be the learned reverse process with the same initial value and Brownian motion.  Then $Y_r\sim p_{n,T-r}$. Put $\Delta_r=\widehat Y_r-Y_r$.  Splitting
\begin{equation*}
    e_{n,T-r}(\widehat Y_r) = e_{n,T-r}(Y_r) +\{e_{n,T-r}(\widehat Y_r)-e_{n,T-r}(Y_r)\}
\end{equation*}
and using~\eqref{eq:reverse-contraction-kernel} gives the pathwise bound
\begin{equation*}
    \|\Delta_T\|_2 \leq 2C_1\int_0^Te^{-a_0(T-r)} \|e_{n,T-r}(Y_r)\|_2\,\mathrm dr.
\end{equation*}
Minkowski's inequality followed by weighted Jensen yields
\begin{align*}
    \mathbb E\|\Delta_T\|_2^4 &\leq C\int_0^Te^{-a_0(T-r)} \mathbb E\|e_{n,T-r}(Y_r)\|_2^4\,\mathrm dr\\ &= C\int_0^Te^{-a_0t} \mathbb E_{p_{n,t}}\|e_{n,t}(Z_t)\|_2^4\,\mathrm dt \to_p 0.
\end{align*}
Thus the learned reverse process initialized from $p_{n,T}$ has terminal law converging to $\mu_n$ in $W_4$. More explicitly, if $P_{0,T}$ and $\widehat P_{0,T}$ denote the exact and learned reverse law maps, then
\begin{align*}
    W_4(\widehat\mu_n,\mu_n) &\leq W_4\bigl( \widehat P_{0,T}N(0,I_{d_n}), \widehat P_{0,T}p_{n,T} \bigr) +W_4\bigl( \widehat P_{0,T}p_{n,T}, P_{0,T}p_{n,T} \bigr)\\ &\leq C_1e^{-a_0T}W_4\{N(0,I_{d_n}),p_{n,T}\} +\left\{ C\int_0^Te^{-a_0t}a_n(t)\,\mathrm dt \right\}^{1/4} \to_p 0.
\end{align*}
This proves the last claim in~\eqref{eq:pde-conclusions}.
\end{proof}

\begin{mylemma}[$\chi^2$ convergence of the terminal OU law to the Gaussian law]
\label{lem:hd-general-init}
Suppose $Z_0\in\mathbb R^{d_n}$ is centered and $\mathbb E\|Z_0\|_2^2\leq Cd_n,d_ne^{-2T_n}\to 0$.  If $p_{n,T_n}$ is the density of
\begin{equation*}
    Z_{T_n} = e^{-T_n}Z_0+\sqrt{1-e^{-2T_n}}\,\xi,
\end{equation*}
then, for all sufficiently large $n$,
\begin{equation}
    \chi^2(\gamma_{d_n},p_{n,T_n}) \leq \exp\{C'd_ne^{-2T_n}\}-1=o(1). \label{eq:init-chi}
\end{equation}
\end{mylemma}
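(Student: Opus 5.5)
The plan is to bound $\chi^2(\gamma_{d_n},p_{n,T_n})$ by an explicit Gaussian computation, averaged over $Z_0$ using convexity. Write $\alpha=e^{-T_n}$ and $\sigma^2=1-\alpha^2$, and let $\varphi_\sigma$ be the $N(0,\sigma^2I_{d_n})$ density. Then
\[
p_{n,T_n}(z)=\mathbb E_{Z_0}\varphi_\sigma(z-\alpha Z_0),\qquad
\chi^2(\gamma_{d_n},p_{n,T_n})=\int\frac{\gamma_{d_n}(z)^2}{p_{n,T_n}(z)}\,\mathrm dz-1.
\]
Since $x\mapsto 1/x$ is convex on $(0,\infty)$, Jensen's inequality gives pointwise $1/\mathbb E\varphi_\sigma(z-\alpha Z_0)\le \mathbb E[1/\varphi_\sigma(z-\alpha Z_0)]$. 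Tonelli's theorem then yields
\[
1+\chi^2(\gamma_{d_n},p_{n,T_n})\le \mathbb E_{Z_0}\int\frac{\gamma_{d_n}(z)^2}{\varphi_\sigma(z-\alpha Z_0)}\,\mathrm dz .
\]
Equivalently, one may invoke joint convexity of $\chi^2$.

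The inner integral is a product of one-dimensional Gaussian integrals. It is finite as soon as $2\sigma^2>1$, which holds for all large $n$ because $\alpha\to0$. Completing the square coordinatewise with $m=\alpha Z_0$ should give
\[
\int\frac{\gamma_{d_n}^2}{\varphi_\sigma(\cdot-m)}
=\Bigl(\frac{\sigma^2}{\sqrt{2\sigma^2-1}}\Bigr)^{d_n}\exp\Bigl\{\frac{\|m\|_2^2}{2\sigma^2-1}\Bigr\}.
\]
For the prefactor, $\sigma^2=1-\alpha^2$ and $2\sigma^2-1=1-2\alpha^2$, so
\[
\frac{\sigma^4}{2\sigma^2-1}=\frac{(1-\alpha^2)^2}{1-2\alpha^2}=1+\frac{\alpha^4}{1-2\alpha^2}.
\]
Hence the prefactor is at most $\exp\{C d_n\alpha^4\}\le \exp\{Cd_ne^{-2T_n}\}$. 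It remains to bound the exponential factor,
\[
\mathbb E\exp\{\lambda_n\|Z_0\|_2^2\},\qquad \lambda_n=\frac{\alpha^2}{1-2\alpha^2}\le 2e^{-2T_n},
\]
by $\exp\{C\lambda_n d_n\}$. Combining the two pieces then gives \eqref{eq:init-chi}, and $\exp\{C'd_ne^{-2T_n}\}-1=o(1)$ follows from $d_ne^{-2T_n}\to0$.

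The main obstacle is this last exponential moment. The hypothesis $\mathbb E\|Z_0\|_2^2\le Cd_n$ alone controls only $\mathbb E\|Z_0\|^2$, whereas Jensen's inequality above produces $\mathbb E e^{\lambda_n\|Z_0\|^2}$. I would therefore supplement the second-moment bound with the sub-Gaussian concentration available in the application. Under Lemma~\ref{lem:hd-general-curvature}, $Z_0$ is $m_0$-strongly log-concave with $m_0$ independent of $n$. The function $z\mapsto\|z\|_2$ is $1$-Lipschitz, so the Herbst bound in Lemma~\ref{lem:aux-bakry-emery-herbst} gives Gaussian concentration of $\|Z_0\|_2$ around its mean, and $\mathbb E\|Z_0\|_2\le\sqrt{Cd_n}$. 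Writing $\|Z_0\|^2\le 2(\mathbb E\|Z_0\|)^2+2(\|Z_0\|-\mathbb E\|Z_0\|)^2$ and using the Gaussian-randomization bound on $\mathbb E e^{2\lambda Y^2}$ as in the proof of Lemma~\ref{lem:hd-general-weighted}, I expect
\[
\mathbb E e^{\lambda_n\|Z_0\|^2}\le e^{2\lambda_nCd_n}\,(1-4\lambda_n/m_0)^{-1/2}\le \exp\{C'\lambda_n d_n\}
\]
once $\lambda_n$ is small. If one insists on using only the second-moment hypothesis, I would instead split the expectation on the event $\{\|Z_0\|^2\le K d_n\}$ and treat the tail by truncation. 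I expect that route to be delicate and to need at least a uniform higher-moment or tail condition, so the log-concave route is the one I would pursue.
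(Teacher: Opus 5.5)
Your Gaussian computation is correct, but as a proof of the lemma \emph{as stated} it has a genuine gap. The lemma assumes only that $Z_0$ is centered with $\mathbb E\|Z_0\|_2^2\le Cd_n$. Your Jensen step (convexity of $x\mapsto 1/x$) moves the $Z_0$-average outside the exponential, so the argument then needs $\mathbb E\exp\{\lambda_n\|Z_0\|_2^2\}$. Under the stated hypotheses this can be $+\infty$, for example with i.i.d.\ Laplace coordinates. In that case your upper bound for $1+\chi^2$ is vacuous, although the lemma's conclusion still holds. The loss occurs at the Jensen step itself, so nothing done afterwards can recover it.

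Importing strong log-concavity from Lemma~\ref{lem:hd-general-curvature} does make your argument go through, and that would suffice for the single use in Lemma~\ref{lem:hd-general-pde}. It nevertheless proves a weaker statement with an extra hypothesis. Your fallback also falls short: truncating \emph{before} Jensen, i.e.\ lower-bounding $p_{n,T_n}$ by the mixture restricted to $\{\|Z_0\|_2^2\le Kd_n\}$ and using Markov, gives $o(1)$ from the second moment alone. But it only gives rate $(d_ne^{-2T_n})^{1/2}$, not the displayed bound. Note too that you never use the centering hypothesis, which signals that the intended mechanism lies elsewhere.

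The missing idea is to apply Jensen the other way, to the convex function $\exp$, and thereby bound the denominator from below pointwise. With $a=e^{-T_n}$ and $q^2=1-a^2$,
\[
\frac{p_{n,T_n}(z)}{\gamma_{d_n}(z)}
=q^{-d_n}e^{-a^2\|z\|_2^2/(2q^2)}\,
\mathbb E\exp\Bigl\{\frac{a}{q^2}z^\top Z_0-\frac{a^2}{2q^2}\|Z_0\|_2^2\Bigr\}
\ge q^{-d_n}\exp\Bigl\{-\frac{a^2}{2q^2}\bigl(\|z\|_2^2+\mathbb E\|Z_0\|_2^2\bigr)\Bigr\},
\]
because $\mathbb E Z_0=0$ kills the cross term $z^\top\mathbb E Z_0$. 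Hence
\[
1+\chi^2(\gamma_{d_n},p_{n,T_n})=\mathbb E_G\frac{\gamma_{d_n}(G)}{p_{n,T_n}(G)}
\le q^{d_n}\exp\Bigl\{\frac{a^2}{2q^2}\mathbb E\|Z_0\|_2^2\Bigr\}\Bigl(1-\frac{a^2}{q^2}\Bigr)^{-d_n/2}.
\]
This uses only the chi-square moment generating function of $\|G\|_2^2$, valid once $a^2/q^2<1$. The bound $|\log(1-x)|\le 2x$ on $[0,1/2]$ then turns the right-hand side into $\exp\{C'd_ne^{-2T_n}\}$. Only $\mathbb E Z_0=0$ and the second moment of $Z_0$ enter, which are exactly the lemma's hypotheses.
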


\begin{proof}
Put $a=e^{-T_n}$ and $q^2=1-a^2$.  The OU kernel and Jensen's inequality imply
\begin{align*}
    \frac{p_{n,T_n}(z)}{\gamma_{d_n}(z)} &= q^{-d_n} \exp\left(-\frac{a^2\|z\|_2^2}{2q^2}\right) \mathbb E\exp\left\{ \frac{a}{q^2}z^\top Z_0 -\frac{a^2}{2q^2}\|Z_0\|_2^2 \right\}\\ &\geq q^{-d_n} \exp\left\{-\frac{a^2}{2q^2} \bigl(\|z\|_2^2+\mathbb E\|Z_0\|_2^2\bigr) \right\}.
\end{align*}
The last step uses Jensen's inequality and $\mathbb E Z_0=0$.  Since $T_n\to\infty$, we have $a^2/q^2<1$ for all sufficiently large $n$. Therefore, for $G\sim N(0,I_{d_n})$,
\begin{align*}
    1+\chi^2(\gamma_{d_n},p_{n,T_n}) &= \mathbb E\frac{\gamma_{d_n}(G)}{p_{n,T_n}(G)}\\ &\leq q^{d_n} \exp\left\{ \frac{a^2}{2q^2}\mathbb E\|Z_0\|_2^2 \right\} \mathbb E\exp\left\{ \frac{a^2}{2q^2}\|G\|_2^2 \right\}\\ &= q^{d_n} \exp\left\{ \frac{a^2}{2q^2}\mathbb E\|Z_0\|_2^2 \right\} \left(1-\frac{a^2}{q^2}\right)^{-d_n/2}.
\end{align*}
For $a^2\leq1/4$, the elementary bounds $|\log(1-x)|\leq2x$ for $0\leq x\leq1/2$ and $\mathbb E\|Z_0\|_2^2\leq Cd_n$ show that the logarithm of the last display is at most $C'd_na^2$.  Hence
\begin{equation*}
    \chi^2(\gamma_{d_n},p_{n,T_n}) \leq \exp\{C'd_ne^{-2T_n}\}-1=o(1).
\end{equation*}
\end{proof}

\begin{mylemma}
\label{lem:hd-general-conditional}
Let
\begin{equation*}
    \widehat K_n(x)=\widehat\mu_n(Y\in\cdot\mid X=x), \qquad K_n(x)=N(x^\top\beta_n,\sigma_n^2).
\end{equation*}
Under Assumptions~\ref{ass:hd-setting} and~\ref{ass:hd-score-approximation},
\begin{equation}
    \mathbb E_{X\sim\widehat\mu_{n,X}} W_4^4\{\widehat K_n(X),K_n(X)\} \to_p 0. \label{eq:conditional-W4}
\end{equation}
If
\begin{equation*}
    m_n(x)=\mathbb E_{\widehat K_n(x)}Y, \quad v_n^*(x)=\Var_{\widehat K_n(x)}(Y), \quad r_n(x)=m_n(x)-x^\top\beta_n,
\end{equation*}
then
\begin{equation}
    \eta_n:=\mathbb E_{\widehat\mu_{n,X}}|r_n(X)|^4\to_p 0, \qquad \zeta_n^2 := \mathbb E_{\widehat\mu_{n,X}}|v_n^*(X)-\sigma_n^2|^2 \to_p 0. \label{eq:eta-zeta}
\end{equation}
Moreover,
\begin{equation}
    M_{4,n} := \sup_{\|u\|_2=1} \mathbb E_{\widehat\mu_{n,X}}|u^\top X|^4 =O_p(1). \label{eq:M4}
\end{equation}
\end{mylemma}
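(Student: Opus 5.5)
The plan is to derive all three claims from Lemma~\ref{lem:hd-general-pde}. That lemma gives $\chi^2(\widehat\mu_n,\mu_n)\to_p0$, $\chi^2(\widehat\mu_{n,X},\mu_{n,X})\to_p0$ and $W_4(\widehat\mu_n,\mu_n)\to_p0$. We combine these with Cauchy--Schwarz change-of-measure bounds against the true law, under which everything is explicit. Write $R=\mathrm d\widehat\mu_n/\mathrm d\mu_n$ and $R_X=\mathrm d\widehat\mu_{n,X}/\mathrm d\mu_{n,X}$. For any nonnegative $F$,
\begin{equation*}
    \mathbb E_{\widehat\mu_n}F=\mathbb E_{\mu_n}[RF]\leq\mathbb E_{\mu_n}F+\chi^2(\widehat\mu_n,\mu_n)^{1/2}(\mathbb E_{\mu_n}F^2)^{1/2}.
\end{equation*}

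\textbf{Step 1: the moment bound \eqref{eq:M4}.} Under $\mu_{n,X}$, the whitened vector $\Sigma_n^{-1/2}X$ is strongly log-concave with dimension-free constants, by \eqref{eq:relative-curvature}. Hence $\sup_{\|u\|=1}\mathbb E_{\mu_{n,X}}|u^\top X|^8\leq C$ uniformly in $n$. The displayed change-of-measure inequality applied to the $X$-marginal then gives $M_{4,n}\leq C+C\chi^2(\widehat\mu_{n,X},\mu_{n,X})^{1/2}=O_p(1)$. The bound is uniform in $u$ because the $\chi^2$ factor does not depend on $u$.

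\textbf{Step 2: the conditional $W_4$ bound \eqref{eq:conditional-W4}.} Disintegrate $R(x,y)=R_X(x)\,\ell_x(y)$, where $\ell_x=\mathrm d\widehat K_n(x)/\mathrm dK_n(x)$. I would avoid a pathwise coupling and bound $W_4^4$ by a weighted total variation instead. For one-dimensional laws with a common reference $K=N(m,\sigma^2)$,
\begin{equation*}
    W_4^4(\widehat K,K)\leq C\int|y-m|^4\,|\ell(y)-1|\,K(\mathrm dy).
\end{equation*}
This follows from the standard inequality $W_p^p\leq 2^{p-1}\int|y-m|^p\,\mathrm d|\widehat K-K|$ (Villani, Thm.~6.15). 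Integrating against $\widehat\mu_{n,X}(\mathrm dx)=R_X\,\mu_{n,X}(\mathrm dx)$ and writing $Y-X^\top\beta_n=\varepsilon$ under $\mu_n$ gives
\begin{equation*}
    \mathbb E_{\widehat\mu_{n,X}}W_4^4\leq C\,\mathbb E_{\mu_n}\bigl[|\varepsilon|^4\,|R-R_X|\bigr].
\end{equation*}
Cauchy--Schwarz bounds this by $C(\mathbb E\varepsilon^8)^{1/2}(\mathbb E_{\mu_n}(R-R_X)^2)^{1/2}$. The second factor is at most $(2\chi^2(\widehat\mu_n,\mu_n)+2\chi^2(\widehat\mu_{n,X},\mu_{n,X}))^{1/2}\to_p0$, and $\mathbb E\varepsilon^8\leq C\sigma_{\max}^8$. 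This proves \eqref{eq:conditional-W4}.

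\textbf{Step 3: the mean and variance bounds \eqref{eq:eta-zeta}.} For laws on $\mathbb R$ we have $|\mathbb E\widehat K-\mathbb E K|\leq W_1\leq W_4$, so $\eta_n\leq\mathbb E_{\widehat\mu_{n,X}}W_4^4\to_p0$. For the variance, couple $Y\sim\widehat K_n(x)$ and $Y'\sim K_n(x)$ optimally. Then
\begin{equation*}
    |v_n^*(x)-\sigma_n^2|\leq\bigl|\mathbb E(Y-m_n)^2-\mathbb E(Y'-x^\top\beta_n)^2\bigr|\leq C\,W_2\bigl(\sqrt{v_n^*}+\sigma_n+W_2\bigr).
\end{equation*}
Squaring, using $v_n^*\leq 2\sigma_n^2+2W_2^2+2r_n^2$, and bounding $W_2\leq W_4$ gives
\begin{equation*}
    \zeta_n^2\leq C\,\mathbb E_{\widehat\mu_{n,X}}\bigl[W_4^2+W_4^4+W_4^2r_n^2\bigr]\leq C\bigl(\delta_n^{1/2}+\delta_n+\delta_n^{1/2}\eta_n^{1/2}\bigr)\to_p0,
\end{equation*}
where $\delta_n=\mathbb E_{\widehat\mu_{n,X}}W_4^4$.

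\textbf{Main obstacle.} I expect the hardest step to be the rigorous justification of the disintegration in Step 2. One must check that $R_X(x)\ell_x(y)=R(x,y)$ holds $\mu_n$-a.e., including at points where $R_X=0$, where $\widehat K_n(x)$ may be defined arbitrarily but on a $\widehat\mu_{n,X}$-null set. One must also check that the total-variation bound for $W_4$ applies with the random centering $m=x^\top\beta_n$.
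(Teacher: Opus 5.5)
Your proposal is correct. Steps~1 and~3 are essentially the paper's argument: change of measure with Cauchy--Schwarz against the uniform eighth directional moment of the strongly log-concave design, then $|r_n|\le W_1\le W_4$, and a $W_2$-based bound for the variance gap.

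Step~2 takes a genuinely different route. The paper first proves an averaged conditional total-variation bound, $\mathbb E_{\widehat\mu_{n,X}}\|\widehat K_n(X)-K_n(X)\|_{\mathrm{TV}}\le\frac12\chi^2(\widehat\mu_n,\mu_n)^{1/2}$, using conditional Cauchy--Schwarz. It then upgrades this to $W_4$ by a maximal coupling truncated at a level $A$. The tails are controlled by an eighth-moment bound for the generated residual, $\mathbb E_{\widehat\mu_n}|Y-X^\top\beta_n|^8\le(\mathbb E_{\mu_n}R^2)^{1/2}(\mathbb E\varepsilon^{16})^{1/2}=O_p(1)$, and the conclusion needs the double limit $n\to\infty$, then $A\to\infty$.

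You instead apply the weighted total-variation bound $W_p^p\le 2^{p-1}\int|y-m|^p\,\mathrm d|\widehat K-K|$ with base point $m=x^\top\beta_n$. After integrating against $R_X\,\mathrm d\mu_{n,X}$, this gives a one-line quantitative estimate, $\mathbb E_{\widehat\mu_{n,X}}W_4^4\le 8(\mathbb E\varepsilon^8)^{1/2}(\mathbb E_{\mu_n}(R-R_X)^2)^{1/2}\le C\sigma_{\max}^4\chi^2(\widehat\mu_n,\mu_n)^{1/2}$. The last step uses, as the paper does, the identity $\mathbb E_{\mu_n}(R-R_X)^2=\mathbb E_{\mu_n}R^2-\mathbb E_{\mu_{n,X}}R_X^2\le\chi^2(\widehat\mu_n,\mu_n)$. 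That is slightly sharper than your triangle-inequality bound, but both suffice.

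Your route buys an explicit rate, no truncation parameter, and it needs only Gaussian moments under the true law. The paper's route is more modular and yields a by-product: the eighth-moment bound for the generated residual is cited again in the proof of Theorem~\ref{thm:hd-actual-var}, to get $\mathbb E_{\widehat\mu_{n,X}}|r_n(X)|^8=O_p(1)$. If you adopt your version, record that bound separately. It is just your Step~1 change-of-measure inequality applied to $F=|Y-X^\top\beta_n|^8$.

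The obstacles you flag are harmless. For fixed $x$ the centering $x^\top\beta_n$ is deterministic, and Villani's bound allows any base point. Work on the event $\chi^2(\widehat\mu_n,\mu_n)<\infty$, whose probability tends to one. On $\{R_X=0\}$ the weight $R_X$ kills the integrand, while $R(x,\cdot)=0$ holds $K_n(x)$-a.e.\ for $\mu_{n,X}$-a.e.\ such $x$. Hence $R_X|\ell_X-1|=|R-R_X|$ holds $\mu_n$-a.e.\ on that event.
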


\begin{proof}
Write $P=\mu_n$, $Q=\widehat\mu_n$, and
\begin{equation*}
    R(x,y)=\frac{\mathrm dQ}{\mathrm dP}(x,y), \qquad R_X(x)=\mathbb E_P\{R(X,Y)\mid X=x\}.
\end{equation*}
Then $R_X=\mathrm dQ_X/\mathrm dP_X$.  For $R_X(x)>0$, the conditional likelihood ratio is
\begin{equation*}
    R_{Y\mid X}(x,y)=\frac{R(x,y)}{R_X(x)}.
\end{equation*}
Put
\begin{equation*}
    k_n(x) = \mathbb E_P\bigl[(R-R_X)^2\mid X=x\bigr].
\end{equation*}
Conditional Cauchy--Schwarz gives
\begin{equation*}
    \|Q(\cdot\mid x)-P(\cdot\mid x)\|_{\mathrm{TV}} \leq \frac{k_n(x)^{1/2}}{2R_X(x)}.
\end{equation*}
After integration with respect to $Q_X=R_XP_X$, the denominator cancels:
\begin{align}
    \mathbb E_{Q_X} \|Q(\cdot\mid X)-P(\cdot\mid X)\|_{\mathrm{TV}} &\leq \frac12\mathbb E_{P_X}k_n(X)^{1/2}\notag\\ &\leq \frac12\left\{ \mathbb E_P(R-R_X)^2 \right\}^{1/2}\notag\\ &\leq \frac12\chi^2(Q,P)^{1/2}. \label{eq:average-conditional-TV}
\end{align}
The final inequality uses
\begin{equation*}
    \mathbb E_P(R-R_X)^2 = \mathbb E_PR^2-\mathbb E_{P_X}R_X^2 \leq \mathbb E_PR^2-1.
\end{equation*}

We next upgrade~\eqref{eq:average-conditional-TV} to fourth-order transport.  Translate both conditional distributions by $-x^\top\beta_n$; under $P$, the translated variable is $\varepsilon\sim N(0,\sigma_n^2)$. For any two probability measures $P_0,Q_0$ on $\mathbb R$ and any $A>0$, a maximal coupling, followed by $|u-v|^4\leq8(|u|^4+|v|^4)$, gives
\begin{equation}
    W_4^4(P_0,Q_0) \leq C A^4\|P_0-Q_0\|_{\mathrm{TV}} +C\int_{|u|>A}|u|^4\,(P_0+Q_0)(\mathrm du). \label{eq:TV-to-W4}
\end{equation}
The true Gaussian residual has uniformly bounded moments of every fixed order. Moreover,
\begin{align*}
    \mathbb E_Q|Y-X^\top\beta_n|^8 &= \mathbb E_P\{R|\varepsilon|^8\}\leq (\mathbb E_PR^2)^{1/2}(\mathbb E|\varepsilon|^{16})^{1/2} =O_p(1),
\end{align*}
because Lemma~\ref{lem:hd-general-pde} gives $\mathbb E_PR^2=1+o_p(1)$.  Averaging \eqref{eq:TV-to-W4} over $Q_X$ therefore yields
\begin{equation*}
    \mathbb E_{Q_X}W_4^4\{\widehat K_n(X),K_n(X)\} \leq CA^4\chi^2(Q,P)^{1/2}+C A^{-4}O_p(1).
\end{equation*}
First let $n\to\infty$ and then $A\to\infty$.  This proves \eqref{eq:conditional-W4}.

Since the difference of the conditional means is bounded by $W_1$, we have
\begin{equation*}
    |r_n(x)| \leq W_1\{\widehat K_n(x),K_n(x)\} \leq W_4\{\widehat K_n(x),K_n(x)\}.
\end{equation*}
Also, the difference between the standard deviations of two laws is at most their $W_2$ distance. Hence
\begin{equation*}
    |v_n^*(x)-\sigma_n^2| \leq W_2^2\{\widehat K_n(x),K_n(x)\} +2\sigma_n W_2\{\widehat K_n(x),K_n(x)\}.
\end{equation*}
Equation~\eqref{eq:eta-zeta} now follows from \eqref{eq:conditional-W4}, Cauchy--Schwarz, and the uniform upper bound on $\sigma_n^2$.

Finally, write $h_{n,X}=\chi^2(Q_X,P_X)$.  Conditional Jensen gives $h_{n,X}\leq\chi^2(Q,P)=o_p(1)$.  Uniform strong log-concavity of the true design and the spectral upper bound for $\Sigma_n$ imply
\begin{equation*}
    \sup_{\|u\|_2=1}\mathbb E_{P_X}|u^\top X|^8\leq C.
\end{equation*}
Thus, uniformly over $\|u\|_2=1$,
\begin{align*}
    \mathbb E_{Q_X}|u^\top X|^4 &= \mathbb E_{P_X}\{R_X|u^\top X|^4\}\leq (1+h_{n,X})^{1/2} \left(\mathbb E_{P_X}|u^\top X|^8\right)^{1/2} =O_p(1),
\end{align*}
which is~\eqref{eq:M4}.
\end{proof}

\begin{mylemma}
\label{lem:hd-general-true-tail}
Let $M=M_n\in\{n-1,n\}$, let $X_1,\ldots,X_M\overset{\mathrm{i.i.d.}}\sim\mu_{n,X}$, and define
\begin{equation*}
    \widetilde\Sigma_M = \Sigma_n^{-1/2} \left(\frac1M\sum_{i=1}^MX_iX_i^\top\right) \Sigma_n^{-1/2}.
\end{equation*}
There exist constants $C,t_0>0$ such that, for all sufficiently large $n$,
\begin{equation}
    \mathbb P\{\lambda_{\min}(\widetilde\Sigma_M)\leq t\} \leq (C\sqrt t)^{(M-p_n)/2}, \qquad 0<t<t_0. \label{eq:true-tail}
\end{equation}
Consequently, for every fixed $Q>0$,
\begin{equation*}
    \mathbb E\lambda_{\min}^{-Q}(\widetilde\Sigma_M)=O(1).
\end{equation*}
\end{mylemma}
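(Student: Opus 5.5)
Whiten first. Set $W_i=\Sigma_n^{-1/2}X_i$. By the relative curvature condition \eqref{eq:relative-curvature}, the density of $W_i$ has potential $w\mapsto V_n(\Sigma_n^{1/2}w)$, whose Hessian $\Sigma_n^{1/2}\nabla^2V_n\Sigma_n^{1/2}$ lies between $\underline m I$ and $\overline m I$. So $W_i$ is isotropic and log-concave with density $e^{-\widetilde V}$, where $\underline m I\preceq\nabla^2\widetilde V\preceq\overline m I$. Then $\widetilde\Sigma_M=M^{-1}\sum_i W_iW_i^\top$, and the goal becomes a lower-tail bound for the smallest eigenvalue of an isotropic strongly log-concave sample covariance with $M-p_n\asymp n$ free degrees of freedom.

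\textbf{Step 1 (one-dimensional small-ball bound).} For every unit $u$, the projection $u^\top W$ is log-concave (Pr\'ekopa–Leindler) with variance one. A one-dimensional isotropic log-concave density is bounded by an absolute constant, so $\mathbb P(|u^\top W-b|\le\epsilon)\le C\epsilon$ uniformly in $u,b$. For a $k$-dimensional projection $P_EW$, the marginal is again strongly log-concave: by Pr\'ekopa, marginals of an $\underline m$-strongly log-concave law are $\underline m$-strongly log-concave. Its density is bounded by $C^k$, which gives $\mathbb P(\|P_EW\|\le\epsilon\sqrt k)\le (C\epsilon)^k$.

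\textbf{Step 2 (tail of $\lambda_{\min}$).} I would use the negative second-moment/distance identity from the fixed-dimensional proof, applied to an $M\times p_n$ matrix $A$ with rows $W_i$. Alternatively, and more directly, I would write $\lambda_{\min}(\widetilde\Sigma_M)=\min_{\|u\|=1}M^{-1}\|Au\|^2$ and argue via projections onto the orthogonal complement of a column subspace. Fix $k=M-p_n$. A clean route is as follows. The event $\lambda_{\min}\le t$ means there is a unit $u$ with $\sum_i(u^\top W_i)^2\le Mt$. Split the rows into the first $p_n$ and the last $k$ rows. Conditionally on the first $p_n$ rows (after a net argument over $u$ restricted to the direction nearly killed), at least $k/2$ of the remaining independent rows must satisfy $|u^\top W_i|\le\sqrt{2Mt/k}\lesssim\sqrt t$. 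Combined with a Chernoff bound over those rows, the one-dimensional small-ball bound $C\sqrt t$ per row then gives $(C\sqrt t)^{k/2}$ for a fixed $u$.

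The main obstacle is the uniformity over $u\in\mathbb S^{p_n-1}$: a naive net has cardinality $e^{cp_n}$, while the gain is only $(C\sqrt t)^{k/2}$ with $k\asymp(1-\kappa)n$. I would handle this as follows. Take an $\epsilon$-net with $\epsilon\asymp\sqrt t$, and use the operator-norm bound $\|A\|_{\op}\le C\sqrt M$, which holds with probability $1-e^{-cM}$ by log-concave concentration (Adamczak–Litvak–Pajor–Tomczak-Jaegermann). The net then costs $(3/\sqrt t)^{p_n}$, which is absorbed only when $k$ is large relative to $p_n$.

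Since $\kappa$ may be close to $1$, I would instead use the rank-based decoupling. Let $H$ be the span of the columns of the first-block matrix, project the last $k$ rows onto the $k$-dimensional kernel complement, and apply the $k$-dimensional projection small-ball bound from Step~1 to the Gram determinant. Here the bound is $\det$-based: $\mathbb P(\lambda_{\min}\le t)\le\mathbb P\bigl(s_{\min}(PA')^2\le Mt\bigr)$, with a conditionally independent $k\times k$ square block that has bounded density. This reproduces the exponent $(M-p_n)/2$ and avoids nets entirely. This is where I expect most of the difficulty to lie, and the small $t<t_0$ restriction absorbs the exponentially small failure events of the operator-norm bound: $e^{-cM}\le(C\sqrt t)^{(M-p_n)/2}$ fails only for tiny $t$, so I would handle tiny $t$ via the density route alone.

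\textbf{Step 3 (inverse moments).} Finally, $\mathbb E\lambda_{\min}^{-Q}=Q\int_0^\infty s^{-Q-1}\mathbb P(\lambda_{\min}<s)\,ds$. The part $s\ge t_0$ contributes at most $t_0^{-Q}$. The part $s<t_0$ contributes $Q\int_0^{t_0}s^{-Q-1}(C\sqrt s)^{(M-p_n)/2}\,ds$, which is finite and bounded as soon as $(M-p_n)/4>Q$. That holds for large $n$ since $M-p_n\asymp(1-\kappa)n\to\infty$, and provided $Ct_0<1$ the integral is uniformly $O(1)$.
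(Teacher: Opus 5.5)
Your Steps 1 and 3 are essentially fine. In Step 3 the condition for uniformity in $n$ is $C\sqrt{t_0}\le 1$, not $Ct_0<1$, which you can arrange by shrinking $t_0$. Step 2, however, has a genuine gap.

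\textbf{The net argument does not fail.} You abandon it because you pair the net cost $(3/\sqrt t)^{p_n}$ with the weak per-direction gain $(C\sqrt t)^{k/2}$ obtained by counting $k/2$ rows. For a fixed unit $v$, all $M$ coordinates $v^\top W_1,\ldots,v^\top W_M$ are independent with densities bounded by $1$. The product density and the volume of an $M$-dimensional ball then give $\mathbb P\{\|Wv\|_2\le 2u\sqrt M\}\le (C_1u)^M$. On the event $\{\|W\|_{\op}\le R\sqrt M\}$, a $u/(2R)$-net costs only $(CR/u)^{p_n}$. The resulting bound $C^{M}R^{p_n}u^{M-p_n}$ is therefore absorbed for every $\kappa<1$, not only when $k\gg p_n$.

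\textbf{The substitute argument does not work.} If the first block is the first $p_n$ rows, it is a.s.\ invertible, so there is no $k$-dimensional kernel complement to project onto. The inequality $\mathbb P(\lambda_{\min}\le t)\le\mathbb P(s_{\min}(PA')^2\le Mt)$ is never derived. More fundamentally, a square $k\times k$ block with bounded density has a lower tail $\mathbb P(s_{\min}\le\epsilon)$ only of first order in $\epsilon$ (the Gaussian case already shows this), never $\epsilon^{k/2}$. The large exponent comes from the excess $M-p_n$ of independent rows over columns, and a square block throws that away. The column form of the negative second-moment identity would need independent columns, which you do not have.

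\textbf{The tiny-$t$ regime is left open.} With $R$ fixed, the net argument yields only $\mathbb P(\lambda_{\min}\le t)\le (C\sqrt t)^{M-p_n}+e^{-cM}$. The additive $e^{-cM}$ destroys both the stated tail for $t$ below a fixed constant and the negative moments, and you defer exactly this regime to the invalid decoupling. The missing idea, used in the paper, is to let the operator-norm level depend on $u=\sqrt t$:
\begin{itemize}
\item Use the sub-Gaussian net bound $\mathbb P\{\|W\|_{\op}>R\sqrt M\}\le e^{-cR^2M}$ for all $R\ge R_0$, which follows from Herbst plus a two-net argument.
\item Choose $R=u^{-(M-p_n)/(2p_n)}$. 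The net then costs $R^{p_n}=u^{-(M-p_n)/2}$, which is half the exponent, giving $(C_2u)^{(M-p_n)/2}$ on the good event.
\item The failure probability becomes $\exp\{-cMu^{-(M-p_n)/p_n}\}$. Since $(M-p_n)/p_n$ is bounded away from zero, this is below $(C_2u)^{(M-p_n)/2}$ for all $u<u_0$, uniformly in large $n$.
\end{itemize}
This balancing is precisely why the statement has exponent $(M-p_n)/2$ rather than $M-p_n$.
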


\begin{proof}
Set $W_i=\Sigma_n^{-1/2}X_i$ and let $W$ have rows $W_i^\top$.  The rows are isotropic.  Their potential is
\begin{equation*}
    \widetilde V_n(w)=V_n(\Sigma_n^{1/2}w),
\end{equation*}
and~\eqref{eq:relative-curvature} gives
\begin{equation*}
    \underline m I_{p_n} \preceq \nabla^2\widetilde V_n(w) \preceq \overline m I_{p_n}.
\end{equation*}
By Lemma~\ref{lem:aux-logconcave-projection}, every one-dimensional projection $\langle W_i,v\rangle$ is centered, has variance one, and has a log-concave density. Lemma~\ref{lem:aux-isotropic-logconcave-density} therefore gives
\begin{equation}
    \sup_{\|v\|_2=1} \|g_{n,v}\|_\infty\leq C, \label{eq:projection-density-bound}
\end{equation}
where $g_{n,v}$ is the density of $\langle W_i,v\rangle$ and one may take $C=1$. Consequently, for fixed $v\in\mathbb S^{p_n-1}$, the random vector
\begin{equation*}
    (\langle W_1,v\rangle,\ldots,\langle W_M,v\rangle)
\end{equation*}
has a product density bounded by $C^M$.  The volume estimate
\begin{equation*}
    \operatorname{Vol}\{x\in\mathbb R^M:\|x\|_2\leq2u\sqrt M\} = \frac{\pi^{M/2}(2u\sqrt M)^M}{\Gamma(M/2+1)} \leq(Cu)^M
\end{equation*}
gives
\begin{equation}
    \mathbb P\{\|Wv\|_2\leq2u\sqrt M\}\leq(C_1u)^M. \label{eq:small-ball}
\end{equation}

We next record the operator-norm bound used in the net argument. By Lemma~\ref{lem:aux-bakry-emery-herbst}, the lower curvature bound gives a log-Sobolev inequality with constant at most $\underline m^{-1}$, and its Herbst consequence applied to the one-Lipschitz linear functional $w\mapsto\langle w,x\rangle$ gives, for $x\in\mathbb S^{p_n-1}$,
\begin{equation}
    \mathbb E\exp\{t\langle W_i,x\rangle\} \leq \exp\left\{\frac{t^2}{2\underline m}\right\}, \qquad t\in\mathbb R. \label{eq:row-subgaussian}
\end{equation}
Thus, for fixed $x\in\mathbb S^{p_n-1}$ and $y\in\mathbb S^{M-1}$, independence of the rows implies
\begin{equation}
    \mathbb P\{|y^\top Wx|>s\} \leq2\exp(-c s^2). \label{eq:bilinear-subgaussian}
\end{equation}
Let $\mathcal N_p$ and $\mathcal N_M$ be $1/4$-nets of the corresponding unit spheres.  They may be chosen so that
\begin{equation*}
    |\mathcal N_p|\leq9^{p_n}, \qquad |\mathcal N_M|\leq9^M,
\end{equation*}
and the standard two-net approximation gives
\begin{equation*}
    \|W\|_{\op} \leq2\max_{x\in\mathcal N_p,\,y\in\mathcal N_M}|y^\top Wx|.
\end{equation*}
A union bound in~\eqref{eq:bilinear-subgaussian} now yields constants $C_0,c_0>0$ such that
\begin{equation}
    \mathbb P\left\{ \|W\|_{\op}>C_0(\sqrt M+\sqrt{p_n})+s \right\} \leq2e^{-c_0s^2}, \qquad s\geq0. \label{eq:operator-net-tail}
\end{equation}
Since $p_n/M$ is bounded away from one for all sufficiently large $n$, this implies
\begin{equation}
    \mathbb P\{\|W\|_{\op}>R\sqrt M\}\leq e^{-cR^2M}, \qquad R\geq R_0. \label{eq:op-tail}
\end{equation}

Fix $0<u<u_0$ and $R\geq R_0$.  On $\{\|W\|_{\op}\leq R\sqrt M\}$, suppose that $s_{\min}(W)\leq u\sqrt M$.  Choose $x\in\mathbb S^{p_n-1}$ with $\|Wx\|_2\leq u\sqrt M$ and let $v$ belong to an $u/(2R)$-net of $\mathbb S^{p_n-1}$ with $\|x-v\|_2\leq u/(2R)$.  Then
\begin{equation*}
    \|Wv\|_2 \leq \|Wx\|_2+\|W\|_{\op}\|x-v\|_2 \leq \frac32u\sqrt M.
\end{equation*}
The net can be chosen with cardinality at most $(1+4R/u)^{p_n}$.  A union bound and~\eqref{eq:small-ball} therefore give
\begin{align}
    \mathbb P\{s_{\min}(W)\leq u\sqrt M, \ \|W\|_{\op}\leq R\sqrt M\}\leq \left(\frac{CR}{u}\right)^{p_n}(Cu)^M \leq C^M R^{p_n}u^{M-p_n}. \label{eq:net-smallest}
\end{align}
Because $p_n/M\to\kappa<1$, both $M/(M-p_n)$ and $p_n/(M-p_n)$ are uniformly bounded.  Choose
\begin{equation*}
    R=u^{-(M-p_n)/(2p_n)}.
\end{equation*}
For sufficiently small fixed $u_0$, this choice satisfies $R\geq R_0$. Substitution into~\eqref{eq:net-smallest} gives
\begin{equation*}
    \mathbb P\{s_{\min}(W)\leq u\sqrt M, \ \|W\|_{\op}\leq R\sqrt M\} \leq (C_2u)^{(M-p_n)/2}.
\end{equation*}
The remaining probability in~\eqref{eq:op-tail} is
\begin{equation*}
    \exp\{-cMu^{-(M-p_n)/p_n}\},
\end{equation*}
which, after reducing $u_0$ if necessary, is bounded by the same right-hand side uniformly for all sufficiently large $n$.  Hence
\begin{equation}
    \mathbb P\{s_{\min}(W)\leq u\sqrt M\} \leq (C_3u)^{(M-p_n)/2}. \label{eq:smin-tail-complete}
\end{equation}
Since $\widetilde\Sigma_M=M^{-1}W^\top W$, setting $u=\sqrt t$ proves \eqref{eq:true-tail}.

For the negative moments, put $\alpha_M=(M-p_n)/2$. For every fixed $Q>0$, $\alpha_M>2Q$ for all sufficiently large $n$. The layer-cake formula gives
\begin{align*}
    \mathbb E\lambda_{\min}^{-Q}(\widetilde\Sigma_M) &= Q\int_0^\infty s^{Q-1} \mathbb P\{\lambda_{\min}^{-1}(\widetilde\Sigma_M)>s\} \,\mathrm ds\\ &\leq 1+Q\int_1^\infty s^{Q-1}\mathbb P\{\lambda_{\min}(\widetilde\Sigma_M)<s^{-1}\} \,\mathrm ds.
\end{align*}
Let
\begin{equation*}
    s_\star=\max\{1,t_0^{-1},4C^2\}.
\end{equation*}
The contribution from $1\leq s\leq s_\star$ is bounded by $s_\star^Q-1$. For $s\geq s_\star$,~\eqref{eq:true-tail} gives
\begin{align*}
    Q\int_{s_\star}^\infty s^{Q-1} \mathbb P\{\lambda_{\min}(\widetilde\Sigma_M)<s^{-1}\} \,\mathrm ds &\leq QC^{\alpha_M} \int_{s_\star}^\infty s^{Q-1-\alpha_M/2}\,\mathrm ds\\ &= \frac{Q C^{\alpha_M}s_\star^{Q-\alpha_M/2}} {\alpha_M/2-Q}.
\end{align*}
Since $C/\sqrt{s_\star}\leq1/2$ and $\alpha_M\to\infty$, the last display is bounded uniformly in $n$. This proves the claimed negative-moment estimate.
\end{proof}

\begin{mylemma}[Lower tail of the generated whitened Gram matrix]
\label{lem:hd-general-generated-tail}
Let $M =M_n\in \{n-1,n\}$. Conditionally on $\widehat\mu_n$, let $X_1^*,\ldots,X_M^*\overset{\mathrm{i.i.d.}}\sim\widehat\mu_{n,X}$ and define
\begin{equation*}
    \widetilde\Sigma_M^* = \Sigma_n^{-1/2} \left(\frac1M\sum_{i=1}^MX_i^*X_i^{*\top}\right) \Sigma_n^{-1/2}.
\end{equation*}
There is $K_n=O_p(1)$ such that, with probability tending to one,
\begin{equation}
    \mathbb P^*\{\lambda_{\min}(\widetilde\Sigma_M^*)\leq t \mid\widehat\mu_n\} \leq (K_n\sqrt t)^{(M-p_n)/4}, \qquad 0<t<t_0. \label{eq:generated-tail}
\end{equation}
For each fixed $Q>0$,
\begin{equation}
    \mathbb E^*\{\lambda_{\min}^{-Q}(\widetilde\Sigma_M^*) \mid\widehat\mu_n\} =O_p(1), \label{eq:generated-negative}
\end{equation}
and the same conclusion holds uniformly over all $n$ leave-one-out matrices formed from $n-1$ generated observations.
\end{mylemma}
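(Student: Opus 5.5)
The plan is to transfer the small-ball and net argument of Lemma~\ref{lem:hd-general-true-tail} from the true whitened design to the generated one, row by row, by change of measure. A direct change of measure on the whole $M\times p_n$ matrix does not work: the likelihood ratio of $M$ i.i.d.\ generated rows is $\prod_i R_X(X_i^*)$, whose second moment is $(1+h_{n,X})^M$ with $h_{n,X}=\chi^2(\widehat\mu_{n,X},\mu_{n,X})$, and this is not $O(1)$ unless $h_{n,X}=o(1/M)$. So I will only change measure on one-row functionals. The inputs are $h_{n,X}=o_p(1)$ (Lemma~\ref{lem:hd-general-pde} and conditional Jensen) together with the Cauchy--Schwarz transfer $\mathbb E_{Q_X}g\le(1+h_{n,X})^{1/2}(\mathbb E_{P_X}g^2)^{1/2}$ for $g\ge0$. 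I will work on the event $\{h_{n,X}\le1\}$, whose probability tends to one. Write $W_i^*=\Sigma_n^{-1/2}X_i^*$ and let $W^*$ be the matrix with rows $W_i^{*\top}$.

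\emph{Step 1 (row estimates under $Q$).} For the operator norm, I will apply the transfer to $g=\exp\{t\langle w,x\rangle\}$ and combine it with \eqref{eq:row-subgaussian}. This gives $\mathbb E_Q e^{t\langle W^*,x\rangle}\le\sqrt2\,e^{t^2/\underline m}$, so the rows are still uniformly sub-Gaussian. The two-net argument then reproduces \eqref{eq:op-tail} for $W^*$ with new constants. For the small ball, I will use the Laplace trick in place of the product-density bound. For fixed $v$ and $\lambda=u^{-2}$,
\[
\mathbb P^*\{\|W^*v\|_2\le2u\sqrt M\}\le e^{4}\bigl(\mathbb E_Q e^{-\lambda\langle W^*,v\rangle^2}\bigr)^M\le e^{4}\bigl(\sqrt2\,(\mathbb E_P e^{-2\lambda\langle W,v\rangle^2})^{1/2}\bigr)^M .
\]
By \eqref{eq:projection-density-bound} the inner expectation is at most $Cu$, so the probability is at most $(C\sqrt u)^M$. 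This is the square root of \eqref{eq:small-ball}, and it accounts for the exponent $(M-p_n)/4$ in \eqref{eq:generated-tail} in place of $(M-p_n)/2$.

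\emph{Step 2 (net argument and the main obstacle).} I will repeat the $u/(2R)$-net union bound. This gives at most $(CR/u)^{p_n}(C\sqrt u)^{M}$, which is useful only when $M/2>p_n$, that is, when $\kappa<1/2$. Removing this restriction is the step I expect to be hardest. My first attempt will sharpen the one-row bound additively,
\[
\mathbb E_Q g\le\mathbb E_P g+h_{n,X}^{1/2}(\mathbb E_P g^2)^{1/2}.
\]
This yields $\mathbb E_Q e^{-\lambda\langle W^*,v\rangle^2}\le Cu+C\sqrt{h_{n,X}u}$, which is of order $u$ for $u\ge h_{n,X}$, and there the full exponent $M-p_n$ is recovered. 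The remaining range $u<h_{n,X}$ is the genuine difficulty. For it I will try a leave-rows-out decomposition, splitting $W^{*\top}W^*$ into a fixed number $k$ of row blocks with $p_n$ fixed and $M$ proportional. The aim is to treat the singular-value problem via the negative second-moment identity for the tall blocks, conditioning on the other rows so that only one-dimensional projections of a single row, where the transfer applies cleanly, enter. If that fails, I will take the weaker route of proving \eqref{eq:generated-tail} only for $t\ge t_n\to0$. I will then control the range $t<t_n$ through a crude polynomial bound, obtained from the Step~1 small-ball estimate applied to the top $p_n\times p_n$ block as in the fixed-dimensional proof. Since only negative moments are needed downstream, this weaker form would suffice.

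\emph{Step 3 (moments and uniformity).} With \eqref{eq:generated-tail} in hand, the layer-cake computation from the proof of Lemma~\ref{lem:hd-general-true-tail} gives \eqref{eq:generated-negative}. Here $(M-p_n)/4\to\infty$ and $K_n=O_p(1)$ is absorbed by the choice of $s_\star$, so I expect this step to be routine. For the $n$ leave-one-out matrices I will use $S_{-i}^*\succeq$ the Gram matrix of any fixed set of $n-1$ rows in distribution. Every leave-one-out matrix is an i.i.d.\ Gram matrix of $n-1$ generated rows, so the case $M=n-1$ of the bound applies to each of them with the same random constant $K_n$, and taking $\max_i$ inside the expectation is not needed for the displayed moment bound.
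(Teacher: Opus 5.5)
Your proposal has a genuine gap in Step~2, and that obstacle comes from your opening decision to avoid a global change of measure. You apply Cauchy--Schwarz inside each row's small-ball estimate and only afterwards take the union bound over the $u/(2R)$-net. This gives $(CR/u)^{p_n}(C\sqrt u)^M$, which is useless once $p_n\ge M/2$. So for $\kappa\in[1/2,1)$ nothing is proved, and the remedies you list are not carried out.

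The fallback would not work in any case. A $\chi^2$ transfer of one-row anticoncentration only gives $\mathbb P_{\widehat\mu_{n,X}}(|\langle W^*,v\rangle-b|\le\epsilon)\lesssim\sqrt\epsilon$. In the proportional regime there are only $\lfloor M/p_n\rfloor$ disjoint $p_n\times p_n$ blocks. The fixed-dimensional block argument therefore yields at best $\mathbb P^*\{\lambda_{\min}(\widetilde\Sigma_M^*)\le t\}\lesssim \mathrm{poly}(n)\,t^{\lfloor M/p_n\rfloor/4}$, i.e.\ exponent $1/4$ when $\kappa>1/2$. That cannot give \eqref{eq:generated-negative} for $Q\ge 1/4$. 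The statement claims every fixed $Q$, and Lemma~\ref{lem:hd-general-inverse-transfer} uses $Q=4$.

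The missing point is that the direct product change of measure does work. The second moment $(1+h_n)^M$ of $\prod_iR_X(X_i^*)$ need not be $O(1)$, because the target bound is itself exponentially small in $M$. Run the whole small-ball and net argument under the true law; this is Lemma~\ref{lem:hd-general-true-tail}. Then apply Cauchy--Schwarz once, to the event $E=\{\lambda_{\min}\le t\}\subset(\mathbb R^{p_n})^M$:
\[
\widehat\mu_{n,X}^{\otimes M}(E)\le\mu_{n,X}^{\otimes M}(E)^{1/2}(1+h_n)^{M/2}\le(C\sqrt t)^{(M-p_n)/4}(1+h_n)^{M/2}=(K_n\sqrt t)^{(M-p_n)/4}.
\]
Here $K_n=C(1+h_n)^{2M/(M-p_n)}=O_p(1)$ because $M/(M-p_n)=O(1)$; in fact only $h_n=O_p(1)$ is needed. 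This is the paper's proof, and it is where the exponent $(M-p_n)/4$ comes from: the square root is taken after the union bound over the net, not inside each row.

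Your own Step~1 already absorbs the same factor $(\sqrt2)^M=(1+h_n)^{M/2}$ into $C^M$, so your objection to the global argument is inconsistent with your own estimates. (The Laplace prefactor there is also $e^{4M}$, not $e^4$.) Once the tail bound is available, your Step~3 layer-cake computation is fine. For the leave-one-out claim, the paper additionally takes a union bound over the $n$ deleted indices and absorbs the factor $n$ into $K_n$ via $n^{1/\alpha_{n-1}}=O(1)$. That controls the worst index, not just each index separately.
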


\begin{proof}
Let $h_n=\chi^2(\widehat\mu_{n,X},\mu_{n,X})=o_p(1)$.  The divergence is unchanged by the invertible transformation $x\mapsto\Sigma_n^{-1/2}x$.  If $R_X$ is the marginal likelihood ratio, then the likelihood ratio of the $M$-fold product law is
\begin{equation*}
    R_X^{(M)}(x_1,\ldots,x_M)=\prod_{i=1}^M R_X(x_i),
\end{equation*}
and independence gives
\begin{equation*}
    \mathbb E_{\mu_{n,X}^{\otimes M}}\{R_X^{(M)}\}^2=(1+h_n)^M.
\end{equation*}
Thus, for every event $E\subset(\mathbb R^{p_n})^M$, Cauchy--Schwarz gives
\begin{equation*}
    \widehat\mu_{n,X}^{\otimes M}(E) \leq \mu_{n,X}^{\otimes M}(E)^{1/2}(1+h_n)^{M/2}.
\end{equation*}
Applying this bound to the event in Lemma~\ref{lem:hd-general-true-tail} gives
\begin{align*}
    \mathbb P^*\{\lambda_{\min}(\widetilde\Sigma_M^*)\leq t \mid\widehat\mu_n\} &\leq (C\sqrt t)^{(M-p_n)/4}(1+h_n)^{M/2}\\ &= (K_n\sqrt t)^{(M-p_n)/4},
\end{align*}
where
\begin{equation*}
    K_n=C(1+h_n)^{2M/(M-p_n)}=O_p(1),
\end{equation*}
because $M/(M-p_n)=O(1)$ and $h_n=o_p(1)$.  This proves \eqref{eq:generated-tail}.

Put $\alpha_M=(M-p_n)/4$.  On every event where $K_n\leq K$ for a fixed $K$, the same layer-cake calculation as in Lemma~\ref{lem:hd-general-true-tail} gives, for $\alpha_M>2Q$,
\begin{equation*}
    \mathbb E^*\{\lambda_{\min}^{-Q}(\widetilde\Sigma_M^*) \mid\widehat\mu_n\} \leq C_{Q,K}.
\end{equation*}
Since $K_n=O_p(1)$, this is precisely~\eqref{eq:generated-negative}.

For leave-one-out matrices take $M=n-1$.  For each deleted index the same tail estimate holds.  A union bound multiplies its right-hand side by $n$. Because $\alpha_{n-1}\asymp n$ and $n^{1/\alpha_{n-1}}=O(1)$, the factor $n$ can be absorbed into a fixed enlargement of $K_n$.  Repeating the layer-cake calculation therefore gives the stated uniform leave-one-out negative-moment bound.
\end{proof}

\begin{mylemma}
\label{lem:hd-general-inverse-transfer}
Let
\begin{equation*}
    S_n^*=\sum_{i=1}^nX_i^*X_i^{*\top}, \qquad S_n=\sum_{i=1}^nX_iX_i^\top.
\end{equation*}
Then
\begin{equation}
    \mathbb E^*\{c_n^\top(S_n^*)^{-1}c_n\mid\widehat\mu_n\} = \mathbb E\{c_n^\top S_n^{-1}c_n\}+o_p(n^{-1}). \label{eq:inverse-transfer}
\end{equation}
\end{mylemma}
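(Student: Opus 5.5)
We will prove \eqref{eq:inverse-transfer} by coupling the generated and true designs row by row, as announced in the proof outline after Theorem~\ref{thm:hd-actual-var}. Work in whitened coordinates $W_i=\Sigma_n^{-1/2}X_i$ and $W_i^*=\Sigma_n^{-1/2}X_i^*$, and set $A_n=n^{-1}\sum_iW_iW_i^\top$, $A_n^*=n^{-1}\sum_iW_i^*W_i^{*\top}$ and $\tilde c_n=\Sigma_n^{-1/2}c_n$. Since $\|\tilde c_n\|_2\le c_\Sigma^{-1/2}$, it suffices to show
\begin{equation*}
    \mathbb E^*\{\tilde c_n^\top (A_n^*)^{-1}\tilde c_n\mid\widehat\mu_n\}-\mathbb E\{\tilde c_n^\top A_n^{-1}\tilde c_n\}=o_p(1).
\end{equation*}
Conditionally on $\mathcal F_n$, take an optimal $W_4$ coupling $\pi_n$ of $\widehat\mu_{n,X}$ and $\mu_{n,X}$, and draw the pairs $(X_i^*,X_i)$ i.i.d.\ from $\pi_n$. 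Write $W_i^*=W_i+\Delta_i$. By Lemma~\ref{lem:hd-general-pde} and conditional Jensen, $\delta_n^4:=\mathbb E\|\Delta_i\|_2^4\le CW_4^4(\widehat\mu_{n,X},\mu_{n,X})\to_p0$. The marginals of this coupling are exactly the generated and true designs, so both expectations are computed on one probability space.

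The resolvent identity gives
\begin{equation*}
    (A_n^*)^{-1}-A_n^{-1}=(A_n^*)^{-1}(A_n-A_n^*)A_n^{-1},
\end{equation*}
and therefore
\begin{equation*}
    |\tilde c_n^\top\{(A_n^*)^{-1}-A_n^{-1}\}\tilde c_n|\le C\lambda_{\min}^{-1}(A_n^*)\lambda_{\min}^{-1}(A_n)\|A_n^*-A_n\|_{\op}.
\end{equation*}
Expanding $A_n^*-A_n=n^{-1}\sum_i(W_i\Delta_i^\top+\Delta_iW_i^\top+\Delta_i\Delta_i^\top)$, the operator norm is at most $n^{-1}\bigl(2\|W\|_{\op}\|\Delta\|_{\op}+\|\Delta\|_{\op}^2\bigr)$, where $W$ and $\Delta$ are the $n\times p_n$ matrices with rows $W_i^\top$ and $\Delta_i^\top$. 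To bound this, use $\|W\|_{\op}\lesssim\sqrt n$ in every moment, which follows from \eqref{eq:op-tail}. For $\|\Delta\|_{\op}$, the crude Frobenius bound $\|\Delta\|_{\op}^2\le\sum_i\|\Delta_i\|^2$ gives $\mathbb E\|\Delta\|_{\op}^2\le n\delta_n^2$. Hence $\mathbb E\|A_n^*-A_n\|_{\op}^2\lesssim \delta_n^2+\delta_n^4=o_p(1)$. Hölder's inequality then finishes the argument: take second moments of the operator-norm difference and fourth moments of $\lambda_{\min}^{-1}(A_n^*)$ and $\lambda_{\min}^{-1}(A_n)$, which are $O_p(1)$ by Lemma~\ref{lem:hd-general-generated-tail} and $O(1)$ by Lemma~\ref{lem:hd-general-true-tail}.

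The main obstacle is the Frobenius step. $W_4$ convergence controls $\mathbb E\|\Delta_i\|^2$, which is $o_p(1)$ in absolute terms but not relative to $p_n$. The crude bound is nevertheless adequate, because it only needs $\mathbb E\|\Delta\|_{\op}^2=o(n)$, and $n\delta_n^2=o_p(n)$. We must still check that $\|W\|_{\op}\|\Delta\|_{\op}/n\le \|W\|_{\op}n^{-1/2}\cdot(\sum\|\Delta_i\|^2/n)^{1/2}$ has vanishing moments after the Hölder split. This requires the fourth moment of $n^{-1}\sum_i\|\Delta_i\|^2$, which is bounded by $\delta_n^4$ via Jensen, and $\mathbb E\|W\|_{\op}^{8}/n^4=O(1)$. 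If moment exponents become tight, we would instead use the truncation $D\mathbf 1\{D\le\epsilon\}$ together with uniform integrability, as in Lemma~\ref{le7}. A final point is measurability: the coupling depends on $\widehat\mu_n$, so all estimates are conditional on $\mathcal F_n$ and are stated on events of probability tending to one.
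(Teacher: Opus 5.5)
Your proposal is correct and follows essentially the same route as the paper. Both arguments use whitening, i.i.d.\ rowwise optimal $W_4$ couplings $W_i^*=W_i+\Delta_i$, the resolvent identity, and the bound $\|\Delta\|_{\op}\le\|\Delta\|_{\mathrm F}$ together with bounded moments of $\|W\|_{\op}/\sqrt n$ to obtain $\mathbb E^*\|A_n^*-A_n\|_{\op}^2=o_p(1)$. Both then apply H\"older with exponents $(4,2,4)$ against the negative-moment bounds of Lemmas~\ref{lem:hd-general-true-tail} and~\ref{lem:hd-general-generated-tail}.

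One small correction to your last paragraph. From $\mathbb E\|\Delta_1\|_2^4$, Jensen gives the \emph{second} moment of $n^{-1}\sum_i\|\Delta_i\|_2^2$, equivalently the fourth moment of $\|\Delta\|_{\mathrm F}/\sqrt n$, not a fourth moment of that average. Combined with $\mathbb E\|W\|_{\op}^4/n^2=O(1)$ and Cauchy--Schwarz, this is exactly what the argument needs; a genuine fourth moment of the average would require eighth moments of $\Delta_i$, which $W_4$ convergence does not supply.
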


\begin{proof}
Take i.i.d.\ rowwise optimal $W_4$ couplings
\begin{equation*}
    W_i^*=W_i+\Delta_i, \qquad W_i^*=\Sigma_n^{-1/2}X_i^*, \qquad W_i=\Sigma_n^{-1/2}X_i,
\end{equation*}
and put
\begin{equation*}
    A_n^*=\frac1n\sum_{i=1}^nW_i^*W_i^{*\top}, \qquad A_n=\frac1n\sum_{i=1}^nW_iW_i^\top.
\end{equation*}
Lemma~\ref{lem:hd-general-pde} and the spectral bounds on $\Sigma_n$ imply $\delta_n:=\mathbb E^*\|\Delta_1\|_2^4=o_p(1)$.  Let $W$ and $\Delta$ be the $n\times p_n$ matrices with rows $W_i^\top$ and $\Delta_i^\top$.  Then
\begin{equation}
    A_n^*-A_n = \frac1n\left(W^\top\Delta+\Delta^\top W+\Delta^\top\Delta\right). \label{eq:gram-perturbation}
\end{equation}
The operator-norm moment bound established in the proof of Lemma~\ref{lem:hd-general-true-tail} gives
\begin{equation*}
    \mathbb E\left(\frac{\|W\|_{\op}}{\sqrt n}\right)^4=O(1),
\end{equation*}
while independence of the rowwise couplings gives
\begin{align*}
    \mathbb E^*\left(\frac{\|\Delta\|_F^2}{n}\right)^2 &= \frac1{n^2} \mathbb E^*\left(\sum_{i=1}^n\|\Delta_i\|_2^2\right)^2\leq \frac1n\mathbb E^*\|\Delta_1\|_2^4 +\left(\mathbb E^*\|\Delta_1\|_2^2\right)^2 =o_p(1).
\end{align*}
Since $\|\Delta\|_{\op}\leq\|\Delta\|_F$, equation \eqref{eq:gram-perturbation} and Cauchy--Schwarz yield
\begin{equation}
    \mathbb E^*\|A_n^*-A_n\|_{\op}^2=o_p(1). \label{eq:A-difference}
\end{equation}
Here and below the expectation is under the joint rowwise coupling; its $A_n^*$-marginal is the bootstrap law and its $A_n$-marginal is the true law.

Lemmas~\ref{lem:hd-general-true-tail} and~\ref{lem:hd-general-generated-tail} imply
\begin{equation*}
    \mathbb E\|A_n^{-1}\|_{\op}^4=O(1), \qquad \mathbb E^*\|(A_n^*)^{-1}\|_{\op}^4=O_p(1).
\end{equation*}
Using
\begin{equation*}
    (A_n^*)^{-1}-A_n^{-1} = (A_n^*)^{-1}(A_n-A_n^*)A_n^{-1},
\end{equation*}
H\"older's inequality with exponents $4,2,4$ and \eqref{eq:A-difference} yield
\begin{equation*}
    \begin{aligned}
        \mathbb E^*\|(A_n^*)^{-1}-A_n^{-1}\|_{\op} &\leq \left(\mathbb E^*\|(A_n^*)^{-1}\|_{\op}^4\right)^{1/4} \left(\mathbb E^*\|A_n^*-A_n\|_{\op}^2\right)^{1/2} \left(\mathbb E\|A_n^{-1}\|_{\op}^4\right)^{1/4} =o_p(1).
    \end{aligned}
\end{equation*}
For $\widetilde c_n=\Sigma_n^{-1/2}c_n$, we have
\begin{equation*}
    c_n^\top(S_n^*)^{-1}c_n = \frac1n\widetilde c_n^\top(A_n^*)^{-1}\widetilde c_n, \qquad c_n^\top S_n^{-1}c_n = \frac1n\widetilde c_n^\top A_n^{-1}\widetilde c_n.
\end{equation*}
Since
\begin{equation*}
    \|\widetilde c_n\|_2^2 =c_n^\top\Sigma_n^{-1}c_n \leq c_\Sigma^{-1},
\end{equation*}
we conclude that
\begin{align*}
    \left| \mathbb E^*\{c_n^\top(S_n^*)^{-1}c_n\mid\widehat\mu_n\} -\mathbb E\{c_n^\top S_n^{-1}c_n\} \right|\leq \frac{\|\widetilde c_n\|_2^2}{n} \mathbb E^*\|(A_n^*)^{-1}-A_n^{-1}\|_{\op} =o_p(n^{-1}),
\end{align*}
which proves~\eqref{eq:inverse-transfer}.
\end{proof}

\subsection{Proof of variance consistency}

\begin{proof}[Proof of Lemma~\ref{lem:hd-joint-W4}]
The asserted joint Wasserstein convergence is the third conclusion of Lemma~\ref{lem:hd-general-pde}.
\end{proof}

\begin{proof}[Proof of Lemma~\ref{lem:hd-cond-W4}]
This is precisely Lemma~\ref{lem:hd-general-conditional}.
\end{proof}

\begin{proof}[Proof of Theorem~\ref{thm:hd-actual-var}]
All bootstrap expectations and variances are conditional on $\widehat\mu_n$.  By total variance,
\begin{align}
    \Var^*(c_n^\top\widehat\beta^*) &= \mathbb E^*\!\left[ \Var^*(c_n^\top\widehat\beta^*\mid X^*) \right] \notag\\ &\quad+ \Var^*\!\left[ \mathbb E^*(c_n^\top\widehat\beta^*\mid X^*) \right]. \label{eq:total-variance}
\end{align}
Put
\begin{equation*}
    S_n^*=\sum_{i=1}^nX_i^*X_i^{*\top}, \qquad h^*=X^*(S_n^*)^{-1}c_n.
\end{equation*}
Conditionally on $X^*$, the $Y_i^*$ are independent, hence
\begin{equation*}
    \Var^*(c_n^\top\widehat\beta^*\mid X^*) = \sum_{i=1}^n(h_i^*)^2v_n^*(X_i^*).
\end{equation*}
Therefore
\begin{align}
    \mathbb E^*\!\left[ \Var^*(c_n^\top\widehat\beta^*\mid X^*) \right] &= \sigma_n^2\mathbb E^*\{c_n^\top(S_n^*)^{-1}c_n\} + \mathbb E^*\sum_{i=1}^n (h_i^*)^2\{v_n^*(X_i^*)-\sigma_n^2\}. \label{eq:first-variance-term}
\end{align}

To control the last term, fix $i=1$, let
\begin{equation*}
    S_{-1}^*=\sum_{j=2}^nX_j^*X_j^{*\top}, \qquad a_{-1}^*=(S_{-1}^*)^{-1}c_n.
\end{equation*}
Sherman--Morrison gives $|h_1^*|\leq|X_1^{*\top}a_{-1}^*|$. Independence of $X_1^*$ and  $a_{-1}^*$, Cauchy--Schwarz, Lemma~\ref{lem:hd-general-conditional}, and the $M=n-1$ negative-moment estimate in Lemma~\ref{lem:hd-general-generated-tail} imply
\begin{equation*}
    \mathbb E^*\|a_{-1}^*\|_2^2=O_p(n^{-2}).
\end{equation*}
Indeed, if
\begin{equation*}
    A_{-1}^* = \frac1{n-1}\Sigma_n^{-1/2}S_{-1}^*\Sigma_n^{-1/2},
\end{equation*}
then
\begin{equation*}
    \|a_{-1}^*\|_2 \leq \frac{C}{n-1}\|(A_{-1}^*)^{-1}\|_{\op}.
\end{equation*}
Furthermore, conditionally on $a_{-1}^*$,
\begin{align*}
    \mathbb E^*\left[ |X_1^{*\top}a_{-1}^*|^2 |v_n^*(X_1^*)-\sigma_n^2| \mid a_{-1}^* \right]\leq M_{4,n}^{1/2}\|a_{-1}^*\|_2^2\zeta_n.
\end{align*}
Exchangeability therefore gives
\begin{equation*}
    \left| \mathbb E^*\sum_{i=1}^n (h_i^*)^2\{v_n^*(X_i^*)-\sigma_n^2\} \right| \leq nM_{4,n}^{1/2}\zeta_n \mathbb E^*\|a_{-1}^*\|_2^2 =o_p(n^{-1}).
\end{equation*}
Lemma~\ref{lem:hd-general-inverse-transfer} now yields
\begin{equation}
    \mathbb E^*\!\left[ \Var^*(c_n^\top\widehat\beta^*\mid X^*) \right] = \sigma_n^2\mathbb E\{c_n^\top S_n^{-1}c_n\}+o_p(n^{-1}). \label{eq:first-term-final}
\end{equation}

For the second term in~\eqref{eq:total-variance},
\begin{equation*}
    \mathbb E^*(c_n^\top\widehat\beta^*\mid X^*) = c_n^\top\beta_n +c_n^\top(S_n^*)^{-1} \sum_{i=1}^nX_i^*r_n(X_i^*).
\end{equation*}
We provide the leave-one-out argument in detail.  Put $Q_n=\widehat\mu_{n,X}$ and, for $M=n-1$, define
\begin{equation*}
    \overline A_M = \frac1M\sum_{j=2}^nX_j^*X_j^{*\top}, \qquad \overline g_M = \frac1M\sum_{j=2}^nX_j^*r_n(X_j^*), \qquad \theta_M=\overline A_M^{-1}\overline g_M.
\end{equation*}
We first show
\begin{equation}
    \mathbb E^*\|\theta_M\|_2^2=o_p(1). \label{eq:theta-small}
\end{equation}
For every unit vector $u$, Cauchy--Schwarz and~\eqref{eq:M4} give
\begin{equation*}
    |u^\top\mathbb E_{Q_n}\{Xr_n(X)\}|^2 \leq \mathbb E_{Q_n}|u^\top X|^2\,\mathbb E_{Q_n}r_n(X)^2 \leq O_p(1)\eta_n^{1/2}.
\end{equation*}
Hence
\begin{equation*}
    \|\mathbb E_{Q_n}\{Xr_n(X)\}\|_2^2 =O_p(\eta_n^{1/2}).
\end{equation*}
Also,~\eqref{eq:M4} implies
$\mathbb E_{Q_n}\|X\|_2^4=O_p(p_n^2)$, and therefore
\begin{equation*}
    \frac1M\mathbb E_{Q_n}\|Xr_n(X)\|_2^2 \leq \frac1M \{\mathbb E_{Q_n}\|X\|_2^4\}^{1/2}\eta_n^{1/2} =O_p(\eta_n^{1/2}).
\end{equation*}
It follows that
\begin{equation*}
    \mathbb E^*\|\overline g_M\|_2^2=O_p(\eta_n^{1/2}).
\end{equation*}
For the fourth moment, the Hilbert-space fourth-moment expansion for an i.i.d.\ average gives
\begin{align*}
    \mathbb E^*\|\overline g_M-\mathbb E^*\overline g_M\|_2^4 &\leq \frac{C}{M^2} \{\mathbb E_{Q_n}\|Xr_n(X)\|_2^2\}^2 +\frac{C}{M^3} \mathbb E_{Q_n}\|Xr_n(X)\|_2^4.
\end{align*}
Now $\mathbb E_{Q_n}|r_n(X)|^8=O_p(1)$ by conditional Jensen and the eighth-moment bound for the generated regression residual established in the proof of Lemma~\ref{lem:hd-general-conditional}. To justify the required design moment, let
\begin{equation*}
    R_{n,X}=\frac{\mathrm dQ_n}{\mathrm d\mu_{n,X}}.
\end{equation*}
By Lemma~\ref{lem:hd-general-pde}, $\mathbb E_{\mu_{n,X}}R_{n,X}^2=1+o_p(1)$. Uniform strong log-concavity and the covariance bounds imply
\begin{equation*}
    \mathbb E_{\mu_{n,X}}\|X\|_2^{16}\leq Cp_n^8.
\end{equation*}
Indeed, every coordinate has a uniformly bounded sixteenth moment and
\begin{equation*}
    \left(\sum_{j=1}^{p_n}X_j^2\right)^8 \leq p_n^7\sum_{j=1}^{p_n}|X_j|^{16}.
\end{equation*}
Hence Cauchy--Schwarz gives
\begin{align*}
    \mathbb E_{Q_n}\|X\|_2^8 &= \mathbb E_{\mu_{n,X}}\{R_{n,X}\|X\|_2^8\}\leq \{\mathbb E_{\mu_{n,X}}R_{n,X}^2\}^{1/2} \{\mathbb E_{\mu_{n,X}}\|X\|_2^{16}\}^{1/2} =O_p(p_n^4).
\end{align*}
It follows that
\begin{equation*}
    \mathbb E_{Q_n}\|Xr_n(X)\|_2^4 \leq \{\mathbb E_{Q_n}\|X\|_2^8\}^{1/2} \{\mathbb E_{Q_n}|r_n(X)|^8\}^{1/2} =O_p(p_n^2).
\end{equation*}
Moreover,
\begin{equation*}
    \|\mathbb E^*\overline g_M\|_2^4 = \|\mathbb E_{Q_n}\{Xr_n(X)\}\|_2^4 =O_p(\eta_n).
\end{equation*}
Combining these bounds with the centered fourth-moment estimate above gives
\begin{equation}
    \mathbb E^*\|\overline g_M\|_2^4 =O_p(\eta_n+n^{-1})=o_p(1). \label{eq:g-fourth}
\end{equation}
To transfer the relative-Gram negative moments to $\overline A_M$, define
\begin{equation*}
    \widetilde A_M = \Sigma_n^{-1/2}\overline A_M\Sigma_n^{-1/2}.
\end{equation*}
Then
\begin{equation*}
    \overline A_M^{-1} = \Sigma_n^{-1/2}\widetilde A_M^{-1}\Sigma_n^{-1/2}, \qquad \|\overline A_M^{-1}\|_{\op} \leq c_\Sigma^{-1}\|\widetilde A_M^{-1}\|_{\op}.
\end{equation*}
Thus Lemma~\ref{lem:hd-general-generated-tail} and H\"older's inequality give
\begin{equation*}
    \mathbb E^*\|\theta_M\|_2^2 \leq \{\mathbb E^*\|\overline A_M^{-1}\|_{\op}^4\}^{1/2} \{\mathbb E^*\|\overline g_M\|_2^4\}^{1/2} =o_p(1),
\end{equation*}
which proves~\eqref{eq:theta-small}.

Let
\begin{equation*}
    F_n(X_1^*,\ldots,X_n^*) = c_n^\top(S_n^*)^{-1} \sum_{i=1}^nX_i^*r_n(X_i^*).
\end{equation*}
For the sample with the first observation removed, write
\begin{equation*}
    F_{-1}=c_n^\top\theta_M, \qquad a_{-1}^*=(S_{-1}^*)^{-1}c_n.
\end{equation*}
Sherman--Morrison gives the exact identity
\begin{equation}
    F_n-F_{-1} = \frac{X_1^{*\top}a_{-1}^*} {1+X_1^{*\top}(S_{-1}^*)^{-1}X_1^*} \{r_n(X_1^*)-X_1^{*\top}\theta_M\}. \label{eq:loo-increment}
\end{equation}
Conditionally on the leave-one-out sample, the squared right-hand side is bounded in expectation by
\begin{equation*}
    C M_{4,n}^{1/2}\|a_{-1}^*\|_2^2\eta_n^{1/2} +C M_{4,n}\|a_{-1}^*\|_2^2\|\theta_M\|_2^2.
\end{equation*}
The first term has expectation $o_p(n^{-2})$.  For the second, use
\begin{equation*}
    \|a_{-1}^*\|_2^2\|\theta_M\|_2^2 \leq \frac{C}{M^2} \|\overline A_M^{-1}\|_{\op}^4 \|\overline g_M\|_2^2,
\end{equation*}
together with the negative moments of all fixed orders and \eqref{eq:g-fourth}.  Its expectation is also $o_p(n^{-2})$.  If $X_1^{*\prime}$ is an independent replacement, the two versions of $F_n$ share the same $F_{-1}$, so~\eqref{eq:loo-increment} and the Efron--Stein inequality imply
\begin{equation*}
    \Var^*(F_n) \leq \frac n2\mathbb E^*(F_n-F_n^{(1)})^2 \leq 2n\mathbb E^*(F_n-F_{-1})^2 =o_p(n^{-1}).
\end{equation*}
Thus
\begin{equation}
    \Var^*\!\left[ \mathbb E^*(c_n^\top\widehat\beta^*\mid X^*) \right] =o_p(n^{-1}). \label{eq:second-term-final}
\end{equation}

For the original sample, independence and homoskedasticity give the exact identity
\begin{equation}
    \Var(c_n^\top\widehat\beta) = \sigma_n^2\mathbb E\{c_n^\top S_n^{-1}c_n\}. \label{eq:true-variance}
\end{equation}
Recall that $\|c_n\|_2=1$. This quantity is of order $n^{-1}$. Indeed, if
\begin{equation*}
    A_n = \frac1n\Sigma_n^{-1/2}S_n\Sigma_n^{-1/2},
\end{equation*}
then Lemma~\ref{lem:hd-general-true-tail} gives
\begin{align*}
    \mathbb E\{c_n^\top S_n^{-1}c_n\} &\leq \frac{\|\Sigma_n^{-1/2}c_n\|_2^2}{n} \mathbb E\lambda_{\min}^{-1}(A_n)\leq \frac{1}{nc_\Sigma} \mathbb E\lambda_{\min}^{-1}(A_n) =O(n^{-1}).
\end{align*}
For the lower bound, operator convexity of $A\mapsto A^{-1}$ gives
\begin{equation*}
    \mathbb E\{c_n^\top S_n^{-1}c_n\} \geq c_n^\top(\mathbb E S_n)^{-1}c_n = \frac1n c_n^\top\Sigma_n^{-1}c_n \geq \frac1{nC_\Sigma}.
\end{equation*}
Combining~\eqref{eq:total-variance}, \eqref{eq:first-term-final}, \eqref{eq:second-term-final}, and~\eqref{eq:true-variance} proves \eqref{eq:variance-ratio}.
\end{proof}

\section{Proof of the minimax upper bound}
\begin{mylemma}[Soft-thresholding inequality]
\label{lem:soft-thresholding}
Let $u,v\in\mathbb R^p$, and let $\lambda>0$. Define the coordinatewise soft-thresholding operator
\begin{equation*}
    S_\lambda(u)_j = \sgn(u_j)(|u_j|-\lambda)_+, \qquad j=1,\dots,p.
\end{equation*}
If $\|u-v\|_\infty\leq \lambda$, then, for every $j=1,\dots,p$,
\begin{equation*}
    |S_\lambda(u)_j-v_j| \leq \min\{|v_j|,2\lambda\}.
\end{equation*}
Consequently,
\begin{equation*}
    \|S_\lambda(u)-v\|_2^2 \leq 2\lambda\|v\|_1.
\end{equation*}
\end{mylemma}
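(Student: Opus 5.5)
The proof is coordinatewise. Fix $j$ and write $u_j=v_j+\delta_j$ with $|\delta_j|\leq\lambda$. I will show the two bounds $|S_\lambda(u)_j-v_j|\leq 2\lambda$ and $|S_\lambda(u)_j-v_j|\leq|v_j|$ separately and then take the minimum.

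\emph{Bound by $2\lambda$.} Soft-thresholding moves each coordinate by at most $\lambda$: either $S_\lambda(u)_j=u_j-\sgn(u_j)\lambda$, or $S_\lambda(u)_j=0$ with $|u_j|\leq\lambda$. In both cases $|S_\lambda(u)_j-u_j|\leq\lambda$. The triangle inequality then gives
\begin{equation*}
|S_\lambda(u)_j-v_j|\leq|S_\lambda(u)_j-u_j|+|u_j-v_j|\leq 2\lambda .
\end{equation*}

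\emph{Bound by $|v_j|$.} If $|u_j|\leq\lambda$, the output is $0$ and the error is exactly $|v_j|$.

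Otherwise $|u_j|>\lambda$; by symmetry ($S_\lambda$ is odd) assume $u_j>\lambda$, so $S_\lambda(u)_j=u_j-\lambda\in(0,u_j)$. Since $v_j\geq u_j-\lambda>0$, the output satisfies $0<u_j-\lambda\leq v_j$. Hence $S_\lambda(u)_j$ lies in $[0,v_j]$, and $|S_\lambda(u)_j-v_j|\leq v_j=|v_j|$.

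This second case is the only step that needs care: it uses the sign information $\sgn(S_\lambda(u)_j)=\sgn(v_j)$ together with $|S_\lambda(u)_j|\leq|v_j|$.

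\emph{Summation.} Squaring the coordinatewise bound and using $\min\{a,b\}^2\leq ab$ for $a,b\geq0$ gives
\begin{equation*}
|S_\lambda(u)_j-v_j|^2\leq\min\{|v_j|,2\lambda\}^2\leq 2\lambda|v_j|.
\end{equation*}
Summing over $j$ yields $\|S_\lambda(u)-v\|_2^2\leq 2\lambda\|v\|_1$.
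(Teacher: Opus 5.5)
Your proof is correct and follows essentially the same route as the paper: a coordinatewise case analysis on $u_j$ giving both the $|v_j|$ and $2\lambda$ bounds, then $\min\{|v_j|,2\lambda\}^2\le 2\lambda|v_j|$ and summation over $j$. The differences are cosmetic. You get the $2\lambda$ bound once from the triangle inequality rather than case by case, and you use oddness of $S_\lambda$ to skip the case $u_j<-\lambda$. You also justify the last step with $\min\{a,b\}^2\le ab$ rather than by splitting on whether $|v_j|\le 2\lambda$.
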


\begin{proof}
Fix $j\in\{1,\dots,p\}$. By the assumption $\|u-v\|_\infty\leq\lambda$, we have $|u_j-v_j|\leq \lambda$. We prove
\begin{equation*}
    |S_\lambda(u)_j-v_j| \leq \min\{|v_j|,2\lambda\}
\end{equation*}
by considering three cases.

First, suppose that $|u_j|\leq \lambda.$ Then, by the definition of $S_\lambda$, $S_\lambda(u)_j=0.$ Therefore
\begin{equation*}
    |S_\lambda(u)_j-v_j| = |v_j|.
\end{equation*}
Moreover,
\begin{equation*}
    |v_j| \leq |v_j-u_j|+|u_j| \leq \lambda+\lambda = 2\lambda.
\end{equation*}
Hence
\begin{equation*}
    |S_\lambda(u)_j-v_j| \leq \min\{|v_j|,2\lambda\}.
\end{equation*}

Second, suppose that $u_j>\lambda$. Then
\begin{equation*}
    S_\lambda(u)_j=u_j-\lambda>0.
\end{equation*}
Since $|u_j-v_j|\leq\lambda$, we have
\begin{equation*}
    v_j\geq u_j-\lambda=S_\lambda(u)_j.
\end{equation*}
Thus $0<S_\lambda(u)_j\leq v_j.$ Therefore
\begin{equation*}
    |S_\lambda(u)_j-v_j| = v_j-S_\lambda(u)_j \leq |v_j|.
\end{equation*}
On the other hand,
\begin{equation*}
    \begin{aligned}
        |S_\lambda(u)_j-v_j| \leq |S_\lambda(u)_j-u_j| + |u_j-v_j|= \lambda+|u_j-v_j|\leq 2\lambda.
    \end{aligned}
\end{equation*}
Hence
\begin{equation*}
    |S_\lambda(u)_j-v_j| \leq \min\{|v_j|,2\lambda\}.
\end{equation*}

Third, suppose that $u_j<-\lambda$. Then
\begin{equation*}
    S_\lambda(u)_j=u_j+\lambda<0.
\end{equation*}
Since $|u_j-v_j|\leq\lambda$, we have
\begin{equation*}
    v_j\leq u_j+\lambda=S_\lambda(u)_j.
\end{equation*}
Thus $v_j\leq S_\lambda(u)_j<0$. Therefore
\begin{equation*}
    |S_\lambda(u)_j-v_j| = S_\lambda(u)_j-v_j \leq |v_j|.
\end{equation*}
Again,
\begin{equation*}
    \begin{aligned}
        |S_\lambda(u)_j-v_j| \leq |S_\lambda(u)_j-u_j| + |u_j-v_j|= \lambda+|u_j-v_j|\leq 2\lambda.
    \end{aligned}
\end{equation*}
Hence
\begin{equation*}
    |S_\lambda(u)_j-v_j| \leq \min\{|v_j|,2\lambda\}.
\end{equation*}

Combining the three cases gives, for every $j=1,\dots,p$,
\begin{equation*}
    |S_\lambda(u)_j-v_j| \leq \min\{|v_j|,2\lambda\}.
\end{equation*}
Therefore
\begin{equation*}
    \begin{aligned}
        \|S_\lambda(u)-v\|_2^2 = \sum_{j=1}^p |S_\lambda(u)_j-v_j|^2\leq \sum_{j=1}^p \min\{v_j^2,4\lambda^2\}.
    \end{aligned}
\end{equation*}
For any $a\in\mathbb R$,
\begin{equation*}
    \min\{a^2,4\lambda^2\} \leq 2\lambda |a|.
\end{equation*}
Indeed, if $|a|\leq2\lambda$, then
\begin{equation*}
    \min\{a^2,4\lambda^2\} = a^2 \leq 2\lambda |a|;
\end{equation*}
whereas if $|a|>2\lambda$, then
\begin{equation*}
    \min\{a^2,4\lambda^2\} = 4\lambda^2 \leq 2\lambda |a|.
\end{equation*}
Applying this inequality with $a=v_j$, we obtain
\begin{equation*}
    \begin{aligned}
        \|S_\lambda(u)-v\|_2^2 &\leq 2\lambda \sum_{j=1}^p |v_j|= 2\lambda\|v\|_1.
    \end{aligned}
\end{equation*}
This proves the lemma.
\end{proof}
\begin{mylemma}
\label{lem:l1-fourth-regression}
Assume that
\begin{equation*}
    X_i\sim N(0,I_{p_n}), \qquad Y_i=X_i^\top\beta+\varepsilon_i, \qquad \varepsilon_i\sim N(0,\sigma^2),
\end{equation*}
independently for $i=1,\dots,n$, where $\varepsilon_i$ is independent of $X_i$.
Suppose that
\begin{equation*}
    \|\beta\|_1\leq B_1, \qquad 0<\sigma_{\min}^2\leq \sigma^2\leq \sigma_{\max}^2<\infty.
\end{equation*}
Then there exists an estimator $\widehat\beta\in\{b:\|b\|_1\leq B_1\}$ such that
\begin{equation*}
    \sup_{\substack{\|\beta\|_1\leq B_1\\ \sigma^2\in[\sigma_{\min}^2,\sigma_{\max}^2]}} \mathbb E_{\beta,\sigma^2} \|\widehat\beta-\beta\|_2^4 \leq C\frac{\log p}{n},
\end{equation*}
where $C<\infty$ depends only on $B_1$ and $\sigma_{\max}$. Moreover, the same estimator satisfies
\begin{equation*}
    \sup_{\substack{\|\beta\|_1\leq B_1\\ \sigma^2\in[\sigma_{\min}^2,\sigma_{\max}^2]}} \mathbb E_{\beta,\sigma^2} \|\widehat\beta-\beta\|_2^8 \leq C\left(\frac{\log p}{n}\right)^2.
\end{equation*}
\end{mylemma}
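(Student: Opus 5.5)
The natural estimator is soft-thresholding of the moment estimator $\widehat m=n^{-1}\sum_{i=1}^nX_iY_i$, followed by projection onto the $\ell_1$ ball. Since $\mathbb E X_iX_i^\top=I_{p_n}$, we have $\mathbb E\widehat m=\beta$. Set $\lambda_n=A\sqrt{\log p/n}$ and $\widetilde\beta=S_{\lambda_n}(\widehat m)$. Let $\widehat\beta$ be the Euclidean projection of $\widetilde\beta$ onto $\{b:\|b\|_1\le B_1\}$. Because $\beta$ lies in this convex set, projection is nonexpansive toward $\beta$, so $\|\widehat\beta-\beta\|_2\le\|\widetilde\beta-\beta\|_2$. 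In addition, $\|\widehat\beta-\beta\|_2\le 2B_1$ deterministically. This deterministic cap is what handles the bad event.

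Decompose on the event $E=\{\|\widehat m-\beta\|_\infty\le\lambda_n\}$.

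On $E$, Lemma~\ref{lem:soft-thresholding} gives $\|\widetilde\beta-\beta\|_2^2\le 2\lambda_n\|\beta\|_1\le 2B_1\lambda_n$. Hence
\begin{equation*}
\|\widehat\beta-\beta\|_2^4\le 4B_1^2\lambda_n^2=4B_1^2A^2\,\frac{\log p}{n},
\end{equation*}
and similarly $\|\widehat\beta-\beta\|_2^8\lesssim(\log p/n)^2$.

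On $E^c$, use the cap: $\mathbb E[\|\widehat\beta-\beta\|_2^k\mathbf 1_{E^c}]\le(2B_1)^k\,\mathbb P(E^c)$ for $k\in\{4,8\}$. It then suffices to show $\mathbb P(E^c)\le C(\log p/n)^2$, and in fact a polynomial bound $p^{-2}$ will do, since $p\asymp n$.

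The main step is the coordinatewise tail bound. Write
\begin{equation*}
\widehat m_j-\beta_j=\frac1n\sum_{i=1}^n\Bigl\{X_{ij}(X_i^\top\beta+\varepsilon_i)-\beta_j\Bigr\}.
\end{equation*}
Each summand is a centered product of two Gaussians. The first factor $X_{ij}$ has variance one. The second factor $X_i^\top\beta+\varepsilon_i$ has variance $\|\beta\|_2^2+\sigma^2\le B_1^2+\sigma_{\max}^2$, using $\|\beta\|_2\le\|\beta\|_1\le B_1$. The summands are therefore sub-exponential with $\psi_1$-norm bounded by a constant depending only on $B_1$ and $\sigma_{\max}$. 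Bernstein's inequality gives
\begin{equation*}
\mathbb P\bigl(|\widehat m_j-\beta_j|>t\bigr)\le 2\exp\bigl\{-cn\min(t^2,t)\bigr\}.
\end{equation*}
Take $t=\lambda_n$. For large $n$ we have $\lambda_n\le1$, so the quadratic branch applies and the bound is $2\exp(-cA^2\log p)$. A union bound over the $p$ coordinates then gives $\mathbb P(E^c)\le 2p^{1-cA^2}\le 2p^{-2}$ once $A$ is chosen large, with $A$ depending only on $B_1$ and $\sigma_{\max}$. Since $p\asymp n$, this is $o((\log p/n)^2)$, and for small $n$ the constant $C$ absorbs the remaining cases.

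Combining the two events proves both the fourth-moment and eighth-moment bounds, uniformly over $\|\beta\|_1\le B_1$ and $\sigma^2\in[\sigma_{\min}^2,\sigma_{\max}^2]$. The only delicate point is tracking the sub-exponential constants uniformly over this parameter range; the rest is bookkeeping.
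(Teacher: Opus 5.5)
Your proposal is correct and follows the paper's proof essentially step for step. It uses the moment estimator $\widehat m$, soft-thresholding at $A\sqrt{\log p/n}$, and Euclidean projection onto the $\ell_1$ ball; a Bernstein-plus-union bound controls the probability of the good event, where Lemma~\ref{lem:soft-thresholding} applies, and the deterministic cap $2B_1$ handles its complement, with $p\asymp n$ absorbing the polynomial tail.
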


\begin{proof}
Define the moment estimator
\begin{equation*}
    \widehat m = \frac1n \sum_{i=1}^n X_iY_i.
\end{equation*}
Since
\begin{equation*}
    Y_i=X_i^\top\beta+\varepsilon_i,
\end{equation*}
we have
\begin{equation*}
    \mathbb E_{\beta,\sigma^2}[X_iY_i] = \mathbb E[X_iX_i^\top]\beta + \mathbb E[X_i\varepsilon_i] = \beta.
\end{equation*}
Thus $\widehat m$ is an unbiased estimator of $\beta$.

Let
\begin{equation*}
    \lambda = A\sqrt{\frac{\log p}{n}},
\end{equation*}
where $A>0$ is a sufficiently large constant to be chosen later. Let $S_\lambda$ denote the coordinatewise soft-thresholding operator, and set
\begin{equation*}
    \widetilde\beta = S_\lambda(\widehat m).
\end{equation*}
Define
\begin{equation*}
    \widehat\beta = \Pi_{\{b:\|b\|_1\leq B_1\}}(\widetilde\beta),
\end{equation*}
where the projection is the Euclidean projection onto the closed convex set $\{b:\|b\|_1\leq B_1\}$.

We first control the entrywise fluctuation of $\widehat m$. For each $j=1,\dots,p$,
\begin{equation*}
    \widehat m_j-\beta_j = \frac1n \sum_{i=1}^n \left( X_{ij}Y_i-\beta_j \right).
\end{equation*}
The summands are independent and mean zero. Moreover, since $(X_{ij},Y_i)$ is a centered Gaussian vector and
\begin{equation*}
    \Var(Y_i) = \|\beta\|_2^2+\sigma^2 \leq B_1^2+\sigma_{\max}^2,
\end{equation*}
the random variable $X_{ij}Y_i-\beta_j$ is sub-exponential with a sub-exponential norm bounded by a constant depending only on $B_1$ and $\sigma_{\max}$. Hence Bernstein's inequality gives
\begin{equation*}
    \mathbb P_{\beta,\sigma^2} \left( |\widehat m_j-\beta_j|>t \right) \leq 2\exp \left[ -cn\min(t^2,t) \right],
\end{equation*}
uniformly over $j$, $\|\beta\|_1\leq B_1$, and $\sigma^2\in[\sigma_{\min}^2,\sigma_{\max}^2]$.

Taking $t=\lambda$, and using $\lambda\to0$ for large $n$, we obtain
\begin{equation*}
    \mathbb P_{\beta,\sigma^2} \left( |\widehat m_j-\beta_j|>\lambda \right) \leq 2\exp(-cA^2\log p).
\end{equation*}
By the union bound,
\begin{equation*}
    \mathbb P_{\beta,\sigma^2} \left( \|\widehat m-\beta\|_\infty>\lambda \right) \leq 2p^{1-cA^2}.
\end{equation*}
Thus, by choosing $A$ sufficiently large, for any fixed $K>2$,
\begin{equation*}
    \mathbb P_{\beta,\sigma^2} \left( \|\widehat m-\beta\|_\infty>\lambda \right) \leq Cp^{-K}.
\end{equation*}
Let
\begin{equation*}
    \mathcal E_n = \left\{ \|\widehat m-\beta\|_\infty\leq\lambda \right\}.
\end{equation*}

On $\mathcal E_n$, by Lemma~\ref{lem:soft-thresholding},
\begin{equation*}
    \|\widetilde\beta-\beta\|_2^2 = \|S_\lambda(\widehat m)-\beta\|_2^2 \leq 2\lambda\|\beta\|_1 \leq 2B_1\lambda.
\end{equation*}
Since $\beta\in\{b:\|b\|_1\leq B_1\}$, and since Euclidean projection onto a closed convex set is non-expansive,
\begin{equation*}
    \|\widehat\beta-\beta\|_2 = \left\| \Pi_{\{b:\|b\|_1\leq B_1\}}(\widetilde\beta) - \Pi_{\{b:\|b\|_1\leq B_1\}}(\beta) \right\|_2 \leq \|\widetilde\beta-\beta\|_2.
\end{equation*}
Therefore, on $\mathcal E_n$,
\begin{equation*}
    \|\widehat\beta-\beta\|_2^2 \leq 2B_1\lambda.
\end{equation*}
Consequently,
\begin{equation*}
    \|\widehat\beta-\beta\|_2^4 \leq C B_1^2\lambda^2,
\end{equation*}
and
\begin{equation*}
    \|\widehat\beta-\beta\|_2^8 \leq C B_1^4\lambda^4.
\end{equation*}

On the complement $\mathcal E_n^c$, both $\widehat\beta$ and $\beta$ belong to the $\ell_1$-ball of radius $B_1$. Hence
\begin{equation*}
    \|\widehat\beta-\beta\|_2 \leq \|\widehat\beta-\beta\|_1 \leq \|\widehat\beta\|_1+\|\beta\|_1 \leq 2B_1.
\end{equation*}
Thus
\begin{equation*}
    \|\widehat\beta-\beta\|_2^4 \leq C B_1^4, \qquad \|\widehat\beta-\beta\|_2^8 \leq C B_1^8
\end{equation*}
on $\mathcal E_n^c$.

Combining the bounds on $\mathcal E_n$ and $\mathcal E_n^c$, we get
\begin{equation*}
    \begin{aligned}
        \mathbb E_{\beta,\sigma^2} \|\widehat\beta-\beta\|_2^4 &\leq C B_1^2\lambda^2 + C B_1^4 \mathbb P_{\beta,\sigma^2}(\mathcal E_n^c)\leq C\frac{\log p}{n} + Cp^{-K}.
    \end{aligned}
\end{equation*}
Choose $K>1$. Since $p/n\to\kappa\in(0,1)$,
\begin{equation*}
    p^{-K} = o\left( \frac{\log p}{n} \right).
\end{equation*}
Therefore
\begin{equation*}
    \mathbb E_{\beta,\sigma^2} \|\widehat\beta-\beta\|_2^4 \leq C\frac{\log p}{n}.
\end{equation*}

Similarly, for the eighth moment,
\begin{equation*}
    \begin{aligned}
        \mathbb E_{\beta,\sigma^2} \|\widehat\beta-\beta\|_2^8 &\leq C B_1^4\lambda^4 + C B_1^8 \mathbb P_{\beta,\sigma^2}(\mathcal E_n^c)\leq C\left(\frac{\log p}{n}\right)^2 + Cp^{-K}.
    \end{aligned}
\end{equation*}
Choose $K>2$. Since $p/n\to\kappa\in(0,1)$,
\begin{equation*}
    p^{-K} = o\left[ \left( \frac{\log p}{n} \right)^2 \right].
\end{equation*}
Thus
\begin{equation*}
    \mathbb E_{\beta,\sigma^2} \|\widehat\beta-\beta\|_2^8 \leq C\left(\frac{\log p}{n}\right)^2.
\end{equation*}
Taking the supremum over $\|\beta\|_1\leq B_1$ and $\sigma^2\in[\sigma_{\min}^2,\sigma_{\max}^2]$ proves the lemma.
\end{proof}

\begin{proof}[Proof of Theorem~\ref{thm:gaussian-linear-score-upper-unknown-var}]
Split the sample into two independent parts $I_1$ and $I_2$, with
\begin{equation*}
    n_1=|I_1|, \qquad n_2=|I_2|, \qquad n_1\asymp n_2\asymp n.
\end{equation*}
Using only the first subsample $I_1$, construct the estimator $\widehat\beta$ as in Lemma~\ref{lem:l1-fourth-regression}, with $n$ replaced by $n_1$. Since $n_1\asymp n$, Lemma~\ref{lem:l1-fourth-regression} gives
\begin{equation*}
    \sup_{\substack{\|\beta\|_1\leq B_1\\ \sigma^2\in[\sigma_{\min}^2,\sigma_{\max}^2]}} \mathbb E_{\beta,\sigma^2} \|\widehat\beta-\beta\|_2^4 \leq C\frac{\log p}{n},
\end{equation*}
and
\begin{equation*}
    \sup_{\substack{\|\beta\|_1\leq B_1\\ \sigma^2\in[\sigma_{\min}^2,\sigma_{\max}^2]}} \mathbb E_{\beta,\sigma^2} \|\widehat\beta-\beta\|_2^8 \leq C\left(\frac{\log p}{n}\right)^2.
\end{equation*}

Next estimate the noise variance using the second subsample $I_2$. Define
\begin{equation*}
    \overline\sigma^2 = \frac1{n_2} \sum_{i\in I_2} \left( Y_i-X_i^\top\widehat\beta \right)^2.
\end{equation*}
Then truncate it to the parameter interval:
\begin{equation*}
    \widehat\sigma^2 = \Pi_{[\sigma_{\min}^2,\sigma_{\max}^2]} \left( \overline\sigma^2 \right).
\end{equation*}
Since $\sigma^2\in[\sigma_{\min}^2,\sigma_{\max}^2]$, the projection is non-expansive, and hence
\begin{equation*}
    |\widehat\sigma^2-\sigma^2| \leq |\overline\sigma^2-\sigma^2|.
\end{equation*}

Let
\begin{equation*}
    \delta = \widehat\beta-\beta.
\end{equation*}
Conditionally on the first subsample $I_1$, the vector $\delta$ is fixed and the second subsample is independent of $\delta$. For $i\in I_2$,
\begin{equation*}
    Y_i-X_i^\top\widehat\beta = \varepsilon_i-X_i^\top\delta.
\end{equation*}
Since $X_i\sim N(0,I_{p_n})$, $\varepsilon_i\sim N(0,\sigma^2)$, and they are independent, conditionally on $I_1$,
\begin{equation*}
    \varepsilon_i-X_i^\top\delta \sim N(0,\sigma^2+\|\delta\|_2^2).
\end{equation*}
Hence
\begin{equation*}
    \overline\sigma^2 = \left( \sigma^2+\|\delta\|_2^2 \right) \frac{\chi^2_{n_2}}{n_2}
\end{equation*}
conditionally on $I_1$. Since
\begin{equation*}
    \|\delta\|_2 \leq \|\widehat\beta-\beta\|_1 \leq 2B_1,
\end{equation*}
we have
\begin{equation*}
    \sigma^2+\|\delta\|_2^2 \leq \sigma_{\max}^2+4B_1^2.
\end{equation*}
Using the fourth moment bound for the centered chi-square average,
\begin{equation*}
    \mathbb E \left| \frac{\chi^2_{n_2}}{n_2}-1 \right|^4 \leq \frac{C}{n_2^2},
\end{equation*}
we obtain
\begin{equation*}
    \mathbb E_{\beta,\sigma^2} \left[ \left| \overline\sigma^2- \left( \sigma^2+\|\delta\|_2^2 \right) \right|^4 \,\middle|\, I_1 \right] \leq \frac{C}{n_2^2}.
\end{equation*}
Moreover,
\begin{equation*}
    \left| \left( \sigma^2+\|\delta\|_2^2 \right) - \sigma^2 \right|^4 = \|\delta\|_2^8.
\end{equation*}
Therefore
\begin{equation*}
    \begin{aligned}
        \mathbb E_{\beta,\sigma^2} |\widehat\sigma^2-\sigma^2|^4 &\leq \mathbb E_{\beta,\sigma^2} |\overline\sigma^2-\sigma^2|^4  \\ &\leq Cn^{-2} + C \mathbb E_{\beta,\sigma^2} \|\widehat\beta-\beta\|_2^8  \\ &\leq Cn^{-2} + C\left( \frac{\log p}{n} \right)^2.
    \end{aligned}
\end{equation*}
Since $p/n\to\kappa\in(0,1)$, this implies
\begin{equation*}
    \mathbb E_{\beta,\sigma^2} |\widehat\sigma^2-\sigma^2|^4 \leq C\frac{\log p}{n}.
\end{equation*}

We now control the induced score error. For parameters $(\beta,\sigma^2)$, write
\begin{equation*}
    A_{\beta,\sigma^2} =
    \begin{pmatrix}
        I_p & \beta\\ \beta^\top & \|\beta\|_2^2+\sigma^2
    \end{pmatrix}
    .
\end{equation*}
The corresponding OU covariance is
\begin{equation*}
    A_{\beta,\sigma^2,t} = I_{p+1} + e^{-2t} \left( A_{\beta,\sigma^2}-I_{p+1} \right).
\end{equation*}
Thus
\begin{equation*}
    s_{\beta,\sigma^2,t}(z) = -A_{\beta,\sigma^2,t}^{-1}z.
\end{equation*}
Define the plug-in score estimator
\begin{equation*}
    \widehat s_t(z) = s_{\widehat\beta,\widehat\sigma^2,t}(z).
\end{equation*}

Since
\begin{equation*}
    \|\beta\|_1\leq B_1, \qquad \|\widehat\beta\|_1\leq B_1, \qquad \sigma^2,\widehat\sigma^2\in[\sigma_{\min}^2,\sigma_{\max}^2],
\end{equation*}
the covariance matrices $A_{\beta,\sigma^2,t}$ and $A_{\widehat\beta,\widehat\sigma^2,t}$ have eigenvalues uniformly bounded above and below by constants depending only on $B_1,\sigma_{\min}$, and $\sigma_{\max}$. Hence their inverse operator norms are uniformly bounded.

By the inverse perturbation identity,
\begin{equation*}
    \begin{aligned}
        &A_{\widehat\beta,\widehat\sigma^2,t}^{-1} - A_{\beta,\sigma^2,t}^{-1}= A_{\widehat\beta,\widehat\sigma^2,t}^{-1} \left( A_{\beta,\sigma^2,t} - A_{\widehat\beta,\widehat\sigma^2,t} \right) A_{\beta,\sigma^2,t}^{-1}.
    \end{aligned}
\end{equation*}
Furthermore,
\begin{equation*}
    A_{\widehat\beta,\widehat\sigma^2,t} - A_{\beta,\sigma^2,t} = e^{-2t}
    \begin{pmatrix}
        0 & \delta\\ \delta^\top & \|\widehat\beta\|_2^2-\|\beta\|_2^2 + \widehat\sigma^2-\sigma^2
    \end{pmatrix}
    .
\end{equation*}
Since
\begin{equation*}
    \left| \|\widehat\beta\|_2^2-\|\beta\|_2^2 \right| \leq \left( \|\widehat\beta\|_2+\|\beta\|_2 \right) \|\widehat\beta-\beta\|_2 \leq 2B_1\|\delta\|_2,
\end{equation*}
we get
\begin{equation*}
    \left\| A_{\widehat\beta,\widehat\sigma^2,t} - A_{\beta,\sigma^2,t} \right\|_{\op} \leq Ce^{-2t} \left( \|\widehat\beta-\beta\|_2 + |\widehat\sigma^2-\sigma^2| \right).
\end{equation*}
Therefore
\begin{equation*}
    \left\| A_{\widehat\beta,\widehat\sigma^2,t}^{-1} - A_{\beta,\sigma^2,t}^{-1} \right\|_{\op} \leq Ce^{-2t} \left( \|\widehat\beta-\beta\|_2 + |\widehat\sigma^2-\sigma^2| \right).
\end{equation*}

Let
\begin{equation*}
    H_t = A_{\widehat\beta,\widehat\sigma^2,t}^{-1} - A_{\beta,\sigma^2,t}^{-1}.
\end{equation*}
The matrix $A_{\widehat\beta,\widehat\sigma^2,t}-A_{\beta,\sigma^2,t}$ has rank at most two, and hence $H_t$ also has rank at most two. Since $Z_t\sim N(0,A_{\beta,\sigma^2,t})$, the vector $H_tZ_t$ is centered Gaussian with covariance $H_tA_{\beta,\sigma^2,t}H_t^\top$. Therefore
\begin{equation*}
    \mathbb E_{Z_t\sim p_{\beta,\sigma^2,t}} \|H_tZ_t\|_2^4 \leq C\|H_t\|_F^4.
\end{equation*}
Using $\operatorname{rank}(H_t)\leq2$, we have
\begin{equation*}
    \|H_t\|_F^2 \leq 2\|H_t\|_{\op}^2.
\end{equation*}
Consequently,
\begin{equation*}
    \begin{aligned}
        \mathbb E_{Z_t\sim p_{\beta,\sigma^2,t}} \left\| s_{\widehat\beta,\widehat\sigma^2,t}(Z_t) - s_{\beta,\sigma^2,t}(Z_t) \right\|_2^4&= \mathbb E_{Z_t\sim p_{\beta,\sigma^2,t}} \|H_tZ_t\|_2^4  \\ &\leq Ce^{-8t} \left( \|\widehat\beta-\beta\|_2 + |\widehat\sigma^2-\sigma^2| \right)^4.
    \end{aligned}
\end{equation*}
Integrating over $t\in[0,T_n]$, we obtain
\begin{equation*}
    \begin{aligned}
        \int_0^{T_n} \mathbb E_{Z_t\sim p_{\beta,\sigma^2,t}} \left\| \widehat s_t(Z_t) - s_{\beta,\sigma^2,t}(Z_t) \right\|_2^4\mathrm dt &\leq C \left( \|\widehat\beta-\beta\|_2 + |\widehat\sigma^2-\sigma^2| \right)^4  \\ &\leq C\|\widehat\beta-\beta\|_2^4 + C|\widehat\sigma^2-\sigma^2|^4.
    \end{aligned}
\end{equation*}
Taking expectation and using the bounds above yields
\begin{equation*}
    \begin{aligned}
        &\sup_{\substack{\|\beta\|_1\leq B_1\\ \sigma^2\in[\sigma_{\min}^2,\sigma_{\max}^2]}} \mathbb E_{\beta,\sigma^2} \left[ \int_0^{T_n} \mathbb E_{Z_t\sim p_{\beta,\sigma^2,t}} \left\| \widehat s_t(Z_t)-s_{\beta,\sigma^2,t}(Z_t) \right\|_2^4\mathrm dt \right]  \leq C\frac{\log p}{n}.
    \end{aligned}
\end{equation*}
Since $\mathcal R_{n,4}(T_n)$ is the infimum over all score estimators, it follows that
\begin{equation*}
    \mathcal R_{n,4}(T_n) \leq C\frac{\log p}{n}.
\end{equation*}
Finally, since $p/n\to\kappa\in(0,1)$,
\begin{equation*}
    \frac{\log p}{n}\to0.
\end{equation*}
Therefore
\begin{equation*}
    \mathcal R_{n,4}(T_n)\to0.
\end{equation*}

Moreover, the plug-in score also satisfies the Lipschitz part of Assumption~\ref{ass:hd-score-approximation}. Indeed,
\begin{equation*}
    e_t(z) = \widehat s_t(z)-s_t(z) = -H_tz,
\end{equation*}
so
\begin{equation*}
    \Lip(e_t) = \|H_t\|_{\op} \leq Ce^{-2t} \left( \|\widehat\beta-\beta\|_2+ |\widehat\sigma^2-\sigma^2| \right).
\end{equation*}
Hence
\begin{equation*}
    \sup_{0\leq t\leq T_n}\Lip(e_t) \leq C \left( \|\widehat\beta-\beta\|_2+ |\widehat\sigma^2-\sigma^2| \right) = o_p(1).
\end{equation*}
For example, with
\begin{equation*}
    \delta_n=\left(\frac{\log p}{n}\right)^{1/8},
\end{equation*}
Markov's inequality gives
\begin{equation*}
    \mathbb P \left( \sup_{0\leq t\leq T_n}\Lip(e_t)>\delta_n \right) \to0.
\end{equation*}
\end{proof}

\subsection{Proofs for the structured minimax examples}
\begin{proof}[Proof of Proposition~\ref{prop:minimax-atomic-w4-gaussian}]
We first prove the lower bound. Fix a probability measure $Q$ supported on at most $n$ points, and write
\begin{equation*}
    \supp(Q)\subseteq\{q_1,\ldots,q_n\}=:\mathcal C.
\end{equation*}
For every coupling of $Z\sim\gamma_d$ and $U\sim Q$,
\begin{equation*}
    \|Z-U\|_2 \geq \operatorname{dist}(Z,\mathcal C).
\end{equation*}
Consequently,
\begin{equation}
    W_4^4(Q,\gamma_d) \geq \mathbb E_{Z\sim\gamma_d} \operatorname{dist}(Z,\mathcal C)^4. \label{eq:atomic-distance-lower}
\end{equation}

Choose a sufficiently small universal constant $a>0$. Since the standard Gaussian density is bounded by $(2\pi)^{-d/2}$, the volume formula for a Euclidean ball and Stirling's bound give, uniformly in $x\in\mathbb R^d$,
\begin{align*}
    \gamma_d\{B(x,a\sqrt d)\} &\leq (2\pi)^{-d/2} \frac{\pi^{d/2}(a\sqrt d)^d}{\Gamma(d/2+1)}\leq (C_0a)^d \leq e^{-c_0d}
\end{align*}
for universal constants $C_0,c_0>0$. A union bound therefore yields
\begin{equation*}
    \gamma_d \left\{ \min_{1\leq j\leq n}\|Z-q_j\|_2 \leq a\sqrt d \right\} \leq ne^{-c_0d}.
\end{equation*}
After reducing the constant $c>0$, the condition $n\leq e^{cd}$ implies that the right-hand side is at most $e^{-c_0d/2}$. Hence
\begin{equation*}
    \mathbb P \{\operatorname{dist}(Z,\mathcal C)>a\sqrt d\} \geq 1-e^{-c_0d/2}.
\end{equation*}
Using~\eqref{eq:atomic-distance-lower}, and decreasing the universal constant once more if necessary, gives
\begin{equation*}
    W_4^4(Q,\gamma_d) \geq a^4d^2(1-e^{-c_0d/2}) \geq cd^2.
\end{equation*}
This bound holds for every realization of a possibly randomized $\widehat Q_n\in\mathcal A_{n,d}$, so conditioning on its randomness and then averaging proves the atomic lower bound.

For the upper bound, let
\begin{equation*}
    \widehat\gamma_n = \frac1n\sum_{i=1}^n\delta_{G_i}, \qquad G_1,\ldots,G_n\overset{\mathrm{i.i.d.}}\sim\gamma_d.
\end{equation*}
Conditionally on $G_1,\ldots,G_n$, couple $\widehat\gamma_n$ and $\gamma_d$ by choosing $I$ uniformly from $\{1,\ldots,n\}$, setting $U=G_I$, and drawing $Z\sim\gamma_d$ independently. Then
\begin{equation*}
    W_4^4(\widehat\gamma_n,\gamma_d) \leq \frac1n\sum_{i=1}^n \mathbb E_{Z\sim\gamma_d}\|G_i-Z\|_2^4.
\end{equation*}
Taking expectation and using $G-Z\sim N(0,2I_d)$ gives
\begin{equation*}
    \mathbb E W_4^4(\widehat\gamma_n,\gamma_d) \leq \mathbb E\|G-Z\|_2^4 = 4d(d+2) \leq Cd^2.
\end{equation*}
Since $\widehat\gamma_n\in\mathcal A_{n,d}$, this proves the upper bound.
\end{proof}

We first record a nuisance-estimation lemma used for all three model classes.

\begin{mylemma}[Regression nuisance estimation under a structured covariance]
\label{lem:structured-minimax-nuisance}
Let $X$ be centered and uniformly strongly log-concave, with covariance $\Sigma_\vartheta$, and assume
\begin{equation*}
    cI_p\preceq\Sigma_\vartheta\preceq CI_p, \qquad \|\Sigma_\vartheta\|_{\infty\to\infty} +\|\Sigma_\vartheta^{-1}\|_{\infty\to\infty} \leq C.
\end{equation*}
Suppose $p/n\to\kappa\in(0,1)$. A fixed number of independent sample splits may be used below, and every split has size $n_j\asymp n$. Suppose an estimator $\widehat\vartheta$ is projected onto its compact parameter space. Set
\begin{equation*}
    \lambda_n=A\sqrt{\frac{\log p}{n}}.
\end{equation*}
Assume that, for every fixed $K>0$ after increasing $A$,
\begin{equation}
    \mathbb P_\vartheta \left( \|\Sigma_{\widehat\vartheta}^{-1} -\Sigma_\vartheta^{-1}\|_{\infty\to\infty} >\lambda_n \right) \leq Cp^{-K}. \label{eq:structured-covariance-plugin-tail}
\end{equation}
Then there are estimators $\widehat\beta$ and $\widehat\sigma^2$, based on independent sample splits, such that
\begin{align}
    \mathbb E\|\widehat\beta-\beta\|_2^4 &\leq C\frac{\log p}{n}, & \mathbb E\|\widehat\beta-\beta\|_2^8 &\leq C\left(\frac{\log p}{n}\right)^2, \label{eq:structured-beta-moments}\\ \mathbb E|\widehat\sigma^2-\sigma^2|^4 &\leq C\left\{ \frac1{n^2} +\left(\frac{\log p}{n}\right)^2 \right\}. \label{eq:structured-sigma-moment}
\end{align}
The constants are uniform over $\|\beta\|_1\leq B_1$ and the prescribed noise-variance interval.
\end{mylemma}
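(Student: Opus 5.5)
The plan is to imitate the proof of Lemma~\ref{lem:l1-fourth-regression}, adding an estimated whitening matrix. We use three independent splits $I_0,I_1,I_2$, each of size $\asymp n$. On $I_0$ we form $\widehat\vartheta$, projected onto the compact parameter space, and we condition on $I_0$ from then on. On $I_1$ we form the moment vector $\widehat m=n_1^{-1}\sum_{i\in I_1}X_iY_i$, which is unbiased for $\Sigma_\vartheta\beta$. We then put
\[
\widetilde\beta=S_{\lambda}\bigl(\Sigma_{\widehat\vartheta}^{-1}\widehat m\bigr),\qquad \lambda=c'\lambda_n,
\]
and take $\widehat\beta=\Pi_{\{\|b\|_1\le B_1\}}(\widetilde\beta)$. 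The key point is an $\ell_\infty$ deviation bound. On an event of probability $1-Cp^{-K}$ we want
\[
\|\Sigma_{\widehat\vartheta}^{-1}\widehat m-\beta\|_\infty\le\lambda .
\]
Given this bound, Lemma~\ref{lem:soft-thresholding} gives $\|\widetilde\beta-\beta\|_2^2\le 2B_1\lambda$, and non-expansiveness of the projection transfers the bound to $\widehat\beta$. Off the event we use $\|\widehat\beta-\beta\|_2\le 2B_1$. The fourth and eighth moment bounds in \eqref{eq:structured-beta-moments} then follow exactly as in Lemma~\ref{lem:l1-fourth-regression}, because $p^{-K}$ is negligible against $(\log p/n)^2$.

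For the deviation bound we decompose
\[
\Sigma_{\widehat\vartheta}^{-1}\widehat m-\beta=\Sigma_{\widehat\vartheta}^{-1}(\widehat m-\Sigma_\vartheta\beta)+(\Sigma_{\widehat\vartheta}^{-1}-\Sigma_\vartheta^{-1})\Sigma_\vartheta\beta .
\]
The second term is at most $\|\Sigma_{\widehat\vartheta}^{-1}-\Sigma_\vartheta^{-1}\|_{\infty\to\infty}\|\Sigma_\vartheta\|_{\infty\to\infty}\|\beta\|_\infty\le CB_1\lambda_n$ off the exceptional set in \eqref{eq:structured-covariance-plugin-tail}. For the first term, $\|\Sigma_{\widehat\vartheta}^{-1}\|_{\infty\to\infty}\le C+\lambda_n$ on the same event, so it suffices to bound $\|\widehat m-\Sigma_\vartheta\beta\|_\infty$. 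Each coordinate $X_{ij}Y_i$ is a product of two sub-Gaussian variables. Here $X_{ij}$ is sub-Gaussian by strong log-concavity (Herbst, Lemma~\ref{lem:aux-bakry-emery-herbst}), and $Y_i=X_i^\top\beta+\varepsilon_i$ is sub-Gaussian with a uniform constant because $\|\beta\|_2\le B_1$ and the noise variance is bounded. The product is therefore sub-exponential, and Bernstein's inequality with a union bound over $p$ coordinates gives the $p^{-K}$ tail after enlarging $A$. Choosing $c'$ large enough to absorb all the constants completes the $\beta$ part. If the conditioning makes $\lambda$ random, we simply inflate the threshold deterministically.

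For $\widehat\sigma^2$ we use $I_2$, setting $\overline\sigma^2=n_2^{-1}\sum_{i\in I_2}(Y_i-X_i^\top\widehat\beta)^2$ and projecting it onto the variance interval. Let $\delta=\widehat\beta-\beta$ and condition on $I_0,I_1$. The summands are i.i.d.\ with mean $\sigma^2+\delta^\top\Sigma_\vartheta\delta$, and $\|\delta\|_2\le 2B_1$. In the non-Gaussian case they are no longer scaled chi-squares. Instead, a fourth-moment Rosenthal bound for centered i.i.d.\ averages gives deviation $Cn^{-2}$, using the uniform eighth moments of $\varepsilon_i-X_i^\top\delta$ that follow from sub-Gaussianity. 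The bias satisfies $|\delta^\top\Sigma_\vartheta\delta|^4\le C\|\delta\|_2^8$. Combining these with \eqref{eq:structured-beta-moments} yields \eqref{eq:structured-sigma-moment}.

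The main obstacle is keeping every constant uniform over the parameter class. This concerns the sub-exponential norms, where we use the uniform curvature bounds and the bound on $\|\Sigma_\vartheta\|_{\op}$, and it also concerns the interplay of the random plug-in operator norm with the threshold. We handle the latter by working on the intersection of the two high-probability events and bounding everything else crudely by $2B_1$.
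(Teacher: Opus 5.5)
Your proposal is correct and follows the paper's proof essentially step for step. Bernstein plus a union bound controls $\|\widehat m-\Sigma_\vartheta\beta\|_\infty$, and together with the plug-in tail \eqref{eq:structured-covariance-plugin-tail} and the row-sum bounds this gives $\|\Sigma_{\widehat\vartheta}^{-1}\widehat m-\beta\|_\infty\le C\lambda_n$. Soft thresholding and $\ell_1$ projection via Lemma~\ref{lem:soft-thresholding} then give the $\beta$ bounds. An independent split for $\widehat\sigma^2$ yields an $O(n^{-2})$ fluctuation term plus the bias $|\delta^\top\Sigma_\vartheta\delta|^4\le C\|\delta\|_2^8$.

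The only difference is your extra split $I_0$ for $\widehat\vartheta$. The paper does not need it, because intersecting the two high-probability events by a union bound requires no independence between $\widehat\vartheta$ and $\widehat m$.
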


\begin{proof}
On the first sample split, set
\begin{equation*}
    \widetilde\beta = \Sigma_{\widehat\vartheta}^{-1} \left(\frac1{n_1}\sum_{i\in I_1}X_iY_i\right).
\end{equation*}
Uniform strong log-concavity, the $\ell_1$ bound on $\beta$, and Gaussian noise imply a uniform sub-exponential bound for every centered coordinate of $X_iY_i$. Bernstein's inequality and a union bound therefore give
\begin{equation*}
    \left\| \frac1{n_1}\sum_{i\in I_1}X_iY_i -\Sigma_\vartheta\beta \right\|_\infty \leq C\lambda_n
\end{equation*}
outside an event of probability at most $Cp^{-K}$. Together with \eqref{eq:structured-covariance-plugin-tail} and the row-sum bounds, this implies
\begin{equation*}
    \|\widetilde\beta-\beta\|_\infty\leq C\lambda_n
\end{equation*}
on the same event. Apply coordinatewise soft thresholding at a sufficiently large multiple of $\lambda_n$ and project onto $\{b:\|b\|_1\leq B_1\}$. Lemma~\ref{lem:soft-thresholding} and non-expansiveness of the projection give
\begin{equation*}
    \|\widehat\beta-\beta\|_2^2\leq CB_1\lambda_n.
\end{equation*}
On the exceptional event the distance is at most $2B_1$. Taking $K>2$ proves \eqref{eq:structured-beta-moments}.

On an independent split $I_2$, define
\begin{equation*}
    \overline\sigma^2 = \frac1{n_2}\sum_{i\in I_2} (Y_i-X_i^\top\widehat\beta)^2
\end{equation*}
and project it onto $[\sigma_{\min}^2,\sigma_{\max}^2]$. Conditional on the first split, put $\delta=\widehat\beta-\beta$. Then
\begin{equation*}
    \mathbb E[(Y-X^\top\widehat\beta)^2\mid\delta] = \sigma^2+\delta^\top\Sigma_\vartheta\delta.
\end{equation*}
The conditional fourth moment of the centered empirical average is $O(n^{-2})$, uniformly in $\delta$, because $\|\delta\|_2\leq2B_1$ and strong log-concavity gives dimension-free moments of $X^\top\delta$. Moreover,
\begin{equation*}
    |\delta^\top\Sigma_\vartheta\delta|^4 \leq C\|\delta\|_2^8.
\end{equation*}
Equation~\eqref{eq:structured-sigma-moment} follows from \eqref{eq:structured-beta-moments} and the non-expansiveness of the scalar projection.
\end{proof}

\begin{mylemma}
\label{lem:gaussian-structured-score-transfer}
Let $X\sim N(0,\Sigma_p(\vartheta))$, where $\vartheta\in\Theta\subset\mathbb R^q$, $q=O(1)$, and suppose $\Theta$ is compact, $\log p/n\to0$, and the covariance matrices are uniformly well conditioned. Assume that the plug-in estimators are projected so that, almost surely,
\begin{equation*}
    \widehat\vartheta\in\Theta, \qquad \|\widehat\beta\|_1\leq B_1, \qquad \widehat\sigma^2 \in[\sigma_{\min}^2,\sigma_{\max}^2].
\end{equation*}
Assume further that, for deterministic scales $a_{\ell,p}\geq1$,
\begin{equation*}
    \|\partial_{\vartheta_\ell}\Sigma_p(\vartheta)\|_{\op}\leq C, \qquad \|\partial_{\vartheta_\ell}\Sigma_p(\vartheta)\|_{\mathrm F}^2 \leq Ca_{\ell,p},
\end{equation*}
uniformly along parameter line segments, with analogous bounds for the required second derivatives. If
\begin{equation*}
    \mathbb E |\widehat\vartheta_\ell-\vartheta_\ell|^4 \leq \frac{C}{n^2a_{\ell,p}^2},
\end{equation*}
and $\widehat\beta,\widehat\sigma^2$ satisfy \eqref{eq:structured-beta-moments}- \eqref{eq:structured-sigma-moment}, then the plug-in joint score satisfies
\begin{align}
    &\mathbb E \int_0^{T_n} \mathbb E \|\widehat s_t(Z_t)-s_t(Z_t)\|_2^4\,\mathrm dt \leq C\left(\frac{\log p}{n}+\frac1{n^2}\right), \label{eq:gaussian-structured-score-risk}\\ &\sup_{0\leq t\leq T_n}\Lip(\widehat s_t-s_t)\to_p 0. \label{eq:gaussian-structured-lipschitz}
\end{align}
\end{mylemma}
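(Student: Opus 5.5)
The plan is to reduce the joint score error to covariance-matrix perturbations, exactly as in the proof of Theorem~\ref{thm:gaussian-linear-score-upper-unknown-var}. Write $\Sigma=\Sigma_p(\vartheta)$. The joint law of $Z=(X,Y)$ is $N(0,A)$ with
\begin{equation*}
A=\begin{pmatrix}\Sigma & \Sigma\beta\\ \beta^\top\Sigma & \beta^\top\Sigma\beta+\sigma^2\end{pmatrix},
\end{equation*}
and along the OU flow $A_t=I+e^{-2t}(A-I)$ with score $s_t(z)=-A_t^{-1}z$. The plug-in estimator is $\widehat s_t(z)=-\widehat A_t^{-1}z$. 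Uniform conditioning of $\Sigma_p(\vartheta)$ on $\Theta$, the bound $\|\beta\|_2\le\|\beta\|_1\le B_1$, and $\sigma^2\in[\sigma_{\min}^2,\sigma_{\max}^2]$ give eigenvalue bounds on $A_t$ and $\widehat A_t$ that are uniform in $t$. These also hold for the projected estimates. Hence
\begin{equation*}
H_t=\widehat A_t^{-1}-A_t^{-1}=\widehat A_t^{-1}(A_t-\widehat A_t)A_t^{-1},
\end{equation*}
and since $Z_t$ is Gaussian with bounded covariance,
\begin{equation*}
\mathbb E\|H_tZ_t\|_2^4\le C\|H_t\|_F^4\le Ce^{-8t}\|\widehat A-A\|_F^4 .
\end{equation*}
Integrating $e^{-8t}$ over $[0,T_n]$ gives a constant, so both claims reduce to bounding $\mathbb E\|\widehat A-A\|_F^4$ and showing $\|\widehat A-A\|_{\op}\to_p0$. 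The latter settles \eqref{eq:gaussian-structured-lipschitz}, because $\Lip(e_t)=\|H_t\|_{\op}\le C\|\widehat A-A\|_{\op}$.

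Next split $\widehat A-A$ into three blocks and bound each Frobenius norm.

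The $X$-block is $\widehat\Sigma-\Sigma$. By the mean value theorem along the segment from $\vartheta$ to $\widehat\vartheta$, which lies in the convex compact $\Theta$, we have $\|\widehat\Sigma-\Sigma\|_F\le\sum_\ell|\widehat\vartheta_\ell-\vartheta_\ell|\sup\|\partial_\ell\Sigma\|_F$. Therefore
\begin{equation*}
\mathbb E\|\widehat\Sigma-\Sigma\|_F^4\le C\sum_\ell a_{\ell,p}^2\,\mathbb E|\widehat\vartheta_\ell-\vartheta_\ell|^4\le Cn^{-2}.
\end{equation*}
This is exactly where the scale-matching hypothesis enters, and the $1/n^2$ term comes from it.

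The cross block is $\widehat\Sigma\widehat\beta-\Sigma\beta=(\widehat\Sigma-\Sigma)\widehat\beta+\Sigma(\widehat\beta-\beta)$. Its Euclidean norm is at most
\begin{equation*}
\|\widehat\Sigma-\Sigma\|_{\op}B_1+C\|\widehat\beta-\beta\|_2 .
\end{equation*}
The operator-norm derivative bound gives $\|\widehat\Sigma-\Sigma\|_{\op}\le C|\widehat\vartheta-\vartheta|$, and $a_{\ell,p}\ge1$ makes its fourth moment $O(n^{-2})$. Combined with \eqref{eq:structured-beta-moments}, the cross block has fourth moment $O(\log p/n+n^{-2})$. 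This block enters the Frobenius norm twice, once as a column and once as a row, which only changes constants.

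The scalar corner is $\widehat\beta^\top\widehat\Sigma\widehat\beta-\beta^\top\Sigma\beta+\widehat\sigma^2-\sigma^2$. It is bounded by
\begin{equation*}
C\bigl(\|\widehat\beta-\beta\|_2+\|\widehat\Sigma-\Sigma\|_{\op}+|\widehat\sigma^2-\sigma^2|\bigr).
\end{equation*}
Its fourth moment is controlled by \eqref{eq:structured-beta-moments} and \eqref{eq:structured-sigma-moment}, together with $(\log p/n)^2\le\log p/n$.

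Summing the three blocks gives \eqref{eq:gaussian-structured-score-risk}. For the Lipschitz claim, each of $\|\widehat\beta-\beta\|_2$, $|\widehat\sigma^2-\sigma^2|$ and $\|\widehat\Sigma-\Sigma\|_{\op}$ has vanishing fourth moment, so Markov's inequality gives $\|\widehat A-A\|_{\op}=o_p(1)$. The bound holds uniformly in $t$ because of the factor $e^{-2t}\le1$.

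The main obstacle is the $X$-block in Frobenius norm. Unlike the rank-two perturbation in the identity-covariance case, $\widehat\Sigma-\Sigma$ has full rank, so $\|\cdot\|_F$ can be of order $\sqrt p\,\|\cdot\|_{\op}$. One must therefore use the Frobenius derivative scales $a_{\ell,p}$ rather than operator bounds, and the moment hypothesis on $\widehat\vartheta$ is exactly what cancels $a_{\ell,p}^2$. A related subtlety is that $\mathbb E\|H_tZ_t\|^4$ really needs $\|H_t\|_F$ and not $\|H_t\|_{\op}$. I would verify this with the Gaussian identity $\mathbb E\|MG\|^4\le3(\tr MM^\top)^2$, and check that $\|\widehat A_t^{-1}\|_{\op}$ and $\|A_t^{-1}\|_{\op}$ pass through the Frobenius norm via $\|BCD\|_F\le\|B\|_{\op}\|C\|_F\|D\|_{\op}$. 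The second-derivative bounds are not needed for this route; a first-order mean value bound suffices, and I would mention this.
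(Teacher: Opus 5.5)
Your proposal is correct and follows essentially the paper's route: write the joint OU score as $-\Gamma_t^{-1}z$, bound $\|\widehat\Gamma_t-\Gamma_t\|_{\mathrm F}$ by the parameter errors (with weights $a_{\ell,p}^{1/2}$ on the structural coordinates) and $\|\widehat\Gamma_t-\Gamma_t\|$ in operator norm without those weights, transfer these bounds to $H_t$ via the inverse perturbation identity and the Gaussian bound $\mathbb E\|H_tZ_t\|_2^4\le C\|H_t\|_{\mathrm F}^4$, and integrate $e^{-8t}$. Your three-block decomposition simply unpacks the paper's line-segment bound on the full joint covariance, and you are right that only first derivatives are used; convexity of $\Theta$ is also unnecessary, because the derivative bounds are assumed uniformly along parameter line segments.
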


\begin{proof}
The covariance of $Z=(X,Y)$ is
\begin{equation*}
    \Gamma_{\vartheta,\beta,\sigma^2} = M_\beta
    \begin{pmatrix}
        \Sigma_p(\vartheta)&0\\ 0&\sigma^2
    \end{pmatrix}
    M_\beta^\top, \qquad M_\beta=
    \begin{pmatrix}
        I_p&0\\ \beta^\top&1
    \end{pmatrix}.
\end{equation*}
The parameter restrictions give uniform upper and lower spectral bounds for $\Gamma$. Its OU covariance and score are
\begin{equation*}
    \Gamma_t=I_{p+1}+e^{-2t}(\Gamma-I_{p+1}), \qquad s_t(z)=-\Gamma_t^{-1}z.
\end{equation*}
Along a line segment between the true and plug-in parameters, the covariance derivative bounds and $\|\beta\|_2\leq B_1$ give
\begin{equation*}
    \begin{aligned}
        \|\widehat\Gamma_t-\Gamma_t\|_{\mathrm F} \leq Ce^{-2t} \left\{ \sum_{\ell=1}^q a_{\ell,p}^{1/2} |\widehat\vartheta_\ell-\vartheta_\ell| +\|\widehat\beta-\beta\|_2 +|\widehat\sigma^2-\sigma^2| \right\}.
    \end{aligned}
\end{equation*}
The same inequality in operator norm holds without the factors $a_{\ell,p}^{1/2}$. The inverse perturbation identity transfers these bounds to $H_t=\widehat\Gamma_t^{-1}-\Gamma_t^{-1}$.

For $Z_t\sim N(0,\Gamma_t)$, the Gaussian fourth-moment formula gives
\begin{equation*}
    \begin{aligned}
        \mathbb E\|H_tZ_t\|_2^4 = \{\tr(H_t\Gamma_tH_t^\top)\}^2 +2\tr \{(H_t\Gamma_tH_t^\top)^2\},
    \end{aligned}
\end{equation*}
and hence
\begin{equation*}
    \mathbb E\|H_tZ_t\|_2^4\leq C\|H_t\|_{\mathrm F}^4.
\end{equation*}
Integrating $e^{-8t}$, using that $q$ is fixed, and taking the outer expectation yield
\begin{equation*}
    C\sum_{\ell=1}^q a_{\ell,p}^2 \mathbb E|\widehat\vartheta_\ell-\vartheta_\ell|^4 +C\mathbb E\|\widehat\beta-\beta\|_2^4 +C\mathbb E|\widehat\sigma^2-\sigma^2|^4,
\end{equation*}
which proves~\eqref{eq:gaussian-structured-score-risk}. The operator-norm version, parameter consistency, and the linearity of the score prove
\eqref{eq:gaussian-structured-lipschitz}.
\end{proof}

\begin{mylemma}
\label{lem:ou-hessian-curvature-perturbation}
Let $U_{\mathrm G}(z)=z^\top Hz/2$, where
\begin{equation*}
    c_0I_d\preceq H\preceq C_0I_d,
\end{equation*}
and let $U=U_{\mathrm G}+R$ with
\begin{equation*}
    \sup_{z\in\mathbb R^d}\|\nabla^2R(z)\|_{\op} \leq\delta<\frac{c_0}{2}.
\end{equation*}
Let $s_t$ and $s_{\mathrm G,t}$ be the scores of the standard OU evolutions started from the densities proportional to $e^{-U}$ and $e^{-U_{\mathrm G}}$, respectively. Then
\begin{equation}
    \sup_{t\geq0}\Lip(s_t-s_{\mathrm G,t}) \leq C\delta, \label{eq:ou-hessian-curvature-perturbation}
\end{equation}
where $C$ depends only on $c_0$ and $C_0$ and is independent of $d$.
\end{mylemma}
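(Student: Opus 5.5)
We work at the level of Hessians of the OU potentials, since $\Lip(s_t-s_{\mathrm G,t})=\sup_z\|\nabla^2U_t(z)-\nabla^2U_{\mathrm G,t}(z)\|_{\op}$. As in the proof of Lemma~\ref{lem:hd-general-curvature}, write $Z_t=\alpha_tZ_0+\tau_t\xi$ with $\alpha_t=e^{-t}$ and $\tau_t^2=1-e^{-2t}$. The second-order Tweedie formula gives
\begin{equation*}
    \nabla^2U_t(z)=\tau_t^{-2}I_d-\alpha_t^2\tau_t^{-4}\Cov_{\pi_{t,z}}(Z_0),
\end{equation*}
where $\pi_{t,z}$ has potential $U(x)+\|z-\alpha_tx\|^2/(2\tau_t^2)$. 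The Gaussian analogue $\pi^{\mathrm G}_{t,z}$ has potential $U_{\mathrm G}(x)+\|z-\alpha_tx\|^2/(2\tau_t^2)$. It is Gaussian, and its covariance $\Sigma^{\mathrm G}_t=(H+\alpha_t^2\tau_t^{-2}I)^{-1}$ does not depend on $z$. The whole problem therefore reduces to the bound
\begin{equation*}
    \sup_z\|\Cov_{\pi_{t,z}}-\Sigma^{\mathrm G}_t\|_{\op}\leq C\delta\,\alpha_t^{-2}\tau_t^{4}\min\{1,\cdot\},
\end{equation*}
uniformly in $t$ and $d$. For small $t$ we can instead compare Hessians directly: at $t=0$ we have $\nabla^2U-H=\nabla^2R$, which is bounded by $\delta$.

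The key structural fact is that the posterior Hessian equals $K+\nabla^2R$, where $K:=H+\alpha_t^2\tau_t^{-2}I$. This is a $\delta$-perturbation of a constant positive-definite matrix $K\succeq (c_0+\alpha_t^2\tau_t^{-2})I$, and $\delta<c_0/2$ keeps $\pi_{t,z}$ uniformly strongly log-concave with parameter $\lambda_t:=\lambda_{\min}(K)-\delta\geq\lambda_{\min}(K)/2$. The central step is a dimension-free covariance comparison. Brascamp--Lieb gives the upper bound
\begin{equation*}
    \Cov_{\pi}\preceq \mathbb E_\pi[(K+\nabla^2R)^{-1}]\preceq (K-\delta I)^{-1}.
\end{equation*}
The Cram\'er--Rao inequality, as used in Lemma~\ref{lem:hd-general-curvature}, gives the lower bound
\begin{equation*}
    \Cov_\pi\succeq(\mathbb E_\pi[K+\nabla^2R])^{-1}\succeq(K+\delta I)^{-1}.
\end{equation*}
Hence $\|\Cov_\pi-K^{-1}\|_{\op}\leq \|(K-\delta)^{-1}-(K+\delta)^{-1}\|_{\op}\leq 2\delta\lambda_{\min}(K-\delta)^{-2}\leq 8\delta\lambda_{\min}(K)^{-2}$. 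This bound holds in every direction and uses neither $d$ nor the particular point $z$.

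Substituting into the Tweedie formula yields
\begin{equation*}
    \|\nabla^2U_t-\nabla^2U_{\mathrm G,t}\|_{\op}\leq 8\delta\,\alpha_t^2\tau_t^{-4}\,(c_0+\alpha_t^2\tau_t^{-2})^{-2}=\frac{8\delta\alpha_t^2}{(c_0\tau_t^2+\alpha_t^2)^2}\leq \frac{8\delta}{\min\{c_0,1\}^2},
\end{equation*}
using $\alpha_t^2\leq1$ and $c_0\tau_t^2+\alpha_t^2\geq\min\{c_0,1\}$. At $t=0$ the bound is immediate. The resulting constant depends only on $c_0$ ($C_0$ enters only through well-posedness), which proves \eqref{eq:ou-hessian-curvature-perturbation}.

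The main obstacle is the two-sided covariance sandwich. Brascamp--Lieb must be applied with the matrix-valued form $\Cov_\pi(f)\le\mathbb E[\nabla f^\top(\nabla^2V)^{-1}\nabla f]$ for linear $f$, and we need $\nabla^2 R$ to be measurable and bounded so that $(K+\nabla^2R)^{-1}\preceq(K-\delta)^{-1}$ holds pointwise. The matrix Cram\'er--Rao step needs $\mathbb E_\pi\nabla^2V$ expressed via integration by parts, which holds because strong log-concavity gives Gaussian tails. If only $C^2$ regularity of $R$ is available, we would first mollify $R$ and pass to the limit, since the bounds are stable under mollification.
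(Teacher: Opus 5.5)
Your proof is correct and follows essentially the same route as the paper: the second-order Tweedie formula, the posterior-Hessian sandwich $K\pm\delta I$, and Brascamp--Lieb and Cram\'er--Rao for the two-sided covariance bound. Your time-factor estimate is also the same, since $\alpha_t^2/(c_0\tau_t^2+\alpha_t^2)^2$ equals the paper's $q_t(1+q_t)/(c_0+q_t)^2$. The only blemish is the garbled intermediate display containing $\min\{1,\cdot\}$ in your first paragraph, which your actual computation never uses.
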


\begin{proof}
At $t=0$, the conclusion follows directly from
\begin{equation*}
    \nabla s_0-\nabla s_{\mathrm G,0}=-\nabla^2R.
\end{equation*}
Fix $t>0$, write $\alpha_t=e^{-t}$,
$\omega_t^2=1-e^{-2t}$, and set
\begin{equation*}
    q_t=\frac{\alpha_t^2}{\omega_t^2}, \qquad A_t=H+q_tI_d.
\end{equation*}
Conditionally on $Z_t=z$, the posterior law of $Z_0$ under the perturbed model has potential
\begin{equation*}
    W_{t,z}(u) = U(u)+\frac{\|z-\alpha_tu\|_2^2}{2\omega_t^2}.
\end{equation*}
Consequently,
\begin{equation}
    A_t-\delta I_d \preceq \nabla^2W_{t,z}(u) \preceq A_t+\delta I_d. \label{eq:posterior-curvature-sandwich}
\end{equation}
Let $K_{t,z}$ be the posterior covariance of $Z_0$. The Brascamp--Lieb inequality and the lower bound in \eqref{eq:posterior-curvature-sandwich} give
\begin{equation*}
    K_{t,z}\preceq(A_t-\delta I_d)^{-1}.
\end{equation*}
For the reverse inequality, the matrix Cram\'er--Rao information inequality and integration by parts give
\begin{equation*}
    K_{t,z} \succeq \left\{ \mathbb E(\nabla^2W_{t,z}(Z_0)\mid Z_t=z) \right\}^{-1} \succeq (A_t+\delta I_d)^{-1}.
\end{equation*}
The Gaussian posterior covariance is exactly $A_t^{-1}$. The resolvent identity therefore yields
\begin{equation}
    \|K_{t,z}-A_t^{-1}\|_{\op} \leq \frac{C\delta}{(c_0+q_t)^2}, \label{eq:posterior-covariance-curvature-bound}
\end{equation}
uniformly in $z$ and $d$.

The Gaussian-channel Hessian identity is
\begin{equation*}
    \nabla s_t(z) = -\frac1{\omega_t^2}I_d + \frac{\alpha_t^2}{\omega_t^4}K_{t,z}.
\end{equation*}
The corresponding Gaussian identity has $K_{t,z}$ replaced by $A_t^{-1}$. Since $\alpha_t^2+\omega_t^2=1$,
\begin{equation*}
    \frac{\alpha_t^2}{\omega_t^4} =q_t(1+q_t), \qquad \sup_{q\geq0} \frac{q(1+q)}{(c_0+q)^2}<\infty.
\end{equation*}
Combining this observation with \eqref{eq:posterior-covariance-curvature-bound} proves \eqref{eq:ou-hessian-curvature-perturbation}.
\end{proof}

\begin{mylemma}
\label{lem:product-ou-score-transfer}
Under~\eqref{eq:product-exponential-family}--\eqref{eq:product-exponential-information}, uniformly over true and candidate parameters on the compact parameter set,
\begin{align}
    \int_0^\infty \mathbb E_{\theta,\beta,\sigma^2} \|s_{\theta',\beta',\tau^2,t}(Z_t) -s_{\theta,\beta,\sigma^2,t}(Z_t)\|_2^4\,\mathrm dt &\leq C\left\{ p^2\|\theta'-\theta\|_2^4 +\|\beta'-\beta\|_2^4 +|\tau^2-\sigma^2|^4 \right\}. \label{eq:product-joint-score-transfer}
\end{align}
In addition, let $s_{\mathrm G,\beta,\sigma^2,t}$ denote the OU score of the Gaussian linear model with $X\sim N(0,I_p)$ and the same regression and noise parameters. There are constants $\delta_0>0$ and $C_{\mathrm{Lip}}<\infty$ such that, whenever $\delta_{\mathrm{NG}}\leq\delta_0$,
\begin{equation}
    \sup_{t\geq0} \Lip\{s_{\theta,\beta,\sigma^2,t} -s_{\mathrm G,\beta,\sigma^2,t}\} \leq C_{\mathrm{Lip}}\delta_{\mathrm{NG}}. \label{eq:product-to-gaussian-lipschitz}
\end{equation}
Consequently,
\begin{align}
    \sup_{t\geq0} \Lip\{s_{\theta',\beta',\tau^2,t} -s_{\theta,\beta,\sigma^2,t}\} &\leq 2C_{\mathrm{Lip}}\delta_{\mathrm{NG}} +C\left( \|\beta'-\beta\|_2 +|\tau^2-\sigma^2| \right). \label{eq:product-score-lipschitz-transfer}
\end{align}
\end{mylemma}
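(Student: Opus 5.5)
The plan is to prove the three displays in order, (\ref{eq:product-to-gaussian-lipschitz}) first because it is the most structural. Write the joint potential of $Z=(X,Y)$ under $(\theta,\beta,\sigma^2)$ as $U=U_{\mathrm G}+R$, where
\begin{equation*}
U_{\mathrm G}(x,y)=\tfrac12\|x\|_2^2+\frac{(y-x^\top\beta)^2}{2\sigma^2},\qquad R(x,y)=\sum_{j=1}^p\sum_{\ell=1}^q\theta_\ell\phi_\ell(x_j)+\text{const}.
\end{equation*}

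\textbf{Step 1: the bound (\ref{eq:product-to-gaussian-lipschitz}).} The Hessian of $U_{\mathrm G}$ is a fixed matrix $H$. Exactly as in the proof of Lemma~\ref{lem:hd-general-curvature}, it satisfies $c_0I\preceq H\preceq C_0I$ with $c_0,C_0$ depending only on $B_1,\sigma_{\min},\sigma_{\max}$; note that $\|\beta\|_2\le\|\beta\|_1\le B_1$. The Hessian $\nabla^2R$ is diagonal in the $x$-block with entries $\sum_\ell\theta_\ell\phi_\ell''(x_j)$, so $\sup_z\|\nabla^2R(z)\|_{\op}\le\delta_{\mathrm{NG}}$. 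Take $\delta_0=c_0/4$. Then Lemma~\ref{lem:ou-hessian-curvature-perturbation} applies directly and gives (\ref{eq:product-to-gaussian-lipschitz}) with a dimension-free constant $C_{\mathrm{Lip}}$.

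\textbf{Step 2: the bound (\ref{eq:product-score-lipschitz-transfer}).} Decompose
\begin{equation*}
s'-s=(s'-s'_{\mathrm G})+(s'_{\mathrm G}-s_{\mathrm G})+(s_{\mathrm G}-s).
\end{equation*}
Here $\theta'\in\Theta$, so Step~1 bounds the first and third terms by $C_{\mathrm{Lip}}\delta_{\mathrm{NG}}$ each. The middle term is the linear Gaussian score difference $-H_tz$ from the proof of Theorem~\ref{thm:gaussian-linear-score-upper-unknown-var}. Its Lipschitz constant is $\|H_t\|_{\op}\le C(\|\beta'-\beta\|_2+|\tau^2-\sigma^2|)$.

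\textbf{Step 3: the bound (\ref{eq:product-joint-score-transfer}).} Interpolate linearly $\eta_v=(\theta_v,\beta_v,\sigma_v^2)$, $v\in[0,1]$, which stays in the convex parameter set. Differentiate the Tweedie representation
\begin{equation*}
s_{\eta,t}(z)=\omega_t^{-2}\bigl(\alpha_t\,\mathbb E_\eta[Z_0\mid Z_t=z]-z\bigr)
\end{equation*}
in $v$ to get
\begin{equation*}
\partial_vs_{\eta_v,t}(z)=-\frac{\alpha_t}{\omega_t^2}\Cov_{\eta_v}\bigl(Z_0,\ g_v(Z_0)\mid Z_t=z\bigr),\qquad g_v=\partial_vU_{\eta_v}.
\end{equation*}
The normalizing constant drops out of the covariance. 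The posterior potential has curvature at least $c+q_t$, where $q_t=\alpha_t^2/\omega_t^2$ and $c>0$ is uniform by Step~1. Cauchy--Schwarz, Brascamp--Lieb and the posterior Poincar\'e inequality then give
\begin{equation*}
\|\partial_vs_{\eta_v,t}(z)\|_2\le\frac{\alpha_t}{\omega_t^2(c+q_t)}\Bigl(\mathbb E_{\eta_v}\bigl[\|\nabla g_v(Z_0)\|_2^2\mid Z_t=z\bigr]\Bigr)^{1/2}.
\end{equation*}
The prefactor equals $\alpha_t/(c\,\omega_t^2+\alpha_t^2)\le Ce^{-t}$, so its fourth power is integrable on $[0,\infty)$ uniformly in $d$. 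Split $g_v$ by parameter direction.
\begin{itemize}
\item The $\theta$-part is $\sum_j\Delta\theta\cdot\phi(x_j)$. Since the $\phi_\ell'$ are bounded, its gradient has squared norm at most $Cp\|\theta'-\theta\|_2^2$. This produces the term $p^2\|\theta'-\theta\|_2^4$ after taking fourth powers.
\item The $(\beta,\sigma^2)$-part is $-(y-x^\top\beta_v)\,x^\top\Delta\beta/\sigma_v^2$ plus $\Delta\sigma^2$ times a quadratic in the residual. Its gradient is bounded by
\begin{equation*}
C\bigl(\|\Delta\beta\|_2+|\Delta\sigma^2|\bigr)\bigl(1+|y-x^\top\beta_v|+|x^\top\Delta\beta|/\|\Delta\beta\|_2\bigr),
\end{equation*}
which involves only one-dimensional projections.
\end{itemize}
After Minkowski in $v$ and Jensen, the fourth moment of the $(\beta,\sigma^2)$-part reduces to bounding
\begin{equation*}
\mathbb E_{\theta,\beta,\sigma^2}\Bigl[\mathbb E_{\eta_v}\bigl[(u^\top Z_0)^4\mid Z_t\bigr]\Bigr]
\end{equation*}
uniformly in unit $u$, $v$ and $t$. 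This would give the $\|\beta'-\beta\|_2^4+|\tau^2-\sigma^2|^4$ terms.

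\textbf{Main obstacle.} The expected difficulty is this last dimension-free moment bound, because the posterior is taken at the interpolated parameter while $Z_t$ is drawn from the true law. I would first try to show that under $\eta_v$ the posterior is $(c+q_t)$-strongly log-concave and its mean is uniformly Lipschitz in $z$, by Step~1. That would give
\begin{equation*}
\mathbb E_{\eta_v}\bigl[(u^\top Z_0)^4\mid Z_t=z\bigr]\le C\bigl(1+|u^\top m_{\eta_v,t}(0)|^4+\|z\|_{\mathrm{dir}}^4\bigr),
\end{equation*}
where $\|z\|_{\mathrm{dir}}$ is controlled by sub-Gaussian directional moments of $Z_t$ under the true law. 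If the Lipschitz posterior-mean map does not give directional control cleanly, the fallback is to work with $\nabla g_v$ only through the two fixed directions $\Delta\beta/\|\Delta\beta\|_2$ and $(-\beta_v,1)$. These directions span the entire $(\beta,\sigma^2)$-dependence, so only two scalar projections of $Z_0$ would need moment control.
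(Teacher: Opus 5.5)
Your Steps~1 and~2, and the reduction in Step~3 to a bound on $\mathbb E_{\eta_0}\mathbb E_{\eta_v}[(u^\top Z_0)^4\mid Z_t]$, match the paper's proof. The gap is exactly at the cross-endpoint bound you flag, and neither proposed resolution closes it.

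Uniform Lipschitz continuity of $m_{\eta_v,t}$ does not give the pointwise bound $C(1+|u^\top m_{\eta_v,t}(0)|^4+\|z\|_{\mathrm{dir}}^4)$. Since $\nabla m_{\eta_v,t}(z)^\top u$ varies with $z$, the only pointwise consequence is $|u^\top\{m_{\eta_v,t}(z)-m_{\eta_v,t}(0)\}|\le C\|z\|_2$, and $\mathbb E\|Z_t\|_2^4\asymp d^2$. Sub-Gaussian moments of linear projections of $Z_t$ do not control the nonlinear functional $u^\top m_{\eta_v,t}(Z_t)$. The fallback to the two directions $\Delta\beta/\|\Delta\beta\|_2$ and $(-\beta_v,1)$ does not help either. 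The difficulty is not the number of directions but the mismatch between the posterior parameter $\eta_v$ and the law $p_{\eta_0,t}$ of $Z_t$, and each of the two directions needs exactly the same cross-endpoint estimate.

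Two ingredients are missing. First, view $F_u(z)=u^\top m_{\eta_v,t}(z)$ as a scalar function with dimension-free Lipschitz constant at most $\alpha_t/(m_0\omega_t^2+\alpha_t^2)$. This follows from $\nabla_z m_{\eta_v,t}(z)=\alpha_t\omega_t^{-2}\Cov_{\eta_v}(Z_0\mid Z_t=z)$ and Brascamp--Lieb. The \emph{true} OU marginal $p_{\eta_0,t}$ has curvature at least $\min\{m_0,1\}$, hence a dimension-free log-Sobolev inequality, so Herbst gives sub-Gaussian fluctuations of $F_u(Z_t)$ about its $p_{\eta_0,t}$-mean.

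Second, that mean must be controlled, and directly comparing $p_{\eta_0,t}$ with $p_{\eta_v,t}$ is not dimension-free, because the $\theta$-perturbation acts on all $p$ coordinates. The paper uses central symmetry. Evenness of the $\phi_\ell$ and Gaussian noise make every admissible joint law invariant under $z\mapsto-z$, so $m_{\eta_v,t}$ is odd and $p_{\eta_0,t}$ is symmetric. Hence $\mathbb E_{\eta_0}F_u(Z_t)=0$, and also $m_{\eta_v,t}(0)=0$. Combine this with posterior Herbst for $u^\top\{Z_0-m_{\eta_v,t}(z)\}$, and handle $t=0$ directly since the posterior is then a point mass. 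This gives the required uniform fourth-moment bound.

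A smaller point: in Step~3, take the uniform posterior curvature from \eqref{eq:product-exponential-curvature} rather than from Step~1. Step~1 only yields curvature $c_0-\delta_{\mathrm{NG}}$, which would impose a smallness condition on $\delta_{\mathrm{NG}}$ that \eqref{eq:product-joint-score-transfer} does not need.
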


\begin{proof}
For a scalar parameter direction $h$, let $T_h(Z_0)=\partial_h\log f(Z_0)$ be the initial log-density tangent. With $\alpha_t=e^{-t}$ and $\omega_t^2=1-e^{-2t}$, Gaussian-channel differentiation gives
\begin{equation}
    \partial_hs_t(z) = \frac{\alpha_t}{\omega_t^2} \Cov\{T_h(Z_0),Z_0\mid Z_t=z\}. \label{eq:gaussian-channel-tangent}
\end{equation}
Uniform strong convexity, compactness of the parameter sets, and the bounded parameter derivatives provide an integrable dominating envelope. They justify differentiation under the Gaussian convolution integral and the fundamental theorem of calculus along each parameter path used below. For some constant $m_0>0$ uniform over the admissible parameter set, the
conditional potential has curvature at least
\begin{equation*}
    m_0I_{p+1}+\frac{\alpha_t^2}{\omega_t^2}I_{p+1}.
\end{equation*}
The vector Brascamp--Lieb covariance inequality therefore implies
\begin{equation*}
    \|\partial_hs_t(z)\|_2 \leq \frac{\alpha_t}{m_0\omega_t^2+\alpha_t^2} \left\{ \mathbb E[\|\nabla T_h(Z_0)\|_2^2\mid Z_t=z] \right\}^{1/2}.
\end{equation*}
The time coefficient is bounded near zero and is $O(e^{-t})$ at infinity.

For a structural direction $h\in\mathbb R^q$,
\begin{equation*}
    T_h(x,y) = -\sum_{j=1}^p\sum_{\ell=1}^q h_\ell\phi_\ell(x_j)+\text{constant},
\end{equation*}
so $\|\nabla T_h\|_2\leq C\sqrt p\|h\|_2$. Hence the integrated fourth moment of $\partial_hs_t(Z_t)$ is at most $Cp^2\|h\|_2^4$. For a regression direction $g\in\mathbb R^p$,
\begin{equation*}
    T_{\beta,g}(x,y) = \frac{(y-x^\top\beta)(x^\top g)}{\sigma^2},
\end{equation*}
and strong log-concavity gives $\mathbb E\|\nabla T_{\beta,g}(X,Y)\|_2^4\leq C\|g\|_2^4$. The variance tangent is a centered quadratic function of $y-x^\top\beta$ and obeys the analogous dimension-free bound.

The conditional expectations above may be evaluated at an intermediate parameter while $Z_t$ is distributed under the true parameter. We next give the cross-endpoint argument explicitly. Write $\eta_u=(1-u)\eta_0+u\eta_1$ and, for $t>0$, let $\nu_{\eta_u,t,z}$ be the conditional law of $Z_0$ given $Z_t=z$ under $\eta_u$. Its potential is
\begin{equation*}
    U_{\eta_u}(v) + \frac{\|z-\alpha_tv\|_2^2}{2\omega_t^2},
\end{equation*}
and hence has curvature at least $m_0+\alpha_t^2/\omega_t^2$, uniformly in $u$, $t$, and $z$. Denote its mean by
\begin{equation*}
    m_{u,t}(z) = \mathbb E_{\eta_u}[Z_0\mid Z_t=z].
\end{equation*}
Gaussian-channel differentiation and the Brascamp--Lieb inequality give
\begin{equation*}
    \nabla_zm_{u,t}(z) = \frac{\alpha_t}{\omega_t^2} \Cov_{\eta_u}(Z_0\mid Z_t=z), \qquad \Lip(m_{u,t}) \leq \frac{\alpha_t}{m_0\omega_t^2+\alpha_t^2} \leq C.
\end{equation*}
Every admissible joint law is centrally symmetric. Thus $m_{u,t}$ is odd, and the true endpoint law $p_{\eta_0,t}$ is centrally symmetric. Moreover, the strong-log-concavity preservation formula gives the lower curvature bound
\begin{equation*}
    m_t = \left\{ \frac{e^{-2t}}{m_0}+1-e^{-2t} \right\}^{-1} \geq \min\{m_0,1\}.
\end{equation*}
Therefore, for every fixed $k\geq1$ and every $v\in\mathbb R^{p+1}$, the Herbst bound gives
\begin{equation}
    \sup_{t>0}\sup_{u\in[0,1]} \int |v^\top m_{u,t}(z)|^k p_{\eta_0,t}(z)\,\mathrm dz \leq C_k\|v\|_2^k. \label{eq:cross-endpoint-posterior-mean}
\end{equation}
Conditionally on $Z_t=z$, the same Herbst bound applied to the uniformly strongly log-concave posterior $\nu_{\eta_u,t,z}$ yields
\begin{equation*}
    \mathbb E_{\eta_u} \left[ |v^\top\{Z_0-m_{u,t}(z)\}|^k \mid Z_t=z \right] \leq C_k\|v\|_2^k.
\end{equation*}
Combining this estimate with \eqref{eq:cross-endpoint-posterior-mean} proves the bound for $t>0$. At $t=0$, the conditional law is the point mass at $z$, and the same estimate follows directly from the uniform strong log-concavity of the true initial law. Thus
\begin{equation}
    \sup_{t\geq0}\sup_{u\in[0,1]} \int \mathbb E_{\eta_u} [|v^\top Z_0|^k\mid Z_t=z] p_{\eta_0,t}(z)\,\mathrm dz \leq C_k\|v\|_2^k. \label{eq:cross-endpoint-projection-moment}
\end{equation}

The structural gradient is deterministically bounded by $C\sqrt p\|h\|_2$. For the regression tangent, $\|\nabla T_{\beta,g}(x,y)\|_2^2$ is a polynomial of degree two in the two projections $x^\top g$ and $y-x^\top\beta$, with coefficients uniformly bounded on the parameter set. Its square is therefore controlled by \eqref{eq:cross-endpoint-projection-moment} with $k=4$. The variance tangent is handled in exactly the same way. Consequently,
\begin{align*}
    \sup_{t\geq0} \int \left\{ \mathbb E_{\eta_u} [\|\nabla T_h(Z_0)\|_2^2\mid Z_t=z] \right\}^2 p_{\eta_0,t}(z)\,\mathrm dz &\leq Cp^2\|h\|_2^4,\\ \sup_{t\geq0} \int \left\{ \mathbb E_{\eta_u} [\|\nabla T_{\beta,g}(Z_0)\|_2^2\mid Z_t=z] \right\}^2 p_{\eta_0,t}(z)\,\mathrm dz &\leq C\|g\|_2^4,
\end{align*}
uniformly for $u\in[0,1]$, and the variance tangent satisfies the same dimension-free estimate.

Apply the fundamental theorem of calculus along the line segment joining the true and candidate parameters. Compactness makes the preceding estimates uniform when the derivative is evaluated at an intermediate parameter and the OU trajectory is generated at the true endpoint. Minkowski's inequality then proves~\eqref{eq:product-joint-score-transfer}.

It remains to prove the spatial Lipschitz statement. The joint initial potential can be written as
\begin{equation*}
    U_{\theta,\beta,\sigma^2}(x,y) = U_{\mathrm G,\beta,\sigma^2}(x,y) + \sum_{j=1}^p\sum_{\ell=1}^q \theta_\ell\phi_\ell(x_j),
\end{equation*}
where
\begin{equation*}
    U_{\mathrm G,\beta,\sigma^2}(x,y) = \frac{\|x\|_2^2}{2} + \frac{(y-x^\top\beta)^2}{2\sigma^2}
\end{equation*}
is quadratic. Its Hessian has dimension-free upper and lower spectral bounds uniformly over the parameter set. Indeed, with
\begin{equation*}
    M_\beta =
    \begin{pmatrix}
        I_p&0\\ \beta^\top&1
    \end{pmatrix}
    ,
\end{equation*}
the quadratic Hessian equals
\begin{equation*}
    M_\beta^{-\top} \diag(I_p,\sigma^{-2}) M_\beta^{-1},
\end{equation*}
and both $M_\beta$ and $M_\beta^{-1}$ have uniformly bounded operator norm. Moreover,
\begin{equation*}
    \left\| \nabla^2U_{\theta,\beta,\sigma^2} -\nabla^2U_{\mathrm G,\beta,\sigma^2} \right\|_{\op} \leq \delta_{\mathrm{NG}}.
\end{equation*}
Choose $\delta_0$ smaller than half the uniform lower spectral bound of this quadratic Hessian. The preceding curvature-perturbation lemma gives
\eqref{eq:product-to-gaussian-lipschitz}.

For the two Gaussian reference models, the inverse-covariance perturbation argument gives
\begin{equation*}
    \sup_{t\geq0} \Lip\{s_{\mathrm G,\beta',\tau^2,t} -s_{\mathrm G,\beta,\sigma^2,t}\} \leq C\left( \|\beta'-\beta\|_2+|\tau^2-\sigma^2| \right).
\end{equation*}
The triangle inequality now proves \eqref{eq:product-score-lipschitz-transfer}.
\end{proof}

\begin{proof}[Proof of Theorem~\ref{thm:structured-score-upper}]
We verify the three classes separately. All estimators below are projected onto their compact parameter spaces. We use only a fixed number of independent sample splits, each containing a fixed positive fraction of the observations; thus every split has size comparable to $n$.

For the AR(1) model, take
\begin{equation*}
    \widetilde\rho = \frac{1}{n(p-1)} \sum_{i=1}^n\sum_{j=1}^{p-1}X_{ij}X_{i,j+1}.
\end{equation*}
Let $\widehat\rho$ be its projection onto $[-\rho_0,\rho_0]$. Gaussian quadratic-form concentration, uniformly over $|\rho|\leq\rho_0<1$, gives
\begin{equation*}
    \mathbb E|\widehat\rho-\rho|^4\leq\frac{C}{n^2p^2}, \qquad \mathbb P(|\widehat\rho-\rho|>u) \leq C\exp\{-cnp\min(u^2,u)\}.
\end{equation*}
Moreover,
\begin{equation*}
    \|\partial_\rho\Sigma_p(\rho)\|_{\op}\leq C, \qquad \|\partial_\rho\Sigma_p(\rho)\|_{\mathrm F}^2\leq Cp,
\end{equation*}
and the same bounds hold for $\partial_\rho^2\Sigma_p(\rho)$. Indeed, the corresponding row sums are controlled by convergent series of the form $\sum_{k\geq j}k^j\rho_0^{k-j}$, $j\in\{1,2\}$. The AR(1) precision matrix is tridiagonal with uniformly bounded row sums, and the same is true of its derivative on the compact stability region. The covariance itself also satisfies
\begin{equation*}
    \|\Sigma_p(\rho)\|_{\infty\to\infty} \leq 1+2\sum_{k=1}^\infty\rho_0^k = \frac{1+\rho_0}{1-\rho_0}.
\end{equation*}
Consequently,
\begin{equation*}
    \|\Sigma_p(\widehat\rho)^{-1} -\Sigma_p(\rho)^{-1}\|_{\infty\to\infty} \leq C|\widehat\rho-\rho|.
\end{equation*}
For $\lambda_n=A\sqrt{\log p/n}$, the preceding concentration inequality therefore gives, after increasing $A$, for every fixed $K>0$,
\begin{equation*}
    \mathbb P \left( \|\Sigma_p(\widehat\rho)^{-1} -\Sigma_p(\rho)^{-1}\|_{\infty\to\infty} >\lambda_n \right) \leq Cp^{-K}.
\end{equation*}
Thus Lemma~\ref{lem:structured-minimax-nuisance} applies, followed by Lemma~\ref{lem:gaussian-structured-score-transfer} with $a_{\rho,p}=p$.

For the fixed-rank model, estimate
\begin{equation*}
    \lambda_\ell \quad\text{by}\quad \widetilde\lambda_\ell = \frac1n\sum_{i=1}^n(u_{\ell,p}^\top X_i)^2-1, \qquad \ell=1,\ldots,r.
\end{equation*}
Let $\widehat\lambda$ be the Euclidean projection of $\widetilde\lambda$ onto $\Lambda$. Since $\Lambda$ is compact and convex, this projection is unique and non-expansive. The projected observations are Gaussian with uniformly bounded variance, and $r$ is fixed; hence
\begin{equation*}
    \mathbb E\|\widehat\lambda-\lambda\|_2^4 \leq Cn^{-2}, \qquad \mathbb P (\|\widehat\lambda-\lambda\|_2>u) \leq C\exp\{-cn\min(u^2,u)\}.
\end{equation*}
Also,
\begin{equation*}
    \partial_{\lambda_\ell}\Sigma_p =u_{\ell,p}u_{\ell,p}^\top, \qquad \|\partial_{\lambda_\ell}\Sigma_p\|_{\mathrm F}^2=1, \qquad \partial_{\lambda_k}\partial_{\lambda_\ell}\Sigma_p=0.
\end{equation*}
The Woodbury identity gives
\begin{equation*}
    \Sigma_p(\lambda)^{-1} = I_p - U_p\diag \left( \frac{\lambda_1}{1+\lambda_1},\ldots, \frac{\lambda_r}{1+\lambda_r} \right)U_p^\top.
\end{equation*}
Compactness of $\Lambda$ and~\eqref{eq:low-rank-row-sum} therefore imply
\begin{equation*}
    \|\Sigma_p(\lambda)\|_{\infty\to\infty} + \|\Sigma_p(\lambda)^{-1}\|_{\infty\to\infty} \leq C
\end{equation*}
and
\begin{equation*}
    \|\Sigma_p(\widehat\lambda)^{-1} -\Sigma_p(\lambda)^{-1}\|_{\infty\to\infty} \leq C\|\widehat\lambda-\lambda\|_2.
\end{equation*}
Thus, for $\lambda_n=A\sqrt{\log p/n}$ and every fixed $K>0$, increasing $A$ gives
\begin{equation*}
    \mathbb P \left( \|\Sigma_p(\widehat\lambda)^{-1} -\Sigma_p(\lambda)^{-1}\|_{\infty\to\infty} >\lambda_n \right) \leq Cp^{-K}.
\end{equation*}
Apply the preceding two lemmas with $a_{\lambda_\ell,p}=1$.

For the product exponential family, maximize the log-likelihood over $\Theta$. Let $\ell_{np}(\eta)$ be the negative log-likelihood normalized by the number $np$ of scalar coordinates. Its Hessian is
\begin{equation*}
    \nabla^2\ell_{np}(\eta) = \Cov_\eta \bigl\{(\phi_1(X_1),\ldots,\phi_q(X_1))^\top\bigr\} \succeq cI_q
\end{equation*}
by~\eqref{eq:product-exponential-information}. Constrained optimality and strong convexity, including when the true parameter lies on the boundary of $\Theta$, imply
\begin{equation*}
    \|\widehat\theta-\theta\|_2 \leq \frac{2}{c}\|\nabla\ell_{np}(\theta)\|_2.
\end{equation*}
The functions $\phi_\ell$ are uniformly Lipschitz, and the scalar design is uniformly strongly log-concave. Since $q$ is fixed, concentration of the $np$ scalar sufficient statistics gives
\begin{equation*}
    \mathbb E\|\widehat\theta-\theta\|_2^4 \leq \frac{C}{n^2p^2}, \qquad \mathbb P(\|\widehat\theta-\theta\|_2>u) \leq C\exp(-cnpu^2)
\end{equation*}
for $u$ in a fixed neighborhood of zero. Evenness makes the design centered, and its covariance is $v(\theta)I_p$. The Brascamp--Lieb and scalar Cram\'er--Rao inequalities, together with \eqref{eq:product-exponential-curvature}, give
\begin{equation*}
    \frac1L \leq v(\theta) \leq \frac1m.
\end{equation*}
Moreover, $v$ and $v^{-1}$ are uniformly Lipschitz on $\Theta$. Indeed,
\begin{equation*}
    \partial_{\theta_\ell}v(\theta) = -\Cov_\theta \{X_1^2,\phi_\ell(X_1)\},
\end{equation*}
which is uniformly bounded by the uniform moment bounds implied by strong log-concavity and the bounded first derivatives of $\phi_\ell$. Thus,
\begin{equation*}
    \begin{aligned}
        \|\Sigma_{\widehat\theta}^{-1} -\Sigma_\theta^{-1}\|_{\infty\to\infty} &= |v(\widehat\theta)^{-1}-v(\theta)^{-1}|\\ &\leq C\|\widehat\theta-\theta\|_2.
    \end{aligned}
\end{equation*}
For $\lambda_n=A\sqrt{\log p/n}$, proportional growth implies $\lambda_n\to0$, and the preceding tail bound yields, after increasing $A$, for every fixed $K>0$,
\begin{equation*}
    \mathbb P \left( \|\Sigma_{\widehat\theta}^{-1} -\Sigma_\theta^{-1}\|_{\infty\to\infty} >\lambda_n \right) \leq C\exp(-cnp\lambda_n^2) \leq Cp^{-K}.
\end{equation*}
Hence Lemma~\ref{lem:structured-minimax-nuisance} supplies $\widehat\beta$ and $\widehat\sigma^2$. Finally, Lemma~\ref{lem:product-ou-score-transfer} yields
\begin{equation*}
    \begin{aligned}
        \mathbb E \int_0^{T_n} \mathbb E \|\widehat s_t(Z_t)-s_t(Z_t)\|_2^4\,\mathrm dt &\leq Cp^2\mathbb E\|\widehat\theta-\theta\|_2^4 +C\mathbb E\|\widehat\beta-\beta\|_2^4 +C\mathbb E|\widehat\sigma^2-\sigma^2|^4\\ &\leq C\left(\frac{\log p}{n}+\frac1{n^2}\right).
    \end{aligned}
\end{equation*}
Equation~\eqref{eq:product-score-lipschitz-transfer} and consistency of $\widehat\beta$ and $\widehat\sigma^2$ give \eqref{eq:product-small-lipschitz}. If $\delta_{\mathrm{NG}}<\min\{\delta_0, L_\star/(4C_{\mathrm{Lip}})\}$, the right-hand side is strictly below $L_\star$ with probability tending to one. This proves \eqref{eq:small-lip} and completes the proof.
\end{proof}

\section{Auxiliary lemmas}
\begin{mylemma}[\citealp{bickel1981some}]
    Let $(B,\|\cdot\|)$ be a separable Banach space, and let $\Gamma_p(B)$ be the set of Borel probability measures $\gamma$ on $B$ such that $\int\|x\|^p\,\gamma(\mathrm dx)<\infty$. For $\alpha_n,\alpha\in\Gamma_p(B)$, the convergence $W_p(\alpha_n,\alpha)\to0$ is equivalent to each of the following:
    \begin{itemize}
        \item[(1)] $\alpha_n\Rightarrow\alpha$ and $\int\|x\|^p\,\alpha_n(\mathrm dx) \to\int\|x\|^p\,\alpha(\mathrm dx)$;
        \item[(2)] $\alpha_n\Rightarrow\alpha$ and $\|x\|^p$ is uniformly integrable with respect to $(\alpha_n)$;
        \item[(3)] $\int\phi\,\mathrm d\alpha_n \to\int\phi\,\mathrm d\alpha$ for every continuous $\phi$ satisfying $|\phi(x)|\leq C(1+\|x\|^p)$ for some $C<\infty$.
    \end{itemize}
    \label{wmm}
\end{mylemma}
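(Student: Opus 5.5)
This is the classical Bickel--Freedman characterization, and I would prove it by establishing the cycle $W_p\to0\Rightarrow(3)\Rightarrow(1)\Rightarrow(2)\Rightarrow W_p\to0$.

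\emph{From $W_p\to0$ to (3).} Take optimal couplings $(U_n,V)$ with $U_n\sim\alpha_n$, $V\sim\alpha$ and $\mathbb E\|U_n-V\|^p\to0$. Then $U_n\to V$ in probability. Moreover $\|U_n\|^p\le 2^{p-1}(\|V\|^p+\|U_n-V\|^p)$, where the first term is a fixed integrable variable and the second is bounded in $L^1$ with vanishing norm. Hence $\{\|U_n\|^p\}$ is uniformly integrable. For continuous $\phi$ with $|\phi|\le C(1+\|x\|^p)$, the family $\phi(U_n)$ is then uniformly integrable, and $\phi(U_n)\to\phi(V)$ in probability by continuity. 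Vitali's theorem gives $\mathbb E\phi(U_n)\to\mathbb E\phi(V)$.

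\emph{From (3) to (1).} Bounded continuous functions and $\|x\|^p$ are both admissible test functions in (3), so (1) follows immediately.

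\emph{From (1) to (2).} For $M>0$, let $h_M(x)=\min(\|x\|^p,M)$, which is bounded and continuous. Weak convergence gives $\int h_M\,d\alpha_n\to\int h_M\,d\alpha$. Since $\int\|x\|^p d\alpha_n\to\int\|x\|^p d\alpha$, the tails satisfy $\int(\|x\|^p-M)_+d\alpha_n\to\int(\|x\|^p-M)_+d\alpha$, and the latter is small for large $M$. Uniform integrability of $\|x\|^p$ follows, using $\|x\|^p\mathbf 1_{\{\|x\|^p>2M\}}\le 2(\|x\|^p-M)_+$.

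\emph{From (2) to $W_p\to0$.} This step carries the main content. $B$ is separable, so it is Polish after completion, and Skorokhod's representation applies. It gives $U_n\sim\alpha_n$ and $U\sim\alpha$ on one probability space with $U_n\to U$ almost surely. Then $\|U_n-U\|^p\le2^{p-1}(\|U_n\|^p+\|U\|^p)$, and the right side is uniformly integrable by (2). By Vitali, $\mathbb E\|U_n-U\|^p\to0$, so $W_p(\alpha_n,\alpha)\le(\mathbb E\|U_n-U\|^p)^{1/p}\to0$.

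The main obstacle is the measure-theoretic setting. Skorokhod's representation and the existence of optimal couplings require a separable complete metric space, and $B$ is separable and Banach, hence Polish, so both are available. If one prefers to avoid optimal couplings, near-optimal couplings suffice in the first step.
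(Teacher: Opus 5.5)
The paper gives no argument for this lemma. It is imported from \citet{bickel1981some} and stated without proof, so your proposal supplies what the paper leaves to the reference. Your cycle $W_p\to0\Rightarrow(3)\Rightarrow(1)\Rightarrow(2)\Rightarrow W_p\to0$ is the standard proof: couplings and uniform integrability in one direction, Skorokhod representation and Vitali in the other. It is correct.

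One step should be tightened. In the first implication the couplings are chosen separately for each $n$, so the second coordinate is really some $V_n\sim\alpha$. The statement ``$U_n\to V$ in probability'' therefore presupposes that all couplings live on one probability space with a common $V$. You can arrange this by gluing: disintegrate each coupling along its second marginal, which is legitimate because $B$ is Polish. Alternatively, avoid the issue altogether. Choose a compact $K$ with $\alpha(K^c)<\epsilon$. By continuity of $\phi$ and compactness of $K$, there is a $\delta>0$ such that $|\phi(y)-\phi(x)|<\epsilon$ whenever $x\in K$ and $\|y-x\|<\delta$. Then $\phi(U_n)-\phi(V_n)\to0$ in probability, and your uniform-integrability argument applies to this difference.

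A slightly cleaner arrangement is to derive (1) directly from $W_p\to0$. Use bounded Lipschitz test functions for weak convergence, and Minkowski's inequality $\bigl|(\mathbb E\|U_n\|^p)^{1/p}-(\mathbb E\|V_n\|^p)^{1/p}\bigr|\le(\mathbb E\|U_n-V_n\|^p)^{1/p}$ for the moments. Then obtain (3) from (2) by the same Skorokhod--Vitali argument you already use in the last step. Skorokhod puts everything on a common space for free.

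Finally, in the step from (1) to (2), note explicitly that the finitely many indices before the $\limsup$ bound takes effect are harmless, because each $\alpha_n\in\Gamma_p(B)$.
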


\begin{mydefinition}[Log-concavity and strong log-concavity]
\label{def:log-concave-strong}
Let $P$ be a probability law on $\mathbb R^d$ with density $p(x)=e^{-U(x)}$ with respect to Lebesgue measure. We say that $P$ is log-concave if $U:\mathbb R^d\to(-\infty,\infty]$ is convex. If $U\in C^2(\mathbb R^d)$, we say that $P$ is strongly log-concave with curvature matrix $H\succ0$ if
\begin{equation*}
    \nabla^2 U(x)\succeq H, \qquad x\in\mathbb R^d.
\end{equation*}
Equivalently, $P$ is strongly log-concave with covariance proxy $\Sigma\succ0$ if
\begin{equation*}
    \nabla^2 U(x)\succeq \Sigma^{-1}, \qquad x\in\mathbb R^d.
\end{equation*}
\end{mydefinition}

\begin{mylemma}[\citealp{henningsson2006log}]
\label{lem:aux-strong-log-concavity-preservation}
Let $X$ and $Y$ be independent random vectors on $\mathbb R^d$.

\begin{enumerate}
\item[(i)] \textbf{Scaling.} Suppose $X$ has density $e^{-U_X}$ and
\begin{equation*}
    \nabla^2 U_X(x)\succeq \Sigma^{-1}, \qquad x\in\mathbb R^d,
\end{equation*}
for some positive definite matrix $\Sigma$. Then, for any scalar $a\neq0$, the law of $aX$ has density $e^{-U_{aX}}$ satisfying
\begin{equation*}
    \nabla^2 U_{aX}(x)\succeq (a^2\Sigma)^{-1}, \qquad x\in\mathbb R^d.
\end{equation*}

\item[(ii)] \textbf{Convolution.}
Suppose $X$ has density $e^{-U_X}$ and $Y$ has density $e^{-U_Y}$, with
\begin{equation*}
    \nabla^2 U_X(x)\succeq \Sigma^{-1}, \qquad \nabla^2 U_Y(y)\succeq \Gamma^{-1},
\end{equation*}
for all $x,y\in\mathbb R^d$, where $\Sigma,\Gamma\succ0$. Then the law of $X+Y$ has density $e^{-U_{X+Y}}$ satisfying
\begin{equation*}
    \nabla^2 U_{X+Y}(z)\succeq (\Sigma+\Gamma)^{-1}, \qquad z\in\mathbb R^d.
\end{equation*}
\end{enumerate}
\end{mylemma}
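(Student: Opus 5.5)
We prove the two parts separately. Part (i) is a change of variables. Part (ii) needs a curvature bound for a convolution of strongly log-concave densities.

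For (i), the density of $aX$ is $|a|^{-d}e^{-U_X(x/a)}$. Hence $U_{aX}(x)=U_X(x/a)+d\log|a|$. Differentiating twice gives
\begin{equation*}
    \nabla^2U_{aX}(x)=a^{-2}\nabla^2U_X(x/a)\succeq a^{-2}\Sigma^{-1}=(a^2\Sigma)^{-1},
\end{equation*}
which is the claim.

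For (ii), write the density of $X+Y$ as $e^{-U_{X+Y}(z)}=\int e^{-U_X(x)-U_Y(z-x)}\,\mathrm dx$. Differentiating under the integral sign, which is justified by the Gaussian-type tails implied by strong convexity, gives the standard identity
\begin{equation*}
    \nabla^2U_{X+Y}(z)=\mathbb E_{\pi_z}[\nabla^2U_Y(z-X)]-\Cov_{\pi_z}(\nabla U_Y(z-X)).
\end{equation*}
Here $\pi_z$ is the conditional law of $X$ given $X+Y=z$, whose potential is $V_z(x)=U_X(x)+U_Y(z-x)$. Using the bound on $\nabla^2U_Y$ alone would lose the contribution of $\Sigma$. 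So we use a variational route instead, and we handle the general matrix case by a linear change of variables.

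\emph{Step 1 (reduction to identity covariance).} First assume $\Gamma=I$. Otherwise we set $X'=\Gamma^{-1/2}X$ and $Y'=\Gamma^{-1/2}Y$. Then $\nabla^2U_{X'}\succeq \Gamma^{1/2}\Sigma^{-1}\Gamma^{1/2}$ and $\nabla^2U_{Y'}\succeq I$. The matrix analogue of (i) transfers the bound back, since
\begin{equation*}
    \Gamma^{-1/2}(\Gamma^{-1/2}\Sigma\Gamma^{-1/2}+I)^{-1}\Gamma^{-1/2}=(\Sigma+\Gamma)^{-1}.
\end{equation*}

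\emph{Step 2 (decomposition into Gaussian pieces).} With $\Gamma=I$, write $U_Y=\tfrac12\|y\|^2+W_Y$ with $W_Y$ convex, and $U_X=\tfrac12x^\top\Sigma^{-1}x+W_X$ with $W_X$ convex. The joint potential is
\begin{equation*}
    \tfrac12x^\top\Sigma^{-1}x+\tfrac12\|z-x\|^2+W_X(x)+W_Y(z-x).
\end{equation*}
The first two terms form a Gaussian quadratic in $(x,z)$. Integrating it out in $x$ gives exactly $\tfrac12z^\top(\Sigma+I)^{-1}z$ plus a constant.

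\emph{Step 3 (Prékopa–Leindler).} What remains is an integral of $e^{-\text{(quadratic)}-\text{(convex)}}$ over $x$, with convex part $(x,z)\mapsto W_X(x)+W_Y(z-x)$. We complete the square in $x$ and write the quadratic as $\tfrac12 q(x-Az)+\tfrac12z^\top(\Sigma+I)^{-1}z$. Since $q$ is positive definite, the map $(x,z)\mapsto \tfrac12q(x-Az)+W_X(x)+W_Y(z-x)$ is jointly convex. By Prékopa–Leindler, integrating out $x$ leaves $e^{-\tfrac12z^\top(\Sigma+I)^{-1}z-\psi(z)}$ with $\psi$ convex. Therefore $\nabla^2U_{X+Y}\succeq(\Sigma+I)^{-1}$.

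The main obstacle is regularity. We need $U_{X+Y}\in C^2$ for the Hessian statement, and Prékopa–Leindler must be applied to a convex function that may only be $C^2$. We handle this by noting that the convolution of $C^2$ densities with Gaussian-type tails is $C^2$ by dominated convergence. Convexity of $\psi$ then gives $\nabla^2\psi\succeq0$ pointwise.
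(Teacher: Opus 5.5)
The paper gives no proof of this lemma. It is quoted from the literature and used only for Ornstein--Uhlenbeck marginals, where one summand is Gaussian. Your argument is the standard Gaussian-factorization proof, and its core is correct. Part~(i) is a change of variables. In part~(ii), splitting off the Gaussian factors, completing the square, and applying Pr\'ekopa's theorem to the jointly convex map $(x,z)\mapsto\frac12q(x-Az)+W_X(x)+W_Y(z-x)$ shows that $U_{X+Y}-\frac12z^\top(\Sigma+\Gamma)^{-1}z$ is convex, which is the real content. Step~1 is harmless but unnecessary: writing $U_Y=\frac12y^\top\Gamma^{-1}y+W_Y$ and completing the square directly yields $(\Sigma+\Gamma)^{-1}$.

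When the paper itself needs OU curvature in Lemma~\ref{lem:hd-general-curvature}, it takes a different route. It uses the second-order Tweedie formula, with Brascamp--Lieb and Cram\'er--Rao applied to the posterior. That route requires a Gaussian summand, but it gives two-sided Hessian bounds and smoothness for free. Your Pr\'ekopa route handles two arbitrary strongly log-concave summands but gives only the lower bound.

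Your regularity paragraph is not right as written. Strong convexity bounds $\nabla^2U_Y$ only from below, so $\nabla U_Y$ and $\nabla^2U_Y$ can be enormous on thin sets. Gaussian-type tails of $e^{-U_Y}$ therefore do not dominate $\nabla^2e^{-U_Y}=(\nabla U_Y\nabla U_Y^\top-\nabla^2U_Y)e^{-U_Y}$.

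In fact the convolution need not be $C^2$. In one dimension, take quadratic base potentials and, recursively in $k$, let $U_X'$ jump smoothly to $-t_k$ just left of $-k+\epsilon_k$ and $U_Y'$ jump to $s_k$ just right of $k$. The resulting spikes of $(e^{-U_X})'$ and $(e^{-U_Y})'$ then contribute about $-\frac{s_kt_k}{s_k+t_k}e^{-U_X(-k+\epsilon_k)-U_Y(k)}$ to $(e^{-U_X}*e^{-U_Y})''(\epsilon_k)$. This can be made $\le -k$, while every other contribution stays bounded near $0$. Taking $\epsilon_k\to0$, the second derivative of $e^{-U_{X+Y}}$, and hence of $U_{X+Y}$, is unbounded near $0$.

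So the pointwise Hessian inequality should be read as convexity of $U_{X+Y}-\frac12z^\top(\Sigma+\Gamma)^{-1}z$. Equivalently, the bound holds wherever $U_{X+Y}$ is twice differentiable, hence almost everywhere by Alexandrov's theorem. When one summand is Gaussian, as in every use in the paper, you have $\nabla^2(p_X*\phi)=p_X*\nabla^2\phi$ with $\nabla^2\phi$ bounded and continuous. Dominated convergence then does apply, and the bound holds everywhere. Finally, Pr\'ekopa's theorem needs no smoothness at all, so that part of your stated obstacle does not arise.
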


\begin{mylemma}[\citealp{bris2008existence}]
\label{lem:aux-lipschitz-commutator}
Let $\eta\in C_c^\infty(\mathbb R^d)$ be nonnegative, supported on the unit ball, and satisfy $\int\eta=1$. Put $\eta_\delta(x)=\delta^{-d}\eta(x/\delta)$. If $v:\mathbb R^d\to\mathbb R^d$ is Lipschitz and $q\in L^2_{\mathrm{loc}}$, then
\begin{equation*}
    \mathcal C_\delta(v,q) := \eta_\delta*(vq)-v(\eta_\delta*q)
\end{equation*}
satisfies, for $0<\delta\leq1$,
\begin{equation}
    \|\mathcal C_\delta(v,q)\|_{L^2(B_R)} \leq C_\eta\delta\,\Lip(v) \|q\|_{L^2(B_{R+1})}. \label{eq:aux-lipschitz-commutator}
\end{equation}
In particular, $\mathcal C_\delta(v,q)\to0$ in $L^2_{\mathrm{loc}}$.
\end{mylemma}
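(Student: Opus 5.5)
The plan is to write the commutator as a single integral in which the difference $v(y)-v(x)$ appears explicitly. Then the Lipschitz bound on $v$ together with the small support of $\eta_\delta$ supplies the factor $\delta\,\Lip(v)$, and Young's inequality handles the remaining convolution. No derivative of $q$ or of $v$ is needed, so the estimate holds for arbitrary $q\in L^2_{\mathrm{loc}}$.

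First, for $x\in B_R$ and $0<\delta\leq1$, I would expand both terms of $\mathcal C_\delta(v,q)$ as integrals against $\eta_\delta(x-y)\,\mathrm dy$ and combine them:
\begin{equation*}
    \mathcal C_\delta(v,q)(x)=\int_{\mathbb R^d}\eta_\delta(x-y)\{v(y)-v(x)\}q(y)\,\mathrm dy .
\end{equation*}
Both integrals are finite. The integrand is supported in $\{\|x-y\|\leq\delta\}\subset B_{R+1}$, $q$ is locally integrable, and $v$ is continuous, hence bounded on compacts. On the support of $\eta_\delta(x-\cdot)$ we have $\|v(y)-v(x)\|\leq\Lip(v)\|x-y\|\leq\delta\,\Lip(v)$. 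Since $\eta\geq0$, this gives the pointwise bound
\begin{equation*}
    \|\mathcal C_\delta(v,q)(x)\|\leq\delta\,\Lip(v)\,\bigl(\eta_\delta*(|q|\mathbf 1_{B_{R+1}})\bigr)(x),\qquad x\in B_R .
\end{equation*}

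Second, I would take $L^2(B_R)$ norms and apply Young's convolution inequality with $\|\eta_\delta\|_{L^1}=1$. This yields $\|\eta_\delta*(|q|\mathbf 1_{B_{R+1}})\|_{L^2}\leq\|q\|_{L^2(B_{R+1})}$, and hence \eqref{eq:aux-lipschitz-commutator} with $C_\eta=1$; any larger constant is of course also admissible. The final assertion then follows immediately: for each fixed $R$ the right-hand side is $O(\delta)$, so $\mathcal C_\delta(v,q)\to0$ in $L^2(B_R)$ for every $R$, which is convergence in $L^2_{\mathrm{loc}}$.

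There is no genuine obstacle here. The only point requiring care is the localization: the restriction $\delta\leq1$ is exactly what guarantees that values of $q$ outside $B_{R+1}$ never enter the computation on $B_R$. If $q$ is vector- rather than scalar-valued, the same argument applies componentwise and only changes the constant.
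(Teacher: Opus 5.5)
Your proposal is correct and follows essentially the same route as the paper. Both arguments rewrite the commutator as $\int\eta_\delta(x-y)\{v(y)-v(x)\}q(y)\,\mathrm dy$, use $\|v(y)-v(x)\|\leq\Lip(v)\|x-y\|$ on the $\delta$-support of the mollifier, and localize to $B_{R+1}$ using $\delta\leq1$. The only cosmetic difference is in the last step: the paper applies Minkowski's integral inequality with the weight $\int\eta_\delta(y)\|y\|_2\,\mathrm dy\leq\delta$, whereas you bound pointwise and then use Young's inequality, which gives the same estimate with $C_\eta=1$.
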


\begin{proof}
For almost every $x$,
\begin{equation*}
    \mathcal C_\delta(v,q)(x) = \int\eta_\delta(y) \{v(x-y)-v(x)\}q(x-y)\,\mathrm dy.
\end{equation*}
Minkowski's inequality, translation invariance, and $\|v(x-y)-v(x)\|_2\leq\Lip(v)\|y\|_2$ give
\begin{align*}
    \|\mathcal C_\delta(v,q)\|_{L^2(B_R)} &\leq \Lip(v)\|q\|_{L^2(B_{R+1})} \int\eta_\delta(y)\|y\|_2\,\mathrm dy\\ &\leq C_\eta\delta\,\Lip(v) \|q\|_{L^2(B_{R+1})}.
\end{align*}
\end{proof}

\begin{mylemma}[\citealp{prekopa1973logarithmic}]
\label{lem:aux-logconcave-projection}
Let $X$ have a log-concave density on $\mathbb R^d$. For every $v\in\mathbb S^{d-1}$, the law of $v^\top X$ has a log-concave density on $\mathbb R$. If, in addition,
\begin{equation*}
    \mathbb E[X]=0, \qquad \mathbb E[XX^\top]=I_d,
\end{equation*}
then $v^\top X$ has mean zero and variance one.
\end{mylemma}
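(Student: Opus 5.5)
The statement is the Pr\'ekopa marginal result: a one-dimensional projection of a log-concave vector is log-concave, and under isotropy it has mean zero and variance one. I will reduce it to the Pr\'ekopa--Leindler inequality and handle the moment claims by direct computation.

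\emph{Reduction to a marginal.} Fix $v\in\mathbb S^{d-1}$ and complete it to an orthonormal basis. Let $O$ be the orthogonal matrix whose first row is $v^\top$. Then $OX$ has density $x\mapsto p(O^\top x)$, which is again log-concave because convexity of $U$ survives a linear change of variables. The law of $v^\top X$ is the first coordinate marginal of $OX$, so it has density
\begin{equation*}
    g(s)=\int_{\mathbb R^{d-1}}p\bigl(O^\top(s,w)\bigr)\,\mathrm dw .
\end{equation*}
Fubini's theorem shows that $g$ is measurable, integrable and integrates to one.

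\emph{Log-concavity of the marginal.} Fix $s_0,s_1\in\mathbb R$ and $\lambda\in(0,1)$. Define
\begin{equation*}
    f_j(w)=p\bigl(O^\top(s_j,w)\bigr),\qquad h(w)=p\bigl(O^\top(\lambda s_1+(1-\lambda)s_0,w)\bigr).
\end{equation*}
Joint log-concavity of $p$ gives the pointwise bound
\begin{equation*}
    h\bigl(\lambda w_1+(1-\lambda)w_0\bigr)\geq f_1(w_1)^\lambda f_0(w_0)^{1-\lambda}
\end{equation*}
for all $w_0,w_1$. Pr\'ekopa--Leindler on $\mathbb R^{d-1}$ then yields
\begin{equation*}
    \int h\geq\Bigl(\int f_1\Bigr)^\lambda\Bigl(\int f_0\Bigr)^{1-\lambda},
\end{equation*}
which is $g(\lambda s_1+(1-\lambda)s_0)\geq g(s_1)^\lambda g(s_0)^{1-\lambda}$. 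Hence $-\log g$ is convex, taking the value $+\infty$ off the support. When $d=1$ the projection is just $\pm X$, and the claim is immediate.

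\emph{Moments.} Log-concave laws have finite moments of every order, so expectations may be computed directly. We get $\mathbb E[v^\top X]=v^\top\mathbb E X=0$ and $\mathbb E[(v^\top X)^2]=v^\top\mathbb E[XX^\top]v=\|v\|^2=1$.

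The only substantive step is the Pr\'ekopa--Leindler application. To be self-contained, I would prove that inequality by induction on the dimension, with the one-dimensional case handled by a transport or level-set argument.
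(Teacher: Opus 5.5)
Your argument is correct. The paper gives no proof of its own and simply cites Pr\'ekopa (1973); applying Pr\'ekopa--Leindler to the sections $w\mapsto p(O^\top(s,w))$ is exactly the standard proof of that marginal theorem, and the moment claims then follow by linearity as you wrote (you do not even need the finite-moment fact for log-concave laws, since isotropy already presupposes finite second moments).
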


\begin{mylemma}[\citealp{lovasz2007geometry}]
\label{lem:aux-isotropic-logconcave-density}
Let $g$ be a log-concave probability density on $\mathbb R$ satisfying
\begin{equation*}
    \int xg(x)\,\mathrm dx=0, \qquad \int x^2g(x)\,\mathrm dx=1.
\end{equation*}
Then
\begin{equation}
    \|g\|_\infty\leq1. \label{eq:aux-isotropic-logconcave-density}
\end{equation}
\end{mylemma}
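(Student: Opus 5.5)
The plan is to prove the equivalent scale-free statement: for every log-concave density $g$ on $\mathbb R$ with $M:=\|g\|_\infty$ and variance $\sigma^2$,
\begin{equation*}
    M^2\sigma^2\leq1 .
\end{equation*}
Under the normalization $\sigma^2=1$ this is exactly $\|g\|_\infty\leq1$. The one-sided exponential density $e^{-(x+1)}\mathbf 1_{\{x\geq-1\}}$ is isotropic and has maximum exactly $1$. So the argument must be sharp and must single out this extremal law.

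First I will fix the setup. A log-concave density is bounded and its maximum is attained at some point $x_0$, using upper semicontinuity of the chosen version. Variance is translation invariant, so I shift to $x_0=0$. Split $g=g\mathbf 1_{[0,\infty)}+g\mathbf 1_{(-\infty,0)}$, and write $h_+(x)=g(x)$ and $h_-(x)=g(-x)$ for $x\geq0$. Both $h_\pm$ are nonincreasing and log-concave on $[0,\infty)$, satisfy $h_\pm(0)=M$, and have masses $b=\int h_+$ and $a=\int h_-$ with $a+b=1$.

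Second, I will prove a one-sided comparison lemma by a single-crossing argument. Let $h$ be nonincreasing and log-concave on $[0,\infty)$ with $h(0)=M$ and mass $m$, and compare it with the exponential $e(x)=Me^{-Mx/m}$, which has the same value at $0$ and the same mass. Since $\log h-\log e$ is concave and vanishes at $0$, the difference $h-e$ changes sign at most once on $(0,\infty)$, passing from $\geq0$ to $\leq0$. Equal masses then give $\int\varphi\,h\leq\int\varphi\,e$ for every nondecreasing $\varphi$, by the standard Karlin--Novikoff cut argument. Taking $\varphi(x)=x$ and $\varphi(x)=x^2$ yields
\begin{equation*}
    \int_0^\infty x\,h\leq\frac{m^2}{M},\qquad\int_0^\infty x^2h\leq\frac{2m^3}{M^2}.
\end{equation*}
For a lower bound on first moments I compare instead with the uniform density $M\mathbf 1_{[0,m/M]}$. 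Here $h\leq M$ forces $h$ to be dominated by the uniform before its endpoint, which gives $\int x\,h\geq m^2/(2M)$.

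Third, I will combine the two pieces through $\sigma^2=\mathbb E X^2-(\mathbb E X)^2$. Write $\mu_\pm=\int x\,h_\pm$ and $\nu_\pm=\int x^2h_\pm$, so that $\sigma^2=\nu_++\nu_--(\mu_+-\mu_-)^2$. Bounding $\nu_\pm$ directly loses too much: for the one-sided exponential, $(2\cdot1^3+0)/M^2=2/M^2$, while the true variance is $1/M^2$. The centering term therefore has to be retained. For this I will redo the single-crossing comparison with the test function $x^2-2cx$ for the appropriate centering constant $c$. Equivalently, I will show that among all $(h_+,h_-)$ with the prescribed $M,a,b$, the functional $\sigma^2$ is maximized when both halves are exponential. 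This reduces the problem to the two-parameter family of two-sided exponentials, where
\begin{equation*}
    M^2\sigma^2=2(a^3+b^3)-(b^2-a^2)^2 .
\end{equation*}
Using $a+b=1$, this simplifies to $1-a^2b^2\leq1$, with equality exactly at $a=0$ or $b=0$, that is, at the one-sided exponential.

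The main obstacle is this third step. Variance is not monotone in the halves, because the mean term pulls the wrong way, so the plain stochastic-dominance bounds of the second step do not suffice. I would first try the following. For fixed $h_-$, the map $h_+\mapsto\sigma^2$ depends on $h_+$ only through the expectation of $x^2-2\mu x$ with $\mu=\mathbb E X$. Since $\mu$ itself depends on $h_+$, I will fix $\mu$, replace $h_+$ by its exponential comparator, and note that $x\mapsto x^2-2\mu x$ is nondecreasing on $[\mu,\infty)$. A crossing argument that starts at the maximum point then applies after splitting at $\mu$. Because the exponential minimizes $\mathbb E(X-c)^2$ at $c$ equal to its own mean, any leftover slack can only increase $\sigma^2$ in the comparison and so does not spoil the upper bound. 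If this bookkeeping fails, the fallback is to invoke the sharp one-dimensional estimate of \citet[Lemma~5.5]{lovasz2007geometry}, of which the displayed statement is exactly the special case $\sigma=1$.
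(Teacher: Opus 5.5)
Your first two steps are correct. The paper gives no argument of its own for this lemma; it only cites \citet[Lemma~5.5]{lovasz2007geometry}, so your fallback is the paper's route. Your self-contained route, however, has a genuine gap in the third step, and the gap is not a matter of bookkeeping. The mechanism you propose does not apply as stated: on $[0,\infty)$ the test function $x^2-2\mu x$ decreases on $[0,\mu]$, and splitting at $\mu$ destroys the equal-mass hypothesis that the cut argument needs.

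More decisively, the information your second step extracts from log-concavity is too weak to give the sharp constant. That information is: monotone halves, $h_\pm\le M=h_\pm(0)$, prescribed masses, and a single crossing from above with the exponential comparator. To see this, take $M=1$, $a=0$, $b=1$, a small $s>0$, and choose $w\asymp s$ so that the two perturbations below carry equal mass:
\begin{equation*}
    h=\mathbf 1_{[0,s]}+e^{-x}\mathbf 1_{(s,1-w)}+e^{-1}\mathbf 1_{[1-w,1]}+e^{-x}\mathbf 1_{(1,\infty)} .
\end{equation*}
This $h$ is nonincreasing, satisfies $h(0)=\max h=1$, has unit mass, and $h-e^{-x}$ is first $\geq0$ and then $\leq0$. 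Yet with $\eta=h-e^{-x}$,
\begin{equation*}
    \sigma^2(h)=1+\int(x-1)^2\eta\,\mathrm dx-\Bigl(\int x\eta\,\mathrm dx\Bigr)^2=1+s^2/2+O(s^3)>1 .
\end{equation*}
It pays to move mass from near the exponential's mean, where $(x-1)^2$ is small, to near the mode. Your remark that the exponential minimizes $\mathbb E(X-c)^2$ at its own mean does not rescue this. What you need is $\mathbb E_g(X-c)^2\leq\mathbb E_{\mathrm{ext}}(X-c)^2$ at $c=\mu_{\mathrm{ext}}$, and that is exactly the comparison that fails. Log-concavity has to be used in a form that rules out such interior jumps.

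The standard way to do this is through quantiles. Let $F$ be the distribution function of $g$, and set $I=g\circ F^{-1}$ on $(0,1)$. Then $I'_+=(\log g)'_+\circ F^{-1}$ is nonincreasing, so $I$ is concave. Put $a=F(x_0)$, so that $I(a)=M$. Concavity then gives $I\geq J$, where $J(t)=M\min\{t/a,(1-t)/b\}$ is exactly the profile of the two-sided exponential $G$ with peak $M$ and masses $a,b$. Hence, for $s<t$,
\begin{equation*}
    0\leq F^{-1}(t)-F^{-1}(s)=\int_s^t\frac{\mathrm du}{I(u)}\leq G^{-1}(t)-G^{-1}(s).
\end{equation*}
Writing $\sigma^2=\tfrac12\int_0^1\!\int_0^1\{F^{-1}(t)-F^{-1}(s)\}^2\,\mathrm ds\,\mathrm dt$ now gives $\sigma^2\leq\sigma_G^2$, with no centering term to control. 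This makes your reduction to two-sided exponentials rigorous.

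Finally, there is an arithmetic slip. With $a+b=1$, one has $2(a^3+b^3)-(b^2-a^2)^2=1-2ab$, not $1-a^2b^2$. For example, the symmetric Laplace law has $M^2\sigma^2=\tfrac12$. The conclusion $M^2\sigma^2\leq1$ is unaffected.
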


\begin{mylemma}[\citealp{bakry2006diffusions}]
\label{lem:aux-bakry-emery-herbst}
Let $\mu(\mathrm dx)=Z^{-1}e^{-V(x)}\,\mathrm dx$ on $\mathbb R^d$, where $V\in C^2$ and
\begin{equation*}
    \nabla^2V(x)\succeq mI_d, \qquad x\in\mathbb R^d,
\end{equation*}
for some $m>0$. Then
\begin{equation}
    \Ent_\mu(h^2) \leq \frac{2}{m}\int\|\nabla h\|_2^2\,\mathrm d\mu. \label{eq:aux-bakry-emery-lsi}
\end{equation}
More generally, if a probability measure $\nu$ satisfies
\begin{equation*}
    \Ent_\nu(h^2) \leq 2C_{\mathrm{LS}} \int\|\nabla h\|_2^2\,\mathrm d\nu,
\end{equation*}
then every $L$-Lipschitz function $F$ satisfies
\begin{equation}
    \mathbb E_\nu \exp\{t(F-\mathbb E_\nu F)\} \leq \exp\left(\frac{C_{\mathrm{LS}}L^2t^2}{2}\right), \qquad t\in\mathbb R. \label{eq:aux-herbst-mgf}
\end{equation}
Consequently,
\begin{equation*}
    \nu\{|F-\mathbb E_\nu F|>s\} \leq 2\exp\left(-\frac{s^2}{2C_{\mathrm{LS}}L^2}\right).
\end{equation*}
\end{mylemma}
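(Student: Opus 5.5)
The plan has two parts: first the Bakry--\'Emery log-Sobolev inequality \eqref{eq:aux-bakry-emery-lsi}, then the Herbst argument for \eqref{eq:aux-herbst-mgf} and the tail bound.

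For the log-Sobolev inequality I would use the semigroup interpolation argument for the Langevin diffusion with generator $L=\Delta-\nabla V\cdot\nabla$. This generator is symmetric with respect to $\mu$, and $\nabla^2V\succeq mI_d$ gives the curvature condition $\Gamma_2(h)\geq m\Gamma(h)$, i.e.\ $CD(m,\infty)$.

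1. From the curvature condition, derive the pointwise gradient commutation bound $\|\nabla P_th\|_2\leq e^{-mt}P_t\|\nabla h\|_2$.
2. Take a smooth positive $f$ bounded away from zero with bounded gradient. Write
\begin{equation*}
    \Ent_\mu(f)=\int_0^\infty\int\frac{\|\nabla P_tf\|_2^2}{P_tf}\,\mathrm d\mu\,\mathrm dt .
\end{equation*}
3. Bound the integrand using the commutation bound and Cauchy--Schwarz in the form $(P_t\|\nabla f\|_2)^2\leq P_tf\cdot P_t(\|\nabla f\|_2^2/f)$. Invariance of $\mu$ then gives $\int\|\nabla P_tf\|_2^2/P_tf\,\mathrm d\mu\leq e^{-2mt}\int\|\nabla f\|_2^2/f\,\mathrm d\mu$.
4. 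Integrate in $t$ to get $\Ent_\mu(f)\leq (2m)^{-1}\int\|\nabla f\|_2^2/f\,\mathrm d\mu$. Setting $f=h^2$ gives \eqref{eq:aux-bakry-emery-lsi}.
5. Remove the restrictions on $h$ by a standard density argument: truncate, add $\epsilon$ to $h^2$, and use Fatou's lemma.

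The step I expect to be the main obstacle is justifying the semigroup calculus when $V$ is only $C^2$. The commutation bound formally requires third derivatives of $V$, and the exchange of derivatives and integrals needs control. I would first try to avoid this by mollifying $V$ to $V_\epsilon=V*\eta_\epsilon$. Mollification preserves the bound $\nabla^2V_\epsilon\succeq mI_d$, and $V_\epsilon\to V$ locally uniformly, so $\mu_\epsilon\to\mu$ with uniform Gaussian tails. I would prove the inequality for $\mu_\epsilon$ and pass to the limit for bounded smooth $h$. If that limit step becomes awkward, the fallback is Caffarelli's contraction theorem: transport $N(0,m^{-1}I_d)$ onto $\mu$ by a $1$-Lipschitz map and push forward the Gaussian log-Sobolev inequality.

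For the Herbst part, let $F$ be bounded, smooth and $L$-Lipschitz, so that $\|\nabla F\|_2\leq L$, and set $H(t)=\mathbb E_\nu e^{tF}$.

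1. Apply the assumed inequality with $h=e^{tF/2}$. Since $\|\nabla h\|_2^2\leq t^2L^2h^2/4$, this yields $tH'(t)-H(t)\log H(t)\leq C_{\mathrm{LS}}L^2t^2H(t)/2$.
2. Put $K(t)=t^{-1}\log H(t)$. The previous inequality is exactly $K'(t)\leq C_{\mathrm{LS}}L^2/2$ for $t>0$.
3. Use $K(0^+)=\mathbb E_\nu F$ and integrate to get \eqref{eq:aux-herbst-mgf} for $t>0$. Apply the same argument to $-F$ for $t<0$.
4. Extend to a general $L$-Lipschitz $F$. First truncate, $F_k=(F\wedge k)\vee(-k)$, which is still $L$-Lipschitz. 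Then mollify, which keeps the Lipschitz constant at most $L$. Pass to the limit with Fatou's lemma, using that $F$ is integrable because it has linear growth and $\nu$ has sub-Gaussian tails under the log-Sobolev inequality.

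For the tail bound, apply Chernoff's inequality: $\nu\{F-\mathbb E_\nu F>s\}\leq\exp(-ts+C_{\mathrm{LS}}L^2t^2/2)$. Choosing $t=s/(C_{\mathrm{LS}}L^2)$ gives $\exp\{-s^2/(2C_{\mathrm{LS}}L^2)\}$. Applying this to $-F$ as well and adding the two one-sided bounds gives the factor $2$.
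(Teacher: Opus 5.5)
Your proposal is correct and follows the paper's route. The curvature bound $\Gamma_2\geq m\Gamma$ for $\Delta-\nabla V\cdot\nabla$ gives the log-Sobolev inequality via Bakry--\'Emery; then the Herbst argument with $h=e^{tF/2}$ and Chernoff give the sub-Gaussian bounds, and your inequality $K'(t)\leq C_{\mathrm{LS}}L^2/2$ is exactly the paper's $t\psi'(t)-\psi(t)\leq C_{\mathrm{LS}}L^2t^2/2$ divided by $t^2$. The only difference is that the paper just cites the Bakry--\'Emery criterion, whereas you prove it by semigroup interpolation and supply the approximation steps (mollifying $V$, truncating and mollifying $F$) that the paper leaves implicit.
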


\begin{proof}
For $L_\mu=\Delta-\nabla V^\top\nabla$, the iterated carr\'e du champ is
\begin{equation*}
    \Gamma_2(\phi) = \|\nabla^2\phi\|_{\mathrm F}^2 +\nabla\phi^\top\nabla^2V\nabla\phi \geq m\Gamma(\phi).
\end{equation*}
The Bakry--\'Emery criterion therefore gives \eqref{eq:aux-bakry-emery-lsi}; see \citep[Corollary~2 and Proposition~4]{bakry2006diffusions}.

For the second assertion, apply the log-Sobolev inequality to $h=e^{tF/2}$. If $\psi(t)=\log\mathbb E_\nu e^{tF}$, then
\begin{equation*}
    t\psi'(t)-\psi(t) \leq \frac{C_{\mathrm{LS}}L^2t^2}{2}.
\end{equation*}
Integration of this differential inequality gives $\psi(t)\leq t\mathbb E_\nu F+C_{\mathrm{LS}}L^2t^2/2$, which is \eqref{eq:aux-herbst-mgf}. The tail estimate follows from Chernoff's bound.
\end{proof}

\begin{mylemma}
    Suppose $\mathcal G$ is a $\sigma$-field. Conditionally on $\mathcal G$, let $\xi_1,\ldots,\xi_n$ be i.i.d.\ mean-zero random vectors in $\mathbb R^p$. Then, for every $m\geq 2$, there exists a constant $C_{m,p}<\infty$, depending only on $m$ and $p$, such that
    \begin{equation*}
        \mathbb E[\|\dfrac 1{\sqrt n}\sum_{i=1}^n \xi_i\|^m\mid \mathcal G]\leq C_{m,p}(\mathbb E[\|\xi_1\|^2\mid \mathcal G]^{m/2}+n^{1-m/2}\mathbb E[\|\xi_1\|^m\mid \mathcal G])
    \end{equation*}
    \label{Rosenthal}
\end{mylemma}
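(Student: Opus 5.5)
The statement is the conditional Rosenthal inequality for i.i.d.\ mean-zero random vectors in fixed dimension $p$. The plan is to reduce it to the classical scalar Rosenthal inequality applied coordinatewise, and then to handle the conditioning through regular conditional distributions.

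First, I would fix a realization of $\mathcal G$. By disintegration, it suffices to prove the unconditional inequality for i.i.d.\ mean-zero vectors, with a constant depending only on $m$ and $p$. Since $\mathbb R^p$ is Polish, the joint law of $(\xi_1,\ldots,\xi_n)$ given $\mathcal G$ admits a regular version, and conditionally on $\mathcal G$ it is a product of identical mean-zero laws. Applying the deterministic inequality to that version gives the conditional statement almost surely. If the conditional $m$-th moment is infinite, the inequality is trivial.

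Second, for the unconditional inequality, write $S=n^{-1/2}\sum_i\xi_i$ and use
\begin{equation*}
\|S\|^m\leq p^{m/2-1}\sum_{j=1}^p|S_j|^m,
\end{equation*}
which follows from convexity of $x\mapsto x^{m/2}$. For each coordinate $j$, the scalars $\xi_{1j},\ldots,\xi_{nj}$ are i.i.d.\ and mean zero. The classical Rosenthal inequality then gives
\begin{equation*}
\mathbb E\Bigl|\sum_i\xi_{ij}\Bigr|^m\leq C_m\Bigl\{\bigl(n\,\mathbb E\xi_{1j}^2\bigr)^{m/2}+n\,\mathbb E|\xi_{1j}|^m\Bigr\}.
\end{equation*}
Dividing by $n^{m/2}$ yields the bound $C_m\{(\mathbb E\xi_{1j}^2)^{m/2}+n^{1-m/2}\mathbb E|\xi_{1j}|^m\}$. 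Finally, bound $\mathbb E\xi_{1j}^2\leq\mathbb E\|\xi_1\|^2$ and $\mathbb E|\xi_{1j}|^m\leq\mathbb E\|\xi_1\|^m$, and sum over $j$. This gives the claim with $C_{m,p}=p^{m/2}C_m$.

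The only delicate point is the measure-theoretic reduction of the conditional statement to the unconditional one. I would justify it through the regular conditional law, noting that the conditional i.i.d.\ and mean-zero hypotheses are exactly properties of that kernel for almost every $\omega$. Alternatively, the conditional version of the scalar Rosenthal inequality can be obtained directly by repeating its standard proof with conditional expectations, for example via the conditional Burkholder inequality for the martingale of partial sums followed by the conditional Rosenthal step on the square function.
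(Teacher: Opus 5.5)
Your proposal is correct and follows the same route as the paper's proof: the coordinatewise reduction $\|v\|^m\le p^{m/2-1}\sum_j|v_j|^m$, scalar Rosenthal on each coordinate, then bounding the coordinate moments by those of $\|\xi_1\|$ and summing over $j$. The only addition is that you explicitly justify the conditional version through regular conditional distributions, whereas the paper takes this for granted and applies Rosenthal's inequality directly under $\mathbb E[\cdot\mid\mathcal G]$.
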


\begin{proof}
    For every vector $v\in \mathbb R^p$, there exists $C_{m,p}<\infty$ such that
    \begin{equation*}
        \|v\|^m \leq C_{m,p}\sum_{j=1}^p |v_j|^m,
    \end{equation*}
    therefore
    \begin{equation*}
        \mathbb E[\|\dfrac 1{\sqrt n}\sum_{i=1}^n \xi_i \|^m\mid \mathcal G]\leq C_{m,p}\sum_{j=1}^p \mathbb E[|\dfrac 1{\sqrt n}\sum_{i=1}^n \xi_{ij}|^m \mid \mathcal G].
    \end{equation*}
    For each coordinate $j$, conditionally on $\mathcal G$, the variables $\xi_{1j},\ldots,\xi_{nj}$ are i.i.d.\ mean-zero real random variables. Rosenthal's inequality gives, for $m\geq 2$,
    \begin{equation*}
        \mathbb E[|\sum_{i=1}^n \xi_{ij}|^m\mid \mathcal G]\leq C_m((\sum_{i=1}^n \mathbb E[\xi_{ij}^2\mid \mathcal G])^{m/2}+\sum_{i=1}^n \mathbb E[|\xi_{ij}|^m \mid \mathcal G]).
    \end{equation*}
    Dividing by $n^{m/2}$, using conditional i.i.d., we have
    \begin{equation*}
        \mathbb E[|\dfrac 1{\sqrt n}\sum_{i=1}^n \xi_{ij}|^m \mid \mathcal G]\leq C_m ((\mathbb E[\xi_{1j}^2\mid \mathcal G])^{m/2}+n^{1-m/2}\mathbb E[|\xi_{1j}|^m\mid \mathcal G]).
    \end{equation*}
    Summing over $j=1,\cdots,p$, we obtain
    \begin{equation*}
        \mathbb E[\|\dfrac 1{\sqrt n} \sum_{i=1}^n \xi_i\|^m \mid \mathcal G]\leq C_{m,p}\sum_{j=1}^p (\mathbb E[\xi_{1j}^2\mid \mathcal G])^{m/2}+C_{m,p}n^{1-m/2}\sum_{j=1}^p \mathbb E[|\xi_{1j}|^m\mid \mathcal G].
    \end{equation*}
    Since $|\xi_{1j}|\leq \|\xi_1\|$, we have
    \begin{equation*}
        \sum_{j=1}^p \mathbb E[|\xi_{1j}|^m \mid \mathcal G]\leq p\mathbb E[\|\xi_1\|^m\mid \mathcal G].
    \end{equation*}
    Moreover, because $m/2\geq1$ and $p$ is fixed,
    \begin{equation*}
        \sum_{j=1}^p (\mathbb E[\xi_{1j}^2\mid \mathcal G])^{m/2}\leq C_{m,p}(\sum_{j=1}^p \mathbb E[\xi_{1j}^2\mid \mathcal G])^{m/2}.
    \end{equation*}
    Since $\sum_{j=1}^p \mathbb E[\xi_{1j}^2\mid \mathcal G]=\mathbb E[\|\xi_1\|^2\mid \mathcal G]$, we get
    \begin{equation*}
        \sum_{j=1}^p (\mathbb E[\xi_{1j}^2\mid \mathcal G])^{m/2}\leq C_{m,p}(\mathbb E[\|\xi_1\|^2\mid \mathcal G])^{m/2}.
    \end{equation*}
    Combining the preceding inequalities yields
    \begin{equation*}
        \mathbb E[\|\dfrac 1{\sqrt n}\sum_{i=1}^n \xi_i\|^m\mid \mathcal G]\leq C_{m,p}((\mathbb E[\|\xi_1\|^2\mid \mathcal G])^{m/2}+n^{1-m/2}\mathbb E[\|\xi_1\|^m\mid \mathcal G]).
    \end{equation*}
\end{proof}

\begin{mylemma}[Conditional P\'olya theorem, \citep{durrett2019probability}]
\label{lem:conditional-polya}
Let $Q_n$ be random probability measures on $\mathbb R^r$, and let $Q$ be a deterministic probability measure whose distribution function
\begin{equation*}
    F(v)=Q\{(-\infty,v_1]\times\cdots\times(-\infty,v_r]\}
\end{equation*}
is continuous. If $Q_n\Rightarrow_p Q$, in the sense that
\begin{equation*}
    \int g\,\mathrm dQ_n \to_p \int g\,\mathrm dQ
\end{equation*}
for every bounded continuous $g$, then, writing $F_n$ for the distribution function of $Q_n$,
\begin{equation*}
    \sup_{v\in\mathbb R^r}|F_n(v)-F(v)| \to_p 0.
\end{equation*}
The same implication holds almost surely if $Q_n\Rightarrow Q$ almost surely.
\end{mylemma}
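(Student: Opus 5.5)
The plan is to reduce the uniform statement to convergence at finitely many points, using a grid on which $F$ varies by at most $\epsilon$ together with monotonicity of distribution functions. Pointwise convergence in probability at those grid points then follows from the assumed weak convergence in probability. This avoids any subsequence or almost-sure-representation argument, and it gives the almost-sure version by the same reasoning with ``$\to_p$'' replaced by ``a.s.''.

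\emph{Step 1 (grid).} Fix $\epsilon>0$. Write $F^{(k)}$ for the $k$-th marginal distribution function of $Q$; it is continuous because $F$ is. For each coordinate $k$ I will choose points $-\infty=t_{k,0}<t_{k,1}<\dots<t_{k,m}<t_{k,m+1}=+\infty$ such that $F^{(k)}(t_{k,1})\le\epsilon$, $1-F^{(k)}(t_{k,m})\le\epsilon$, and $F^{(k)}(t_{k,j+1})-F^{(k)}(t_{k,j})\le\epsilon$ for every $j$. Let $G$ be the finite set of grid points in $[-\infty,\infty]^r$. At such points $F(v)$ and $F_n(v)$ are interpreted with the convention that infinite coordinates impose no constraint; these are distribution functions of lower-dimensional marginals, which are again continuous. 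Take $v\in\mathbb R^r$, and let $\underline v\le v\le\overline v$ be the neighbouring grid points coordinatewise. Monotonicity gives $F_n(\underline v)-F(\overline v)\le F_n(v)-F(v)\le F_n(\overline v)-F(\underline v)$. Moreover $F(\overline v)-F(\underline v)\le\sum_k\{F^{(k)}(\overline v_k)-F^{(k)}(\underline v_k)\}\le r\epsilon$, since the difference of the two orthant probabilities is bounded by the probability that some coordinate falls into its cell. Consequently
\begin{equation*}
\sup_{v}|F_n(v)-F(v)|\le\max_{w\in G}|F_n(w)-F(w)|+r\epsilon .
\end{equation*}

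\emph{Step 2 (pointwise convergence at grid points).} For $w\in G$, I will sandwich the indicator of the orthant $\{u\le w\}$ between bounded continuous functions $g_\delta^-\le\mathbf 1\{u\le w\}\le g_\delta^+$. These are piecewise linear in each finite coordinate, equal to $1$ on the orthant shifted inward (respectively outward) by $\delta$, and vanish outside it. By continuity of $F$ at $w$, we have $\int(g^+_\delta-g^-_\delta)\,\mathrm dQ\le F(w+\delta\mathbf 1)-F(w-\delta\mathbf 1)\to0$ as $\delta\to0$. The hypothesis gives $\int g^\pm_\delta\,\mathrm dQ_n\to_p\int g^\pm_\delta\,\mathrm dQ$. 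Choosing $\delta$ with the gap below $\epsilon$, it follows that $\mathbb P(|F_n(w)-F(w)|>2\epsilon)\to0$. Since $G$ is finite, a union bound yields $\mathbb P(\max_{w\in G}|F_n(w)-F(w)|>2\epsilon)\to0$. Combining this with Step 1 and letting $\epsilon\downarrow0$ proves $\sup_v|F_n(v)-F(v)|\to_p0$.

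The main obstacle is Step 1 in dimension $r>1$, specifically obtaining the bound $F(\overline v)-F(\underline v)\le r\epsilon$ uniformly, including near infinity. I would handle it by working only through the marginal increments, with the infinite endpoints included in the grid, rather than through $F$ directly. Everything else is routine.
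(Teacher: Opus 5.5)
Your argument is correct, but the paper contains no proof to compare it with: it only cites Durrett, and the cited result is the deterministic, univariate P\'olya lemma (if $F_n\Rightarrow F$ with $F$ continuous, then $\sup_x|F_n(x)-F(x)|\to0$). The route implicit in that citation would run as follows. Pass to a subsequence along which $Q_n\Rightarrow Q$ almost surely; this needs a countable convergence-determining family of test functions so that the null sets can be combined. Then apply the deterministic lemma pathwise and conclude via the subsequence characterization of convergence in probability. The extension to $\mathbb R^r$ is also left to the reader.

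Your proof takes a different route. It establishes the multivariate grid bound $\sup_v|F_n(v)-F(v)|\le\max_{w\in G}|F_n(w)-F(w)|+r\epsilon$ directly. After that, it only needs convergence in probability at finitely many points, so a union bound replaces the subsequence argument entirely. This is more elementary and self-contained, and it gives the almost-sure version along the same lines: intersect over $\epsilon=1/N$, which involves only countably many test functions $g^\pm_\delta$.

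Two small points should be tightened. First, at grid points with a coordinate equal to $-\infty$ you must set $F(w)=F_n(w)=0$ (the empty orthant), rather than treat that coordinate as imposing no constraint. Otherwise the sandwich step $F_n(\underline v)\le F_n(v)$ fails whenever $\underline v_k=-\infty$. With the correct convention, these grid points contribute zero error. Second, continuity of the marginals $F^{(k)}$ deserves a one-line justification. An atom $Q(u_k=t)=c>0$ would make $F$ jump by at least $c/2$ in the $e_k$ direction at $(t,M,\dots,M)$ for $M$ large, contradicting continuity of $F$.
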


\begin{mylemma}
    Let $I=[r_0,r_1]$. Let $E,D\geq 0$ be locally integrable functions on $I$. Let $c\geq 0$ satisfy $c\in L^1(I)$ and $cE\in L^1_{\mathrm{loc}}(I)$. Fix $\lambda>0$ and define
    \begin{equation*}
        C(r)=\int_{r_0}^r c(s)\,\mathrm ds .
    \end{equation*}
    Assume that for Lebesgue-a.e. $r_0<a<b<r_1$,
    \begin{equation*}
        E(b)+\int_a^b D(r)\,\mathrm dr \leq E(a)+\lambda\int_a^b c(r)E(r)\,\mathrm dr .
    \end{equation*}
    Then for Lebesgue-a.e. $r_0<a<b<r_1$,
    \begin{equation*}
        e^{-\lambda C(b)}E(b) +\int_a^b e^{-\lambda C(r)}D(r)\,\mathrm dr \leq e^{-\lambda C(a)}E(a).
    \end{equation*}
\end{mylemma}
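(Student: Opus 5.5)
The statement is the integral Gronwall lemma: from the almost-everywhere energy inequality $E(b)+\int_a^b D\le E(a)+\lambda\int_a^b cE$, we want the weighted version with integrating factor $e^{-\lambda C}$. Because the hypothesis holds only for almost every pair $(a,b)$, I will not differentiate $E$ pointwise. Instead I will work with the measure-valued derivative of a monotone-type function.

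\textbf{Step 1.} Fix a full-measure set $\mathcal T\subset I$ such that the inequality holds for almost every pair in $\mathcal T^2$. Define
\[
F(r)=E(r)+\int_{r_0'}^r D-\lambda\int_{r_0'}^r cE
\]
on a compact subinterval $[r_0',r_1']$, with $r_0'$ chosen in $\mathcal T$. This is legitimate because $cE$ and $D$ are locally integrable; for $D$ this follows from the hypothesis itself by choosing admissible endpoints. The hypothesis says $F(b)\le F(a)$ for almost every $a<b$. By a Fubini argument, there is then a full-measure set $\mathcal T'$ on which $F$ is non-increasing. Equivalently, $F$ agrees almost everywhere with a non-increasing function $\widetilde F$, so its distributional derivative is a nonpositive measure $\mu$.

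\textbf{Step 2.} Multiply by the integrating factor $\phi(r)=e^{-\lambda C(r)}$, which is absolutely continuous, bounded, positive, and has derivative $-\lambda c\phi$. For the distributional product of the BV function $\widetilde F$ with the absolutely continuous $\phi$, the product rule gives
\[
\mathrm d(\phi\widetilde F)=\phi\,\mathrm d\mu-\lambda c\phi\widetilde F\,\mathrm dr.
\]
Writing $\widetilde F=E+\int D-\lambda\int cE$ almost everywhere, I then compute
\[
\mathrm d\Bigl(\phi E+\int\phi D\Bigr)=\phi\,\mathrm d\mu-\lambda c\phi\Bigl(\int D\Bigr)\mathrm dr+\lambda^2 c\phi\Bigl(\int cE\Bigr)\mathrm dr .
\]
This is messier than it should be. The cleaner route is to discretize directly, which avoids the BV product rule altogether.

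\textbf{Step 3 (discrete route).} Take a partition $a=t_0<\dots<t_N=b$ with every $t_k\in\mathcal T'$ and mesh going to zero. On each subinterval the hypothesis gives
\[
E(t_{k+1})\le E(t_k)-\int_{t_k}^{t_{k+1}}D+\lambda\int_{t_k}^{t_{k+1}}cE.
\]
Multiply by $\phi(t_{k+1})$ and use $\phi(t_{k+1})-\phi(t_k)=-\lambda\int_{t_k}^{t_{k+1}}c\phi$. Summing by parts turns the difference of $\phi(t_k)E(t_k)$ telescoping terms into
\[
\phi(b)E(b)+\sum_k\phi(t_{k+1})\int_{t_k}^{t_{k+1}}D\le\phi(a)E(a)+\lambda\sum_k\Bigl[\phi(t_{k+1})\int_{t_k}^{t_{k+1}}cE-E(t_k)\int_{t_k}^{t_{k+1}}c\phi\Bigr].
\]
It remains to show that the bracketed sum tends to zero, or is nonpositive in the limit, along suitable partitions.

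The main obstacle is exactly this limit. It is a Riemann-sum approximation of $\int cE\phi-\int c\phi E$, but with $E$ sampled at partition points while $E$ is only an $L^1_{\mathrm{loc}}$ function. To handle it, I will average over random shifts of the partition: choose $t_k=a+(k+\theta)h$ with $\theta$ uniform on $[0,1)$, using the a.e.-admissibility to guarantee that almost every $\theta$ is valid. Averaging in $\theta$ makes $\sum_k E(t_k)\int_{t_k}^{t_{k+1}}c\phi$ converge in $L^1(\mathrm d\theta)$ to $\int cE\phi$, by continuity of translations in $L^1$ together with the fact that $c\phi$ is integrable. The $D$-term converges to $\int\phi D$ by the same argument, since $\phi$ is continuous. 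Extracting a subsequence along which the convergence holds for almost every $\theta$, and keeping the endpoints fixed in $\mathcal T'$, yields the claimed inequality for almost every $a<b$.
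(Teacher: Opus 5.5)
\textbf{The gap is in Step~3.} Your summation-by-parts identity is correct. The argument then rests on one claim: after averaging over the shift $\theta$, $\sum_k E(t_k)\int_{t_k}^{t_{k+1}}c\phi\to\int_a^b c\phi E$ in $L^1(\mathrm d\theta)$ ``by continuity of translations in $L^1$ together with integrability of $c\phi$''. That claim is false for a general $E\in L^1_{\mathrm{loc}}$ with $c\in L^1$ and $cE\in L^1$.

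Up to the two boundary cells, the $\theta$-average equals $\int c(u)\phi(u)\,h^{-1}\int_{u-h}^{u}E(s)\,\mathrm ds\,\mathrm du$. The averaging operator is not bounded on $L^1(c\,\mathrm du)$ when $c$ is unbounded. Translation continuity controls $E(\cdot-s)-E$ in $L^1(\mathrm du)$, but not when it is paired with the weight $c\phi$. Concretely, take $c(u)=u^{-1/2}\mathbf 1_{\{0<u<1\}}$ and $E(s)=|s|^{-1/2}\mathbf 1_{\{-1<s<0\}}$ around an interior point $0$. Then $cE\equiv0$, yet the average equals $2h^{-1}\int_0^h u^{-1/2}(h-u)^{1/2}\phi(u)\,\mathrm du\to\pi\phi(0)\neq0$. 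At this point you use nothing about $E$ beyond integrability, so the key limit is unproved.

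\textbf{What is missing.} The hypothesis itself gives one-sided control of $E$ inside each cell. For admissible $t_k<u\le t_{k+1}$,
\[
E(u)\le E(t_k)+\lambda\int_{t_k}^u cE-\int_{t_k}^uD\le E(t_k)+\lambda\int_{t_k}^{t_{k+1}}cE.
\]
Multiply by $c\phi$ and integrate over the cell. This gives $E(t_k)\int_{t_k}^{t_{k+1}}c\phi\ge\int_{t_k}^{t_{k+1}}c\phi E-\lambda\bigl(\int_{t_k}^{t_{k+1}}c\phi\bigr)\bigl(\int_{t_k}^{t_{k+1}}cE\bigr)$. Since $\phi$ is non-increasing, your bracket is then at most $\lambda\max_k\bigl(\int_{t_k}^{t_{k+1}}c\phi\bigr)\int_a^bcE$. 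This tends to $0$ for every admissible partition, with no randomization needed. Applied with $t_k=a$, the same bound shows $E$ is essentially bounded on $[a,b]$, which is exactly what your averaging argument would need.

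Your abandoned Step~2 also works; the computation has an algebra slip. Let $J_D,J_E$ be the primitives of $D$ and $cE$. Done correctly, $\mathrm d\bigl(\phi(\widetilde F-J_D+\lambda J_E)+\int\phi D\bigr)=\phi\,\mathrm d\mu-\lambda c\phi E\,\mathrm dr+\lambda c\phi E\,\mathrm dr=\phi\,\mathrm d\mu\le0$.

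The paper uses the same comparison once rather than cell by cell. For a.e.\ $a$ it sets $G(t)=E(a)+\lambda\int_a^tcE-\int_a^tD$, which is absolutely continuous with $E\le G$ a.e.\ by hypothesis. Because $c\ge0$, this gives $G'\le\lambda cG-D$. Integrating $\bigl(e^{-\lambda\int_a^tc}G\bigr)'\le-e^{-\lambda\int_a^tc}D$ and using $E(b)\le G(b)$ finishes the proof.
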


\begin{proof}
    By Fubini's theorem, for Lebesgue-a.e. $a\in(r_0,r_1)$, the assumed inequality holds for Lebesgue-a.e. $b\in(a,r_1)$. Fix such an $a$. (Note that $E\in L_{\mathrm{loc}}^1$, so $E(a)<\infty$ for $a.e. a$).

    Define, for $t\in(a,r_1)$,
    \begin{equation*}
        G(t) = E(a) +\lambda\int_a^t c(r)E(r)\,\mathrm dr -\int_a^t D(r)\,\mathrm dr .
    \end{equation*}
    Since $cE,D\in L^1_{\mathrm{loc}}(I)$, the function $G$ is absolutely continuous on compact subintervals of $(a,r_1)$. The assumed inequality  gives
    \begin{equation*}
        E(t)\leq G(t)
    \end{equation*}
    for Lebesgue-a.e. $t\in(a,r_1)$. Moreover,
    \begin{equation*}
        G'(t)=\lambda c(t)E(t)-D(t)
    \end{equation*}
    for Lebesgue-a.e. $t\in(a,r_1)$. Since $c\geq 0$ and $E(t)\leq G(t)$ for a.e. $t$, we obtain
    \begin{equation*}
        G'(t) \leq \lambda c(t)G(t)-D(t)
    \end{equation*}
    for Lebesgue-a.e. $t\in(a,r_1)$.

    Set
    \begin{equation*}
        w_a(t) = \exp\left(-\lambda\int_a^t c(s)\,\mathrm ds\right).
    \end{equation*}
    Since $c\in L^1(I)$, we have $w_a\in W^{1,1}_{\mathrm{loc}}((a,r_1))$ and
    \begin{equation*}
        w_a'(t)=-\lambda c(t)w_a(t)
    \end{equation*}
    for a.e. $t$. Therefore $w_aG$ is absolutely continuous and
    \begin{equation*}
        (w_aG)'(t) = w_a(t)\bigl(G'(t)-\lambda c(t)G(t)\bigr) \leq -w_a(t)D(t)
    \end{equation*}
    for a.e. $t\in(a,r_1)$.

    Integrating from $a$ to $b$, we get
    \begin{equation*}
        w_a(b)G(b)+\int_a^b w_a(r)D(r)\,\mathrm dr \leq w_a(a)G(a).
    \end{equation*}
    Since $w_a(a)=1$ and $G(a)=E(a)$, this becomes
    \begin{equation*}
        w_a(b)G(b)+\int_a^b w_a(r)D(r)\,\mathrm dr \leq E(a).
    \end{equation*}
    Using $E(b)\leq G(b)$ for a.e. $b\in(a,r_1)$, we obtain
    \begin{equation*}
        w_a(b)E(b)+\int_a^b w_a(r)D(r)\,\mathrm dr \leq E(a)
    \end{equation*}
    for Lebesgue-a.e. $b\in(a,r_1)$.

    Finally,
    \begin{equation*}
        w_a(r) = \exp\left(-\lambda\int_a^r c(s)\,\mathrm ds\right) = e^{-\lambda C(r)}e^{\lambda C(a)}.
    \end{equation*}
    Multiplying the previous inequality by $e^{-\lambda C(a)}$, we get
    \begin{equation*}
        e^{-\lambda C(b)}E(b) + \int_a^b e^{-\lambda C(r)}D(r)\,\mathrm dr \leq e^{-\lambda C(a)}E(a).
    \end{equation*}
    Since the exceptional set of $a$'s has measure zero and, for each admissible $a$, the exceptional set of $b$'s has measure zero, the conclusion holds for Lebesgue-a.e. $r_0<a<b<r_1$.
\end{proof}

\begin{mylemma}
\label{lem:coming-down-from-infinity}
Let $I=(r_0,r_1)$, and let $Y:I\to[0,\infty)$ be finite almost everywhere. Assume that $Y$ admits a non-increasing representative. Let $K>0$ and $\alpha>0$. Suppose that, for Lebesgue-a.e. $r_0<a<b<r_1$,
\begin{equation*}
    Y(b) + K\int_a^bY(r)^{1+\alpha}\,\mathrm dr \leq Y(a).
\end{equation*}
Then, for Lebesgue-a.e. $r_0<a<b<r_1$,
\begin{equation*}
    Y(b)^{-\alpha} \geq Y(a)^{-\alpha} + \alpha K(b-a),
\end{equation*}
where $0^{-\alpha}=+\infty$. In particular, for Lebesgue-a.e. $b\in(r_0,r_1)$,
\begin{equation*}
    Y(b) \leq \{\alpha K(b-r_0)\}^{-1/\alpha}.
\end{equation*}
\end{mylemma}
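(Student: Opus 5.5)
The statement to prove is the "coming down from infinity" lemma, Lemma~\ref{lem:coming-down-from-infinity}. I would work entirely with the non-increasing representative of $Y$, which I still denote by $Y$.

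\emph{Step 1: pass to every pair.} First I would extend the integral inequality from almost every pair $a<b$ to every pair $a<b$ in $(r_0,r_1)$. Fix $a<b$ and choose admissible pairs $(a_k,b_k)$ with $a_k\downarrow a$ and $b_k\uparrow b$. This is possible by Fubini, exactly as in the integral Gronwall lemma above. Monotonicity gives $Y(b)\le Y(b_k)$ and $Y(a_k)\le Y(a)$. The integral converges by monotone convergence, since $Y$ is locally bounded on compact subintervals away from $r_0$. Hence
\[
Y(b)+K\int_a^bY^{1+\alpha}\,\mathrm dr\le Y(a)
\qquad\text{for all } a<b .
\]

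\emph{Step 2: trivial and degenerate cases.} If $Y(b)=0$, the claim $Y(b)^{-\alpha}=+\infty$ is immediate. Otherwise monotonicity gives $Y(r)\ge Y(b)>0$ on $[a,b]$. If $Y(a)=\infty$ at some point, the claim reads $Y(b)^{-\alpha}\ge \alpha K(b-a)$. This follows by applying the finite case on $(a',b)$ with $a'\downarrow a$ and discarding the nonnegative term $Y(a')^{-\alpha}$.

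\emph{Step 3: the comparison argument.} On $[a,b]$, set $G(t)=Y(a)-K\int_a^tY^{1+\alpha}$. This $G$ is absolutely continuous, and Step 1 gives $Y\le G$ with $G\ge Y(b)>0$. Then
\[
(G^{-\alpha})'=\alpha K\,G^{-1-\alpha}Y^{1+\alpha}\quad\text{a.e.}
\]
The key step is to lower-bound this derivative by $\alpha K$, i.e.\ to show $Y\ge G$ almost everywhere. This is the subtle point, because the integral inequality only gives $Y\le G$.

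\emph{Step 4: the main obstacle, via a partition.} I would avoid needing $Y\ge G$ by discretizing instead. Take a partition $a=t_0<\dots<t_N=b$. On each piece, monotonicity gives $Y\ge Y(t_{i+1})$, so
\[
Y(t_i)-Y(t_{i+1})\ \ge\ K\,\Delta t_i\,Y(t_{i+1})^{1+\alpha}.
\]
By convexity of $x\mapsto x^{-\alpha}$, we have $x^{-\alpha}-y^{-\alpha}\ge \alpha y^{-\alpha-1}(y-x)$ for $x\le y$; here it is applied with $x=Y(t_{i+1})$ and $y=Y(t_i)$. This yields
\[
Y(t_{i+1})^{-\alpha}-Y(t_i)^{-\alpha}\ \ge\ \alpha K\,\Delta t_i\left(\frac{Y(t_{i+1})}{Y(t_i)}\right)^{1+\alpha}.
\]
The ratio $Y(t_{i+1})/Y(t_i)$ tends to $1$ except across jumps of $Y$. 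Since $Y$ is monotone, its jumps are countable and their total effect can be made arbitrarily small by refining the partition. Summing over $i$ and refining therefore gives $Y(b)^{-\alpha}\ge Y(a)^{-\alpha}+\alpha K(b-a)$. Controlling the jump terms is the delicate bookkeeping step, and it would be handled by the countability of the discontinuity set.

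\emph{Step 5: conclusion.} Finally, letting $a\downarrow r_0$ and dropping the nonnegative term $Y(a)^{-\alpha}$ gives $Y(b)^{-\alpha}\ge\alpha K(b-r_0)$, which is the stated bound on $Y(b)$.
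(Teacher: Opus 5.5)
Your argument is correct, but it follows a different route from the paper. The paper works with the measure $\mu=-\mathrm dY\ge KY^{1+\alpha}\,\mathrm dr$ and applies the BV chain rule to the regularized function $Z_\epsilon=(Y+\epsilon)^{-\alpha}$. It discards the singular and jump parts of $\mathrm dZ_\epsilon$ because they are nonnegative, integrates, and then lets $\epsilon\downarrow0$. You instead discretize, using only three ingredients: monotonicity, the two-point inequality from Step~1, and the tangent-line bound for the convex function $x\mapsto x^{-\alpha}$. You then pass to a Riemann-sum limit. Your route is fully elementary, avoids BV calculus, and yields the inequality for every pair $a<b$ for the monotone representative. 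The paper's route is shorter once the BV chain rule is available, and its $\epsilon$-regularization handles $Y=0$ without a separate case split.

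The step you flag as delicate is actually immediate, and you do not need countability of the jumps. Since $1-s^{1+\alpha}\le(1+\alpha)(1-s)$ for $s\in[0,1]$, and $Y\ge Y(b)>0$ on $[a,b]$,
\[
\sum_i\Delta t_i\Bigl\{1-\Bigl(\frac{Y(t_{i+1})}{Y(t_i)}\Bigr)^{1+\alpha}\Bigr\}\le\frac{1+\alpha}{Y(b)}\,\max_i\Delta t_i\,\{Y(a)-Y(b)\}.
\]
The right-hand side tends to zero as the mesh is refined.

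Two further simplifications:
\begin{itemize}
\item The case $Y(a)=\infty$ in Step~2 is vacuous. A non-increasing function that is finite almost everywhere on $I$ is finite at every point of $I$; otherwise it would be infinite on an interval of positive length.
\item Step~3 can be dropped, since Step~4 replaces it.
\end{itemize}
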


\begin{proof}
Choose the non-increasing representative of $Y$. Since $Y$ is finite almost everywhere, it is a finite-valued BV function on every compact subinterval on which it is finite at the left endpoint. Define the nonnegative Radon measure
\begin{equation*}
    \mu=-\mathrm dY.
\end{equation*}
The assumed integral inequality implies
\begin{equation*}
    \mu \geq K Y(r)^{1+\alpha}\,\mathrm dr
\end{equation*}
in the sense of measures.

For $\epsilon>0$, define
\begin{equation*}
    Z_\epsilon(r) = \{Y(r)+\epsilon\}^{-\alpha}.
\end{equation*}
Since $Y$ is non-increasing, $Z_\epsilon$ is non-decreasing. Write $\mu=g\,\mathrm dr+\mu^{\mathrm s}$. The preceding measure inequality gives $g\geq KY^{1+\alpha}$ almost everywhere. The BV chain rule applied to the absolutely continuous part gives
\begin{equation*}
    (\mathrm dZ_\epsilon)^{\mathrm{ac}} = \alpha (Y+\epsilon)^{-\alpha-1}g\,\mathrm dr.
\end{equation*}
The singular continuous and jump parts of $\mathrm dZ_\epsilon$ are nonnegative because $Z_\epsilon$ is non-decreasing. They may therefore be discarded, and we obtain, in the sense of measures,
\begin{equation*}
    \mathrm dZ_\epsilon \geq \alpha K (Y+\epsilon)^{-\alpha-1} Y^{1+\alpha}\,\mathrm dr.
\end{equation*}
Integrating over $(a,b]$, we obtain
\begin{equation*}
    Z_\epsilon(b)-Z_\epsilon(a) \geq \alpha K \int_a^b \left( \frac{Y(r)}{Y(r)+\epsilon} \right)^{1+\alpha} \mathrm dr.
\end{equation*}

If $Y(b)=0$, the desired conclusion is immediate. Otherwise, by monotonicity, $Y(r)>0$ for a.e. $r\in(a,b)$, and monotone convergence as $\epsilon\downarrow0$ gives
\begin{equation*}
    Y(b)^{-\alpha} - Y(a)^{-\alpha} \geq \alpha K(b-a).
\end{equation*}
Dropping the nonnegative term $Y(a)^{-\alpha}$ yields
\begin{equation*}
    Y(b) \leq \{\alpha K(b-a)\}^{-1/\alpha}.
\end{equation*}
Finally, choose admissible times $a\downarrow r_0$ to obtain
\begin{equation*}
    Y(b) \leq \{\alpha K(b-r_0)\}^{-1/\alpha}.
\end{equation*}
\end{proof}

\begin{mylemma}
\label{lem:canonical-dissipative-representative}
Let $I=(r_0,r_1)$. Let $E,D:I\to[0,\infty)$ be locally integrable. Assume that, for Lebesgue-a.e. $r_0<a<b<r_1$,
\begin{equation*}
    E(b)+\int_a^b D(r)\,\mathrm dr \leq E(a).
\end{equation*}
Then there exists a finite-valued, non-increasing function $\widetilde E:I\to[0,\infty)$ such that
\begin{equation*}
    \widetilde E(r)=E(r)
\end{equation*}
for Lebesgue-a.e. $r\in I$, and such that, for every $r_0<a<b<r_1$,
\begin{equation*}
    \widetilde E(b) + \int_a^b D(r)\,\mathrm dr \leq \widetilde E(a).
\end{equation*}
Moreover, $\widetilde E$ may be chosen right-continuous.
\end{mylemma}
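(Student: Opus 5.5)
The plan is to build $\widetilde E$ from the values of $E$ on a full-measure set of good times, using a regularization by essential limits. Let $N\subset I$ be the exceptional null set outside which the hypothesis holds. By Fubini, there is a full-measure set $G\subset I$ such that, for every $a\in G$, $E(a)<\infty$ and the inequality holds for a.e.\ $b>a$. For every pair $a<b$ with both points in $G$, we would like the inequality to hold.

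To arrange this, I would pass to a smaller full-measure set $G'$. Take $G'$ to be the set of Lebesgue points of both $E$ and $\int D$ lying in $G$. For $a<b$ in $G'$, average the hypothesis over $a'\in(a-h,a)$ and $b'\in(b,b+h)$; this averaging is valid for a.e.\ pair by Fubini. Letting $h\downarrow0$ at the Lebesgue points then gives
\[
E(b)+\int_a^b D\le E(a).
\]
In particular, $E$ restricted to $G'$ is non-increasing and finite.

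Next define $\widetilde E(r)=\inf\{E(s): s\in G',\ s<r\}$. I would instead take $\widetilde E(r)=\lim_{s\downarrow r,\,s\in G'}E(s)$, the right limit, so that $\widetilde E$ is right-continuous. This limit exists, is finite and nonnegative, and is non-increasing because $E|_{G'}$ is monotone and $G'$ is dense.

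Agreement a.e.\ is the step I expect to be the main obstacle. A monotone function on $G'$ has at most countably many jumps, so $\widetilde E=E$ on $G'$ minus a countable set. Concretely, at any $r\in G'$ that is a continuity point of the monotone extension, the right limit equals $E(r)$.

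Finally, to obtain the inequality for every $a<b$, choose $a_k\downarrow a$ and $b_k\downarrow b$ with $a_k<b_k$ and all points in $G'$. Then
\[
E(b_k)+\int_{a_k}^{b_k}D\le E(a_k).
\]
Letting $k\to\infty$, right-continuity gives the endpoint values, and since $D\ge0$ is locally integrable, dominated convergence handles the integral. This yields
\[
\widetilde E(b)+\int_a^b D\le\widetilde E(a)
\]
for all $r_0<a<b<r_1$.
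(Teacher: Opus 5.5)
Your argument is correct and is essentially the paper's: both obtain $\widetilde E$ as the right limit of $E$ along Lebesgue points, which gives monotonicity, right-continuity, and a.e.\ agreement. The paper streamlines the bookkeeping by regularizing $F=E+J$ with $J(t)=\int_{\bar r}^t D(r)\,\mathrm dr$, which is essentially non-increasing, and setting $\widetilde E=\widetilde F-J$, so the inequality for every $a<b$ becomes a one-line identity; you instead keep $D$ separate and pass to the limit in $\int_{a_k}^{b_k}D$, and your averaging at Lebesgue points supplies the justification of essential monotonicity that the paper leaves implicit.
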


\begin{proof}
Fix $\bar r\in(r_0,r_1)$ and define
\begin{equation*}
    J(t) = \int_{\bar r}^t D(r)\,\mathrm dr, \qquad F(t)=E(t)+J(t),
\end{equation*}
where the integral is understood as a signed integral when $t<\bar r$. Since $E,D\in L^1_{\mathrm{loc}}(I)$, we have $F\in L^1_{\mathrm{loc}}(I)$.

For Lebesgue-a.e. $r_0<a<b<r_1$, the assumed inequality gives
\begin{equation*}
    F(b)\leq F(a).
\end{equation*}
Hence $F$ is essentially non-increasing. Therefore there exists a right-continuous non-increasing function $\widetilde F:I\to\mathbb R$ such that
\begin{equation*}
    \widetilde F(t)=F(t)
\end{equation*}
for Lebesgue-a.e. $t\in I$. For instance, one may take
\begin{equation*}
    \widetilde F(t) = \lim_{\substack{s\downarrow t\\ s\in\mathcal L_F}}F(s),
\end{equation*}
where $\mathcal L_F$ denotes the set of Lebesgue points of $F$.

Define
\begin{equation*}
    \widetilde E(t) = \widetilde F(t)-J(t).
\end{equation*}
Then $\widetilde E=E$ almost everywhere. Moreover, for every $r_0<a<b<r_1$,
\begin{align*}
    \widetilde E(b)+\int_a^bD(r)\,\mathrm dr &= \widetilde F(b)-J(b)+J(b)-J(a) \\ &= \widetilde F(b)-J(a) \\ &\leq \widetilde F(a)-J(a) \\ &= \widetilde E(a).
\end{align*}
Since $D\geq0$, this also implies
\begin{equation*}
    \widetilde E(b)\leq\widetilde E(a),
\end{equation*}
so $\widetilde E$ is non-increasing. Finally, since $\widetilde E=E\geq0$ almost everywhere and $\widetilde E$ is right-continuous and non-increasing, it follows that $\widetilde E\geq0$ everywhere.
\end{proof}

\begin{mylemma}
\label{lem:time-inhomogeneous-sde-wellposed}

Let \(T>0\), and let
\[
    v:[0,T]\times\mathbb R^d\to\mathbb R^d
\]
be Borel measurable. Let \(\alpha:[0,T]\to(0,\infty)\) be deterministic and Borel measurable, with
\[
    0<\underline\alpha\leq\alpha_t\leq\overline\alpha<\infty.
\]
Assume that there exist nonnegative functions
\[
    \ell,G\in L^1(0,T)
\]
such that, for Lebesgue-a.e. \(t\in(0,T)\) and every \(x,y\in\mathbb R^d\),
\begin{equation}
    \|v_t(x)-v_t(y)\| \leq \ell(t)\|x-y\|, \label{eq:aux-sde-lipschitz}
\end{equation}
and
\begin{equation}
    \|v_t(x)\| \leq G(t)(1+\|x\|). \label{eq:aux-sde-growth}
\end{equation}

Fix \(s\in[0,T]\). Let \(W\) be a \(d\)-dimensional Brownian motion and let \(\xi\) be an \(\mathcal F_s\)-measurable \(\mathbb R^d\)-valued random variable. Then the SDE
\begin{equation}
    X_t = \xi + \int_s^t v_r(X_r)\,\mathrm dr + \int_s^t\sqrt{2\alpha_r}\,\mathrm dW_r, \qquad s\leq t\leq T, \label{eq:aux-time-inhomogeneous-sde}
\end{equation}
admits a unique non-explosive strong solution with continuous paths.

If, in addition, \(\mathbb E\|\xi\|<\infty\), then
\begin{equation}
    \mathbb E \left[ \sup_{s\leq t\leq T}\|X_t\| \right] <\infty. \label{eq:aux-sde-first-moment}
\end{equation}

Moreover, for deterministic initial states \(x,y\in\mathbb R^d\), let \(X^{s,x}\) and \(X^{s,y}\) be the solutions driven by the same Brownian motion. Then
\begin{equation}
    \sup_{s\leq u\leq t} \|X_u^{s,x}-X_u^{s,y}\| \leq \exp\left\{ \int_s^t\ell(r)\,\mathrm dr \right\} \|x-y\|, \qquad s\leq t\leq T. \label{eq:aux-sde-initial-stability}
\end{equation}
In particular, for each \(s\leq t\), the map
\[
    x\longmapsto X_t^{s,x}(\omega)
\]
may be chosen continuous, and the solution can be chosen jointly measurable in \((x,\omega)\).
\end{mylemma}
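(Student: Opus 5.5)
The plan is a Picard iteration combined with a Gronwall argument driven by the integrable Lipschitz envelope $\ell$. Because the diffusion coefficient $\sqrt{2\alpha_t}$ is deterministic and independent of the state, the noise is additive. The process
\[
N_t=\int_s^t\sqrt{2\alpha_r}\,\mathrm dW_r
\]
is a continuous Gaussian martingale, and its quadratic variation is bounded by $2\overline\alpha(t-s)$. We subtract it off and study $Y_t=X_t-N_t$. Then \eqref{eq:aux-time-inhomogeneous-sde} is equivalent, pathwise for each $\omega$, to the random integral equation
\[
Y_t=\xi+\int_s^t v_r(Y_r+N_r)\,\mathrm dr .
\]
This is an ODE in the Carathéodory sense.

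\textbf{Existence.} Fix $\omega$. The map $(t,y)\mapsto v_t(y+N_t(\omega))$ is Borel in $t$ and Lipschitz in $y$ with constant $\ell(t)\in L^1$. It has growth at most $G(t)(1+\|y\|+\sup_u\|N_u(\omega)\|)$. We run the Picard iterates $Y^{(k+1)}_t=\xi+\int_s^t v_r(Y^{(k)}_r+N_r)\,\mathrm dr$ in $C([s,T])$. With $L(t)=\int_s^t\ell$, the standard estimate reads
\[
\sup_{u\le t}\|Y^{(k+1)}_u-Y^{(k)}_u\|\le \frac{L(t)^k}{k!}\sup_{u\le T}\|Y^{(1)}_u-Y^{(0)}_u\|.
\]
The right-hand side is finite by the growth bound. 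The series therefore converges uniformly to a continuous solution. Each iterate is adapted, being an integral of adapted continuous processes, so the limit is adapted and $X=Y+N$ is a strong solution. It cannot explode, because $Y$ is finite and continuous on the whole interval $[s,T]$.

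\textbf{Uniqueness and stability.} For two solutions driven by the same $W$, the noise cancels. Their difference satisfies $\|\Delta_t\|\le\|\Delta_s\|+\int_s^t\ell(r)\|\Delta_r\|\,\mathrm dr$ pathwise, and the integral form of Gronwall with $\ell\in L^1$ gives $\|\Delta_t\|\le e^{\int_s^t\ell}\|\Delta_s\|$. Equal initial data give pathwise uniqueness, and taking deterministic $x,y$ gives \eqref{eq:aux-sde-initial-stability}. Pathwise uniqueness together with strong existence yields uniqueness in law by Yamada--Watanabe.

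\textbf{Moment bound.} From the growth condition we get $\|Y_t\|\le\|\xi\|+\int_s^tG(r)\,(1+\|Y_r\|+\|N_r\|)\,\mathrm dr$. Gronwall then gives
\[
\sup_{t\le T}\|X_t\|\le e^{\int G}\Bigl(\|\xi\|+\|G\|_{L^1}\bigl(1+\sup_t\|N_t\|\bigr)\Bigr)+\sup_t\|N_t\|.
\]
Taking expectations, and bounding $\mathbb E\sup_t\|N_t\|$ by BDG or Doob using $\alpha\le\overline\alpha$, proves \eqref{eq:aux-sde-first-moment}.

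\textbf{Regularity in $x$.} The Lipschitz estimate \eqref{eq:aux-sde-initial-stability} holds for all $x,y$ off a single null set once we pass to a countable dense set of initial points and extend by continuity. Carrying this out cleanly is the main delicacy. A cleaner route is to note that the pathwise construction gives $X^{s,x}_t(\omega)=\Phi(x,N(\omega))$ for a deterministic measurable solution map $\Phi$. This map is continuous in $x$ with Lipschitz constant $e^{\int\ell}$ for every $\omega$, and the continuity in $x$ then gives joint measurability in $(x,\omega)$ (Carathéodory functions are jointly measurable).

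The only real obstacle is measurability of the Picard construction when $v$ is merely Borel in $t$. This should be handled by noting that each iterate is a Lebesgue integral of a jointly measurable integrand, which preserves adaptedness and measurability.
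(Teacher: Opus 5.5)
Your proposal is correct and follows essentially the same route as the paper. It uses Picard iteration with the factorial bound $H L_{s,t}^m/m!$ from the integrable Lipschitz envelope, Gronwall for pathwise uniqueness and for the stability estimate, Gronwall plus BDG on the additive noise for the first-moment bound, and joint measurability from the measurability of the Picard iterates. Writing $Y=X-N$ to get a pathwise Carath\'eodory ODE is only a relabelling of the paper's scheme $X^{(m+1)}_t=\xi+B^{(s)}_t+\int_s^t v_r(X^{(m)}_r)\,\mathrm dr$.
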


The classical time-homogeneous result under local Lipschitz and linear-growth assumptions is standard; see, for example, \citep[Theorem~B.3.1]{bakry2014analysis}. The proof below records the time-inhomogeneous version with deterministic bounded volatility and time-integrable Lipschitz and growth coefficients.

\begin{proof}
Changing \(v\) on a Lebesgue-null set of times does not change the integral equation. We may therefore assume that \eqref{eq:aux-sde-lipschitz} and \eqref{eq:aux-sde-growth} hold for every \(t\in[0,T]\).

Set
\[
    B_t^{(s)} = \int_s^t\sqrt{2\alpha_r}\,\mathrm dW_r.
\]
Define the Picard iterates by
\begin{equation*}
    X_t^{(0)} = \xi+B_t^{(s)},
\end{equation*}
and, recursively,
\begin{equation*}
    X_t^{(m+1)} = \xi+B_t^{(s)} + \int_s^t v_r(X_r^{(m)})\,\mathrm dr.
\end{equation*}
Every iterate is adapted and has continuous paths.

Put
\begin{equation*}
    L_{s,t} = \int_s^t\ell(r)\,\mathrm dr
\end{equation*}
and
\begin{equation*}
    H = \int_s^T G(r) \left( 1+\|\xi\| + \sup_{s\leq u\leq T}\|B_u^{(s)}\| \right)\mathrm dr.
\end{equation*}
Since \(G\in L^1(s,T)\), the stochastic integral has continuous paths, and \(\xi\) is finite almost surely, we have \(H<\infty\) almost surely. The growth assumption gives
\begin{equation*}
    \sup_{s\leq u\leq t} \|X_u^{(1)}-X_u^{(0)}\| \leq H.
\end{equation*}

We claim that, for every \(m\geq0\),
\begin{equation}
    \sup_{s\leq u\leq t} \|X_u^{(m+1)}-X_u^{(m)}\| \leq H\frac{L_{s,t}^{m}}{m!}. \label{eq:aux-picard-factorial}
\end{equation}
The case \(m=0\) was just proved. If it holds for \(m\), then
\begin{align*}
    \sup_{s\leq u\leq t} \|X_u^{(m+2)}-X_u^{(m+1)}\| &\leq \int_s^t \ell(r) \sup_{s\leq q\leq r} \|X_q^{(m+1)}-X_q^{(m)}\| \,\mathrm dr\\ &\leq \frac{H}{m!} \int_s^t \ell(r)L_{s,r}^{m}\,\mathrm dr\\ &= H\frac{L_{s,t}^{m+1}}{(m+1)!}.
\end{align*}
Thus \eqref{eq:aux-picard-factorial} follows by induction.

Consequently, almost surely,
\[
    \sum_{m=0}^{\infty} \sup_{s\leq t\leq T} \|X_t^{(m+1)}-X_t^{(m)}\| \leq H e^{L_{s,T}} <\infty.
\]
Hence \(X^{(m)}\) converges uniformly on \([s,T]\) to an adapted continuous process \(X\). By \eqref{eq:aux-sde-lipschitz}, the uniform convergence implies
\begin{equation*}
    \int_s^t v_r(X_r^{(m)})\,\mathrm dr \longrightarrow \int_s^t v_r(X_r)\,\mathrm dr,
\end{equation*}
uniformly in \(t\), and therefore \(X\) solves \eqref{eq:aux-time-inhomogeneous-sde}.

If \(X\) and \(\widetilde X\) are two solutions with the same initial condition and Brownian motion, then
\begin{equation*}
    \sup_{s\leq u\leq t} \|X_u-\widetilde X_u\| \leq \int_s^t \ell(r) \sup_{s\leq q\leq r} \|X_q-\widetilde X_q\| \,\mathrm dr.
\end{equation*}
Gronwall's inequality gives \(X=\widetilde X\). This proves pathwise uniqueness and hence uniqueness of the strong solution.

For the moment estimate, set
\[
    S_t = \sup_{s\leq u\leq t}\|X_u\|.
\]
By the integral equation and the growth assumption,
\begin{equation*}
    S_t \leq \|\xi\| + \sup_{s\leq u\leq t}\|B_u^{(s)}\| + \int_s^tG(r)(1+S_r)\,\mathrm dr.
\end{equation*}
Gronwall's inequality yields
\begin{equation*}
    S_T \leq \left\{ \|\xi\| + \sup_{s\leq u\leq T}\|B_u^{(s)}\| + \int_s^TG(r)\,\mathrm dr \right\} \exp\left\{ \int_s^TG(r)\,\mathrm dr \right\}.
\end{equation*}
The Burkholder--Davis--Gundy inequality and the upper bound on \(\alpha\) give \(\mathbb E\sup_{s\leq u\leq T}\|B_u^{(s)}\|<\infty\). Taking expectations therefore proves
\eqref{eq:aux-sde-first-moment}.

Finally, solutions started from \(x\) and \(y\), driven by the same Brownian motion, satisfy
\begin{equation*}
    \sup_{s\leq u\leq t} \|X_u^{s,x}-X_u^{s,y}\| \leq \|x-y\| + \int_s^t \ell(r) \sup_{s\leq q\leq r} \|X_q^{s,x}-X_q^{s,y}\| \,\mathrm dr.
\end{equation*}
Another application of Gronwall proves \eqref{eq:aux-sde-initial-stability}.

Each Picard iterate is jointly measurable in \((x,\omega)\), and the limit is their pointwise limit. The pathwise estimate shows continuity in \(x\).
\end{proof}

\begin{mylemma}
\label{lem:sde-markov-kernel}

Suppose that the assumptions of Lemma~\ref{lem:time-inhomogeneous-sde-wellposed} hold. For \(0\leq s\leq t\leq T\), let \(X^{s,x}\) denote the unique solution of
\begin{equation*}
    \mathrm dX_r^{s,x} = v_r(X_r^{s,x})\,\mathrm dr+\sqrt{2\alpha_r}\,\mathrm dW_r, \qquad X_s^{s,x}=x.
\end{equation*}
Define
\begin{equation}
    P_{s,t}(x,A) = \mathbb P\bigl(X_t^{s,x}\in A\bigr), \qquad A\in\mathcal B(\mathbb R^d). \label{eq:transition-kernel-definition}
\end{equation}

Then the following statements hold.

\begin{enumerate}
    \item For every \(s\leq t\), \(P_{s,t}\) is a Markov kernel: \(P_{s,t}(x,\cdot)\) is a probability measure for every \(x\), and \(x\mapsto P_{s,t}(x,A)\) is Borel measurable for every Borel set \(A\).

    \item For every bounded Borel function \(\varphi:\mathbb R^d\to\mathbb R\), define
    \begin{equation}
        P_{s,t}\varphi(x) = \int_{\mathbb R^d} \varphi(y)P_{s,t}(x,\mathrm dy) = \mathbb E\bigl[\varphi(X_t^{s,x})\bigr]. \label{eq:transition-operator-definition}
    \end{equation}
    The family \((P_{s,t})_{0\leq s\leq t\leq T}\) satisfies
    \begin{equation}
        P_{s,s}=I, \qquad P_{s,t}=P_{s,r}P_{r,t}, \qquad 0\leq s\leq r\leq t\leq T. \label{eq:chapman-kolmogorov-operator}
    \end{equation}
    Equivalently,
    \begin{equation}
        P_{s,t}(x,A) = \int_{\mathbb R^d} P_{r,t}(y,A)P_{s,r}(x,\mathrm dy). \label{eq:chapman-kolmogorov-kernel}
    \end{equation}

    \item If \(X\) is a solution started at time \(s\), then for every \(s\leq r\leq t\) and every bounded Borel \(\varphi\),
    \begin{equation}
        \mathbb E \left[ \varphi(X_t) \mid\mathcal F_r \right] = P_{r,t}\varphi(X_r) \qquad\text{a.s.} \label{eq:time-inhomogeneous-markov-property}
    \end{equation}
    Thus \(X\) is a time-inhomogeneous Markov process with transition kernels \(P_{s,t}\).

    \item For a finite signed Borel measure \(\nu\), define its forward image by
    \begin{equation}
        (P_{s,t}^*\nu)(A) = \int_{\mathbb R^d} P_{s,t}(x,A)\,\nu(\mathrm dx). \label{eq:dual-markov-evolution}
    \end{equation}
    Equivalently,
    \begin{equation}
        \int_{\mathbb R^d} \varphi\,d(P_{s,t}^*\nu) = \int_{\mathbb R^d} P_{s,t}\varphi\,\mathrm d\nu \label{eq:dual-markov-duality}
    \end{equation}
    for every bounded Borel \(\varphi\).

    The dual maps satisfy
    \begin{equation}
        P_{s,t}^* = P_{r,t}^*P_{s,r}^*, \qquad s\leq r\leq t. \label{eq:dual-evolution-property}
    \end{equation}

    \item If \(\nu\) is a finite nonnegative measure, then \(P_{s,t}^*\nu\) is nonnegative and
    \begin{equation}
        (P_{s,t}^*\nu)(\mathbb R^d) = \nu(\mathbb R^d). \label{eq:dual-mass-preservation}
    \end{equation}
    Moreover, for finite signed measures \(\nu,\nu'\),
    \begin{equation}
        \|P_{s,t}^*\nu-P_{s,t}^*\nu'\|_{\mathrm{TV}} \leq \|\nu-\nu'\|_{\mathrm{TV}}. \label{eq:dual-TV-contraction}
    \end{equation}
    Here
    \begin{equation*}
        \|\eta\|_{\mathrm{TV}} := |\eta|(\mathbb R^d) = \sup_{\|\varphi\|_\infty\leq1} \left| \int\varphi\,\mathrm d\eta \right|.
    \end{equation*}

    In particular, if \(\nu=f\,\mathrm dx\), \(\nu'=g\,\mathrm dx\), and \(P_{s,t}^*\nu\), \(P_{s,t}^*\nu'\) have densities \(f_t,g_t\), respectively, then
    \begin{equation}
        \|f_t-g_t\|_1 \leq \|f-g\|_1. \label{eq:dual-L1-contraction}
    \end{equation}
\end{enumerate}
\end{mylemma}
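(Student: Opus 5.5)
The plan is to build everything on the strong solution $X^{s,x}$ from Lemma~\ref{lem:time-inhomogeneous-sde-wellposed}. Its two key properties are joint measurability in $(x,\omega)$ and pathwise continuity in $x$, which follows from \eqref{eq:aux-sde-initial-stability}. I will verify the items in the order stated.

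\textbf{Item 1 (kernels).} For fixed $s\le t$ and each $x$, $P_{s,t}(x,\cdot)$ is the law of a random variable, so it is a probability measure. Continuity of $x\mapsto X^{s,x}_t(\omega)$ and dominated convergence make $x\mapsto \mathbb E\varphi(X^{s,x}_t)$ continuous for every bounded continuous $\varphi$. A monotone class argument then gives Borel measurability of $x\mapsto P_{s,t}(x,A)$ for every Borel $A$. Joint measurability plus Fubini gives the same conclusion directly.

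\textbf{Items 2 and 3 (Markov property and Chapman--Kolmogorov).} The flow property is the core of the argument. Fix $s\le r\le t$. By pathwise uniqueness, $X^{s,x}_t = X^{r,y}_t\big|_{y=X^{s,x}_r}$ almost surely. Here $X^{r,y}$ is driven by the Brownian increments after time $r$, which are independent of $\mathcal F_r$.

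To make the substitution of a random initial point rigorous, I will argue in three steps.
\begin{itemize}
\item First treat $\mathcal F_r$-measurable $\xi$ taking countably many values, where uniqueness applies on each level set.
\item Then approximate a general $\xi$ by discrete $\xi_k\to\xi$.
\item Pass to the limit using the Lipschitz stability \eqref{eq:aux-sde-initial-stability}, which holds uniformly in the starting point, together with uniqueness for the random initial condition.
\end{itemize}

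With this identity in hand, the freezing lemma for conditional expectations, which uses independence of $\mathcal F_r$ from the future increments and joint measurability of $(y,\omega)\mapsto X^{r,y}_t$, yields
\[
\mathbb E[\varphi(X_t)\mid\mathcal F_r]=P_{r,t}\varphi(X_r),
\]
which is \eqref{eq:time-inhomogeneous-markov-property}. Taking expectations with $X=X^{s,x}$ gives \eqref{eq:chapman-kolmogorov-operator}. Specializing to $\varphi=\mathbf 1_A$ gives \eqref{eq:chapman-kolmogorov-kernel}. The identity $P_{s,s}=I$ is immediate from the initial condition.

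\textbf{Items 4 and 5 (dual evolution).} The dual action $P^*_{s,t}\nu$ is well defined by Item 1. Countable additivity follows from monotone convergence applied to the Jordan parts of $\nu$. The duality \eqref{eq:dual-markov-duality} holds first for indicators, then for simple functions, then by uniform approximation for all bounded Borel $\varphi$.

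The dual evolution property \eqref{eq:dual-evolution-property} follows from \eqref{eq:chapman-kolmogorov-kernel} by Fubini. Positivity and mass preservation \eqref{eq:dual-mass-preservation} follow because each $P_{s,t}(x,\cdot)$ is a probability measure.

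For total variation contraction, take any $\varphi$ with $\|\varphi\|_\infty\le1$. Then $\|P_{s,t}\varphi\|_\infty\le1$, so
\[
\Bigl|\int\varphi\,\mathrm d(P^*_{s,t}\eta)\Bigr|=\Bigl|\int P_{s,t}\varphi\,\mathrm d\eta\Bigr|\le\|\eta\|_{\mathrm{TV}}
\]
with $\eta=\nu-\nu'$, using linearity of $P^*_{s,t}$. When both measures have densities, the total variation norm equals the $L^1$ norm of the difference of densities, which gives \eqref{eq:dual-L1-contraction}.

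\textbf{Main obstacle.} The only delicate step is the flow identity with a random $\mathcal F_r$-measurable starting point. The discrete-approximation argument combined with the stability estimate \eqref{eq:aux-sde-initial-stability} is how I intend to handle it.
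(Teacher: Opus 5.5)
Your proposal is correct and follows essentially the same route as the paper. Both arguments use joint measurability of the Picard solution for the kernel property, and both obtain the Markov property and Chapman--Kolmogorov from the flow identity (pathwise uniqueness plus independence of the post-$r$ Brownian increments). Both also get the total-variation contraction from $\|P_{s,t}\varphi\|_\infty\le\|\varphi\|_\infty$.

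The only difference is that you spell out the substitution of the random $\mathcal F_r$-measurable starting point (discretization, stability estimate, freezing lemma), which the paper compresses into a single ``consequently''. Because the noise is additive, the SDE is a pathwise integral equation and that identity in fact holds $\omega$ by $\omega$, so even your discretization step is not strictly needed.
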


For the standard notions of Markov processes, transition kernels, dual semigroups, and Chapman--Kolmogorov equations, see \citep[Sections~1.1--1.3]{bakry2014analysis}. The strong Markov property of well-posed stochastic differential equations is discussed in \citep[Appendix~B.4]{bakry2014analysis}. We include the proof because the present coefficients are time-inhomogeneous and the resulting transition family is two-parameter.

\begin{proof}
For fixed \(x\), the map
\[
    A\longmapsto P_{s,t}(x,A)
\]
is the law of \(X_t^{s,x}\), and hence is a probability measure.

By the Picard construction in Lemma~\ref{lem:time-inhomogeneous-sde-wellposed}, the map
\[
    (x,\omega)\longmapsto X_t^{s,x}(\omega)
\]
can be chosen jointly measurable. Therefore, for every Borel set \(A\),
\[
    x\longmapsto \mathbb E \left[ \mathbf 1_A(X_t^{s,x}) \right] = P_{s,t}(x,A)
\]
is Borel measurable. Thus \(P_{s,t}\) is a Markov kernel.

Fix \(s\leq r\leq t\). The process
\[
    \widetilde W_u=W_{r+u}-W_r, \qquad u\geq0,
\]
is a Brownian motion independent of \(\mathcal F_r\). On the interval \([r,t]\), the process \(X^{s,x}\) satisfies
\begin{equation*}
    X_u^{s,x} = X_r^{s,x} + \int_r^u v_q(X_q^{s,x})\,dq + \int_r^u\sqrt{2\alpha_q}\,\mathrm dW_q.
\end{equation*}
By pathwise uniqueness, the post-\(r\) segment is the unique solution started from \(X_r^{s,x}\) at time \(r\), driven by the future Brownian increments. Consequently, for every bounded Borel \(\varphi\),
\begin{equation*}
    \mathbb E \left[ \varphi(X_t^{s,x}) \mid\mathcal F_r \right] = P_{r,t}\varphi(X_r^{s,x}).
\end{equation*}
This proves the Markov property \eqref{eq:time-inhomogeneous-markov-property}.

Taking expectations gives
\begin{align*}
    P_{s,t}\varphi(x) &= \mathbb E \left[ P_{r,t}\varphi(X_r^{s,x}) \right]= P_{s,r}(P_{r,t}\varphi)(x),
\end{align*}
which proves \eqref{eq:chapman-kolmogorov-operator}. Taking \(\varphi=\mathbf 1_A\) gives \eqref{eq:chapman-kolmogorov-kernel}.

The definition \eqref{eq:dual-markov-evolution} and Fubini's theorem give \eqref{eq:dual-markov-duality}. Dualizing \eqref{eq:chapman-kolmogorov-operator} yields \eqref{eq:dual-evolution-property}.

Since the SDE is non-explosive,
\begin{equation*}
    P_{s,t}(x,\mathbb R^d)=1.
\end{equation*}
Thus, for \(\nu\geq0\),
\begin{align*}
    (P_{s,t}^*\nu)(\mathbb R^d) &= \int_{\mathbb R^d} P_{s,t}(x,\mathbb R^d)\,\nu(\mathrm dx)= \nu(\mathbb R^d),
\end{align*}
which proves mass preservation.

Finally, for every bounded Borel \(\varphi\),
\begin{equation*}
    |P_{s,t}\varphi(x)| \leq P_{s,t}|\varphi|(x) \leq \|\varphi\|_\infty.
\end{equation*}
Hence
\[
    \|P_{s,t}\varphi\|_\infty \leq \|\varphi\|_\infty.
\]
Writing \(\eta=\nu-\nu'\), we obtain
\begin{align*}
    \|P_{s,t}^*\eta\|_{\mathrm{TV}} &= \sup_{\|\varphi\|_\infty\leq1} \left| \int P_{s,t}\varphi\,\mathrm d\eta \right|\leq \sup_{\|\psi\|_\infty\leq1} \left| \int\psi\,\mathrm d\eta \right|= \|\eta\|_{\mathrm{TV}}.
\end{align*}
This proves \eqref{eq:dual-TV-contraction}. If all the measures involved have densities, the total-variation norm is the corresponding \(L^1\)-norm, which gives \eqref{eq:dual-L1-contraction}.
\end{proof}

\begin{mylemma}[Superposition principle
{\citep[Remark~2.3 and Theorem~2.5]{trevisan2016well}}]
\label{lem:trevisan-superposition}

Let $b:(0,T)\times\mathbb R^d\to\mathbb R^d$ and $a:(0,T)\times\mathbb R^d\to\operatorname{Sym}_+(\mathbb R^d)$ be Borel measurable, and define
\begin{equation*}
    L_t\phi(x)=b_t(x)^\top\nabla\phi(x)+\frac12a_t(x):\nabla^2\phi(x).
\end{equation*}
Define
\begin{equation*}
    \begin{aligned}
        \mathcal A = \Bigl\{ \psi\in C^{1,2}\bigl((0,T)\times\mathbb R^d\bigr): \ \psi,\partial_t\psi,\nabla\psi,\nabla^2\psi \text{ are uniformly bounded} \Bigr\},
    \end{aligned}
\end{equation*}
and $\mathcal A_c=C_c^{1,2}\bigl((0,T)\times\mathbb R^d\bigr).$ Every $\psi\in\mathcal A$ is understood through its continuous extension to $[0,T]\times\mathbb R^d$.

Let $(\nu_t)_{t\in(0,T)}\subset\mathcal P(\mathbb R^d)$ be a Borel curve such that
\begin{equation}
    \int_0^T \int_{\mathbb R^d} \bigl( \|b_t(x)\|_2+\|a_t(x)\|_{\mathrm F} \bigr) \,\mathrm d\nu_t(x)\,\mathrm dt <\infty. \label{eq:trevisan-integrability}
\end{equation}
Assume that $(\nu_t)$ solves the Fokker--Planck equation $\partial_t\nu_t=L_t^*\nu_t$ in the weak sense:
\begin{equation}
    \int_0^T \int_{\mathbb R^d} \left[ \partial_t\psi(t,x) + L_t\psi(t,\cdot)(x) \right] \mathrm d\nu_t(x)\,\mathrm dt = 0 \label{eq:trevisan-weak-FPE}
\end{equation}
for every $\psi\in\mathcal A_c$.

Then $(\nu_t)_{t\in(0,T)}$ admits a unique narrowly continuous representative $(\widetilde\nu_t)_{t\in[0,T]}\subset\mathcal P(\mathbb R^d)$ such that
\begin{equation}
    \widetilde\nu_t=\nu_t \qquad \text{for Lebesgue-a.e. }t\in(0,T). \label{eq:trevisan-representative-ae}
\end{equation}
Moreover, there exists $\boldsymbol\eta\in\mathcal P\bigl(C([0,T];\mathbb R^d)\bigr)$ such that
\begin{equation}
    (e_t)_\#\boldsymbol\eta = \widetilde\nu_t \qquad \text{for every }t\in[0,T], \label{eq:trevisan-marginals}
\end{equation}
where $e_t(\omega)=\omega(t)$.

In addition,
\begin{equation}
    \int_{C([0,T];\mathbb R^d)} \int_0^T \bigl( \|b_t(X_t)\|_2+\|a_t(X_t)\|_{\mathrm F} \bigr) \,\mathrm dt\,\mathrm d\boldsymbol\eta <\infty. \label{eq:trevisan-mp-integrability}
\end{equation}
Moreover, by
\eqref{eq:trevisan-marginals},
\begin{equation}
    \begin{aligned}
        \int_{C([0,T];\mathbb R^d)} \int_0^T \bigl( \|b_t(X_t)\|_2+\|a_t(X_t)\|_{\mathrm F} \bigr) \,\mathrm dt\,\mathrm d\boldsymbol\eta&= \int_0^T \int_{\mathbb R^d} \bigl( \|b_t(x)\|_2+\|a_t(x)\|_{\mathrm F} \bigr) \,\mathrm d\widetilde\nu_t(x)\,\mathrm dt\\ &= \int_0^T \int_{\mathbb R^d} \bigl( \|b_t(x)\|_2+\|a_t(x)\|_{\mathrm F} \bigr) \,\mathrm d\nu_t(x)\,\mathrm dt.
    \end{aligned}
    \label{eq:trevisan-marginal-integrability-identity}
\end{equation}

Finally, if $X_t(\omega)=\omega(t)$ denotes the canonical process, then for every $\psi\in\mathcal A$,
\begin{equation}
    \begin{aligned}
        M_t^\psi = \psi(t,X_t) - \psi(0,X_0)- \int_0^t \left[ \partial_s\psi(s,X_s) + L_s\psi(s,\cdot)(X_s) \right]\mathrm ds
    \end{aligned}
    \label{eq:trevisan-martingale}
\end{equation}
is a $\boldsymbol\eta$-martingale with respect to the canonical filtration.
\end{mylemma}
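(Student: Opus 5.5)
This lemma is quoted from \citet[Remark~2.3 and Theorem~2.5]{trevisan2016well}, and in the paper I would simply invoke it. If a self-contained argument were wanted, I would follow the Figalli--Trevisan strategy in three stages: build the narrowly continuous representative, construct path measures for regularized equations, and pass to the limit by tightness.

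\emph{Narrowly continuous representative.} For each $\phi\in C_c^\infty(\mathbb R^d)$, I would test \eqref{eq:trevisan-weak-FPE} with $\psi(t,x)=\vartheta(t)\phi(x)$. This shows that $t\mapsto\int\phi\,\mathrm d\nu_t$ has distributional derivative $\int L_t\phi\,\mathrm d\nu_t$. By \eqref{eq:trevisan-integrability}, that derivative lies in $L^1(0,T)$ with norm at most $C_\phi\int_0^T\!\!\int(\|b_t\|_2+\|a_t\|_{\mathrm F})\,\mathrm d\nu_t\,\mathrm dt$. Hence each such map has an absolutely continuous version. Taking a countable dense family of test functions yields a single Lebesgue-null exceptional set of times, off which $\nu_t$ is defined consistently. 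I would then define $\widetilde\nu_t$ at every $t$ by the limits along this family. A uniform-in-time tightness estimate is needed so that no mass escapes and each $\widetilde\nu_t$ is a probability measure. I would obtain it by testing against cutoffs $\chi_R$ and using the integrability condition again. Uniqueness of the representative is immediate, since two narrowly continuous curves that agree almost everywhere agree everywhere.

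\emph{Regularized path measures.} I would mollify the curve in space with a Gaussian kernel, setting $\nu_t^\varepsilon=\nu_t*\rho_\varepsilon$. Its coefficients are defined by
\[
b^\varepsilon_t=\frac{(b_t\nu_t)*\rho_\varepsilon}{\nu_t^\varepsilon},\qquad a^\varepsilon_t=\frac{(a_t\nu_t)*\rho_\varepsilon}{\nu_t^\varepsilon}.
\]
These are locally Lipschitz in $x$, and $\nu^\varepsilon$ solves the Fokker--Planck equation with these coefficients. Jensen's inequality shows that $\int\!\!\int(\|b^\varepsilon\|_2+\|a^\varepsilon\|_{\mathrm F})\,\mathrm d\nu^\varepsilon\,\mathrm dt$ is bounded by the original integral. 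For these regular coefficients, the martingale problem has solutions $\boldsymbol\eta^\varepsilon$ whose one-time marginals equal $\nu^\varepsilon_t$. I expect this step to need a further truncation of the coefficients together with a uniqueness argument for the linear Fokker--Planck equation, in order to identify the marginals of $\boldsymbol\eta^\varepsilon$ with $\nu^\varepsilon_t$.

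\emph{Tightness and passage to the limit; main obstacle.} The hardest step will be tightness of $\{\boldsymbol\eta^\varepsilon\}$ in $\mathcal P(C([0,T];\mathbb R^d))$, which must hold under only the integral bound rather than any growth condition. Here I would use the trick of a superlinear convex function. By de la Vallée-Poussin, choose $\theta$ with $\theta(r)/r\to\infty$ such that
\[
\int_0^T\!\!\int\theta\bigl(\|b_t\|_2+\|a_t\|_{\mathrm F}\bigr)\,\mathrm d\nu_t\,\mathrm dt<\infty.
\]
This gives uniform integrability of the drift and diffusion along paths. I would then control the drift part of the modulus of continuity, and handle the martingale part by the Burkholder--Davis--Gundy inequality after a stopping argument. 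Once tightness holds, $\boldsymbol\eta^\varepsilon\rightharpoonup\boldsymbol\eta$ along a subsequence, and $(e_t)_\#\boldsymbol\eta=\widetilde\nu_t$ for every $t$ by narrow continuity. The martingale identity \eqref{eq:trevisan-martingale} passes to the limit: first for $\psi\in\mathcal A_c$, tested against bounded continuous functionals of the past, using the convergence $b^\varepsilon\nu^\varepsilon\to b\nu$ in $L^1$; then for $\psi\in\mathcal A$ by a cutoff approximation, again controlled by \eqref{eq:trevisan-integrability}. Finally, \eqref{eq:trevisan-mp-integrability} and \eqref{eq:trevisan-marginal-integrability-identity} follow from Fubini's theorem and the marginal identity together with \eqref{eq:trevisan-representative-ae}.
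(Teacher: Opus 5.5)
Your proposal is correct and takes the same route as the paper. The paper simply invokes Trevisan's Remark~2.3 and Theorem~2.5 for the narrowly continuous representative and the martingale solution $\boldsymbol\eta$, then obtains \eqref{eq:trevisan-marginal-integrability-identity} from Tonelli, the marginal identity, and the a.e.\ agreement $\widetilde\nu_t=\nu_t$; it reads \eqref{eq:trevisan-mp-integrability} off Trevisan's definition of a martingale solution, whereas you derive it from the same Tonelli computation, which is equally valid.
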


\begin{proof}
The existence and uniqueness of the narrowly continuous representative follow from \citep[Remark~2.3]{trevisan2016well}, using the argument of \citep{ambrosio2005gradient}. The existence of $\boldsymbol\eta$ satisfying \eqref{eq:trevisan-marginals} and solving the corresponding martingale problem follows from \citep[Theorem~2.5]{trevisan2016well}.

Since $\boldsymbol\eta$ is a solution of the martingale problem in the sense of \citep[Definition~2.4]{trevisan2016well}, it satisfies the coefficient-integrability condition \eqref{eq:trevisan-mp-integrability}.

Moreover, by \eqref{eq:trevisan-marginals} and Tonelli's theorem,
\begin{align*}
    &\int_{C([0,T];\mathbb R^d)} \int_0^T \bigl( \|b_t(X_t)\|_2+\|a_t(X_t)\|_{\mathrm F} \bigr) \,\mathrm dt\,\mathrm d\boldsymbol\eta= \int_0^T \int_{\mathbb R^d} \bigl( \|b_t(x)\|_2+\|a_t(x)\|_{\mathrm F} \bigr) \,\mathrm d\widetilde\nu_t(x)\,\mathrm dt.
\end{align*}
Since
\begin{equation*}
    \widetilde\nu_t=\nu_t \qquad \text{for Lebesgue-a.e. }t\in(0,T),
\end{equation*}
we further obtain
\begin{align*}
    &\int_0^T \int_{\mathbb R^d} \bigl( \|b_t(x)\|_2+\|a_t(x)\|_{\mathrm F} \bigr) \,\mathrm d\widetilde\nu_t(x)\,\mathrm dt= \int_0^T \int_{\mathbb R^d} \bigl( \|b_t(x)\|_2+\|a_t(x)\|_{\mathrm F} \bigr) \,\mathrm d\nu_t(x)\,\mathrm dt <\infty,
\end{align*}
where the final inequality follows from \eqref{eq:trevisan-integrability}. This proves \eqref{eq:trevisan-marginal-integrability-identity}.

The martingale property \eqref{eq:trevisan-martingale} is exactly the martingale-problem formulation in \citep[Definition~2.4]{trevisan2016well}.
\end{proof}

\begin{remark}
For bounded coefficients, an earlier representation theorem is \citep[Theorem~2.6]{figalli2008existence}; see also \citep[Lemma~2.3]{figalli2008existence} for the corresponding relation between uniqueness of martingale marginals and uniqueness of measure-valued Fokker--Planck solutions.
\end{remark}

\begin{mylemma}
\label{lem:superposition-uniqueness-transfer}

Let $a$, $b$, $L_t$, and $(\nu_t)_{t\in(0,T)}$ satisfy the assumptions of Lemma~\ref{lem:trevisan-superposition}, and let $(\widetilde\nu_t)_{t\in[0,T]}$ be its unique narrowly continuous representative.

Assume that the martingale problem associated with $(L_t)$ and initial law $\widetilde\nu_0$ is well posed, meaning that there exists a unique probability measure $\boldsymbol\eta\in\mathcal P\bigl(C([0,T];\mathbb R^d)\bigr)$ such that $(e_0)_\#\boldsymbol\eta=\widetilde\nu_0$, $\boldsymbol\eta$ satisfies \eqref{eq:trevisan-mp-integrability}, and $M^\psi$ defined in \eqref{eq:trevisan-martingale} is a martingale for every $\psi\in\mathcal A$.

Then $(\widetilde\nu_t)_{t\in[0,T]}$ is the marginal curve of the unique martingale solution. Consequently, there exists at most one weak probability-valued Fokker--Planck solution satisfying \eqref{eq:trevisan-integrability} and having initial law $\widetilde\nu_0$.

In particular, suppose that
\begin{equation*}
    a_t(x)=2I_d
\end{equation*}
and that, for almost every $t\in(0,T)$,
\begin{equation}
    \|b_t(x)-b_t(y)\|_2 \leq \ell(t)\|x-y\|_2, \qquad \ell\in L^1(0,T), \label{eq:auxiliary-global-Lipschitz}
\end{equation}
and
\begin{equation}
    \|b_t(x)\|_2 \leq G(t)(1+\|x\|_2), \qquad G\in L^1(0,T). \label{eq:auxiliary-linear-growth}
\end{equation}
If
\begin{equation}
    \sup_{0\leq t\leq T} \int_{\mathbb R^d} \|x\|_2\,\mathrm d\widetilde\nu_t(x) <\infty, \label{eq:auxiliary-first-moment}
\end{equation}
then $(\widetilde\nu_t)$ is the marginal curve of the unique solution of
\begin{equation}
    \mathrm dX_t = b_t(X_t)\,\mathrm dt+\sqrt2\,\mathrm dW_t, \qquad X_0\sim\widetilde\nu_0. \label{eq:auxiliary-additive-SDE}
\end{equation}
\end{mylemma}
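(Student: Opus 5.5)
The plan is to prove the first assertion (uniqueness transfer) directly from Lemma~\ref{lem:trevisan-superposition}, and then reduce the additive-noise statement to it by checking well-posedness of the martingale problem. For the first part, Lemma~\ref{lem:trevisan-superposition} supplies a path measure $\boldsymbol\eta$ with three properties: $(e_t)_\#\boldsymbol\eta=\widetilde\nu_t$ for every $t\in[0,T]$, the integrability \eqref{eq:trevisan-mp-integrability}, and the martingale property of $M^\psi$ for every $\psi\in\mathcal A$. In particular $(e_0)_\#\boldsymbol\eta=\widetilde\nu_0$, so $\boldsymbol\eta$ is a martingale solution with initial law $\widetilde\nu_0$. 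Well-posedness says $\boldsymbol\eta$ is the unique such solution, and hence $\widetilde\nu_t$ is its marginal curve. Now let $(\nu'_t)$ be any other weak probability-valued solution satisfying \eqref{eq:trevisan-integrability} whose narrowly continuous representative has initial law $\widetilde\nu_0$. Applying the same argument to it produces a martingale solution $\boldsymbol\eta'$ with the same initial law. Uniqueness then forces $\boldsymbol\eta'=\boldsymbol\eta$, so the two representatives have equal marginals at every time, and the original curves agree for almost every $t$. This gives the "at most one" claim.

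For the particular case, I would first check that \eqref{eq:trevisan-integrability} holds. By \eqref{eq:auxiliary-linear-growth} and \eqref{eq:auxiliary-first-moment},
\[\int_0^T\!\int\|b_t\|\,\mathrm d\widetilde\nu_t\,\mathrm dt\leq\|G\|_{L^1}\bigl(1+\sup_t\textstyle\int\|x\|\,\mathrm d\widetilde\nu_t\bigr)<\infty,\]
and the diffusion term contributes $2\sqrt d\,T$. It then remains to show that the martingale problem with $a=2I_d$ and initial law $\widetilde\nu_0$ is well posed among measures satisfying \eqref{eq:trevisan-mp-integrability}.

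Existence of a solution follows from Lemma~\ref{lem:time-inhomogeneous-sde-wellposed} with $\alpha\equiv1$. That lemma gives a strong solution from an $\mathcal F_0$-measurable initial value with law $\widetilde\nu_0$, and the first-moment bound \eqref{eq:aux-sde-first-moment} applies because $\widetilde\nu_0$ has a finite first moment. Itô's formula then shows that $M^\psi$ is a martingale for $\psi\in\mathcal A$: the stochastic integral of the bounded $\nabla\psi$ is a true martingale. The integrability \eqref{eq:trevisan-mp-integrability} follows from linear growth together with $\mathbb E\sup_t\|X_t\|<\infty$.

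Uniqueness is where I expect the real work. I would use the standard equivalence between martingale solutions and weak solutions. Given any martingale solution $\boldsymbol\eta'$, apply the martingale property to the coordinate functions, first truncated to lie in $\mathcal A$ and then localized by the stopping times $\tau_R=\inf\{t:\|X_t\|\geq R\}$. This shows that
\[N_t=X_t-X_0-\int_0^t b_s(X_s)\,\mathrm ds\]
is a continuous local martingale; the time integral is finite almost surely by \eqref{eq:trevisan-mp-integrability}. Applying the same argument to truncations of $x_ix_j$ shows $\langle N^i,N^j\rangle_t=2\delta_{ij}t$. Lévy's characterization then identifies $N/\sqrt2$ as a Brownian motion, so $\boldsymbol\eta'$ is the law of a weak solution of \eqref{eq:auxiliary-additive-SDE}.

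The global Lipschitz bound \eqref{eq:auxiliary-global-Lipschitz} gives pathwise uniqueness, via the Gronwall argument in Lemma~\ref{lem:time-inhomogeneous-sde-wellposed}. The Yamada--Watanabe theorem then yields uniqueness in law, so $\boldsymbol\eta'$ equals the law of the strong solution. The main obstacle is the localization step: the coefficients are only time-integrable rather than bounded, so the truncation of $\psi$ has to be matched to the stopping times, and the limit $R\to\infty$ must be justified using \eqref{eq:trevisan-mp-integrability} and continuity of paths. Once uniqueness is established, the first part of the lemma identifies $(\widetilde\nu_t)$ with the marginals of the unique solution of \eqref{eq:auxiliary-additive-SDE}.
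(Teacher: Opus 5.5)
Your proposal is correct and follows the paper's proof essentially step by step. It uses superposition plus well-posedness for the first part and the ``at most one'' claim, then the integrability check, existence via Lemma~\ref{lem:time-inhomogeneous-sde-wellposed} and It\^o's formula, and uniqueness in law via the martingale-problem/weak-solution equivalence, pathwise uniqueness and Yamada--Watanabe. The only difference is that you spell out the localization and L\'evy-characterization argument behind that equivalence, which the paper simply cites as standard.
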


\begin{proof}
By Lemma~\ref{lem:trevisan-superposition}, there exists a solution $\boldsymbol\eta$ of the martingale problem associated with $(L_t)$ such, that
\begin{equation*}
    (e_t)_\#\boldsymbol\eta = \widetilde\nu_t \qquad \text{for every }t\in[0,T].
\end{equation*}
In particular,
\begin{equation*}
    (e_0)_\#\boldsymbol\eta = \widetilde\nu_0.
\end{equation*}

Let $\boldsymbol\eta^{\,0}$ denote the unique solution of the martingale problem with initial law $\widetilde\nu_0$. By well-posedness we have $\boldsymbol\eta=\boldsymbol\eta^{\,0}$. Therefore
\begin{equation*}
    \widetilde\nu_t = (e_t)_\#\boldsymbol\eta^{\,0} \qquad \text{for every }t\in[0,T].
\end{equation*}

If \((\nu_t^1)\) and \((\nu_t^2)\) are two weak probability-valued Fokker--Planck solutions satisfying \eqref{eq:trevisan-integrability} and whose narrowly continuous representatives both have initial law \(\widetilde\nu_0\), apply Lemma~\ref{lem:trevisan-superposition} to the two representatives. This gives two martingale solutions with the same initial law. Well-posedness of the martingale problem implies that the two path-space laws coincide, and hence all their time marginals coincide. This proves uniqueness of the Fokker--Planck solution.

For the final assertion, $a_t=2I_d$ gives
\begin{equation*}
    \frac12a_t:\nabla^2\phi = \Delta\phi,
\end{equation*}
so the associated martingale problem corresponds to \eqref{eq:auxiliary-additive-SDE}. Moreover,
\begin{equation*}
    \int_0^T\int_{\mathbb R^d} \|a_t(x)\|_{\mathrm F} \,\mathrm d\widetilde\nu_t(x)\,\mathrm dt = T\|2I_d\|_{\mathrm F} <\infty,
\end{equation*}
and, by \eqref{eq:auxiliary-linear-growth} and \eqref{eq:auxiliary-first-moment},
\begin{align*}
    &\int_0^T \int_{\mathbb R^d} \|b_t(x)\|_2 \,\mathrm d\widetilde\nu_t(x)\,\mathrm dt\leq \int_0^T G(t) \left( 1+ \int_{\mathbb R^d}\|x\|_2\,\mathrm d\widetilde\nu_t(x) \right)\mathrm dt<\infty.
\end{align*}
Thus the Fokker--Planck coefficient-integrability condition \eqref{eq:trevisan-integrability} holds.

By Lemma~\ref{lem:time-inhomogeneous-sde-wellposed}, the SDE
\[
    \mathrm dX_t=b_t(X_t)\,\mathrm dt+\sqrt2\,\mathrm dW_t, \qquad X_0\sim\widetilde\nu_0,
\]
admits a unique non-explosive strong solution with continuous paths. Since \(\widetilde\nu_0\) has finite first moment,
\eqref{eq:aux-sde-first-moment} gives
\[
    \mathbb E \left[ \sup_{0\leq t\leq T}\|X_t\| \right] <\infty.
\]
Consequently,
\[
    \mathbb E \int_0^T \bigl( \|b_t(X_t)\|+\|2I_d\| \bigr)\,\mathrm dt <\infty,
\]
so the SDE solution belongs to the martingale-problem class considered above.

By Itô's formula, the law of the SDE solution is a solution of the associated martingale problem; cf. \citep[Section~1.10.1 and Appendix~B.4]{bakry2014analysis}. Conversely, the standard equivalence between weak solutions of the SDE and solutions of its martingale problem shows that every martingale solution in this class is the law of a weak solution of the same SDE. Pathwise uniqueness therefore implies uniqueness in law, and hence the martingale problem with initial law \(\widetilde\nu_0\) is well posed.

The first part of the lemma now identifies \((\widetilde\nu_t)_{t\in[0,T]}\) with the marginal curve of this unique SDE solution.
\end{proof}

\clearpage
\section{Appendix tables}
The tables below report the complete numerical results underlying the simulation figures in Section~\ref{sec:simulation}. Each table corresponds to one design--error combination. Within each of the two column groups, the five columns give the results for $\kappa=p/n\in\{0.1,0.2,0.3,0.4,0.5\}$. The left group reports the empirical Type~I error at the nominal level $0.05$, while the right group reports the ratio of the average estimated variance to the corresponding theoretical variance benchmark. Thus, values close to $0.05$ and $1$, respectively, indicate accurate inferential and variance calibration. The sample size, number of Monte Carlo replications, number of bootstrap samples, and precise definitions of the design distributions and variance benchmarks are given in Appendix~\ref{app:numerical-experiments}.

Across the Gaussian and i.i.d.\ Laplace designs, the diffusion pairs bootstrap generally keeps the variance ratio close to one over the full range of $\kappa$. By contrast, the classical pairs bootstrap and the jackknife become increasingly conservative as $\kappa$ grows, whereas the residual bootstrap increasingly underestimates the variance and can become anti-conservative. The same qualitative pattern is observed under both Gaussian and Laplace errors, indicating that the main distortion is driven by the high-dimensional design geometry rather than by the error distribution alone. The elliptical designs are more challenging, especially with exponential or Gaussian radial scaling. At moderate-to-large aspect ratios, diffusion pairs often substantially reduces the variance distortion relative to classical pairs, but it does not do so uniformly across aspect ratios and does not uniformly attain exact calibration.

\label{app:simulation-tables}
\begin{table}[ht]
\centering
\caption{Laplace design with Laplace errors.}
\label{tab:laplacex-laplace}
\small
\begin{tabular}{lccccc@{\hspace{1em}}ccccc}
\toprule
\multirow{2}{*}{Method}
& \multicolumn{5}{c}{Type~I error}
& \multicolumn{5}{c}{Variance ratio} \\
\cmidrule(lr){2-6}\cmidrule(lr){7-11}
& 0.1 & 0.2 & 0.3 & 0.4 & 0.5
& 0.1 & 0.2 & 0.3 & 0.4 & 0.5 \\
\midrule
Pairs
& 0.057 & 0.042 & 0.028 & 0.008 & 0.001
& 1.047 & 1.149 & 1.352 & 1.792 & 2.994 \\
Residual
& 0.067 & 0.088 & 0.106 & 0.102 & 0.207
& 0.908 & 0.810 & 0.708 & 0.604 & 0.497 \\
Jackknife
& 0.044 & 0.034 & 0.025 & 0.008 & 0.013
& 1.130 & 1.283 & 1.453 & 1.688 & 2.024 \\
Diffusion pairs
& 0.061 & 0.071 & 0.057 & 0.026 & 0.046
& 1.024 & 1.023 & 1.021 & 1.017 & 1.006 \\
\bottomrule
\end{tabular}
\end{table}

\begin{table}[ht]
\centering
\caption{Elliptical-exponential design with Gaussian errors.}
\label{tab:hd-eel-normal-type1-var-ratio}
\small
\begin{tabular}{lccccc@{\hspace{1em}}ccccc}
\toprule
\multirow{2}{*}{Method}
& \multicolumn{5}{c}{Type~I error}
& \multicolumn{5}{c}{Variance ratio} \\
\cmidrule(lr){2-6}\cmidrule(lr){7-11}
& 0.1 & 0.2 & 0.3 & 0.4 & 0.5
& 0.1 & 0.2 & 0.3 & 0.4 & 0.5 \\
\midrule
Pairs
& 0.026 & 0.015 & 0.003 & 0.000 & 0.000
& 1.316 & 1.866 & 2.783 & 4.891 & 12.982 \\
Residual
& 0.060 & 0.085 & 0.115 & 0.114 & 0.170
& 0.896 & 0.811 & 0.710 & 0.602 & 0.506 \\
Jackknife
& 0.019 & 0.012 & 0.002 & 0.000 & 0.000
& 1.430 & 1.874 & 2.356 & 2.885 & 3.723 \\
Diffusion pairs
& 0.042 & 0.030 & 0.014 & 0.010 & 0.002
& 1.809 & 1.626 & 1.288 & 0.975 & 0.731 \\
\bottomrule
\end{tabular}
\end{table}

\begin{table}[ht]
\centering
\caption{Laplace design with Gaussian errors.}
\label{tab:hd-laplacex-normal-type1-var-ratio}
\small
\begin{tabular}{lccccc@{\hspace{1em}}ccccc}
\toprule
\multirow{2}{*}{Method}
& \multicolumn{5}{c}{Type~I error}
& \multicolumn{5}{c}{Variance ratio} \\
\cmidrule(lr){2-6}\cmidrule(lr){7-11}
& 0.1 & 0.2 & 0.3 & 0.4 & 0.5
& 0.1 & 0.2 & 0.3 & 0.4 & 0.5 \\
\midrule
Pairs
& 0.046 & 0.037 & 0.025 & 0.009 & 0.000
& 1.044 & 1.151 & 1.354 & 1.790 & 3.002 \\
Residual
& 0.064 & 0.080 & 0.113 & 0.126 & 0.175
& 0.905 & 0.817 & 0.708 & 0.600 & 0.498 \\
Jackknife
& 0.039 & 0.025 & 0.021 & 0.012 & 0.005
& 1.133 & 1.282 & 1.457 & 1.695 & 2.018 \\
Diffusion pairs
& 0.064 & 0.055 & 0.056 & 0.031 & 0.033
& 1.030 & 1.038 & 1.026 & 1.011 & 1.012 \\
\bottomrule
\end{tabular}
\end{table}

\begin{table}[ht]
\centering
\caption{Elliptical-uniform design with Gaussian errors.}
\label{tab:hd-eul-normal-type1-var-ratio}
\small
\begin{tabular}{lccccc@{\hspace{1em}}ccccc}
\toprule
\multirow{2}{*}{Method}
& \multicolumn{5}{c}{Type~I error}
& \multicolumn{5}{c}{Variance ratio} \\
\cmidrule(lr){2-6}\cmidrule(lr){7-11}
& 0.1 & 0.2 & 0.3 & 0.4 & 0.5
& 0.1 & 0.2 & 0.3 & 0.4 & 0.5 \\
\midrule
Pairs
& 0.051 & 0.034 & 0.023 & 0.012 & 0.000
& 1.058 & 1.197 & 1.462 & 2.014 & 3.428 \\
Residual
& 0.065 & 0.073 & 0.111 & 0.124 & 0.199
& 0.904 & 0.805 & 0.697 & 0.602 & 0.500 \\
Jackknife
& 0.040 & 0.026 & 0.017 & 0.010 & 0.004
& 1.154 & 1.324 & 1.532 & 1.825 & 2.194 \\
Diffusion pairs
& 0.059 & 0.051 & 0.044 & 0.031 & 0.027
& 1.091 & 1.111 & 1.093 & 1.071 & 1.036 \\
\bottomrule
\end{tabular}
\end{table}

\begin{table}[ht]
\centering
\caption{Elliptical-normal design with Gaussian errors.}
\label{tab:hd-enl-normal-type1-var-ratio}
\small
\begin{tabular}{lccccc@{\hspace{1em}}ccccc}
\toprule
\multirow{2}{*}{Method}
& \multicolumn{5}{c}{Type~I error}
& \multicolumn{5}{c}{Variance ratio} \\
\cmidrule(lr){2-6}\cmidrule(lr){7-11}
& 0.1 & 0.2 & 0.3 & 0.4 & 0.5
& 0.1 & 0.2 & 0.3 & 0.4 & 0.5 \\
\midrule
Pairs
& 0.045 & 0.016 & 0.004 & 0.001 & 0.000
& 1.195 & 1.571 & 2.264 & 3.856 & 9.692 \\
Residual
& 0.068 & 0.082 & 0.118 & 0.134 & 0.162
& 0.907 & 0.813 & 0.705 & 0.607 & 0.506 \\
Jackknife
& 0.031 & 0.013 & 0.008 & 0.002 & 0.002
& 1.311 & 1.650 & 2.034 & 2.551 & 3.271 \\
Diffusion pairs
& 0.050 & 0.052 & 0.036 & 0.014 & 0.007
& 1.526 & 1.518 & 1.338 & 1.130 & 0.921 \\
\bottomrule
\end{tabular}
\end{table}

\begin{table}[ht]
\centering
\caption{Elliptical-uniform design with Laplace errors.}
\label{tab:hd-eul-laplace-type1-var-ratio}
\small
\begin{tabular}{lccccc@{\hspace{1em}}ccccc}
\toprule
\multirow{2}{*}{Method}
& \multicolumn{5}{c}{Type~I error}
& \multicolumn{5}{c}{Variance ratio} \\
\cmidrule(lr){2-6}\cmidrule(lr){7-11}
& 0.1 & 0.2 & 0.3 & 0.4 & 0.5
& 0.1 & 0.2 & 0.3 & 0.4 & 0.5 \\
\midrule
Pairs
& 0.049 & 0.043 & 0.021 & 0.007 & 0.001
& 1.058 & 1.210 & 1.475 & 2.019 & 3.426 \\
Residual
& 0.066 & 0.090 & 0.122 & 0.133 & 0.170
& 0.901 & 0.806 & 0.702 & 0.605 & 0.498 \\
Jackknife
& 0.038 & 0.031 & 0.018 & 0.010 & 0.004
& 1.155 & 1.336 & 1.543 & 1.837 & 2.200 \\
Diffusion pairs
& 0.076 & 0.060 & 0.047 & 0.033 & 0.019
& 1.077 & 1.108 & 1.094 & 1.075 & 1.030 \\
\bottomrule
\end{tabular}
\end{table}

\begin{table}[ht]
\centering
\caption{Elliptical-exponential design with Laplace errors.}
\label{tab:hd-eel-laplace-type1-var-ratio}
\small
\begin{tabular}{lccccc@{\hspace{1em}}ccccc}
\toprule
\multirow{2}{*}{Method}
& \multicolumn{5}{c}{Type~I error}
& \multicolumn{5}{c}{Variance ratio} \\
\cmidrule(lr){2-6}\cmidrule(lr){7-11}
& 0.1 & 0.2 & 0.3 & 0.4 & 0.5
& 0.1 & 0.2 & 0.3 & 0.4 & 0.5 \\
\midrule
Pairs
& 0.034 & 0.012 & 0.004 & 0.000 & 0.000
& 1.317 & 1.864 & 2.792 & 4.897 & 12.861 \\
Residual
& 0.073 & 0.093 & 0.102 & 0.121 & 0.153
& 0.898 & 0.811 & 0.712 & 0.603 & 0.505 \\
Jackknife
& 0.023 & 0.015 & 0.003 & 0.000 & 0.000
& 1.441 & 1.883 & 2.351 & 2.879 & 3.666 \\
Diffusion pairs
& 0.061 & 0.027 & 0.026 & 0.009 & 0.001
& 1.805 & 1.625 & 1.288 & 0.973 & 0.727 \\
\bottomrule
\end{tabular}
\end{table}

\begin{table}[ht]
\centering
\caption{Elliptical-normal design with Laplace errors.}
\label{tab:hd-enl-laplace-type1-var-ratio}
\small
\begin{tabular}{lccccc@{\hspace{1em}}ccccc}
\toprule
\multirow{2}{*}{Method}
& \multicolumn{5}{c}{Type~I error}
& \multicolumn{5}{c}{Variance ratio} \\
\cmidrule(lr){2-6}\cmidrule(lr){7-11}
& 0.1 & 0.2 & 0.3 & 0.4 & 0.5
& 0.1 & 0.2 & 0.3 & 0.4 & 0.5 \\
\midrule
Pairs
& 0.043 & 0.014 & 0.004 & 0.001 & 0.000
& 1.187 & 1.581 & 2.266 & 3.868 & 9.658 \\
Residual
& 0.075 & 0.069 & 0.103 & 0.128 & 0.187
& 0.907 & 0.814 & 0.706 & 0.608 & 0.506 \\
Jackknife
& 0.031 & 0.010 & 0.004 & 0.001 & 0.000
& 1.302 & 1.652 & 2.036 & 2.562 & 3.252 \\
Diffusion pairs
& 0.041 & 0.025 & 0.024 & 0.011 & 0.011
& 1.514 & 1.511 & 1.339 & 1.129 & 0.918 \\
\bottomrule
\end{tabular}
\end{table}

\begin{table}[ht]
\centering
\caption{Gaussian design with Gaussian errors.}
\label{tab:hd-gaussian-normal-type1-var-ratio}
\small
\begin{tabular}{lccccc@{\hspace{1em}}ccccc}
\toprule
\multirow{2}{*}{Method}
& \multicolumn{5}{c}{Type~I error}
& \multicolumn{5}{c}{Variance ratio} \\
\cmidrule(lr){2-6}\cmidrule(lr){7-11}
& 0.1 & 0.2 & 0.3 & 0.4 & 0.5
& 0.1 & 0.2 & 0.3 & 0.4 & 0.5 \\
\midrule
Pairs
& 0.056 & 0.040 & 0.027 & 0.015 & 0.000
& 1.028 & 1.128 & 1.331 & 1.774 & 2.999 \\
Residual
& 0.070 & 0.093 & 0.108 & 0.133 & 0.160
& 0.899 & 0.801 & 0.699 & 0.600 & 0.502 \\
Jackknife
& 0.045 & 0.031 & 0.021 & 0.016 & 0.003
& 1.112 & 1.254 & 1.433 & 1.683 & 2.026 \\
Diffusion pairs
& 0.067 & 0.068 & 0.062 & 0.056 & 0.030
& 1.009 & 1.010 & 1.009 & 1.013 & 1.009 \\
\bottomrule
\end{tabular}
\end{table}

\begin{table}[ht]
\centering
\caption{Gaussian design with Laplace errors.}
\label{tab:hd-gaussian-laplace-type1-var-ratio}
\small
\begin{tabular}{lccccc@{\hspace{1em}}ccccc}
\toprule
\multirow{2}{*}{Method}
& \multicolumn{5}{c}{Type~I error}
& \multicolumn{5}{c}{Variance ratio} \\
\cmidrule(lr){2-6}\cmidrule(lr){7-11}
& 0.1 & 0.2 & 0.3 & 0.4 & 0.5
& 0.1 & 0.2 & 0.3 & 0.4 & 0.5 \\
\midrule
Pairs
& 0.045 & 0.034 & 0.026 & 0.009 & 0.002
& 1.022 & 1.123 & 1.339 & 1.771 & 2.970 \\
Residual
& 0.061 & 0.074 & 0.100 & 0.130 & 0.152
& 0.897 & 0.794 & 0.701 & 0.600 & 0.496 \\
Jackknife
& 0.033 & 0.024 & 0.018 & 0.012 & 0.008
& 1.108 & 1.250 & 1.448 & 1.680 & 2.003 \\
Diffusion pairs
& 0.064 & 0.051 & 0.047 & 0.046 & 0.036
& 0.994 & 1.000 & 1.010 & 1.004 & 0.999 \\
\bottomrule
\end{tabular}
\end{table}

\end{document}